\documentclass[fleqn,11pt,nodoublespace]{bhamthesis}
\pdfoutput=1

\usepackage[hidelinks]{hyperref}

\usepackage[compact]{titlesec}
\usepackage{url,multicol,fancyhdr,lastpage,setspace}
\usepackage[textsize=scriptsize]{todonotes}
\usepackage{amsmath,amsthm,stmaryrd,amsfonts,amssymb,mathtools}
\usepackage{pdfpages}
\usepackage{graphicx,pgf,tikz}
\usetikzlibrary{cd}
\usetikzlibrary{matrix}
\usepackage{mdwtab}
\usepackage{imakeidx}
\makeindex
\makeindex[name=symbol,title={Index of Symbols}]


\newcommand{\name}[1]{\emph{#1}\index{#1|textit}}
\newcommand{\nameas}[2]{\emph{#1}\index{#2|textit}}

\newcommand{\indexandabbr}[2]{\index{#1|textit}\index{#2|see {#1}}}

\newcommand{\definesymbol}[2]{{\index[symbol]{#2@{#1}|textit}}}
\newcommand{\definesymbolfor}[3]{\index[symbol]{#2@{#1}|seealso{#3}}\definesymbol{#1}{#2}}

\newcommand{\indexsymbolfor}[3]{\index[symbol]{#2@{#1}|see{#3}}}

\newcommand{\defineop}[2]{{\index[symbol]{#2@{#1}|textit}}}
\newcommand{\defineopfor}[3]{\index[symbol]{#2@{#1}|see{#3}}}

\newbox\gnBoxA
\newdimen\gnCornerHgt
\setbox\gnBoxA=\hbox{$\ulcorner$}
\global\gnCornerHgt=\ht\gnBoxA
\newdimen\gnArgHgt
\def\Godelnum #1{%
    \setbox\gnBoxA=\hbox{$#1$}%
    \gnArgHgt=\ht\gnBoxA%
    \ifnum     \gnArgHgt<\gnCornerHgt \gnArgHgt=0pt%
    \else \advance \gnArgHgt by -\gnCornerHgt%
    \fi \raise\gnArgHgt\hbox{$\ulcorner$} \box\gnBoxA %
    \raise\gnArgHgt\hbox{$\urcorner$}}

\newcommand{\quantif}[2]{{#1(#2)},}
\newcommand{\qPi}[1]{\quantif{\Pi}{#1}}
\newcommand{\qSig}[1]{\quantif{\Sigma}{#1}}
\newcommand{\qExists}[1]{\quantif{\exists}{#1}}
\newcommand{\qAll}[1]{\quantif{\forall}{#1}}
\newcommand{\qEmpty}[1]{\quantif{}{#1}}



\makeatletter
\def\prd#1{\@ifnextchar\bgroup{\prd@parens{#1}}{%
    \@ifnextchar\sm{\prd@parens{#1}\@eatsm}{%
    \@ifnextchar\prd{\prd@parens{#1}\@eatprd}{%
    \@ifnextchar\;{\prd@parens{#1}\@eatsemicolonspace}{%
    \@ifnextchar\\{\prd@parens{#1}\@eatlinebreak}{%
    \@ifnextchar\narrowbreak{\prd@parens{#1}\@eatnarrowbreak}{%
      \prd@noparens{#1}}}}}}}}
\def\prd@parens#1{\@ifnextchar\bgroup%
  {\mathchoice{\@dprd{#1}}{\@tprd{#1}}{\@tprd{#1}}{\@tprd{#1}}\prd@parens}%
  {\@ifnextchar\sm%
    {\mathchoice{\@dprd{#1}}{\@tprd{#1}}{\@tprd{#1}}{\@tprd{#1}}\@eatsm}%
    {\mathchoice{\@dprd{#1}}{\@tprd{#1}}{\@tprd{#1}}{\@tprd{#1}}}}}
\def\@eatsm\sm{\sm@parens}
\def\prd@noparens#1{\mathchoice{\@dprd@noparens{#1}}{\@tprd{#1}}{\@tprd{#1}}{\@tprd{#1}}}
\def\lprd#1{\@ifnextchar\bgroup{\@lprd{#1}\lprd}{\@@lprd{#1}}}
\def\@lprd#1{\mathchoice{{\textstyle\prod}}{\prod}{\prod}{\prod}({\textstyle #1})\;}
\def\@@lprd#1{\mathchoice{{\textstyle\prod}}{\prod}{\prod}{\prod}({\textstyle #1}),\ }
\def\tprd#1{\@tprd{#1}\@ifnextchar\bgroup{\tprd}{}}
\def\@tprd#1{\mathchoice{{\textstyle\prod_{(#1)}}}{\prod_{(#1)}}{\prod_{(#1)}}{\prod_{(#1)}}}
\def\dprd#1{\@dprd{#1}\@ifnextchar\bgroup{\dprd}{}}
\def\@dprd#1{\prod_{(#1)}\,}
\def\@dprd@noparens#1{\prod_{#1}\,}

\def\@eatnarrowbreak\narrowbreak{%
  \@ifnextchar\prd{\narrowbreak\@eatprd}{%
    \@ifnextchar\sm{\narrowbreak\@eatsm}{%
      \narrowbreak}}}
\def\@eatlinebreak\\{%
  \@ifnextchar\prd{\\\@eatprd}{%
    \@ifnextchar\sm{\\\@eatsm}{%
      \\}}}
\def\@eatsemicolonspace\;{%
  \@ifnextchar\prd{\;\@eatprd}{%
    \@ifnextchar\sm{\;\@eatsm}{%
      \;}}}

\def\lam#1{{\lambda}\@lamarg#1:\@endlamarg\@ifnextchar\bgroup{.\,\lam}{.\,}}
\def\@lamarg#1:#2\@endlamarg{\if\relax\detokenize{#2}\relax #1\else\@lamvar{\@lameatcolon#2},#1\@endlamvar\fi}
\def\@lamvar#1,#2\@endlamvar{(#2\,{:}\,#1)}
\def\@lameatcolon#1:{#1}

\def\lamu#1{{\lambda}\@lamuarg#1:\@endlamuarg\@ifnextchar\bgroup{.\,\lamu}{.\,}}
\def\@lamuarg#1:#2\@endlamuarg{#1}

\def\fall#1{\forall (#1)\@ifnextchar\bgroup{.\,\fall}{.\,}}

\def\exis#1{\exists (#1)\@ifnextchar\bgroup{.\,\exis}{.\,}}


\def\sm#1{\@ifnextchar\bgroup{\sm@parens{#1}}{%
    \@ifnextchar\prd{\sm@parens{#1}\@eatprd}{%
    \@ifnextchar\sm{\sm@parens{#1}\@eatsm}{%
    \@ifnextchar\;{\sm@parens{#1}\@eatsemicolonspace}{%
    \@ifnextchar\\{\sm@parens{#1}\@eatlinebreak}{%
    \@ifnextchar\narrowbreak{\sm@parens{#1}\@eatnarrowbreak}{%
        \sm@noparens{#1}}}}}}}}
\def\sm@parens#1{\@ifnextchar\bgroup%
  {\mathchoice{\@dsm{#1}}{\@tsm{#1}}{\@tsm{#1}}{\@tsm{#1}}\sm@parens}%
  {\@ifnextchar\prd%
    {\mathchoice{\@dsm{#1}}{\@tsm{#1}}{\@tsm{#1}}{\@tsm{#1}}\@eatprd}%
    {\mathchoice{\@dsm{#1}}{\@tsm{#1}}{\@tsm{#1}}{\@tsm{#1}}}}}
\def\@eatprd\prd{\prd@parens}
\def\sm@noparens#1{\mathchoice{\@dsm@noparens{#1}}{\@tsm{#1}}{\@tsm{#1}}{\@tsm{#1}}}
\def\lsm#1{\@ifnextchar\bgroup{\@lsm{#1}\lsm}{\@@lsm{#1}}}
\def\@lsm#1{\mathchoice{{\textstyle\sum}}{\sum}{\sum}{\sum}({\textstyle #1})\;}
\def\@@lsm#1{\mathchoice{{\textstyle\sum}}{\sum}{\sum}{\sum}({\textstyle #1}),\ }
\def\tsm#1{\@tsm{#1}\@ifnextchar\bgroup{\tsm}{}}
\def\@tsm#1{\mathchoice{{\textstyle\sum_{(#1)}}}{\sum_{(#1)}}{\sum_{(#1)}}{\sum_{(#1)}}}
\def\dsm#1{\@dsm{#1}\@ifnextchar\bgroup{\dsm}{}}
\def\@dsm#1{\sum_{(#1)}\,}
\def\@dsm@noparens#1{\sum_{#1}\,}


\newcommand{\ct}{%
  \mathchoice{\mathbin{\raisebox{0.5ex}{$\displaystyle\centerdot$}}}%
             {\mathbin{\raisebox{0.5ex}{$\centerdot$}}}%
             {\mathbin{\raisebox{0.25ex}{$\scriptstyle\,\centerdot\,$}}}%
             {\mathbin{\raisebox{0.1ex}{$\scriptscriptstyle\,\centerdot\,$}}}
}
\newcommand{\istype}[1]{\mathsf{is}\mbox{-}{#1}\mbox{-}\mathsf{type}}

\newcommand{\interpret}[1]{\llbracket #1 \rrbracket}
\newcommand{\defined}[1]{#1{\downarrow}}
\newcommand{\fix}{\operatorname{fix}}
\newcommand{\rel}{\operatorname{rel}}
\newcommand{\ele}{\operatorname{ele}}
\newcommand{\Elem}{\operatorname{Elem}}
\newcommand{\concat}{\mathbin{\dplus}}
\newcommand{\encode}{\mathsf{encode}}
\newcommand{\decode}{\mathsf{decode}}
\newcommand{\codefxn}{\mathsf{code}}
\newcommand{\correct}{\mathsf{C}}
\newcommand{\lar}[1]{\langle #1 \rangle}
\newcommand{\inl}{\operatorname{inl}}
\newcommand{\numeral}[1]{\overline{#1}}
\newcommand{\inr}{\operatorname{inr}}
\newcommand{\zerotype}{\emptyset}
\newcommand{\unittype}{1}

\newcommand{\zero}{\operatorname{zero}}
\newcommand{\succtt}{\operatorname{\mathsf{succ}}}
\newcommand{\ifz}{\operatorname{\mathsf{ifz}}}

\newcommand{\total}{\operatorname{total}}

\newcommand{\pr}{\operatorname{pr}}

\newcommand{\pred}{\operatorname{pred}}

\newcommand{\predtt}{\operatorname{\mathsf{pred}}}

\newcommand{\dd}{\operatorname{d}}
\newcommand{\DD}{\operatorname{D}}

\newcommand{\tame}{\operatorname{tame}}
\newcommand{\Dis}{\operatorname{Dis}}
\newcommand{\isDis}{\operatorname{isDis}}
\newcommand{\cadd}{\mathbin{+^*}}
\newcommand{\Sone}{{\operatorname{S}}^1}
\newcommand{\base}{{\mathsf{base}}}
\newcommand{\lp}{{\mathsf{loop}}}

\newcommand{\YY}{\mathbf{Y}}
\newcommand{\Lift}{\operatorname{\mathcal{L}}}
\newcommand{\Delay}{\operatorname{\mathsf{D}}}
\newcommand{\delay}{\operatorname{delay}}
\newcommand{\tick}{\operatorname{tick}}
\newcommand{\now}{\operatorname{now}}
\newcommand{\comp}{\mathop{\circ}}
\newcommand{\kcomp}{\mathop{\scalebox{0.65}{$\square$}}}
\newcommand{\pcomp}{\mathop{\diamond}}
\newcommand{\trunc}[1]{\left\lVert #1\right\rVert}
\newcommand{\univ}[1][]{
    \ifthenelse{\equal{#1}{}}{\mathcal{U}}{\mathcal{U}_{#1}}
    }
\newcommand{\vniv}[1][]{
    \ifthenelse{\equal{#1}{}}{\mathcal{V}}{\mathcal{V}_{#1}}
    }
\newcommand{\wniv}[1][]{
    \ifthenelse{\equal{#1}{}}{\mathcal{W}}{\mathcal{W}_{#1}}
    }
\newcommand{\Id}{\operatorname{Id}}
\newcommand{\id}{\operatorname{id}}
\newcommand{\idtoequiv}{\operatorname{idtoequiv}}
\newcommand{\idtopie}{\operatorname{idtopie}}

\newcommand{\happly}{\operatorname{happly}}
\newcommand{\isContr}{\operatorname{isContr}}
\newcommand{\isProp}{\operatorname{isProp}}
\newcommand{\Prop}{\operatorname{Prop}}
\newcommand{\isSet}{\operatorname{isSet}}
\newcommand{\Set}{\operatorname{Set}}
\newcommand{\isGpd}{\operatorname{isGroupoid}}
\newcommand{\Grp}{\operatorname{Grp}}
\newcommand{\GrpStruct}{\operatorname{GrpStruct}}
\newcommand{\MonStruct}{\operatorname{MonStruct}}
\newcommand{\Mon}{\operatorname{Mon}}
\newcommand{\isMonoid}{\operatorname{isMonoid}}
\newcommand{\isGrp}{\operatorname{isGrp}}
\newcommand{\cC}{\mathcal{C}}
\newcommand{\cE}{\mathcal{E}}
\newcommand{\isEquiv}{\operatorname{isEquiv}}
\newcommand{\isPIE}{\operatorname{PIE}}
\newcommand{\isPI}{\isPIE}
\newcommand{\linv}{\operatorname{linv}}
\newcommand{\rinv}{\operatorname{rinv}}
\newcommand{\inverse}{\operatorname{inverse}}
\newcommand{\qinv}{\operatorname{qinv}}
\newcommand{\funext}{\operatorname{funext}}
\newcommand{\FunExt}{\operatorname{FunExt}}
\newcommand{\PropExt}{\operatorname{PropExt}}
\newcommand{\ExpProp}{\operatorname{ExpProp}}
\newcommand{\ua}{\operatorname{ua}}
\newcommand{\UA}{\operatorname{UA}}

\newcommand{\ACtype}[2]{\big(\qPi{#1}{\trunc{#2}}\big)\to\trunc{\qPi{#1}{#2}}}
\newcommand{\propext}{\operatorname{propext}}

\newcommand{\refl}[1][]{
    \ifthenelse{\equal{#1}{}}{\mathsf{refl}}{\mathsf{refl}_{#1}}
    }
\newcommand{\transport}{\operatorname{transport}}
\newcommand{\idtofun}{\operatorname{idtofun}}
\newcommand{\coe}{\idtofun} 
\DeclareMathOperator{\apd}{apd}
\DeclareMathOperator{\ap}{ap}
\DeclareMathOperator{\app}{app}
\DeclareMathOperator{\fib}{fib}
\DeclareMathOperator{\im}{im}
\DeclareMathOperator{\dom}{dom}
\DeclareMathOperator{\ran}{range}
\DeclareMathOperator{\isEmbedding}{isEmbedding}
\newcommand{\defeq}{\stackrel{\operatorname{def}}{=}}
\newcommand{\eqby}[1]{\stackrel{{#1}}{=}}
\newcommand{\eqover}[4][]{{#2}=^{#4}_{#1}{#3}} 
\newcommand{\jeq}{\equiv}
\newcommand{\htpy}{\sim}
\newcommand{\pieq}{\simeq_{P}}
\newcommand{\bisim}{\approx}
\newcommand{\seq}{\approx}
\newcommand{\sle}{\sqsubseteq}
\newcommand{\lequiv}{\Leftrightarrow}
\DeclareMathOperator{\recop}{rec}
\DeclareMathOperator{\succop}{succ}
\newcommand{\relto}{\rightharpoonup^{\operatorname{R}}}

\newcommand{\pto}{\rightharpoonup}

\newcommand{\nat}{\mathbb{N}}
\newcommand{\Z}{\mathbb{Z}}
\newcommand{\rat}{\mathbb{Q}}
\newcommand{\real}{\mathbb{R}}
\newcommand{\bool}{2} 
\newcommand{\pos}{\operatorname{pos}}
\newcommand{\ff}{\mathsf{false}}
\newcommand{\sgn}{\operatorname{sn}}
\newcommand{\opsgn}{\overline{\operatorname{sn}}}
\newcommand{\N}{\nat}
\newcommand{\Ntwo}{\N_{-2}}
\newcommand{\NI}{\N_\infty}

\newcommand{\cantor}{2^\nat}
\newcommand{\R}{\operatorname{R}}
\newcommand{\RR}{\operatorname{R}}
\newcommand{\SD}{\operatorname{S}}
\newcommand{\powerset}{\mathcal{P}}
\newcommand{\defequiv}{\stackrel{\operatorname{def}}{\Leftrightarrow}}

\DeclareMathOperator{\KleeneT}{T}
\DeclareMathOperator{\KleeneU}{U}
\newcommand{\isRos}{\operatorname{isRosolini}}
\newcommand{\isReg}{\operatorname{isRegular}}
\newcommand{\isCauchy}{\operatorname{isCauchy}}
\newcommand{\RosStruct}{\operatorname{rosoliniStructure}}
\newcommand{\RS}{\operatorname{RS}}

\newcommand{\CT}{\operatorname{CT}}

\newcommand{\CC}{\operatorname{CC}}
\newcommand{\KS}{\operatorname{KS}}
\newcommand{\CompStruct}{\mathsf{CompStruct}}
\newcommand{\Comp}{\mathsf{Comp}}
\newcommand{\isComputable}{\mathsf{isComputable}}
\newcommand{\SemiDecision}{\mathsf{SemiDecision}}
\newcommand{\isSemiDecidable}{\mathsf{isSemiDecidable}}
\newcommand{\isDefined}{\extent}
\newcommand{\dplus}{\mathbin{+\mkern-10mu+}}
\newcommand{\eval}{\operatorname{eval}}
\newcommand{\source}{\operatorname{source}}
\newcommand{\extent}{\operatorname{defined}}
\newcommand{\val}{\operatorname{value}}
\newcommand{\isPR}{\operatorname{isPR}}
\newcommand{\PR}{\operatorname{PR}}

\newcommand{\RM}{\operatorname{RM}}
\newcommand{\PrimRec}{\operatorname{PrimRec}}
\newcommand{\isRec}{\operatorname{isRec}}
\newcommand{\succc}{\mathsf{s}}
\newcommand{\pc}{\mathsf{p}}
\newcommand{\cc}{\mathsf{c}}
\newcommand{\rc}{\mathsf{r}}

\newcommand{\run}{\operatorname{run}}
\newcommand{\U}{\univ}

\newcommand{\klext}[1]{#1^{\sharp}}
\newcommand{\smn}{\operatorname{s}}
\newcommand{\CH}{\mathrm{CH}}
\newcommand{\LPO}{\mathrm{LPO}}


\usepackage{qtree,bussproofs}

\makeatletter
\def\operator@font{\sf}
\makeatother

\usepackage{enumitem}
\setlist{noitemsep}

\usepackage{framed}
\usepackage{soul}

\usepackage[a4paper]{geometry}
\usepackage{fullpage}
\clubpenalty = 10000
\widowpenalty = 10000
\usepackage[T1]{fontenc}
\usepackage{tgpagella}
\DeclareTextCommandDefault{\nobreakspace}{\leavevmode\nobreak\ }

\theoremstyle{plain}
\newtheorem{theorem}{Theorem}[chapter]

\newtheorem{corollary}[theorem]{Corollary}
\newtheorem{lemma}[theorem]{Lemma}

\newtheorem*{conjecture}{Conjecture}

\theoremstyle{definition}
\newtheorem{definition}[theorem]{Definition}
\newtheorem{axiom}[theorem]{Axiom}
\newtheorem{assumption}{Assumption}



\titleformat{\chapter}[display]
{\bfseries\Large}
{\filleft \MakeUppercase{\chaptertitlename} \Huge \thechapter}
{2ex}
{\titlerule \vspace{1ex} \LARGE \center \sc}
[\vspace{1ex} \titlerule]


\title{Partial Functions and Recursion in Univalent Type Theory}
\author{Cory Knapp}

\begin{document}
 
\maketitle
\pagenumbering{gobble}
\begin{abstract}
    We investigate partial functions and computability theory from within a
    constructive, univalent type theory. The focus is on placing computability into a
    larger mathematical context, rather than on a complete development of computability
    theory. We begin with a treatment of partial functions, using the notion of
    dominance, which is used in synthetic domain theory to discuss classes of partial
    maps. We relate this and other ideas from synthetic domain theory to other approaches to partiality in
    type theory. We show that the notion of dominance is difficult to apply in our
    setting: the set of $\Sigma_0^1$ propositions investigated by Rosolini form a dominance
    precisely if a weak, but nevertheless unprovable, choice principle holds. To get
    around this problem, we suggest an alternative notion of partial
    function we call disciplined maps. In the presence of countable choice, this
    notion coincides with Rosolini's.

    Using a general notion of partial function, we take the first steps in
    constructive computability theory. We do this both with computability as
    structure, where we have direct access to programs; and with computability as
    property, where we must work in a program-invariant way. We demonstrate the
    difference between these two approaches by showing how these approaches relate
    to facts about computability theory arising from topos-theoretic and
    type-theoretic concerns. Finally, we tie the two threads together:
    assuming countable choice and that all total functions
    $\nat\to\nat$ are computable (both of which hold in the effective topos), the
    Rosolini partial functions, the disciplined maps, and the computable partial
    functions all coincide. We observe, however, that the class of all partial
    functions includes non-computable partial functions.
\end{abstract}

\begin{acknowledgements}
    As always, there are far too many people to thank, and I'm bound to miss some, so
    I'll just leave it at the obvious, and academic: Thanks, especially, to my supervisor Mart\'in
    Escard\'o, for patient explanations, sage advice, and long working meetings. I
    cannot express how helpful of a supervisor you have been. Thanks to the staff in
    the Birmingham theory group---especially to Paul Levy, for teaching me category
    theory faster than I knew it could be taught, and for consistently playing
    devil's advocate; to Achim Jung, for explaining logical relations to me; and to
    Steve Vickers, for explaining geometric reasoning. Thanks also to the PhD
    Students in the group---especially to Auke for answering all my mathematical questions, and to
    everyone involved in CARGO for continuous motivation.

    Thank you to the examiners Steve Vickers and Bernhard Reus for taking the time to read this.

    Mike Shulman and Peter Lumsdaine may not know it, but they helped set me on the
    path that led here: one by being kind to a clueless undergrad, and the other by
    giving direction to a masters student in over his depth. Both have also given
    advice and encouragement since.

    Finally, of course, I couldn't have done this without endless support from
    friends and family, but such thanks are better said in person.
\end{acknowledgements}

\pagenumbering{arabic}
\tableofcontents

\chapter*{Introduction}
\addcontentsline{toc}{chapter}{Introduction}  

In \emph{Mathematics as a Numerical Language}~\cite{Bishop1970Numerical}, Bishop
outlines what he viewed as the most important
questions of constructive mathematics of his time. In it, he states ``that
mathematics concerns itself with the precise description of finitely performable
abstract operations.'' He also states that constructive mathematicians can use
classical mathematics ``as a guide'' and that much of classical mathematics ``will
raise fundamental questions which classically are trivial or perhaps do not even make
sense.'' The essay continues by sketching three problems that are classically easy, but when
viewed constructively demonstrate gaps in our conceptual understanding. He states
``Each of the three problems just discussed requires the development of new concepts
appropriate to the constructive point of view. None of them is likely to be given an
acceptable solution by the application of a general technique of
constructivization.'' In fact, for two of these problems (Birkhoff's ergodic theorem and the
construction of singular cohomology), there is no constructive solution, so a new
approach must be taken to develop the field. For the third (Hilbert's basis theorem),
the proof itself is constructive, but exhibiting objects to which the theorem can be
applied necessarily requires non-constructive arguments.

This thesis is focused on computability theory, but done from the point of view of
constructivization that we see in the previous paragraph. That
is,
\begin{itemize}
    \item our mathematical work should describe ``finitely performable abstract
        operations'';
    \item we will be guided by the classical development of computability theory, but
    \item we will pay special attention to fundamental questions which are
        classically trivial and on concepts that are appropriate for the constructive
        point of view.
\end{itemize}
The goal, then, is not to develop a full constructive account of computability theory
but to understand how computability theory fits into a constructive mathematical
world. In particular, the behavior of partial functions is more subtle in a
constructive world, so we spend a considerable amount of time examining the theory
of partial functions.

In order to proceed, we must fix a constructive framework for doing
mathematics. Formally, we use a modest extension of intensional
Martin-L\"{o}f type theory (MLTT). This connects directly to our motivating paragraph:
while introducing his type theory in~\cite{MartinLof1975Predicative}, Martin-L\"{o}f cites
Bishop~\cite{Bishop1970Numerical,bishop1967} as a motivation for MLTT. We will not
work formally: since we are communicating mathematics, and not describing a computer
formalization, we will describe and use this implicit framework in an informal way.
Nevertheless, it should be possible to formalize the material in this thesis
directly in a proof assistant such as Coq~\cite{coq} or Agda~\cite{AgdaWiki}.

Both the approach we take to mathematics and the language we use to communicate it
are not widely familiar. The traditional constructive approach is a
\emph{Bishop-style} mathematics: one gives sets via a collection of elements and an
equivalence relation saying how to identify these elements. Mathematics in MLTT
usually mimics this approach using setoids (Section~\ref{section:bishop-mathematics} and
Chapter~\ref{chapter:setoids}), but more recent work tends to be
done from a \emph{univalent} perspective (described in~\ref{section:utt}), which
we take here. Briefly, univalent mathematics is an approach to mathematics which
instead works with the identity types for equality, and which pays special attention
to the structure arising from this proof-relevant equality. This attention motivates
several new definitions, in particular a new interpretation of logic in type theory.
Moreover, this approach allows us to formally distinguish between notions seen as
structure and notions seen as property. Practical work in this approach requires the
addition of some extensionality principles to MLTT
(Chapter~\ref{chapter:univalence}).

Part I of this thesis develops the univalent approach to constructive mathematics.
The aim is to present a concise, self-contained and readable introduction to
univalent mathematics, going somewhat beyond what is required in Part II. Most of the
material in Part I can be found already in the HoTT Book~\cite{hottbook}, but we make
a number of different choices in terminology and presentation. In particular, we
simplify many proofs, insist less on homotopical intuition, and reorder the
priority and primacy of various concepts to better fit with the aims of Part II and
the changes in the field over the last several years.
We also spend quite some time discussing the relationship between extensionality
axioms. There is a technical contribution here: both the proof that univalence implies
function extensionality and the surrounding discussion in
Section~\ref{section:univalence} are new, although some of the ideas were already
implicit in Voevodsky's proof as presented in~\cite{gambino2011univalence}.

Part II begins the main technical contributions of the thesis by examining partial functions.
There are several approaches to partial functions
(Chapter~\ref{chapter:partiality} and Sections~\ref{section:single-valued-relations} and~\ref{section:ptl-elems}), but the
most technically useful is one based on a notion of \emph{partial element}
(Definition \ref{def:lifting}), which is a type-theoretic version of the notion of
subsingleton subset, as well as a constructive version of the lifting from
classical domain theory; indeed, all lifted sets are DCPOs
(Section~\ref{section:dcpos}).

As mentioned above, our broad aim is to better understand how the computable partial
functions fit into a constructive world. Although intuition tells us that
anything which can be done constructively can be done computably, we can define a
partial function $h:\nat\pto\nat$ which enumerates the complement of the halting set,
and so cannot be a computable partial function. It is worth asking then which partial
functions \emph{can be} computable.  Chapter~\ref{chapter:partial-functions} is
devoted to shedding light on this question by looking for a restricted notion
of partial function for which it is consistent that all such partial functions are
computable. Importantly, we expect such partial functions to be composable. The
obvious way to find a more restricted notion of partial function is via the
notion of \emph{dominance}---set of propositions satisfying certain closure
properties (Section~\ref{section:dominances}). Dominances were introduced and studied by
Rosolini~\cite{Rosolini1986} as part of a project of \emph{synthetic domain theory},
and modulo size issues (Section~\ref{section:size}), the theory of dominances transports
directly from a topos-theoretic to a type-theoretic setting. Importantly, we can give
a restricted notion of partial function for any set of propositions $S$, and the
partial functions arising this way are closed under composition precisely if $S$ is a
dominance. This fact suggests that we can use the dominance of \emph{Rosolini
propositions}, typically called $\Sigma$, as our restricted notion of partial function.
Indeed, this is the approach taken in work on both synthetic domain theory and
synthetic computability theory.

Unfortunately, a weak version of countable choice, which is not provable in our
system~\cite{coquand2017stack}, is required to show that the
Rosolini propositions form a dominance (Theorem~\ref{rosolini:choice-to-dominance}).
The Rosolini dominance arises by truncating a family of types---by replacing a
structure with the property that there is such a structure; for any set of
propositions $S$ arising this way, there is a weakening of
the axiom of choice equivalent to the claim that $S$ is a dominance
(Theorem~\ref{dominance:choice-to-dominance}). From the perspective of partial
functions, the issue is that intensional information is used to compose pairs of
$S$-partial functions, but this intensional information is hidden when we truncate
$S$. That is, $S$-partial functions come with no guarantee that
they respect the required intensional information. To resolve this issue we look at a
notion of \emph{disciplined map} (Section~\ref{section:disciplined-maps}), which can be seen as
the class of maps which respect the intensional information used to compose
$S$-partial functions. Indeed, disciplined maps can be shown to compose even without
choice (Theorem~\ref{disciplined-maps:compose}), while countable choice implies that
the disciplined maps out of $\nat$ are exactly the $S$-partial functions from $\nat$
(Theorem~\ref{thm:disc-factoring}). That is, in the settings examined by synthetic
domain theorists and synthetic computability theorists, the Rosolini disciplined maps
are the Rosolini partial functions, making them a natural candidate notion of
computable function.

In Part~\ref{part:three}, we finally turn to computation. Facts about computation
from the topos-theoretic perspective seem to conflict with facts about computation
from the type-theoretic perspective. From the topos-theoretic perspective,
\begin{enumerate}
    \item it is \emph{consistent} that all total functions $\nat\to \nat$ are computable: this is
        true of constructive higher-order logic, with the effective topos as a model.
    \item It cannot be proven that there is an embedding from the class of computable
        functions to the natural numbers: this embedding would tell us that excluded
        middle holds for equality between computable functions, which fails in the
        effective topos.
\end{enumerate}
On the other hand, from the type-theoretic perspective
\begin{enumerate}
    \item it is \emph{false} that all total functions $\nat\to\nat$ are computable:
        an argument by Troelstra shows that even in a constructive higher-order
        setting, this would allow us to solve the halting problem.
    \item There is an embedding from the class of computable functions to the
        natural numbers: this is a first theorem in computability theory.
\end{enumerate}
Univalent mathematics gives a general framework incorporating both of the above
perspectives. In particular, the topos-theoretic facts correspond to facts about
computability as \emph{property}, while the type-theoretic facts correspond to facts
about computability as \emph{structure}. We present computability via
a notion of \emph{recursive machine} (Section~\ref{section:recursive-machines}) which
abstracts away details of initialization, state and memory from the definition of
Turing machine. We say that a \emph{computation structure} for a partial function
$f:\nat\pto\nat$ is a recursive machine computing $f$, while \emph{being computable}
is the associated property arising by truncating the type of computation structures.
Much of a first course in computability theory goes through for both computation as
structure (Chapter~\ref{chapter:comp-as-struct}) and computation as
property (Chapter~\ref{chapter:comp-as-prop}). The computability theory we
present is not deep, and we reiterate that a complete development is not the goal.
Rather, these results should show that our approach to computability theory is sound.

The final technical Chapter~\ref{chapter:comp-and-partiality} contains a discussion of how
computability as structure and computability as property differ, and a look at how
computability theory fits with the view of partial functions developed in
Chapter~\ref{chapter:partial-functions}, bringing the two threads back together. The
key result is that countable choice and ``Church's thesis'' (the statement that all
\emph{total} functions $\nat\to\nat$ are computable) together imply that the
computable partial functions $\nat\pto\nat$ are exactly the Rosolini partial functions.
There is an unfortunate lacuna here: one of our goals is a notion of partial function
for which it is consistent that all such partial functions are computable. It is
indeed consistent with MLTT that our candidate notion of disciplined map is such a
notion: in the effective topos, the disciplined maps, the Rosolini partial
functions and the computable partial functions coincide. However, we use a univalent
version of proposition extensionality throughout, and this statement relies on
treating propositions as types satisfying a certain property, rather than as elements
of a sub-object classifier, as is done in topos logic. We would need to either give a
translation from our type theory to topos logic, or construct a model validating
countable choice and
Church's thesis to resolve this discrepancy. For reasons of scope, we do neither
here. Filling this gap is one of the obvious directions for
future work. Another direction for future work is to examine higher-type
computability from the perspective developed here. We discuss both of these
directions in more depth when discussing \hyperref[chapter:future]{future work}.

A few threads woven through this thesis determine its shape, despite being
difficult to express in a way that is both general and technical. I will try to
explain them here, with reference to Bishop's essay \emph{Schizophrenia in Contemporary
Mathematics}~\cite{bishop1973schizophrenia}.

Bishop's essay concerns itself primarily with
what Bishop calls the \emph{debasement of meaning}. Two particular trends leading to
this debasement (according to Bishop), were the tendency towards formalism---in which
meaning is simply ignored, in favor of formal manipulations---and an
``esotericism''---in which meaning becomes imprecise. Strangely, despite the
particularly formal subject matter, it seems computability suffers more from the
latter than the former. For example, many books on computability theory will fix some
formal notion of computation (or discuss and compare several), and
then after the first chapter, prove all results with an appeal to the Church-Turing
thesis and a rough description of an algorithm, leaving the reader left to wonder why
they went through the trouble of understanding (e.g.) Turing machines in the first
place. The appeal to intuition implicit in the reference to the Church-Turing thesis
would be more honest if done as in other areas of mathematics: rather than suggest we
are appealing to an empirical principle, simply describe informally how to construct
the mathematical object of interest.

For his part, Bishop responds to the identification of finitely-performable and
recursive by saying that the ``naive
concept`` of an algorithm is more basic than that of recursive function, and ``the
recursive concept derives whatever importance it has from some presumption that every
algorithm will turn out to be recursive.'' Bishop, then, seems to take the view that
computability theory is a misguided attempt to make more precise the concept of
algorithm. I take a different view: computability theory is the
mathematization of the real-world concept of program, and programmable functions.
This mathematization has proved fruitful in aiding our understanding of programming,
and one may suspect, will continue to do so. In other words, computability theory is
a part of mathematics which deserves treatment.

A more precise (and less polemical) example of the debasement of meaning in
computability theory is given by the notion of \emph{Kleene equality}. One often says
that $f(x)\simeq g(x)$ when either both $f(x)$ and $g(x)$ are undefined, or both are
defined and equal. This does not work in a constructive setting, since we cannot say
that a function is either defined or undefined on some input. Moreover, the treatment
of partial functions is such that the terms $f(x)$ and $g(x)$ are only meaningful when
defined. That is, this notation makes
reference to a non-existent thing, and we explain this notation by reference again to
this non-existent thing. The treatment given here differs: we have no need of Kleene equality,
because partial functions $f:X\pto Y$ are taken to be ordinary function into a special type of
partial values of $Y$. Then, $f(x)\simeq g(x)$ precisely if $f(x)$ and $g(x)$ are equal as partial
values.

In this case, an ad hoc notion of equivalence can be reduced to an honest equality by
revising our approach. This idea---replacing ad hoc notions of equivalence with the
general notion of equality---represents a crucial benefit of the univalent approach.
Working directly with setoids, as Bishop does and as is traditional in type theory,
brings with it byzantine bureaucracy, arising from the
ad-hoc treatment of equality. Martin Hoffman showed how to interpret MLTT with
equality in setoids~\cite{Hofmann1995}, so MLTT can be used to simplify this
bureaucracy. Identity types provide a uniform and general way to present the
machinery of setoids. However, some \emph{extensionality} principles are required to
treat identity types properly, and there has long been a question about how best to
approach extensionality in type theory. The univalence axiom provides one possible answer to
this question.

The last point concerns the meaning of existence. One of Bishops concluding remarks
is ``There seems to be no reason in principle that we should not be able to develop a viable terminology
that incorporates more than one meaning for some or all of the quantifiers and
connectives.'' A particular point of dispute between classical and constructive
mathematicians is the meaning of the existential quantifier. The univalent
perspective gives two versions, one which we denote $\exists$ is valued in
\emph{propositions}, the other is denoted $\Sigma$ and is valued in
\emph{structures}. While our logic of propositions is not classical, the meaning of
$\exists$ seems more in-line with the logicist interpretation used in classical
mathematics, while the meaning of $\Sigma$ aligns with the intuitionistic
interpretation used in constructive mathematics. These connectives are made possible
by the distinction between properties (families of propositions) and structures (general type
families) in a univalent setting. This distinction is certainly not unique to
univalent mathematics, but it seems to be the first to give a formal language for
exploring the distinction. The work reported here began as an attempt to study
the distinction between structure and property, not at a general abstract level, but
in the wild, with a particular example. It grew into its present form by returning
repeatedly to the central question: how does computability viewed as structure differ from
computability viewed as property?

\section*{Summary of contributions}
\addcontentsline{toc}{section}{Summary of contributions}
The main contributions are
Chapters~\ref{chapter:partial-functions},~\ref{chapter:comp-as-struct},~\ref{chapter:comp-as-prop},
and~\ref{chapter:comp-and-partiality}. Some of the material in these chapters has
appeared in~\cite{partialelems2017}.
Sections~\ref{section:dominance-choice}-\ref{section:disciplined-maps} contain the
truly novel
material of Chapter~\ref{chapter:partial-functions}; the rest of of the chapter is
largely an exercise in translation from one framework to another. The technical results in
Chapters~\ref{chapter:comp-as-struct} and~\ref{chapter:comp-as-prop} are standard,
well-known results, but they have not been proved in a univalent setting, and the key
contribution is the clear and rigorous distinction between computability as structure
and computability as property. This distinction is what leads to the denouement in
Chapter~\ref{chapter:comp-and-partiality}. Except for Theorem \ref{not-strong-ct} due
to Troelstra, this chapter is entirely new.
The discussion of partiality via setoids in Chapter~\ref{chapter:setoids} is new, but
is only in sketch, and serves mostly as comparison.

Part~\ref{part:one} is standard material. The main technical contribution is the new
proof that univalence implies function extensionality in
Section~\ref{section:univalence}. There are some minor contributions:  some minor simplifications in other proofs (the
proof of Theorem~\ref{thm:nat-is-set} comes to mind), and
Chapter~\ref{chapter:foundations}, which is mostly folklore, but which contains some
material that does not seem to be published anywhere. Nevertheless I feel there is a
larger contribution here: the organization of concepts in univalent mathematics has
changed in the 5 years since the HoTT book was first published.
Part~\ref{part:one} provides an introduction to the field which reflects (some of)
these changes.


\part{Univalent Mathematics}\label{part:one}
Part I deals with univalent foundations. We begin with an overview chapter which
explains the underlying system (Section~\ref{section:mltt}), relates this system to
approaches to constructive math
(Sections~\ref{section:bishop-mathematics}-\ref{section:topos-logic}), and then
introduces the univalent perspective
(Sections~\ref{section:utt}-\ref{section:equiv-contr}). It is important to emphasize
that the way we make intuitive ideas rigorous in a univalent setting is different
from the classical approach. The first chapter is is chiefly designed to give language that
better fits the univalent approach, and to give intuition on how it differs from
other approaches to constructive mathematics. In particular, since univalent
mathematics is built on a type-theoretic framework, logic is not an underlying part
of the language, but something which we must interpret. Instead of the Curry-Howard
interpretation taken in traditional type-theoretic approaches, we take a
``propositions as subsingletons'' view of logic, isolating the propositions as a
distinguished class of types. Importantly, this distinction is made internally (not
using a judgement), using a defined operator $\isProp:\univ\to\univ$; except for
constructions made using $\isProp$, propositions are treated as any other type.

With this language in place, we develop basic tools for the univalent perspective in
Chapter~\ref{chapter:univalence}, focusing particularly on extensionality principles
(Sections~\ref{section:funext} and~\ref{section:univalence}) and on propositions
(Sections~\ref{section:unilogic}). We will see quickly that pure MLTT is insufficient
for developing a logic of propositions using $\isProp$. We must extend the type
theory with a \emph{truncation} operator, taking a type to its best representation as
a proposition. We will also need the propositions to be closed under $\Sigma$ and
$\Pi$; Closure under $\Pi$ happens to be equivalent to extensionality for functions
(pointwise equal functions are equal),
while closure under $\Sigma$ is true already in MLTT. Similarly, it is desirable to
have an extensionality principle for propositions (logically equivalent propositions
are equal). We will assume both of these principles in Parts~\ref{part:two}
and~\ref{part:three}, but here we discuss the \emph{univalence axiom}, which serves
as an extensionality principle for types, implying both proposition and function
extensionality. Here, and throughout, we use the term \emph{extensionality principle}
to mean a principle which allows us to prove equality between objects from a
seemingly weaker relationship. We do not consider the question of whether such
principles make the theory truly extensional (for this, see~\cite{Streicher2013How}
or~\cite{Ladyman218Foundation}).
Sections \ref{section:resizing},~\ref{section:hlevels}
and~\ref{section:hits} cover ideas that are new to univalent mathematics, via
applications to constructions that have been problematic in type theory.

Finally, we examine a few traditional notions from the univalent perspective in
Chapter~\ref{chapter:foundations}. Of particular importance are
Section~\ref{section:monads} on an internal notion of monad on the universe of types,
and Section~\ref{section:choice} on the axiom of choice. These are not of note
because the univalent approach is interesting (although with choice, it is), but
because we will be particularly concerned with monads and choice principles in
Part~\ref{part:two}.


\chapter{Type theory}
To begin, we explain the base formal system, which we will later extend in
Chapter~\ref{chapter:univalence}. We then move on to a discussion of the traditional
way to interpret logic in this system, and a discussion of informal constructive
mathematics. To end the chapter, we give a small taste of univalent foundations,
discussing the perspective taken in a univalent development and giving the basic
definitions needed to explore the univalent approach.

\section{Martin-L\"{o}f type theory}\label{section:mltt}
We work in an extension of MLTT similar to that in the HoTT Book~\cite{hottbook}, and
use some language from the HoTT Book. Our theory is a theory of objects called
\emph{types} which have members called \emph{elements} or \emph{inhabitants}. We write
$a:A$ to mean $a$ is an element of the type $A$. Judgmental equality--which
expresses a syntactic notion of equality---will be denoted by $\jeq$. Elements can
only be given as elements of a specified type. That is, we have no global universe of
discourse or global membership predicate. We have several basic types and type
operations, each of which come with \emph{formation rules} saying how to create the
type, \emph{introduction rules} saying what the type's \emph{canonical}
elements are, \emph{elimination rules} saying how to use elements of the type,
with a specification (the \emph{computation rules}) saying how the eliminator acts on
canonical elements. We treat types themselves as being elements of special types
called \nameas{universes}{universe}, representing types of types. To avoid circularity issues, we
have a hierarchy of universes, \definesymbolfor{$\univ$}{univ}{universe}
$\univ_0,\univ_1,\ldots$. Universes are cumulative (if $A:\univ[i]$ then
$A:\univ[i+1]$) but are not transitive (so $a:A$ and $A:\univ[i]$
does not mean $a:\univ[i]$), and we
have $\univ[i]:\univ[i+1]$ for each $\univ[i]$. We will work in an ambiguous universe
$\univ$, only indexing the universe when we are discussing size issues. As such, we
will write $A:\univ$ to mean that $A$ is a type. Type families indexed by a type $A$
will be treated as functions $B:A\to\univ$.

We will often use types as statements, in which case we mean to say that they have an
element. For example, we state many theorems by saying something of the form "We
have~$P$.", which means "There is some (unspecified) $p:P$". In fact, all statements
and constructions in our system reduce to the construction of some element $p:P$, and
we prove all of them by giving an explicit element. We discuss this in more depth in
Section~\ref{section:unilogic}.

The first type former of interest is that of \nameas{dependent
products}{type!dependent product}:
For any type $A$ and type family $B:A\to\univ$, we have a \emph{dependent
product} type, or \emph{type of dependent functions}, ${\qPi{a:A}{B(a)}:\univ}$.
Elements  are introduced via lambda abstraction: if for any $x:A$, we have
$t:B(x)$ then there is an element $\lam{x}t:\qPi{a:A}{B(a)}$. We will sometimes write definitions by
giving arguments to the function, so $f(x)\defeq y$ means $f\defeq \lambda x.y$. The
elimination rule is function application, and the computation rule says that
$(\lam{x} t)(a) \jeq t[x/a]$. The type $A\to B$ of \nameas{functions}{function} from $A$ to $B$ is the
product over the constant family $\lam{x:A}B$.

We treat a function of two 
arguments (of type $A$ and type $B$) as a \emph{curried} function of type $A\to (B\to C)$, or in the dependent
case, if $B:A\to\univ$ and $C:\qPi{a:A}(B(a)\to\univ)$, as a function
$\qPi{a:A}(\qPi{b:B(a)}C(a)(b))$. When we have $f:\qPi{a:A}(\qPi{b:B(a)}C(a)(b))$ we
may write $f(a)(b)$, or $f(a,b)$ or $f_a(b)$ for the application of $f$ to $a$ and
$b$, depending on focus and readability. To simplify notation, we treat quantifiers (such as
$\Pi$) as binding as far right as possible, and $\to$ as right-associative, so that
we can write the types above as $A\to B\to C$ and $\qPi{a:A}\qPi{b:B(b)}C(a)(b)$.
The most basic example of a function is the identity function,
\begin{flalign*}
    \id\;\;  &:\qPi{A:\univ}(A\to A),\\
    \id_A&\jeq \lambda x.x.
\end{flalign*}

The computation rules are sometimes know as \emph{$\eta$ rules}
\index{eta@{$\eta$}!rule|textit}, and the $\eta$ rule
for functions warrants some discussion. Consider the non-dependent composition
operator
\[-\comp-:(B\to C)\to (A\to B)\to (A\to C)\]
defined by \definesymbol{$\comp$}{composition}
\[g\comp f \defeq \lambda x.g(f(x))\]
Then the $\eta$ rules says that $h\comp(g\comp f) \jeq (h\comp g)\comp f$, as we have
\[h\comp(g\comp f) \defeq \lambda x.h(g\comp f(x)) \jeq \lambda x.h((\lambda
x.g(f(x)))(x)) \jeq \lambda x.h(g(f(x))) \jeq \lambda x. (h\comp g)(f x) \jeq (h\comp
g)\comp f.\]
In fact, in general, if two functions $f,g:A\to B$ are such that $f(x)\jeq g(x)$ for
every $x:A$, then we have $f\jeq g$. However, judgmental equality should be thought
of as part of the meta-system, and so we cannot express this fact in our system. We
will shortly introduce a type-level or \emph{propositional} version of equality, and the
corresponding statement (known as \emph{function extensionality}) for propositional
equality can be expressed internally, but does not hold in pure MLTT. We discuss
function extensionality in more detail in Section~\ref{section:funext}.

Most of the rest of our type formers will be \nameas{inductive types}{type!inductive}:
the elimination
rules are \emph{induction principles} saying that a (dependent) map out of the type
is defined uniquely by specifying its behavior on the canonical elements. We have,
\begin{itemize}
    \item An \nameas{empty type}{type!empty}, $\zerotype:\univ$ with no introduction rule, so that
        for any $C:\univ$ we have a unique function $!:\zerotype\to C$; more
        generally, if $C:\zerotype\to\univ$ then we have a unique dependent function
        $!:\qPi{z:\zerotype}C(z)$.
        We interpret statements of the form "Not $P$" as meaning
        $P\to\zerotype$.
    \item A \nameas{unit type}{type!unit}, $\unittype:\univ$ with a single canonical element
        $\star:\unittype$.
    \item A \emph{boolean type}, $\bool:\univ$ with canonical elements
        $0,1:\bool$.
    \item A type \definesymbolfor{$\nat$}{nat}{type of natural numbers}$\nat:\univ$ of
        \nameas{natural numbers}{type!of natural numbers}
        \index{natural numbers|see{type!of natural numbers}} with canonical elements $0:\nat$
        and $\succop(n):\nat$ for each $n:\nat$. As usual, we will often write $n+1$
        for $\succop(n)$.
    \item For each $A:\univ$ and $B:A\to\univ$ a \nameas{sum type}{type!sum}, or type of
        \emph{dependent pairs}, $\qSig{a:A}{B(a)}$ with canonical elements
        $(a,b):\qSig{a:A}{B(a)}$ for each $a:A$ and $b:B(a)$. The
        \nameas{cartesian product}{type!product} $A\times B$ of $A$ and $B$ is taken
        to be the sum over the constant family $\lam{x:A}B$
    \item For $A,B:\univ$, we have a \nameas{coproduct}{type!coproduct} or \emph{disjoint union} type
        $A+B$ with canonical elements $\inl(a):A+B$ for each $a:A$ and $\inr(b):A+B$ for
        each~$b:B$.
\end{itemize}
A member of $\qSig{x:A}B(x)$ is thought of as a pair, and this is expressed by the
elimination rules for $\Sigma$. In particular, we have
\begin{align*}
    \pr_0\qquad\,&:\big(\qSig{x:A}B(x)\big)\to A\\
    \pr_0(a,b)&\defeq a\\
    \pr_1\qquad\,&:\qPi{p:\qSig{x:A}B(x)}B(\pr_0(p))\\
    \pr_1(a,b)&\defeq b.
\end{align*}
These two functions are completely specified by giving their value on constructors.
We will sometimes abuse notation writing something of the form $\lambda(a,b).t(a,b)$ to
mean the function defined by induction with value $t(a,b)$ on the pair $(a,b)$.
As with $\Pi$, we take $\Sigma$ to bind as far right as can be made sense of. We also
treat $(-,-)$ as right associative, so $(a,b,c,d)$ means $(a,(b,(c,d)))$ and is an
element of the type
\[\qSig{a:A}\qSig{b:B(a)}\qSig{c:C(a,b)}D(a,b,c),\]
while $((a,b),c)$ is an element of the type
\[\qSig{p:\qSig{x:A}B(x)}C(\pr_0(p),\pr_1(p)).\]
or less formally, of the type
\[\qSig{(x,y):\qSig{x:A}B(x)}C(x,y),\]
However, as expected, pairing is associative (up to equivalence), and we will be
sloppy with our pairing notation when doing so aids readability.

Additionally, for each $A:\univ$, we have the inductive family
\defineopfor{$\Id$}{Id}{type, identity}$\Id_A:A\to A\to\univ$ of
\nameas{identity types.}{type!identity}
\index{equality|see{type!identity}}\index{path|see{type!indentity}}
\definesymbolfor{$=$}{equals}{type, identity}
We will usually write $a=b$ or $a=_{A}b$ for $\Id_A(a,b)$, and
we call elements of $a=b$ \emph{paths} or \emph{equalities}. The introduction rule
gives $\refl[a]:a=a$ and the elimination rule, called \emph{path induction}, is as
follows:

Given
\begin{itemize}
    \item $C:\qPi{a,b:A}{a=b\to \univ}$, and
    \item $c:\qPi{a:A}{C(a,a,\refl[a])}$;
\end{itemize}
we have an inhabitant 
\[f:\qPi{a,b:A}{\qPi{p:a=b}{C(a,b,p)}}.\]
such that
\[f(\refl[a]) \jeq c(a).\]
The element expressing the elimination rule is traditionally called $J$:
\[J_{A,C} : \Big(\qPi{a:A}C(a,a,\refl)\Big)\to \qPi{a,b:A}\qPi{p:a=b}C(a,b,p).\]

Note that here we have used a \emph{judgemental} equality, which is also assumed for
the inductive types given above. However, as we work informally, we will be vague
about the difference between judgemental equality and the type-level
equalities arising from identity types. In short, we only work up to type-level
equality.

There is also a \emph{based path induction} principle which says that for fixed $a:A$
and given
\begin{itemize}
    \item $C:\qPi{b:A}{a=b\to \univ}$, with
    \item $c:C(a,\refl[a])$;
\end{itemize}
we have an inhabitant
\[f:\qPi{b:A}{\qPi{p:a=b}{C(b,p)}}\]
such that
\[f(\refl[a]) \jeq c(a).\]
In practice, we usually present proofs by path induction by assuming $p$ is
$\refl$. We prove the following lemma first in a more rigorous style using $J$, and
then in the more intuitive style we use subsequently.
\begin{lemma}\label{lemma:path-concat}
    Fix a type $A:\univ$.
    \begin{enumerate}[label=(\roman*)]
        \item For any $x,y:A$ and $p:x=y$, there is $p^{-1}:y=x$.
        \item For any $x,y,z:A$ any $p:x=y$ and $q:y=z$, we have $p\ct q:x=z$ such
            that each of the following equations hold for all $p:x=y$:
            \begin{align*}
                p\ct\refl[y] &{}= p,\\
                \refl[x]\ct q &{}= q,\\
                p\ct p^{-1} &{}= \refl[x],\\
                p^{-1}\ct p &{}= \refl[y].\\
            \end{align*}
    \end{enumerate}
\end{lemma}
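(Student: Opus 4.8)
The plan is to prove both parts by path induction (in the intuitive style, assuming the relevant path is $\refl$), building up the operations and then verifying the equations.

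First I would handle (i): to construct $p^{-1}:y=x$ from $p:x=y$, I apply based path induction on $p$. It suffices to produce an element of $x=x$ in the case where $y$ is $x$ and $p$ is $\refl[x]$; for this I take $\refl[x]$. So $\refl[x]^{-1}\jeq\refl[x]$ by the computation rule, which I will need below.

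Next, for (ii), I construct the concatenation $p\ct q:x=z$ from $p:x=y$ and $q:y=z$. I would do path induction on $p$: in the case $y\jeq x$, $p\jeq\refl[x]$, I need a map $(x=z)\to(x=z)$, and I take the identity, i.e.\ $\refl[x]\ct q\defeq q$. (One could alternatively induct on $q$; the choice affects which of the two unit laws holds judgmentally.) With this definition, the equation $\refl[x]\ct q= q$ holds because both sides are judgmentally equal, so $\refl$ witnesses it. For $p\ct\refl[y]= p$, I do path induction on $p$: when $p\jeq\refl[x]$, the left side is $\refl[x]\ct\refl[x]$, which computes to $\refl[x]$, so again $\refl$ works. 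For $p\ct p^{-1}=\refl[x]$: induct on $p$; when $p\jeq\refl[x]$ we have $\refl[x]\ct\refl[x]^{-1}\jeq\refl[x]\ct\refl[x]\jeq\refl[x]$, so $\refl$ works. Finally $p^{-1}\ct p=\refl[y]$: induct on $p$; when $p\jeq\refl[x]$, this is $\refl[x]^{-1}\ct\refl[x]\jeq\refl[x]\ct\refl[x]\jeq\refl[x]$, again witnessed by $\refl$.

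There is no real obstacle here — everything reduces to the base case after path induction, and in each case the desired equation holds judgmentally given the chosen definitions, so $\refl$ suffices. The only subtlety worth flagging is that the definitions of $p^{-1}$ and $p\ct q$ must be chosen so that these four equations come out judgmentally (in particular, defining $\ct$ by induction on the \emph{first} argument is what makes $\refl\ct q\jeq q$ hold on the nose while $p\ct\refl= p$ requires a short induction, rather than the other way around); once the conventions are fixed, the verifications are immediate. I would present the first part (the construction of $p^{-1}$) carefully using $J$ as promised in the text, then switch to the assume-$\refl$ style for the remaining constructions and equations.
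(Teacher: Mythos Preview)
Your proposal is correct and matches the paper's approach: both define $p^{-1}$ and $p\ct q$ by path induction with $\refl^{-1}\defeq\refl$ and $\refl\ct q\defeq q$, then verify each equation by reducing to the $\refl$ case. The paper in fact presents exactly this argument twice---once spelled out with $J$ and once in the assume-$\refl$ style---just as you suggest doing.
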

\begin{proof}
(Using $J$.)

    \begin{enumerate}[label=(\roman*)]
        \item Define $C:\qPi{x,y:A}(x=y)\to\univ$ by
            \[C(x,y,p) \defeq y=x,\]
            and note that for any $a:A$ we have $C(a,a,\refl[a]) \jeq (a = a)$, so then
            define $c:\qPi{a:A}C(a,a,\refl[a])$ by $c(a) = \refl[a]$. Then letting $f\defeq
            J_{A,C}(c)$ we have
            \[f : \qPi{x,y:A}(x=y)\to(x = y),\]
            satisfying the equation
            \[f(a,a,\refl[a]) \jeq \refl[a].\]
            Now define $p^{-1} = f(p)$.

        \item Define $C:\qPi{x,y:A}(x=y)\to\univ$ by
            \[C(x,y,p) \defeq \qPi{z:Z}(y=z) \to (x=z).\]
            Note that for any $a:A$ we have $C(a,a,\refl[a]) \jeq
            \qPi{z:Z}(a=z)\to(a=z)$, so define $c:\qPi{a:A}C(a,a,\refl[a])$ by $c(a)
            = \lambda z,p. p$. Let $f\defeq J(c)$ and for $p:x=y$ and $q:y=z$, define
            $p\ct q \defeq f(x,y,p)(z,q)$.

            That $\refl[x]\ct q = q$ follows by definition. For the other two
            equations, we need to again use the elimination principle.

            We need to check the equations. Define $C,C',D,D':\qPi{x,y:A}(x=y)\to\univ$ as
            \begin{align*}
                C(x,y,p)\,\, & \defeq p\ct\refl[y] = p\\
                C'(x,y,p) & \defeq \refl[x]\ct p = p\\
                D(x,y,p)\,\, & \defeq p \ct p^{-1} = \refl[x]\\
                D'(x,y,p) & \defeq p^{-1} \ct p = \refl[y]\\
            \end{align*}
            Then we have
            \begin{eqnarray*}
                c\,\,&:&\qPi{a:A}C(a,a,\refl[a])\\
                c_a &\defeq& \refl[{\refl[a]}],\\
                c' &:&\qPi{a:A}C'(a,a,\refl[{\refl[a]}])\\
                c_a' &\defeq& \refl[{\refl[a]}]\\
                d\,\, &: &\qPi{a:A}D(a,a,\refl[a])\\
                d_a &\defeq& \refl[{\refl[a]}]\\
                d_a'&:&\qPi{a:A}D'(a,a,\refl[a])\\
                d_a' &\defeq& \refl[{\refl[a]}]
            \end{eqnarray*}
            Then we have 
            \begin{align*}
                J_{A,C}(c)\,\,&: \qPi{x,y:A}\qPi{p:x=y}p\ct\refl[y] = p\\
                J_{A,C'}(c')&: \qPi{x,y:A}\qPi{p:x=y}\refl[x]\ct p = p\\
                J_{A,D}(d)\,\,&: \qPi{x,y:A}\qPi{p:x=y}p \ct p^{-1} = \refl[x]\\
                J_{A,D'}(d')&: \qPi{x,y:A}\qPi{p:x=y}p^{-1} \ct p = \refl[y]\qedhere\\
            \end{align*}
    \end{enumerate}
\end{proof}
\begin{proof} (By reduction to the case that $x$ and $y$ are the same and $p$ is $\refl$.)
    \begin{enumerate}[label=(\roman*)]
        \item Let $x\jeq y$ and $p$ be $\refl$. Then define $\refl^{-1} \defeq \refl$.
        \item Let $x\jeq y$ and $p$ be $\refl$. Then define $\refl\ct q \defeq q$,
            so that we have $\refl\ct q \jeq q$.

            We also have $\refl\ct \refl^{-1} \jeq \refl\ct\refl \jeq \refl$. Now
            suppose $q$ is $\refl$, so that we have a path $\refl[x]\ct \refl[y] \jeq
            \refl[x]$ and so for any $p$, we have $p\ct \refl[y] = p$. \qedhere
    \end{enumerate}
\end{proof}

\section{Bishop-style mathematics and the Curry-Howard
interpretation}\label{section:bishop-mathematics}
\index{Curry-Howard interpretation}
Type theory arose from an intuitionistic perspective, wherein mathematical
constructions are at least as basic as logical operations, in contrast to a
first-order theory such as ZFC, which builds mathematics on top of logical deduction.
In other words, to express logic in our language, logical notions must be encoded
using types.
This is traditionally done via the \emph{Curry-Howard} interpretation of logic: we
interpret propositions as types, and take a proof of the proposition $P$ to be the
construction of an element $p:P$.  Under this interpretation, the logical operations
are
\begin{align*}
    P\wedge Q &\jeq P\times Q\\
    P\vee Q &\jeq P + Q\\
    P\Rightarrow Q &\jeq P\to Q\\
    \neg P &\jeq P\to \zerotype \\
    \qAll{a:A}{P(a)} &\jeq \qPi{a:A}{P(a)}\\
    \qExists{a:A}{P(a)} &\jeq \qSig{a:A}{P(a)}.
\end{align*}
For example, the statement 
\[\forall (x\in \nat).x\neq 0\to \exists (y\in \nat).x=\succop (y)\]
becomes the type
\[ \qPi{x:\nat}\big(x= 0\to\zerotype)\to \qSig{y:\nat}x = \succop(y),\]
and this type has an element, by induction on $x$:

In the base case, we want to see
$(0=0\to\zerotype)\to \qSig{y:\nat}x=\succop(y)$. Let $f:0=0\to\zerotype$; as we have
$\refl:0=0$, we then get $f(\refl):\zerotype$, and so by the elimination principle
for $\zerotype$ (i.e., the unique function from $\zerotype$ into any other type), we have
$!(f(\refl)):\qSig{y:\nat}x=\succop(y)$.

For the inductive step, we know $x$ is of the form $\succop(n)$. Then 
\[\lambda f.(n,\refl[\succop(n)]) :\big(x= 0\to\zerotype)\to \qSig{y:\nat}x = \succop(y).\]
Using the induction principle then gives us an element of the desired type---a proof of
the fact that every non-zero natural number is a successor.

The Curry-Howard interpretation allows the interpretation of logical concepts which
matches the constructivist explanation of the logical connectives. However, it
quickly becomes apparent that many \emph{extensional
concepts}~\cite{Hofmann1995}---concepts specifying the behavior of equality---should
be added to MLTT in order to work with MLTT effectively. Four noteworthy extensional
concepts are \emph{function extensionality},
saying that pointwise equal functions are equal; proposition extensionality, saying
that logically equivalent propositions are equal; proof irrelevance, saying that
any two proofs of the same proposition are equal; and quotient types, identifying
related elements of a type. The traditional way to resolve these difficulties is to
equip types with equivalence relations on demand, instead of using the identity types
introduced earlier. This matches Bishop's view that to give a set is to give not only
its members, but also what we must show to show two members to be equal. A type equipped
with an equivalence relation is called a \emph{setoid}. Given two setoids $(A,\seq)$
and $(B,\seq)$, there is no reason to expect an arbitrary $f:A\to B$ to respect the
equivalence relation. It is typical to therefore call elements of $A\to B$
\emph{operations} from $A$ to $B$, and call an operation $f:A\to B$ a \emph{function}
when for all $a,a':A$ if $a\seq a'$ the $f(a)\seq f(a')$.

When working with setoids, we can force function extensionality to hold by equipping
the type of functions with the equivalence relation of pointwise equality. Similarly, 
we may introduce quotients by simply adjusting the equivalence relation on the
setoid: if $(A,\seq)$ is a setoid, and $E$ as an equivalence relation on $A$ which
is coarser than $\seq$, we can set $A/E$ to be the setoid $(A,E)$.
More subtly, we may consider a class of proof-irrelevant setoids: say that $(A,\seq)$
is \emph{proof-irrelevant} when $\seq$ is the chaotic relation---that is, when $a\seq
a'$ for any~$a,a':A$.

We will say more about setoids in Chapter~\ref{chapter:setoids}, after we examine the
extensional concepts we are interested in from the univalent perspective. For now, it
is worth remarking that it is some work to equip $\Sigma$ and
$\Pi$ types with setoids (a notion of substitution is required for this), and that it
is not clear whether there is a reasonable way to make the universe into a setoid. As
the universe is central to our approach to partial functions in
Chapter~\ref{chapter:partial-functions}, the setoid approach will not work for our
purposes.

\section{Informal mathematics and constructive taboos}\label{section:taboos}
\index{constructive!taboo}
The logic we get in a Bishop-style mathematics and the logic we get in the univalent
mathematics we will use are both constructive. As a result, we do not have the law of
the excluded middle. A consequence is that the possibility of an independent or
\emph{constructively undecided} proposition is always in the air. Discussion around these
principles can be misleading at times.  Consider the following two
sentences
\begin{itemize}
    \item ``The type of natural numbers is discrete.''
    \item ``The type of real numbers cannot be discrete.''
\end{itemize}
While we do not yet know the definition of discreteness, it is enough to know that
if LEM holds, then  all sets are discrete. Regardless, the first is a theorem in our
system (as a consequence of Theorem~\ref{thm:nat-is-set}) and the second is an
\emph{informal metatheorem}.
We could make this into an honest metatheorem by (e.g.) proving that there is no
proof that the real numbers are discrete, but we want to avoid a deep
detour into metamathematics. Such informal statements are
often justified in constructive math by appeal to \emph{constructive taboos}: statements
which we expect not to be provable in a constructive system. Statements of the second
form are explanations of results of the form $P\to Q$, where $Q$ is a constructive
taboo. In particular, if the type of real numbers has decidable
equality, then we can derive Bishop's \emph{limited principle of omniscience}, which
we can write in a logical style as
\[\LPO\defeq \forall(\alpha: \nat\to 2),\big(\qExists{n:\nat}\alpha(n)=1)\big)\vee
\neg\big(\qExists{n:\nat}\alpha(n) = 1\big).\]
As the limited principle of omniscience says that we can examine infinitely many
cases, it does not meet Bishop's requirement that we are describing ``finitely
performable abstract operations'', and so a system which allows us to prove $\LPO$ is
by its nature non-constructive. Then we cannot expect to have the type of real numbers
to be discrete in a constructive system.

Some constructive systems arise from considerations in analysis or computability
theory~\cite{Bridges1987Varieties}, and often contain axioms that are incompatible
with the law of the excluded middle. In this thesis, such principles will also be
called taboos. Below we list
several taboos that we will make mention of throughout. The list contains an informal
statement of each, a brief explanation of the reasons the principle might be
considered, and a formal logical statement, and the same statement in the
Curry-Howard interpretation.
\begin{description}
        \indexandabbr{Limited Principle of Omniscience}{LPO}
    \item[Limited principle of omniscience (LPO)] Every binary sequence either takes
        the value 1, or doesn't take the value 1.
        First introduced by Bishop~\cite{bishop1967} as a principle which should not
        be constructively valid.
        \[\qAll{\alpha: \nat\to 2}\big(\qExists{n:\nat}\alpha(n)=1)\big)\vee
        \neg\big(\qExists{n:\nat}\alpha(n) = 1\big).\]
        \[\qPi{\alpha: \nat\to 2}\big(\qSig{n:\nat}\alpha(n)=1)\big)+
        \neg\big(\qSig{n:\nat}\alpha(n) = 1\big).\]
        \indexandabbr{Limited Principle of Omniscience!Weak}{WLPO}
    \item[Weak limited principle of omniscience (WLPO)] Every binary sequence is
        either constantly 0 or not constantly 0. LPO weakened by replacing
        $Q$ by $\neg\neg Q$, where $Q$ is $\qExists{n:\nat}\alpha(n)=1$. Of interest
        when considering computability, as here it should fail
        (see Section~\ref{section:comp-functions}).
        \[\qAll{\alpha:\nat\to 2}\big(\qAll{n:\nat}\alpha(n)=0)\big)\vee
        \neg\big(\qAll{n:\nat}\alpha(n) = 0\big).\]
        \[\qPi{\alpha:\nat\to 2}\big(\qPi{n:\nat}\alpha(n)=0)\big)\vee
        \neg\big(\qPi{n:\nat}\alpha(n) = 0\big).\]
        \indexandabbr{Markov's Principle}{MP}
    \item[Markov's Principle (MP)] If a binary sequence is not constantly 0, then it
        takes the value 1. Introduced by Markov~\cite{markov1962} based on considerations in recursion
        theory.
        \[\qAll{\alpha:\nat\to 2}\big(\neg\qAll{n:\nat}\alpha(n)=0)\big)\to
        \big(\qExists{n:\nat}\alpha(n) = 1\big).\]
        \[\qPi{\alpha:\nat\to 2}\big(\neg\qPi{n:\nat}\alpha(n)=0)\big)\to
        \big(\qSig{n:\nat}\alpha(n) = 1\big).\]
        \indexandabbr{Kripke's Schema}{KS}
    \item[Kripke's Schema (KS)] For each proposition $P$, there is a binary sequence
        such that $P$ holds iff this sequence takes the value 1. Introduced
        to formalize Brouwer's concept of the \emph{creative subject}. See Section
        4.10 of~\cite{troelstra1988constructivism} for some history. Allowing
        quantification over propositions, this is written
        \[\qAll{P:\Prop}\qExists{\alpha:\nat\to 2}\big(P\leftrightarrow
        (\qExists{n:\nat}\alpha(n)=1)\big).\]
        Under Curry-Howard, we take $\Prop$ to be $\univ$:
        \[\qPi{P:\univ}\qSig{\alpha:\nat\to 2}\big(P\leftrightarrow
        (\qSig{n:\nat}\alpha(n)=1)\big).\]
        \indexandabbr{Church's Thesis}{CT}
    \item[Church's Thesis (CT)] Every total function $\nat\to\nat$ is computable.
        Arises from computability considerations, as all constructively definable
        functions ought to be computable.
        \[\qAll{f:\nat\to\nat}\qExists{e:\nat}\qAll{n:\nat}f(n)=\{e\}(n).\]
        \[\qPi{f:\nat\to\nat}\qSig{e:\nat}\qPi{n:\nat}f(n)=\{e\}(n).\]
        Here $\{e\}$ is the function coded by $e$.\label{def:ct}
\end{description}
The Curry-Howard statements of both Kripke's Schema and Church's thesis are false,
even though Kripke's Schema follows from LEM, and there are constructive settings in
which Church's thesis holds~\cite{hyland1982effective}. We will introduce a more
refined interpretation of logical principles in Section~\ref{section:unilogic}. Using
this interpretation, we consider a more reasonable version of Church's thesis in
Section~\ref{computability:church}. A version of Kripke's Schema which seems to
better capture the intent of the schema can be given using ideas from
Chapter~\ref{chapter:partial-functions}.

\section{Topos logic}\label{section:topos-logic}
It is possible to interpret Martin-L{\"o}f type theory in a topos $\cE$, following
Hoffman's modification of Seely's interpretation in any locally cartesian closed
category~\cite{seely1984lccc,hofmann1995lccc}. Briefly, we take the objects of $\cE$ to represent
contexts of free variables, and a type in context $\Gamma$ is a map into $\Gamma$ (an object of
$\cE/\Gamma$), while an element of that type is a section of this map. Any map
$f:\Delta\to\Gamma$ gives rise to a function $f^*:\cE/\Gamma\to\cE/\Delta$ by
pullback. The left-adjoint of this is used to interpret $\Sigma$, while the
right adjoint is used to interpret~$\Pi$.

The interpretation outlined above gives a way to interpret informal mathematics in a
topos, by appealing to the Curry-Howard interpretation in the type theory associated
to the topos. However, it is more common to interpret informal mathematics in a topos
by appealing to the \emph{Mitchell-Benabou} language of a topos. The Mitchell-Benabou
language for a topos $\cE$ is a higher-order logic which arises by considering the
subobject classifier $\Omega$ of $\cE$. We roughly follow the presentation
in~\cite{maclane1994sheaves}. A more detailed account can be found there, as well as
in \cite{lambek1986higher} and \cite{mclarty1992elementary}.

For each object $X$ of $\cE$, there is a \emph{type} $X$, and each type has a
set of \emph{terms}. Each term in the language additionally has
free variables of fixed types. A term of type $B$ with free
variables of types $A_1,\ldots,A_n$ will be written
\[x_1:A_1\ldots,x_n:A_n\vdash t:B.\]
A term $x_1:A_1\ldots,x_n:A_n\vdash t:B$ will be interpret by a morphism
$A_1\times\cdots\times A_n\to B$.
The basic data
of the LCCC structure of $\cE$ give terms of the language corresponding to the
morphisms given by the data. We omit these, except for
those given by exponentiation which are illustrative: Given a term $x:U\vdash
\sigma:X$ interpreted by $\chi:U\to X$
and a term $v:V\vdash f:Y^X$ whose interpretation is given by $\beta:V\to Y^X$, there
is a term $x:U,y:V\vdash f(v):Y$ whose
interpretation is given by 
\[V\times U\xrightarrow{\lar{\chi,\beta}} X\times Y^X \xrightarrow{e} Y,\]
where $e$ is the evaluation map.
Given a term $x:X,u:U\vdash Z$ which is interpreted by $\theta$, there is a term
$x:X\vdash \lambda x.\theta:Z^U$ whose interpretation is the transpose of $\theta$.

The terms of interest to us here are those whose type is $\Omega$, which we call
\emph{formulas}. As there are maps $\wedge,\vee,\Rightarrow:\Omega\to\Omega$ and a
map $\neg:\Omega\to\Omega$ corresponding to the logical connectives, we can interpret
the logical connectives on formulas. What remains is interpreting equality and
quantifiers. As the diagonal $\Delta:X\to X\times X$ is an embedding, it has a
characteristic function $\delta:X\times X\to\Omega$, and so given terms $u:U\vdash
\sigma:X$ and $v:V\vdash \tau:X$, we have a formula $u:U,v:V\vdash (\sigma=\tau)
:\Omega$ interpreted by composing $\lar{\sigma,\tau}$ with $\delta$.

The quantifiers are slightly more work. The formula $x:X,y:Y\vdash \varphi:\Omega$
is interpreted as a map $X\times Y\to\Omega$, which corresponds to an arrow
$\lambda x.\varphi:Y\to\Omega^X$. The unique map $X:X\to 1$ gives rise to a map $1^{\Omega}\to
X^{\Omega}$, which corresponds to the functor $(!_X)^{*}:\cE/1\to \cE/X$, and we know
that $1^{\Omega}=\Omega$. The left and right adjoints give us maps
$\forall_X,\exists_X:\Omega\to \Omega^X$. Then we have maps $\forall_X\comp \lambda
x.\varphi:Y\to\Omega$ and $\exists_X\comp\lambda.x\varphi:Y\to\Omega$, which we take
as the interpretations of $\qAll{x:X}\varphi(x,y)$ and $\qExists{x:X}\varphi(x,y)$.

As a formula is treated as a map $X\to\Omega$, any formula $\varphi$ corresponds to a
subobject of $X$, and we can write $\{x\mid\varphi(x)\}$ for the domain of this
subtype. In short, we have interpreted equality, the logical connectives,
quantifiers, and a form of comprehension. Then we can use this to interpret a great
deal of mathematical work. Moreover, the more structure our topos has, the more
mathematics we can interpret: for example, if the topos contains a natural numbers
object, we can interpret the natural numbers, and arithmetic in the topos.

The interpretation of MLTT and the Mitchell-Benabou language of a topos are not
entirely independent: give an object $X$ of $\cE$, there is the formula (with no free
variables) $\trunc{X}\defeq\exists x.\top$ which we can call the \emph{truncation} of
$X$. If $P:X\to\Omega$, we can form not only
the formula $\qExists{x:X}P(x)$, but also the type $\qSig{p:P}(X)$, and in fact we have
an isomorphism between the truncation $\trunc{\qSig{p:P}(X)}$ and
$\qExists{x:X}P(x)$. In other words, in a topos,
existential quantification can be defined from $\Sigma$ and truncation. In univalent
mathematics, we will do the same thing: since we are working in a type theory, we
have $\Sigma$, and we will then define a \emph{truncation operator} (See
Section~\ref{section:unilogic}), which we will use to define the existential
quantifier. This gives way to a third, more nuanced interpretation of mathematical
ideas where we are allowed to mix higher-order logic and type theory in a fluid way.
We call mathematics done in this style \emph{univalent mathematics}.

\section{Univalent foundations}\label{section:utt}
\index{univalent mathematics|(}

In Lemma~\ref{lemma:path-concat}, we proved that we can compose and invert the
elements of identity types. When we include the reflexivity path $\refl$, we can see
this result as expressing that identity is an equivalence relation; as the
elimination principle can be seen as saying that identity is the smallest
reflexive relation, this also makes identity into the smallest equivalence relation.
However, $x=y$ is not simply a proposition, but a \emph{type}---to assert $x=y$ is
not simply to give a truth value, but to give the data of \emph{which} inhabitant of
$x=y$ we have. This is to say, identity is not simply property in Martin-L\"{o}f type
theory, but \emph{structure}.

Univalent mathematics arises from the observation that the structure of identity
types ${x=_{A}y}$ is part of the structure of the type $A$. For example, the fact (which
we prove shortly) that $\qPi{n,m:\nat} (n=m) + \neg(n=m)$ is part of the description
of the structure of $\nat$. We can then
stratify the types into levels based on how deep the identity structure on a type
goes. These levels, called \emph{homotopy-levels} capture a notion of dimension.
Traditionally, these dimensions are indexed from $-2$, following the indexing in
homotopy theory; we use $\Ntwo$ for the type of integers at least $-2$. The most
important are the first four levels, those of
\emph{contractible types}\index{contractible!type|textit},
\index{type!contractible|see{contractible type}}
\emph{propositions}\index{proposition|textit}, \emph{sets}\index{set|textit} and \emph{groupoids}. In
order, these are types with exactly one element, types with at most one element,
types whose identity types are propositions, and types whose identity types are sets.
More explicitly:
\begin{definition}
    \defineopfor{$\isProp$}{isprop}{proposition}
    \defineopfor{$\Prop$}{prop}{proposition}
    \defineopfor{$\isContr$}{iscontr}{contractible type}
    \defineopfor{$\isSet$}{isset}{set}
    \defineopfor{$\Set$}{set}{set}
    We define the operations $\isProp:\univ\to\univ$, $\isContr:\univ\to\univ$, 
    $\isSet:\univ\to\univ$ and $\isGpd:\univ\to\univ$ by
    \begin{eqnarray*}
        \isContr(X) &\defeq& \qSig{a:X}{\qPi{x:X}{(a=x)}},\\
        \isProp(X) &\defeq& \qPi{x,y:X}{x=y}, \\
        \isSet(X) &\defeq& \qPi{x,y:X}{\isProp(x=y)},\\
        \isGpd(X) &\defeq& \qPi{x,y:X}{\isSet(x=y)}.
    \end{eqnarray*}
    We call a type $X$ a \emph{proposition} or a \emph{subsingleton} if $\isProp(X)$,
    \emph{contractible} or a \emph{singleton} if
    $\isContr(X)$ a \emph{set} if $\isSet(X)$ and a \emph{groupoid} if $\isGpd(X)$.
    We call a type family $P:X\to\univ$
    a \name{predicate} when $\qPi{x:X}{\isProp(P(x))}$. We also have the types
    $\Prop,\Set:\univ_1$,
    \[\Prop\defeq \qSig{P:\univ}{\isProp(P)}.\]
    \[\Set\defeq \qSig{X:\univ}{\isSet(X)}.\]
    For any $n:\Ntwo$, the \emph{homotopy $n$-types}, are given by
    \begin{eqnarray*}
        \istype{(-2)}(X) &\defeq& \isContr(X), \\
        \istype{(n+1)}(X) &\defeq& \qPi{x,y:X}{\istype{n}(x=y)}.
    \end{eqnarray*}
\end{definition}
We will be working primarily at the level of sets and propositions, but as far as
possible, we will state our results for general types. We will also make use of
examples and counterexamples which have higher levels. The question of whether a type
is a set is not a size issue, in the way that the class of all sets is too large to
be a set, but rather, it is an issue of depth of structure. Size issues are handled
by universe levels. For example, we gave the type $\Set$ as a type in $\univ[1]$.
More explicitly, we have for each $i$, a type $\Set_i:\univ[i+1]$ with $\Set_i \defeq
\qPi{X:\univ[i]}\isSet(X)$.

Note that the definition of $\isProp(P)$ and $\istype{(-1)}(P)$ are not quite the
same. We will show shortly that these two notions are in fact equivalent, and so also
that sets are $0$-types and groupoids are $1$-types. However, we do not yet have the
machinery to talk about equivalence properly.

In the meantime, we use a notion of \emph{logical equivalence}: Define
\defineop{$\lequiv$}{equivlogical}
\[A\lequiv B \defeq (A\to B) \times (B\to A)\]
and say that $A$ and $B$ are \emph{logically equivalent} if $A\lequiv B$. The name is
justified by Lemma~\ref{lemma:lequiv-equiv}, which says that when two propositions are
logically equivalent, they are in fact equivalent (in the sense defined in the next
section).

The canonical examples of propositions are $\zerotype$ and $\unittype$.
\begin{theorem}
    The empty type $\zerotype$ and unit type $\unittype$ are propositions. The unit type is
    contractible, while the empty type is not.
\end{theorem}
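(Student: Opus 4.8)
The plan is to verify four claims, each by unwinding the definitions from the excerpt and using only path induction and the elimination rules for $\unittype$ and $\zerotype$. Recall $\isProp(X) \defeq \qPi{x,y:X}{x=y}$ and $\isContr(X) \defeq \qSig{a:X}{\qPi{x:X}{(a=x)}}$.

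First I would show $\isContr(\unittype)$, which immediately gives $\isProp(\unittype)$ as a corollary once we know contractible types are propositions — but since that implication has not yet been established in the excerpt, I will instead prove $\isProp(\unittype)$ directly and separately. For contractibility, take the center to be $\star:\unittype$; we must supply $\qPi{x:\unittype}{\star = x}$, which by the induction principle for $\unittype$ reduces to giving an element of $\star = \star$, namely $\refl[\star]$. For $\isProp(\unittype)$, we need $\qPi{x,y:\unittype}{x=y}$; applying the induction principle for $\unittype$ twice reduces this to exhibiting an element of $\star = \star$, again $\refl[\star]$. For $\isProp(\zerotype)$, we need $\qPi{x,y:\zerotype}{x=y}$; here the very first application of the eliminator $!:\qPi{z:\zerotype}C(z)$ (with $C(z) \defeq \qPi{y:\zerotype}{z=y}$) discharges the goal, since there are no canonical elements of $\zerotype$ to handle.

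The only genuinely non-formal part is showing $\zerotype$ is \emph{not} contractible, i.e. $\isContr(\zerotype) \to \zerotype$. Given $(a, p) : \isContr(\zerotype)$, we already have $a : \zerotype$ in hand, so we are done: the projection $\pr_0 : \isContr(\zerotype) \to \zerotype$ is itself the required function, and $\neg\isContr(\zerotype)$ holds by definition of negation as $\isContr(\zerotype) \to \zerotype$.

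I do not anticipate any real obstacle: each sub-claim is a one-line application of an elimination rule, and the proof is a clean illustration of the "reduce to the canonical case" proof style introduced just before Lemma~\ref{lemma:path-concat}. The only thing to be careful about is not to inadvertently invoke "contractible implies proposition" before it has been proved, so I would keep the $\unittype$ arguments for propositionhood and contractibility independent of one another.
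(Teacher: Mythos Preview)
Your proposal is correct and essentially matches the paper's proof: the same elimination principles are invoked in the same places, and the non-contractibility of $\zerotype$ is handled by the first projection exactly as you suggest. The only cosmetic difference is that for $\isProp(\unittype)$ the paper reuses the contractibility witness $w:\qPi{x:\unittype}\star=x$ and applies one further $\unittype$-elimination (plus symmetry) rather than invoking $\unittype$-elimination twice from scratch, but this is the same argument.
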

\begin{proof}
    The elimination principle for $\zerotype$ says directly that
    $\qPi{x,y:\zerotype}x=y$. Moreover, the first projection
    $\Big(\qSig{x:\zerotype}\qPi{y:\zerotype}x=y\Big)\to \zerotype$ gives a
    map $\isContr(\zerotype)\to\zerotype$.

    As we have $\refl:\star=\star$, by the elimination
    principle for the unit type we have some witness $w:\qPi{x:\unittype}\star=x$, and so
    $(\star,w):\isContr(\unittype)$. However, again by applying the elimination principle for
    $\unittype$ to $w$, we get $\qPi{x,y:\unittype}y=x$, by symmetry we have
    $\isProp(\unittype)$.
\end{proof}
The type $\nat$ is a set which is not a proposition.
The proof relies on the notion of equivalence, so we give
it in the next section.

When defining an algebraic structure, for example, a group, it is most natural to
work with underlying types which are \emph{sets}, since we want the group equations
to be property, with the structure given only by the operations. Regardless, we may define
$\GrpStruct:\univ\to\univ$ by
\begin{align*}
    \GrpStruct(X) \defeq {} & \qSig{-\cdot-:X\to X\to X}\qSig{(-)^{-1}:X\to
X}\qSig{e:X} \\
    & \qPi{x,y,z:X}\big(x\cdot( y\cdot z) = (x\cdot y)\cdot z\big)\\
    {}\times{} & \qPi{x:X}\big((x\cdot e = x) \times (e\cdot x = x)\times
(x\cdot x^{-1} = e)\times (x^{-1}\cdot x = e)\big),
\end{align*}
The first line says that we have the operations $-\cdot-$ and $(-)^{-1}$, as well as
an identity $e$; the second line ensures associativity, and the third the identity
and inverse rules. This is a direct formalization of the classical definition:

\vspace{15pt}
\hfill\begin{minipage}{0.95\textwidth}
    A \emph{group structure} on $X$ consists of a multiplication operation
    $-\cdot-:X^2\to X$, an inverse operation $(-)^{-1}:X\to X$ and an element
    $e\in X$ satisfying the equations
\begin{align*}
    (\forall x,y,z\in X)\quad x\cdot (y\cdot z) ={}& (x\cdot y)\cdot z\\
    (\forall x\in X)\qquad\quad\,  x\cdot e ={}& e \cdot x = x\\
    (\forall x\in X)\qquad  x\cdot x^{-1} ={}& e = x^{-1} \cdot x
\end{align*}
\end{minipage}\hfill
\vspace{15pt}

All of the operations on types are such formalizations. For example, a type $P$ is a
proposition if for all $x,y:P$, we have that $x=y$.

In the case that $X$ is a set, the required equations can be shown to be
propositions---that is, we can show that the equations are property.
Then we can define
$\Grp:\univ[1]$:
\[\Grp \defeq \qPi{X:\univ}\isSet(X)\times \GrpStruct(X).\]

If we assume Voevodsky's univalence axiom (see Section~\ref{section:univalence}),
we can show that equality in $\Grp$ is exactly
isomorphism~\cite{coquand2013isomorphism}. As we may have many non-trivial
isomorphisms between two groups, we know that we cannot have that $\Grp$ is a set. In
fact in this case we can show that $\Grp$ is a groupoid.

\section{Structure versus Property}\label{section:struct-vs-prop}
Before moving on, we should further discuss structures and properties. In a
traditional development of semi-formal mathematics, we establish the truth (or non-truth) of
statements. For example, one may express the first isomorphism theorem in such a development as

\vspace{5pt}
\hfill\begin{minipage}{0.95\textwidth}
    \textit{
    For any surjective homomorphism $\varphi : G\to H$, there is an isomorphism
    $G/\ker\varphi \simeq H$.
}
\end{minipage}\hfill
\vspace{5pt}

This is a perfectly honest theorem, but it asserts only a simple fact: it tells us that
we can identify $G/\ker\varphi$ and $H$. This fact gives us no information about what
happens to a given $g\in G$ under this identification, despite the fact that when
using the isomorphism theorem, we are often interested in a particular
isomorphism between $G/\ker\varphi$ and $H$. A better version of the theorem is
given by

\vspace{15pt}
\hfill\begin{minipage}{0.95\textwidth}
    \textit{
    For any surjective homomorphism $\varphi : G\to H$, the natural function
    $\nu:G/\ker\varphi \to H$
    defined by $\nu([g]) = \varphi(g)$
    is an isomorphism.
}
\end{minipage}\hfill
\vspace{15pt}

This version contains \emph{structure}: we not only know that $G/\ker\varphi$ can be
identified with $H$, but we know which map gives this identification, and how it
behaves. We can think of an isomorphism as a structured identification between
groups, so that in rephrasing the theorem, we have moved from a simple statement of
fact to a description of structure. On the other hand the property that $f$ is an
isomorphism between groups is something we usually do not wish to analyze for deeper
structure: the fact that $f$ is an isomorphism is all we are interested in.

In a formalization based on first order (or higher order) logic, we manipulate facts;
structure is not easily expressed by the formal manipulations. If we want the formal
framework underlying our mathematics to be able to manipulate structure directly,
then another sort of system is needed.
The obvious choice is to turn to some variant of MLTT. However, logic is a part of
mathematical reasoning, and MLTT does not natively have a notion of truth value,
so some convention on how to interpret logical operations must be taken. Propositions
can be represented by types, and then we say that a proposition $P$ is true if there
is some inhabitant $p:P$. The traditional convention is that of
Section~\ref{section:bishop-mathematics}. Namely, view \emph{all} types as
propositions and use a Curry-Howard interpretation of logic. But then we lose our
simple facts: even $3+4=7$ is (a priori) a type with structure which can be analyzed.

Our example above suggests that mathematics is concerned with both facts and 
structures. Traditional foundations often only directly express the former, while type
theoretic foundations often only directly express the later. Univalent definitions allow us
to talk about both, and to relate them. Arbitrary types represent structure or
data. Propositions represent simple facts---the definition tells us that the
structure of the type is trivial. In principle, we can give two versions
of a notion: a propositional version (expressing a simple fact), and a
structured version (giving data). For example, we can talk of a function $f$ having the
\emph{property} of being computable (which is a simple fact), or we can talk about
\emph{computation structures} for $f$: programs computing $f$.
Chapters~\ref{chapter:comp-as-struct} and~\ref{chapter:comp-as-prop} deal with these
notions.

When we switch to a univalent approach, we are not abandoning the Curry-Howard
interpretation, but augmenting it: we still have $\Sigma$ and $+$, allowing
us to express a logic of structures, but we also have a notion of proposition, for
which we can introduce a logic of propositions, as we do in
Section~\ref{section:unilogic}. This allows a richer expression of ideas, as we get
to choose in our definitions and theorems whether to use structured or propositional
notions. Often, one choice is more natural than the other.

As being a proposition is a property of types, we can introduce a type, and then
prove after the fact that it is a proposition. There are two benefits of this:
it gives us concise and clear language for expressing that structure is
trivial, and more usefully, it gives us tools for turning simple facts into
interesting structure. For an example of this latter situation, see
Lemma~\ref{lemma:untruncate-decidable-predicates}, where we show an example where the
truth of a proposition can allow us to compute a natural number.

The distinction between structure and property guides formalization: for example, we
can ask for a \emph{monoid structure} on a type:

\vspace{15pt}
\hfill\begin{minipage}{0.95\textwidth}
    A \emph{monoid structure} on $X$ consists of a multiplication operation
    $-\cdot-:X^2\to X$ and a unit $e:X$, satisfying the equations
\begin{align*}
    (\forall x,y,z\in X)\quad x\cdot (y\cdot z) ={}& (x\cdot y)\cdot z\\
    (\forall x\in X)\qquad\quad\,  x\cdot e ={}& e \cdot x = x\\
\end{align*}
\end{minipage}\hfill

Then we have the type family $\MonStruct:\univ\to\univ$.  Using ideas from
Chapter~\ref{chapter:univalence}, we can also discuss the property of \emph{being a
monoid} $\isMonoid:\univ\to\Prop$ using truncation, but this is of little use: there may be
many possible monoid structures on a type $M$, but a witness of $\isMonoid(M)$ cannot
distinguish between these: the property of being a monoid does not allow us to access
an actual monoid structure, in general. On the other hand, given a monoid
$(M,\cdot,e,-):\Mon$---namely, a set equipped with a monoid structure---and an
element $m:M$, we can define the type of inverses of an element
\[\inverse_M(m) : \qSig{n:M}(m\cdot n = e) \times (n\cdot m = e),\]
and while $\inverse:M\to\univ$ is a priori structure, we can in fact prove that
$\inverse(m)$ is a proposition for all $m:M$. That is, we can show that the type of
inverses is a proposition. In our case, this means that asserting the existence of an
inverse is the same as giving an explicit inverse; or even more evocatively: inverse
are uniquely specified. As we can also show that products over propositions are again
propositions, we have a \emph{predicate} $\isGrp:\Mon\to\Prop$, saying whether a
monoid is a group. So while a group is structure imposed on a set, the
structure is already imposed by the underlying monoid (in fact, already by the
underlying semigroup).

The ability to infer that a type is a proposition distinguishes univalent mathematics
from, for example, the calculus of (inductive) constructions~\cite{CoC1986,CoC1988},
which also has a universe of propositions. However, propositions are completely separate
from other types in CiC, and no structural content can be inferred from them.

Unfortunately, pure MLTT does not give us everything we need to make use of univalent
definitions properly. In particular, we cannot form a proposition-valued version of
$\vee$ or $\exists$, and we cannot show that $\neg$ and $\forall$ preserve propositionhood. We
will show how to extend MLTT with a truncation operator (resolving $\vee$ and
$\exists$) and a principle of \emph{function extensionality} (to resolve the problem
with $\neg$ and $\forall$) as we develop the univalent
approach in Chapter~\ref{chapter:univalence}.

Properties---proposition-valued type families---play a central role in any work in
a univalent framework. In Chapter~\ref{chapter:univalence} we will provide
additional technical justification for the claim that propositions should be viewed
as types with trivial structure, but we will make use of the fact in our exposition
from here onward. We will see in the next section that we have the
type-theoretic analog of the Leibniz law, that identical objects are indiscernible.
Then if $P$ is some proposition, all its elements will be indiscernible.

Consequently, once we know that a type $P$ is a proposition, we will often leave
elements of $P$ unnamed. Moreover, we will ignore the behavior of constructions on
propositions (see, for example, the definition of the lifting monad in
Chapter~\ref{chapter:partial-functions}). Finally, we will stick to a rigid naming
convention: If we give a type family $A\to \univ$ a name beginning with
$\operatorname{is}$, we will (at some point) prove that this type family is
proposition-valued, and we will also have capital letters somewhere in the name---for
example $\isContr,\isProp,\isSet,\isGpd$. Moreover, after proving
$\operatorname{isXYZ}:A\to\univ$ to be proposition-valued, when we need to name an
element of $\operatorname{isXYZ}(a)$, we will often use the
lowercase~$\operatorname{isxyz}:\operatorname{isXYZ}(a)$.
\index{univalent mathematics|)}

\section{Homotopies and equivalences}\label{section:homotopies}
The development of univalent mathematics relies on tools for comparing functions
and for comparing types which we explain here. The tools are those of \emph{homotopy}
between functions and \emph{equivalence} between types. The logical equivalence
defined earlier is sufficient when we are dealing with properties, but when comparing
structures, it is insufficient: for example, the unit type is logically equivalent to
the natural numbers. One of the first insights leading to univalent mathematics was
the definition of \emph{equivalence} in Voevodsky's Foundations
library~\cite{UniMath}, which refines the traditional notion of isomorphism. Since
then, several equivalent notions have been given, which are the focus of Chapter 4 of
the HoTT Book.  We are
interested in three notions, which they call \emph{quasi-invertibility} (Definition 2.4.6),
\emph{bi-invertibility} (Definition 4.3.1) and \emph{contractibility} (Definition
4.4.1). The notion of quasi-invertibility is problematic; we will discuss why after
formally introducing these notions. We diverge slightly from the HoTT Book in our
terminology. In particular we call bi-invertibility \emph{equivalence}, and a
\emph{quasi-inverse} of a function, simply an \emph{inverse}. To express these
notions, we need the notion of \emph{homotopy}, or \emph{pointwise equality}.

\begin{definition}
    \index{homotopy|textit}
    \definesymbolfor{$\htpy$}{eqhtpy}{homotopy}
    A \emph{homotopy} between two functions $f,g:\qPi{x:X}{B(x)}$ is a pointwise
    equality between $f(x)$ and $g(x)$:
    \[f\htpy g\defeq \qPi{x:X}{f(x) = g(x)}.\]
\end{definition}
For any $f,g:\qPi{x:X}{B(x)}$ we have a map
\defineop{$\happly$}{happly}
\[\happly : (f=g) \to (f\htpy g)\]
defined by path induction with $\happly(\refl)\defeq \lam{x}\refl$.

\begin{definition}
    For $f:A\to B$, and $b:B$, the \name{fiber} of $f$ over $b$ is the type
    \defineopfor{$\fib$}{fib}{fiber}
    \[\fib_f(b)\defeq \qSig{a:A}{f(a)=b}.\]
\end{definition}
Now we may define our notions of equivalence.
\begin{definition}
    A function $f:A\to B$ is \nameas{contractible}{contractible!function} when it \emph{has contractible
    fibers}:
    \[\isContr(f) \defeq \qPi{b:B}\isContr(\fib_f(b)).\]
    A \emph{right inverse} for $f$ is a map $g:B\to A$ such that $f\comp g\htpy
    \id_B$, and a \emph{left inverse} for $f$ is a map $g:B\to A$ such that $g\comp
    f\htpy \id_A$. That is,
    \defineopfor{$\rinv$}{rinv}{section}
    \defineopfor{$\linv$}{linv}{retraction}
    \[\rinv(f) \defeq \qSig{g:B\to A}(f\comp g \htpy \id_B).\]
    \[\linv(f) \defeq \qSig{g:B\to A}(g\comp f \htpy \id_A)\]
    The map $f$ is an \name{equivalence} when it has a left inverse and a right
    inverse:
    \[\isEquiv(f) \defeq \linv(f)\times\rinv(f).\]
    \defineopfor{$\isEquiv$}{isequiv}{equivalence}
    \definesymbolfor{$\simeq$}{equiv}{equivalence}
    We write $A\simeq B$ for the type of equivalence from $A$ to
    $B$:
    \[A\simeq B \defeq \qSig{f:A\to B}\isEquiv(f).\]
    When $f:A\to B$ has a left inverse, we say that $A$ is a \emph{retract} of $B$.
    We call a function which has a left inverse a \name{section} and a function which has
    a right inverse a \name{retraction}.

    An \name{inverse} for $f$ is a map $g:B\to A$ which is both a left and a right
    inverse for $f$: 
    \defineopfor{$\inverse$}{inverse}{inverse}
    \[\inverse(f) \defeq \qSig{g:B\to A}((f\comp g)\htpy \id) \times ((g\comp
    f)\htpy \id).\]
\end{definition}
The distinction between equivalence and having an inverse is the first example of the
difference between structure and property. We can show that $\isEquiv(f)$ is a
proposition---that is, that being an equivalence is property---but this is not true
in general for the type $\inverse(f)$---that is, having an inverse is structure.
Lemma 4.1.1 of the HoTT Book shows that if $\inverse(f)$ (which they call $\qinv(f)$) has an element, where $f:A\to B$, then
\[\inverse(f)\simeq \qPi{x:A}x=x.\]
The fact that having an inverse is not a proposition means we cannot naively use
$\inverse(f)$ as a replacement for $\isEquiv(f)$. In particular, we will shortly
define a map
$\idtoequiv:A=B\to A\simeq B$, taking a path in the universe to an equivalence,
and the univalence axiom
says that $\idtoequiv$ is an equivalence, but the corresponding statement, that the
map $(X=Y)\to \qSig{f:X\to Y}\inverse(f)$ has an inverse is actually inconsistent,
since a consequence of this statement is that $\inverse(f)$ is a proposition. We will
see this in detail when we talk about the univalence axiom on
Section~\ref{section:univalence}

We will show in Section~\ref{section:equiv-contr} that having an inverse, being contractible and being
an equivalence are all logically equivalent after we develop
some tools for working with paths. In the meantime, we fulfill our promise to
justify the phrase ``logical equivalence''. We aim to use propositions to encode
logic in our system, so the information encoded directly by propositions is logical
information. This means that when we call two propositions logically equivalent, we
should be able to conclude that they are equivalent. Indeed, we have the following.
\begin{lemma}\label{lemma:lequiv-equiv}
    If $A$ and $B$ are propositions,
    then $(A\lequiv B)\to (A\simeq B)$.
\end{lemma}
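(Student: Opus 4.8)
The plan is to unfold the definitions and check that the forward map of the logical equivalence is already an equivalence, with the backward map serving as both a left and a right inverse. Concretely, suppose $A$ and $B$ are propositions and we are given an element $e : A \lequiv B$, that is, $e \defeq (f,g)$ with $f : A \to B$ and $g : B \to A$. I claim $f$ is an equivalence, witnessed by taking $g$ itself as both inverses.

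First I would construct the right inverse. We need $\rinv(f) \defeq \qSig{h:B\to A}(f\comp h \htpy \id_B)$; take $h \jeq g$, so it remains to produce a homotopy $f \comp g \htpy \id_B$, i.e.\ an element of $\qPi{b:B}f(g(b)) = b$. For each $b:B$, both $f(g(b))$ and $b$ are elements of $B$, so since $\isProp(B)$ holds we get the required path directly. Symmetrically, for the left inverse $\linv(f) \defeq \qSig{h:B\to A}(h\comp f \htpy \id_A)$, take $h \jeq g$ and note that for each $a:A$ the terms $g(f(a))$ and $a$ lie in $A$, so $\isProp(A)$ supplies a path $g(f(a)) = a$, giving the homotopy $g \comp f \htpy \id_A$.

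Assembling these, $\isEquiv(f) \defeq \linv(f) \times \rinv(f)$ is inhabited, so $(f, \ldots) : A \simeq B$, which is exactly what we wanted; the whole construction is a function of the given $e : A \lequiv B$. Note that no function extensionality is needed here, since a homotopy is just a pointwise family of paths and we are producing those paths one point at a time.

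There is essentially no hard step: the argument is a direct unfolding, and the only ``content'' is the observation that in a proposition every required equation holds automatically. The one thing worth flagging is purely expository — that $g$ is used for \emph{both} the left and right inverse, so $f$ is even an element of $\inverse(f)$, not merely of $\isEquiv(f)$ — but this is a remark rather than an obstacle.
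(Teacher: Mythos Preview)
Your proof is correct and takes essentially the same approach as the paper: the paper's proof simply says ``Suppose $f:A\to B$ and $g:B\to A$. We have that $g(f(a)) = a$ and $f(g(b))= b$, as $A$ and $B$ are propositions,'' which is exactly your argument in more compressed form.
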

\begin{proof}
    Suppose $f:A\to B$ and $g:B\to A$. We have that $g(f(a)) = a$ and
    $f(g(b))= b$, as $A$ and $B$ are propositions.
\end{proof}

The words \emph{path}, \emph{homotopy}, \emph{contractible} and \emph{equivalence} come from
homotopy theory. The modern terminology for several other important notions come also
from homotopy theory. Nevertheless, no knowledge of homotopy theory is needed to
understand our type theory. The next name coming from homotopy is \nameas{application
on paths}{application!on paths}, which expresses that functions respect equality.
Given any function $f:A\to B$ and $x,y:A$, there is a function 
\defineopfor{$\ap$}{ap}{application on paths}
\[\ap_f:x=y \to f(x) = f(y)\]
defined by path induction with 
\[\ap_f(\refl[x]) \defeq \refl[f(x)].\]
In involved computations, we may abuse notation and write $f(p)$ instead
of $\ap_f(p)$.
Application on paths tells us that functions are \emph{congruences}, or act
\emph{functorially} with respect to equality. Formally, we mean the following.
\begin{lemma}\label{lemma:ap-functorial}
    For any $f:A\to B$ we have
    \begin{enumerate}
        \item $\ap_f(\refl[x]) = \refl[f(x)]$;
        \item $\ap_f(p\ct q) = \ap_f(p)\ct \ap_f(q)$;
        \item $\ap_f(p^{-1}) = \ap_f(p)^{-1}$.
    \end{enumerate}
    where $p:x=y$ and $q:y=z$ for some $x,y,z:A$.
\end{lemma}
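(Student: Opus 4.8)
The plan is to prove all three parts by path induction, in each case reducing the relevant path to $\refl$ and then collapsing both sides to a judgemental equality using the computation rule $\ap_f(\refl[x]) \jeq \refl[f(x)]$ together with the definitional equations $\refl[x]\ct r \jeq r$ and $\refl[x]^{-1}\jeq\refl[x]$ that were set up in the construction of $\ap$ and in the proof of Lemma~\ref{lemma:path-concat}. Part (1) requires no work at all: it is literally the computation rule defining $\ap_f$ (or, if one insists on a propositional rather than judgemental equality, it is witnessed by $\refl$).

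For part (2), I would fix $x,y,z:A$, $p:x=y$ and $q:y=z$, and do based path induction on $p$, so that we may assume $x\jeq y$ and $p$ is $\refl[x]$. Then $p\ct q\jeq q$ by the definition of concatenation, so $\ap_f(p\ct q)\jeq\ap_f(q)$, while on the other side $\ap_f(p)\ct\ap_f(q)\jeq\refl[f(x)]\ct\ap_f(q)\jeq\ap_f(q)$; hence both sides are judgementally equal and $\refl$ witnesses the identity. It is important here to induct on $p$ and not on $q$, because it is the left-unit law for $\ct$ that holds judgementally; inducting on $q$ instead would force an appeal to the right-unit law $p\ct\refl[y]=p$, which is only a propositional equality, lengthening the argument slightly.

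For part (3), fix $p:x=y$ and do path induction, assuming $p$ is $\refl[x]$. Then $\ap_f(p^{-1})\jeq\ap_f(\refl[x]^{-1})\jeq\ap_f(\refl[x])\jeq\refl[f(x)]$, using $\refl[x]^{-1}\jeq\refl[x]$, while $\ap_f(p)^{-1}\jeq\refl[f(x)]^{-1}\jeq\refl[f(x)]$, so again $\refl$ suffices.

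There is no substantive obstacle here — the lemma is routine. The only point requiring care is bookkeeping: tracking which of the equations from Lemma~\ref{lemma:path-concat} are judgemental (the left-unit law and $\refl^{-1}\jeq\refl$) versus merely propositional, and choosing the induction variable in part (2) accordingly, so that in every base case the two sides become judgementally equal and the proof closes with $\refl$.
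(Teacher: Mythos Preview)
Your proposal is correct and follows the same approach as the paper: path induction reducing everything to $\refl$. The only difference is in part~(2), where you induct on $p$ alone (exploiting the judgemental left-unit law $\refl\ct q\jeq q$) while the paper inducts on both $p$ and $q$; your version is slightly more economical, and your remark about why one should induct on $p$ rather than $q$ is a nice observation the paper does not make explicit.
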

\begin{proof}
    The first equation is definitional, the second and third follow by path
    induction: First, fix $p\jeq q\jeq \refl[x]$, so that
    \[\ap_f(\refl \ct \refl) = \ap_f(\refl) = \refl = \refl\ct\refl = \ap_f(p)\ct
    \ap_f(q).\]
    Then, fix $p\jeq \refl[x]$ so that we have
    \[\ap_f(\refl^{-1}) = \ap_f(\refl) = \refl = \refl^{-1} =
    \ap_f(\refl)^{-1}.\qedhere\]
\end{proof}

Note that $\ap$ immediately gives us that any retract of a proposition or contractible
type is again a proposition (or contractible type).
\begin{theorem}\label{thm:prop-retract-closed}
    If $f:A\to B$ and $(g,\eta):\linv(f)$ is a left inverse of $f$, then $\isContr(B)\to\isContr(A)$
    and $\isProp(B)\to \isProp(A)$.
\end{theorem}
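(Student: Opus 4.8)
The plan is to prove both implications directly, transferring the relevant data from $B$ to $A$ using the map $g$. Recall that $(g,\eta):\linv(f)$ unpacks to a function $g:B\to A$ together with a homotopy $\eta:g\comp f\htpy\id_A$, i.e. $\eta(a):g(f(a))=a$ for each $a:A$. The only tools needed are $\ap_g$ together with the ability to concatenate and invert paths (Lemmas~\ref{lemma:path-concat} and~\ref{lemma:ap-functorial}).

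For the proposition case, assume $\isProp(B)$ and fix $x,y:A$; we must produce a path $x=y$. Since $B$ is a proposition there is a path $q:f(x)=f(y)$, whence $\ap_g(q):g(f(x))=g(f(y))$. Splicing this between the two homotopy witnesses gives
\[ \eta(x)^{-1}\ct\ap_g(q)\ct\eta(y) : x=y, \]
so $\isProp(A)$ holds.

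For the contractible case, assume $\isContr(B)$, say with centre $b_0:B$ and contraction $c:\qPi{b:B}b_0=b$. I claim $g(b_0)$ is a centre of contraction for $A$. Indeed, for any $a:A$ the path $c(f(a)):b_0=f(a)$ gives $\ap_g(c(f(a))):g(b_0)=g(f(a))$, and concatenating with $\eta(a)$ yields a path $g(b_0)=a$; thus $\big(g(b_0),\ \lambda a.\,\ap_g(c(f(a)))\ct\eta(a)\big):\isContr(A)$.

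Neither direction presents a real obstacle --- both are just path algebra --- and the one point to keep track of is the orientation of $\eta$, which runs from $g(f(a))$ to $a$ and so is used directly in the contractible argument but inverted on the left in the proposition argument. One could alternatively derive the contractible case from the proposition case together with inhabitedness of $A$, but the direct construction above is equally short and does not presuppose such a characterisation of $\isContr$.
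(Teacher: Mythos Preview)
Your proof is correct and takes essentially the same approach as the paper: both arguments transport the witness from $B$ to $A$ via $g$, using $\ap_g$ on the given path in $B$ and splicing with the homotopy $\eta$ at the endpoints. The only cosmetic differences are the order of the two cases and that the paper writes the middle step of the proposition case as $\ap_{gf}$ applied to a path (which should really be $\ap_g$ applied to $w(f(a),f(b))$, as you have it).
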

\begin{proof}
    Letting $c:B$ be the center of contraction of $B$ with $w:\qPi{b:B}c=b$, we have for any
    $a:A$,
    \[ g(c) \eqby{\ap_g(w(f(a))} gf(a)\eqby{\eta(a)} a,\]
    so $g(c)$ is the center of contraction of $A$.

    Similarly, if $w:\isProp(B)$ and $a,b:A$, then
    \[a \eqby{\eta(a)^{-1}} gf(a) \eqby{\ap_{gf}(w(a,b))} gf(b) \eqby{\eta(b)} b,\]
    so $A$ is a proposition.
\end{proof}
A direct corollary is that propositions (and singletons) are closed under
equivalence.
\begin{corollary}
    If $A$ is a retract of $B$ and $B$ is a proposition, then $A$ and $B$ are
    equivalent.
\end{corollary}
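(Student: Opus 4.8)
The plan is to chain the two results immediately preceding the corollary. Unfolding the definition of retract, saying that $A$ is a retract of $B$ gives us a map $f:A\to B$ together with a left inverse $(g,\eta):\linv(f)$, so that $g\comp f\htpy\id_A$. The hypothesis is that $\isProp(B)$.

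First I would apply Theorem~\ref{thm:prop-retract-closed} to $f$ and $(g,\eta)$: from $\isProp(B)$ we conclude $\isProp(A)$. So both $A$ and $B$ are propositions, and we moreover have functions in both directions, $f:A\to B$ and $g:B\to A$, witnessing $A\lequiv B$.

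Then I would invoke Lemma~\ref{lemma:lequiv-equiv}, which says that logically equivalent propositions are equivalent: applied to the pair $(f,g):A\lequiv B$ it yields $A\simeq B$, as required.

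There is essentially no obstacle here — the corollary is a direct combination of the theorem (retracts of propositions are propositions) and the lemma (logically equivalent propositions are equivalent), with the only minor point being to unwind ``$A$ is a retract of $B$'' into the data $f:A\to B$ and $(g,\eta):\linv(f)$ so that both maps needed for logical equivalence are in hand.
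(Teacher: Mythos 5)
Your proof is correct and follows essentially the same route as the paper: use Theorem~\ref{thm:prop-retract-closed} to see that $A$ is a proposition, note the two maps give $A\lequiv B$, and conclude $A\simeq B$ by Lemma~\ref{lemma:lequiv-equiv}. Your version just spells out the unfolding of the retract data a bit more explicitly.
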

\begin{proof}
    As $A$ is a retract of $B$, we have $A\lequiv B$ and $A$ is a proposition. Then
    $A\simeq B$ by Lemma~\ref{lemma:lequiv-equiv}.
\end{proof}

This allows us to give the following characterization of the contractible types as
propositions with elements.
$\unittype$.
\begin{lemma}\label{lemma:contr-char}
    The following are logically equivalent for any type $A$:
    \begin{itemize}
        \item $A\simeq \unittype$; (A is equivalent to $\unittype$)
        \item $\isContr(A)$; (A is contractible)
        \item $A\times \isProp(A)$; (A is a proposition with an inhabitant)
    \end{itemize}
\end{lemma}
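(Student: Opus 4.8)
The plan is to establish the three-term logical equivalence by proving a cycle of implications
\[(A\simeq\unittype)\ \Rightarrow\ \isContr(A)\ \Rightarrow\ A\times\isProp(A)\ \Rightarrow\ (A\simeq\unittype),\]
from which every pairwise implication follows by composition.

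For the first implication I would take $e:A\simeq\unittype$ and use the left-inverse half of $\isEquiv(\pr_0(e))$ to exhibit $A$ as a retract of $\unittype$; since $\unittype$ is contractible (shown above), Theorem~\ref{thm:prop-retract-closed} immediately yields $\isContr(A)$. For the second implication, from $(a,w):\isContr(A)$ the element $a$ already inhabits the factor $A$, and to get $\isProp(A)$ I would, given $x,y:A$, form the path $w(x)^{-1}\ct w(y):x=y$ using inversion and concatenation of paths from Lemma~\ref{lemma:path-concat}. For the third implication, from $(a,p):A\times\isProp(A)$ I note that both $A$ and $\unittype$ are propositions — $A$ by hypothesis and $\unittype$ as shown above — so by Lemma~\ref{lemma:lequiv-equiv} it suffices to give a logical equivalence $A\lequiv\unittype$; this is provided by the constant maps $\lam{x}\star:A\to\unittype$ and $\lam{u}a:\unittype\to A$, and Lemma~\ref{lemma:lequiv-equiv} then upgrades this to $A\simeq\unittype$.

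There is no serious obstacle here, since each step is a direct appeal to a result already in hand; the only thing to be careful about is routing the implication $\isContr(A)\Rightarrow(A\simeq\unittype)$ through $A\times\isProp(A)$ and Lemma~\ref{lemma:lequiv-equiv}, rather than trying to build the equivalence $A\simeq\unittype$ and its inverses by hand, and making sure the first step extracts a genuine retraction so that Theorem~\ref{thm:prop-retract-closed} applies verbatim.
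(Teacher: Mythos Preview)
Your proposal is correct and follows the same three-step cycle as the paper. The only cosmetic difference is in the first implication: the paper constructs the center of contraction directly as $f^{-1}(\star)$ and checks $a=f^{-1}(f(a))=f^{-1}(\star)$ by hand, whereas you package this as an invocation of Theorem~\ref{thm:prop-retract-closed}; since that theorem's proof is exactly this computation, the two arguments are essentially identical.
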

\begin{proof}
    If $f:A\simeq\unittype$ then we know $c\defeq f^{-1}(\star):A$, where
    $f^{-1}:\unittype\to A$ is an inverse of $f$. For $a:A$, we
    have $f(a) = \star$, so $a = f^{-1}(f(a)) = f^{-1}(\star) = c$. So then $A$ is
    contractible.

    Let $A$ be contractible, with center of contraction $c:A$. For $a,b:A$, we
    have $a=c=b$, so $A$ is a proposition.

    If $a:A$, we have a map $c_a:\unittype\to A$ given by
    $c_a(\star) \defeq a$, and we have $c_{\star}:A\to\unittype$. If $A$ is
    additionally a proposition, then we have $A\lequiv \unittype$, and so
    $A\simeq\unittype$.
\end{proof}
\begin{corollary}\label{cor:iscontr-singleton}
    All contractible types are equivalent.
\end{corollary}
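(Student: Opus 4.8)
The plan is to observe that a contractible type is in particular a proposition with an inhabitant, and that any two inhabited propositions are logically equivalent; Lemma~\ref{lemma:lequiv-equiv} then upgrades this logical equivalence to an equivalence. So the whole argument is a two-line application of the previous lemmas.

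In detail: suppose $A$ and $B$ are both contractible, with centres of contraction $a_0:A$ and $b_0:B$. By the logical equivalence of $\isContr(A)$ with $A\times\isProp(A)$ established in Lemma~\ref{lemma:contr-char}, both $A$ and $B$ are propositions (and each is inhabited). The constant maps $\lambda x.\,b_0:A\to B$ and $\lambda y.\,a_0:B\to A$ witness $A\lequiv B$, and since $A$ and $B$ are propositions, Lemma~\ref{lemma:lequiv-equiv} yields $A\simeq B$.

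If one prefers a self-contained argument that does not invoke either lemma, one can instead take $f\defeq\lambda x.\,b_0:A\to B$ and $g\defeq\lambda y.\,a_0:B\to A$ directly: for any $y:B$ we have $f(g(y))\jeq b_0 = y$ by contractibility of $B$, and for any $x:A$ we have $g(f(x))\jeq a_0 = x$ by contractibility of $A$, so $g$ is simultaneously a left and a right inverse of $f$, whence $f:A\simeq B$.

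There is no genuine obstacle here; the only point requiring a little care is that transitivity and symmetry of $\simeq$ have not yet been proved at this stage of the text, so one should avoid the tempting route ``$A\simeq\unittype\simeq B$''. Passing through $\isProp$ (or through the explicit constant maps) sidesteps this entirely, since logical equivalence of propositions is trivially symmetric and transitive and Lemma~\ref{lemma:lequiv-equiv} does the remaining work.
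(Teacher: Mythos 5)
Your proof is correct. The paper itself offers no argument for this corollary: it is stated immediately after Lemma~\ref{lemma:contr-char} and treated as an immediate consequence, the implicit route presumably being that every contractible type is equivalent to $\unittype$ and hence any two are equivalent to each other. Your version makes the same corollary-of-the-lemma step precise but routes it through the third clause of Lemma~\ref{lemma:contr-char} (inhabited proposition) together with Lemma~\ref{lemma:lequiv-equiv}, or alternatively through the explicit constant maps $\lambda x.\,b_0$ and $\lambda y.\,a_0$, each of which is checked to be a two-sided inverse using the contraction witnesses. This buys you something real: the ``via $\unittype$'' reading needs symmetry and composition of equivalences, which the text has not established at this point, whereas your argument uses only the definition of $\isEquiv$ (a left and a right inverse) and lemmas already proved, so it is the more careful discharge of what the paper leaves implicit. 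Your observation about avoiding $A\simeq\unittype\simeq B$ is exactly the right point of care.
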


For dependent function $f:\qPi{x:A}{B(x)}$ the situation is more subtle. As $f(x)$
and $f(y)$ need not even have the same type, the expression $f(x)=f(y)$ is not even
well-formed. To resolve this issue, we notice that paths give rise to functions,
called \nameas{transport functions}{transport}. For each $x,y:A$, we define by path induction 
\defineopfor{$\transport$}{transport}{transport}
\begin{align*}
    \transport^B &: {(x=y)\to B(x)\to B(y)}\\
    \transport^B &(\refl,u) \defeq  u,
\end{align*}
and then given $f:\qPi{x:A}{B(x)}$ we have again by path induction, for each $x,y:A$,
a \nameas{dependent application}{application!on paths!dependent} function,
\defineopfor{$\apd$}{apd}{application, on paths, dependent}
\begin{align*}
    \apd_f&: \qPi{p:x=y}{\transport(p,f(x)) = f(y)}\\
    \apd_f&(\refl[x]) \defeq  \refl[f(x)].
\end{align*}
We will call a path $q:\transport^{B}(p,u)=v$ a path \emph{in $B$ lying over $p$},
and write 
\[(\eqover[B]{u}{v}{p})\defeq \transport^{B}(p,u)=v,\]
dropping the subscript when the type family is clear from context. Then for $p:x=y$ we may write 
\[\apd_f(p) : \eqover{f(x)}{f(y)}{p}.\]

The spatial picture giving rise to the homotopy-theoretic names is as follows: A type family
$B:A\to\univ$ can be seen as a space lying over $A$, with $B(a)$ the fiber over $A$.
A dependent function gives a section of $B$, and using this, we can lift any path
$p:x\rightsquigarrow y$ in $A$ to a path in the total space of $B$. This path can be
decomposed using a natural
family of paths $B(x)\rightsquigarrow B(y)$ and a path in $B(y)$. This natural family
of paths is captured by $\transport$, while the path in $B(y)$ is $\apd$.

There is a similar logical picture, which is captured more directly by the above
definitions: Via $\ap_B$, an equality $p:x=y$ in $A$ gives rise to an equality
$B(x)=B(y)$ in the universe. As an equality between types allows us to identify the
types, there should be a function $B(x)\to B(y)$, which we call the
function \emph{induced by~$p$}. Indeed, we
have for any $A,B:\univ$ the map
\defineop{$\idtofun$}{idtofun}
\[\coe:(A=B)\to (A\to B)\] defined by path induction with
\[\coe(\refl[A]) \defeq \id_A.\]
Then, applying this to $\ap_B:(x=y)\to (B(x)=B(y))$, and we
have
\[\transport(p,u) = \coe(\ap_B(p),u).\]
Conversely, we may define $\coe$ from $\transport$ as 
\[\coe(p,u) \defeq \transport^{\id_{\univ}}(p,u).\]

In any case, by path induction, we can show that the map $\coe(p)$ is an equivalence
for any $p:A=B$: for $\refl[A]:A=A$, we have that the identity function 
$\id_A:A\to A$ is an equivalence, as it is its own inverse:
\[((\id_A,\lambda x.\refl),(\id_A,\lambda x.\refl)) : \isEquiv(\coe(\refl)).\]
\defineop{$\idtoequiv$}{idtoequiv}
From this and path induction we get a map $\idtoequiv : (A = B) \to (A\simeq
B)$. In other words equal types are equivalent. Moreover, viewing type families as
encoding properties, the type of transport expresses Leibniz's law---the
indescernibility of identicals. The function $\apd$ tells us that Leibniz's law
respects choice of witness.

We now have the definitions we need to show that $\nat$ is a set.
\begin{theorem}\label{thm:nat-is-set}
    The type $\nat$ is a set, and is not a proposition.
\end{theorem}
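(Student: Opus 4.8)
The plan is to prove two things: that $\nat$ is a set, and that $\nat$ is not a proposition. The second part is quick: if $\nat$ were a proposition, we would have a path $p : 0 = \succop(0)$. Transporting along this in a suitable family separates $0$ from its successors. Concretely, define $D : \nat \to \univ$ by $D(0) \defeq \unittype$ and $D(\succop(n)) \defeq \zerotype$, using the induction principle for $\nat$. Then $\transport^D(p, \star) : D(\succop(0)) \jeq \zerotype$, contradicting the fact that $\zerotype$ is not inhabited (equivalently, composing with $!$ gives a map into any type, but here we directly obtain an element of $\zerotype$). So $\neg\isProp(\nat)$.

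For the first part, the key is the standard encode--decode argument, adapted to show that each identity type $n = m$ is a proposition. First I would define a family $\codefxn : \nat \to \nat \to \univ$ by double induction: $\codefxn(0,0) \defeq \unittype$, $\codefxn(0,\succop(m)) \defeq \zerotype$, $\codefxn(\succop(n),0) \defeq \zerotype$, and $\codefxn(\succop(n),\succop(m)) \defeq \codefxn(n,m)$. By a straightforward induction on $n$ and $m$, each $\codefxn(n,m)$ is a proposition: the base cases are $\unittype$ and $\zerotype$, which are propositions by the earlier theorem, and the successor case is $\codefxn(n,m)$ itself by the inductive hypothesis. Next, define $\encode : \qPi{n,m:\nat}(n = m) \to \codefxn(n,m)$ by path induction, sending $\refl[n]$ to a canonical element $r(n) : \codefxn(n,n)$, where $r$ is itself defined by induction with $r(0) \defeq \star$ and $r(\succop(n)) \defeq r(n)$. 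Then define $\decode : \qPi{n,m:\nat}\codefxn(n,m) \to (n = m)$ by double induction on $n$ and $m$: in the $(0,0)$ case send $\star$ to $\refl[0]$; in the mixed cases use $!$ out of $\zerotype$; in the $(\succop(n),\succop(m))$ case, given $c : \codefxn(n,m)$ apply $\ap_{\succop}$ to $\decode(n,m,c)$.

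To conclude that $n = m$ is a proposition, it suffices by Theorem~\ref{thm:prop-retract-closed} to exhibit $n = m$ as a retract of $\codefxn(n,m)$, i.e.\ to show $\decode(n,m) \comp \encode(n,m) \htpy \id$. This homotopy follows by path induction: when the path is $\refl[n]$ we must check $\decode(n,n,r(n)) = \refl[n]$, which goes through by induction on $n$ (base case definitional; successor case uses $\ap_{\succop}(\refl) \jeq \refl$ together with the inductive hypothesis). Since $\codefxn(n,m)$ is a proposition and $n = m$ is a retract of it, $n = m$ is a proposition for all $n, m : \nat$; that is, $\isSet(\nat)$.

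The main obstacle is the bookkeeping in the double inductions for $\codefxn$ and $\decode$ — getting the four cases and their nested sub-cases to line up, and in particular making sure the retraction homotopy reduces cleanly in the successor case. None of the individual steps is deep, but this is the first genuinely multi-layered inductive argument in the text, so the care is in organizing the case analysis rather than in any single computation. (One could alternatively phrase the whole thing as Hedberg's theorem via decidable equality, but the direct encode--decode route is self-contained given only what has been developed so far.)
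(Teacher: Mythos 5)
Your proof is correct and follows essentially the same route as the paper: the same encode--decode family $\codefxn$, the same reflexivity element $r$, the same $\decode$ with $\ap_{\succop}$ in the successor case, and the same appeal to Theorem~\ref{thm:prop-retract-closed} after checking $\decode\comp\encode\htpy\id$ by path induction and induction on $n$. The only cosmetic differences are that you define $\encode$ by path induction rather than via $\transport$ (equivalent), and you spell out $0\neq\succop(0)$ by transporting in a unit/empty family where the paper simply cites $0\neq 1$.
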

We use a modification of a technique called \emph{encode-decode} which is typically
used to characterize the type of paths in a space.
Encode-decode proofs proceed by defining a \emph{coding family}
\[\codefxn:\nat\to\nat\to\univ,\]
together with a family of \emph{encoding functions},
\[\encode:\qPi{x,y:\nat}(x=y)\to\codefxn(x,y),\]
and \emph{decoding functions}
\[\decode:\qPi{x,y:\nat}\codefxn(x,y)\to(x=y),\]
such that $\encode_{x,y}$ and $\decode_{x,y}$ are inverses.

In our case, it suffices to show that $x=y$ is a retract of $\codefxn(x,y)$ and that
$\codefxn$ is a predicate.
\begin{proof}
    Define the coding family
    \begin{align*}
        \codefxn(0,0) &\defeq \unittype; \\
        \codefxn(0,x+1) &\defeq \zerotype;\\
        \codefxn(x+1,0) &\defeq \zerotype;\\
        \codefxn(x+1,y+1) &\defeq \codefxn(x,y).\\
    \end{align*}
    In order to define the encoding functions we define a function
    $r:\qPi{x:\nat}\codefxn(x,x)$ which gives the
    encoding of the reflexivity path,
    \begin{align*}
        r_0 &\defeq \star; \\
        r_{n+1} &\defeq r_{n}.
    \end{align*}
    Then we have encoding functions
    \[\encode_{m,n}(p) \defeq \transport^{\codefxn(m,-)}(p,r_m),\]
    and decoding functions
    \begin{align*}
        \decode_{0,0} &\defeq \lambda w.\refl[0]; \\
        \decode_{0,x+1} &\defeq !:\zerotype\to(0=x+1);\\
        \decode_{x+1,0} &\defeq !:\zerotype\to(x+1=0);\\
        \decode_{x+1,y+1} &\defeq \ap_{\succop}\comp \decode_{x,y}.
    \end{align*}
    To show that $\decode\comp\encode$ is the identity, we may use path induction. We
    have
    \[\decode_{x,x}(\encode_{x,x}(\refl)) =
    \decode(\transport^{\codefxn(x,-)}(\refl,r_x)) = \decode_{x,x}(r_x).\]
    By induction on $x$, we see that $\decode(r_x) = \refl[x]$: this is by
    definition at $0$. We have that $\ap_{\succop}(\refl[x]) = \refl[x+1]$ by
    definition of $\ap$, so then for the successor case we have
    \[\decode_{x+1,x+1}(r_{x+1}) = \ap_{\succop}(\decode_{x,x}(r_{x+1})) =
    \ap_{\succop}(\decode(r_{x})) = \ap_{\succop}(\refl[x]) = \refl[x+1].\]

    To show that $\codefxn(x,y)$ is a proposition, we proceed by induction on $x$
    and~$y$. We have that $\codefxn(0,0) \jeq \unittype$ and
    $\codefxn(x,y+1)=\codefxn(x+1,y)=\zerotype$. Finally we have
    \[\codefxn(x+1,y+1)\jeq \codefxn(x,y),\] and $\codefxn(x,y)$ is a proposition by
    the inductive hypothesis.

    We have that $x=y$ is a retract of a proposition, so by
    Theorem~\ref{thm:prop-retract-closed}, $x=y$ is a proposition for any $x,y:\nat$.
     Since we also have
    \[\lambda w.w(0,1):\big(\qPi{x,y:\nat}x=y\big)\to (0=1),\]
    and that $0\neq 1$, we have $\neg\isProp(\nat)$. 
\end{proof}
In the above proof we got lucky: we defined $\encode$ using $\transport$, but in the end
we didn't need to compute the behavior of transport on anything besides the
reflexivity paths. This will not always suffice, so in the next section we discuss
some key results concerning transport and identity types.

\section{Transport, identities and the basic types}\label{section:transport}
\index{transport|(}
Identities, and transport along them, are of core importance, so we need some basic
results telling us how to compute with them. The proofs of most of these results are
direct application of path induction which we do not try to motivate here. For our
purposes, the results in this section are technical lemmas characterizing the
behavior of identity types and $\transport$. More detailed discussion (and in some
cases, more detailed proofs) can be found in Chapter 2 of the HoTT Book.

We start with an important characterization of equality in dependent sum types.
\begin{theorem}\label{thm:patheq}
    Let $B:A \to \univ$ be a type family, and let $w,w':\qSig{a:A}B(a)$. Then there is
    an equivalence
    \[ (w=w') \simeq \big(\qSig{p:\pr_0(w)=\pr_0(w')}\transport(p,\pr_1(w)) =
    \pr_1(w')\big).\]
\end{theorem}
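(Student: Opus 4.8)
The plan is to exhibit the equivalence explicitly: construct a map $f$ from left to right by path induction, a map $g$ from right to left by $\Sigma$-induction followed by path induction, and then check that both composites are homotopic to the identity by replaying the same inductions.

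First I would define
\[f : (w=w') \to \Big(\qSig{p:\pr_0(w)=\pr_0(w')}\transport(p,\pr_1(w)) = \pr_1(w')\Big)\]
by path induction on the input equality: on $\refl[w]$ it suffices to produce an element of $\qSig{p:\pr_0(w)=\pr_0(w')}\transport(p,\pr_1(w)) = \pr_1(w')$ in the case $w'\jeq w$, and we may take $\big(\refl[{\pr_0(w)}],\,\refl[{\pr_1(w)}]\big)$, which typechecks because $\transport(\refl,\pr_1(w)) \jeq \pr_1(w)$.

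For the map $g$ in the other direction I would first apply $\Sigma$-induction to $w$ and to $w'$, reducing to the case $w \jeq (a,b)$, $w' \jeq (a',b')$ with $a,a':A$, $b:B(a)$, $b':B(a')$; then $\pr_0$ and $\pr_1$ reduce, and the task becomes: from $p:a=a'$ and $q:\transport(p,b) = b'$ produce a path $(a,b) = (a',b')$. Path induction on $p$ reduces to the case $p \jeq \refl[a]$ (with $a'\jeq a$), so that $q$ becomes a path $b = b'$ in $B(a)$; a further path induction on $q$ reduces to $b'\jeq b$, and then $\refl[{(a,b)}]$ finishes. The two round trips are then verified the same way. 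For $g\comp f \htpy \id$, path induction on $r:w=w'$ leaves us to show $g\big(\refl[{\pr_0(w)}],\refl[{\pr_1(w)}]\big) = \refl[w]$; unfolding $w$ as $(\pr_0(w),\pr_1(w))$ makes $g$ compute to $\refl[w]$. For $f\comp g\htpy \id$, I would $\Sigma$-induct on $w,w'$ and path-induct on $p$ and on $q$ exactly as in the construction of $g$; in that reduced situation $g$ returns $\refl[{(a,b)}]$, $f$ sends it to $\big(\refl,\refl\big)$, and this is definitionally the original pair $(p,q)\jeq(\refl,\refl)$.

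The only real subtlety — and the step I would be most careful about — is the interaction of $\Sigma$-induction with $\pr_0$, $\pr_1$ and $\transport$, i.e.\ the use of the $\eta$ rule for $\Sigma$, which is what lets $g$ compute when applied to a term $w$ that is not syntactically a pair (as happens in the $g\comp f$ verification, where $f$ produces a path on the opaque $w$). With a definitional $\eta$ rule for $\Sigma$ all four computations above are immediate; without it one inserts the propositional path $w = (\pr_0(w),\pr_1(w))$ and pays for it with a handful of routine $\ap$ and $\transport$ manipulations, none of which is deep. Everything else is bookkeeping of iterated path inductions.
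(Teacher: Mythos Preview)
Your proof is correct and follows essentially the same route as the paper: define $f$ (the paper's $\encode$) by path induction, define $g$ (the paper's $\decode$) by $\Sigma$-induction on $w,w'$ followed by path induction on $p$ and then on $q$, and verify both round trips by replaying the same inductions down to definitional equalities. Your remark about the $\eta$ rule for $\Sigma$ is a welcome clarification of a point the paper leaves implicit.
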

\begin{proof}
    Let $S\defeq \qSig{x:A}B(x)$ and let
    \[\codefxn_{w,w'} \defeq \qSig{p:\pr_0(w)=\pr_0(w')}\transport(p,\pr_1(w)) =
    \pr_1(w')\]
    and define the family of maps
    \[ \encode:\qPi{w,w':S}(w=w') \to \codefxn_{w,w'}\]
    by path induction with
    \[ \encode_{w,w}(\refl[w])\defeq (\refl[\pr_0(w)],\refl[\pr_1(w)]).\]
    Now consider $w,w':S$ and $r:\codefxn_{w,w'}$. By induction on all three pairs
    $w,w'$ and $r$,
    we may assume $w=(a,b)$ and $w'=(a',b')$ and $r=(p,q)$ with $p:a=a'$ and
    $q:\eqover[B]{b}{b'}{p}$. We can perform path induction on $p$, with $a\jeq a'$
    and $p\jeq \refl$, so that $q:b=_{B(a)}b'$. Again, by path induction, we may
    treat $q$ as $\refl[b]$. Then we know that $\refl[(a,b)]:w=w'$. So we have a
    family of functions
    \[\decode : \qPi{w,w':S}\codefxn_{w,w'}\to (w=w').\]
    such that
    \[\decode((a,b),(a,b),(\refl[a],\refl[b])) \jeq \refl[(a,b)].\]
    By induction, proving $\encode$ and $\decode$ to be inverse reduces to proving
    that
    \[\decode(\encode(\refl[(a,b)])) = \refl[(a,b)]\]
    and that
    \[\encode(\decode(\refl[a],\refl[b])) = (\refl[a],\refl[b]).\]
    Both of these equations hold definitionally.
\end{proof}
We will abuse notation write $(p,q)$ for the path $(a,b)=(a',b')$ arising from
$\decode(p,q)$.
\begin{corollary}
    For any $w:\qSig{a:A}B(a)$, we have $w = (\pr_0(w),\pr_1(w))$.
\end{corollary}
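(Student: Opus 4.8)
This is an immediate consequence of Theorem~\ref{thm:patheq}. The plan is to instantiate that theorem with $w' \defeq (\pr_0(w),\pr_1(w))$ and then exhibit an element of the right-hand side of the displayed equivalence. Concretely, the computation rules for $\Sigma$ give $\pr_0\big((\pr_0(w),\pr_1(w))\big) \jeq \pr_0(w)$ and $\pr_1\big((\pr_0(w),\pr_1(w))\big) \jeq \pr_1(w)$, so the target type $\qSig{p:\pr_0(w)=\pr_0(w')}\transport(p,\pr_1(w)) = \pr_1(w')$ reduces to $\qSig{p:\pr_0(w)=\pr_0(w)}\transport(p,\pr_1(w)) = \pr_1(w)$. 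The pair $(\refl,\refl)$ inhabits this, since $\transport(\refl,\pr_1(w)) \jeq \pr_1(w)$ by the computation rule for $\transport$. Applying the (underlying function of the) equivalence from Theorem~\ref{thm:patheq} to $(\refl,\refl)$ then yields the desired path $w = (\pr_0(w),\pr_1(w))$.

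There is an even shorter route, which I might prefer to state instead: proceed directly by the induction principle for $\Sigma$, assuming $w$ is of the form $(a,b)$. Then $(\pr_0(w),\pr_1(w)) \jeq (\pr_0(a,b),\pr_1(a,b)) \jeq (a,b) \jeq w$ by the computation rules for $\pr_0$ and $\pr_1$, so $\refl[(a,b)]$ witnesses the claim. Either way there is no real obstacle here — the only thing to be mildly careful about is that the reduction of $\pr_0$ and $\pr_1$ on a canonical pair is exactly what makes the relevant identifications hold definitionally, which is why $\refl$ suffices; this is precisely the ``abuse of notation'' remark preceding the corollary, since $(\refl,\refl)$ and $\refl[(a,b)]$ are identified under the notational convention introduced just after Theorem~\ref{thm:patheq}.
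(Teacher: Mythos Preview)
Your proposal is correct and matches the paper's intent: the paper states this as an immediate corollary of Theorem~\ref{thm:patheq} with no explicit proof, and your first route (instantiate with $w'\defeq(\pr_0(w),\pr_1(w))$ and feed $(\refl,\refl)$ through the equivalence) is exactly what is implied. Your alternative via $\Sigma$-induction is also fine and arguably more direct, though it bypasses the theorem the corollary is attached to.
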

\begin{theorem}
    Let $B:A\to\univ$ and let $C:(\qSig{a:A}B(a))\to \univ$. Define $C':A\to\univ$ by
    \[C'(a) \defeq \qSig{b:B(a)}C(a,b).\]
    For any $p:a=a'$ and any pair $(b,c):\qSig{b:B(a)}C(a,b)$, we have
    \[\transport^{C'}(p,(b,c)) =
    (\transport^{B}(p,b),\transport^{C}((p,\refl),c)).\]
\end{theorem}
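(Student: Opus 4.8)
The plan is to prove this by (based) path induction on $p$. It suffices to treat the case $a'\jeq a$ and $p\jeq\refl[a]$, leaving the pair $(b,c):\qSig{b:B(a)}C(a,b)$ arbitrary; if both sides of the desired equation become judgmentally equal in this case, then $\refl$ supplies the required inhabitant and path induction finishes the job.

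First I would compute the left-hand side: by the computation rule for $\transport$, $\transport^{C'}(\refl[a],(b,c))\jeq(b,c)$. Next I would compute the right-hand side componentwise. Its first component is $\transport^{B}(\refl[a],b)\jeq b$, again by the computation rule for $\transport$. For the second component, recall the abuse of notation introduced just after Theorem~\ref{thm:patheq}: the expression $(p,\refl)$ denotes the path $\decode(p,\refl)$, and in the present instance $(\refl,\refl)$ is $\decode((a,b),(a,b),(\refl[a],\refl[b]))$, which by the computation rule for $\decode$ established in the proof of Theorem~\ref{thm:patheq} is judgmentally $\refl[(a,b)]$. Hence $\transport^{C}((\refl,\refl),c)\jeq\transport^{C}(\refl[(a,b)],c)\jeq c$ by the $\transport$ computation rule once more, so the right-hand side also reduces to $(b,c)$.

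Both sides therefore equal $(b,c)$ on the nose when $p\jeq\refl[a]$, so we may take the witness to be $\refl[(b,c)]$ in this case; path induction then delivers an inhabitant of the stated identity type for every $p:a=a'$. The only point that deserves to be made explicit — and the closest thing to an obstacle in an otherwise routine argument — is the identification of the path written $(\refl,\refl)$ with $\refl[(a,b)]$, that is, that $\decode$ sends the pair of reflexivities to reflexivity; this is precisely the definitional behaviour of $\decode$ recorded in the proof of Theorem~\ref{thm:patheq}, and without it the right-hand side would fail to reduce. Everything else is just the slogan ``transport along $\refl$ is the identity''.
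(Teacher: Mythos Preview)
Your proof is correct and follows essentially the same route as the paper's: reduce by path induction to $p\jeq\refl$, observe that $(\refl,\refl)\jeq\refl[(a,b)]$ by the computation rule for $\decode$, and conclude that both sides reduce to $(b,c)$. The paper's proof is just a more compressed version of what you wrote.
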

In other words, transport works pairwise.
\begin{proof}
    By path induction, it is enough to show
    \[\transport^{C'}(\refl,(b,c)) = 
    (\transport^{B}(\refl,b),\transport^{C}((\refl,\refl),c)).\]
    As we know that $(\refl,\refl)\jeq \refl$, the definition of
    transport tells us that this is asking for
    \[(b,c) = (b,c),\]
    which we have by reflexivity.
\end{proof}
\begin{lemma}
    Given a type $Y:\univ$, a path $p:X=X'$ between types $X$ and $X'$, and a
    function $f:X\to Y$, we have
    \[\transport^{\lambda X.X\to Y}(p,f) = \coe(p)\comp f.\]
\end{lemma}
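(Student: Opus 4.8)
The plan is to prove this by a straightforward path induction on $p$, in the same style as the transport computations immediately preceding it. It suffices to treat the case $p\jeq\refl[X]$, in which case $X'$ is (judgmentally) $X$ and the goal collapses to $\transport^{\lambda X.X\to Y}(\refl,f) = \coe(\refl)\comp f$. So the whole proof is really just the verification that the two sides of this equation are definitionally the same function, which is then witnessed by $\refl$.

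First I would unfold the left-hand side: the computation rule for $\transport$, namely $\transport^{B}(\refl,u)\jeq u$, gives $\transport^{\lambda X.X\to Y}(\refl,f)\jeq f$ directly. Next I would unfold the right-hand side: the defining equation for $\coe$ gives $\coe(\refl)\jeq\id$, and composing with the identity function (using the definition of $\comp$ together with the $\eta$-rule for functions discussed in Section~\ref{section:mltt}) reduces $\coe(\refl)\comp f$ to $f$ as well. Since both sides are therefore judgmentally equal to $f$, the base case — and hence, by path induction, the full statement — is witnessed by $\refl$.

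I do not expect any genuine obstacle here: this is a routine path-induction lemma of exactly the same character as the pairwise-transport statement just above, recorded only because transport along a path \emph{between types} is what makes the homotopy-theoretic and the logical pictures of $\transport$ coincide. The one small point worth flagging is that collapsing $\coe(\refl)\comp f$ to $f$ leans on the judgmental $\eta$-law for functions rather than on a propositional identification; since the text reasons informally up to judgmental equality this is harmless, but a fully formal development that did not take $\eta$ as judgmental would instead exhibit the evident pointwise-$\refl$ homotopy $\coe(\refl)\comp f\htpy f$ and invoke function extensionality.
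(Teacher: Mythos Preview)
Your proof is correct and follows exactly the same approach as the paper: path induction reduces the claim to $\transport^{\lambda X.X\to Y}(\refl,f)=\coe(\refl)\comp f$, which unfolds to $f=\id\comp f$ and is discharged by the $\eta$-rule. Your remark about the role of judgmental $\eta$ is apt and matches the paper's reasoning precisely.
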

\begin{proof}
    By path induction, we need to see that 
    \[\transport^{\lambda X.X\to Y}(\refl,f) = \coe(\refl)\comp f.\]
    Applying the definition of $\transport$ and $\coe$, it suffices to see that $f=
    \id\comp f$, which follows from the $\eta$ rule.
\end{proof}
\begin{theorem}\label{thm:transport-composition}
    Given a function $f:A\to B$, a family $C:B\to \univ$, a path $p:x=_{A} y$ and
    $c:C(f(x))$ we have
    \[\transport^{C\comp f}(p,c) = \transport^C(\ap_f(p),c).\]
\end{theorem}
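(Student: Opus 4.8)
Given $f:A\to B$, a family $C:B\to\univ$, a path $p:x=_A y$, and $c:C(f(x))$, we have $\transport^{C\comp f}(p,c) = \transport^C(\ap_f(p),c)$.

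The plan is to prove this by path induction on $p$, exactly as the surrounding lemmas in this section are proved. First I would note that both sides of the equation are well-typed: on the left, $C\comp f : A\to\univ$, so $\transport^{C\comp f}(p,-) : (C\comp f)(x)\to (C\comp f)(y)$, i.e.\ $C(f(x))\to C(f(y))$, and $c:C(f(x))$ sits in its domain; on the right, $\ap_f(p):f(x)=f(y)$, so $\transport^C(\ap_f(p),-):C(f(x))\to C(f(y))$, and again $c$ is in the domain. So both sides live in $C(f(y))$ and the stated identity type is well-formed.

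Then I would invoke path induction (in the informal style used throughout, reducing to the case $y\jeq x$ and $p\jeq\refl[x]$). In that case the left-hand side is $\transport^{C\comp f}(\refl[x],c)$, which by the computation rule for $\transport$ is definitionally $c$. For the right-hand side, $\ap_f(\refl[x]) \jeq \refl[f(x)]$ by the definition of $\ap$ (as recorded in the first clause of Lemma~\ref{lemma:ap-functorial}), so the right-hand side becomes $\transport^C(\refl[f(x)],c)$, which is again definitionally $c$ by the computation rule for $\transport$. Hence both sides are judgmentally equal to $c$, and $\refl[c]$ inhabits the required identity type.

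There is no real obstacle here: this is a two-line path-induction argument, and the only thing to be careful about is that the base-case reduction uses the \emph{definitional} behavior of $\ap_f$ on $\refl$ (not merely the propositional equation), so that after the reduction both sides are literally $\transport$ applied to a reflexivity path and the computation rule for $\transport$ closes the gap. If one preferred to be fully formal and use $J$ rather than the informal ``assume $p$ is $\refl$'' convention, one would set $D(x,y,p)\defeq \qPi{c:C(f(x))}\big(\transport^{C\comp f}(p,c) = \transport^C(\ap_f(p),c)\big)$ and supply the generator $d_x \defeq \lambda c.\refl[c]$, whose typing is justified by exactly the definitional computations above.
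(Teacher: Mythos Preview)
Your proof is correct and is essentially the same as the paper's: both use path induction to reduce to the case $p\jeq\refl$, where each side computes definitionally to $c$ and the goal is witnessed by $\refl$. You have simply been more explicit about the type-checking and the computation steps than the paper's two-line version.
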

\begin{proof}
    By path induction, we need to see that
    \[\transport^{C\comp f}(\refl,c) = \transport^C(\ap_f(\refl),c).\]
    This reduces to 
    \[c=c,\]
    which we have by reflexivity.
\end{proof}
\begin{theorem}\label{thm:path-transport}
    For any $A:\univ$ and $a:A$ and $p:x=_{A}y$, we have
    \begin{align*}
        \transport^{\lambda x. a=x}(p,q) =& q\ct p & \text{when $q:a=x$,}\\
        \transport^{\lambda x.x=a}(p,q) =& p^{-1}\ct q & \text{when $q:x=a$,}\\
        \transport^{\lambda x.x=x}(p,q) =& p^{-1}\ct q\ct p & \text{when $q:x=x$.}
    \end{align*}
\end{theorem}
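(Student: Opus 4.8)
The plan is to prove all three equations by a single round of path induction on $p$, reducing each to a unit-law computation already available from Lemma~\ref{lemma:path-concat}.

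First I would fix $x \jeq y$ and take $p$ to be $\refl[x]$. By the computation rule for $\transport$ we have $\transport^{B}(\refl[x], q) \jeq q$ for every type family $B$, so in each of the three cases the left-hand side reduces judgmentally to $q$ itself. It then remains only to identify $q$ with the displayed right-hand side after the substitution $p \jeq \refl[x]$. For the first equation this means showing $q = q \ct \refl[x]$ for $q : a = x$, which is exactly the right unit law $p \ct \refl[y] = p$ recorded in Lemma~\ref{lemma:path-concat}(ii). For the second, we must show $q = \refl[x]^{-1} \ct q$ for $q : x = a$; since $\refl[x]^{-1} \jeq \refl[x]$ by the definition of path inversion and $\refl[x] \ct q \jeq q$ by the definition of concatenation, both sides are judgmentally equal and $\refl[q]$ suffices. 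For the third, we must show $q = \refl[x]^{-1} \ct q \ct \refl[x]$ for $q : x = x$; the right-hand side reduces judgmentally to $q \ct \refl[x]$ as before, and the right unit law from Lemma~\ref{lemma:path-concat} then supplies the required path.

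I do not expect any genuine obstacle here: the entire content is one path induction followed by the unit laws established earlier. The only point requiring mild care is bookkeeping about which of the two unit laws is definitional in the chosen implementation of $\ct$. With the definition $\refl \ct q \defeq q$ from the proof of Lemma~\ref{lemma:path-concat}, the left unit law and $\refl^{-1} \jeq \refl$ hold judgmentally, whereas the right unit law $q \ct \refl = q$ does not; this is precisely why the first and third equations genuinely invoke Lemma~\ref{lemma:path-concat} while the second collapses to reflexivity.
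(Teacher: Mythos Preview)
Your proposal is correct. The paper takes a slightly different route: rather than doing path induction directly on each of the three equations, it derives the theorem as a special case of the immediately following, more general result (Theorem~\ref{lemma:tp-path-functions}) about $\transport^{\lambda x. f(x) = g(x)}(p,q)$ for arbitrary (dependent) functions $f,g$, obtained by taking the family $B$ to be constant at $A$ and choosing $f$ and $g$ among the constant functions and the identity. That general lemma is itself proved by a single path induction, reducing to $q = \ap_{\id}(q)$, so the underlying mechanism is identical to yours; the difference is purely one of packaging. Your direct approach is more self-contained for this particular statement and makes the unit-law bookkeeping explicit (including the nice observation about which reductions are judgmental given the chosen definition of $\ct$), while the paper's approach economizes by proving one generalization and reading off all three identities at once.
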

This is a special case of the following theorem characterizing transport along
identity types in a type family, by taking $B=\lambda x.A$ and $f,g$ to be constant
functions.
\begin{theorem}\label{lemma:tp-path-functions}
    Let $f,g:\qPi{a:A}B(a)$ with $p:a=a'$ and $q:f(a) = g(a)$. Then,
    \[\transport^{\lambda x.f(x) = g(x)}(p,q) =
    (\apd_f(p))^{-1}\ct\ap_{\transport^B(p)}(q)\ct \apd_g(p).\]
\end{theorem}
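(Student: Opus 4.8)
The plan is to prove this by path induction on $p$, reducing to the case where $p$ is $\refl[a]$ (so $a\jeq a'$), and then a second, based path induction on $q$.

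First I would apply path induction on $p$, so that it suffices to produce the identity when $p$ is $\refl[a]$. In that case the three ingredients on the right-hand side all compute: $\apd_f(\refl[a])\jeq\refl[f(a)]$ and $\apd_g(\refl[a])\jeq\refl[g(a)]$ by the defining equation of $\apd$, while $\transport^{B}(\refl[a])$ is the function $\lambda u.\,\transport^{B}(\refl[a],u)$, which by the computation rule for $\transport$ together with the $\eta$ rule is judgmentally $\id_{B(a)}$; hence $\ap_{\transport^{B}(\refl[a])}(q)\jeq\ap_{\id_{B(a)}}(q)$. Since also $\transport^{\lambda x.\,f(x)=g(x)}(\refl[a],q)\jeq q$ by the computation rule for $\transport$, the goal has become
\[ q = \refl[f(a)]^{-1}\ct\ap_{\id_{B(a)}}(q)\ct\refl[g(a)], \]
an equation mentioning only $q$ and its endpoints $f(a)$ and $g(a)$.

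Next I would perform based path induction on $q$ with fixed endpoint $f(a)$; the motive $C(y,r)\defeq\big(r=\refl[f(a)]^{-1}\ct\ap_{\id_{B(a)}}(r)\ct\refl[y]\big)$ is well formed for $y:B(a)$ and $r:f(a)=y$. When $r$ is $\refl[f(a)]$ the right-hand side is $\refl^{-1}\ct\ap_{\id}(\refl)\ct\refl$, which reduces to $\refl$ using $\refl^{-1}\jeq\refl$ and the left-unit law $\refl\ct s\jeq s$ from the reduction-style presentation of Lemma~\ref{lemma:path-concat}, together with $\ap_{\id}(\refl)\jeq\refl$ from the defining equation of $\ap$; the goal becomes $\refl=\refl$, which holds by reflexivity. (Alternatively, one may skip the second induction and instead invoke the routine fact $\ap_{\id}(q)=q$, proved by a one-line path induction, followed by the unit and inversion laws of Lemma~\ref{lemma:path-concat}; I would use whichever keeps the displayed computation shortest.)

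No serious obstacle arises here: the content is entirely routine path induction, in the same spirit as the other characterizations of $\transport$ in this section. The only points requiring a little care are bookkeeping — one must use the $\eta$ rule to identify $\transport^{B}(\refl[a])$ with $\id_{B(a)}$ as genuine functions, so that applying $\ap$ to them yields judgmentally equal results; and one should keep track of which groupoid laws of Lemma~\ref{lemma:path-concat} hold definitionally (the left-unit law and inversion of $\refl$) rather than only propositionally, so that the displayed reductions are legitimate.
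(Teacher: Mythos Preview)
Your proof is correct and follows essentially the same approach as the paper: path induction on $p$, after which the goal reduces to comparing $q$ with $\ap_{\id}(q)$ (modulo the unit laws), and this is dispatched by the fact that $\ap_{\id}$ is homotopic to the identity. The paper states this last step in one line rather than spelling out the second induction, but your more detailed bookkeeping (the $\eta$ rule for $\transport^B(\refl)$, tracking which groupoid laws are judgmental) is entirely in keeping with the level of care used elsewhere in the section.
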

\begin{proof}
    By path induction on $p$, we need to see
    \[\transport^{\lambda x.f(x) = g(x)}(\refl,q) =
    (\apd_f(\refl))^{-1}\ct\ap_{\transport^B(\refl)}(q)\ct \apd_g(\refl).\]
    This reduces to
    \[q = \ap_{\id}(q),\]
    and it is immediate that $\ap_{\id}$ is homotopic to the identity.
\end{proof}

\section{Equivalences and contractible fibers}\label{section:equiv-contr}
To conclude this chapter, we show that having contractible fibers, having an
inverse, and being an equivalence are logically equivalent notions. This section
therefore provides an important example of how to use the machinery in the previous
section.
\begin{theorem}\label{thm:equiv-char}
    For any $f:A\to B$, the following types are logically equivalent:
    \begin{enumerate}
        \item $\inverse(f)$, \label{equiv:inverse}
        \item $\isEquiv(f)$, \label{equiv:equiv}
        \item $\qPi{b:B}\isContr(\fib_f(b))$. \label{equiv:contr}
    \end{enumerate}
\end{theorem}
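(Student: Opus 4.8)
The plan is to prove a spanning set of implications, so that each of the three types is derivable from any other. The implication $(\ref{equiv:inverse})\to(\ref{equiv:equiv})$ is immediate, since an inverse of $f$ is by definition simultaneously a left and a right inverse. I would then organise the remaining work as $(\ref{equiv:equiv})\to(\ref{equiv:inverse})$, $(\ref{equiv:contr})\to(\ref{equiv:inverse})$, and $(\ref{equiv:inverse})\to(\ref{equiv:contr})$; together with the trivial implication these give all pairwise logical equivalences. Throughout I would keep everything phrased in terms of homotopies rather than equalities of functions, so that no function extensionality is needed (none is available at this point of the development anyway).

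For $(\ref{equiv:equiv})\to(\ref{equiv:inverse})$: given $(g,\eta):\linv(f)$ and $(h,\epsilon):\rinv(f)$, the point is that $g$ and $h$ must agree. For $y:B$, both $\ap_g(\epsilon(y)):g(f(h(y)))=g(y)$ and $\eta(h(y)):g(f(h(y)))=h(y)$ are available, so $\ap_g(\epsilon(y))^{-1}\ct\eta(h(y))$ is a path $g(y)=h(y)$, giving a homotopy $g\htpy h$. Applying $\ap_f$ pointwise and then composing with $\epsilon$ pointwise yields $f\comp g\htpy\id_B$, so $g$, which is already a left inverse, is an inverse of $f$. For $(\ref{equiv:contr})\to(\ref{equiv:inverse})$: from $w:\qPi{b:B}\isContr(\fib_f(b))$ set $g(b)\defeq\pr_0(\pr_0(w(b)))$; then $\pr_1(\pr_0(w(b)))$ witnesses $f\comp g\htpy\id_B$. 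For the other homotopy, fix $a:A$; the centre of the contractible type $\fib_f(f(a))$ and the point $(a,\refl[f(a)])$ are both elements of it, hence equal, and applying $\ap_{\pr_0}$ to this equality gives $g(f(a))=a$.

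The main obstacle is $(\ref{equiv:inverse})\to(\ref{equiv:contr})$. Given $(g,\eta,\epsilon):\inverse(f)$ with $\eta:g\comp f\htpy\id_A$ and $\epsilon:f\comp g\htpy\id_B$, and fixing $b:B$, I would take $(g(b),\epsilon(b))$ as the candidate centre of $\fib_f(b)$. For an arbitrary $(a,p):\fib_f(b)$, Theorem~\ref{thm:patheq} reduces a path $(g(b),\epsilon(b))=(a,p)$ to a pair: some $q:g(b)=a$ together with a proof that $\transport^{\lambda x.f(x)=b}(q,\epsilon(b))=p$. By Theorem~\ref{thm:transport-composition} the transport passes through $f$, and then Theorem~\ref{thm:path-transport} (the $\transport^{\lambda x.x=a}$ clause) rewrites the equation as $\ap_f(q)\ct p=\epsilon(b)$, equivalently $\ap_f(q)=\epsilon(b)\ct p^{-1}$. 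A natural candidate such as $q\defeq\eta(g(b))^{-1}\ct\ap_g(\epsilon(b)\ct p^{-1})\ct\eta(a)$ reduces, via Lemma~\ref{lemma:ap-functorial} and the naturality of $\epsilon$ (itself a path induction), to $\epsilon(b)\ct p^{-1}$ only up to two ``triangle'' terms of the form $\ap_f(\eta(x))^{-1}\ct\epsilon(f(x))$, which do not vanish for arbitrary inverse data $(g,\eta,\epsilon)$. The resolution, exactly as in the HoTT Book~\cite{hottbook}, is to adjointify: replace $\epsilon$ by a homotopy $\epsilon'$, built from $g,\eta,\epsilon$ by pure path algebra, satisfying the triangle identity $\ap_f(\eta(x))=\epsilon'(f(x))$ for all $x:A$. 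Since swapping $\epsilon$ for $\epsilon'$ does not change whether $f$ has an inverse, we may assume the triangle identity at the outset, and then the computation above closes, proving $\fib_f(b)$ contractible. I expect this coherence adjustment to be the only genuinely delicate point; everything else is routine manipulation built from Lemmas~\ref{lemma:path-concat} and~\ref{lemma:ap-functorial}.
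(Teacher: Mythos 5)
Your proposal is correct, and for three of the four implications ((\ref{equiv:inverse})$\to$(\ref{equiv:equiv}), (\ref{equiv:equiv})$\to$(\ref{equiv:inverse}) via showing the left and right inverses agree, and (\ref{equiv:contr})$\to$(\ref{equiv:inverse}) via the centre of contraction) it matches the paper's proof essentially step for step. The difference is in the hard direction (\ref{equiv:inverse})$\to$(\ref{equiv:contr}). You take the HoTT-Book route: exhibit $(g(b),\epsilon(b))$ as the centre, reduce paths in $\fib_f(b)$ via Theorems~\ref{thm:patheq}, \ref{thm:transport-composition} and~\ref{thm:path-transport} to the equation $\ap_f(q)=\epsilon(b)\ct p^{-1}$, observe that this fails to close for arbitrary inverse data, and adjointify $\epsilon$ to a homotopy satisfying the triangle identity. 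The paper avoids the coherence problem altogether: by Lemma~\ref{lemma:fib-retr}, $\fib_f(b)$ is a retract of $\fib_{f\comp g}(b)$ (taking $f$ as the section and its inverse $g$ as the retraction); since $f\comp g\htpy\id_B$, that fiber is equivalent to $\fib_{\id_B}(b)$, which is a singleton (Lemma~\ref{lemma:id-contractible}), and contractibility is closed under retracts (Theorem~\ref{thm:prop-retract-closed}). The paper's argument buys a shorter proof with no coherence construction and no explicit analysis of the fiber's path space; your argument buys the explicit half-adjoint data and the concrete contraction, at the cost of the somewhat fiddly construction of $\epsilon'$ and verification of the triangle identity, which you cite from the HoTT Book rather than carry out — acceptable in a sketch, but it is precisely the work the paper's retract trick is designed to avoid.
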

To relate
contractibility to equivalence, we need to spend some time examining fibers. As the
fiber is a sum over a family valued in path types, we first give two important lemmas
about families of paths.
\begin{lemma}\label{lemma:singleton-contractible}
    For any type $A$ and any $a:A$, the type $\qSig{x:A}x=a$ is contractible.
\end{lemma}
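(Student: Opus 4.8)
The plan is to exhibit an explicit center and contract onto it. I would take the center of contraction to be $(a,\refl[a]):\qSig{x:A}x=a$; it then remains to build a term of type $\qPi{w:\qSig{x:A}x=a}{(a,\refl[a]) = w}$. By the induction principle for $\Sigma$ this is the same as producing, for every $x:A$ and every $p:x=a$, a path $(a,\refl[a]) = (x,p)$ inside $\qSig{x:A}x=a$.

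I would carry this out by path induction on $p$. Informally, in the ``reduce to $\refl$'' style used elsewhere in the chapter, it suffices to treat the case where $p$ is $\refl[a]$, in which case $x\jeq a$ and the required path is just $\refl[{(a,\refl[a])}]:(a,\refl[a])=(a,\refl[a])$. To make this fully rigorous via the eliminator $J$, I would apply path induction to the family $C:\qPi{x,y:A}(x=y)\to\univ$ given by $C(x,y,p)\defeq\big((y,\refl[y]) =_{\qSig{z:A}z=y} (x,p)\big)$; since $C(x,x,\refl[x])$ is inhabited by $\refl[{(x,\refl[x])}]$, the eliminator yields a term of type $\qPi{x,y:A}\qPi{p:x=y}{(y,\refl[y])=(x,p)}$, and instantiating $y\defeq a$ gives exactly the contracting homotopy together with the chosen center.

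An alternative, which sidesteps the ``reduce to $\refl$'' heuristic entirely, is to use Theorem~\ref{thm:patheq}: a path $(a,\refl[a])=(x,p)$ in $\qSig{x:A}x=a$ is the same datum as a pair consisting of $q:a=x$ together with a path $\transport^{\lambda y.\,y=a}(q,\refl[a]) = p$. By Theorem~\ref{thm:path-transport} the left-hand side equals $q^{-1}\ct\refl[a]$, which equals $q^{-1}$ by Lemma~\ref{lemma:path-concat}; so taking $q\defeq p^{-1}$ reduces the remaining obligation to $(p^{-1})^{-1}=p$, a one-line path induction.

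I do not expect a genuine obstacle here: this is one of the basic facts about identity types, and both routes are routine. The only point that needs a moment's care is that the based path induction principle as stated fixes the \emph{left} endpoint, whereas the path $p:x=a$ appearing here has its fixed endpoint $a$ on the \emph{right}; this is dispatched either by passing through the general eliminator $J$ as above, or by the symmetry of identity types from Lemma~\ref{lemma:path-concat}, and so causes no real difficulty.
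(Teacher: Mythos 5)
Your proposal is correct and matches the paper's own argument: the paper likewise takes $(a,\refl[a])$ as the center and contracts onto it by (based) path induction, reducing to the reflexivity case, and your alternative route via Theorem~\ref{thm:patheq} and Theorem~\ref{thm:path-transport} is exactly the "direct computation" the paper sketches immediately after its proof. Your extra care about the orientation of the fixed endpoint (going through $J$ rather than the based eliminator as stated) is a harmless refinement of the same idea.
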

This type $\qSig{x:A}x=a$ is sometimes called the \emph{singleton (based) at $a$}.
\begin{proof}
    We wish to have $(a,\refl[a])$ as the center of contraction, so we need to see
    that for any $x:A$ and $p:x=a$ we have
    \[(x,p)=(a,\refl)\]
    which we can show by based path induction. We need only see that
    $(a,\refl)=(a,\refl)$, which we have by $\refl[(a,\refl)]$.
\end{proof}
We could also prove this by a direct computation using the machinery from the
previous section. For any $(x,p):\qSig{x:A}x=a$ we have
\[\big((x,p) = (a,\refl)\big) \simeq
\Big(\qSig{q:x=a}\transport^{-=a}(p,q)=\refl\Big).\]
By Theorem~\ref{thm:path-transport}, we have that $\transport^{-=a}(p,p) = p^{-1}\ct p = \refl$.
\index{transport|)}

Lemma~\ref{lemma:singleton-contractible} can be read as the (classically trivial) claim that $\{a\}$ is
equivalent to the type $\{x\in A \mid x=a\}$.
It will be used frequently and is relevant here
because it tells us that the identity function has contractible fibers:
\begin{lemma}\label{lemma:id-contractible}
    For any $A:\univ$, the identity function $\id_A:A\to A$ has contractible fibers.
\end{lemma}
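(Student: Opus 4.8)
The plan is to observe that this is an immediate consequence of Lemma~\ref{lemma:singleton-contractible}. Unfolding the definition of $\isContr$ for functions, the goal $\isContr(\id_A)$ is the type $\qPi{b:A}\isContr(\fib_{\id_A}(b))$, so I would fix an arbitrary $b:A$ and aim to show that $\fib_{\id_A}(b)$ is contractible.

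By the definition of fiber, $\fib_{\id_A}(b) \jeq \qSig{a:A}{\id_A(a) = b}$, and the computation rule for the identity function gives $\id_A(a)\jeq a$, so this type is definitionally the singleton $\qSig{a:A}{a = b}$ based at $b$. Lemma~\ref{lemma:singleton-contractible} (with its $A$ and $a$ instantiated by our $A$ and $b$) states precisely that this type is contractible, with center of contraction $(b,\refl[b])$. Since $b$ was arbitrary, we obtain $\qPi{b:A}\isContr(\fib_{\id_A}(b))$, that is, $\id_A$ has contractible fibers.

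There is no real obstacle here: the only thing to check is that the fiber of $\id_A$ over $b$ is the based singleton at $b$, which is immediate from the computation rule $\id_A(a)\jeq a$, after which Lemma~\ref{lemma:singleton-contractible} does all the work. This observation is the first ingredient needed for establishing the logical equivalence of the three notions in Theorem~\ref{thm:equiv-char}.
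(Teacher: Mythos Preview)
Your proposal is correct and matches the paper's proof exactly: the paper simply observes that the fiber of $\id$ at $a:A$ is the singleton based at $a$, and implicitly invokes Lemma~\ref{lemma:singleton-contractible}. Your version just makes the unfolding of definitions more explicit.
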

\begin{proof}
    The fiber of $\id$ at $a:A$ is the singleton based at~$a$.
\end{proof}
In fact, the equivalence in Lemma~\ref{lemma:singleton-contractible} lifts to any
type family over $A$.
\begin{lemma}\label{lemma:path-families}
    For any type family $B:A\to\univ$ and any $a:A$,
    \[B(a) \simeq \qSig{(a',b'):\qSig{a':A}B(a')} a'=a.\]
\end{lemma}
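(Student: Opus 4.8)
The plan is to give explicit maps in both directions and check that they are mutually inverse; this realises the right-hand side as $\fib_{\pr_0}(a)$ for $\pr_0 : \big(\qSig{a':A}B(a')\big)\to A$, and exhibits it as a fibre-type recovery of $B(a)$. Write $T \defeq \qSig{(a',b'):\qSig{a':A}B(a')}a'=a$ for the target. First I would define
\[\varphi : B(a)\to T,\qquad \varphi(b)\defeq \big((a,b),\refl[a]\big),\]
and
\[\psi : T\to B(a),\qquad \psi\big((a',b'),p\big)\defeq \transport^{B}(p,b').\]

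Next I would verify the two homotopies. For $\psi\comp\varphi\htpy\id_{B(a)}$, observe that for $b:B(a)$ we have $\psi(\varphi(b)) = \transport^{B}(\refl[a],b) = b$ by the computation rule for $\transport$, so this homotopy is pointwise $\refl$. For $\varphi\comp\psi\htpy\id_{T}$, I would use based path induction on the equality component: given $\big((a',b'),p\big):T$ it suffices to treat $p$ as $\refl[a]$, identifying $a'$ with $a$; then $\psi\big((a,b'),\refl\big) = b'$ and $\varphi(b') = \big((a,b'),\refl[a]\big)$, which is exactly the element we began with, so $\refl$ provides the identification. Formally one applies based path induction to the motive $C(a',p)\defeq \qPi{b':B(a')}\varphi\big(\psi((a',b'),p)\big) = \big((a',b'),p\big)$. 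Packaging $\psi$ together with these two homotopies gives an element of $\inverse(\varphi)$, which immediately yields $\isEquiv(\varphi)$ (a quasi-inverse splits into a left and a right inverse), hence $B(a)\simeq T$ as claimed.

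An alternative route, which I would mention but not take, is to chain known equivalences: swapping the two independent $\Sigma$'s rewrites $T$ as $\qSig{(a',p):\qSig{a':A}(a'=a)}B(a')$, and since $\qSig{a':A}(a'=a)$ is the singleton contractible at $a$ by Lemma~\ref{lemma:singleton-contractible}, summing $B$ over it should collapse to $B(a)$. This is slicker in spirit but presupposes a lemma on summing a type family over a contractible base, which has not yet been established at this point in the text, so the direct construction above is preferable here. The only step needing a little care — and it is routine rather than a genuine obstacle — is ordering the path induction correctly: one cannot induct on $p$ while $a'$ and $b'$ remain in scope as variables constrained by $p$, so the motive must first abstract over $b'$ (and implicitly over $a'$ via based path induction), after which the eliminator applies directly.
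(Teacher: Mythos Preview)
Your proof is correct and follows essentially the same approach as the paper: both define the same pair of maps (via $b\mapsto((a,b),\refl)$ and $((a',b'),p)\mapsto\transport^{B}(p,b')$) and verify the easy round-trip identically. The only difference is in the harder round-trip: you use based path induction to reduce to the $\refl$ case, whereas the paper first reshuffles the nested $\Sigma$ and then constructs the required equality explicitly via the characterization of paths in $\Sigma$-types (Theorem~\ref{thm:patheq}) together with the transport computation from Theorem~\ref{thm:path-transport}; your route is slightly more direct.
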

\begin{proof}
    The right-hand side is equivalent, by a reshuffling map, to
    \[B\defeq \qSig{a':A}B(a')\times (a=a').\]
    There is a map $B\to B(a)$ given by $(a',b',p)\mapsto \transport^{B}(p,b')$, with
    a candidate inverse $b\mapsto (a,b,\refl)$. Since $\transport(\refl,b) = b$, the
    type $B(a)$ is a retract of $B$. Now, we need to see for any $(a',b',p):B$ that
    \[(a',b',p) = (a,\transport(p,b'),\refl).\]
    We have by assumption $p:a' = a$, and we know $\refl:\transport(p,b') = \transport(p,b')$.
    Finally, we need to see that $\transport^{-=a'}(p,p) = \refl$, which is an
    application of Theorem~\ref{thm:path-transport}.
\end{proof}
We can generalize this one step further.
\begin{lemma}\label{lemma:path-families-strong}
    For any $a:A$ and $B:\qPi{x:A}x=a\to\univ$ the map 
    \[e: B(a,\refl)\to \qSig{x:A}\qSig{p:x=a}B(x,p)\]
    given by
    \[b\mapsto (a,\refl,b)\]
    is an equivalence.
\end{lemma}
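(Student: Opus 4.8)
The statement generalizes Lemma~\ref{lemma:path-families}: rather than a type family $B$ depending only on a point $a':A$, we now have $B$ depending on a \emph{point-with-a-path-to-$a$}, i.e. on $(x,p):\qSig{x:A}x=a$. The cleanest approach is to exploit that the base $\qSig{x:A}x=a$ is contractible (Lemma~\ref{lemma:singleton-contractible}), with center $(a,\refl)$, so that $B$ restricted along the inclusion of the center ``sees everything.'' Concretely, I would reshuffle the codomain: $\qSig{x:A}\qSig{p:x=a}B(x,p)$ is definitionally (up to associativity of $\Sigma$, which is an equivalence) the type $\qSig{(x,p):\qSig{x:A}x=a}B(x,p)$, i.e.\ the total space of $B$ viewed as a family over the singleton $\qSig{x:A}x=a$.

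First I would establish a small general fact: if $C$ is a contractible type with center $c_0$, and $D:C\to\univ$ is any family, then $D(c_0)\simeq\qSig{c:C}D(c)$, with the equivalence given by $d\mapsto(c_0,d)$. This follows by transporting along the contraction; since $C$ contractible means $\qPi{c:C}c_0=c$, every fiber $D(c)$ is identified with $D(c_0)$, and the total space collapses. (Alternatively: a contractible type is equivalent to $\unittype$ by Lemma~\ref{lemma:contr-char}, and $\qSig{u:\unittype}D'(u)\simeq D'(\star)$ by the unit elimination; one transports $D$ across this equivalence.) Then I would apply this with $C\defeq\qSig{x:A}x=a$, which is contractible with center $(a,\refl)$ by Lemma~\ref{lemma:singleton-contractible}, and $D\defeq\lambda(x,p).B(x,p)$. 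Unwinding, the map $d\mapsto((a,\refl),d)$ corresponds under the $\Sigma$-reshuffle to exactly $b\mapsto(a,\refl,b)$, which is the map $e$ in the statement; hence $e$ is an equivalence.

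Alternatively — and perhaps more in keeping with the style of the surrounding proofs — one can prove this directly by based path induction, which is even shorter. A map $B(a,\refl)\to\qSig{x:A}\qSig{p:x=a}B(x,p)$ being an equivalence is, by Theorem~\ref{thm:equiv-char}, the same as it having contractible fibers, or one can just produce an inverse: given $(x,p,b)$ in the codomain, based path induction on $p:x=a$ lets us assume $x\jeq a$ and $p\jeq\refl$, so $b:B(a,\refl)$, giving a map back; the round-trip composites are then computed on $\refl$ and are reflexivities. This is the approach I would actually write out, as it avoids the auxiliary lemma about contractible base types and uses only path induction plus Theorem~\ref{thm:patheq} for the $\Sigma$-equality bookkeeping in the round-trips.

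**The main obstacle.** The only genuinely fiddly point is verifying that one of the two round-trip composites is the identity: going codomain $\to$ domain $\to$ codomain sends $(x,p,b)$ first (by based path induction on $p$) to some $b':B(a,\refl)$ and then to $(a,\refl,b')$, and we must show $(x,p,b)=(a,\refl,b')$ in the $\Sigma$-type. Under based path induction on $p$ this reduces to $(a,\refl,b)=(a,\refl,b)$, which is $\refl$; so it is not actually hard, but it does require being careful that the induction is set up so that the computation rule for $J$ fires. The other composite, domain $\to$ codomain $\to$ domain, is immediate since based path induction on $\refl$ computes definitionally. So I expect no real difficulty — the result is essentially a repackaging of ``the singleton at $a$ is contractible'' — and the write-up should be a few lines.
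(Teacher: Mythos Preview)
Your proposal is correct, and your second approach (based path induction to build the inverse, then verify the round-trips) is essentially what the paper does, though the paper writes it more tersely: it just hands you the inverse as an explicit transport, $f(x,p,b) \defeq \transport((p,w),b)$ where $w:\transport^{-=a}(p,p)=\refl$ comes from Theorem~\ref{thm:path-transport}, and does not spell out the round-trip checks at all. Your version is actually more complete, since you identify which composite is the nontrivial one and explain why based path induction reduces it to $\refl$.

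Your first approach---factor through the general fact that $D(c_0)\simeq\qSig{c:C}D(c)$ whenever $C$ is contractible with center $c_0$---is a mild abstraction the paper does not isolate. It is a clean way to see the lemma as an instance of a single principle (and indeed Lemma~\ref{lemma:path-families} becomes the special case $B(x,p)\defeq B'(x)$), at the cost of stating and proving that auxiliary lemma. The paper's direct transport formula is shorter in situ but less reusable; either route is fine here.
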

\begin{proof}
    The inverse is given by 
    \[f(x,p,b) \defeq \transport((\refl,w),b)\]
    where $w:\transport^{-=a}(p,p)=\refl$ comes from
    Theorem~\ref{thm:path-transport}.
\end{proof}
\index{equivalence|(}
The logical equivalence between having an inverse and being an equivalence is direct,
and by looking at the center of contraction of $\fib_f(b)$, when $f$ is contractible,
we can construct an inverse for $f$. The difficulty in Theorem~\ref{thm:equiv-char}
is in showing that a map with an inverse has contractible fibers. The key insight
is to relate the fibers of a section $s$ to
the fibers of the map $B\to B$ given by composing with its retraction.
\begin{lemma}\label{lemma:fib-retr}
    Given $r:B\to A$ and $s:A\to B$ together with a homotopy $\eta:r\comp s\htpy
    \id_{A}$, for any $b:B$ we have that $\fib_{s}(b)$ is a retract of~$\fib_{s\comp
    r}(b)$.
\end{lemma}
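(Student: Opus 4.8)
The plan is to exhibit an explicit section–retraction pair. Write $\fib_s(b)\jeq\qSig{a:A}{s(a)=b}$ and $\fib_{s\comp r}(b)\jeq\qSig{b':B}{s(r(b'))=b}$. I would define the section $\phi:\fib_s(b)\to\fib_{s\comp r}(b)$ by
$\phi(a,p)\defeq\big(s(a),\,\ap_s(\eta(a))\ct p\big)$; this typechecks since $\eta(a):r(s(a))=a$ gives $\ap_s(\eta(a)):s(r(s(a)))=s(a)$, which composes with $p:s(a)=b$ to land in $s(r(s(a)))=b$. I would define the retraction $\psi:\fib_{s\comp r}(b)\to\fib_s(b)$ by $\psi(b',q)\defeq(r(b'),q)$, which typechecks at once since $q:s(r(b'))=b$ is already an element of the fiber of $s$ over $b$ at the point $r(b')$.

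It then remains to check $\psi\comp\phi\htpy\id_{\fib_s(b)}$. Given $(a,p):\fib_s(b)$ one must produce a path $\big(r(s(a)),\ap_s(\eta(a))\ct p\big)=(a,p)$ in $\fib_s(b)$. By Theorem~\ref{thm:patheq} it suffices to give a path $\alpha:r(s(a))=a$ together with a proof that $\transport^{\lambda a.\,s(a)=b}(\alpha,\ap_s(\eta(a))\ct p)=p$; I would take $\alpha\defeq\eta(a)$. Recognising the family $\lambda a.\,s(a)=b$ as the composite $(\lambda y.\,y=b)\comp s$, Theorem~\ref{thm:transport-composition} followed by Theorem~\ref{thm:path-transport} reduces the transported term to $(\ap_s(\eta(a)))^{-1}\ct\ap_s(\eta(a))\ct p$, and Lemma~\ref{lemma:path-concat} (together with associativity of concatenation) identifies this with $p$. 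Alternatively one can prove the transport identity directly by path induction on $\eta(a)$, collapsing everything to $\refl\ct p=p$.

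The construction itself is routine; the one place needing care is this last step, where one has to view $\lambda a.\,s(a)=b$ as a composite so the transport lemmas apply and then keep track of which inverse cancels which. I do not expect any essential obstacle: no homotopy beyond $\eta$ is used, and in particular the other triangle $s\comp r\htpy\id$ is never needed (and not even assumed to exist) — which is exactly why this lemma is the right stepping stone toward showing that a section has contractible fibers.
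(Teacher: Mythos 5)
Your proposal is correct and is essentially the paper's own proof: the same maps $(a,p)\mapsto(s(a),\ap_s(\eta(a))\ct p)$ and $(b',q)\mapsto(r(b'),q)$, with the round-trip identified via Theorem~\ref{thm:patheq}, Theorem~\ref{thm:transport-composition} and Theorem~\ref{thm:path-transport}. The only cosmetic difference is that you transport along $\eta(a)$ (finishing by cancellation/associativity or path induction) whereas the paper transports along $\eta(a)^{-1}$ and finishes with Lemma~\ref{lemma:ap-functorial}; both are fine.
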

\begin{proof}
    The map 
    \[\overline{s}:\big(\qSig{a:A}s(a)=b\big)\to \big(\qSig{b':B}sr(b')= b\big)\]
    is given by
    \[\overline{s}(a,p) \defeq (s(a),\ap_s(\eta_a)\ct p),\]
    The second component is the path
    \[s(r(s(a))) \eqby{\ap_s(\eta_a)} s(a) \eqby{p} b.\]
    The candidate retraction
    \[\overline{r}:\big(\qSig{b':B}sr(b')=b\big)\to \big(\qSig{a:A}s(a)= b\big)\]
    is given by
    \[\overline{r}(b',q) \defeq (r(b'),q).\]
    We need to see that $(a,p) = \overline{r}(\overline{s}(a,p))$. By
    Theorem~\ref{thm:patheq} we need to see that there is some $w:a = r(s(a))$ such
    that $\transport^{\lambda a.s(a)=b}(w,p) = \ap_s(\eta_a)\ct p$. We have
    $\eta_a^{-1}:a = r(s(a))$, so we compute,
    \begin{align*}
        \transport^{\lambda a.s(a)=b}(\eta_a^{-1},p) &= \transport^{\lambda
        b'.b'=b}(\ap_s(\eta_a^{-1}),p) & \text{by Theorem~\ref{thm:transport-composition}}\\
        &= (\ap_s(\eta_a^{-1}))^{-1}\ct p &\text{by Theorem~\ref{thm:path-transport}}\\
        &= (\ap_s(\eta_a))\ct p &\text{by Lemma~\ref{lemma:ap-functorial}.\qedhere}
    \end{align*}
\end{proof}

Finally, we may prove Theorem~\ref{thm:equiv-char}.
\begin{proof}[Proof (of Theorem~\ref{thm:equiv-char})]
    Given $(g,(\eta,\epsilon)):\inverse(f)$ we have
    $((g,\eta),(g,\epsilon)):\isEquiv(f)$.
    Conversely, if $\eta:gf\htpy \id_{A}$ and $\epsilon:fh\htpy \id_{B}$, then for any $b:B$
    we have $g(b) = g(f(h(b))) = h(b)$, and this gives us a homotopy $\theta:fg\htpy
    \id_B$. Explicitly, $\theta_b$ is given by the composite path
    \[f(g(b)) \eqby{\ap_{f\comp g}(\epsilon_{b}^{-1})} f(g(f(h(b))))
    \eqby{\ap_f(\eta_{h(b)})} f(h(b)) \eqby{\eta_b} b, \]
    and so $(g,(\eta,\theta))$ gives an inverse of~$f$.

    The implication from~\ref{equiv:contr}.~to~\ref{equiv:inverse}.~is only slightly
    more involved: Define $g:B\to A$ to be the function giving the first component of
    the center of contraction of $\fib_f(b)$. In other words, for any $b:B$ we have by
    assumption an element
    \[((a,p),w):\isContr(\fib_f(b)),\]
    and we may define $g(b) \defeq a$.
    Note that $p:f(g(b)) = b$. Moreover, we know that we have $(a,\refl):\fib_f{f(a)}$, and
    so we have $a = g(f(a))$ by Theorem~\ref{thm:patheq} and the fact that
    $\fib_f(a)$ is contractible. Then we have that $g$ is an inverse of $f$.

    The implication from~\ref{equiv:inverse}.~to~\ref{equiv:contr}.~is less obvious.
    As $g$ is a section of $f$, we have that $\fib_g(b)$ is a retract of
    $\fib_{gf}(b)$, by Lemma~\ref{lemma:fib-retr}. Since a retract of a contractible
    type is again contractible, it suffices to show that $\fib_{gf}(b)$ is
    contractible. However, since we have $g\comp f \htpy \id_B$, we have an equivalence
    \[\fib_{g\comp f}(b) \simeq \fib_{\id_B}(b),\]
    and the latter is contractible by Lemma~\ref{lemma:id-contractible}.
\end{proof}
We will see later that being an equivalence and having contractible fibers are in
fact equivalent as types.
\index{equivalence|)}

Before we finish the chapter, we state two results about families of maps; these are
Theorems~4.7.6 and~4.7.7 of the HoTT Book. Let $P,Q:A\to\univ$ and
${f:\qPi{x:A}P(x)\to Q(x)}$. Define the \emph{total map} of $f$ by
\begin{align*}
    f_{\Sigma} &{}: (\qSig{x:A}P(x))\to (\qSig{x:A}Q(x))\\
    f_{\Sigma} &(x,p) \defeq (x,f(x,p)).
\end{align*}
\begin{theorem}\label{thm:fiberwise-sigma}
    For any $f:\qPi{x:A}P(x)\to Q(x)$, any $a:A$ and $q:Q(a)$, we have
    \[\fib_{f(a)}(q) \simeq \fib_{f_{\Sigma}}(a,q).\]
\end{theorem}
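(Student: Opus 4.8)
I would build the equivalence as a chain of equivalences, reducing everything to results already available. Unfolding definitions, a point of $\fib_{f_{\Sigma}}(a,q)$ is a pair $w:\qSig{x:A}P(x)$ together with a path $f_{\Sigma}(w)=(a,q)$; writing $w$ as $(x,p)$ and using the definitional equality $f_{\Sigma}(x,p)\jeq(x,f(x,p))$, associativity of $\Sigma$ gives
\[\fib_{f_{\Sigma}}(a,q)\;\simeq\;\qSig{x:A}\qSig{p:P(x)}\big((x,f(x,p))=(a,q)\big).\]
First I would apply Theorem~\ref{thm:patheq} to the innermost identity type: the path $(x,f(x,p))=(a,q)$ in $\qSig{a':A}Q(a')$ is equivalent to $\qSig{r:x=a}\transport^{Q}(r,f(x,p))=q$. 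Transporting this fibrewise equivalence through the two outer $\Sigma$'s, and then reshuffling the order of $\Sigma$-components so that the $r:x=a$ component sits directly after the $x:A$ component (legitimate since the $P(x)$ component and the $x=a$ component are independent), I obtain
\[\fib_{f_{\Sigma}}(a,q)\;\simeq\;\qSig{x:A}\qSig{r:x=a}\qSig{p:P(x)}\transport^{Q}(r,f(x,p))=q.\]

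Next I would recognise the right-hand side as an instance of Lemma~\ref{lemma:path-families-strong}, taking $B(x,r)\defeq\qSig{p:P(x)}\transport^{Q}(r,f(x,p))=q$. That lemma gives an equivalence $\qSig{x:A}\qSig{r:x=a}B(x,r)\simeq B(a,\refl)$, so
\[\fib_{f_{\Sigma}}(a,q)\;\simeq\;\qSig{p:P(a)}\transport^{Q}(\refl,f(a,p))=q.\]
Finally, since $\transport^{Q}(\refl,-)$ is the identity, the right-hand side is $\qSig{p:P(a)}f(a,p)=q$, which is exactly $\fib_{f(a)}(q)$. Composing these equivalences finishes the proof.

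\textbf{Main obstacle.} There is no genuine difficulty here; the content sits entirely inside Theorem~\ref{thm:patheq} and Lemma~\ref{lemma:path-families-strong}. The only mild care needed is the bookkeeping of the $\Sigma$-reshuffling and the fact that I am applying Theorem~\ref{thm:patheq} fibrewise — and at this point in the text the general principle that a fibrewise equivalence induces an equivalence on total spaces has not yet been stated. To stay self-contained I would therefore spell out the composite maps explicitly rather than appeal to that principle: the forward map is $(p,\alpha)\mapsto\big((a,p),(\refl,\alpha)\big)$, and the backward map sends $\big((x,p),\gamma\big)$ to $\big(\transport^{P}(r,p),\,w\big)$, where $(r,s)$ is the image of $\gamma$ under Theorem~\ref{thm:patheq} and $w:f(a,\transport^{P}(r,p))=q$ is obtained by composing $s$ with the naturality path $f(a,\transport^{P}(r,p))=\transport^{Q}(r,f(x,p))$, which follows by path induction on $r$ (the $\refl$ case holding definitionally). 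Checking that the two composites are homotopic to the identities is then a short path induction, using Theorem~\ref{thm:path-transport} and the round-trip equations of Theorem~\ref{thm:patheq}.
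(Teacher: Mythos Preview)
Your proposal is correct and follows essentially the same route as the paper: unfold $\fib_{f_\Sigma}$, use Theorem~\ref{thm:patheq} to split the path in the $\Sigma$-type, reshuffle so that the $r:x=a$ sits next to the $x:A$, apply Lemma~\ref{lemma:path-families-strong}, and simplify $\transport^Q(\refl,-)$. The paper compresses your first two steps into the single phrase ``expanding the definition of $\fib_{f_\Sigma}$ and reshuffling'', but the content is identical. Your worry about circularity is conscientious but unnecessary here: the fact that a family of equivalences $B(x)\simeq C(x)$ induces an equivalence $\qSig{x:A}B(x)\simeq\qSig{x:A}C(x)$ is elementary (build the inverse pointwise and check the round-trips), and does not require Theorem~\ref{thm:fiberwise-equivalence}, which is the genuinely non-trivial converse direction.
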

\begin{proof}
    Expanding the definition of $\fib_{f_\Sigma}$ and reshuffling gives an equivalence
    \[\fib_{f_\sigma}(a,q) \simeq \qSig{x:X}\qSig{u:x=a}\qSig{p:P(x)}\transport(u,f(x,p))=q.\]
    By Lemma~\ref{lemma:path-families-strong}, this is equivalent to
    \[\qSig{p:P(a)}\transport(\refl,f(a,p)) = q.\]
    In turn this is equivalent to
    \[\qSig{p:P(a)}f(a,p) = q,\]
    which is $\fib_{f(a)}(q)$ by definition.
\end{proof}
\begin{theorem}\label{thm:fiberwise-equivalence}
    For any $f:\qPi{x:A}P(x)\to Q(x)$ we have that each $f(a)$ is an equivalence iff
    the map $f_\Sigma$ is an equivalence.
\end{theorem}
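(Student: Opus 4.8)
The plan is to reduce the claim to the characterization of equivalences by contractible fibers (Theorem~\ref{thm:equiv-char}) and then transport contractibility across the fiberwise equivalence of Theorem~\ref{thm:fiberwise-sigma}. By Theorem~\ref{thm:equiv-char}, $f_\Sigma$ is an equivalence exactly when it has contractible fibers, i.e.\ when $\fib_{f_\Sigma}(a,q)$ is contractible for all $a:A$ and $q:Q(a)$; and each $f(a)$ is an equivalence exactly when $\fib_{f(a)}(q)$ is contractible for all $q:Q(a)$. So it suffices to show that $\fib_{f(a)}(q)$ is contractible for all $a,q$ if and only if $\fib_{f_\Sigma}(a,q)$ is contractible for all $a,q$.

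First I would invoke Theorem~\ref{thm:fiberwise-sigma}, which gives for each $a:A$ and $q:Q(a)$ an equivalence $\fib_{f(a)}(q)\simeq\fib_{f_\Sigma}(a,q)$. An equivalence exhibits each side as a retract of the other, so by Theorem~\ref{thm:prop-retract-closed} the type $\fib_{f(a)}(q)$ is contractible iff $\fib_{f_\Sigma}(a,q)$ is. For the forward direction: assuming each $f(a)$ is an equivalence, fix $(a,q):\qSig{x:A}Q(x)$; then $\fib_{f(a)}(q)$ is contractible by Theorem~\ref{thm:equiv-char}, hence so is $\fib_{f_\Sigma}(a,q)$; since $(a,q)$ was arbitrary, $f_\Sigma$ has contractible fibers and is an equivalence. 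For the converse: assuming $f_\Sigma$ is an equivalence, fix $a:A$ and $q:Q(a)$; then $\fib_{f_\Sigma}(a,q)$ is contractible, hence so is $\fib_{f(a)}(q)$; since $q$ was arbitrary, $f(a)$ has contractible fibers and is an equivalence.

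I do not expect any real obstacle here: all the work sits in the already-established Theorem~\ref{thm:fiberwise-sigma} and in the two-way passage between being an equivalence and having contractible fibers. The only minor points to watch are that Theorem~\ref{thm:equiv-char} supplies only a logical equivalence, which is exactly what the biconditional in this statement needs, and that closure of contractibility under equivalence is used in both directions, which is immediate since an equivalence supplies both a section and a retraction. Alternatively one could bypass Theorem~\ref{thm:fiberwise-sigma} by unfolding $\fib_{f_\Sigma}(a,q)$ directly and reshuffling with Lemma~\ref{lemma:path-families-strong}, as in that theorem's proof, but citing the packaged result is cleaner.
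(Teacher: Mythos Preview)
Your proof is correct and matches the paper's approach essentially line for line: both directions reduce to contractible fibers via Theorem~\ref{thm:equiv-char}, invoke the fiberwise equivalence of Theorem~\ref{thm:fiberwise-sigma}, and transport contractibility across it. The paper's version is just terser, leaving the closure of contractibility under equivalence implicit.
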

\begin{proof}
    Suppose that $f(a)$ is an equivalence for each $a:A$. Let $(x,q):\qSig{x:A}Q(x)$.
    By the previous theorem, $\fib_{f_\Sigma}(x,q)\simeq \fib_{f(x)}(q)$, and the
    latter is contractible since $f(x)$ is an equivalence.

    Conversely, let $f_{\Sigma}$ be an equivalence and fix $a:X$ and let $q:Q(x)$.
    Then $\fib_{f(a)}(q)$ is equivalent to $\fib_{f_\sigma}(a,q)$, which is contractible since
    $f_\Sigma$ is an equivalence.
\end{proof}

\section{Discussion}
A key point in the univalent perspective is the careful distinction between structure
and property, while taking the former to subsume the latter. This distinction is
present in logical systems (such as CZF or the Mitchell-Benabou language), but
properties (formulas) are completely distinct from structures, which are captured by
certain objects of the system. The idea of treating property as being subsumed by structure 
is not obvious from this perspective---indeed it is not clear that there is a way to
give a purely logical system capturing the univalent perspective. Even in
type-theoretic contexts, where there has always been some idea that propositions are
types~\cite{MartinLof1975Predicative}, the idea that being a proposition can be an
internal statement (a predicate or structure), rather than a judgement is relatively
recent. This view arose in type theory in conjunction with the consideration of 
extensionality; \emph{bracket types}~\cite{pfenning2001}, which are used in the first
approach to this way of distinguishing propositions~\cite{AwodeyBauer2004Bracket},
arose as a way to handle proof-irrelevance and intensionality.

Univalent mathematics is not the only, or the first, type theoretic approach to
isolate a type of propositions. The calculus of inductive
constructions~\cite{CoC1986,CoC1986,Huet1987} (on which Coq
is based) is the canonical example. However, in CiC being a proposition is
a judgement, meaning there is no internal statement correspond to $\isProp(P)$.
Moreover, there is no computational content contained in the propositions of CiC. In
short, propositions fit somewhat more naturally in the univalent world than in other
type-theoretic foundational systems.



\chapter{Univalent mathematics}\label{chapter:univalence}
We now develop the univalent perspective. Since we use
identity types rather than explicit equivalence relations (as we did in
Section~\ref{section:bishop-mathematics}), we need extensionality
principles---ways to prove that types are equal. Pure MLTT gives us no methods for
this, so we must extend our theory. We begin with an examination of \emph{function
extensionality}, which allows us to conclude that two functions are equal when they
are pointwise equal (Section~\ref{section:funext}). We then switch gears and examine
propositional logic in our system (Section~\ref{section:unilogic}). In order to
properly develop logic, we need a \emph{truncation} operator, taking a type to its
best representation as a proposition. This is analogous to the truncation operator we
mentioned for toposes, but in this case we actually use it as a part of our language.

Before moving on to types of higher homotopy level (Section~\ref{section:hlevels}), we examine an extensionality
principle for propositions and the \emph{univalence axiom} which generalizes this to
arbitrary types (Section~\ref{section:univalence}); in order to properly discuss univalence, we first need to take a
closer look at functions (Section~\ref{section:embeddings}). We end the chapter with some discussion of how to use
\emph{higher-inductive types}--types generated not only by
constructors giving elements of the type, but also by \emph{path constructors} giving
paths between them (Section~\ref{section:hits}). This final section and the
earlier Section~\ref{section:resizing} discuss principles which can be used, among
other things, to define the truncation operator we assume in
Section~\ref{section:unilogic}.

There is a subtlety with the extensionality principles we consider in this
chapter: all of them depend on a universe $\univ$. Until we explicitly introduce them
as assumptions, we will treat them as properties of a universe.

\section{Function extensionality}\label{section:funext}
\index{function extensionality|(}
One of the difficulties of working in MLTT is that MLTT gives almost no methods for
proving equalities. This first becomes apparent when we wish to show that two
functions $f,g:A\to B$ are equal. The only equalities between functions arise as
reflexivity out of judgmental equalities. For example consider the function
$f:\nat\to\nat$ defined by recursion with
\begin{align*}
    f(0) &{}= 0\\
    f(\succop(n)) &{}= \succop(f(n)).
\end{align*}
It is not possible to prove that $f=\id_{\nat}$, even though we can prove that
$\qPi{n:\nat}f(n) = n$, and moreover we can show that for any constructor of $\nat$
that this equality holds judgmentally. We expect $f$ to equal~$\id_{\nat}$---we
expect two functions to be equal when they are pointwise equal. In this section we
will posit an axiom of \emph{function extensionality} which tells us that this is the
case.

An \emph{axiom} in type theory is an element of a particular type
that is declared to exist by fiat, independent of the relevant introduction and elimination
rules. That axioms do not interact with introduction and elimination rules makes them
potentially dangerous. For example, we could extend MLTT with an axiom $a:\nat$.
Then, after defining the predecessor function ${\pred:\nat\to\nat}$, we would have
that $\pred(a):\nat$, but we cannot reduce this or compute with it in any way.

However, as we only work up to propositional equality, if we can
prove that the type $A$ is a proposition, then positing an element $a:A$ leads to no
problems: For any $b:A$ that we can construct, we have $a=b$, and so we
can use $\transport$ to interchange between $a$ and $b$.
So when we postulate function extensionality, we wish to ensure that the type
representing the axiom is a proposition.  We consider four logically equivalent
formulations of function extensionality below. All but one can be shown to be a
proposition.

\begin{theorem}\label{thm:funext-char}
    For any universe $\univ$, the following are logically equivalent:
\begin{enumerate}[label=\normalfont\textrm{F\arabic*}]

    \item For any $A:\univ$ and $B:A\to\univ$ and any $f,g:\qPi{a:A}B(a)$, the map
        $\happly_{f,g}:f= g\to f\htpy g$ is an equivalence.\label{funext:funext}
    \item For any $A:\univ$ and $B:A\to\univ$ and any $f,g:\qPi{a:A}B(a)$, there is a
        map $f\htpy g\to f=g$.\label{funext:htpy}
    \item The product of any family of propositions is a
        proposition: for any $A:\univ$ and $P:A\to\univ$, \label{funext:prop}
        \[\big(\qPi{a:A}\isProp(P(a))\big)\to
        \isProp(\qPi{a:A}P(a)).\]
    \item The product of any family of contractible types is contractible: for any
        $A:\univ$ and $P:A\to\univ$, \label{funext:contr}
        \[\big(\qPi{a:A}\isContr(P(a))\big)\to \isContr(\qPi{a:A}P(a)).\]
\end{enumerate}
\end{theorem}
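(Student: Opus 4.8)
The plan is to establish the cycle of implications $\ref{funext:funext}\Rightarrow\ref{funext:htpy}\Rightarrow\ref{funext:prop}\Rightarrow\ref{funext:contr}\Rightarrow\ref{funext:funext}$. The first three are routine; the last, ``weak function extensionality implies full function extensionality'', is Voevodsky's argument and is where essentially all the work lies.

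For $\ref{funext:funext}\Rightarrow\ref{funext:htpy}$: an equivalence has an inverse, so from $\happly_{f,g}$ being an equivalence we extract a map $f\htpy g\to f=g$. For $\ref{funext:htpy}\Rightarrow\ref{funext:prop}$: given a family of propositions $P$ and $f,g:\qPi{a:A}P(a)$, the equalities $f(a)=g(a)$ assemble into a homotopy $f\htpy g$, which $\ref{funext:htpy}$ converts into a path $f=g$; hence $\qPi{a:A}P(a)$ is a proposition. For $\ref{funext:prop}\Rightarrow\ref{funext:contr}$: a family of contractible types is in particular a family of inhabited propositions (Lemma~\ref{lemma:contr-char}), so by $\ref{funext:prop}$ the product is a proposition, and it is inhabited by the pointwise centres of contraction, hence contractible again by Lemma~\ref{lemma:contr-char}.

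Now for $\ref{funext:contr}\Rightarrow\ref{funext:funext}$. Fix $A$, $B:A\to\univ$ and $f:\qPi{a:A}B(a)$; I must show $\happly_{f,g}:(f=g)\to(f\htpy g)$ is an equivalence for every $g$. Viewing $\happly$ as a fibrewise map from the family $g\mapsto(f=g)$ to the family $g\mapsto(f\htpy g)$ over $\qPi{a:A}B(a)$, Theorem~\ref{thm:fiberwise-equivalence} reduces this to showing the total map $\qSig{g}(f=g)\to\qSig{g}(f\htpy g)$ is an equivalence. Its domain is the singleton based at $f$, hence contractible by Lemma~\ref{lemma:singleton-contractible}; and any map from a contractible type into a contractible type is an equivalence, since the constant map at the centre is an inverse (both composites are homotopic to the identity because contractible types are propositions), so Theorem~\ref{thm:equiv-char} applies. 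Thus it suffices to prove that $\qSig{g:\qPi{a:A}B(a)}(f\htpy g)$ is contractible.

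This last point is the crux, and the only subtle step. Consider instead the type $\qPi{a:A}\bigl(\qSig{u:B(a)}f(a)=u\bigr)$: each of its fibres is a singleton based at $f(a)$, hence contractible by Lemma~\ref{lemma:singleton-contractible}, so by $\ref{funext:contr}$ this dependent product is contractible. I will then exhibit $\qSig{g}(f\htpy g)$ as a retract of it: the section sends $(g,p)$ to $\lambda a.(g(a),p(a))$, the retraction sends $h$ to $\bigl(\lambda a.\pr_0(h(a)),\,\lambda a.\pr_1(h(a))\bigr)$, and the composite retraction-after-section is the identity by the computation rule for $\Sigma$ together with the $\eta$-rule for functions. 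Since a retract of a contractible type is contractible (Theorem~\ref{thm:prop-retract-closed}), this yields contractibility of $\qSig{g}(f\htpy g)$ and closes the cycle. Everything except this retract construction is bookkeeping with lemmas already in hand.
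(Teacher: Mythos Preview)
Your proposal is correct and follows essentially the same route as the paper: the same cycle $\ref{funext:funext}\Rightarrow\ref{funext:htpy}\Rightarrow\ref{funext:prop}\Rightarrow\ref{funext:contr}\Rightarrow\ref{funext:funext}$, the same reduction via Theorem~\ref{thm:fiberwise-equivalence} to contractibility of $\qSig{g}(f\htpy g)$, and the same retract of $\qPi{a:A}\qSig{u:B(a)}f(a)=u$ to finish. Your argument that any map between contractible types is an equivalence is a detail the paper leaves implicit, but it is sound and needs no function extensionality since you only construct homotopies.
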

\begin{proof}
    (\ref{funext:funext} $\Rightarrow$ \ref{funext:htpy}): If $\isEquiv(\happly_{f,g})$, then we have a map $f\htpy g\to f=g$, by definition.

    (\ref{funext:htpy} $\Rightarrow$ \ref{funext:prop}): Let $P:A\to\univ$ be a family of propositions, and let
    $f,g:\qPi{a:A}P(a)$. We wish to see $f=g$. By assumption we have $f\htpy g\to
    f=g$, so it is enough to show ${\qPi{a:A}f(a) = g(a)}$, but we know that $P(a)$ is
    a proposition for all~$a$.

    (\ref{funext:prop} $\Rightarrow$ \ref{funext:contr}): Let $P:A\to\univ$ be a family of contractible types.
    Then
    we have some function $f:\qPi{a:A}P(a)$ by taking the center of contraction of each
    $P(a)$. Moreover, we have that each $P(a)$ is a proposition, and so by assumption,
    $\qPi{a:A}P(a)$ is a proposition. As a proposition with an element, $\qPi{a:A}P(a)$
    is contractible.

    (\ref{funext:contr} $\Rightarrow$ \ref{funext:funext}): Let $B:A\to\univ$ and fix $f:\qPi{x:A}B(x)$. We need to
    see
    \[\qPi{g:\qPi{x:A}B(x)} \isEquiv(\happly_{f,g}),\]
    By Theorem~\ref{thm:fiberwise-equivalence}, this happens iff the map
    \[\lambda(g,p). \happly_{f,g}(p):\Big(\qSig{g:\qPi{x:A}B(x)}f=g\Big)\to
    \Big(\qSig{g:\qPi{x:A}B(x)}f\htpy g\Big)\]
    is an equivalence.
    As the type $\qSig{g:\qPi{x:A}B(x)}f=g$ is contractible, it suffices to show that
    \[\qSig{g:\qPi{x:A}B(x)}f\htpy g\]
    is contractible. We do this by showing that it is a retract of the type
    \[\qPi{x:A}\qSig{b:B(x)}f(x) = b,\]
    which is contractible by the assumption~\ref{funext:contr}.

    Define
    \begin{align*}
        H&:\Big(\qSig{g:\qPi{x:A}B(x)}\qPi{x:A}f(x)= g(x)\Big)\to
        \qPi{x:A}\qSig{b:B(x)}f(x) = b\\
        H(g,p) &\defeq \lambda x.(g(x),p(x))
    \end{align*}
    and 
    \begin{align*}
        J&:\Big(\qPi{x:A}\qSig{b:B(x)}f(x) = b\Big)\to
        \qSig{g:\qPi{x:A}B(x)}\qPi{x:A}f(x)= g(x) \\
        J(p) &\defeq (\lambda x. \pr_0(p(x)),\lambda x.\pr_1(p(x)))
    \end{align*}
    Then we have
    \[J(H(g,p)) = J(\lambda x.(g(x),p(x))) = (\lambda x.g(x), \lambda x.p(x)) =
    (g,p),\]
    so that $J$ is a left inverse of $H$.
\end{proof}

We wish to posit a form of function extensionality that we can show to be a
proposition without building too much theory.
We show that~\ref{funext:prop}\ above is a proposition as follows: We show
using~\ref{funext:prop}\ and~\ref{funext:funext}\ that the property of being a
proposition is a proposition. We then use~\ref{funext:prop} again to show
that~\ref{funext:prop} is a proposition. By Lemma~\ref{lemma:prop-to-prop} below, it
turns out that as a result we can show \ref{funext:prop}\ to be a proposition
without the additional assumptions. To aid readability, let us call a class
$T:\univ\to\univ$ of types an \emph{exponential ideal in $\univ$} when for any
$B:A\to\univ$ we have
\[\big(\qPi{x:A}T(B(x))\big) \to T\big(\qPi{x:A}B(x)\big),\]
so that~\ref{funext:prop}\ says that propositions are an exponential ideal,
and~\ref{funext:contr}\ says that singletons are an exponential ideal. Note that we
are generalizing the usual definition of exponential ideal to dependent functions.

In order to show that being a proposition is a proposition, we need two results.
First, we show that if $A$ can be shown to be a proposition under the assumption that
$A$ has an element, then $A$ is already a proposition.
\begin{lemma}\label{lemma:prop-to-prop}
    Let $A$ be a type with a function $A\to\isProp(A)$. Then $A$ is a proposition.
\end{lemma}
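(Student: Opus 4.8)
The plan is to directly exhibit an element of $\isProp(A)$, which by definition is the type $\qPi{x,y:A}x=y$. So I fix $x,y:A$ and must produce a path $x=y$. The key observation is that although the bare hypothesis $f:A\to\isProp(A)$ tells us nothing on its own — it needs an element of $A$ to be applied — the very data we are handed, namely the two points $x,y$ we wish to identify, already supply such an element. Concretely, apply $f$ to $x$ to obtain $f(x):\isProp(A)$; unfolding the definition, $f(x):\qPi{x',y':A}x'=y'$, so we may apply it to $x$ and $y$ to obtain $f(x)(x)(y):x=y$.

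Abstracting over $x$ and $y$ then yields $\lambda x.\,\lambda y.\,f(x)(x)(y):\qPi{x,y:A}x=y$, which is exactly a witness that $A$ is a proposition. There is no genuine obstacle here: the whole content is the remark that a type which becomes a proposition as soon as it is inhabited is already a proposition, since when comparing any two of its elements we are free to feed either one back into the hypothesis. (This is what will let us later conclude, using \ref{funext:prop}, that $\isProp(P)$ is itself always a proposition.)
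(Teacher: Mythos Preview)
Your proof is correct and essentially identical to the paper's own proof: given $f:A\to\isProp(A)$ and $x,y:A$, the paper likewise applies $f$ to $x$ and then evaluates the resulting witness at $(x,y)$ to obtain $x=y$. There is nothing to add.
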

\begin{proof}
    Let $f:A\to\isProp(A)$, and let $x,y:A$. We have $f(x):\qPi{y,z:A}y=z$, so then
    $f(x)(x,y):x=y$.
\end{proof}

Second, we show that propositions are $(-1)$-types; from this, it follows that
propositions are sets.
\begin{lemma}\label{lemma:prop-is-one-type}
    A type $P$ is a proposition iff it is a $(-1)$-type. That is, $P$ is a
    proposition iff all path types in $P$ are contractible.
\end{lemma}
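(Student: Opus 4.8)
The plan is to unfold both sides of the biconditional. By definition $\isProp(P)\jeq\qPi{x,y:P}x=y$, while since $\istype{(-2)}(X)\jeq\isContr(X)$ and $\istype{(-1)}(P)\jeq\qPi{x,y:P}\istype{(-2)}(x=y)$, being a $(-1)$-type means $\qPi{x,y:P}\isContr(x=y)$. So the claim is the logical equivalence
\[\Big(\qPi{x,y:P}\isContr(x=y)\Big)\ \lequiv\ \Big(\qPi{x,y:P}x=y\Big).\]
The implication from left to right is immediate: if every path type of $P$ is contractible, then each is in particular inhabited (via $\pr_0$), so $\isProp(P)$ holds; concretely the witness is $\lambda w.\lambda x\,y.\pr_0(w(x,y))$, and no extensionality principle is needed.

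For the converse, assume $f:\isProp(P)$ and fix $x,y:P$; we must produce an element of $\isContr(x=y)$. Put $g\defeq f(x):\qPi{w:P}x=w$. The key computation is that for every $r:x=y$ we have $g(x)\ct r = g(y)$ in $x=y$: applying $\apd_g$ to $r$ (viewing $r$ as a path in the base type $P$) gives $\apd_g(r):\transport^{\lambda w.\,x=w}(r,g(x)) = g(y)$, and by the first clause of Theorem~\ref{thm:path-transport} the left-hand side is $g(x)\ct r$. Cancelling $g(x)$ on the left, using the unit and inverse laws of Lemma~\ref{lemma:path-concat} together with associativity of $\ct$ (which follows by path induction), yields $r = g(x)^{-1}\ct g(y)$, and crucially the right-hand side does not depend on $r$. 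Hence $g(x)^{-1}\ct g(y):x=y$ serves as a center of contraction for $x=y$, with the contraction sending $r$ to the inverse of the path just produced; this gives $\isContr(x=y)$, as required. (One could also observe that $x=y$ is inhabited by $f(x,y)$ and that any two $p,q:x=y$ both equal $g(x)^{-1}\ct g(y)$, hence $\isProp(x=y)$, and then invoke Lemma~\ref{lemma:contr-char}.)

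The argument is essentially routine; the only point requiring care is the identification of $\transport^{\lambda w.\,x=w}(r,g(x))$ with $g(x)\ct r$, which is exactly the content of Theorem~\ref{thm:path-transport}, plus the bookkeeping of the groupoid laws for $\ct$ when cancelling $g(x)$. Note that this also establishes the remark preceding the statement that propositions are sets (the second half of the displayed reasoning shows $\isProp(x=y)$ for all $x,y:P$), and, combined with the equivalence just proved applied to identity types, that sets are precisely the $0$-types and groupoids the $1$-types.
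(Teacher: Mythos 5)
Your proof is correct and takes essentially the same route as the paper's: both directions hinge on applying $\apd$ to the section $\lambda y.\,f(x,y)$ and computing transport in the family $\lambda w.\,x=w$ via Theorem~\ref{thm:path-transport}. The only difference is cosmetic — the paper first shows any two paths are equal (so the path type is a proposition) and then uses inhabitedness, whereas you exhibit $g(x)^{-1}\ct g(y)$ directly as the center of contraction.
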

\begin{proof}
    Let $P$ be a $(-1)$-type, so that $\qPi{x,y:P}\isContr(x=y)$. Then taking the
    center of contraction of $x=y$ gives us $\qPi{x,y:P}x=y$.

    Conversely, let $w:\isProp(P)$, and fix $x:P$. Then $u\defeq \lambda y. w(x,y)$
    has type $\qPi{y:P}x=y$. Given any $p:y=z$, we have
    \[\apd_u(p) :\transport^{\lambda y.x=y}(p,u(y)) = u(z).\]
    By Theorem~\ref{thm:path-transport}, this gives us $u(y) \ct p= u(z)$, and so
    \[p=u(y)^{-1}\ct u(z),\]
    but $p$ is arbitrary. Then for $z\jeq x$, and any $p,q:y=x$ we have
    \[p = u(y)^{-1}\ct u(x) = q,\]
    and so $x=y$ is a proposition. As we also have $w(x,y):x=y$, we have that $x=y$
    is contractible.
\end{proof}
\begin{corollary}
    If $P$ is a proposition, then $P$ is a set.
\end{corollary}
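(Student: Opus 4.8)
The plan is to derive this immediately from Lemma~\ref{lemma:prop-is-one-type}, which we have just established. Recall that $\isSet(P)$ unfolds to $\qPi{x,y:P}\isProp(x=y)$, so it suffices to show that every path type of $P$ is a proposition.

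Concretely, first I would apply Lemma~\ref{lemma:prop-is-one-type} to the hypothesis $\isProp(P)$, obtaining that $P$ is a $(-1)$-type; by the definition of the homotopy $n$-types this means $\qPi{x,y:P}\isContr(x=y)$. Then, for each $x,y:P$, I would invoke Lemma~\ref{lemma:contr-char} — specifically the implication from $\isContr(x=y)$ to $x=y$ being a proposition — to conclude $\isProp(x=y)$. Abstracting over $x$ and $y$ gives $\qPi{x,y:P}\isProp(x=y)$, which is exactly $\isSet(P)$.

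There is essentially no obstacle here: the work has all been done in the proof of Lemma~\ref{lemma:prop-is-one-type}, where the naturality argument (using $\apd$ and Theorem~\ref{thm:path-transport} to show that any path $p$ equals $u(y)^{-1}\ct u(z)$ for the canonical family $u$) is what establishes that path types in a proposition are contractible. One could equally well inline that argument here rather than routing through the $(-1)$-type characterization, but citing the lemma is cleaner. It is worth noting that this corollary needs no extensionality axioms, since we only use implications between $\isContr$ and $\isProp$ of fixed types, not any claim that $\isProp$ or $\isContr$ is itself a proposition.
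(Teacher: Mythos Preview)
Your proposal is correct and follows exactly the paper's approach: the paper's proof is the one-liner ``As $P$ is a proposition, its path types are contractible, and so path types in $P$ are propositions,'' which is precisely your argument via Lemma~\ref{lemma:prop-is-one-type} and the implication from contractibility to propositionhood.
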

\begin{proof}
    As $P$ is a proposition, its path types are contractible, and so path types in
    $P$ are propositions.
\end{proof}

\begin{theorem}\label{thm:prop-is-prop}
    If propositions form an exponential ideal in $\univ$, then for any type $X:\univ$, the type
    $\isProp(X)$ is a proposition.
\end{theorem}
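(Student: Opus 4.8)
The plan is to show that $\isProp(X)$ is contractible whenever it is inhabited, which by Lemma~\ref{lemma:prop-to-prop} suffices to conclude it is a proposition. So first I would assume we are given some $w:\isProp(X)$; the goal is then to produce, for any other $w':\isProp(X)$, an equality $w=w'$. Since $\isProp(X)\jeq \qPi{x,y:X}x=y$ and propositions are assumed to form an exponential ideal in $\univ$, it is enough to show that for each fixed pair $x,y:X$ the path type $x=y$ is a proposition — then $\qPi{x,y:X}(x=y)$ is a proposition (by two applications of the exponential ideal assumption, first over $y$ then over $x$), and being a proposition with the element $w$, it is contractible, hence all its elements are equal.

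The key step is therefore: under the hypothesis that $X$ has an element $w$ of $\isProp(X)$, show $x=y$ is a proposition for all $x,y:X$. But this is precisely what Lemma~\ref{lemma:prop-is-one-type} gives us: having $w:\isProp(X)$ means $X$ is a proposition, and that lemma shows a proposition is a $(-1)$-type, i.e. all its path types are contractible, in particular propositions. So the chain is: $w:\isProp(X)$ gives, via Lemma~\ref{lemma:prop-is-one-type}, that $x=y$ is (contractible, hence) a proposition for all $x,y:X$; then the exponential ideal assumption applied twice gives that $\qPi{x,y:X}x=y \jeq \isProp(X)$ is a proposition; and a proposition containing the element $w$ is contractible by Lemma~\ref{lemma:contr-char}.

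To assemble the final statement without the standing hypothesis, I would package the above as a function $\isProp(X)\to\isProp(\isProp(X))$: given $w:\isProp(X)$, run the argument above to get $\isContr(\isProp(X))$, and then a contractible type is a proposition (again by Lemma~\ref{lemma:contr-char}, or directly since its unique path types make it a proposition). Finally, apply Lemma~\ref{lemma:prop-to-prop} to this function to conclude $\isProp(X)$ is a proposition outright. The main obstacle — really the only subtlety — is bookkeeping the two nested applications of the exponential ideal to handle the double $\Pi$ in the definition of $\isProp$; everything else is a direct appeal to lemmas already established. One should be slightly careful that the exponential ideal hypothesis is stated for families $B:A\to\univ$, so for the inner quantifier one takes $A\jeq X$ and $B\jeq \lambda y.(x=y)$ with $x$ fixed, and for the outer one $A\jeq X$ and $B\jeq\lambda x.\qPi{y:X}(x=y)$.
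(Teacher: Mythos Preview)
Your proof is correct and follows essentially the same approach as the paper: reduce via Lemma~\ref{lemma:prop-to-prop} to showing $\isProp(X)\to\isProp(\isProp(X))$, then use Lemma~\ref{lemma:prop-is-one-type} to see that path types in a proposition are propositions, and apply the exponential ideal assumption to the double $\Pi$. The detour through contractibility is unnecessary, though: once you have established $\isProp(\isProp(X))$ from the assumption $w:\isProp(X)$, you already have the function needed for Lemma~\ref{lemma:prop-to-prop}; there is no need to pass through $\isContr(\isProp(X))$ and back.
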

\begin{proof}
    By Lemma~\ref{lemma:prop-to-prop}, it is enough to show
    \[\isProp(X)\to\isProp(\isProp(X)).\]
    Let $i:\isProp(X)$, so that for all $x,y:X$, we have that $x=y$
    is also a proposition by Lemma~\ref{lemma:prop-is-one-type}.
    Then $\qPi{x,y:X}x=y$ is a product of propositions, and then $\qPi{x,y:X}x=y$ is
    itself a proposition by assumption.
\end{proof}
By a similar argument we have
\begin{theorem}\label{thm:contr-is-prop}
    If propositions form an exponential ideal in $\univ$, then for any $X$, we have
    \[\isProp(\isContr(X)).\]
\end{theorem}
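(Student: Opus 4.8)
The plan is to follow the same strategy as in Theorem~\ref{thm:prop-is-prop}, to which the statement is explicitly analogous: appeal to Lemma~\ref{lemma:prop-to-prop}, which reduces the goal to constructing a map $\isContr(X)\to\isProp(\isContr(X))$. So I would begin by assuming we are given $(a,p):\isContr(X)$, with $a:X$ a center of contraction and $p:\qPi{x:X}a=x$, and under this hypothesis show that $\isContr(X)$ is a proposition.

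The first step is to extract structure from $(a,p)$. By Lemma~\ref{lemma:contr-char}, $X$ is then a proposition; and then by Lemma~\ref{lemma:prop-is-one-type}, every path type $a'=y$ in $X$ is contractible, hence in particular a proposition. This is the observation that makes the exponential-ideal hypothesis usable.

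Now take any two elements $(a,p),(a',p'):\isContr(X)$ and aim to prove $(a,p)=(a',p')$. By Theorem~\ref{thm:patheq}, this equality type is equivalent to $\qSig{q:a=a'}\transport^{\lambda x.\qPi{y:X}x=y}(q,p)=p'$. For the first component I would take $q\defeq p(a'):a=a'$. For the second component, note that $\transport^{\lambda x.\qPi{y:X}x=y}(q,p)$ and $p'$ both lie in $\qPi{y:X}a'=y$; since each $a'=y$ is a proposition, the hypothesis that propositions form an exponential ideal in $\univ$ gives that $\qPi{y:X}a'=y$ is a proposition, so the required identification holds automatically. Hence $(a,p)=(a',p')$, so $\isProp(\isContr(X))$ holds under the assumption $\isContr(X)$, and Lemma~\ref{lemma:prop-to-prop} finishes the proof.

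The argument involves no explicit computation of $\transport$ — which is the key simplification — so the only point requiring care is the routine bookkeeping with Theorem~\ref{thm:patheq}: identifying the correct type family $\lambda x.\qPi{y:X}x=y$ to transport along, and recognising that its fibre over $a'$ is a proposition (because $X$ is a set). That, together with closure of propositions under dependent products, is the whole content, exactly paralleling the proof of Theorem~\ref{thm:prop-is-prop}.
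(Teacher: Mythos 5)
Your proof is correct and follows essentially the same route as the paper: both arguments identify the two centers via the contraction, observe that contractibility makes $X$ a proposition so its path types are propositions, and then use the exponential-ideal hypothesis to make $\qPi{y:X}a'=y$ a proposition so the transported witness is automatically identified. The only cosmetic differences are your explicit detour through Lemma~\ref{lemma:prop-to-prop} (the paper simply proves $\isProp(\isContr(X))$ directly, which already supplies an inhabitant) and the parenthetical ``because $X$ is a set,'' where the operative fact is that $X$ is a proposition, hence its path types are propositions.
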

\begin{proof}
    Let $(c,w)(c',w'):\isContr(x)$. Then we have $c = c'$ by some path $p=w(c')$. Moreover, since
    $\isContr(X)$ implies $\isProp(X)$, we have that $\qPi{x:X}c'=x$ is a
    proposition, so $w$ transports over $p$ to $w'$.
\end{proof}

\begin{theorem}
    If propositions form an exponential ideal in $\univ$, then for any $A:\univ$ and
    any $P:A\to\univ$, the type
    \[\big(\qPi{a:A}\isProp(P(a))\big)\to \isProp(\qPi{a:A}P(a))\]
    is a proposition.
\end{theorem}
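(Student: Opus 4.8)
The plan is to observe that the type in question is just a (non-dependent) function type whose codomain is $\isProp(\qPi{a:A}P(a))$, and that a function type into a proposition is again a proposition --- the latter being precisely the exponential-ideal hypothesis applied to a constant family. So almost all of the work has already been done by Theorem~\ref{thm:prop-is-prop}, and what remains is to apply the hypothesis one further time, to the outer function space.

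In detail, first I would apply Theorem~\ref{thm:prop-is-prop} with $X \defeq \qPi{a:A}P(a)$: since propositions form an exponential ideal in $\univ$, the type $\isProp(\qPi{a:A}P(a))$ is a proposition. Next, writing $I \defeq \qPi{a:A}\isProp(P(a))$ for the antecedent, I would apply the exponential-ideal hypothesis once more, this time to the constant family $\lambda(i : I).\isProp(\qPi{a:A}P(a))$ over $I$. Every value of this family is $\isProp(\qPi{a:A}P(a))$, which we have just shown to be a proposition, so the hypothesis yields $\isProp\big(I \to \isProp(\qPi{a:A}P(a))\big)$, which is exactly the claim.

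I do not anticipate any real obstacle here: the only thing to notice is that ``a dependent function type into a family of propositions is a proposition'' is literally the exponential-ideal property, so we may reuse it for the outer function space just as we used it (via Theorem~\ref{thm:prop-is-prop}) for the inner one. This theorem is the last ingredient needed to conclude, via Lemma~\ref{lemma:prop-to-prop}, that \ref{funext:prop} is a proposition without any extra hypotheses.
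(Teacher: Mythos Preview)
Your proposal is correct and essentially identical to the paper's proof: both invoke Theorem~\ref{thm:prop-is-prop} to see that $\isProp(\qPi{a:A}P(a))$ is a proposition, and then apply the exponential-ideal hypothesis once more (to the constant family over the antecedent) to conclude that the function type into it is a proposition.
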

\begin{proof}
    Since propositions form an exponential ideal, the type $\isProp(X)$ is a
    proposition for any type $X$. Also because propositions form an exponential
    ideal, if $Y$ is a proposition, then $B\to Y$ is a proposition for any type $B$.
    Then in particular, we have that
    \[\big(\qPi{a:A}\isProp(P(a))\big)\to \isProp(\qPi{a:A}P(a))\]
    is a proposition.
\end{proof}
By Lemma~\ref{lemma:prop-to-prop} we have the following corollary.
\begin{corollary}\label{corllary:funext-isprop}
    The type
    \[\big(\qPi{a:A}\isProp(P(a))\big)\to \isProp(\qPi{a:A}P(a))\]
    is a proposition. I.e., the statement that propositions form an exponential ideal
    is a proposition.
\end{corollary}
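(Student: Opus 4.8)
The accompanying ``I.e.'' makes clear that the statement to be proved is that the type
\[\big(\qPi{a:A}\isProp(P(a))\big)\to\isProp(\qPi{a:A}P(a)),\]
read with $A:\univ$ and $P:A\to\univ$ universally quantified --- that is, the statement \ref{funext:prop} that propositions form an exponential ideal in $\univ$ --- is itself a proposition. The plan is to apply Lemma~\ref{lemma:prop-to-prop}: to show a type is a proposition it suffices to prove it is a proposition \emph{under the assumption that it has an element}. So I would assume an inhabitant of \ref{funext:prop} and, from it, deduce that \ref{funext:prop} is a proposition.

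The point is that an inhabitant of \ref{funext:prop} is precisely a witness that propositions form an exponential ideal, which is exactly the hypothesis of the theorem immediately preceding this corollary. Feeding it into that theorem, I obtain that for every $A:\univ$ and $P:A\to\univ$ the type $\big(\qPi{a:A}\isProp(P(a))\big)\to\isProp(\qPi{a:A}P(a))$ is a proposition. But \ref{funext:prop} is nothing more than the dependent product of these types, first over $P:A\to\univ$ and then over $A:\univ$; applying the exponential-ideal property itself twice more collapses this product to a proposition. This supplies the function needed by Lemma~\ref{lemma:prop-to-prop}, and therefore \ref{funext:prop} is a proposition with no standing hypotheses.

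The proof is very short, so there is no real computational obstacle; the one thing worth stressing is the (harmless) circular flavour --- we run the preceding theorem, whose conclusion is that \ref{funext:prop}-instances are propositions, on a hypothesis that is \ref{funext:prop} itself. Lemma~\ref{lemma:prop-to-prop} is exactly what makes this legitimate, and the assumed inhabitant ends up being used three times: once via the preceding theorem (which in turn invokes Theorem~\ref{thm:prop-is-prop}) to see each fibre is a proposition, and twice more to push the exponential ideal through the two outer $\Pi$'s. The only minor wrinkle is that the outermost product ranges over $\univ$ itself, so that last step quietly relies on the size conventions adopted elsewhere; nothing essential changes.
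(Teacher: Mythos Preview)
Your proposal is correct and matches the paper's approach exactly. The paper's proof is the single line ``By Lemma~\ref{lemma:prop-to-prop} we have the following corollary,'' and you have correctly unfolded what this entails: assume an inhabitant of \ref{funext:prop}, invoke the preceding theorem to see each instance is a proposition, then use \ref{funext:prop} itself on the outer $\Pi$'s, and conclude via Lemma~\ref{lemma:prop-to-prop}. Your remark about the size wrinkle on the outermost $\Pi$ over $\univ$ is apt; the paper works with ambiguous universes and does not address it explicitly.
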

Now that we know that whether propositions form an exponential ideal is a proposition,
we may assert this as an axiom. Here instead, we take it as a property of a universe
\begin{definition}[Function extensionality]
    A universe $\univ$ \nameas{satisfies function
    extensionality}{function extensionality} when propositions
    form an exponential ideal in $\univ$. That is, the type
    \[\ExpProp_{\univ}\defeq \qPi{A:\univ}\qPi{P:A\to\univ}\big(\qPi{a:A}\isProp(P(a))\big)\to \isProp(\qPi{a:A}P(a))\]
    has an element.
\end{definition}
The axiom of function extensionality is usually given via
statements~\ref{funext:funext} or~\ref{funext:htpy} of Theorem~\ref{thm:funext-char},
although Voevodsky preferred~\ref{funext:contr}.
We are most interested in statement~\ref{funext:funext}. Let
$\FunExt:\univ_1$ be the type
\[\FunExt\defeq \qPi{A:\univ}\qPi{B:A\to\univ}\qPi{f,g:\qPi{x:A}P(x)}\isEquiv(\happly_{f,g}),\]
so that function extensionality gives an element of $\FunExt$. From
this for any $f,g:\qPi{x:A}B(x)$ we get a map
\[\funext_{f,g}:(f\htpy g)\to (f=g),\]
which is an inverse to $\happly_{f,g}$.
We will see shortly that $\isEquiv(f)$ is a proposition for all $f$, and so $\FunExt$ is
itself a proposition, but already we know that the type $\ExpProp$ is a retract of $\FunExt$.

Function extensionality is the first extension to MLTT we need to develop the
univalent approach. We will assume function extensionality in Part II, and
often use it tacitly---showing that two functions are equal by showing a homotopy
between them. However, we will not assume function extensionality just yet, as we
show in Section~\ref{section:univalence} that function extensionality follows from
the univalence axiom. The other extensions we need are \emph{truncations} and \emph{proposition
extensionality}, which we cover shortly.

\index{function extensionality|)}
\section{Logic of propositions}\label{section:unilogic}
\index{proposition|(}
Recall that logic is traditionally interpreted in type theory according to the
Curry-Howard interpretation,
\begin{align*}
    P\wedge Q &\jeq P\times Q\\
    P\vee Q &\jeq P + Q\\
    P\Rightarrow Q &\jeq P\to Q\\
    \neg P &\jeq P\to \zerotype \\
    \qAll{a:A}{P(a)} &\jeq \qPi{a:A}{P(a)}\\
    \qExists{a:A}{P(a)} &\jeq \qSig{a:A}{P(a)}.
\end{align*}
This interpretation is not always well-behaved. Consider, for example, the image of a
function $f:A\to B$, which is the set of all $b:B$ such that there exists $a:A$ with
$f(a)=b$.  Using $\Sigma$ for exists, we have the \emph{Curry-Howard image},
$\im_{\CH}(f):\univ$ as the type
\[\im_{\CH}(f)\defeq \qSig{b:B}\qSig{a:A}f(a) = b.\]
However, this type is equivalent to $A$:
\begin{lemma}\label{lemma:domain-is-fiber}
    For any $f:A\to B$, the second projection $\pr_1:\im_{\CH}(f)\to A$ is an equivalence,
    and $f$ factors as $f(x) = \pr_0 (\pr_1^{-1}(x))$.
\end{lemma}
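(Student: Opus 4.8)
The plan is to exhibit an explicit inverse to the projection-to-$A$ map rather than chase equivalences abstractly. Writing a general element of $\im_{\CH}(f)=\qSig{b:B}\qSig{a:A}f(a)=b$ as a triple $(b,(a,p))$ with $p:f(a)=b$, the map the statement calls $\pr_1$ is
\[
q:\im_{\CH}(f)\to A,\qquad q(b,(a,p))\defeq a
\]
(strictly $\pr_0\comp\pr_1$, but the paper treats nested pairs loosely), and the candidate inverse is
\[
g:A\to\im_{\CH}(f),\qquad g(x)\defeq\big(f(x),(x,\refl[f(x)])\big).
\]
I would show that $g$ is a two-sided inverse of $q$ and then quote Theorem~\ref{thm:equiv-char} to conclude that $q$ is an equivalence.

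First, $q\comp g\jeq\id_A$ holds definitionally, since $q(g(x))\jeq q(f(x),(x,\refl[f(x)]))\jeq x$; hence $q\comp g\htpy\id_A$ via $\lambda x.\refl$. For the other composite, fix $(b,(a,p)):\im_{\CH}(f)$ with $p:f(a)=b$. Since $g(q(b,(a,p)))\jeq\big(f(a),(a,\refl[f(a)])\big)$, I need a path
\[
\big(f(a),(a,\refl[f(a)])\big)\;=\;(b,(a,p)),
\]
and by based path induction I may assume $p$ is $\refl[f(a)]$, in which case $b$ becomes $f(a)$ and the required path is $\refl$. (Alternatively one can unfold this with two applications of Theorem~\ref{thm:patheq}, but induction is faster.) Together these give an element of $\inverse(q)$, so $q$ is an equivalence by Theorem~\ref{thm:equiv-char}.

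For the factorization, writing $\pr_1$ for $q$ and taking $g$ itself as $\pr_1^{-1}$, we get for $x:A$ that $\pr_0(\pr_1^{-1}(x))\jeq\pr_0(g(x))\jeq\pr_0\big(f(x),(x,\refl[f(x)])\big)\jeq f(x)$, so $f(x)=\pr_0(\pr_1^{-1}(x))$ as claimed — in fact judgmentally, with this choice of inverse.

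I expect no real obstacle; the only points needing care are matching the paper's loose pairing convention (so that "the second projection into $A$" is unambiguous) and keeping the orientation of $p:f(a)=b$ straight so that the based path induction in the second step is applied to the right endpoint. A more conceptual route — reshuffle to $\qSig{a:A}\qSig{b:B}f(a)=b$, observe each $\qSig{b:B}f(a)=b$ is a singleton (contractible by Lemma~\ref{lemma:singleton-contractible} up to path reversal), and identify the fibers of the first projection via Lemma~\ref{lemma:path-families} — also works, but the direct computation above additionally delivers the factorization for free and uses no function extensionality.
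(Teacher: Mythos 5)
Your proof is correct. You verify directly that $g(x)\defeq(f(x),(x,\refl))$ is a two-sided inverse of the projection-to-$A$ map — one composite judgmentally, the other by based path induction on $p:f(a)=b$, which is applied with the correct orientation — and then invoke Theorem~\ref{thm:equiv-char} to pass from $\inverse(q)$ to $\isEquiv(q)$; the factorization then holds judgmentally for this choice of inverse.

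The paper takes the route you mention only in passing at the end: it reshuffles $\qSig{b:B}\qSig{a:A}f(a)=b$ into $\qSig{a:A}\qSig{b:B}f(a)=b$, observes that each $\qSig{b:B}f(a)=b$ is the singleton at $f(a)$ (hence contractible), and concludes that the projection to $A$ is a composite of equivalences, with the same inverse $a\mapsto(f(a),\refl)$ read off afterwards. The trade-off is mild but real: the paper's argument is shorter given its standing lemmas, stays at the level of structural equivalences, and rehearses the reshuffle-plus-singleton pattern it reuses later (e.g.\ in Lemma~\ref{lemma:path-families} and Theorem~\ref{thm:ptl-fxn-char}); your computation is more elementary, needs only path induction and Theorem~\ref{thm:equiv-char}, and delivers the factorization $f(x)\jeq\pr_0(\pr_1^{-1}(x))$ as a judgmental equality rather than as an afterthought. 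Both proofs produce the same inverse, and your handling of the paper's loose pairing convention (reading "the second projection" as $(b,(a,p))\mapsto a$) matches the intended statement.
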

\begin{proof}
    Reshuffling the quantifiers in $\im_{\CH}(f)$ gives an equivalence
    \[\Big(\qSig{b:B}{\qSig{a:A}{f(a)=b}}\Big)
    \simeq\Big(\qSig{a:A}{\qSig{b:B}{f(a)=b}}\Big).\]
    As $\qSig{b:B}{f(a)=b}$ is the singleton at $f(a)$ we have 
    \[\Big(\qSig{a:A}{\qSig{b:B}{f(a)=b}}\Big)\simeq \big(\qSig{a:A}\unittype\big),\]
    and the first projection gives an equivalence. The second projection
    $\pr_0\im_{CH}(f)\to A$ is the composition of these two maps, and so is itself an
    equivalence. An inverse is then given by the map $e(a) = (f(a),\refl)$.
    Then for any $a:A$ we have that $f(a)=\pr_0(e(a))$.
\end{proof}
The point is that $\qSig{x:A}P(x)$ is not guaranteed to be a proposition, even if
$P(x)$ is a proposition for each~${x:A}$, so using $\Sigma$ carries extra structure.
In this case, the extra structure allows us to recover all of $A$. Instead, we want a
\emph{logic of propositions} in which $\qExists{x:A}P(x)$ is guaranteed to be a
proposition, so that to say that the statement $b$ is in the image of $f$ corresponds
to a propositional version of saying that the fiber of $b$ is inhabited. Then we can
define the image to be
\defineopfor{$\im$}{im}{image}
\[\im(f)\jeq \qSig{b:B}\qExists{a:A}f(a)=b.\]
We immediately see that we cannot in general expect an equivalence between $\im(f)$
and the type
\[\qExists{a:A}{\qSig{b:B}{f(a)=b}},\]
as this type is a proposition, but the image need not be.

Topos logic resolves this issue by restricting our attention to subsingletons.
We can instead use $\isProp$ in Section~\ref{section:utt}. However, in order to
encode a proposition-valued version of $\exists$, or even a proposition-valued
version of $\vee$, we must extend our type theory with \nameas{propositional
truncation}{truncation}\index{propositional truncation|see{truncation}}.

\begin{definition}
    \definesymbolfor{$\trunc{-}$}{truncation}{truncation}
    \defineopfor{${\mid}-{\mid}$}{truncate}{truncation}
    For any type $X:\univ$, a \emph{propositional truncation} of $X$ is a type
    $\trunc{X}$ which is a proposition, together with a map 
    \[|-|:X\to\trunc{X}\]
    such that if $P$ is any other proposition, any map $f:X\to P$ factors uniquely
    through the map $|-|$.

    \index{type!inhabited|textit}
    We say that a type $X$ \emph{is inhabited} when $\trunc{X}$ has an element.
\end{definition}
The truncation can be defined in several ways: by directly extending MLTT; as
a higher inductive type (Section~\ref{section:hits}); via resizing
(Section~\ref{section:resizing}), or by assuming the law of excluded middle.

In particular, if we assume the law of excluded middle:
we have $A\to(\neg \neg A)$ for any type $A$. Now take any proposition
$P$ such that $A\to P$. By contraposition we have $\neg P\to \neg A$, and then since
both $\neg A$ and $\neg P$ are propositions, using excluded middle we can get
$\neg\neg A\to \neg \neg P$. As $P$ is a proposition, we may (again using excluded
middle) eliminate the double negation in front of $P$ to get $\neg\neg A\to P$.

The higher-inductive type we give in Section~\ref{section:hits} directly
expresses the universal property of truncation as a rule of construction, while the
resizing and LEM approach encode the propositional truncation concretely. Formally, in the
absence of a general framework for higher-inductive types (see the discussion in
Section~\ref{section:uttdiscussion}), the higher-inductive type must be added directly, so the
higher-inductive approach corresponds to adding truncations directly. This was
already studied as \emph{squash types} before univalent type theory arose.

As we are working informally, it is enough for us to say that we are positing
directly that each type $X$ has a propositional truncation. Once we have the
truncation, we can begin to study how it behaves. In particular, truncation is
idempotent up to equivalence, which follows from a more general observation.
\begin{lemma}\label{lemma:prop-equals-truncation}
    If $P$ is a proposition, then $P\simeq \trunc{P}$.
\end{lemma}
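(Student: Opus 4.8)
The plan is to produce a logical equivalence $P\lequiv\trunc{P}$ and then upgrade it to an equivalence using that both sides are propositions. One direction is free: the defining map $|-|:P\to\trunc{P}$ gives $P\to\trunc{P}$. For the other direction, I would invoke the universal property of the propositional truncation applied to the identity function $\id_P:P\to P$: since $P$ is assumed to be a proposition, it is a valid target for the factorization, so there is a (unique) map $g:\trunc{P}\to P$ with $g\comp|-|\htpy\id_P$. In particular we have a map $\trunc{P}\to P$, and hence $P\lequiv\trunc{P}$.

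Now both $P$ and $\trunc{P}$ are propositions — $P$ by hypothesis and $\trunc{P}$ by definition of the truncation — so Lemma~\ref{lemma:lequiv-equiv} immediately yields $P\simeq\trunc{P}$, as required.

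There is essentially no obstacle here: the only subtlety worth flagging is that we do not need the \emph{uniqueness} clause of the universal property, nor do we need to verify the round-trip homotopies explicitly — the mere existence of maps in both directions between two propositions suffices, by Lemma~\ref{lemma:lequiv-equiv}. This lemma is, of course, exactly the tool designed to make such arguments trivial, and it is the reason the statement is phrased as an equivalence rather than just a logical equivalence. (One could alternatively cite Lemma~\ref{lemma:contr-char} or the retract corollary following Theorem~\ref{thm:prop-retract-closed}, but going through $\lequiv$ is the cleanest route.)
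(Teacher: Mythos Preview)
Your proposal is correct and matches the paper's proof essentially line for line: the paper also reduces to showing $P\lequiv\trunc{P}$, obtains the map $\trunc{P}\to P$ by factoring $\id_P$ through $|{-}|$ via the universal property, and (implicitly) invokes Lemma~\ref{lemma:lequiv-equiv}. The only difference is length---the paper compresses this into two sentences.
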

\begin{proof}
    We need only see that $P\lequiv \trunc{P}$. Since
    $P$ is a proposition, we have that $\id:P\to P$ factors through the map
    $|-|:P\to\trunc{P}$.
\end{proof}
The utility of the above lemma is as follows: We will often define some operation
$T'$ on types such that $T'(X) = \trunc{T(X)}$. The type $T(X)$ may contain
useful information, but truncation hides this. Often, however, $T(X)$ is already a
proposition, so we have that $T'(X) \simeq T(X)$ so we can extract the information from
$T(X)$ already by knowing $\trunc{T(X)}$. More generally, we have
\begin{lemma}\label{leq-is-trunc}
    If $P$ is a proposition and $P\lequiv X$, then $P\simeq \trunc{X}$.
\end{lemma}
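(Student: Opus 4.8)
The plan is to reduce the statement to Lemma~\ref{lemma:lequiv-equiv}, exactly in the spirit of the proof of Lemma~\ref{lemma:prop-equals-truncation}. By definition, the propositional truncation $\trunc{X}$ is a proposition, and $P$ is a proposition by hypothesis. Hence, by Lemma~\ref{lemma:lequiv-equiv}, it is enough to produce a logical equivalence $P\lequiv\trunc{X}$; the desired $P\simeq\trunc{X}$ then follows at once.

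To build the logical equivalence, first unfold the hypothesis $P\lequiv X$ into a pair of maps $f:P\to X$ and $g:X\to P$. Composing $f$ with the unit $|-|:X\to\trunc{X}$ gives a map $|-|\comp f:P\to\trunc{X}$, which serves as one half of the required logical equivalence. For the other half, observe that $g:X\to P$ is a map from $X$ into a proposition, so the universal property of the propositional truncation yields a factorization $\bar g:\trunc{X}\to P$ with $\bar g\comp|-|=g$ (up to homotopy); this $\bar g$ is the second half. Thus $P\lequiv\trunc{X}$, and we conclude $P\simeq\trunc{X}$.

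Note that this genuinely generalizes Lemma~\ref{lemma:prop-equals-truncation}: taking $X\defeq P$ and the logical equivalence $(\id_P,\id_P)$ recovers that statement. There is no real obstacle here; the only point worth keeping in mind is that the argument uses \emph{only} the two maps $f$ and $g$ and never an actual element of $X$ or $P$, so it applies even when $X$ (and hence $P$) is empty — which is as it must be, since in that case both sides are the empty type.
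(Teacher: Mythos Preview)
Your proof is correct and follows essentially the same approach as the paper: both construct $P\lequiv\trunc{X}$ by composing $P\to X$ with $|{-}|$ in one direction and factoring $X\to P$ through the truncation in the other, then invoke Lemma~\ref{lemma:lequiv-equiv}. The paper's version is terser but the content is identical.
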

\begin{proof}
    As $X\to P$ and $P$ is a proposition, we have $\trunc{X}\to P$. Moreover, we have
    the composite map $P\to X\to\trunc{X}$. Hence, $P\lequiv \trunc{X}$.
\end{proof}

\begin{lemma}\label{lemma:trunc-func}
    Truncation is functorial: for any types $X$ and~$Y$,
    \[(X\to Y)\to (\trunc{X}\to\trunc{Y}).\]
\end{lemma}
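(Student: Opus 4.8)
The plan is to apply the universal property of the propositional truncation directly. Given a function $f:X\to Y$, I first compose with the truncation map to obtain a function
\[
|-|_{Y}\comp f : X \to \trunc{Y}.
\]
Since $\trunc{Y}$ is by definition a proposition, the universal property of $\trunc{X}$ applies: any map from $X$ into a proposition factors (uniquely) through $|-|_{X}:X\to\trunc{X}$. This yields a map $\trunc{X}\to\trunc{Y}$, and doing this uniformly in $f$ gives the desired function $(X\to Y)\to(\trunc{X}\to\trunc{Y})$.

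Concretely, I would define the operator on $f$ to be the unique $g:\trunc{X}\to\trunc{Y}$ with $g\comp|-|_{X} \htpy |-|_{Y}\comp f$ obtained from the universal property; one can equally just write it as $\trunc{X}\to\trunc{Y}$ without naming the homotopy, since all we are asked for is the existence of such a map. No extensionality principle is needed here, as the universal property already supplies the factoring map.

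There is essentially no obstacle: the only thing to be careful about is simply invoking the universal property with the correct proposition ($\trunc{Y}$) as the target, and noting that $|-|_{Y}\comp f$ has the right type to be factored. If one wanted, the same argument shows functoriality respects composition and identities (again by uniqueness of the factoring), but that is not required by the statement.
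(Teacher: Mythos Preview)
Your proof is correct and follows essentially the same approach as the paper: compose $f$ with $|-|_{Y}$ to get a map $X\to\trunc{Y}$, then invoke the universal property of $\trunc{X}$ using that $\trunc{Y}$ is a proposition.
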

\begin{proof}
    Suppose $f:X\to Y$. Then we have $\lambda x.|f(x)|:X\to \trunc{Y}$. As
    $\trunc{Y}$ is a proposition, we then have a map $\trunc{X}\to\trunc{Y}$.
\end{proof}
By truncating the Curry-Howard interpretation everywhere that it fails to give a
proposition, we get an interpretation of logic at the level of propositions.
\begin{align*}
    P\wedge Q &\defeq P\times Q\\
    P\vee Q &\defeq \trunc{P + Q}\\
    P\Rightarrow Q &\defeq P\to Q\\
    \neg P &\defeq P\to \zerotype \\
    \qAll{a:A}{P(a)} &\defeq \qPi{a:A}{P(a)}\\
    \qExists{a:A}{P(a)} &\defeq \trunc{\qSig{a:A}{P(a)}}\\
\end{align*}
Where $P$ and $Q$ are propositions on the first three lines, and $P:A\to\univ$ is
a predicate on a type $A$ in the last two lines. Each of these types is a
proposition: The type $P\wedge Q$ is a proposition because $P$ and $Q$ are
propositions; the types $P\vee Q$ and $\qExists{a:A}P(a)$ are propositions by
definition, while $P\to Q$ and $\qPi{a:A}P(a)$ are propositions
by function extensionality. In general, we will only use the
logical notation just defined when we are using $\exists$ or $\vee$, or trying to
make explicit the connection between what we are doing and some traditional logical
principle.

This interpretation allows us to define the image as we wanted to above.
\begin{definition}
    The \nameas{image}{function!image of} of $f$ is the type
    \[\im(f)\defeq \qSig{b:B}{\trunc{\fib_f(b)}}.\]
    That is,
    \[\im(f)\defeq \qSig{b:B}\qExists{a:A}(f(a)=b).\]
\end{definition}
That is, the image of $f$ is the type of all $b:B$ such that there exists (as
property) an $a:A$ with $f(a)=b$.

While in general $+$ is not guaranteed to give propositions, the coproduct of 
disjoint propositions is always again a proposition.
\begin{lemma}
    If $A$ and $B$ are disjoint propositions (I.e., such that $\neg (A\times B)$),
    then $A+B$ is a proposition; hence $A+B\simeq A\vee B$.
\end{lemma}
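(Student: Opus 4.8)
The plan is to first establish $\isProp(A+B)$ directly from the induction principle for coproducts, and then obtain the equivalence $A+B\simeq A\vee B$ as an instance of the idempotency of propositional truncation on propositions.

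First I would prove $\qPi{x,y:A+B}x=y$ by induction on $x$ and then on $y$, which splits into four cases. When $x\jeq\inl(a)$ and $y\jeq\inl(a')$, the fact that $A$ is a proposition gives $a=a'$, and $\ap_{\inl}$ turns this into $\inl(a)=\inl(a')$; the case $x\jeq\inr(b)$, $y\jeq\inr(b')$ is handled symmetrically using that $B$ is a proposition together with $\ap_{\inr}$. In the mixed case $x\jeq\inl(a)$, $y\jeq\inr(b)$, the pair $(a,b)$ inhabits $A\times B$, so the disjointness hypothesis $\neg(A\times B)$ supplies an element of $\zerotype$, and the eliminator $!:\zerotype\to(\inl(a)=\inr(b))$ yields the required (vacuous) path; the remaining mixed case is symmetric. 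Reassembling the four cases via the $+$-induction principle (applied to the family $C(x)\defeq\qPi{y:A+B}x=y$) gives the element of $\qPi{x,y:A+B}x=y$ we want, i.e.\ $\isProp(A+B)$.

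For the second claim, recall $A\vee B\defeq\trunc{A+B}$. Since $A+B$ is now known to be a proposition, Lemma~\ref{lemma:prop-equals-truncation} gives $A+B\simeq\trunc{A+B}$, which is exactly $A+B\simeq A\vee B$. (One could equally argue by hand: $|-|:A+B\to\trunc{A+B}$ together with the universal property of truncation applied to $\id_{A+B}$ give maps in both directions between two propositions, and then Lemma~\ref{lemma:lequiv-equiv} upgrades this logical equivalence to an equivalence.)

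There is no real obstacle here; the only point worth flagging is that the four-way case analysis is just the coproduct induction principle and needs no encode--decode machinery --- injectivity of $\inl$ and $\inr$ is used only through $\ap$, and one never needs that $\inl(a)=\inr(b)$ is impossible, because disjointness of $A$ and $B$ makes precisely those cases vacuous.
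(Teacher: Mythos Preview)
Your proposal is correct and follows essentially the same approach as the paper: a four-way case analysis on the coproduct, using that $A$ and $B$ are propositions for the homogeneous cases and the disjointness hypothesis to produce an element of $\zerotype$ (and hence vacuously a path) in the mixed cases. You are slightly more explicit than the paper in naming $\ap_{\inl}$, the $\zerotype$-eliminator, and in spelling out the derivation of $A+B\simeq A\vee B$ from Lemma~\ref{lemma:prop-equals-truncation}, but the argument is the same.
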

\begin{proof}
    Let $x,x':A+B$, and do a case analysis on $x$ and $x'$:
    \begin{itemize} 
        \item Case 1: $x=\inl a$ and $x'=\inl a'$. As $A$ is a proposition, $a=a'$,
            and so $\inl a = \inl a'$.
        \item Case 2: $x=\inr b$ and $x'=\inr b'$. Similar to Case 1.
        \item Case 3: $x=\inl a$ and $x' = \inr b'$. Then we have $(a,b'):A\times B$,
            but we assumed $\neg(A\times B)$, and so we get $x=x'$.
        \item Case 4: $x=\inr b$ and $x' = \inl a'$ Similar to Case 3.
    \end{itemize}
\end{proof}
\begin{definition}
    A type $P$ is \nameas{decidable}{type!decidable} if $P+\neg P$.\\
    A predicate $P:A\to\univ$ is \emph{decidable} if $\Pi(a:A),P(a)+\neg P(a)$.\\
    A type $A$ \emph{has decidable equality} or \nameas{is discrete}{type!discrete} when for all
    $a,a':A$, the type $a=a'$ is decidable.
\end{definition}
This terminology is standard in constructive mathematics, but since we also deal with
computability theory, this terminology creates a clash. We will always use
\emph{recursive} or \emph{computable} when discussing recursive decidability;
nevertheless, to avoid confusion we will sometimes say that $P$ is
\emph{complemented} when $P+\neg P$.
The standard example of a discrete type is the type of natural numbers, which is
decidable by the same argument as Theorem~\ref{thm:nat-is-set}.

\begin{lemma}
    For any family of types $P:X\to U$, we have
    \[\big\lVert\qSig{x:X}\trunc{P(x)}\big\rVert\simeq \trunc{\qSig{x:X}P(x)}\]
\end{lemma}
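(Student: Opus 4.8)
The plan is to notice that both types in the statement are propositions: by definition a propositional truncation $\trunc{-}$ is always a proposition, so $\big\lVert\qSig{x:X}\trunc{P(x)}\big\rVert$ and $\trunc{\qSig{x:X}P(x)}$ are both propositions regardless of $X$ and $P$. Hence, by Lemma~\ref{lemma:lequiv-equiv}, to produce the desired equivalence it is enough to establish the logical equivalence
\[\Big\lVert\qSig{x:X}\trunc{P(x)}\Big\rVert \lequiv \trunc{\qSig{x:X}P(x)}.\]

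For the direction $\trunc{\qSig{x:X}P(x)} \to \big\lVert\qSig{x:X}\trunc{P(x)}\big\rVert$, I would start from the evident map $\qSig{x:X}P(x) \to \qSig{x:X}\trunc{P(x)}$ given by $\lambda (x,p).(x,|p|)$ and then apply functoriality of truncation (Lemma~\ref{lemma:trunc-func}). For the other direction, since the codomain $\trunc{\qSig{x:X}P(x)}$ is a proposition, it suffices by the universal property of truncation to give a map $\qSig{x:X}\trunc{P(x)} \to \trunc{\qSig{x:X}P(x)}$. So suppose we have $(x,t)$ with $t:\trunc{P(x)}$; consider the map $P(x)\to\trunc{\qSig{x:X}P(x)}$ sending $p$ to $|(x,p)|$. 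Its codomain is again a proposition, so by the universal property of $\trunc{P(x)}$ this map factors through a map $\trunc{P(x)}\to\trunc{\qSig{x:X}P(x)}$, and evaluating it at $t$ gives the required element.

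I do not expect any real obstacle: the whole argument rests on the observation that both sides are propositions — so that logical equivalence suffices — together with two uses of the universal property of the truncation (once packaged as functoriality, once applied directly). The only point requiring a little care is to make sure that every eliminator out of a truncation is used with a target that is genuinely a proposition, which it is in each instance here by construction.
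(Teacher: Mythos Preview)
Your proof is correct and essentially identical to the paper's: both reduce to logical equivalence (since truncations are propositions), handle one direction via the map $(x,p)\mapsto(x,|p|)$ and functoriality of truncation, and handle the other by eliminating the outer and then the inner truncation using the map $p\mapsto|(x,p)|$. The only cosmetic difference is that the paper phrases the inner elimination as ``functoriality of truncation'' applied to $\lambda p.(x,p)$ after currying, whereas you invoke the universal property directly.
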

\begin{proof}
    We need only show logical equivalence. For the implication
    \[\big\lVert\qSig{x:X}\trunc{P(x)}\big\rVert\to \trunc{\qSig{x:X}P(x)},\]
    it is enough to define a map
    \[\big(\qSig{x:X}\trunc{P(x)}\big)\to \trunc{\qSig{x:X}P(x)},\]
    and by currying, such a map is the same thing as a map
    \[\qPi{x:X}\trunc{P(x)}\to \trunc{\qSig{x:X}P(x)}.\]
    Fixing $x:X$, a map $\trunc{P(x)}\to \trunc{\qSig{x:X}P(x)}$ arises by
    functoriality of truncation from any map $P(x)\to\qSig{x:X}P(x)$, and $\lambda
    p.(x,p)$
    suffices.

    For the other direction, it is enough to define a map
    \[\big(\qSig{x:X}P(x)\big)\to\qSig{x:X}\trunc{P(x)},\]
    and $\lambda(x,p).(x,|p|)$ suffices.
\end{proof}
\begin{lemma}\label{lemma:sigmatrunc-is-exists}
    If $X$ is a type and $P:X\to\univ$ is a family of types over $X$ such that
    $\isProp\left(\qSig{x:X}\trunc{P(x)}\right)$,
    then
    $ 
    \left(\qSig{x:X}\trunc{P(x)}\right) \simeq \trunc{\qSig{x:X}P(x)}.
    $ 
\end{lemma}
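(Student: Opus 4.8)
The plan is to reduce the claim to a statement about propositions and then invoke the preceding lemma. The key observation is that \emph{both} sides of the desired equivalence are propositions: the left-hand side $\qSig{x:X}\trunc{P(x)}$ is a proposition by hypothesis, and the right-hand side $\trunc{\qSig{x:X}P(x)}$ is a proposition by the definition of propositional truncation. Hence, by Lemma~\ref{lemma:lequiv-equiv}, it suffices to establish a logical equivalence $\big(\qSig{x:X}\trunc{P(x)}\big)\lequiv \trunc{\qSig{x:X}P(x)}$.

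First I would apply Lemma~\ref{lemma:prop-equals-truncation} to the proposition $\qSig{x:X}\trunc{P(x)}$, which yields
\[\Big(\qSig{x:X}\trunc{P(x)}\Big)\simeq \Big\lVert\qSig{x:X}\trunc{P(x)}\Big\rVert,\]
and in particular a logical equivalence between these two types. Next I would invoke the preceding lemma, which gives
\[\Big\lVert\qSig{x:X}\trunc{P(x)}\Big\rVert\simeq \trunc{\qSig{x:X}P(x)},\]
again in particular a logical equivalence. Composing the two logical equivalences (maps in both directions compose) produces $\big(\qSig{x:X}\trunc{P(x)}\big)\lequiv \trunc{\qSig{x:X}P(x)}$, and Lemma~\ref{lemma:lequiv-equiv} then upgrades this to the claimed equivalence.

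There is essentially no obstacle here; the only point worth a moment's care is that we are content to produce a \emph{logical} equivalence rather than an honest pair of mutually inverse maps, which is legitimate precisely because both types are propositions. Alternatively, if transitivity of $\simeq$ is taken as available (inverses compose), one may simply chain the two displayed equivalences directly without passing through logical equivalence at all; either route works, and the first is slightly more self-contained given what has been set up so far.
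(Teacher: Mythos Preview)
Your proof is correct and is essentially the same as the paper's: both apply Lemma~\ref{lemma:prop-equals-truncation} to the hypothesis to identify $\qSig{x:X}\trunc{P(x)}$ with its truncation, then invoke the preceding lemma to pass to $\trunc{\qSig{x:X}P(x)}$. The paper simply chains the two equivalences directly (your ``alternative route''), whereas you spell out the detour through logical equivalence and Lemma~\ref{lemma:lequiv-equiv}.
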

\begin{proof}
    As $\isProp\left(\qSig{x:X}\trunc{P(x)}\right)$, we have an equivalence
    \[\trunc{\qSig{x:X}\trunc{P(x)}}\simeq\qSig{x:X}\trunc{P(x)},\]
    and the latter is equivalent to $\trunc{\qSig{x:X}P(x)}$ by the previous lemma.
\end{proof}

\index{proposition|)}
\section{Surjections, embeddings and equivalences}\label{section:embeddings}
We can decompose $\isEquiv(f)$ into $\linv(f)\times \rinv(f)$, which
correspond to saying that $f$ is a section and $f$ is a retraction. We can similarly
decompose the notion of having contractible fibers: since
$\isContr(A)$ is equivalent to $A\times \isProp(A)$, for $f:A\to B$, we can restate 
$\isContr(f)$ as
\[\qPi{b:B}\isProp(\fib_f{b})\times \trunc{\fib_f(b)}.\]
We can then split these two notions to arrive at the following definitions.
\begin{definition}
    A function $f:A\to B$ is
    \begin{itemize}
        \item an \name{embedding} if it has propositional fibers:
            \[\qPi{b:B}\isProp(\fib_f{b}).\]
        \item a \name{surjection} if it has inhabited fibers:
            \[\qPi{b:B}{\trunc{\fib_f(b)}}.\]
    \end{itemize}
\end{definition}
Then this directly gives us that a function has contractible fibers iff it is both a
surjection and an embedding. By Theorem~\ref{thm:equiv-char}, we have that a function
is an equivalence iff it is both an embedding and a surjection.

Note that we follow the HoTT Book in highlighting the subtlety of the definition by
using the word \emph{embedding} rather than \emph{injection}. However, the HoTT Book
defines embedding by looking at $\ap_f$, which is sometimes more useful. Our
definition instead gives a symmetry between embedding and surjection. Nevertheless,
the above definition is indeed equivalent to the definition given in the HoTT Book.
\begin{lemma}\label{lemma:embedding-propositional-fibers}
    A function $f:A\to B$ is an embedding iff 
    $\ap_f:(x=y)\to(f (x) = f(y))$ is an equivalence for each $x,y:A$.
\end{lemma}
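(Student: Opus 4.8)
The plan is to relate the fibers of $\ap_f$ to the fibers of $f$ using the fiberwise-to-total translation of Theorem~\ref{thm:fiberwise-equivalence}, and then reinterpret ``propositional fibers'' as ``contractible fibers over points of the image''.

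First I would fix $a:A$ and regard $y\mapsto\ap_f$ as a fiberwise map from the family $y\mapsto(a=y)$ to the family $y\mapsto(f(a)=f(y))$ over $A$. By Theorem~\ref{thm:fiberwise-equivalence}, $\ap_f:(a=y)\to(f(a)=f(y))$ is an equivalence for every $y:A$ if and only if the induced total map
\[\big(\qSig{y:A}a=y\big)\longrightarrow\big(\qSig{y:A}f(a)=f(y)\big)\]
is an equivalence. The domain is the singleton based at $a$, hence contractible by Lemma~\ref{lemma:singleton-contractible}; and reversing paths in the second component gives an equivalence $\qSig{y:A}(f(a)=f(y))\simeq\qSig{y:A}(f(y)=f(a))$, the right-hand side being $\fib_f(f(a))$ by definition. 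Since a map out of a contractible type is an equivalence exactly when its codomain is contractible (using Theorem~\ref{thm:prop-retract-closed} together with Lemma~\ref{lemma:contr-char}), I obtain the key claim: \emph{$\ap_f$ is an equivalence at every pair $(a,y)$ with $y:A$ if and only if $\fib_f(f(a))$ is contractible.}

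Then I would derive the two directions of the lemma. If $f$ is an embedding, then for each $a:A$ the type $\fib_f(f(a))$ is a proposition containing the element $(a,\refl)$, hence contractible by Lemma~\ref{lemma:contr-char}; as $a$ ranges over $A$, the key claim gives that $\ap_f$ is an equivalence for all $x,y:A$. Conversely, assume $\ap_f:(x=y)\to(f(x)=f(y))$ is an equivalence for all $x,y:A$, and fix $b:B$. By Lemma~\ref{lemma:prop-to-prop} it suffices to produce a map $\fib_f(b)\to\isProp(\fib_f(b))$: given $(a,p):\fib_f(b)$, transport along $p:f(a)=b$ yields an equivalence $\fib_f(f(a))\simeq\fib_f(b)$, and the key claim applied at $a$ shows $\fib_f(f(a))$ is contractible, hence $\fib_f(b)$ is contractible and in particular a proposition.

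The main obstacle I anticipate is purely organisational: one must check that the total map appearing in Theorem~\ref{thm:fiberwise-equivalence} is exactly the map induced by $\ap_f$, and that the reshuffling identifying $\qSig{y:A}(f(a)=f(y))$ with $\fib_f(f(a))$ --- path reversal in the fibres together with renaming of the bound variable --- is a genuine equivalence; both are routine given the results already in hand. A variant that sidesteps the ``map out of a contractible type'' step is to keep the type $\qSig{y:A}(f(a)=f(y))$ as it stands and observe that its contractibility is equivalent, via Lemma~\ref{lemma:contr-char}, to the total map being an equivalence.
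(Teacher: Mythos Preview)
Your proof is correct and takes a genuinely different route from the paper. The paper works pointwise: it computes, for any $(x,p),(y,q)\in\fib_f(b)$, an equivalence $\big((x,p)=(y,q)\big)\simeq\fib_{\ap_f}(p\ct q^{-1})$ via the characterisation of paths in $\Sigma$-types, and reads off both directions from this single identification. Your argument instead packages $\ap_f$ as a fiberwise map and applies Theorem~\ref{thm:fiberwise-equivalence} to reduce the question to contractibility of $\fib_f(f(a))$, then handles the general fiber $\fib_f(b)$ through Lemma~\ref{lemma:prop-to-prop} and transport. The paper's approach is more elementary (it avoids the fiberwise-to-total machinery) and yields slightly sharper information, namely an explicit equivalence between each fiber of $\ap_f$ and a path space in a fiber of $f$; your approach is more conceptual and illustrates nicely how the total-space viewpoint organises such arguments, at the cost of invoking heavier lemmas. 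Either proof would be appropriate here.
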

\begin{proof}
    We need to see that $\ap_f$ has contractible fibers iff $f$ has
    propositional fibers. Fix $b:B$, and $(x,p),(y,q):\fib_f(b)$. We have
    \begin{align*}
        \big((x,p)=(y,q)\big) & {}\simeq \qSig{r:x=y}p = \ap_f(r)\cdot q\\
                                & {}\simeq \qSig{r:x=y}\ap_f(r) = p\cdot q^{-1}\\
                                & {}\simeq \fib_{\ap_f}(p\cdot q^{-1}).
    \end{align*}
    So if $\ap_f$ has contractible fibers, then $f$ has propositional fibers.
    On the other hand, for any $p:f(x)=f(y)$ we have $(x,p)$ and $(y,\refl)$
    in $\fib_f(f(y))$, so that $\big((x,p)=(y,\refl)\big)\simeq \fib_{\ap_f}(p)$.
    Then if $f$ has propositional fibers, $\ap_f$ has contractible fibers.
\end{proof}
In the case where $B$ is a set, we can simplify the above characterization;
in this case, it is enough to know that there is a map $f(x)=f(y)\to x=y$.
We will give a somewhat indirect proof, in order to introduce an important
observation by Mart\'in Escard\'o concerning retracts of identity types.
\index{type!identity|(}
\begin{lemma}\label{lemma:identity-section}
    Suppose $R:X\to X \to \univ$ and that we have maps
    \begin{align*}
        r:{}&\qPi{x,y:X}x=y\to R(x,y)\\
        s:{}&\qPi{x,y:X}R(x,y)\to x=y,
    \end{align*}
    such that $r_{x,y}$ is a left inverse of $s_{x,y}$ for all~$x,y:X$. Then $r_{x,y}$ and
    $s_{x,y}$ are inverse. Hence, there is an equivalence $R(x,y)\simeq (x=y)$ for all~$x,y:X$.
\end{lemma}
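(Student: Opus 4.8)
The plan is to fix $x,y:X$ and show that $s_{x,y}$ is also a left inverse of $r_{x,y}$, i.e. that $s_{x,y}(r_{x,y}(p)) = p$ for every $p:x=y$. Since we are proving an equality between paths, the natural tool is path induction: it suffices to treat the case $y\jeq x$ and $p\jeq\refl[x]$, reducing the goal to
\[
s_{x,x}(r_{x,x}(\refl[x])) = \refl[x].
\]
The difficulty is that $r_{x,x}(\refl[x])$ need not be a recognisable element of $R(x,x)$ — we have no computation rule for $r$ — so we cannot simplify this expression directly. The key trick (Escard\'o's observation) is to leverage the hypothesis that $r_{x,y}$ is a \emph{left} inverse of $s_{x,y}$, which does give us a usable equation, combined with functoriality of paths.

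Concretely, write $\eta_{x,y}:\qPi{u:R(x,y)}r_{x,y}(s_{x,y}(u)) = u$ for the given homotopy. First I would observe that $s_{x,x}(r_{x,x}(\refl[x]))$ is itself a path $x=x$, call it $q$; I want to show $q=\refl[x]$. Consider the element $r_{x,x}(q):R(x,x)$. On one hand, $q\defeq s_{x,x}(r_{x,x}(\refl[x]))$, so $r_{x,x}(q) = r_{x,x}(s_{x,x}(r_{x,x}(\refl[x])))$, which by $\eta_{x,x}$ applied to $r_{x,x}(\refl[x])$ equals $r_{x,x}(\refl[x])$. On the other hand, $q:x=x$, and I can apply $\ap_{r_{x,x}\comp s_{x,x}}$-style reasoning; but more cleanly: the map $s\comp r:(x=x)\to(x=x)$ satisfies $(s\comp r)(\refl) = q$ by definition of $q$, and I want $(s\comp r) = \id$ on $x=x$. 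Here I would invoke that $s_{x,x}$ is a split mono with retraction $r_{x,x}$, so $(x=x)$ is a retract of $R(x,x)$, hence (by the general fact that retracts of a type inherit its structure, e.g. Theorem~\ref{thm:prop-retract-closed} and its consequences) any two elements of $x=x$ that are identified in $R(x,x)$ are identified in $x=x$. Applying this to $r_{x,x}(q) = r_{x,x}(\refl[x])$ (the displayed computation above) yields $q = \refl[x]$, which is exactly what we need.

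Having established $s_{x,y}\comp r_{x,y}\htpy\id$ together with the given $r_{x,y}\comp s_{x,y}\htpy\id$, we conclude that $r_{x,y}$ and $s_{x,y}$ are a pair of mutually inverse maps, hence (packaging $r_{x,y}$ with its left and right inverse $s_{x,y}$) an equivalence $R(x,y)\simeq (x=y)$. I expect the main obstacle to be precisely the middle step: phrasing the cancellation ``$r_{x,x}(q) = r_{x,x}(\refl[x])$ implies $q = \refl[x]$'' correctly, since it relies on the fact that $s_{x,x}$ being a section makes $(x=x)$ a retract of $R(x,x)$ and one must be careful that this is genuinely an instance of the retract-cancellation principle and does not secretly re-assume what we are trying to prove. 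Everything else is routine path induction and bookkeeping with the homotopy $\eta$.
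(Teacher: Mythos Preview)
Your cancellation step is circular, and you correctly flagged it as the danger point. From $r_{x,x}(q)=r_{x,x}(\refl[x])$ you want $q=\refl[x]$; this requires $r_{x,x}$ to be left-cancellable (an embedding), but all you know is that $r_{x,x}$ is a \emph{retraction} (it has $s_{x,x}$ as a right inverse). Retractions are not injective in general. If you try to cancel by applying $s_{x,x}$, you obtain $s_{x,x}(r_{x,x}(q))=s_{x,x}(r_{x,x}(\refl[x]))=q$, i.e.\ $(s\comp r)(q)=q$, which is just idempotency of $s\comp r$ at $\refl$ and tells you nothing new. You have also reversed the direction of the retract: since $r\comp s\htpy\id_{R(x,x)}$, it is $R(x,x)$ that is a retract of $x=x$, not the other way around; and in any case Theorem~\ref{thm:prop-retract-closed} transfers $h$-levels, not cancellation properties.

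The missing idea is that the family $f\defeq s\comp r:\qPi{x,y:X}(x=y)\to(x=y)$ satisfies $f(p)=f(\refl[x])\ct p$ for \emph{every} $p:x=y$; this is a one-line path induction but it crucially uses that $f$ is defined for varying endpoints, not just at the diagonal. Combining this with idempotency (which you have) gives $f(p)=f(f(p))=f(\refl[x])\ct f(p)$, and now you can right-cancel $f(p)$ in the groupoid of paths to obtain $f(\refl[x])=\refl[x]$, hence $f(p)=\refl[x]\ct p=p$. This is exactly the argument the paper gives. Your computation $r(q)=r(\refl)$ is a correct observation of idempotency, but you need the translation formula $f(p)=f(\refl)\ct p$ to turn idempotency into the identity; there is no retract-style shortcut around it.
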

\begin{proof}
    Since we already have that $r_{x,y}\comp s_{x,y}$ is homotopic to the identity,
    then we only need to see that $s(r(p)) = p$ for all $p:x=y$. Notice that $s\comp
    r$ is an idempotent function of type $\qPi{x,y:X}(x=y)\to (x=y)$, since $r$ is
    left inverse to $s$. Then it is enough to show that any family of functions
    \[f:\qPi{x,y:X}(x=y)\to (x=y)\]
    such that $(f\comp f) \htpy f$ is homotopic to the identity. Since $\refl$ is the
    identity with respect to composition, we have
    \[f(\refl) = f(\refl)\ct \refl\]
    for any function $f:\qPi{x,y:X}(x=y)\to(x=y)$. 
    Then by path induction, any function $f:\qPi{x,y:X}(x=y)\to (x=y)$
    satisfies
    \[\qPi{x,y:X}\qPi{p:x=y} f(p) = f(\refl[x])\ct p.\]
    In particular, if $f$ is idempotent we have
    \[f(p) = f(f(p)) = f(\refl) \ct f(p),\]
    where the second equality is the previous observation. Then composition with
    $f(p)^{-1}$, gives $f(\refl) = \refl$. Applying path induction, we then have that for any
    $p:x=y$ that $f(p) = p$ for any family of idempotent functions on~$x=y$. In
    particular, we have that $(s\comp r)(p) = p$, so $s$ is also a left inverse
    of~$r$.
\end{proof}
\index{type!identity|)}
\begin{theorem}
    If $B$ is a set, then $f:A\to B$ is an embedding iff for each $x,y:A$ there is
    a map $h:(f(x) = f(y))\to (x=y)$.
\end{theorem}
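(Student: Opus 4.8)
The plan is to reduce both directions to Lemma~\ref{lemma:embedding-propositional-fibers}, which characterizes embeddings as exactly those $f$ for which $\ap_f:(x=y)\to(f(x)=f(y))$ is an equivalence for all $x,y:A$. With that in hand the forward direction is immediate: if $f$ is an embedding then $\ap_f$ is an equivalence, hence has an inverse, and in particular a map $h:(f(x)=f(y))\to(x=y)$ as required.

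For the converse, suppose we are given $h:(f(x)=f(y))\to(x=y)$ for each $x,y:A$. I would apply Lemma~\ref{lemma:identity-section} with $R(x,y)\defeq (f(x)=f(y))$, taking $r_{x,y}\defeq \ap_f:(x=y)\to R(x,y)$ and $s_{x,y}\defeq h:R(x,y)\to(x=y)$. The key point is that since $B$ is a set, the path type $R(x,y)=(f(x)=f(y))$ is a proposition, so any two of its elements are equal; in particular $r_{x,y}(s_{x,y}(q))=q$ for every $q:R(x,y)$, i.e. $r_{x,y}$ is a left inverse of $s_{x,y}$. Lemma~\ref{lemma:identity-section} then upgrades this to the conclusion that $r_{x,y}=\ap_f$ is an equivalence for each $x,y:A$, and Lemma~\ref{lemma:embedding-propositional-fibers} concludes that $f$ is an embedding.

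There is essentially no obstacle here, since the substantive work has been isolated into Lemma~\ref{lemma:identity-section}, whose purpose is precisely to turn a one-sided retraction of identity types into an equivalence. The one thing to be careful about is the direction: the hypothesis we can verify cheaply is $r\comp s\htpy\id$ on the \emph{codomain} side $R(x,y)$ (this is where set-ness of $B$ enters), not on the $x=y$ side, so the roles of $r$ and $s$ in the invocation of Lemma~\ref{lemma:identity-section} must be chosen accordingly. If one preferred to avoid that lemma, one could instead argue directly that $\ap_f$ and $h$ are mutually inverse---$\ap_f(h(q))=q$ because $f(x)=f(y)$ is a proposition, and $h(\ap_f(p))=p$ by path induction reducing $p$ to $\refl$---but citing Lemma~\ref{lemma:identity-section} is cleaner and reuses the idempotence argument already made there.
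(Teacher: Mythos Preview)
Your main argument is correct and essentially identical to the paper's proof: both apply Lemma~\ref{lemma:identity-section} with $R(x,y)\defeq(f(x)=f(y))$, $r\defeq\ap_f$, and $s\defeq h$, using that $B$ is a set to verify $r\comp s\htpy\id$, and then conclude via Lemma~\ref{lemma:embedding-propositional-fibers}.

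One caveat: the alternative you sketch at the end does not work as stated. Path induction on $p:x=y$ reduces the goal $h(\ap_f(p))=p$ to $h(\refl_{f(x)})=\refl_x$, but nothing about an arbitrary $h$ guarantees this. Indeed, this is exactly the nontrivial content of Lemma~\ref{lemma:identity-section}: the idempotence trick there is what upgrades the one-sided identity $r\comp s\htpy\id$ to the other side $s\comp r\htpy\id$. So the ``direct'' route either fails or silently reproves the lemma; your instinct to cite it rather than bypass it is the right one.
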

\begin{proof}
    As $B$ is a set, $f(x)=f(y)$ is a proposition, so $h$ tells us that $f(x)=f(y)$ is
    a retract of $x=y$. So, for every $x,y:A$ we have $f(x)=f(y)$ is a retract of
    $x=y$, and by Lemma~\ref{lemma:identity-section} we have an equivalence
    \[(f(x)=f(y))\simeq (x=y).\qedhere\]
\end{proof}

We have stated embedding and surjection as if they were property, not structure. Function
extensionality tells us that this is justified:
\begin{lemma}
    In the presence of function extensionality, being an embedding, being a surjection and
    having contractible fibers are all propositions.
\end{lemma}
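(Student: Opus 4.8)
The plan is to reduce each of the three claims to facts already established about propositions forming an exponential ideal in $\univ$. Recall that, under the definition adopted here, function extensionality for $\univ$ is precisely the statement that propositions form an exponential ideal, so in particular any $\Pi$-type (indexed by an arbitrary type) valued in propositions is again a proposition. In each of the three cases the strategy is the same: first show that the fiberwise predicate $\lambda b.\,\Phi(\fib_f(b))$ is proposition-valued, and then conclude that $\qPi{b:B}\Phi(\fib_f(b))$ is a proposition by the exponential ideal property.

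For embeddings, the type in question is $\qPi{b:B}\isProp(\fib_f(b))$. By Theorem~\ref{thm:prop-is-prop}, function extensionality implies that $\isProp(X)$ is a proposition for every $X:\univ$; applying this to the type $\fib_f(b)$ gives that $\isProp(\fib_f(b))$ is a proposition for each $b:B$, and hence the $\Pi$-type over $b:B$ is a proposition.

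For surjections, the type is $\qPi{b:B}\trunc{\fib_f(b)}$, and each $\trunc{\fib_f(b)}$ is a proposition by the very definition of the propositional truncation, so the $\Pi$-type is again a proposition. For contractible fibers, the type is $\qPi{b:B}\isContr(\fib_f(b))$, and by Theorem~\ref{thm:contr-is-prop} function extensionality gives $\isProp(\isContr(X))$ for every $X:\univ$; applying this to $\fib_f(b)$ and closing under $\Pi$ over $b:B$ finishes the argument.

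I expect no genuine obstacle here: all the content is already packed into the earlier results that $\isProp(X)$ and $\isContr(X)$ are themselves propositions (Theorems~\ref{thm:prop-is-prop} and~\ref{thm:contr-is-prop}), together with the closure of propositions under arbitrary $\Pi$. The only point to keep straight is that in the embedding and contractible-fibers cases function extensionality is invoked twice — once to see that the fiberwise type is a proposition, and once to form the product over $b:B$ — but both invocations are licensed by the single hypothesis that propositions form an exponential ideal in $\univ$.
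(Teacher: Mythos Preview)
Your proposal is correct and follows essentially the same route as the paper: in each case you observe that the fiberwise predicate is proposition-valued (using that $\isProp(-)$ and $\isContr(-)$ are propositions under function extensionality, and that truncations are propositions by definition), and then close under $\Pi$ using the exponential-ideal property. The paper's proof is terser but makes exactly the same moves.
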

\begin{proof}
    Function extensionality tells us that propositions form an exponential ideal
    and that $\isProp(X)$ is always a proposition, so being an embedding is a proposition;
    similarly since $\isContr(X)$ is a proposition for any $X:\univ$, so is $\isContr(f)$
    for any $f:A\to B$; again using function extensionality, since truncations are
    propositions by definition, we have that being a surjection is a proposition.
\end{proof}
We have alluded to the fact
that $\isContr(f)$ and $\isEquiv(f)$ are equivalent. In the presence of function
extensionality, it is enough to show that $\isEquiv(f)$ is a proposition, since
$\isContr(f)$  and $\isEquiv(f)$ are logically equivalent. First note that if
$f:A\to B$ has an inverse, then the composition maps $(f\comp -):(C\to A)\to (C\to B)$ and $(-\comp
f):(B\to C) \to (A\to C)$ do as well, by composition with the inverse of $f$. As a result, we
have the following.
\begin{theorem}
    If $\univ$ satisfies function extensionality, then for any $A,B:\univ$, if $f:A\to B$ has an inverse, then $\rinv(f)$ and $\linv(f)$ are contractible.
\end{theorem}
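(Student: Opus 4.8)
The statement to prove is: if $\univ$ satisfies function extensionality and $f:A\to B$ has an inverse, then $\rinv(f)$ and $\linv(f)$ are contractible. The plan is to show both types are propositions and then exhibit an element of each; since $\isContr(X)$ is equivalent to $X\times\isProp(X)$ (Lemma~\ref{lemma:contr-char}), this suffices. The elements are easy: an inverse $g$ of $f$ witnesses both $(g,\epsilon):\rinv(f)$ and $(g,\eta):\linv(f)$, where $\eta$ and $\epsilon$ come from the definition of $\inverse(f)$. So the real content is the propositionality, and I would handle $\rinv(f)$ and $\linv(f)$ by a duality argument.

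First I would recall the observation stated just before the theorem: if $f:A\to B$ has an inverse, then $(f\comp -):(C\to A)\to(C\to B)$ and $(-\comp f):(B\to C)\to(A\to C)$ also have inverses (namely pre- or post-composition with the inverse of $f$). Now unfold the definitions:
\[\rinv(f)\jeq\qSig{g:B\to A}(f\comp g\htpy\id_B),\qquad \linv(f)\jeq\qSig{g:B\to A}(g\comp f\htpy\id_A).\]
Using function extensionality, $f\comp g\htpy\id_B$ is equivalent to $f\comp g=\id_B$ and $g\comp f\htpy\id_A$ is equivalent to $g\comp f=\id_A$. Thus $\rinv(f)$ is equivalent to $\qSig{g:B\to A}(f\comp g)=\id_B$, which is exactly the fiber $\fib_{(f\comp -)}(\id_B)$ of the map $(f\comp -):(B\to A)\to(B\to B)$; similarly $\linv(f)$ is equivalent to $\fib_{(-\comp f)}(\id_A)$ of the map $(-\comp f):(B\to A)\to(A\to A)$. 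Since $(f\comp -)$ and $(-\comp f)$ each have an inverse, Theorem~\ref{thm:equiv-char} tells us each is an equivalence, hence has contractible fibers. Therefore $\fib_{(f\comp -)}(\id_B)$ and $\fib_{(-\comp f)}(\id_A)$ are contractible, and so $\rinv(f)$ and $\linv(f)$ are contractible, being equivalent to contractible types (using that contractibility is closed under equivalence, via Theorem~\ref{thm:prop-retract-closed} and its corollary).

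The main obstacle I anticipate is bookkeeping rather than depth: making precise the equivalence $\rinv(f)\simeq\fib_{(f\comp -)}(\id_B)$. This requires applying function extensionality fiberwise — that is, noting $\qSig{g}(f\comp g\htpy\id_B)\simeq\qSig{g}(f\comp g=\id_B)$ follows from Theorem~\ref{thm:fiberwise-equivalence} applied to the family of maps $\happly^{-1}_{f\comp g,\id_B}$ (inverse of $\happly$, which exists and is an equivalence by characterization F1 of function extensionality, Theorem~\ref{thm:funext-char}). Once this reshuffling is in hand, everything else is a direct appeal to the fact that equivalences have contractible fibers, which is exactly Theorem~\ref{thm:equiv-char}. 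One should also double-check the direction of composition so that $(f\comp -)$ really lands in the correct codomain $B\to B$ for $\rinv$ and $(-\comp f)$ in $A\to A$ for $\linv$; this is the only place a sign error could creep in.
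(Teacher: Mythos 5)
Your proposal is correct and follows essentially the same route as the paper: identify $\linv(f)$ and $\rinv(f)$ (via function extensionality) with the fibers of $(-\comp f)$ over $\id_A$ and of $(f\comp -)$ over $\id_B$, note that these composition maps inherit inverses from $f$, and conclude contractibility of the fibers from Theorem~\ref{thm:equiv-char}. The only cosmetic difference is your opening framing via ``proposition plus element,'' which your argument does not actually use; the fiber argument you then give is exactly the paper's proof, just spelled out in more detail.
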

\begin{proof}
    Fix $f:A\to B$ with an inverse.
    By function extensionality, $\linv(f)$ is equivalent to the fiber of $(-\comp f)$
    over $\id_A$.  By the above observation, we know that $(-\comp f)$ has an inverse, and so has
    contractible fibers. Similarly, $\rinv(f)$ is equivalent to the fiber of $(f\comp -)$
    over $\id_B$.
\end{proof}
\begin{theorem}\label{thm:equiv-is-prop}
    If $\univ$ satisfies function extensionality, then for any $A,B:\univ$ and any $f:A\to B$, we have that $\isEquiv(f)$
    is a proposition.
\end{theorem}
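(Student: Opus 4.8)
The plan is to reduce the statement to the case where $\isEquiv(f)$ is inhabited, via Lemma~\ref{lemma:prop-to-prop}: it suffices to construct a function $\isEquiv(f)\to\isProp(\isEquiv(f))$. So I would begin by assuming we have some $e:\isEquiv(f)$ and argue, under this hypothesis, that $\isEquiv(f)$ is in fact contractible (hence a proposition by Lemma~\ref{lemma:contr-char}).

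First, from $e:\isEquiv(f)$ and the logical equivalence of items~\ref{equiv:inverse} and~\ref{equiv:equiv} in Theorem~\ref{thm:equiv-char}, the function $f$ has an inverse. Now I invoke the immediately preceding theorem: since $\univ$ satisfies function extensionality and $f$ has an inverse, both $\linv(f)$ and $\rinv(f)$ are contractible. This is where function extensionality does all the real work; the rest of the argument is essentially bookkeeping. A binary product of contractible types is contractible — if $\linv(f)$ has center $\ell$ and $\rinv(f)$ has center $\rho$, then $(\ell,\rho)$ is a center of contraction of $\linv(f)\times\rinv(f)$: for any $(x,y)$ in the product we have $\ell=x$ and $\rho=y$, so $(\ell,\rho)=(x,y)$ by Theorem~\ref{thm:patheq} (no function extensionality needed here, only for the two factors). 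Hence $\isEquiv(f)\jeq\linv(f)\times\rinv(f)$ is contractible, and so a proposition by Lemma~\ref{lemma:contr-char}.

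This gives the desired map $\isEquiv(f)\to\isProp(\isEquiv(f))$, and Lemma~\ref{lemma:prop-to-prop} then yields $\isProp(\isEquiv(f))$ outright. I do not expect any genuine obstacle: the substantive input (contractibility of $\linv(f)$ and $\rinv(f)$ from an inverse) has already been established, and the only point requiring a little care is to route through Lemma~\ref{lemma:prop-to-prop} rather than attempting to prove $\isProp(\isEquiv(f))$ directly — we have no a priori element of $\isEquiv(f)$ to feed the contractibility argument, so the ``if inhabited, then a proposition'' pattern is exactly what is needed.
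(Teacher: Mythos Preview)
Your proposal is correct and is essentially the paper's own proof: assume $e:\isEquiv(f)$, use the preceding theorem to get that $\linv(f)$ and $\rinv(f)$ are contractible, conclude that their product $\isEquiv(f)$ is contractible (hence a proposition), and then invoke Lemma~\ref{lemma:prop-to-prop}. The paper's write-up is terser but follows exactly this route.
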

\begin{proof}
    Note that $\isEquiv(f) \simeq \linv(f)\times \rinv(f)$. If $e:\isEquiv(f)$,
    then $f$ is invertible, so $\isEquiv(f)$ is a product of contractible types by
    function extensionality, and so is a proposition. Briefly, we have
    \[\isEquiv(f)\to \isProp(\isEquiv(f)).\]
    Now apply Lemma~\ref{lemma:prop-to-prop}.
\end{proof}
\begin{corollary}\label{cor:contr-is-isequiv}
    If $\univ$ satisfies function extensionality, then for any $A,B:\univ$ and any $f:A\to B$ we have
    \[\isContr(f)\simeq \isEquiv(f).\]
\end{corollary}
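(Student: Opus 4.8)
The plan is to combine Theorem~\ref{thm:equiv-char}, which already supplies a logical equivalence $\isContr(f)\lequiv\isEquiv(f)$ (reading off the implications between items~\ref{equiv:equiv}.\ and~\ref{equiv:contr}., composing through $\inverse(f)$ if needed), with Lemma~\ref{lemma:lequiv-equiv}, which upgrades a logical equivalence to an equivalence as soon as both sides are propositions. So the only thing left to check is that, under function extensionality, both $\isContr(f)$ and $\isEquiv(f)$ are propositions.

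For $\isEquiv(f)$ this is precisely Theorem~\ref{thm:equiv-is-prop}. For $\isContr(f)$, I would unfold the definition $\isContr(f)\defeq\qPi{b:B}\isContr(\fib_f(b))$ and argue in two steps. First, by Theorem~\ref{thm:contr-is-prop} (which needs only that propositions form an exponential ideal in $\univ$, i.e. function extensionality), each fibre-contractibility type $\isContr(\fib_f(b))$ is a proposition. Second, since propositions form an exponential ideal, the dependent product $\qPi{b:B}\isContr(\fib_f(b))$ of this family of propositions is again a proposition. Hence $\isContr(f)$ is a proposition.

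With both $\isContr(f)$ and $\isEquiv(f)$ now known to be propositions, and with the two maps $\isContr(f)\to\isEquiv(f)$ and $\isEquiv(f)\to\isContr(f)$ coming from Theorem~\ref{thm:equiv-char}, Lemma~\ref{lemma:lequiv-equiv} applied to this pair of propositions yields the desired equivalence $\isContr(f)\simeq\isEquiv(f)$. I do not expect any genuine obstacle here: every ingredient has been assembled in the preceding sections, and the only minor care required is to extract from Theorem~\ref{thm:equiv-char} exactly the back-and-forth maps between $\isContr(f)$ and $\isEquiv(f)$ rather than the full three-way cycle.
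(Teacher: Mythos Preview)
Your proposal is correct and follows the same approach as the paper: the paper's proof is the one-line ``Function extensionality tells us that the types are logically equivalent propositions,'' which packages exactly the ingredients you spell out (Theorem~\ref{thm:equiv-char} for the logical equivalence, Theorem~\ref{thm:equiv-is-prop} for $\isEquiv(f)$ being a proposition, and the earlier observation that $\isContr(f)$ is a proposition via Theorem~\ref{thm:contr-is-prop} and the exponential-ideal property). You have simply unpacked the terse sentence into its constituent parts.
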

\begin{proof}
    Function extensionality tells us that the types are logically equivalent
    propositions.
\end{proof}

\section{Proposition extensionality and univalence}\label{section:univalence}
As with functions, MLTT provides no way to show that two \emph{types} are equal, so
if we want to prove types to be equal, we need an extensionality principle
for types. Before examining the situation for all types, we look at propositions.

In logical languages (including the Mitchell-Benabou language), two propositions are
considered equal if they are logically equivalent. This is not possible in MLTT with
our definition of proposition, so we suggest an extensionality principle for
propositions.
\begin{definition}\label{ax:propext}
    A universe $\univ$ \nameas{satisfies proposition extensionality}{proposition
    extensionality} if whenever
    $A:\univ$ and $B:\univ$ are
    propositions and $A\lequiv B$, then $A=B$:
    Let
    \[\PropExt_{\univ}\defeq\qPi{A,B:\univ}\isProp(A)\to \isProp(B)\to (A\lequiv B)\to (A=B).\]
    $\univ$ satisfies proposition extensionality when there is
    $\propext:\PropExt_{\univ}$.
\end{definition}
Notice that function extensionality tells us already that $A\simeq B$ is a
proposition when $A$ and $B$ are propositions, since $A\simeq B$ is a proposition
indexed sum of propositions. This gives us the following slight strengthenings of
Lemma~\ref{lemma:lequiv-equiv} and Lemma~\ref{lemma:contr-char}.
\begin{lemma}
    If $\univ$ satisfies function extensionality, and $A:\univ$ and $B:\univ$ are propositions, then
    \[(A\lequiv B)\simeq (A \simeq B).\]
\end{lemma}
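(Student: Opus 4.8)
The plan is to show that both sides are propositions and that there are maps in each direction, and then to invoke Lemma~\ref{lemma:lequiv-equiv} one final time. First I would check that $A\lequiv B$ is a proposition: it is by definition $(A\to B)\times(B\to A)$, and since $B$ is a proposition the type $A\to B$ is a product of a family of propositions, hence a proposition by function extensionality; symmetrically $B\to A$ is a proposition; and a product of two propositions is a proposition. Next I would check that $A\simeq B$ is a proposition: it is $\qSig{f:A\to B}\isEquiv(f)$, where $A\to B$ is a proposition as just noted, and $\isEquiv(f)$ is a proposition for every $f$ by Theorem~\ref{thm:equiv-is-prop} (which is exactly where function extensionality is used again); a proposition-indexed sum of propositions is a proposition.

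Second, I would produce the two maps. In one direction, Lemma~\ref{lemma:lequiv-equiv} already gives $(A\lequiv B)\to(A\simeq B)$. In the other direction, given $(f,e):A\simeq B$ we obtain $f:A\to B$ from the first projection and, since $e$ contains in particular a left inverse $g:B\to A$ with $g\comp f\htpy\id_A$, the pair $(f,g)$ inhabits $(A\to B)\times(B\to A)\jeq (A\lequiv B)$; so $(A\simeq B)\to(A\lequiv B)$. Hence $A\lequiv B$ and $A\simeq B$ are logically equivalent.

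Finally, since $A\lequiv B$ and $A\simeq B$ are both propositions and $(A\lequiv B)\lequiv (A\simeq B)$, applying Lemma~\ref{lemma:lequiv-equiv} to this pair of propositions yields the desired equivalence $(A\lequiv B)\simeq (A\simeq B)$.

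I do not expect any real obstacle here: the content is entirely bookkeeping, and the only subtlety is making sure that each use of function extensionality is accounted for — once to see that function types into a proposition are propositions, and once (via Theorem~\ref{thm:equiv-is-prop}) to see that $\isEquiv(f)$ is a proposition. If one wanted to avoid citing Theorem~\ref{thm:equiv-is-prop} one could instead argue directly that $A\simeq B$ is a retract of, or logically equivalent to, the proposition $A\lequiv B$ and conclude via Theorem~\ref{thm:prop-retract-closed}, but citing the already-proved fact that $\isEquiv$ is a proposition is cleanest.
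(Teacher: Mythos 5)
Your proof is correct and follows essentially the same route as the paper: the paper likewise observes that $A\lequiv B$ and $A\simeq B$ are both propositions (the latter being a proposition-indexed sum of the propositions $\isEquiv(f)$, using function extensionality) and are logically equivalent, and then concludes via Lemma~\ref{lemma:lequiv-equiv}. Your version simply spells out the bookkeeping the paper leaves implicit.
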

\begin{proof}
    We have that 
    \[(A\lequiv B)\lequiv (A \simeq B),\]
    and both sides are propositions, so this implies
    \[(A\lequiv B)\simeq (A \simeq B).\qedhere\]
\end{proof}
So in the presence of function extensionality we can reformulate the type of
proposition extensionality as
    \[\qPi{A,B:\univ}\isProp(A)\to \isProp(B)\to (A\simeq B)\to (A=B).\]
Since we have that $A\simeq B$ is a proposition when both types are propositions,
this tells us that $A\simeq B$ is a
retract of $A=B$, and since $A\simeq B$ is a proposition, the retraction can be given
by any map $(A=B)\to(A\simeq B)$. We may as well use the map $\idtoequiv$. In short, we have
that proposition extensionality and function extensionality together imply that
${\idtoequiv:(A= B)\to (A\simeq B)}$ has a right inverse when $A$ and $B$ are
propositions.  As a corollary, we have
\begin{lemma}\label{lemma:propext-to-propuniv}
Proposition extensionality and function
extensionality together imply
\[\qPi{A,B:\univ}\isProp(A)\to\isProp(B)\to\isEquiv(\idtoequiv_{A,B}).\]
\end{lemma}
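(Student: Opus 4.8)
The plan is to reduce the claim to showing that, for propositions $A$ and $B$, the identity type $A=B$ is itself a proposition. Once that is known the lemma is quick: $A\simeq B$ is a proposition (under function extensionality, being a $\Sigma$ of the propositions $A\to B$ and $\isEquiv(f)$), and the discussion preceding the lemma already produces, from $\propext$ and function extensionality, a map $(A\simeq B)\to(A=B)$ (e.g.\ a right inverse of $\idtoequiv_{A,B}$); any map between propositions that admits a map in the reverse direction is an equivalence, by the proof of Lemma~\ref{lemma:lequiv-equiv}, which exhibits the given map itself as one side of an inverse pair.

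So the real work is showing $A=B$ is a proposition. I would do this by proving that $\Prop\defeq\qSig{X:\univ}\isProp(X)$ is a set. Since $\isProp$ is proposition-valued under function extensionality, Theorem~\ref{thm:patheq} gives that $\ap_{\pr_0}:(P=Q)\to(\pr_0(P)=\pr_0(Q))$ is an equivalence for all $P,Q:\Prop$, so it is enough to make $(P=Q)$ a proposition, and for that I will invoke Escard\'o's retract observation, Lemma~\ref{lemma:identity-section}, with $R(P,Q)\defeq\big(\pr_0(P)\simeq\pr_0(Q)\big)$. Take $r_{P,Q}\defeq\idtoequiv_{\pr_0(P),\pr_0(Q)}\comp\ap_{\pr_0}:(P=Q)\to R(P,Q)$, and take $s_{P,Q}:R(P,Q)\to(P=Q)$ to be: forget the equivalence structure to get $\pr_0(P)\lequiv\pr_0(Q)$, apply $\propext$ to get $\pr_0(P)=\pr_0(Q)$, then lift along the equivalence $\ap_{\pr_0}$ to a path $P=Q$. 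Because $R(P,Q)$ is a proposition (function extensionality), $r_{P,Q}\comp s_{P,Q}\htpy\id_{R(P,Q)}$ holds for free --- and that is exactly the single composite hypothesis Lemma~\ref{lemma:identity-section} asks for --- so the lemma concludes $r_{P,Q}$ is an equivalence $(P=Q)\simeq R(P,Q)$. Hence $(P=Q)$ is equivalent to a proposition, so $\Prop$ is a set.

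Transporting back: for propositions $A,B:\univ$ pick $i_A:\isProp(A)$, $i_B:\isProp(B)$ and put $P\defeq(A,i_A)$, $Q\defeq(B,i_B)$. Then $\Prop$ being a set makes $(P=Q)$ a proposition, and $\ap_{\pr_0}:(P=Q)\to(A=B)$ is an equivalence (Theorem~\ref{thm:patheq}), so $A=B$ is a proposition --- finishing via the first paragraph. Alternatively, and even more directly, $r_{P,Q}\jeq\idtoequiv_{A,B}\comp\ap_{\pr_0}$ by definition, so $\idtoequiv_{A,B}=r_{P,Q}\comp(\ap_{\pr_0})^{-1}$ is a composite of equivalences.

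The main obstacle I expect is the bookkeeping in passing between paths in $\univ$ and paths in $\Prop$: the lift used in defining $s_{P,Q}$ must be justified via $\isProp$ being proposition-valued and Theorem~\ref{thm:patheq}, and one should note that Lemma~\ref{lemma:identity-section} is here applied with $X=\Prop$ (one universe level up), i.e.\ used universe-polymorphically. The remaining facts --- that $A\simeq B$, $\pr_0(P)\lequiv\pr_0(Q)$ and $R(P,Q)$ are propositions under function extensionality, and that composites of equivalences are equivalences (compose inverses, then apply Theorem~\ref{thm:equiv-char}) --- are routine with the tools already developed.
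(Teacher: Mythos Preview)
Your proof is correct and takes essentially the same route as the paper: the discussion preceding the lemma establishes that $\idtoequiv_{A,B}$ has a right inverse (since $A\simeq B$ is a proposition under function extensionality), and the passage from there to equivalence is exactly your appeal to Lemma~\ref{lemma:identity-section} over $\Prop$, with the $\ap_{\pr_0}$ bookkeeping you carefully spell out and the paper leaves implicit. Your closing ``alternatively, and even more directly'' remark is in fact the cleanest way to conclude and matches the paper's organization best, since the paper derives that $A=B$ is a proposition (Theorem~\ref{thm:propext-isprop}) as a \emph{consequence} of this lemma rather than as a stepping stone toward it.
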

This principle is known as \emph{propositional univalence}.

In the section on function extensionality, we said we wanted axioms to be propositions,
but we have not yet shown proposition extensionality to be a proposition.
\begin{theorem}\label{thm:propext-isprop}
    If $\univ$ satisfies function extensionality, then propositional univalence implies
    $A=B$ is a proposition whenever $A$ and $B$ are propositions. Hence, in the
    presence of function extensionality, proposition extensionality is a proposition.
\end{theorem}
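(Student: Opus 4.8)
The plan is to prove the two assertions in turn: the first is almost immediate from the hypotheses, and the second follows by the usual device of Lemma~\ref{lemma:prop-to-prop}. Throughout, function extensionality for $\univ$ is assumed.

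For the first assertion, fix propositions $A,B:\univ$. Propositional univalence gives $\isEquiv(\idtoequiv_{A,B})$, hence an equivalence $(A=B)\simeq(A\simeq B)$. I would then observe that $A\simeq B$ is itself a proposition: it is the sum $\qSig{f:A\to B}\isEquiv(f)$, its base $A\to B$ is a proposition because $B$ is a proposition and function extensionality makes propositions an exponential ideal, and $\isEquiv(f)$ is a proposition for every $f$ by Theorem~\ref{thm:equiv-is-prop}; a sum of propositions over a proposition is a proposition. Since an equivalence $X\simeq Y$ exhibits $X$ as a retract of $Y$, Theorem~\ref{thm:prop-retract-closed} now tells us that $A=B$ is a proposition.

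For the second assertion I would apply Lemma~\ref{lemma:prop-to-prop}, so it suffices to construct a map $\PropExt_{\univ}\to\isProp(\PropExt_{\univ})$. Given $p:\PropExt_{\univ}$, Lemma~\ref{lemma:propext-to-propuniv} says that $p$ together with function extensionality yields propositional univalence, so by the first assertion, for all propositions $A,B:\univ$ the type $A=B$ is a proposition. It then remains only to climb back up the curried dependent product: for fixed $A,B:\univ$ and fixed inhabitants of $\isProp(A)$, $\isProp(B)$ and $A\lequiv B$ the conclusion $A=B$ is a proposition, so applying the exponential-ideal property of propositions pointwise along these three hypotheses shows $\isProp(A)\to\isProp(B)\to(A\lequiv B)\to(A=B)$ is a proposition, and applying it once more over $A,B:\univ$ gives $\isProp(\PropExt_{\univ})$, completing the input to Lemma~\ref{lemma:prop-to-prop}.

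I do not expect a genuine obstacle here; the only point requiring mild care is the bookkeeping in the last step — the fact that $A=B$ is a proposition is available only under the hypotheses $\isProp(A)$, $\isProp(B)$, $A\lequiv B$, so the exponential-ideal property must be invoked with the predicate ``is a proposition'' applied fibrewise over exactly those hypotheses before one may conclude that the whole curried type, and then $\PropExt_{\univ}$ itself, is a proposition.
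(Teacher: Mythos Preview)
Your proof is correct and follows essentially the same approach as the paper. The only cosmetic difference is that the paper shows $A\simeq B$ is a proposition by passing through the equivalence $(A\simeq B)\simeq(A\lequiv B)$ and noting $A\lequiv B$ is a proposition, whereas you argue directly from the $\Sigma$-type decomposition of $A\simeq B$; both arguments are valid and in fact the paper itself remarks on your version just before stating the theorem.
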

\begin{proof}
    When $A$ and $B$ are propositions, we already know $(A\simeq B) \simeq (A\lequiv B)$.
    As $A\lequiv B$ is a proposition, proposition extensionality implies that $A=B$
    is a proposition as well.

    As proposition extensionality implies propositional univalence, we have
    \[\PropExt\to \isProp(A=B),\]
    whenever $A$ and $B$ are propositions. The type of proposition
    extensionality is of the form $\qPi{x:X}A=B$, and since propositions
    form an exponential ideal, we have 
    \[\PropExt\to\isProp(\PropExt).\]
    Hence, $\PropExt$ is a proposition.
\end{proof}

In the statement of propositional univalence, we could drop the condition that $A$
and $B$ are propositions, to get the type
\[\UA_{\univ}\defeq \qPi{A,B:\univ}\isEquiv(\idtoequiv_{A,B}).\]
This type says that all equivalences arise from a path $A=B$. The axiom
that it is inhabited is known as the \emph{univalence axiom}.
\index{universe!univalent|see{univalence}}
\begin{definition}
    A universe $\univ$ is \nameas{univalent}{univalence} when there is an element of $\UA_{\univ}$. In
    particular, if $\univ$ is univalent, for $A,B:\univ$ we have a map
    \[\ua:(A\simeq B)\to (A=B)\]
    such that $\ua$ and $\idtoequiv$ are inverse.
\end{definition}
In fact, by Lemma~\ref{lemma:identity-section}, it is enough to say that $\idtoequiv$
has a left inverse.
\begin{theorem}\label{thm:identity-section}
    If $\idtoequiv:(A=B)\to (A\simeq B)$ has a right inverse for each $A,B:\univ$,
    then $\idtoequiv$ is an equivalence for each~$A,B:\univ$.
\end{theorem}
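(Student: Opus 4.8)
The plan is to recognize this as a direct instance of Lemma~\ref{lemma:identity-section} (Escard\'o's observation on retracts of identity types), with the universe itself playing the role of the base type. Concretely, I would invoke that lemma with $X$ taken to be $\univ$, with the family $R:\univ\to\univ\to\univ$ defined by $R(A,B)\defeq (A\simeq B)$ --- this is a legitimate type family, since for $A,B:\univ$ the type $A\simeq B$ is again (modulo the usual size bookkeeping) an element of $\univ$ --- with $r_{A,B}\defeq \idtoequiv_{A,B}$ in the role of the map $(x=y)\to R(x,y)$, and with $s_{A,B}$ the supplied right inverse of $\idtoequiv_{A,B}$ in the role of the map $R(x,y)\to (x=y)$.

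The hypothesis of Lemma~\ref{lemma:identity-section} requires $r_{A,B}$ to be a left inverse of $s_{A,B}$, i.e.\ that $r_{A,B}\comp s_{A,B}\htpy \id_{A\simeq B}$; but this is exactly the assertion that $s_{A,B}$ is a right inverse of $\idtoequiv_{A,B}$, which is what we have assumed for every $A,B:\univ$ (no choice is needed, since the hypothesis already hands us, uniformly in $A$ and $B$, a right inverse together with its witnessing homotopy). The lemma then concludes that $r_{A,B}$ and $s_{A,B}$ are two-sided inverse, so in particular $\idtoequiv_{A,B}$ has an inverse; and by Theorem~\ref{thm:equiv-char}, having an inverse implies being an equivalence. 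Hence $\idtoequiv_{A,B}$ is an equivalence for all $A,B:\univ$, as required.

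There is essentially no computation to grind through here: all of the genuine work lives inside Lemma~\ref{lemma:identity-section}, which leverages the fact that any idempotent endofunction of an identity type is homotopic to the identity. The only thing that needs a moment's care is the variance of ``inverse'': because $\idtoequiv$ goes $(A=B)\to(A\simeq B)$, it occupies the slot of the retraction $r$ (not the section $s$) in Lemma~\ref{lemma:identity-section}, so it is precisely a \emph{right} inverse of $\idtoequiv$ that provides the section and meets the lemma's hypothesis. I would also note, in keeping with the surrounding treatment of extensionality principles as properties of a fixed universe, that the size bookkeeping involved in using $\univ$ as the base type is routine and can be made precise by working with $\univ[i]:\univ[i+1]$.
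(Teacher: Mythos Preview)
Your proposal is correct and is exactly the approach the paper intends: the theorem is stated immediately after the remark ``In fact, by Lemma~\ref{lemma:identity-section}, it is enough \ldots'', and no further proof is given because the result is meant to be a direct instantiation of that lemma with $X\defeq\univ$ and $R(A,B)\defeq(A\simeq B)$. Your care about the variance (that $\idtoequiv$ plays the role of $r$, so the hypothesis needed is a \emph{right} inverse) is well placed---the paper's preceding sentence actually says ``left inverse'', which appears to be a slip; the theorem statement and your reading are the correct ones.
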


\index{function extensionality|(}
\index{univalence|(}
Function extensionality and proposition extensionality follow from
univalence. Proposition extensionality follows directly, but function extensionality
takes more work. The argument here is abstracted from Voevodsky's original
proof that univalence implies function extensionality, as presented
in~\cite{gambino2011univalence}. It hinges around the total
type of the identity relation, which we call the \emph{diagonal} of a type.

\begin{definition}
Given a type $A:\univ$, define the \emph{diagonal} of $A$ to be the type
\[\Delta A\defeq \qSig{a,a':A}a=a'.\]
    There is then a \emph{diagonal map} $\delta_A:A\to\Delta A$ given by
\[\delta(x) \defeq (x,x,\refl).\]
\end{definition}
It is straightforward to check that the diagonal map is an equivalence. In
particular, we have two candidate inverses given by $\pr_0$ and $\pr_1$. Both are
left inverse directly since we have the equalities $\pr_0(\delta(x)) =\pr_1(\delta(x)) = x$.
For the other direction, we have
\[\delta(\pr_0(x,y,p)) = (x,x,\refl[x])),\]
so we wish to see that $(x,x,\refl[x]) =_{\Delta A} (x,y,p)$. As $\refl:x=x$, we need to
see that we have $q:x=y$ such that $\transport^{x=-}(q,\refl[x]) = p$. Of course, 
Theorem~\ref{thm:path-transport} determines a witness $w:\transport(p,\refl[x]) =
\refl[x]\ct p$, so the given $p$ suffices. For $\pr_1$, the argument is similar.

Now consider the class of maps arising from paths, the
\nameas{path-induced equivalences}{equivalence!path-induced}:
\defineopfor{$\isPI$}{ispie}{equivalence!path-induced}
\[\isPI_{A,B}(f)\defeq \qSig{p:A=B}f=\coe(p).\]
Note that this type is the Curry-Howard image of $\coe$.
We have $r_A:\isPI_{A,A}(\id_A)$ given by $r_A\defeq (\refl[A],\refl[\id_A])$.
We can then define a version of $\idtoequiv$ for
$\isPI$ using the function $\epsilon_{A,B}:\Pi_{p:A=B}\isPI_{A,B}(\coe(p))$ defined by
path induction with
\[\epsilon(\refl[A]) \defeq r_A.\]
In particular, we have $\isPI_{A,B}(\coe(p))$ for any $p:A=B$.
Write $A\pieq B$ for
\[ \qSig{f:A\to B}\isPI_{A,B}(f).\]
Then we have 
\[\idtopie:A=B\to A\pieq B\]
given by
\[\idtopie(p)\defeq (\coe(p),\epsilon(p)).\]
Note that $\idtopie$ is one direction of an equivalence $(A\pieq B)\simeq (A=B)$
given by Lemma~\ref{lemma:domain-is-fiber}. Take $S_{A,B}:A\pieq B\to A= B$ to be a
section of $\idtopie$.
\begin{lemma}
    Let $P:\qPi{A,B:\univ}(A\to B) \to\univ$ such that $\qPi{A:\univ}P(\id_A)$. For
    every $A,B:\univ$ and $p:A=B$, the map $\idtofun(p)$ satisfies $P$. Formally,
    \[\qPi{A,B:\univ}\qPi{p:A=B}P(\coe(p)).\]
\end{lemma}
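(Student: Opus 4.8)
The statement follows from a single path induction. The plan is to define the motive $C:\qPi{A,B:\univ}(A=B)\to\univ$ by $C(A,B,p)\defeq P(\coe(p))$ and apply the elimination principle $J$ for the identity type on $\univ$. For this I must supply, for each $A$, a term of type $C(A,A,\refl[A])$; but $C(A,A,\refl[A])\jeq P(\coe(\refl[A]))\jeq P(\id_A)$, since $\coe$ was defined by path induction with $\coe(\refl[A])\defeq\id_A$, and $P(\id_A)$ is exactly the hypothesis. Thus $J$ delivers a term of type $\qPi{A,B:\univ}\qPi{p:A=B}P(\coe(p))$, which is the claim.

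First I would note that there is no obstruction arising from the fact that we are eliminating a path in the universe: path induction is available for the identity type of any type, in particular for $\Id_{\univ}$, and although the codomain $\univ$ of $P$ may sit in a higher universe than $A$ and $B$, motives are permitted to target any universe, so this is harmless. Second, although we have just set up the apparatus of path-induced equivalences---$\isPI$, $\epsilon$, $\idtopie$ and the section $S_{A,B}$---this apparatus does not actually shorten the present argument: any attempt to reduce the goal via $\idtopie$ and $S$ merely transports it to the logically equivalent statement $\qPi{A,B:\univ}\qPi{w:A\pieq B}P(\pr_0(w))$ (using $w=\idtopie(S(w))$ in one direction and $w=\idtopie(p)$ in the other), which one still proves by the same path induction. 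That apparatus is instead what will be needed in the sequel, where, in combination with univalence, one upgrades this lemma from ``$P$ holds for every $\coe(p)$'' to ``$P$ holds for every equivalence'' en route to deriving function extensionality.

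Consequently the only genuine step is the base-case computation $\coe(\refl[A])\jeq\id_A$, and the main thing to be careful about is simply resisting the temptation to invoke the heavier machinery: the lemma is honestly at the level of $J$.
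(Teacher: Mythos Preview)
Your proposal is correct and matches the paper's proof exactly: both proceed by a single path induction, reducing to the base case $P(\coe(\refl[A]))\jeq P(\id_A)$, which is the hypothesis. Your additional remarks about universe levels and the irrelevance of the $\isPI$ apparatus for this particular lemma are accurate and well-observed, though the paper states the proof in one line without them.
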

\begin{proof}
    By path induction, we need only see $P(\idtofun(\refl_A))$ for each $A$, and we
    have $P(\id_A)$ by assumption.
\end{proof}
That is, we can prove something for all path-induced equivalences by looking at the
identity maps
\begin{lemma}
    Let
    $P:\qPi{A,B:\univ}(A\to B) \to\univ$ such that $\qPi{A:\univ}P(\id_{A})$, then
    every path-induced equivalence in $\univ$ satisfies $P$. That is, for all
    $A,B:\univ$ and $f:A\to B$,
    $\isPI(f)\to P(f)$.
\end{lemma}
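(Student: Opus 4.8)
The plan is to unfold the definition of $\isPI$ and then appeal directly to the lemma just proved, transporting along the supplied path. Recall that $\isPI_{A,B}(f)$ is by definition $\qSig{p:A=B}f=\coe(p)$, so an inhabitant of $\isPI(f)$ is a pair $(p,q)$ with $p:A=B$ and $q:f=\coe(p)$. The previous lemma, applied to this very $p$, already gives us $P(\coe(p))$. So all that remains is to move this witness across the path $q$ from $\coe(p)$ back to $f$.

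Concretely, I would argue as follows. Fix $A,B:\univ$ and $f:A\to B$, and suppose $(p,q):\isPI_{A,B}(f)$, where $p:A=B$ and $q:f=\coe(p)$. By the preceding lemma we have an element
\[c:P(\coe(p)).\]
Then $\transport^{P}(q^{-1},c):P(f)$, using the type family $P$ (restricted to maps $A\to B$, i.e. $\lambda g.P(g)$ with $g:A\to B$) and the path $q^{-1}:\coe(p)=f$ from Lemma~\ref{lemma:path-concat}. This produces the desired element of $P(f)$, so we obtain the map $\isPI_{A,B}(f)\to P(f)$.

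There is essentially no obstacle here: the statement is a routine consequence of the previous lemma together with the fact that $\transport$ along an identity lets us replace a function by any function propositionally equal to it. The only point worth a moment's care is the direction of the transport — we have $P$ of $\coe(p)$ and want $P$ of $f$, so we transport along $q^{-1}$ rather than $q$. (Alternatively, one could perform path induction on $q$ directly, reducing to the case $f\jeq\coe(p)$, which is immediate; but the transport formulation makes the dependence on the earlier lemma explicit.) This lemma is exactly the tool needed in the subsequent development to reduce statements about path-induced equivalences to statements about identity maps, which is how Voevodsky's argument that univalence implies function extensionality proceeds.
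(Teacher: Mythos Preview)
Your proof is correct and is essentially the same argument as the paper's, just unpacked more explicitly. The paper phrases it via the section $S:A\pieq B\to A=B$ of $\idtopie$, observing that the function component of any $e:A\pieq B$ equals $\coe(S(e))$ and then invoking the previous lemma; but since $S(f,(p,q))$ is just $p$ and the identification of $f$ with $\coe(S(e))$ is just $q$, this amounts to exactly your transport along $q^{-1}$ of the witness $P(\coe(p))$.
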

\begin{proof}
    Note that every $e:A\pieq B$ can be written as $\coe(S(e))$, and apply the
    previous lemma. The key point here is that $\idtopie$ has a section.
\end{proof}
\begin{corollary}\label{cor:pi-to-equiv}
   We have that $\isPI(f)\to \isEquiv(f)$.
\end{corollary}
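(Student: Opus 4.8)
The plan is to apply the immediately preceding lemma with the predicate taken to be $\isEquiv$ itself. Concretely, set $P\defeq\lambda A\,B.\,\isEquiv$, so that for $f:A\to B$ we have $P(f)\jeq\isEquiv(f)$; note that $\isEquiv$ already has exactly the currying shape $\qPi{A,B:\univ}(A\to B)\to\univ$ required for $P$, so there is nothing to massage. The lemma then reduces the desired statement $\qPi{A,B:\univ}\qPi{f:A\to B}\isPI(f)\to\isEquiv(f)$ to the single hypothesis $\qPi{A:\univ}\isEquiv(\id_A)$.

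This hypothesis we have already verified: the identity function $\id_A:A\to A$ is its own left and right inverse, witnessed by
\[\big((\id_A,\lambda x.\refl),(\id_A,\lambda x.\refl)\big):\isEquiv(\id_A),\]
using the computation rule $\id_A\comp\id_A\jeq\id_A$ (and the $\eta$-rule) to see that $\id_A\comp\id_A\htpy\id_A$. Supplying this to the previous lemma yields $\isPI(f)\to\isEquiv(f)$ for every $f:A\to B$ and every $A,B:\univ$, which is precisely the corollary.

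There is essentially no obstacle at this point; the substantive work was already done upstream, in showing that $\idtopie$ admits a section (so that reasoning about path-induced equivalences can be carried out by path induction on the underlying identification in the universe) and in observing that identity maps are equivalences. The only thing one must check is that the implicit universe and endpoint arguments of $\isEquiv$ align with those expected by the lemma's $P$, and this is immediate from the signature of $\isEquiv$.
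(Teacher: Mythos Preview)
Your proof is correct and is exactly the paper's approach: instantiate the preceding lemma with $P\defeq\isEquiv$ and discharge its hypothesis by the already-established fact that $\id_A$ is an equivalence. The paper's proof is the single line ``The identity map is an equivalence''; you have simply unpacked the same argument.
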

\begin{proof}
    The identity map is an equivalence.
\end{proof}
In particular, if $f$ is a path-induced equivalence, then $f$ is both a section and a
retraction.

Ultimately, univalence tells us that equivalences are exactly path-induced
equivalences.
\begin{lemma}\label{lemma:ua-to-equiv-is-pi}
    If $\univ$ is univalent, then for any $A,B:\univ$ we have
    \[\qPi{f:A\to B}(\isEquiv(f)\simeq \isPI(f)).\]
\end{lemma}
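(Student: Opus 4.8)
The plan is to upgrade the single map of Corollary~\ref{cor:pi-to-equiv} to a fiberwise equivalence, by recognising its total map as something built out of $\idtopie$ and $\idtoequiv$ --- the former of which is always an equivalence (one half of the equivalence $(A\pieq B)\simeq(A=B)$ from Lemma~\ref{lemma:domain-is-fiber}), and the latter of which is an equivalence by the univalence of $\univ$. I note up front that the tools needed here --- Theorem~\ref{thm:fiberwise-equivalence}, and the closure of ``being an equivalence'' under composition and under homotopy --- do not use function extensionality, so the argument is legitimate at this stage.

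Write $\phi_f:\isPI(f)\to\isEquiv(f)$ for the map produced by Corollary~\ref{cor:pi-to-equiv}, and let $\phi\defeq\lambda f.\phi_f$, a fiberwise map over $A\to B$; its total map is a function $\phi_\Sigma:(A\pieq B)\to(A\simeq B)$, $(f,\pi)\mapsto(f,\phi_f(\pi))$, which leaves the first component untouched. By Theorem~\ref{thm:fiberwise-equivalence} it suffices to show $\phi_\Sigma$ is an equivalence. Since $\idtopie:(A=B)\to(A\pieq B)$ is an equivalence, it is enough to show that $\phi_\Sigma\comp\idtopie:(A=B)\to(A\simeq B)$ is an equivalence; and for that I will prove
\[\phi_\Sigma\comp\idtopie\htpy\idtoequiv,\]
after which $\phi_\Sigma\comp\idtopie$ is an equivalence by univalence, and hence so is $\phi_\Sigma$, being homotopic to the composite $(\phi_\Sigma\comp\idtopie)\comp\idtopie^{-1}$ of equivalences.

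To produce the homotopy $\phi_\Sigma\comp\idtopie\htpy\idtoequiv$ I would use path induction. On the first component both sides send $p$ to $\coe(p)$: indeed $\idtopie(p)=(\coe(p),\epsilon(p))$, $\phi_\Sigma$ fixes the first component, and $\idtoequiv(p)$ also has first component $\coe(p)$ by its defining path induction. So it remains to match second components, and by path induction we may take $p\jeq\refl$: then $\idtopie(\refl)\jeq(\id_A,(\refl,\refl))$, so $\phi_\Sigma(\idtopie(\refl))$ has second component $\phi_{\id_A}(\refl,\refl)$, which --- unwinding the definition of the map of Corollary~\ref{cor:pi-to-equiv}, which reduces a path-induced equivalence (via the section of $\idtopie$) to the case of an identity map and then supplies ``$\id_A$ is an equivalence'' --- is precisely the canonical proof that $\id_A$ is an equivalence; and this is exactly the second component of $\idtoequiv(\refl)$. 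Hence the two maps agree at $\refl$, and path induction delivers the homotopy. Then Theorem~\ref{thm:fiberwise-equivalence} gives that each $\phi_f$ is an equivalence, so $\isPI(f)\simeq\isEquiv(f)$, and composing with the inverse of $\phi_f$ gives $\isEquiv(f)\simeq\isPI(f)$ as desired.

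The main obstacle is the verification that $\phi_\Sigma\comp\idtopie\htpy\idtoequiv$: conceptually this is just the coherence of three things that are all ``generated from $\coe$ by path induction'', but writing it carefully means unfolding the Corollary~\ref{cor:pi-to-equiv} construction (in particular its use of a section of $\idtopie$) far enough to see that, once $p$ is specialised to $\refl$, it collapses to the standard witness that the identity is an equivalence. Everything else in the argument is a short formal chase using results already in hand.
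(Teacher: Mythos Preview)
Your proof is correct and follows essentially the same route as the paper's: reduce via Theorem~\ref{thm:fiberwise-equivalence} to showing the total map $(A\pieq B)\to(A\simeq B)$ is an equivalence, then identify this map (up to composition with the equivalence $\idtopie$) with $\idtoequiv$ and invoke univalence. The paper states this last step tersely as ``the induced map $e$ factors as the equivalence $(A\pieq B)\simeq(A=B)$ followed by $\idtoequiv$''; your path-induction verification of $\phi_\Sigma\comp\idtopie\htpy\idtoequiv$ is precisely a careful justification of that factorisation claim, and your remark that none of the ingredients require function extensionality is apt given where the lemma sits in the development.
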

\begin{proof}
    By Theorem~\ref{thm:fiberwise-equivalence}, we have that all maps
    $\isPI(f)\to\isEquiv(f)$ are equivalences iff the induced map
    \[e:(\qSig{f:A\to B}\isPI(f))\to (\qSig{f:A\to B}\isEquiv(f))\]
    is an equivalence. The induced map $e$ factors as the equivalence
    $(A\pieq B)\simeq (A=B)$ followed by $\idtoequiv$. Since univalence says that
    $\idtoequiv$ is an equivalence, it says that $e$ is the composite of
    equivalences.
\end{proof}
However, to show that univalence implies function extensionality, it is enough to
assume that the path-induced equivalences satisfy two properties that are
satisfied by equivalences, namely it is enough to assume
\begin{itemize}
    \item path-induced equivalences are closed under homotopy;
    \item all diagonal maps $\delta_X:X\to\Delta_X$ are path-induced equivalences.
\end{itemize}
We assume these explicitly as needed in the following lemmas.
\begin{lemma}\label{lemma:precomp-abstract-equiv}
    Suppose path-induced equivalences are closed under homotopies.
    If $f$ is a path-induced equivalence, then so is $(-\comp f)$.
\end{lemma}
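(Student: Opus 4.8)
The plan is to invoke the lemma proved just above, that every path-induced equivalence satisfies any predicate $P:\qPi{A,B:\univ}(A\to B)\to\univ$ for which $\qPi{A:\univ}P(\id_A)$ holds (the one whose proof uses that $\idtopie$ has a section). Fix an arbitrary type $C:\univ$; it suffices to show that precomposition $(-\comp f):(B\to C)\to(A\to C)$ is a path-induced equivalence, where $f:A\to B$. Take the predicate $P(A,B,h)\defeq \isPI(\lam{g}g\comp h)$, asserting that precomposition with $h$ is a path-induced equivalence between the relevant function types. Since $f$ is a path-induced equivalence by hypothesis, once the side condition $\qPi{A:\univ}P(\id_A)$ is checked, the lemma hands us $P(f)$, which is exactly the claim.

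So the only thing to verify is that $\lam{g}g\comp\id_A$ is a path-induced equivalence for every $A:\univ$. By the $\eta$ rule, $g\comp\id_A\jeq g$ for each $g:A\to C$, so $\lam{g}g\comp\id_A$ is homotopic to $\id_{A\to C}$ (via $\lam{g}\refl$). The identity map $\id_{A\to C}$ is path-induced, witnessed by $(\refl,\refl):\isPI(\id_{A\to C})$; and we are assuming precisely that path-induced equivalences are closed under homotopy. Hence $\isPI(\lam{g}g\comp\id_A)$, that is $P(\id_A)$, holds.

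Applying the lemma then gives $\isPI(\lam{g}g\comp f)$, i.e.\ $(-\comp f)$ is path-induced; as $C$ was arbitrary, this proves the statement. There is no real obstacle: the whole content is the ``reduce to identity maps'' principle established immediately beforehand, and the only points needing care are the bookkeeping around the fixed codomain $C$ and the $\eta$-step observing that $(-\comp\id_A)$ is (judgmentally, hence homotopically) the identity, which is what makes the homotopy-closure hypothesis bite.
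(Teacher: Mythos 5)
Your proof is correct, but it takes a different route from the paper's. The paper does not invoke the ``check it on identity maps'' lemma here; instead it computes directly that for $p:X=X'$ and $g:X'\to Y$ one has $\transport^{\lambda X.X\to Y}(p^{-1},g) = g\comp\coe(p)$ (by path induction), so that $(-\comp f)$ is pointwise equal to the transport map along $S(f,e)^{-1}$, and then uses the fact that transport maps are homotopic to path-induced functions together with closure under homotopy. You instead instantiate the preceding reduction lemma with the predicate $P(h)\defeq\isPI(\lambda g.\,g\comp h)$ (for a fixed but arbitrary codomain $C$) and verify the base case $P(\id_A)$ via the $\eta$ rule, which is exactly where your use of homotopy-closure enters --- and given the paper's judgmental $\eta$ rule, $\lambda g.\,g\comp\id_A$ is even judgmentally the identity, so your appeal to homotopy-closure there is a safe but nearly redundant step. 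Both arguments ultimately rest on the same ingredient (the section $S$ of $\idtopie$ obtained from Lemma~\ref{lemma:domain-is-fiber}, filtered through path induction), so they are close cousins: yours is a cleaner exercise of the abstraction the paper has just set up and avoids any explicit transport computation, while the paper's version makes the relationship between precomposition and transport explicit, which is the conceptual fact it wants on the table for the surrounding development. Either proof is acceptable; just be aware that your quantification over the codomain $C$ must be kept outside the predicate (as you did), since the lemma you invoke fixes $P$ before quantifying over $f$.
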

\begin{proof}
    We need only see that $(-\comp f)$ is homotopic to a path-induced map when $\isPI(f)$.
    First note that for any $p:X=X'$ and $g:X'\to Y$, we have
    $\transport^{\lambda X.X\to Y}(p^{-1},g) = g\comp \coe(p)$, by path
    induction: when $p\jeq \refl$ both sides are equal to $g$. So then we have
    that 
    \[\transport^{\lambda X.X\to Y}(S(f,e)^{-1},g) = g\comp f,\]
    and by the discussion in Section~\ref{section:homotopies}, we know that every
    transport map is homotopic to a path-induced function.
\end{proof}
\begin{theorem}\label{thm:ua-to-funext-key}
    If path-induced equivalences are closed under homotopies, and for each $X$,
    we have $\isPI(\delta_X)$,
    then for any $f,g:A\to B$ we have $f\sim g \to f=g$.
\end{theorem}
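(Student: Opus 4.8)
The plan is to carry out Voevodsky's diagonal argument. Fix the target type $B$ and consider its diagonal $\Delta B = \qSig{b,b':B}b=b'$, equipped with the two projections $\pr_0,\pr_1:\Delta B\to B$ satisfying $\pr_0(b,b',p)\jeq b$ and $\pr_1(b,b',p)\jeq b'$. Since $\pr_i(x,x,\refl)\jeq x$, both compose with the diagonal map to the identity: $\pr_0\comp\delta_B = \id_B = \pr_1\comp\delta_B$. The strategy is first to prove $\pr_0 = \pr_1$ as elements of $\Delta B\to B$, and then to transport this equality along a suitable precomposition map to obtain $f = g$.

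For the first step I would invoke the hypothesis $\isPI(\delta_B)$ together with the assumption that path-induced equivalences are closed under homotopies. Lemma~\ref{lemma:precomp-abstract-equiv} then gives that $(-\comp\delta_B):(\Delta B\to B)\to(B\to B)$ is a path-induced equivalence, hence an equivalence by Corollary~\ref{cor:pi-to-equiv}, and therefore has an inverse by Theorem~\ref{thm:equiv-char}. In particular $(-\comp\delta_B)$ is left-cancellable, so from $\pr_0\comp\delta_B = \id_B = \pr_1\comp\delta_B$ we extract a path $\alpha:\pr_0 = \pr_1$.

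For the second step, suppose $f,g:A\to B$ and $H:f\htpy g$. Form the map $\langle f,g,H\rangle:A\to\Delta B$ defined by $a\mapsto(f(a),g(a),H(a))$, and observe that $\pr_0\comp\langle f,g,H\rangle\jeq f$ and $\pr_1\comp\langle f,g,H\rangle\jeq g$ by the computation rule for $\Sigma$ and the $\eta$ rule for functions. Applying $\ap$ of the precomposition map $(-\comp\langle f,g,H\rangle):(\Delta B\to B)\to(A\to B)$ to $\alpha$ yields a path from $\pr_0\comp\langle f,g,H\rangle$ to $\pr_1\comp\langle f,g,H\rangle$, that is, a path $f = g$, as required.

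I do not expect a serious obstacle: the diagonal and the fact that its diagonal map behaves like an equivalence have already been set up in the surrounding discussion, and what remains is bookkeeping. The one point requiring care is keeping track of which precomposition is used where — $(-\comp\delta_B)$ to deduce $\pr_0=\pr_1$, then $(-\comp\langle f,g,H\rangle)$ to push that equality down to $f=g$ — and checking that the two standing assumptions are exactly the hypotheses needed to apply Lemma~\ref{lemma:precomp-abstract-equiv} and Corollary~\ref{cor:pi-to-equiv}, which they are.
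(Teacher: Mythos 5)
Your proposal is correct and follows essentially the same route as the paper's proof: use $\isPI(\delta_B)$ and closure under homotopy to make precomposition with $\delta_B$ a path-induced equivalence (hence a section), cancel it against $\pr_0\comp\delta_B = \id_B = \pr_1\comp\delta_B$ to get $\pr_0=\pr_1$, and then push this along the map $a\mapsto(f(a),g(a),H(a))$ to conclude $f=g$. The only cosmetic difference is that the paper only needs that $-\comp\delta_B$ is a section, whereas you pass through Corollary~\ref{cor:pi-to-equiv} to get a full inverse before cancelling.
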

\begin{proof}
    Note that for any $X$ we have $\pr_0\comp \delta_X = \pr_1\comp \delta_X$ by
    definition and the computation rule for function types. As $\delta_X$ is a
    path-induced equivalence, so is precomposition with $\delta_X$, but then we also
    have that $-\comp\delta_X$ is a section, and so we have that $\pr_0
    =_{\Delta(X)\to X} \pr_1$. Now take $f,g:A\to B$ with $\eta:f\htpy g$. Then we
    have $h:A\to \Delta B$ given by
    \[h(x)\defeq (f(x),g(x),\eta(x)).\]
    Finally, we have
    \[f = \pr_0\comp h = \pr_1\comp h = g.\qedhere\]
\end{proof}
\begin{corollary}
    If $\univ$ is univalent, then for any $A,B:\univ$ and any $f,g:A\to B$, we have
    ${f\htpy g\to f=g}$.
\end{corollary}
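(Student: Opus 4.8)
The plan is to derive this corollary directly from Theorem~\ref{thm:ua-to-funext-key}, whose conclusion is exactly the implication $f\htpy g\to f=g$. So the task reduces to verifying, under the assumption that $\univ$ is univalent, the two hypotheses of that theorem: that path-induced equivalences are closed under homotopies, and that every diagonal map $\delta_X:X\to\Delta X$ is path-induced.

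The second hypothesis is immediate: each $\delta_X$ has already been checked above to be an equivalence (with $\pr_0$ and $\pr_1$ as candidate inverses, the nontrivial witness supplied by Theorem~\ref{thm:path-transport}), and Lemma~\ref{lemma:ua-to-equiv-is-pi} turns univalence into an equivalence $\isEquiv(\delta_X)\simeq\isPI(\delta_X)$; transporting $\isEquiv(\delta_X)$ across this equivalence yields $\isPI(\delta_X)$.

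For the first hypothesis I would first establish the auxiliary fact that $\isEquiv$ is itself closed under homotopy, by a bare pointwise manipulation of paths that uses no extensionality principle: if $\eta:f\htpy g$ and $h:B\to A$ is an inverse of $f$ (extracted via Theorem~\ref{thm:equiv-char}), then $h$ is also an inverse of $g$, since for each $b$ and each $a$ one obtains the required homotopies by prepending $\eta_{h(b)}^{-1}$ and $\ap_h(\eta_a)^{-1}$ to the two homotopies exhibiting $h$ as an inverse of $f$. Granting this, closure of $\isPI$ under homotopy follows elementwise: if $\isPI(f)$ holds and $g\htpy f$, then $f$ is an equivalence by Corollary~\ref{cor:pi-to-equiv}, hence $g$ is an equivalence by the auxiliary fact, hence $\isPI(g)$ holds by the remaining direction of Lemma~\ref{lemma:ua-to-equiv-is-pi}. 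With both hypotheses in hand, Theorem~\ref{thm:ua-to-funext-key} delivers $f\htpy g\to f=g$ for all $f,g:A\to B$.

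The only delicate point, and the main obstacle to guard against, is circularity in the auxiliary fact: the proof that $\isEquiv$ is closed under homotopy must be the elementary path computation above rather than any argument that routes through function extensionality, since that is precisely what we are in the process of deriving. Everything else is a bookkeeping application of results already available in this section.
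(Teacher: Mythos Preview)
Your proposal is correct and follows essentially the same route as the paper: reduce to the two hypotheses of Theorem~\ref{thm:ua-to-funext-key}, then use Lemma~\ref{lemma:ua-to-equiv-is-pi} to convert both hypotheses into the corresponding statements about $\isEquiv$, which are already established (the paper dispatches closure of $\isEquiv$ under homotopy with the phrase ``immediate by transitivity of homotopy,'' where you spell out the pointwise path manipulation, but this is the same argument). Your explicit warning about avoiding circularity is apt and matches the spirit of the surrounding discussion.
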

\begin{proof}
    We need to see that univalence implies that path-induced equivalences
    are closed under
    homotopies and that $\delta_X$ is path-induced. As univalence implies
    $\isPI(f)\simeq \isEquiv(f)$, it is enough to show that equivalences are closed
    under homotopies, which is immediate by transitivity of homotopy, and that
    $\isEquiv(\delta_X)$, which we showed above.
\end{proof}
In fact, the above non-dependent version of function extensionality---which we will
call \nameas{naive function extensionality}{function extensionality!naive}---implies full
function extensionality. We proceed in two steps: We first show that non-dependent function
extensionality implies that $f\comp -$ is an equivalence whenever $f$ is, and then
that this implies that contractible types form an exponential ideal.
\begin{lemma}\label{funext-equiv-lemma}
    Fix types $A,B,X:\univ$ and let $f:A\to B$ have inverse $g:B\to A$. If $\univ$
    satisfies naive function extensionality, then
    $g\comp-$ is an inverse of $f\comp-:(X\to A)\to (X\to B)$.
\end{lemma}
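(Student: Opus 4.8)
The plan is to unfold the definition of $\inverse$ and reduce the whole statement to two pointwise computations, one in each direction, with the hypothesis of naive function extensionality doing the sole nontrivial work of upgrading a homotopy to an identity. Write $\eta : g\comp f\htpy \id_A$ and $\epsilon : f\comp g\htpy \id_B$ for the two homotopies witnessing that $g$ is an inverse of $f$. Since $g\comp-$ has type $(X\to B)\to(X\to A)$, showing it is an inverse of $f\comp-:(X\to A)\to(X\to B)$ amounts to producing homotopies $(f\comp-)\comp(g\comp-)\htpy \id_{X\to B}$ and $(g\comp-)\comp(f\comp-)\htpy \id_{X\to A}$.

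For the first homotopy, fix $h:X\to B$. By the $\eta$ rule for functions discussed in Section~\ref{section:mltt}, composition is judgmentally associative, so $((f\comp-)\comp(g\comp-))(h)$ is literally $f\comp(g\comp h)\jeq (f\comp g)\comp h$. For every $x:X$ we have $\epsilon(h(x)) : ((f\comp g)\comp h)(x) = h(x)$, so $\lambda x.\,\epsilon(h(x))$ is a homotopy $(f\comp g)\comp h\htpy h$; naive function extensionality then yields $(f\comp g)\comp h = h$, and since $h$ was arbitrary this is the required homotopy. The second homotopy is obtained identically, with $g\comp f$ and $\eta$ in place of $f\comp g$ and $\epsilon$: for $h:X\to A$ we have $((g\comp-)\comp(f\comp-))(h)\jeq (g\comp f)\comp h$, the family $\lambda x.\,\eta(h(x))$ witnesses $(g\comp f)\comp h\htpy h$, and naive function extensionality promotes this to $(g\comp f)\comp h = h$. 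Together these exhibit $g\comp-$ as an inverse of $f\comp-$.

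There is essentially no obstacle here: the only point requiring any care is recognising that $(f\comp-)\comp(g\comp-)$ applied to $h$ reduces \emph{judgmentally} to $(f\comp g)\comp h$ (rather than merely being homotopic to it), so that the pointwise equalities coming from $\epsilon$ and $\eta$ apply directly; everything else is bookkeeping, and naive function extensionality enters in exactly the two spots where a pointwise equality of functions must be turned into an honest identity.
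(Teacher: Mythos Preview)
Your proof is correct and follows essentially the same approach as the paper: obtain pointwise equalities from the inverse data $\eta$ and $\epsilon$, then apply naive function extensionality to promote the resulting homotopies to identities. The paper's proof is simply terser, treating one direction explicitly and declaring the other ``similar''.
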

\begin{proof}
    Let $h:X\to A$ and $x:X$. We have that
    $g(f(h(x))) = h(x)$ since $g$ and $f$ are inverse. Then we have by naive function
    extensionality that $g\comp f \comp h = h$. Hence, $(g\comp -)\comp (f\comp
    -)\simeq \id_{X\to A}$. The other direction is similar.
\end{proof}
\begin{lemma}\label{funext-retract-lemma}
    If $B:A\to\univ$ and each $B(a)$ is contractible, then $\qPi{x:A}B(x)$ is a
    retract of $\qSig{h:A\to \qSig{x:A}B(x)}\pr_0\comp h= \id$.
\end{lemma}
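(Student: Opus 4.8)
The plan is to write down an explicit section together with a retraction, in the style of the encode--decode arguments above; establishing the retraction needs no extensionality principle at all. Write $E\defeq\qSig{x:A}B(x)$ and let $S\defeq\qSig{h:A\to E}(\pr_0\comp h=\id_A)$ be the type in the statement. First I would define the candidate section
\[\varphi:\Big(\qPi{x:A}B(x)\Big)\to S,\qquad \varphi(f)\defeq\big(\lam{x}(x,f(x)),\ \refl\big),\]
which is well typed because $\pr_0\comp(\lam{x}(x,f(x)))$ reduces judgmentally to $\id_A$, so $\refl$ has type $\pr_0\comp(\lam{x}(x,f(x)))=\id_A$.

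Next I would define the retraction. Given $(h,p):S$, applying $\happly$ to $p$ gives a homotopy $\pr_0\comp h\htpy\id_A$, so for each $x:A$ we have $\happly(p)(x):\pr_0(h(x))=x$; since $\pr_1(h(x)):B(\pr_0(h(x)))$, transporting along this path lands in $B(x)$. So put
\[\psi:S\to\Big(\qPi{x:A}B(x)\Big),\qquad \psi(h,p)\defeq\lam{x}\transport^{B}\big(\happly(p)(x),\ \pr_1(h(x))\big).\]
It remains to produce a homotopy $\psi\comp\varphi\htpy\id$. For each $f$ and each $x:A$, using $\happly(\refl)\jeq\lam{x}\refl$ together with the computation rules $\transport^{B}(\refl,u)\jeq u$ and $\pr_1(x,f(x))\jeq f(x)$, the term $\psi(\varphi(f))(x)$ reduces judgmentally to $f(x)$; hence $\lam{f}\lam{x}\refl$ is the required homotopy. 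Thus $\varphi$ has a left inverse, and $\qPi{x:A}B(x)$ is a retract of $S$.

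There is no real obstacle: once the two maps are written down the verification is a chain of judgmental reductions. Two remarks. The hypothesis that each $B(a)$ is contractible plays no role in the retraction itself; it is needed only afterwards, to conclude that $S$ is contractible --- by Lemma~\ref{lemma:path-families} the fibres of $\pr_0:E\to A$ are (equivalent to the) $B(a)$, hence contractible, so $\pr_0$ is an equivalence; by naive function extensionality and Lemma~\ref{funext-equiv-lemma} the postcomposition $\pr_0\comp-$ is then an equivalence; and $S$ is the fibre of $\pr_0\comp-$ over $\id_A$, hence contractible, so $\qPi{x:A}B(x)$ is contractible by Theorem~\ref{thm:prop-retract-closed}. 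Secondly, one could instead define $\psi$ from the centres of contraction of the $B(a)$, but the transport definition is preferable since it makes $\psi\comp\varphi=\id$ hold definitionally and does not use the hypothesis.
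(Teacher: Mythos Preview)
Your proof is correct and essentially identical to the paper's: the section $\varphi$ and retraction $\psi$ you define are exactly the maps $s$ and $r$ in the paper, and the verification via $\happly(\refl)\jeq\lam{x}\refl$ and $\transport^B(\refl,-)\jeq\id$ is the same computation. Your additional remarks are accurate---the paper's proof also makes no use of the contractibility hypothesis, deferring it to the subsequent theorem just as you outline.
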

\begin{proof}
    The section map is given by
    \[s(f) = (\lambda a.(a,f(a)), \refl),\]
    while the retraction map is given by
    \[r(g,p) = \lambda a. \transport^B(\happly(p,a),\pr_1(g(a))).\]
    For $f:\qPi{x:A}B(x)$ we compute
    \begin{align*}
        r(s(f)) &= \lambda a.\transport^B(\happly(\refl,a)),\pr_1(a,f(a))\\
        &= \lambda a.\transport^B(\refl_{a},f(a)) \\
        &= \lambda a.f(a)\\
        &= f.\qedhere
    \end{align*}
\end{proof}
\begin{theorem}
    If $\univ$ satisfies naive function extensionality, then $\univ$ satisfies function extensionality.
\end{theorem}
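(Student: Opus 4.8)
The plan is to establish the dependent form of function extensionality by proving item~\ref{funext:contr} of Theorem~\ref{thm:funext-char}: that contractible types form an exponential ideal in $\univ$. Since the four conditions of that theorem are logically equivalent, this yields item~\ref{funext:prop}, which is by definition the statement that $\univ$ satisfies function extensionality. So fix $A:\univ$ and $B:A\to\univ$ with each $B(a)$ contractible; we must show $\qPi{x:A}B(x)$ is contractible.

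By Lemma~\ref{funext-retract-lemma}, $\qPi{x:A}B(x)$ is a retract of the type
\[T\defeq \qSig{h:A\to \qSig{x:A}B(x)}\pr_0\comp h= \id_A.\]
Since a retract of a contractible type is again contractible (Theorem~\ref{thm:prop-retract-closed}), it suffices to show $T$ is contractible. The key observation is that $T$ is exactly the fiber $\fib_{\pr_0\comp -}(\id_A)$ of the post-composition map $(\pr_0\comp -):(A\to \qSig{x:A}B(x))\to (A\to A)$, where $\pr_0:\qSig{x:A}B(x)\to A$ is the first projection.

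Next I would check that $\pr_0$ has an inverse, with no appeal to function extensionality. Its inverse is $s\defeq \lambda a.(a,c_a)$, where $c_a:B(a)$ is the center of contraction of $B(a)$: then $\pr_0\comp s\htpy\id_A$ holds via $\lambda a.\refl$, and $s\comp\pr_0\htpy\id$ holds by Theorem~\ref{thm:patheq}, using that each $B(x)$ is a proposition. By Lemma~\ref{funext-equiv-lemma}, naive function extensionality then makes $(\pr_0\comp -)$ into a map with an inverse, so by Theorem~\ref{thm:equiv-char} it has contractible fibers. In particular $T=\fib_{\pr_0\comp -}(\id_A)$ is contractible, and hence so is its retract $\qPi{x:A}B(x)$, which completes the proof.

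All the individual steps are routine given the two preceding lemmas; the only points requiring a moment's care are confirming that $\pr_0$ is an equivalence without circularly invoking function extensionality (the needed homotopies are constructed pointwise, so this is fine), and matching the retract type supplied by Lemma~\ref{funext-retract-lemma} on the nose with a fiber of a post-composition map so that Lemma~\ref{funext-equiv-lemma} genuinely applies.
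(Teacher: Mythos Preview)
Your proposal is correct and follows essentially the same approach as the paper: both prove condition~\ref{funext:contr} by recognizing the retract type from Lemma~\ref{funext-retract-lemma} as the fiber $\fib_{\pr_0\comp -}(\id_A)$, showing $\pr_0$ has an inverse via the centers of contraction, and then applying Lemma~\ref{funext-equiv-lemma} together with Theorem~\ref{thm:equiv-char}. Your added remarks about avoiding circularity and matching the retract type with a fiber are sound and make the dependencies slightly more explicit than the paper does.
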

\begin{proof}
    Let $B:A\to\univ$ such that $B(a)$ is contractible for each $a:A$. We wish to see
    that $\qPi{x:A}B(x)$ is contractible.  By Lemma~\ref{funext-retract-lemma}, we know that
    this type is a retract of the type $\qSig{h:A\to \qSig{x:A}B(x)}\pr_0\comp h=
    \id$; since contractible types are closed under retracts, it suffices to show
    that $\qSig{h:A\to \qSig{x:A}B(x)}\pr_0\comp h= \id$ is contractible. Notice that
    this type is the fiber of $\id$ under the map
    \[\pr_0\comp -:(A\to \qSig{x:A}B(x))\to (A\to A).\]
    A map is an equivalence if and only if all its fibers are contractible, so it
    suffices to prove that $\pr_0\comp -$ is an equivalence and so by
    Lemma~\ref{funext-equiv-lemma}, it is enough to show that the first projection $\pr_0:\qSig{x:A}B(x)\to A$
    is an equivalence. A candidate inverse for $\pr_0$ is given by $g(a) = (a,c)$
    where $c$ is the center of contraction of $B(a)$. Then $\pr_0(g(a)) = a$ by
    $\refl$, and then for any pair $(a,b):\qSig{x:A}B(x)$ we have that $(a,b) =
    (a,c)$, since
    contractible types are propositions.
\end{proof}
\begin{corollary}
    Any univalent universe satisfies function extensionality.
\end{corollary}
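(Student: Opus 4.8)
The plan is to obtain this as a two-step consequence of the results immediately preceding it, with essentially no new work required. The target says that a univalent universe $\univ$ satisfies function extensionality, i.e.\ that propositions (equivalently, contractible types) form an exponential ideal in $\univ$. The penultimate theorem reduces this to \emph{naive} function extensionality --- the statement that $f\htpy g\to f=g$ for all $f,g:A\to B$ --- so it suffices to show that univalence implies naive function extensionality, and that is exactly the corollary to Theorem~\ref{thm:ua-to-funext-key}.

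First I would recall how univalence delivers naive function extensionality. By Lemma~\ref{lemma:ua-to-equiv-is-pi}, univalence gives an equivalence $\isEquiv(f)\simeq\isPI(f)$ for every $f$, so the path-induced equivalences are precisely the equivalences. From this the two hypotheses of Theorem~\ref{thm:ua-to-funext-key} follow: path-induced equivalences are closed under homotopy because equivalences are (by transitivity of $\htpy$), and each diagonal map $\delta_X:X\to\Delta X$ is a path-induced equivalence because it is an equivalence, as witnessed by $\pr_0$ and $\pr_1$ being inverses to it (the computation using Theorem~\ref{thm:path-transport} was carried out above). Theorem~\ref{thm:ua-to-funext-key} then yields $f\htpy g\to f=g$.

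Next I would feed this into the argument that naive function extensionality upgrades to full function extensionality: for $B:A\to\univ$ with each $B(a)$ contractible, Lemma~\ref{funext-retract-lemma} exhibits $\qPi{x:A}B(x)$ as a retract of the fiber of $\pr_0\comp-$ over $\id$; since $\pr_0:\qSig{x:A}B(x)\to A$ is an equivalence (its inverse picks centres of contraction), Lemma~\ref{funext-equiv-lemma} makes $\pr_0\comp-$ an equivalence, so that fiber is contractible, hence so is its retract $\qPi{x:A}B(x)$. This is form~\ref{funext:contr} of Theorem~\ref{thm:funext-char}, which is function extensionality. There is no real obstacle here --- everything has been proved; the only point to be careful about is that Theorem~\ref{thm:ua-to-funext-key} was deliberately stated using only the two abstract closure properties of path-induced equivalences, so the chain of implications genuinely goes through without circularity.
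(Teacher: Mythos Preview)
Your proposal is correct and follows the same approach as the paper: chain the corollary that univalence implies naive function extensionality with the theorem that naive function extensionality implies full function extensionality. The paper's proof is just those two sentences; your version simply unpacks the ingredients in more detail.
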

\begin{proof}
    We know that any univalent universe satisfies naive function
    extensionality, and that any universe satisfying naive function
    extensionality satisfies function extensionality.
\end{proof}
\index{function extensionality|)}

In fact, the converse of Lemma~\ref{lemma:ua-to-equiv-is-pi} holds. That is, univalence
says exactly that the structure of a path-induced equivalence is the same as the
structure of an equivalence. Univalence is sometimes stated informally as ``all
equivalences arise as transport'', or ``all equivalences arise from paths''. We can
in fact prove univalence to be equivalent to a very weak interpretation of this informal
statement.
\begin{theorem}
    A universe $\univ$ is univalent if and only if, for all $A,B:\univ$ and $f:A\to
    B$ we have the implication
    \[\isEquiv(f)\to \isPI(f).\]
\end{theorem}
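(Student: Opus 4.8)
The plan is to prove the two implications separately. The forward one is immediate: if $\univ$ is univalent, then Lemma~\ref{lemma:ua-to-equiv-is-pi} gives an equivalence $\isEquiv(f)\simeq\isPI(f)$ for all $A,B:\univ$ and $f:A\to B$, and its underlying function is the required map $\isEquiv(f)\to\isPI(f)$. So the content is the converse.

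Assume that for all $A,B:\univ$ and $f:A\to B$ there is a map $\psi_f:\isEquiv(f)\to\isPI(f)$. The first step is to extract function extensionality. Path-induced equivalences are closed under homotopy: if $\isPI(f)$ and $g\htpy f$, then $\isEquiv(f)$ by Corollary~\ref{cor:pi-to-equiv}, hence $\isEquiv(g)$ since equivalences are stable under homotopy (whisker the inverse of $f$ along the homotopy), hence $\isPI(g)$ by $\psi_g$. Also each diagonal map $\delta_X:X\to\Delta X$ is an equivalence, as observed above, so $\isPI(\delta_X)$ by $\psi_{\delta_X}$. These two facts are exactly the hypotheses of Theorem~\ref{thm:ua-to-funext-key}, which therefore yields naive function extensionality, and naive function extensionality implies full function extensionality. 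In particular $\isEquiv(f)$ is henceforth a proposition (Theorem~\ref{thm:equiv-is-prop}).

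With function extensionality in hand, assemble the total maps. Let $\phi:\isPI(f)\to\isEquiv(f)$ be the fibrewise map of Corollary~\ref{cor:pi-to-equiv}, and write $\phi_\Sigma:(A\pieq B)\to(A\simeq B)$ and $\psi_\Sigma:(A\simeq B)\to(A\pieq B)$ for the induced maps on total spaces. Since $\isEquiv(f)$ is a proposition, Theorem~\ref{thm:patheq} identifies pairs with equal first component, giving $\phi_\Sigma\comp\psi_\Sigma\htpy\id_{A\simeq B}$; the same reasoning gives $\idtoequiv\htpy\phi_\Sigma\comp\idtopie$, since both sides send $p$ to a pair whose first component is $\coe(p)$ and whose second component lies in the proposition $\isEquiv(\coe(p))$. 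Recall that $\idtopie$ comes with a section $S:(A\pieq B)\to(A=B)$ satisfying $\idtopie\comp S\htpy\id$. Then
\[\idtoequiv\comp(S\comp\psi_\Sigma)\ \htpy\ \phi_\Sigma\comp\idtopie\comp S\comp\psi_\Sigma\ \htpy\ \phi_\Sigma\comp\psi_\Sigma\ \htpy\ \id_{A\simeq B},\]
so $S\comp\psi_\Sigma$ is a right inverse of $\idtoequiv$ for every $A,B:\univ$. By Theorem~\ref{thm:identity-section}, $\idtoequiv$ is then an equivalence for all $A,B:\univ$; that is, $\univ$ is univalent.

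The main obstacle is the passage to function extensionality in the second paragraph: all the subsequent manipulation of $\Sigma$-types relies on $\isEquiv(f)$ being a proposition, so one must check carefully that the closure of path-induced equivalences under homotopy, and the fact that diagonal maps are path-induced, really do follow from the hypothesis alone, with no hidden appeal to univalence or to function extensionality, so that Theorem~\ref{thm:ua-to-funext-key} applies. Once that is secured, what remains is bookkeeping with the already-established properties of $\idtopie$, $\coe$ and total maps.
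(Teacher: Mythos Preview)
Your proof is correct and follows essentially the same route as the paper: derive function extensionality from the hypothesis via Theorem~\ref{thm:ua-to-funext-key}, use it to make $\isEquiv(f)$ a proposition so that the fibrewise maps $\phi,\psi$ form a section--retraction pair on total spaces, and then compose with the section $S$ of $\idtopie$ to obtain a right inverse of $\idtoequiv$, concluding by Theorem~\ref{thm:identity-section}. You are more explicit than the paper in verifying the two hypotheses of Theorem~\ref{thm:ua-to-funext-key} (closure of $\isPI$ under homotopy and $\isPI(\delta_X)$) directly from $\psi$, but this is exactly the argument the paper intends when it invokes that theorem.
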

\begin{proof}
    Given a family of maps $\eta:\isEquiv(f)\to \isPIE(f)$, we have function
    extensionality, by Theorem~\ref{thm:ua-to-funext-key}. Fix some $f:A\to B$.
    Then, by function extensionality we have that $\isEquiv(f)$ is a proposition. Our
    assumption tells us that $\isEquiv(f)\lequiv \isPI(f)$, and since $\isEquiv(f)$
    is a proposition, we know that any map $\isPI(f)\to\isEquiv(f)$ is a retraction
    with section $\eta$. In particular, the map $\epsilon$ given by path induction in
    Corollary~\ref{cor:pi-to-equiv} has $\eta$ as a section. This then lifts to a
    section-retraction pair
    \begin{align*}
        s &{}: \big(\qSig{f:A\to B}\isEquiv(f)\big)\to\big(\qSig{f:A\to B}\isPI(f)\big)\\
        s(f,e) &{}= (f,\eta(e))\\
        r &{}: \big(\qSig{f:A\to B}\isPI(f)\big)\to\big(\qSig{f:A\to B}\isEquiv(f)\big)\\
        r(f,e) &{}= (f,\epsilon(e))
    \end{align*}
    By Lemma~\ref{lemma:identity-section}, we only need to see that $\idtoequiv$ has
    a section, but we defined $\epsilon$ so that
    $\idtoequiv(p) \htpy (\coe(p),\epsilon(\coe(p)))$, which means that
    $\idtoequiv$ factors as $r\comp \idtopie$. As the projection
    $\pr_{=}:(A\pieq B)\to (A=B)$ is a section of $\idtopie$, the composite
    $\pr_{=}\comp s$ is a section of $\idtoequiv$.
\end{proof}

We are finally in a place to show that univalence is a proposition.
\begin{theorem}
    The type $\UA_{\univ}$, is a proposition.
\end{theorem}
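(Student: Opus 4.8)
The plan is the standard bootstrap used for all the other extensionality axioms in this chapter: rather than prove $\isProp(\UA_\univ)$ directly, I would prove the implication $\UA_\univ \to \isProp(\UA_\univ)$ and then invoke Lemma~\ref{lemma:prop-to-prop} to conclude that $\UA_\univ$ is a proposition outright. This is exactly the move that makes the apparent circularity harmless.

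So the first step is to assume an element of $\UA_\univ$, i.e.\ that $\univ$ is univalent. By the corollary that any univalent universe satisfies function extensionality (obtained via naive function extensionality, Theorem~\ref{thm:ua-to-funext-key}, and the theorem that naive function extensionality implies function extensionality), $\univ$ then satisfies function extensionality. Next, fix $A,B:\univ$; Theorem~\ref{thm:equiv-is-prop} applies to the map $\idtoequiv_{A,B}:(A=B)\to(A\simeq B)$ and tells us that $\isEquiv(\idtoequiv_{A,B})$ is a proposition. Finally, function extensionality says precisely that propositions form an exponential ideal in $\univ$, so the dependent product $\qPi{A,B:\univ}\isEquiv(\idtoequiv_{A,B})$ — which is by definition $\UA_\univ$ — is itself a proposition. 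This gives the map $\UA_\univ\to\isProp(\UA_\univ)$, and Lemma~\ref{lemma:prop-to-prop} finishes the argument.

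I do not expect any genuine obstacle here: every ingredient (univalence $\Rightarrow$ function extensionality, function extensionality $\Rightarrow$ $\isEquiv$ is proposition-valued, and closure of propositions under $\Pi$) is already established in the excerpt. The only point worth flagging in the write-up is that we are legitimately allowed to use $\UA_\univ$ as a hypothesis while proving it to be a proposition, which is the content of Lemma~\ref{lemma:prop-to-prop} and mirrors the treatment of $\ExpProp$ (Corollary~\ref{corllary:funext-isprop}) and $\PropExt$ (Theorem~\ref{thm:propext-isprop}).
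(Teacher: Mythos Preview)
Your proposal is correct and follows exactly the same route as the paper: reduce to showing $\UA_\univ \to \isProp(\UA_\univ)$ via Lemma~\ref{lemma:prop-to-prop}, then use that univalence gives function extensionality, which in turn makes $\isEquiv(\idtoequiv_{A,B})$ a proposition and closes propositions under $\Pi$. The paper's proof is in fact slightly terser than yours, but the logical structure is identical.
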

\begin{proof}
    By Lemma~\ref{lemma:prop-to-prop} it is enough to show that univalence
    implies that $\UA_{\univ}$ is a proposition. We know that univalence implies function
    extensionality. Since function extensionality implies that $\isEquiv(f)$ is a proposition and
    that propositions form an exponential ideal, function extensionality implies
    $\isProp(\UA_{\univ})$. Then univalence implies $\isProp(\UA_{\univ})$.
\end{proof}
\index{inverse|(}
The use of equivalences instead of functions with an inverse in the statement of
univalence is crucial. Most of the arguments in this section work when we
replace $\isEquiv(f)$ with $\inverse(f)$ everywhere. In particular,
If we replace $\idtoequiv$ with the corresponding function
\[i:A=B\to \qSig{f:A\to B}\inverse(f),\]
then the corresponding version of univalence tells us that $\inverse(f)$ is a section
of $\isPI(f)$, and so $\inverse(f)$ must be a proposition. However, we mentioned
before that if $f:A\to B$ has an inverse, then
\[\inverse(f)\simeq \qPi{x:A}x=x.\]
So, if $A$ is not a set, then $\inverse(f)$ is not a proposition. Moreover, both
univalence and the corresponding statement with $\inverse(f)$ imply that there
are types which are not sets. Consequently, the type
\[\qPi{A,B:\univ}(A=B)\simeq \big(\qSig{f:A\to B}\inverse(f)\big),\]
is not inhabited.
\index{inverse|)}
\index{univalence|)}

We will limit our use of univalence in the main development, using instead
function extensionality and proposition extensionality. The reason is twofold. First,
since we work primarily at the level of sets and propositions, it is usually enough
to use these two weaker principles; the full power of univalence is more than we
require. Secondly, the discussion of topos logic above is not simply for comparison;
while proposition extensionality as stated in this section is not expressible in
topos logic, a form of proposition extensionality does hold. As a result, the technical work
in Part II should translate with a little work to a topos-theoretic setting, as long
as we do not use univalence. In other words, we hope that the material in Part II will
still have technically interesting content to a mathematician who rejects or is
otherwise uninterested in univalent mathematics.

We will use function extensionality and proposition extensionality for all universes
throughout. This means in particular that the following all hold in all universe.
\begin{itemize}
    \item propositions form an exponential ideal (Theorem~\ref{thm:funext-char}.\ref{funext:prop});
    \item contractible types form an exponential ideal (Theorem~\ref{thm:funext-char}.\ref{funext:contr});
    \item all contractible types are equal (Corollary~\ref{cor:iscontr-singleton});
    \item propositional univalence (Lemma~\ref{lemma:propext-to-propuniv});
    \item proposition extensionality is a proposition
        (Theorem~\ref{thm:propext-isprop});
    \item for any function $f:A\to B$, we have that $\isEquiv(f)$ is a proposition
        (Theorem~\ref{thm:equiv-is-prop});
    \item For any $f:A\to B$, the types $\isEquiv(f)$ and $\isContr(f)$ are
        equivalent (Corollary~\ref{cor:contr-is-isequiv}).
\end{itemize}
Proposition extensionality and function extensionality allow us to strengthen
Lemma~\ref{lemma:contr-char}.
\begin{lemma}\label{lemma:contr-characterization}
    The following are equivalent (in fact, equal) for any type $A$:
    \begin{itemize}
        \item $A\simeq \unittype$; (A is equivalent to $\unittype$)
        \item $\isContr(A)$; (A is contractible)
        \item $A\times \isProp(A)$; (A is a proposition with an element)
    \end{itemize}
\end{lemma}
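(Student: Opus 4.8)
The plan is to upgrade the logical equivalences of Lemma~\ref{lemma:contr-char} to equalities by showing that each of the three types is a proposition, and then invoking proposition extensionality; since $\idtoequiv$ turns paths between types into equivalences, the three equalities immediately give back the equivalences. So the real content is the three ``is a proposition'' claims, all of which follow from the results of this chapter together with the standing assumptions of function extensionality and proposition extensionality.

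First I would record that $\isContr(A)$ is a proposition: this is precisely Theorem~\ref{thm:contr-is-prop}, which applies since propositions form an exponential ideal. Next, for $A\times\isProp(A)$: by Theorem~\ref{thm:prop-is-prop} the type $\isProp(A)$ is a proposition, and given any element $(a,i):A\times\isProp(A)$ the witness $i$ shows that $A$ itself is a proposition, so $A\times\isProp(A)$ is then a product of two propositions, hence a proposition (pair the component equalities using Theorem~\ref{thm:patheq}). This exhibits a map $A\times\isProp(A)\to\isProp(A\times\isProp(A))$, so Lemma~\ref{lemma:prop-to-prop} concludes that $A\times\isProp(A)$ is a proposition. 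Finally, for $A\simeq\unittype\defeq\qSig{f:A\to\unittype}{\isEquiv(f)}$: the base $A\to\unittype$ is a product of copies of the contractible type $\unittype$, hence contractible (Theorem~\ref{thm:funext-char}, item~\ref{funext:contr}) and in particular a proposition, while each fiber $\isEquiv(f)$ is a proposition by Theorem~\ref{thm:equiv-is-prop}; a proposition-indexed sum of propositions is a proposition, so $A\simeq\unittype$ is a proposition.

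With all three types shown to be propositions, Lemma~\ref{lemma:contr-char} supplies the pairwise logical equivalences between them, so proposition extensionality yields the equalities $(A\simeq\unittype)=\isContr(A)=(A\times\isProp(A))$. Applying $\idtoequiv$ to these paths produces the corresponding equivalences, which finishes the proof. I do not expect a genuine obstacle here: the argument is essentially bookkeeping over already-established closure properties of propositions. The only mild subtlety is that to see $A\times\isProp(A)$ is a proposition one must first extract an element in order to learn that $A$ is a proposition, which is exactly the situation handled by Lemma~\ref{lemma:prop-to-prop}.
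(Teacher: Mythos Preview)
Your proposal is correct and follows essentially the same approach as the paper: show each of the three types is a proposition (using Theorem~\ref{thm:contr-is-prop}, Lemma~\ref{lemma:prop-to-prop}, and the decomposition of $A\simeq\unittype$ as a $\Sigma$ over the proposition $A\to\unittype$ with propositional fibers $\isEquiv(f)$), then invoke Lemma~\ref{lemma:contr-char} for the logical equivalences. You are slightly more explicit than the paper in justifying that $A\to\unittype$ is a proposition and in spelling out the passage from logical equivalence to equality via proposition extensionality, but the structure is identical.
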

\begin{proof}
    Each of the above is a proposition: For $A\simeq\unittype$, this is a sum of
    propositions over a proposition; for $\isContr(A)$, we have already seen that
    this is a proposition in Theorem~\ref{thm:contr-is-prop}. Finally, we have
    $(A\times\isProp(A)) \to \isProp(A)$, and so then
    \[(A\times\isProp(A))\to \isProp(A\times\isProp(A)),\]
    hence by Lemma~\ref{lemma:prop-to-prop}, we have $A\times\isProp(A)$ is a
    proposition.

    As the types are logically equivalent propositions (by
    Lemma~\ref{lemma:contr-char}), they are all equivalent.
\end{proof}

\section{Resizing rules}\label{section:resizing}
\index{universe|(}
One difference between MLTT and Topos logic is that Martin-L\"{o}f type theory is a
predicative system which uses universe levels to
avoid paradoxes of self-reference. However, the stratification of types into universe
levels limits our ability to make certain arguments. For example, the type $A\to
\Prop$, which we view as the power set of $A$, is parameterized by the universe level of
$\Prop$. In particular, we have for each universe level $i$, a type
$\Prop_i:\univ[i+1]$, so that $A\to\Prop_{i}$ has type $\univ[j]$, where $j$ is the
maximum of $i+1$ and the universe level of $A$.
On the other hand any topos has an object $\Omega$ of truth values, so that
$\Omega^A$ is the power set of $A$. MLTT can be made impredicative by \emph{resizing}
propositions; we explain this below, and discuss it where relevant, but we do not
make use of resizing principles anywhere in the main development.

In his Bergen lecture~\cite{VoevodskyBergen}, Voevodsky introduced two resizing
rules, which say informally,
\begin{enumerate}
    \item if $P:\univ[i]$ is such that $\isProp(P)$, then in fact $P:\univ[0]$;
    \item for any $\univ[i]$, we have that $\qSig{A:\univ[i]}\isProp{A} : \univ[0]$.
\end{enumerate}
These allow us to define $\Omega\defeq \Prop_{0}$, and ensure that no new propositions
are added at higher levels. However, Voevodsky posited these as rules in the formal
theory, and the consistency of these rules is an open question.

The HoTT book gives a weaker version of resizing. For any universe level $i$, we have
an inclusion $\Prop_{i}\to\Prop_{i+1}$, where the type $\Prop_{i}\to\Prop_{i+1}$ is
a type in $\univ[i+2]$. The HoTT book assumes that all of these inclusion maps are
equivalences, so that all propositions can be replaced with a proposition in
$\univ[0]$. This allows us to define $\Omega:\univ[1]$ by $\Omega\defeq \Prop_{0}$,
and then we have an equivalence $\Omega\simeq \Prop_{n}$ in universe~$\univ[n+1]$.
Note that this weaker form of propositional resizing follows already from LEM, since
LEM tells us that $\Prop_{i}\simeq 2$ for any universe level~$i$.

If we have some type $\Omega:\univ$ of propositions, then we may define the
truncation for any $X:\univ$ as $\trunc{X}$ as
\[\trunc{X}\jeq \qPi{P:\Omega}(X\to P)\to P.\]
We have $|-|:X\to\trunc{X}$ given by
\[|x| = \lambda P.\lambda f. f(x).\]
Now if $P$ is any proposition, and we have $f:X\to P$, we can factor $f$ through
$|-|$ and a map $\trunc{X}\to P$ given by $w \mapsto f(P,w)$.

While we will sometimes discuss the consequences of resizing, we will not use it in
any of the main development.
\index{universe|)}

\section{Homotopy levels and sets}\label{section:hlevels}

\index{set|(}
Now that we've developed some of the theory of equivalences, we can direct our
attention more fully to homotopy $n$-types, defined in
Section~\ref{section:utt}. First, we note that being an $n$-type is
in fact a proposition, and then we continue onto some closure properties.
\begin{lemma}
    For all $X$, we have $\isProp(\isProp(X))$ and moreover, for any $n$, we have
    $\isProp(\istype{n}(X))$. That is,
    being a proposition or a homotopy $n$-type is a proposition.
\end{lemma}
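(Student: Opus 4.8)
The plan is to dispatch the two claims separately, each of them essentially already in hand from earlier in the chapter. For $\isProp(\isProp(X))$, this is precisely Theorem~\ref{thm:prop-is-prop}: function extensionality tells us propositions form an exponential ideal in $\univ$, and that theorem then gives $\isProp(\isProp(X))$ directly. (Alternatively one could route through Lemma~\ref{lemma:prop-is-one-type}, which identifies $\isProp(X)$ with $\istype{(-1)}(X)$, and then appeal to the second claim, but the direct citation is cleaner.)

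For the general statement, I would proceed by induction on $n:\Ntwo$. In the base case $n\jeq -2$ we have $\istype{(-2)}(X)\jeq \isContr(X)$, and $\isProp(\isContr(X))$ is exactly Theorem~\ref{thm:contr-is-prop}. For the inductive step, suppose we know $\isProp(\istype{n}(Y))$ for every type $Y$. Then by definition $\istype{(n+1)}(X)\jeq \qPi{x,y:X}\istype{n}(x=y)$, and applying the inductive hypothesis with $Y\defeq(x=y)$ shows that $\istype{n}(x=y)$ is a proposition for each $x,y:X$. Since propositions form an exponential ideal (function extensionality, Theorem~\ref{thm:funext-char}.\ref{funext:prop}), the dependent product $\qPi{x,y:X}\istype{n}(x=y)$ is again a proposition. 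This completes the induction.

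There is no real obstacle here: every ingredient --- that $\isProp$ forms an exponential ideal, that $\isContr(X)$ is a proposition, and that $\isProp(\isProp(X))$ holds --- has already been established under function extensionality, which is assumed throughout Part~\ref{part:one} from Section~\ref{section:univalence} onward (and in any case follows from the running assumptions). The only thing to take care of is stating the induction at the right level of generality: the inductive hypothesis must be quantified over \emph{all} types $Y$, not just $X$, so that it can be instantiated at the identity type $x=y$; phrasing the statement as ``for all $X:\univ$ and all $n:\Ntwo$, $\isProp(\istype{n}(X))$'' and inducting on $n$ makes this automatic.
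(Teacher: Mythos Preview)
Your proof is correct and follows the same approach as the paper: cite Theorem~\ref{thm:prop-is-prop} for the first claim, then induct on $n$ with Theorem~\ref{thm:contr-is-prop} as the base case and the exponential-ideal property (function extensionality) for the inductive step. Your citation of Theorem~\ref{thm:contr-is-prop} for the base case is in fact the right one; the paper cites Lemma~\ref{lemma:prop-to-prop} there, which appears to be a slip (that lemma is an ingredient in the proof of Theorem~\ref{thm:contr-is-prop}, not the result itself).
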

\begin{proof}
    For $\isProp$, this is Theorem~\ref{thm:prop-is-prop} and function extensionality.

    For the $n$-types, we proceed by induction: For $n=-2$, this is
    Lemma~\ref{lemma:prop-to-prop} and function extensionality.

    For $n=k+1$, assume that $\istype{k}$ is proposition-valued. Then
    $\qPi{x,y:X}\istype{k}(x=y)$ is a proposition.
\end{proof}

\begin{lemma}
    If $e:A\to B$ is an embedding, and $B$ is an $n+1$-type, then so is $A$ for
    $n\geq -1$.
\end{lemma}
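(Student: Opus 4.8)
The plan is to induct on $n \geq -1$. The engine of the argument is Lemma~\ref{lemma:embedding-propositional-fibers}: since $e$ is an embedding, the map $\ap_e : (x =_A y) \to (e(x) =_B e(y))$ is an equivalence for all $x,y:A$. In particular $\ap_e$ is itself an embedding (an equivalence has contractible, hence propositional, fibers, so it is an embedding), while its codomain $e(x) =_B e(y)$ sits one homotopy level below $B$. Thus the very hypothesis of the lemma is available for the pair $\ap_e : (x =_A y) \to (e(x) =_B e(y))$ at the next level down, and the induction closes. I expect there to be no real obstacle; the one point worth flagging is exactly this observation that $\ap_e$ is not merely an equivalence but an embedding, which is what makes the inductive hypothesis applicable.

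For the base case $n = -1$, suppose $B$ is a $0$-type. Fix $x,y:A$. Then $e(x) =_B e(y)$ is a $(-1)$-type, equivalently (by Lemma~\ref{lemma:prop-is-one-type}) a proposition. Since $x =_A y$ is equivalent to $e(x) =_B e(y)$ via $\ap_e$, and propositions are closed under equivalence (the corollary to Theorem~\ref{thm:prop-retract-closed}, an equivalence being in particular a retraction), the type $x =_A y$ is a proposition, hence a $(-1)$-type. As $x,y$ were arbitrary, $A$ is a $0$-type.

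For the inductive step, assume the claim for $n = k \geq -1$, and let $e : A \to B$ be an embedding with $B$ a $(k+2)$-type. Fix $x,y:A$. By Lemma~\ref{lemma:embedding-propositional-fibers}, $\ap_e : (x =_A y) \to (e(x) =_B e(y))$ is an equivalence, hence an embedding, and its codomain $e(x) =_B e(y)$ is a $(k+1)$-type because $B$ is a $(k+2)$-type. Applying the inductive hypothesis to the embedding $\ap_e$ shows that $x =_A y$ is a $(k+1)$-type. Since $x,y$ were arbitrary, $A$ is a $(k+2)$-type, completing the induction.

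One could instead phrase the whole argument around the slogan ``homotopy $n$-types are closed under equivalence'' applied to $x =_A y \simeq e(x) =_B e(y)$, but that closure property is itself proved by an induction of exactly this shape and has not been recorded above for general $n$, so the self-contained induction just described is preferable.
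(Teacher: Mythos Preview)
Your proof is correct and follows the same underlying idea as the paper: use that $\ap_e$ gives an equivalence $(a=a')\simeq(e(a)=e(a'))$, and that the right-hand side is an $n$-type. The paper compresses this to a single line, implicitly leaning on closure of $n$-types under equivalence, whereas you unfold it into an explicit induction on $n$; since the paper only records that closure result \emph{after} this lemma (and in fact derives it from this lemma via retracts), your self-contained induction is the more careful rendering of the same argument, exactly as you note in your final paragraph.
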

\begin{proof}
    We have that $(a=a')\simeq (e(a)=e(a'))$ and the later type is an $n$-type.
\end{proof}
\begin{theorem}\label{lemma:ntypes-are-retr-closed}
    If $r:A\to B$ is a retraction, and $A$ is an $n$-type, then $B$ is an $n$-type.
\end{theorem}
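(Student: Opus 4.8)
The plan is to induct on $n$, using Theorem~\ref{thm:prop-retract-closed} for the base case and analysing path types for the inductive step. First I would unwind the hypothesis: a retraction $r:A\to B$ comes equipped with a right inverse $s:B\to A$ and a homotopy $\eta:r\comp s\htpy\id_B$, so $r$ is a left inverse of $s$, i.e.\ $(r,\eta):\linv(s)$, and hence $B$ is a retract of $A$ in the sense of Section~\ref{section:homotopies}. The base case $n=-2$ is then immediate: $\istype{(-2)}$ is $\isContr$, and $\isContr(A)\to\isContr(B)$ is exactly Theorem~\ref{thm:prop-retract-closed} applied to the map $s:B\to A$ with its left inverse $(r,\eta)$.

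For the inductive step, I would assume the statement for $n$ and let $A$ be an $(n+1)$-type. Fixing $b,b':B$, it suffices to show that $b=_B b'$ is an $n$-type. The key step is to exhibit $b=_B b'$ as a retract of $s(b)=_A s(b')$: take $\ap_s:(b=b')\to(s(b)=s(b'))$ as the section, and $\rho(q)\defeq\eta_b^{-1}\ct\ap_r(q)\ct\eta_{b'}$ as the candidate retraction. To see that $\rho\comp\ap_s$ is homotopic to the identity I would do path induction on $p:b=b'$: when $p$ is $\refl[b]$, both $\ap_s$ and $\ap_r$ carry $\refl$ to $\refl$, so the composite collapses to $\eta_b^{-1}\ct\eta_b$, which is $\refl[b]$ by Lemma~\ref{lemma:path-concat}. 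Since $A$ is an $(n+1)$-type, $s(b)=_A s(b')$ is an $n$-type, so the inductive hypothesis (a retract of an $n$-type is an $n$-type) shows $b=_B b'$ is an $n$-type; as $b,b'$ were arbitrary, $B$ is an $(n+1)$-type.

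I do not expect any serious obstacle; the one point to watch is the direction of the word \emph{retraction}. The hypothesis gives a retraction $r:A\to B$, hence a section $s:B\to A$, so it is $B$ that is a retract of $A$, and correspondingly it is $\ap_s$ (not $\ap_r$) that plays the role of the section on identity types while $\rho$, built from $\eta$, plays the role of the retraction. Everything else is a routine path-induction computation, and no extensionality principle is needed.
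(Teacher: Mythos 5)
Your proof is correct, but it takes a different route from the paper. The paper splits into two cases: for $n=-2$ it invokes Theorem~\ref{thm:prop-retract-closed}, and for $n\geq -1$ it appeals to the preceding lemma (embeddings into an $(n+1)$-type have $(n+1)$-type domain) together with the fact that the section $s:B\to A$ is an embedding, so that $(b=_B b')\simeq(s(b)=_A s(b'))$ does all the work. You instead run a direct induction on $n$: the same base case, and for the inductive step an explicit exhibition of $b=_B b'$ as a retract of $s(b)=_A s(b')$, with $\ap_s$ as section and $\rho(q)=\eta_b^{-1}\ct\ap_r(q)\ct\eta_{b'}$ as retraction, the homotopy $\rho\comp\ap_s\htpy\id$ being a one-line path induction. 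Both are sound; your version is more self-contained, since the paper's phrase ``sections are embeddings'' is not actually proved in the text, whereas your retraction data needs only Lemma~\ref{lemma:path-concat} and Theorem~\ref{thm:prop-retract-closed}, and it proves the slightly stronger intermediate fact that identity types of a retract are retracts of identity types. (Indeed, combining your retraction data with Lemma~\ref{lemma:identity-section} shows $\ap_s$ is an equivalence, which via Lemma~\ref{lemma:embedding-propositional-fibers} is exactly the missing ``sections are embeddings'' step of the paper's argument.) The paper's version is shorter given the embedding machinery and avoids re-doing the induction, which is already implicit in the embedding lemma. You also handled the one directional subtlety correctly: it is $s$, not $r$, whose $\ap$ serves as the section on paths.
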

\begin{proof}
    For $n\geq -1$, this follows from the previous lemma, since sections are
    embeddings. For $n=-2$, this is Theorem~\ref{thm:prop-retract-closed}.
\end{proof}
\begin{theorem}
    If $e:A\to B$ is an embedding, and $B$ has decidable equality, then so does $A$.
\end{theorem}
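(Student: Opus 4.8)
The plan is to reduce decidability of $x = y$ in $A$ to decidability of $e(x) = e(y)$ in $B$, using that an embedding induces an equivalence on identity types. First I would fix $x, y : A$ and invoke Lemma~\ref{lemma:embedding-propositional-fibers}, which tells us that $\ap_e : (x = y) \to (e(x) = e(y))$ is an equivalence; in particular we obtain a map $h : (e(x) = e(y)) \to (x = y)$ alongside $\ap_e$ itself, so $(x = y)$ and $(e(x) = e(y))$ are logically equivalent.

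Next, since $B$ has decidable equality, the type $(e(x) = e(y)) + \neg(e(x) = e(y))$ is inhabited, and I would do a case split on it. In the left case, from $p : e(x) = e(y)$ we get $h(p) : x = y$, so $\inl(h(p))$ witnesses decidability of $x = y$. In the right case, given $n : \neg(e(x) = e(y))$, the composite $n \comp \ap_e : (x = y) \to \zerotype$ witnesses $\neg(x = y)$, so $\inr(n \comp \ap_e)$ works. Either way we obtain $(x = y) + \neg(x = y)$, which is exactly what is needed. (Equivalently, one could first isolate the general fact that decidability transfers along a logical equivalence, and then apply it once to $(x=y)\lequiv(e(x)=e(y))$.)

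There is essentially no obstacle here: all the content sits in Lemma~\ref{lemma:embedding-propositional-fibers}, which is already established, and the remainder is a routine case analysis on the coproduct supplied by decidable equality of $B$. The only point to keep straight is orientation — $\ap_e$ is used in the negative case and $h$ in the positive case — and no extensionality principle or propositional truncation enters the argument at all.
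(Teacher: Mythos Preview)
Your proof is correct and follows essentially the same approach as the paper: both use the equivalence $(x=y)\simeq(e(x)=e(y))$ coming from Lemma~\ref{lemma:embedding-propositional-fibers} and then transfer decidability from $B$ to $A$. The paper's proof is simply terser, leaving the case split you spell out implicit.
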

\begin{proof}
    Let $a,a':A$. Since $e$ is an embedding, we have $(a=a')\simeq (e(a)=e(a'))$, and
    the latter has decidable equality.
\end{proof}
\begin{lemma}
    If $r:A\to B$ is a retraction and $A$ has decidable equality, then so does $B$.
\end{lemma}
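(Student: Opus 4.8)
The plan is to follow the pattern of the surrounding lemmas: unfold the definition of retraction and transfer decidability of equality from $A$ to $B$ along the right inverse. Recall that $r : A \to B$ being a retraction means we are given a right inverse $s : B \to A$ together with a homotopy $\epsilon : r \comp s \htpy \id_B$. I would fix $b, b' : B$ and case-split on the decidable type $(s(b) = s(b')) + \neg(s(b) = s(b'))$, which is available since $A$ has decidable equality.

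In the first case I am given $q : s(b) = s(b')$, so $\ap_r(q) : r(s(b)) = r(s(b'))$, and composing with the homotopy $\epsilon$ produces the path
\[ b \eqby{\epsilon_b^{-1}} r(s(b)) \eqby{\ap_r(q)} r(s(b')) \eqby{\epsilon_{b'}} b', \]
landing us in the left summand of $(b = b') + \neg(b = b')$. In the second case I am given $\neg(s(b) = s(b'))$; then any $p : b = b'$ would yield $\ap_s(p) : s(b) = s(b')$, a contradiction, so $\neg(b = b')$ and we land in the right summand. This establishes $(b = b') + \neg(b = b')$ for all $b, b' : B$, i.e.\ that $B$ has decidable equality.

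I expect no real obstacle here: the argument is a one-line computation of exactly the same flavour as Theorem~\ref{thm:prop-retract-closed} and the immediately preceding results. As an alternative one could note that the right inverse $s : B \to A$ is a section, hence an embedding, and then invoke the preceding theorem that embeddings reflect decidable equality from their codomain; but the direct case analysis above is just as short and sidesteps any appeal to the fact that sections are embeddings.
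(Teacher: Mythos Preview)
Your direct case analysis is correct. The paper, however, does not actually supply a proof for this lemma: it is stated immediately after the theorem that embeddings reflect decidable equality, and then the text moves on. Given that placement, and given that the paper has already observed elsewhere that sections are embeddings (used in the proof of Theorem~\ref{lemma:ntypes-are-retr-closed}), the intended argument is almost certainly the one you mention as an alternative: the right inverse $s:B\to A$ is a section, hence an embedding, so the preceding theorem applies directly. Your explicit case-split is of course just an unfolding of that same argument, so there is no substantive difference.
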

\begin{theorem}
    The $n$-types are closed under equivalence: if $A\simeq B$ and $B$ is an
    $n$-type, then so is $A$.
\end{theorem}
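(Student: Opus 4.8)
The plan is to observe that an equivalence is in particular a section, so $A$ is a retract of $B$, and then invoke the fact that the $n$-types are closed under retracts. Concretely, suppose $f:A\simeq B$. Unfolding, $f:A\to B$ together with a witness of $\isEquiv(f)\jeq \linv(f)\times\rinv(f)$; in particular we obtain the left-inverse component, a map $g:B\to A$ with a homotopy $g\comp f\htpy\id_A$. This homotopy exhibits $g:B\to A$ as a retraction (with $f$ as its section).

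Now apply Theorem~\ref{lemma:ntypes-are-retr-closed} to the retraction $g:B\to A$: since by hypothesis $B$ is an $n$-type, the conclusion is that $A$ is an $n$-type. There is essentially no obstacle here—the only point worth stating is that we must read $\isEquiv(f)$ as providing a left inverse, so that $A$ (not $B$) is the retract, which is exactly the direction Theorem~\ref{lemma:ntypes-are-retr-closed} needs. (One could equally route through Theorem~\ref{thm:equiv-char} and pass to contractible fibers, but using the retract-closure result already proved is the shortest path.)
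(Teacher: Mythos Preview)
Your proof is correct and is essentially the paper's own argument: the paper simply notes that equivalences are in particular retractions and invokes Theorem~\ref{lemma:ntypes-are-retr-closed}. Your version is in fact slightly more careful about directions, explicitly identifying the left inverse $g:B\to A$ as the retraction so that the hypothesis ``$B$ is an $n$-type'' lines up with the domain of $g$ as required by Theorem~\ref{lemma:ntypes-are-retr-closed}.
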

\begin{proof}
    As equivalences are in particular retractions, this follows from
    Theorem~\ref{lemma:ntypes-are-retr-closed}.
\end{proof}
\begin{theorem}
    The $n$-types are closed under $\Sigma$, for all $n$.
\end{theorem}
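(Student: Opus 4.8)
The plan is to induct on $n : \Ntwo$, handling the base case directly and reducing the inductive step to the inductive hypothesis via the characterization of equality in dependent sums (Theorem~\ref{thm:patheq}). Throughout, fix $A:\univ$ and $B:A\to\univ$ with each $B(a)$ an $n$-type.

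First, for the base case $n = -2$: suppose $A$ is contractible with center $c : A$ and that each $B(a)$ is contractible; let $b_0 : B(c)$ be the center of $B(c)$. I would show that $(c, b_0)$ is a center of contraction for $\qSig{a:A}B(a)$. Given an arbitrary $(a,b) : \qSig{a:A}B(a)$, contractibility of $A$ supplies $p : a = c$, and then $\transport^{B}(p, b)$ and $b_0$ are both inhabitants of the contractible — hence propositional — type $B(c)$, so they are equal; Theorem~\ref{thm:patheq} then assembles these into a path $(a,b) = (c,b_0)$.

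For the inductive step, assume the statement holds at level $n$, and let $A$ be an $(n+1)$-type with each $B(a)$ an $(n+1)$-type. To see that $S\defeq\qSig{a:A}B(a)$ is an $(n+1)$-type, I would fix $w, w' : S$ and show that $w = w'$ is an $n$-type. By Theorem~\ref{thm:patheq}, $w = w'$ is equivalent to $\qSig{p : \pr_0(w) = \pr_0(w')}\transport^{B}(p, \pr_1(w)) = \pr_1(w')$. Here the index type $\pr_0(w) = \pr_0(w')$ is an $n$-type because $A$ is an $(n+1)$-type, and for each such $p$ the type $\transport^{B}(p, \pr_1(w)) = \pr_1(w')$ is a path type in $B(\pr_0(w'))$, which is an $(n+1)$-type, so this fibre is an $n$-type. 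The inductive hypothesis then makes the whole $\Sigma$-type an $n$-type, and since $n$-types are closed under equivalence (a consequence of Theorem~\ref{lemma:ntypes-are-retr-closed}), $w = w'$ is an $n$-type as well, completing the induction.

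I do not expect a genuine obstacle here: the only real content is the use of Theorem~\ref{thm:patheq} to convert ``equality in a $\Sigma$'' into ``a $\Sigma$ of equalities'', which is precisely what lets the induction close. The one spot calling for slight care is the base case, where one must verify the dependent statement by hand rather than citing closure of contractible types under plain (non-dependent) products.
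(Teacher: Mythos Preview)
Your proposal is correct and follows essentially the same approach as the paper: induction on $n$, with the inductive step reducing via Theorem~\ref{thm:patheq} to a $\Sigma$ of path types to which the inductive hypothesis applies. The only difference is in the base case, where the paper replaces the contractible $A$ and $B(a)$ by $\unittype$ and reduces to $\unittype\times\unittype$, while you argue directly by exhibiting a center; both are routine.
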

\begin{proof}
    Induction on $n$. For $n=-2$, it suffices to show that $\qSig{a:\unittype}P(a)$
    is contractible when each $P(a)$ is, and in turn it is enough to show that
    $\unittype\times \unittype$ is contractible, which is immediate. So let $A$ be
    an $n+1$-type and let $B:A\to\univ$ such that $B(a)$ is an $n+1$-type for all $a:A$.
    We have that
    \[\big((a,b)=(a',b')\big)\simeq \Big(\qSig{p:a=a'}\eqover[B]{b}{b'}{p}\Big),\]
    and this is an $n$-type by the inductive hypothesis.
\end{proof}
\begin{theorem}\label{lemma:ntypes-exp-ideal}
    The $n$-types form an exponential ideal for each~$n$. In particular if $B$
    is an $n$-type, then $A\to B$ is an $n$-type.
\end{theorem}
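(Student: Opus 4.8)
The plan is to induct on $n$, using function extensionality throughout (available here since it follows from univalence, and is in any case assumed from Part~II onward). The statement to prove is that if $B:A\to\univ$ is a family with each $B(a)$ an $n$-type, then $\qPi{a:A}B(a)$ is again an $n$-type; the non-dependent case $A\to B$ is the instance where $B$ is a constant family.

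For the base case $n=-2$, this is exactly the assertion that contractible types form an exponential ideal, which is Theorem~\ref{thm:funext-char}.\ref{funext:contr}, a consequence of function extensionality. So nothing new is needed there.

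For the inductive step, suppose the $n$-types form an exponential ideal, and let $B:A\to\univ$ with each $B(a)$ an $(n+1)$-type. Fix $f,g:\qPi{a:A}B(a)$; I need to show $f=g$ is an $n$-type. By function extensionality in the form~\ref{funext:funext} of Theorem~\ref{thm:funext-char}, the map $\happly_{f,g}$ is an equivalence, so $(f=g)\simeq (f\htpy g)$, and by definition $f\htpy g\jeq \qPi{a:A}f(a)=g(a)$. Since each $B(a)$ is an $(n+1)$-type, each path type $f(a)=g(a)$ is an $n$-type, so by the inductive hypothesis $\qPi{a:A}f(a)=g(a)$ is an $n$-type. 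As the $n$-types are closed under equivalence (the theorem immediately preceding this one), $f=g$ is an $n$-type, and hence $\qPi{a:A}B(a)$ is an $(n+1)$-type.

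There is no real obstacle here: the only inputs are function extensionality (used once in the base case via~\ref{funext:contr} and once per inductive step via~\ref{funext:funext}) and the already-established closure of $n$-types under equivalence. The one point to be careful about is matching the definition of \emph{exponential ideal} used above, which is the dependent-function version, so the induction must be carried out with a type family $B:A\to\univ$ rather than a single type $B$, and the equivalence $(f=g)\simeq(f\htpy g)$ must be invoked for dependent functions, which is precisely what~\ref{funext:funext} provides.
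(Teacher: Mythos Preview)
Your proof is correct and follows exactly the same approach as the paper: induction on $n$, with the base case given by function extensionality (contractible types form an exponential ideal) and the inductive step using the equivalence $(f=g)\simeq\qPi{a:A}f(a)=g(a)$ together with the inductive hypothesis and closure of $n$-types under equivalence. The paper's proof is terser but structurally identical.
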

\begin{proof}
    By induction on $n$. For $n=-2$, this is function extensionality. 
    Now let $B(a)$ be an $n+1$-type for each $a:A$, and let $f,g:\qPi{a:A}B(a)$. We have
    \[(f=g)\simeq \big(\qPi{x:A}f(x)=g(x)\big),\]
    and this is an $n$-type by the inductive hypothesis.
\end{proof}

In general, we will be interested in the level of sets and propositions. Importantly,
this includes all types with decidable equality.
\begin{lemma}\label{lemma:decidable-to-dne}
    If $A$ is a decidable type, then $(\neg\neg A)\to A$.
\end{lemma}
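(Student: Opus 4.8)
The plan is to unfold the definition of decidability and do a straightforward case analysis. Recall that $A$ being decidable means we have an element $d : A + \neg A$. To build a function $\neg\neg A \to A$, I would assume $z : \neg\neg A$ --- that is, $z : (A\to\zerotype)\to\zerotype$ --- and then split on $d$ using the elimination rule for the coproduct.

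In the first case, $d$ is $\inl(a)$ for some $a : A$, and I simply return $a$. In the second case, $d$ is $\inr(n)$ for some $n : A\to\zerotype$; then $z(n) : \zerotype$, and applying the (unique) function $\zerotype \to A$ from the elimination rule for the empty type gives the required element of $A$. Packaging the case analysis as a $\lambda$-abstraction over $z$ yields the desired map $\neg\neg A \to A$.

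There is really no obstacle here: the argument is a direct combination of the eliminators for $+$ and for $\zerotype$, with no appeal to extensionality or any of the machinery developed earlier in the chapter. The only thing worth stating explicitly, for later use, is that the resulting function does not depend on a choice of proof and is independent of which disjunct of $d$ we are in, so it behaves well in subsequent arguments; but even that requires no separate verification at this point.

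\begin{proof}
    By assumption we have $d : A + \neg A$. Define a map $\neg\neg A \to A$ by assuming
    $z : \neg\neg A$ and performing case analysis on $d$. If $d \jeq \inl(a)$ for some
    $a : A$, return $a$. If $d \jeq \inr(n)$ for some $n : A \to \zerotype$, then
    $z(n) : \zerotype$, and the elimination rule for $\zerotype$ gives an element
    $!(z(n)) : A$.
\end{proof}
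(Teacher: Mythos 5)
Your proof is correct and follows essentially the same argument as the paper: case analysis on the decidability witness $d:A+\neg A$, returning $a$ in the $\inl$ case and applying $z$ to the negation witness and then the $\zerotype$-eliminator in the $\inr$ case. The only cosmetic difference is the order of the $\lambda$-abstraction and the case split, which is immaterial.
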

\begin{proof}
    By induction on the element $p:A+\neg A$. If $p=\inl a$, then $a:A$, and so
    $\lambda x.a:(\neg\neg A)\to A$. Otherwise, we have $p=\inr n$. But then if
    $q:\neg\neg A$, we have $q(n):\zerotype$, and so we have
    \[(\lambda q.!_{A}(q(n))) : (\neg\neg A)\to A.\qedhere\]
\end{proof}
\begin{theorem}[Hedberg]
    If $A$ has decidable equality, then $A$ is a set.
\end{theorem}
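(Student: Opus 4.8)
The plan is to go through the well-known criterion that a type is a set as soon as all of its path types carry a \emph{weakly constant} endomap, where I call $f : B \to B$ weakly constant if $\qPi{p,q:B}f(p) = f(q)$. \textbf{Step 1 (the criterion).} I claim that if for every $x,y:A$ there is a weakly constant map $f_{x,y} : (x = y) \to (x = y)$, then $A$ is a set. To see this, fix $x:A$ and prove, by based path induction on $p : x = y$ with motive $C(y,p) \defeq \big(p = f_{x,x}(\refl)^{-1} \ct f_{x,y}(p)\big)$, that $p = f_{x,x}(\refl)^{-1} \ct f_{x,y}(p)$ for all $y:A$ and all $p : x = y$; the base case asks for $\refl = f_{x,x}(\refl)^{-1} \ct f_{x,x}(\refl)$, which is the instance $r^{-1} \ct r = \refl$ of Lemma~\ref{lemma:path-concat} with $r \defeq f_{x,x}(\refl)$. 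Given this, for any $p,q : x = y$ we get $p = f_{x,x}(\refl)^{-1} \ct f_{x,y}(p) = f_{x,x}(\refl)^{-1} \ct f_{x,y}(q) = q$, where the middle step is weak constancy of $f_{x,y}$. Hence $x = y$ is a proposition for all $x,y:A$, i.e.\ $\isSet(A)$.

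\textbf{Step 2 (building the weakly constant maps from decidable equality).} Fix $x,y:A$ and case-split on the decision $d : (x = y) + \neg(x = y)$. If $d = \inl(p_0)$, take $f_{x,y}$ to be the constant map at $p_0$; it is visibly weakly constant. If $d = \inr(n)$, then for any $p : x = y$ we have $n(p) : \zerotype$, so $\zerotype$-elimination makes any two values of $f_{x,y}$ equal, and e.g.\ $f_{x,y} \defeq \id$ is weakly constant. Applying Step 1 finishes the proof. (Equivalently, without case-splitting: Lemma~\ref{lemma:decidable-to-dne} gives $g_{x,y} : \neg\neg(x=y) \to (x=y)$, and $\neg\neg(x=y)$, being a function type into $\zerotype$, is a proposition by function extensionality (Theorem~\ref{thm:funext-char}.\ref{funext:prop}); so $f_{x,y}(p) \defeq g_{x,y}(\lambda h.\, h(p))$ is weakly constant because $\lambda h.\, h(p) = \lambda h.\, h(q)$, whence $\ap_{g_{x,y}}$ gives $f_{x,y}(p) = f_{x,y}(q)$.)

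The argument is almost entirely routine path algebra; the one place to be careful is setting up the based path induction in Step 1 so that $f_{x,x}(\refl)$ is held fixed and does not vary with $y$, so that the base case collapses exactly to the concatenation identity already proved in Lemma~\ref{lemma:path-concat} rather than to something needing the computation rule for $f_{x,y}$. The conceptual content — that decidability buys a weakly constant self-map of each identity type — is the only non-mechanical ingredient.
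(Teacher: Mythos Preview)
Your proof is correct. The paper takes a different route: it observes that decidable equality gives $\neg\neg(x=y)\to(x=y)$ (Lemma~\ref{lemma:decidable-to-dne}), so $\neg\neg(x=y)$ is a retract of $x=y$; then, since $\neg\neg(x=y)$ is a proposition (using function extensionality), Lemma~\ref{lemma:identity-section} upgrades the retraction to an equivalence, making $x=y$ a proposition. Your parenthetical alternative at the end of Step~2 is essentially this argument, but factored through your weakly-constant criterion rather than through the retract-of-identity lemma.

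Your main argument, by contrast, is the funext-free route the paper explicitly flags it is \emph{not} taking (``the above proof is not Hedberg's original proof, and uses function extensionality, while the original proof doesn't''). The core path-induction step is morally the same in both: the paper's proof of Lemma~\ref{lemma:identity-section} derives $f(p)=f(\refl)\ct p$ for any family $f$ of endomaps on identity types, while you derive the rearranged form $p=f(\refl)^{-1}\ct f(p)$. What you gain by packaging it as ``weakly constant endomap $\Rightarrow$ set'' and then case-splitting on the decision directly is that you never need $\neg\neg(x=y)$ to be a proposition, so function extensionality is not invoked. What the paper's packaging buys is reuse: Lemma~\ref{lemma:identity-section} is stated once and applied elsewhere (e.g.\ Theorem~\ref{thm:identity-section}).
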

\begin{proof}
    If $A$ has decidable equality, then by Lemma~\ref{lemma:decidable-to-dne} we have
    \[\qPi{x,y:A}\neg\neg(x=y)\to (x=y).\]
    As $\neg\neg X$ is a proposition for any $X$ and $X\to\neg\neg X$ for any $X$, we
    have that $\neg\neg X$ is a retract of $X$ whenever $\neg\neg X\to X$. In
    particular, if $A$ has decidable equality, then for all $x,y:A$ the type
    $\neg\neg(x=y)$ is a retract of $x=y$. Then by
    Lemma~\ref{lemma:identity-section}, we know that $\neg\neg(x=y)\simeq (x=y)$.
    Since $\neg\neg(x=y)$ is a proposition, we then have that $A$ is a set.
\end{proof}
N.B.~the above proof is not Hedberg's original proof, and uses function
extensionality, while the original proof doesn't.

Since we are particularly interested in sets, we collect the following facts, which
we have already proved elsewhere.
\begin{theorem}
    The types $\zerotype$, $1$, $2$ and $\nat$ are sets. Sets are closed under all
    basic type formers: If $A$ is a set, and $B:A\to\univ$ is such that
    $\qPi{a:A}{\isSet(B(a))}$, then $\qSig{a:A}{B(a)}$ and $\qPi{a:A}{B(a)}$ are sets.
\end{theorem}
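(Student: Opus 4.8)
The plan is to assemble the statement from results already proved. For the base cases: we showed that $\zerotype$ and $\unittype$ are propositions, and that every proposition is a set, so both are sets; and $\nat$ is a set by Theorem~\ref{thm:nat-is-set}. For $2$, I would observe that it has decidable equality --- a case analysis on the (at most two) possible inputs reduces the question to deciding $0 = 1$, and $0 \neq 1$ follows by transporting along the family $c:2\to\univ$ with $c(0)\jeq\unittype$ and $c(1)\jeq\zerotype$, exactly as in the successor argument in the proof of Theorem~\ref{thm:nat-is-set} --- and then Hedberg's theorem gives that $2$ is a set. (Alternatively, an encode--decode argument identical in shape to the one for $\nat$ works directly.)

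For the closure properties, the key preliminary step is to identify sets with $0$-types. By Lemma~\ref{lemma:prop-is-one-type}, $\isProp(P)$ and $\istype{(-1)}(P)$ are logically equivalent for every $P$, so unfolding definitions, $\isSet(X) = \qPi{x,y:X}\isProp(x=y)$ is logically equivalent to $\istype{0}(X) = \qPi{x,y:X}\istype{(-1)}(x=y)$; that is, a type is a set iff it is a $0$-type. Given this, closure of sets under $\Sigma$ is the $n=0$ instance of the theorem that the $n$-types are closed under $\Sigma$, and closure of sets under $\Pi$ (including dependent products) is the $n=0$ instance of Theorem~\ref{lemma:ntypes-exp-ideal}, that the $n$-types form an exponential ideal. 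Both of those theorems already incorporate function extensionality where it is needed.

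There is essentially no obstacle here; the only point requiring a moment's care is the passage between $\isSet$ and $\istype{0}$, and I would be explicit that logical equivalence suffices, since ``being a set'' and ``being a $0$-type'' are propositions and we only need the closure properties transported along a logical equivalence of propositions. If one prefers to avoid even this, one can instead re-run the inductive proofs of the two closure theorems specialized to $n=0$, reading ``$\isProp(x=y)$'' for ``$\istype{(-1)}(x=y)$'' throughout; nothing in those arguments used the homotopy-indexed formulation in an essential way.
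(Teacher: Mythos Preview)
Your proposal is correct and matches the paper's treatment: the paper gives no proof here, stating only that it ``collects facts which we have already proved elsewhere,'' and your assembly of those facts (propositions are sets for $\zerotype$ and $\unittype$; Theorem~\ref{thm:nat-is-set} for $\nat$; Hedberg for $2$; and the $n=0$ case of the $\Sigma$- and $\Pi$-closure theorems via Lemma~\ref{lemma:prop-is-one-type}) is exactly the intended derivation. The only thing the paper leaves truly implicit that you make explicit is the identification of sets with $0$-types, and your justification of that step is sound.
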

\begin{theorem}\label{thm:embedding-monomorphism}
    If $B$ is a set, then for any $f:A\to B$, the following are
    logically equivalent:
    \begin{enumerate}
        \item $f$ is an embedding, 
        \item $f$ is a monomorphism: For any $g,h:X\to A$ we have
            \[(f\comp g = f\comp h)\to (g = h).\]
    \end{enumerate}
\end{theorem}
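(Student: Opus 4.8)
The plan is to prove the two implications separately; the hypothesis that $B$ is a set will be needed only for $(2)\Rightarrow(1)$.

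For $(1)\Rightarrow(2)$, suppose $f$ is an embedding, so that every fibre $\fib_f(b)$ is a proposition. Given $g,h:X\to A$ and $p:f\comp g = f\comp h$, I want $g=h$, and by function extensionality it suffices to produce a homotopy $g\htpy h$. Fix $x:X$. Then $\happly(p,x)$ is a path $f(g(x)) = f(h(x))$, so both $(g(x),\happly(p,x))$ and $(h(x),\refl)$ are elements of $\fib_f(f(h(x)))$. Since this fibre is a proposition, these two elements are equal, and applying $\pr_0$ to that equality gives $g(x) = h(x)$, as required. Note this direction does not use that $B$ is a set.

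For $(2)\Rightarrow(1)$, suppose $f$ is a monomorphism. Since $B$ is a set, the preceding theorem reduces the task of showing $f$ is an embedding to producing, for each $x,y:A$, a map $(f(x)=f(y))\to(x=y)$. The key step is to instantiate the monomorphism hypothesis with $X\defeq (f(x)=f(y))$ and the two constant maps $g\defeq\lambda p.x$ and $h\defeq\lambda p.y$ of type $X\to A$. Then $f\comp g\jeq\lambda p.f(x)$ and $f\comp h\jeq\lambda p.f(y)$, and since $\lambda p.p$ is a homotopy between them, function extensionality gives $f\comp g = f\comp h$. The monomorphism property then yields a path $q:g=h$, and for any $p:f(x)=f(y)$ we get $\happly(q,p):g(p)=h(p)$, that is, a path $x=y$. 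Hence $p\mapsto\happly(q,p)$ is the required map, and $f$ is an embedding.

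I do not expect a real obstacle. The one mildly clever point is, in the second direction, to take the domain $X$ to be the path type $f(x)=f(y)$ itself and feed the monomorphism property the constant functions at $x$ and $y$, rather than arguing directly with fibres; once that is seen, everything else is routine manipulation of $\happly$, function extensionality, and the characterisation of embeddings into a set established just above.
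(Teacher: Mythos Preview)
Your proof is correct and follows essentially the same strategy as the paper. The only difference is in $(2)\Rightarrow(1)$: the paper fixes a path $p:f(x)=f(y)$ and then instantiates the monomorphism hypothesis with $X\defeq\unittype$ and the constant maps $c_x,c_y:\unittype\to A$, whereas you take $X\defeq(f(x)=f(y))$ itself; in both cases the argument is ``constant maps plus function extensionality plus monomorphism'', and the paper's choice is marginally simpler.
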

\begin{proof}
    Let $f$ be an embedding and $g,h:X\to A$ such
    that $f\comp g= f\comp h$. Then for any $x:X$ we have $f(g(x)) = f(h(x))$, and
    since $f$ is an embedding $g=h$.

    Now let $f$ be a monomorphism, and fix $x,y:A$ such that $f(x)=f(y)$. Define
    $c_x , c_y :\unittype\to A$ as $c_x(u) \defeq x$ and $c_y(u) \defeq x$. As
    $f(x)=f(y)$, we have $f\comp c_x = f\comp c_y$, and so $c_x = c_y$, and $x=y$, so
    $f$ is an embedding.
\end{proof}
\begin{corollary}
    If $B$ is a set, then for any $f:A\to B$, $f$ being an embedding and $f$ being a
    monomorphism are equivalent as types.
\end{corollary}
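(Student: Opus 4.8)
The plan is to reduce the claim to Lemma~\ref{lemma:lequiv-equiv}. Theorem~\ref{thm:embedding-monomorphism} already shows that, when $B$ is a set, the type ``$f$ is an embedding'' and the type ``$f$ is a monomorphism'' are logically equivalent, so it suffices to show that each of these two types is a proposition; Lemma~\ref{lemma:lequiv-equiv} then upgrades the logical equivalence to an equivalence of types. That ``$f$ is an embedding'', namely $\qPi{b:B}\isProp(\fib_f(b))$, is a proposition is exactly the content of the earlier lemma asserting that being an embedding, being a surjection, and having contractible fibers are all propositions (it uses function extensionality, which we assume throughout).

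The remaining work is to show that the type $T$ expressing ``$f$ is a monomorphism'', i.e.\ $\qPi{X:\univ}\qPi{g,h:X\to A}(f\comp g = f\comp h)\to (g=h)$, is a proposition. I would do this via Lemma~\ref{lemma:prop-to-prop}, by exhibiting a map $T\to\isProp(T)$. So suppose $f$ is a monomorphism. By the ``monomorphism implies embedding'' direction of Theorem~\ref{thm:embedding-monomorphism} (whose proof only needs the constant maps $\unittype\to A$), $f$ is an embedding; hence by Lemma~\ref{lemma:embedding-propositional-fibers} we have $(a=a')\simeq(f(a)=f(a'))$ for all $a,a':A$, and since $B$ is a set the right-hand side is a proposition, so $A$ is a set. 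With $A$ a set, the $n$-type closure properties---in particular the exponential ideal of Theorem~\ref{lemma:ntypes-exp-ideal}---give that $X\to A$ is a set for every $X:\univ$; thus $g=h$ is a proposition for all $g,h:X\to A$, so $(f\comp g = f\comp h)\to(g=h)$ is a proposition as a function type landing in a proposition, and finally $T$ is a proposition by function extensionality, being a $\Pi$-type of propositions. This yields $T\to\isProp(T)$, and Lemma~\ref{lemma:prop-to-prop} gives that $T$ is a proposition.

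Having shown both ``$f$ is an embedding'' and ``$f$ is a monomorphism'' to be propositions that are logically equivalent, Lemma~\ref{lemma:lequiv-equiv} produces the equivalence of types. The one step that requires genuine thought---and the main obstacle---is that ``$f$ is a monomorphism'' is not visibly a proposition until one knows $A$ is a set; the trick is that this hypothesis is only needed under the assumption that $f$ actually is a monomorphism, which is precisely what Lemma~\ref{lemma:prop-to-prop} lets us exploit. Everything else is routine application of the closure properties established earlier in the chapter.
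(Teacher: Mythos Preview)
Your proof is correct and follows the same overall strategy as the paper: show that both ``$f$ is an embedding'' and ``$f$ is a monomorphism'' are propositions, invoke the logical equivalence of Theorem~\ref{thm:embedding-monomorphism}, and apply Lemma~\ref{lemma:lequiv-equiv}. The paper's proof is terser, simply asserting that since $B$ is a set the monomorphism type is a proposition, without further justification. Your argument via Lemma~\ref{lemma:prop-to-prop}---first deducing from an assumed monomorphism witness that $f$ is an embedding, hence that $A$ is a set, hence that $g=h$ is a proposition---actually supplies the missing detail: the codomain $g=h$ of the implication lives in $X\to A$, not $X\to B$, so knowing $B$ is a set does not by itself make it a proposition. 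Your use of Lemma~\ref{lemma:prop-to-prop} to bootstrap from a hypothetical witness is exactly the right move here, and makes the argument complete where the paper's version is arguably elliptical.
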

\begin{proof}
    As $B$ is a sets, the type
    \[\qPi{X:\univ}\qPi{g,h:X\to A}(f\comp g=f\comp h)\to g = h,\]
    is a proposition, so we have that the type witnessing that $f$ is an embedding
    and the type witnessing that $f$ is a monomorphism are logically equivalent
    propositions.
\end{proof}
\index{set|)}

\section{Higher-inductive types}\label{section:hits}
There are two sorts of higher inductive types
we will look at: \emph{quotient inductive types}, which we will use for comparison of
our developments with other approaches to partiality, and the \emph{homotopy circle},
which we use as the prototypical example
of a type which is not a set to show when assumptions on types are necessary. This
appears, for example, in Section~\ref{section:sdt-comparison}, when we discuss the
difference between our notion of dominance and the usual one.

Quotient inductive types are given by specifying simultaneously an inductive type
together with a quotient of that type by some relation. The motivating example is
that of \emph{set quotients}: given a type $A$ and a relation $R:A\to A\to\Prop$, we
define the smallest \emph{set} $A/R$ for which there is a map $A\to A/R$ respecting $R$.
Explicitly, the type $A/R$ is defined inductively by
\begin{itemize}
    \item a \emph{quotient map} $[-]:A\to A/R$;
    \item for each $x,y:A$ with $R(x,y)$, a path $[x]=[y]$.
    \item the \emph{set truncation}: For each $x,y:A/R$, and $p,q:x=y$, a path $p=q$.
\end{itemize}
Then the recursion principle for $A/R$ says that if $B$ is any set, and we have a
function $f:A\to B$ such that $R(x,y)\to f(x)=f(y)$, then there is a unique map
$f/R:A/R\to B$ such that $f(a) = (f/R)([a])$ for all~$a:A$.
\begin{lemma}
    For any type $A$ and any relation $R:A\to A\to\Prop$, the quotient map $[-]:A\to A/R$ is surjective.
\end{lemma}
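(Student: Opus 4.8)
The plan is to unfold the definition of surjection: I must show that $\prod_{x:A/R}\trunc{\fib_{[-]}(x)}$ is inhabited, where $\fib_{[-]}(x)=\sum_{a:A}([a]=x)$. Each $\trunc{\fib_{[-]}(x)}$ is a proposition, and by function extensionality a product of propositions is again a proposition, so this dependent product is itself a proposition. I would therefore prove it by elimination on the set quotient $A/R$: when the target is a family of propositions, only the point constructor $[-]$ needs to be treated, since any two elements of a proposition are equal and so the coherences attached to the path constructor and to the set-truncation constructor hold automatically. For the point constructor, given $a:A$ the element $(a,\refl)$ inhabits $\fib_{[-]}([a])$, hence $|(a,\refl)|$ inhabits $\trunc{\fib_{[-]}([a])}$, and this completes the argument.

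Since the excerpt states only the recursion principle for $A/R$ (elimination into a set), the one step worth spelling out is the passage to elimination into a proposition-valued family: given $P:A/R\to\univ$ with each $P(x)$ a proposition and a section $s:\prod_{a:A}P([a])$, I would form $B\defeq\sum_{x:A/R}P(x)$, which is a set because $A/R$ is a set and each $P(x)$ is a proposition. The map $f:A\to B$ defined by $f(a)\defeq([a],s(a))$ respects $R$, since $R(a,a')$ gives $[a]=[a']$ in $A/R$ and the second components then agree because $P([a'])$ is a proposition; recursion yields $\overline{f}:A/R\to B$ with $\overline{f}([a])=f(a)$. Now $\pr_0\comp\overline{f}$ and $\id_{A/R}$ are two maps $A/R\to A/R$ agreeing on every $[a]$, so by the uniqueness clause of the recursion principle (applied to $a\mapsto[a]$, which trivially respects $R$) they are equal; hence $\overline{f}$ is a section of $\pr_0:B\to A/R$, and a section of a first projection is precisely a dependent function, yielding $\prod_{x:A/R}P(x)$. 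Taking $P(x)\defeq\trunc{\fib_{[-]}(x)}$ and $s(a)\defeq|(a,\refl)|$ then finishes the proof.

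I do not expect a genuine obstacle here: all the content is the bookkeeping in the previous paragraph. If the full induction principle for quotient inductive types is taken as given, rather than only the recursion principle quoted above, the proof collapses to the single observation that it suffices to inhabit $\trunc{\fib_{[-]}([a])}$ for each $a:A$, which $|(a,\refl)|$ does.
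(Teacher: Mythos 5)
Your proof is correct and follows essentially the same route as the paper's: induct on the quotient, inhabit $\trunc{\fib_{[-]}([a])}$ by $|(a,\refl)|$ at the point constructor, and observe that the path and set-truncation constructors impose no conditions because the target is a proposition. The extra paragraph deriving proposition-valued elimination from the stated recursion principle (via the total space $\sum_{x:A/R}P(x)$ and the uniqueness clause) is a sound piece of bookkeeping that the paper simply leaves implicit.
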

\begin{proof}
    For any element of the form $[a]:A/R$, we know have that $\refl:[a]=[a]$. For the
    higher constructors, we need to see that $|(a,\refl)| = |(b,\refl)|$ for any
    $a,b:A$ with $R(a,b)$, and that being a surjection is a set. Since being a
    surjection is a proposition, both of these facts are immediate.
\end{proof}

The set truncation is necessary for this notion to be well-behaved: suppose we
defined the quotient of $A$ by $R$ as above, but without the truncation constructor,
and consider the quotient of $\unittype$ by the relation $R$ with $R(\star,\star)$.
This has as constructors,
\begin{itemize}
    \item a point $b:1/R$,
    \item a path $b=b$.
\end{itemize}
This type is exactly the homotopy circle $\Sone$ given by Definition~\ref{def:sone}
below, which is not a set if univalence holds. Then we would have that the quotient
of a set by a relation is no longer a set.

On the other hand, we can form the propositional truncation as the set-quotient by
the total relation $\nabla \defeq \lambda x,y.\unittype$. We can take the truncation
to be $X/\nabla$ with the quotient map $X\to X/\nabla$ as the truncating map. Then any
proposition $P$ is in particular a set, and any map $X\to P$ must respect the
equivalence relation, so it factors through $X/\nabla$.

It is more typical, however, to give the propositional truncation of $X$ as a higher
inductive type by directly expressing the universal property: the type $\trunc{X}$ is
generated by
\begin{itemize}
    \item a truncation constructor $|-|:X\to\trunc{X}$, and
    \item a family of path constructors $\qPi{x,y:\trunc{X}}(x=y)$.
\end{itemize}
There is a difference in the elimination principle for $\trunc{X}$ as a HIT and
$X/\nabla$. We can only eliminate $\trunc{X}$ into propositions, while we can
eliminate $X/\nabla$ into any set $A$, so long as we have a constant function $X\to
A$. However, we will see in Section~\ref{section:constancy} that the elimination
principle for $X/\nabla$ also holds for $\trunc{X}$.

The quotient by an equivalence relation can be defined concretely in a
higher universe; with resizing, we can perform this truncation without raising
universe levels. Define an \emph{equivalence class of $R$} to be a predicate
$P:A\to\Prop$ such that $P$ is equivalent to $R(a,-)$ for an anonymous $a:A$:
\[\qExists{a:A}\qPi{b:B}\left(P(b)\simeq R(a,b)\right).\]
Then define 
\[A\sslash R\defeq \qSig{P:A\to \univ}\text{$P$ is an equivalence class of $R$}.\]
Note that $A\sslash R$ lives in $\univ[1]$. Nevertheless, we then have a map
$q:A\to A\sslash R$ given by
\[q(a) = \lambda b. R(a,b).\]
This map $q$ respects $R$, and so there is a map $A/R\to A\sslash R$. 
\begin{theorem}
    For any type $A:\univ$ and equivalence relation $R:A\to A\to \Prop$, the type
    $A\sslash R$ is a set, the map $q:A\to A\sslash R$ is a surjection which respects
    $R$, and moreover the induced map $s:A/R\to A\sslash R$ is an equivalence.
\end{theorem}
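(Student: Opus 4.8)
The plan is to verify the three claims in turn, reserving the bulk of the work for the last. First I would show $A \sslash R$ is a set: an element is a pair $(P, e)$ where $P : A \to \univ$ is a predicate (so $P(b)$ is a proposition for each $b$, hence $P$ lives in a product of propositions over $A$, which is a proposition by function extensionality), and $e$ witnesses that $P$ is an equivalence class of $R$, which is a propositional truncation and hence a proposition. So being an equivalence class of $R$ is a proposition, and $A \sslash R$ is a $\Sigma$-type of a predicate over the set $\univ$-of-predicates---more carefully, equality of two such pairs reduces by Theorem~\ref{thm:patheq} to equality of the $P$-components, and equality of predicates $A \to \Prop$ is a proposition by function extensionality together with the fact (Lemma~\ref{lemma:propext-to-propuniv}, propositional univalence) that equality of propositions is a proposition. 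Hence $A \sslash R$ is a set.

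Next, that $q(a) = \lambda b. R(a,b)$ respects $R$: if $R(a,a')$ then since $R$ is an equivalence relation, $R(a,b) \lequiv R(a',b)$ for every $b$, so by proposition extensionality $R(a,b) = R(a',b)$, and by function extensionality $q(a) = q(a')$. This also shows $q(a)$ genuinely is an equivalence class (witnessed by $|(a, \lambda b. \text{id})|$), so $q$ is well-defined into $A \sslash R$. Surjectivity of $q$ is essentially the definition of equivalence class: given $(P,e) : A \sslash R$, the witness $e : \trunc{\qSig{a:A}\qPi{b}(P(b) \simeq R(a,b))}$ gives, after truncation, that $\fib_q(P,e)$ is inhabited---one must check $q(a) = (P,e)$ from $P(b) \simeq R(a,b)$, which again uses proposition extensionality and function extensionality on the $P$-component, and the fact that the equivalence-class witness is a proposition so carries no further data. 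Since being a surjection is a proposition, working under the truncation is legitimate.

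Finally, the induced map $s : A/R \to A \sslash R$ (which exists by the recursion principle for the set quotient, since $A \sslash R$ is a set and $q$ respects $R$) is an equivalence. By Theorem~\ref{thm:equiv-char} it suffices to show $s$ is both a surjection and an embedding. Surjectivity: $s \comp [-] = q$ (by the computation rule for the quotient recursor), and $q$ is a surjection, so $s$ is a surjection---a composite $X \to Y \to Z$ with the composite surjective forces the second map surjective, since $\fib_s(z)$ receives a map from $\fib_{s \comp [-]}(z)$. Embedding: since $A \sslash R$ is a set, by Theorem~\ref{thm:embedding-monomorphism} it is enough to show $s$ is a monomorphism; equivalently, since $[-]$ is surjective (the displayed lemma above) and $A/R$ is a set, it suffices to show that whenever $s([x]) = s([y])$ we have $[x] = [y]$. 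But $s([x]) = s([y])$ unfolds to $q(x) = q(y)$, i.e.\ $\lambda b. R(x,b) = \lambda b. R(y,b)$; applying $\happly$ at $b = x$ gives $R(x,x) = R(y,x)$, and since $R(x,x)$ holds (reflexivity of $R$) we get $R(y,x)$, hence $R(x,y)$ by symmetry, hence $[x] = [y]$ by the path constructor of the quotient. The main obstacle here is the embedding step: one must be careful that $s$ is a priori only known to be a monomorphism with respect to maps out of arbitrary types, and to reduce testing injectivity to testing on point-constructors $[x], [y]$ one genuinely needs surjectivity of $[-]$ together with the fact that $A/R$ is a set (so that the relevant equality types are propositions and the factorization through surjections of Theorem~\ref{thm:embedding-monomorphism} applies cleanly). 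Everything else is a routine propositional-extensionality-plus-function-extensionality computation.
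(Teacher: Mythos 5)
Your proposal is correct and takes essentially the same route as the paper: $A\sslash R$ is a set as a subtype of the set of predicates, $q$ respects $R$ and is surjective via proposition and function extensionality together with the definition of equivalence class, $s$ is surjective because $q = s\comp[-]$, and $s$ is an embedding by using surjectivity of $[-]$ and set-ness to reduce to injectivity on representatives, then evaluating the equality of predicates at a point and invoking reflexivity and symmetry of $R$. The only cosmetic differences are your detour through the monomorphism characterization and your choice of evaluation point ($x$ rather than $b$).
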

\begin{proof}
    We have that $A\to\Prop$ is a set, and that being an equivalence class is a
    property, so $A\sslash R$ is a set, as a subtype of a set.

    Let $P:A\sslash R$. We wish to see $\qExists{a:A} P = q(a)$. As $q(a) = R(a,-)$,
    this is the definition of being an equivalence class.

    Next we need to see that if $R(a,a')$, then $q(a)=q(a')$. For a fixed $b$, we
    have $q(a)(b) = R(a,b)$ and $q(a')(b) = R(a',b)$, so we need to see that $R(a,b)=
    R(a',b)$. As both are propositions, we need $R(a,b) \lequiv R(a',b)$. We have
    maps in both directions by transitivity and symmetry of~$R$.

    Finally, we need to see that the induced map $s:A/R\to A\sslash R$ is an
    equivalence. First, $s$ is surjective: as $q$ is surjective, for any equivalence
    relation $P$, we have
    \[\qExists{a:A}q(a) = P,\]
    and $q$ factors as $s\comp {[-]}$, so we have
    \[\qExists{a:A}s[a] = P.\]
    Finally, $s$ is an embedding: let $x,y:A/R$ and let $s(x)= s(y)$. As $x=y$ is a
    proposition, and $[-]$ is surjective, we may assume that $x=[a]$ and $y=[b]$ for
    some $a,b:A$. Then we have for any $c$ that 
    \[R(a,c) = s([a])(c) = s(x)(c) = s(y)(c) = s([b])(c) = R(b,c),\]
    so that $R(a,b) = R(b,b)$, and the latter is inhabited by reflexivity.
\end{proof}

We turn now to the homotopy circle, which we use to show deficiencies in proposed
notions. The basic idea is that an obvious way of stating a particular notion may not
interact as expected with general types, and we use $\Sone$ to show that we need to
be more subtle when defining a particular notion. This appears most clearly in
Section~\ref{section:sdt-comparison}, where we see why we cannot translate
definitions from synthetic domain theory directly.

\begin{definition}\label{def:sone}
    \index{circle|see{homotopy circle}}
    \definesymbolfor{$\Sone$}{Sone}{homotopy circle}
    The \name{homotopy circle}, $\Sone:\univ$, is inductively generated by
\begin{itemize}
    \item an element $\base:\Sone$,
    \item a path $\lp:\base = \base$.
\end{itemize}
\end{definition}
The recursion principle is easy to state:
\begin{itemize}
    \item For any type $C:\univ$, with $b:C$ and $l:b=b$, we have a unique map
    $f:\Sone\to C$ such that
    \[f(\base) \jeq b\]
    and
    \[\ap_f(\lp) = l.\]
\end{itemize}
With ordinary inductive types, it is safe to treat the computation rules associated
with recursion and induction principles as judgemental equalities. For higher
inductive types, there are philosophical and technical issues with treating the
computation rules as judgemental equalities. More information on the status of
higher-inductive types can be found in the discussion at the end of the chapter
(Section~\ref{section:uttdiscussion}).

The induction principle has the same shape, but is unexpectedly trickier to state
explicitly. The key point is that the image of $\lp$ is no longer a simple path, but
a path living over $\lp$:
\begin{itemize}
    \item For any $C:\Sone\to \univ$ with $b:C(\base)$ and $l:\transport^{C}(\lp,b)=b$,
        we have a unique $f:\qPi{x:\Sone}C(x)$ such that
        \[f(\base)\jeq b\]
        and
        \[\apd_f(\lp) = l.\]
\end{itemize}
Note that here $\apd_f(\lp)$ must have type $\transport^C(\lp,b) = b$, so the
required equation for the image of $\lp$ does type-check.
The induction principle can be stated another way, which we give as a lemma.
\begin{lemma}\label{lemma:sone-induction}
Given any $C:\Sone\to\univ$, we have that
    \[\big(\qPi{x:\Sone}C(x)\big)\simeq
    \Big(\qSig{b:C(\base)}\transport^{B}(\lp,b)=b\Big).\]
\end{lemma}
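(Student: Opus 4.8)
The plan is to extract this equivalence directly from the induction principle for $\Sone$; the statement is really just a repackaging of that principle as a ``dependent universal property''. Define the forward map
\[\Phi:\big(\qPi{x:\Sone}C(x)\big)\to\Big(\qSig{b:C(\base)}\transport^{C}(\lp,b)=b\Big)\]
by $\Phi(f)\defeq(f(\base),\apd_f(\lp))$. This type-checks: by definition of $\apd$, the term $\apd_f(\lp)$ has type $\transport^{C}(\lp,f(\base))=f(\base)$, so it is a valid second component over the first component $f(\base)$. For the candidate inverse, given a pair $(b,l)$ with $l:\transport^{C}(\lp,b)=b$, the induction principle for $\Sone$ yields a dependent function, which I will call $\Psi(b,l):\qPi{x:\Sone}C(x)$, with $\Psi(b,l)(\base)\jeq b$ and $\apd_{\Psi(b,l)}(\lp)=l$; moreover this function is unique with these two properties.

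First I would check $\Phi\comp\Psi\htpy\id$. For a pair $(b,l)$ we have $\Phi(\Psi(b,l))=(\Psi(b,l)(\base),\apd_{\Psi(b,l)}(\lp))$, whose first component is judgmentally $b$ and whose second component is $l$ by the path-computation rule for $\Psi(b,l)$; by Theorem~\ref{thm:patheq} (using $\refl:b=b$, after which the needed equation on second components reduces to $\apd_{\Psi(b,l)}(\lp)=l$) this pair equals $(b,l)$. For the other direction, $\Psi\comp\Phi\htpy\id$, I would invoke the \emph{uniqueness} clause: fix $f:\qPi{x:\Sone}C(x)$ and put $(b,l)\defeq\Phi(f)=(f(\base),\apd_f(\lp))$. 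Then both $f$ and $\Psi(b,l)$ are dependent functions out of $\Sone$ whose value at $\base$ is $b$ and whose $\apd$ along $\lp$ is $l$ --- for $f$ this is immediate from the choice of $b$ and $l$, and for $\Psi(b,l)$ it is the two computation rules --- so by uniqueness $\Psi(\Phi(f))=f$.

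Combining these, $\Psi$ is an inverse of $\Phi$ in the sense of the definition of $\inverse$ in Section~\ref{section:homotopies}, and hence $\Phi$ is an equivalence by Theorem~\ref{thm:equiv-char}; this is exactly the claimed equivalence $\big(\qPi{x:\Sone}C(x)\big)\simeq\big(\qSig{b:C(\base)}\transport^{C}(\lp,b)=b\big)$. I do not expect a genuine obstacle here: the only point requiring any care is that for higher-inductive types the path-computation rule $\apd_f(\lp)=l$ holds only as a propositional equality rather than a judgmental one, but since the target $\Sigma$-type is not assumed to be a set this is harmless --- every place we identify terms above, a propositional equality is all that is needed, so no coherence issues arise.
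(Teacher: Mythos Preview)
Your proof is correct and follows essentially the same approach as the paper: both define the forward map $f\mapsto(f(\base),\apd_f(\lp))$ and appeal to the induction principle together with its uniqueness clause. The only cosmetic difference is that the paper phrases the conclusion as ``$c$ has contractible fibers'' (the given $(b,p)$ has a \emph{unique} preimage), whereas you build the inverse $\Psi$ explicitly and verify both round-trips; these are two packagings of the same argument.
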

That is, dependent functions from $\Sone$ to a type family $C$ are exactly given by
an element of $C(\base)$ and a path over $\lp$ in $C$.
\begin{proof}
    Given $f:\qPi{x:\Sone}C(x)$ define $b \defeq f(\base)$ and $p \defeq
    \apd_f(\lp)$, so that we have the pair $(b,p):\qSig{b:C(\base)}\transport^{C}(\lp,b)=b$. This gives
    us a map
    \[c:\big(\qPi{x:\Sone}C(x)\big)\to \qSig{b:C(\base)}\transport^{C}(\lp,b)=b.\]
    Now if $(b,p)$ is any element of $\qSig{b:C(\base)}\transport^{C}(\lp,b)=b$, we
    have that the dependent function $f:\qPi{x:\Sone}C(x)$ defined by $f(\base)=b$
    and $\apd_f(\lp) = p$ is the unique function $f$ such that $c(f) = (b,p)$. Hence,
    $c$ has contractible fibers.
\end{proof}

The utility of $\Sone$ as a counterexample stems from the fact that it is not a set:
\begin{lemma}
        Assuming univalence, $\Sone$ is not a set.
\end{lemma}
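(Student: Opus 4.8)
The plan is to show that if $\Sone$ were a set, we could derive a contradiction with univalence by exhibiting a non-trivial loop in the universe that $\Sone$ fails to ``see''. Concretely, I would use $\Sone$ to construct a type-family $C:\Sone\to\univ$ whose transport along $\lp$ is a non-identity equivalence. The natural choice is $C\defeq \recop_{\Sone}(\univ,\Z,\ua(\sigma))$, where $\sigma:\Z\simeq\Z$ is the successor equivalence; univalence is exactly what licenses this definition, since we need $\ua(\sigma):\Z=\Z$. The computation rule for $\Sone$ then gives $\ap_C(\lp) = \ua(\sigma)$, and hence $\coe(\ap_C(\lp)) = \coe(\ua(\sigma)) = \sigma$ by the fact that $\coe$ and $\ua$ are inverse (Lemma~\ref{lemma:ua-to-equiv-is-pi} together with the definition of univalence). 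In particular $\transport^C(\lp,-)=\sigma$ is the successor map on $\Z$, which is not the identity since e.g. $0\neq 1$ in $\Z$.

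Now suppose for contradiction that $\isSet(\Sone)$. Then $\base=\base$ is a proposition, so we have a path $e:\lp = \refl[\base]$. Applying $\ap_{\ap_C}$ to $e$, or more directly applying $\ap$ of the map $p\mapsto \transport^C(p,-)$ to $e$, we obtain $\transport^C(\lp,-) = \transport^C(\refl,-)$. But $\transport^C(\refl,-)$ is definitionally the identity map on $C(\base)\jeq\Z$, while we computed $\transport^C(\lp,-)=\sigma$. Composing these two equalities via $\happly$ at the point $0:\Z$ yields $1 = \sigma(0) = 0$ in $\Z$, contradicting the fact that $\Z$ is a set with $0\neq 1$ (which follows from $\nat$ being discrete, Theorem~\ref{thm:nat-is-set}, and the standard construction of $\Z$). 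Hence $\neg\isSet(\Sone)$.

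The main obstacle I anticipate is purely bookkeeping: being careful that the recursion principle for $\Sone$ as stated gives us $\ap_f(\lp) = l$ rather than a judgemental equality, so the chain of reasoning must be threaded through propositional equalities using $\ap$, $\happly$, and transitivity of paths, rather than appealing to $\jeq$. A second minor point is that one should package ``$0\neq 1$ in $\Z$'' cleanly—either by noting $\Z$ is built from $\nat$ and inherits decidable equality, or by directly exhibiting a map $\Z\to\bool$ separating $0$ and $1$. Neither of these is deep; the real content is the single use of univalence to manufacture $\ua(\sigma)$, after which the argument is just transport computation plus the observation that a set cannot host a non-trivial automorphism visible through a loop.
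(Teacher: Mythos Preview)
Your proof is correct but takes a genuinely different and more economical route than the paper. The paper carries out the full encode--decode argument to establish an equivalence $(\base=\base)\simeq\Z$, defining $\codefxn:\Sone\to\univ$ exactly as your $C$, but then also constructing $\encode$, $\decode$, and verifying both round-trips (including a nontrivial transport computation for the $\lp$ case of $\decode$). From $(\base=\base)\simeq\Z$ it concludes that $\base=\base$ is not a proposition, hence $\Sone$ is not a set. Your argument sidesteps the encode--decode machinery entirely: you only need that $\transport^C(\lp,-)$ is successor while $\transport^C(\refl,-)$ is the identity, so $\lp=\refl$ would force $1=0$. What the paper's approach buys is the much stronger conclusion $\Omega\Sone\simeq\Z$ (the fundamental group of the circle); what yours buys is brevity and the avoidance of the trickier direction of the equivalence. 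Both arguments invoke univalence at the same single point---to manufacture the path $\ua(\sigma):\Z=\Z$ needed to define the type family---and both rely on the same computation $\transport^C(\lp,-)=\coe(\ua(\sigma))=\sigma$.
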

There are several ways to prove this; in the HoTT Book, they show that $\Sone$ is
equivalent to a type (in a higher universe) which is not a set. We instead follow the
proof that $\pi_1(\Sone)=\Z$. The type-theoretical proof is an encode-decode proof
(See Section~\ref{section:homotopies}).
\begin{proof}
    We show that $\base=\base$ is equivalent to $\Z$, by giving a type
    family \[\codefxn:\Sone\to \univ,\] along with a family of
    functions \[\encode:\qPi{x:\Sone}(\base=x)\to \codefxn(x)\] and a family of
    inverses \[\decode:\qPi{x:\Sone}\codefxn (x)\to (\base=x).\]

    We define
    \begin{align*}
        \codefxn(\base) &\defeq \Z,\\
        \apd_\codefxn(\lp) &= \ua(\succop).
    \end{align*}
    So that if $\codefxn(\base) \simeq (\base=\base)$ we have that $\base=\base$ is
    equivalent to a set which is not a proposition. Observe that we have
    $\transport^{\codefxn}(\lp,z) = z+1$ as follows:
    \[\transport^{\codefxn}(\lp,z) = \coe(\apd_{\codefxn}(\lp),z) =
    \coe(\ua(\succop),z) = z+1.\]
    Similarly, we have $\transport^{\codefxn}(\lp^{-1},z) = z-1$.

    We define $\encode$ directly, and $\decode$ by induction:
    \begin{align*}
        \encode(p) &\defeq \transport^{\codefxn}(p,0)\\
        \decode_{\base}(z) &= \lp^{z}.
    \end{align*}
    For the loop case of $\decode$, we need an element of
    \[\transport^{\lambda x.\codefxn(x) \to (\base = x)}(\lp,\lp^{-}) =
    \lp^{-}.\]
    We have
    \begin{align*}
        \transport^{\lambda x.\codefxn(x) \to (\base = x)}(\lp,\lp^{-}) 
        &= \transport^{\lambda x.\to (\base = x)}(\lp) \comp \lp^{-}\comp
        \transport^{\codefxn}(\lp^{-1})  \\
        &= (\lambda p.p\ct \lp) \comp \lp^{-}\comp
        \transport^{\codefxn}(\lp^{-1})  \\
        &= (\lambda p.p\ct \lp) \comp \lp^{-}\comp
        \pred \\
        &= \lambda n . \lp^{n-1}\ct \lp\\
        &= \lambda n.\lp^n.
    \end{align*}

    We need to see that $\encode$ and $\decode$ are inverse. To show that
    $\decode_x(\encode_x(p)) = p$ for all $x:\Sone$ and $p:\base=x$
    we use based path induction:
    \[\decode_{\base}(\encode_{\base}(\refl)) = \decode(\transport^{\codefxn}(\refl,0))
    = \decode(0) = \lp^{0} = \refl.\]
    For the other direction, we use the induction principle for $\Sone$. The claim is
    that for all $z:\Z$ we have ${\encode_{\base}(\decode_{\base}(z))=z}$. That is, we
    want to see
    $\transport^{\codefxn}(\lp^z,0) = z,$
    but we remarked earlier that $\transport^{\codefxn}(\lp,x) = \succop(x)$, and so
    we have
    \[\transport^{\codefxn}(\lp^z,0) = 0 + z = z.\qedhere\]
\end{proof}
Note that even without univalence, we cannot have that $\Sone$ is a set, since our
type theory is still consistent with univalence. This means we can still use $\Sone$
to see whether a notion behaves well with general types.

\begin{lemma}
    We have $\qPi{x,y:\Sone}\trunc{x=y}$.
\end{lemma}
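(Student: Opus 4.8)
The plan is to apply the induction principle for $\Sone$ (in the form of Lemma~\ref{lemma:sone-induction}) twice, exploiting the fact that when the target type family is proposition-valued, the path-over condition demanded by circle induction holds automatically.

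First I would record a general observation: if $C:\Sone\to\univ$ is such that each $C(x)$ is a proposition, then $C(\base)\to\qPi{x:\Sone}C(x)$. Indeed, given $b:C(\base)$, the term $\transport^C(\lp,b)$ also lies in $C(\base)$ (since $\lp:\base=\base$), and $C(\base)$ is a proposition, so there is a path $w:\transport^C(\lp,b)=b$. Feeding the pair $(b,w):\qSig{b:C(\base)}\transport^{C}(\lp,b)=b$ through the equivalence of Lemma~\ref{lemma:sone-induction} produces the desired element of $\qPi{x:\Sone}C(x)$.

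Next I would apply this observation with $C(x)\defeq\qPi{y:\Sone}\trunc{x=y}$. Each $\trunc{x=y}$ is a proposition by definition of truncation, and by function extensionality (propositions form an exponential ideal) $C(x)$ is again a proposition for every $x:\Sone$. Hence it suffices to produce an element of $C(\base)=\qPi{y:\Sone}\trunc{\base=y}$. Applying the observation once more, now with $D(y)\defeq\trunc{\base=y}$ — which is again proposition-valued — it suffices to give an element of $D(\base)=\trunc{\base=\base}$, and $|\refl[\base]|$ serves.

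I do not expect any real obstacle here: the only points requiring care are that the condition $\transport^C(\lp,b)=b$ genuinely type-checks (transport along $\lp$ is an endomap of $C(\base)$, so comparing it with $b$ makes sense), and that we may invoke function extensionality to know the relevant $\Pi$-types are propositions. Both are available from the development so far, so the argument reduces to the two-line double induction sketched above.
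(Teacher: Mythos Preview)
Your proposal is correct and follows essentially the same route as the paper: double circle induction, with the loop cases discharged automatically because the target families are proposition-valued. The paper's proof is terser (it simply says ``since $\trunc{x=y}$ is a proposition, there is nothing to check''), whereas you have made explicit the general principle that circle induction into a proposition-valued family only needs data at $\base$, and you have spelled out the appeal to function extensionality for the outer induction; but the underlying argument is the same.
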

\begin{proof}
    By induction on both $x$ and $y$: fix $x\jeq y \jeq \base$. Then we have
    $\refl:x=y$. Truncating this element gives $ |\refl |:\trunc{x=y}$. We need to
    give a path over $\lp$ in both coordinates from $ |\refl |$ to $ |\refl |$, but
    since $\trunc{x=y}$ is a proposition, there is nothing to check.
\end{proof}
\begin{corollary}\label{lemma:sone-constant}
    If $Y$ is a set and $f:\Sone\to Y$, then for any $x,y:\Sone$ we have that
    $f(x)=f(y)$.
\end{corollary}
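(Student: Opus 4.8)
The plan is to combine the previous lemma, which supplies $\qPi{x,y:\Sone}\trunc{x=y}$, with the observation that $\ap_f$ transports paths in $\Sone$ to paths in $Y$, using crucially that $Y$ being a set makes $f(x)=f(y)$ a proposition so that the truncation can be eliminated into it.

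Concretely, I would first fix $x,y:\Sone$; by the previous lemma we have an element of $\trunc{x=y}$. Next, recall the map $\ap_f:(x=y)\to(f(x)=f(y))$. Since $\isSet(Y)$ unfolds to $\qPi{u,v:Y}\isProp(u=v)$, the type $f(x)=f(y)$ is a proposition, so by the universal property of propositional truncation the map $\ap_f$ factors through $|-|:(x=y)\to\trunc{x=y}$, yielding a map $\trunc{x=y}\to(f(x)=f(y))$. Applying this map to the inhabitant of $\trunc{x=y}$ coming from the previous lemma produces the desired element of $f(x)=f(y)$.

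There is essentially no obstacle: the whole content is bookkeeping. The only point that needs attention is that eliminating $\trunc{x=y}$ into $f(x)=f(y)$ is legitimate precisely because the target is a proposition, which is exactly where the hypothesis that $Y$ is a set enters. (One could equivalently package this as saying $\ap_f$ makes $f$ weakly constant in the sense of Section~\ref{section:constancy}, but the direct factoring argument is enough here.)
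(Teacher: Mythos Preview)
Your proposal is correct and is essentially identical to the paper's proof: both fix $x,y$, invoke the previous lemma to obtain $\trunc{x=y}$, note that $f(x)=f(y)$ is a proposition because $Y$ is a set, and then eliminate the truncation using $\ap_f$. The paper simply phrases the truncation elimination as ``we may assume we have $p:x=y$'' where you spell out the factoring explicitly.
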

\begin{proof}
    Fix $x,y:\Sone$. By the previous
    lemma, we have $\trunc{x=y}$. As $Y$ is a set, $f(x)=f(y)$ is a
    proposition, so we may assume we have $p:x=y$. Then $\ap_f(p):f(x)=f(y)$.
\end{proof}

\section{Discussion}\label{section:uttdiscussion}
Truncation was already being studied \cite{mendler,pfenning2001} as \emph{squash
types} and \emph{bracket types} before the univalent notion of proposition was
introduced. In fact, Awodey and Bauer~\cite{AwodeyBauer2004Bracket} proposed
propositions as bracket types already in 2004, in the context of extensional type
theory. When the general notion of higher-inductive type was introduced, they quickly
became a standard example. Higher inductive types are, however,  poorly understood.
The schema is intuitively clear, but a mathematically satisfying account of their
syntax and semantics is surprisingly difficult. Lumsdaine and Shulman recently
released an awaited approach to the semantics of HITs via \emph{cell-monads with
parameters}~\cite{ls2017semantics}, but these only work for HITs of a certain form,
and it is not clear if there is a syntactic scheme that corresponds precisely to the
semantic notions they give. A semantic view of HITs from the cubical approach is
given in \cite{chm2018cubical}, and a view of HITs, focusing on those that exist at
the level of groupoids (which is enough for the HITs we are interested in here) is
in~\cite{DybjerMoeneclaey2017}. On the syntactic side, recent progress can be seen in
\cite{kaposi2018hiit} and in the more sweeping~\cite{computationalttIV}, which also
contains a more thorough summary of the state of the art.

Higher-inductive types represent one standard approach to truncation, while
propositional resize represents another. It is open whether Voevodsky's resizing rules
are consistent with MLTT~\cite{Coquand2018VV}. The weaker form of
resizing discussed in the HoTT Book holds in the simplicial set model, and in fact in
any model which validates the law of excluded middle.

The univalence axiom was first proposed by Vladimir Voevodsky, motivated by his
simplicial set model~\cite{ssetmodel}. However, it was noticed (under the name
\emph{universe extensionality}) in the groupoid model already by Hofmann and
Streicher~\cite{gpdmodel} in 1995, although all types in the universe are sets in the
groupoid model.  The notion of $h$-level is also due to Voevodsky.  The first proof
of function extensionality from univalence was given by Voevodsky in his Foundations
library~\cite{UniMath}, and it has been been modified and reworked in various ways
since. The proof of function extensionality from naive function
extensionality was first done by Voevdosky in the Foundations library, but seems to
have been overlooked until Mart\'in Escard\'o found it and translated it to Agda in
his TypeTopology library~\cite{typetopology}.

Extensionality axioms, such as univalence, have always been treated with some
trepidation in type theory, since axioms break computational properties like
canonicity---the metatheorem that all terms of type $\nat$ are judgementally equal to
a numeral. Since the outset, there has been hope that the univalence axiom can be
given a computational interpretation. Following the introduction of the first model
in cubical sets~\cite{huber2016,bezem2014model}, a great deal of work has been done
on developing \emph{cubical type theories}~\cite{cubicaltt} which take paths as
primitive, and in which the univalence axiom is a theorem. Importantly, canonicity
holds in cubical type theory~\cite{huber2016canonicity}.


\chapter{Mathematical foundations}\label{chapter:foundations}
In the previous chapter, we developed the core ideas of univalent mathematics. Here
we turn to the more traditional mathematical notions we will need in part 2. Our
approach is still heavily informed by the univalent perspective. In particular,
choice principles in Section~\ref{section:choice} are stated in a form which is
unique to univalent mathematics. We begin by examining relations
and a predicative notion of power set (Section~\ref{section:predicates}).
Importantly, we will see (in Theorem~\ref{thm:functions-are-relations}) that
assuming the univalence axiom, functions are single-valued relations, which will be a
prototype of similar results
about partial maps in Chapter~\ref{chapter:partial-functions}. We then discuss a
notion of constancy introduced in~\cite{keca2016} that will appear in a few places,
before turning to the natural numbers. The natural numbers satisfy a property that
looks superficially like Markov's principle, which allows us to remove truncations in
certain cases. This is an interesting example, and will be crucial when working with
computability in Part~\ref{part:three}; we also introduce here a
representation of a predicate on $\nat$ as a function $\nat\to\nat$ (which we give
the ad hoc name \emph{characteristic function}) that will be used to define
computability of predicates. Before turning to choice, we discuss an internal notion
of monad via Kleisli extension.

\section{Assumptions}\label{section:assumptions}
In the previous chapter, we introduced several extensions of MLTT which are used in
univalent approaches to mathematics. Here we lay out explicitly the assumptions we use in the
rest of the thesis.

Each of the following assumptions is parameterized by a universe $\univ$, and we
assume them for all universes.
\begin{assumption}[Function extensionality] Propositions form a strong exponential
    ideal in every universe.
\end{assumption}
\begin{assumption}[Proposition extensionality] For every universe $\univ$ and for
    every propositions $P,Q:\univ$, if $P$ and $Q$ are logically equivalent, then
    $P=Q$.
\end{assumption}
\begin{assumption}[Propositional truncation] For every universe $\univ$ and any type $X:\univ$ we have a
    proposition
    $\trunc{X}:\univ$ and a map ${|{-}|}:X\to\trunc{X}$ universal
    among all propositions which $X$ maps into.
\end{assumption}
When we need univalence, we will assume it explicitly. We will not assume
resizing rules, but we occasionally discuss their consequences. Similarly, we will
not use any higher inductive types besides truncations, but we will
sometimes discuss their properties. In particular, we will compare our work to work
by others using truncations (see Sections~\ref{section:delay} and~\ref{section:qiit},
as well as Section~\ref{section:rosolini}), and we will look at the circle for
motivation in Section~\ref{section:sdt-comparison}.

\section{Predicates and relations}\label{section:predicates}
\index{predicate|(}
In category theory, a \emph{subobject} of an object $X$ is an equivalence class of
monos into $X$. Replacing mono with embedding (Section~\ref{section:embeddings}), we
could take a subobject to be an equivalence class of embeddings into $X$. Since we
should write the type of embeddings into $X$ as
\[\qSig{A:\univ}\qSig{f:A\to X}\isEmbedding(f),\]
an equality between embeddings $(A,f,-)$ and $(B,g,-)$ would consist of a path
$p:A=B$ such that $g\comp \coe(p) = f$. If we have univalence, equality here is
exactly the equivalence relation we need; if not we should replace equality in the
first component with equivalence as types.

With univalence, the equivalence relation of interest is simply equality, so there is
no quotienting to be done. However, even without univalence, for any
embedding $f:A\to X$ we can find a canonical representative of its equivalence class, by looking
at the total space of the fibers over $f$. In fact, this suggests a definition of
subtype which works better in our framework: predicates on $X$. Explicitly, given any
embedding $f:A\to X$, we have the predicate $\fib_f:X\to\univ$, which is proposition
valued precisely because $f$ is an embedding. Conversely, given any predicate
$P:X\to\univ$ the
first projection $\big(\qSig{x:X}P(x)\big)\to X$ is an embedding.
\begin{lemma}
    For any $X:\univ$, the type of predicates $X\to\Prop$ is a retract of the type
    \[\qSig{A:\univ}\qSig{f:A\to X}\isEmbedding(f)\]
    of embeddings into $X$. Moreover, assuming univalence, the retraction is an
    equivalence.
\end{lemma}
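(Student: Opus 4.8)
The plan is to build the retraction in two directions and check one round-trip. First I would define the section map
\[
\sigma : (X\to\Prop) \to \qSig{A:\univ}\qSig{f:A\to X}\isEmbedding(f)
\]
by sending a predicate $P$ to the triple $\big(\qSig{x:X}P(x),\ \pr_0,\ \varepsilon_P\big)$, where $\varepsilon_P$ witnesses that $\pr_0:\qSig{x:X}P(x)\to X$ is an embedding. That witness is obtained by Lemma~\ref{lemma:embedding-propositional-fibers} (or directly): the fiber of $\pr_0$ over $x:X$ is $\qSig{(x',p):\qSig{x':X}P(x')} x'=x$, which by Lemma~\ref{lemma:path-families} is equivalent to $P(x)$, hence a proposition since $P$ is a predicate; so $\pr_0$ has propositional fibers and is an embedding. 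In the other direction I would define the retraction
\[
\rho : \Big(\qSig{A:\univ}\qSig{f:A\to X}\isEmbedding(f)\Big) \to (X\to\Prop)
\]
by $\rho(A,f,e) \defeq \fib_f$, which is valued in $\Prop$ precisely because $e:\isEmbedding(f)$ tells us $\fib_f(x)$ is a proposition for each $x:X$.

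Next I would check that $\rho\comp\sigma\htpy \id_{X\to\Prop}$. Given $P:X\to\Prop$ and $x:X$, we have $\rho(\sigma(P))(x) \jeq \fib_{\pr_0}(x) \jeq \qSig{(x',p):\qSig{x':X}P(x')}x'=x$, which by Lemma~\ref{lemma:path-families} is equivalent to $P(x)$. Since both sides are propositions, proposition extensionality upgrades this equivalence to an equality $\rho(\sigma(P))(x) = P(x)$, and then function extensionality gives $\rho(\sigma(P)) = P$. This establishes that $X\to\Prop$ is a retract of the type of embeddings into $X$.

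For the final clause I would show that, assuming univalence, $\sigma$ is also a left inverse of $\rho$ up to homotopy, i.e. $\sigma\comp\rho\htpy\id$. Given $(A,f,e)$, we have $\sigma(\rho(A,f,e)) = \big(\qSig{x:X}\fib_f(x),\ \pr_0,\ -\big)$, so I need an identification of this triple with $(A,f,e)$. The key step is that the map $e_f:A\to\qSig{x:X}\fib_f(x)$ sending $a\mapsto (f(a),(a,\refl))$ is an equivalence — this is a standard reshuffling: $\qSig{x:X}\fib_f(x)\jeq\qSig{x:X}\qSig{a:A}f(a)=x \simeq \qSig{a:A}\qSig{x:X}f(a)=x \simeq A$ by Lemma~\ref{lemma:singleton-contractible}, and one checks this composite is $e_f$ up to homotopy, with $\pr_0\comp e_f \htpy f$. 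Univalence then turns the equivalence $e_f$ into a path $p:A = \qSig{x:X}\fib_f(x)$ with $\pr_0\comp\coe(p) \htpy f$; function extensionality makes this a genuine equality $\pr_0\comp\coe(p) = f$, and the embedding-witness component is carried along automatically since $\isEmbedding$ is a proposition. Assembling these via Theorem~\ref{thm:patheq} (iterated, for the $\Sigma$-type of triples) yields $\sigma(\rho(A,f,e)) = (A,f,e)$, so $\sigma$ and $\rho$ are mutually inverse and the retraction is an equivalence.

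The main obstacle I expect is the bookkeeping in the last paragraph: assembling the componentwise identifications — a path in the universe from univalence, a function-extensionality equality on the middle component, and propositional triviality on the embedding witness — into a single equality of triples, using Theorem~\ref{thm:patheq} and being careful about how $\transport$ acts on the $f$-component along $p$ (one needs the lemma that $\transport^{\lambda A.A\to X}(p,f) = f\comp\coe(p)^{-1}$, which appears in Section~\ref{section:transport}). The mathematical content is entirely routine; the care is in not mismatching variances of $\coe(p)$ versus $\coe(p)^{-1}$.
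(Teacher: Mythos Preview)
Your proposal is correct and follows essentially the same route as the paper: define the two maps via $\fib_f$ and $(\qSig{x:X}P(x),\pr_0)$, verify the retraction identity using Lemma~\ref{lemma:path-families} together with proposition and function extensionality, and for the equivalence under univalence use the standard equivalence $A\simeq\qSig{x:X}\fib_f(x)$ (which the paper cites as Lemma~\ref{lemma:domain-is-fiber}) commuting with $f$ and $\pr_0$. Your version is more explicit about the $\Sigma$-type bookkeeping in the last step, but the mathematical content is identical.
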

\begin{proof}
    Let $f:A\to X$ be an embedding and define $R:X\to\univ$ by
    \[R(x) \defeq \fib_f(x).\]
    As $f$ is an embedding, $R$ is proposition valued. Conversely, if $R$ is a
    predicate, define 
    \[A\defeq \qSig{x:X}R(x),\]
    with $f:A\to X$ the first projection. The fiber of the first projection at $x$ is
    \[\qSig{(x',r'):\qSig{x':X}R(x)}x'=x,\]
    which is equivalent to $R(x)$ by Lemma~\ref{lemma:path-families}. Note that by
    proposition extensionality and function extensionality, we have that $\fib_{\pr_0}
    = R$, so that $(A,f)\mapsto \fib_f(x)$ is a retraction. Moreover, by
    Lemma~\ref{lemma:domain-is-fiber}, we know that $A\simeq \qSig{x:X}\fib_f(x)$ and
    that this equivalence composes with the projection to give $f$, so that by
    univalence, $(A,f) = (\qSig{x:X}\fib_f(x),\pr_0)$.
\end{proof}
With this in mind, we make the following definition.
\begin{definition}
    A \emph{subtype} of a type $X$ is a map $X\to\Prop$. If $X$ is a set, we will use
    the term \emph{subset} interchangeably with \emph{subtype}. We will call the type
    of subtypes of $X$ the \emph{power set} of $X$ and write
    \[\powerset{X}\defeq X\to \Prop.\]
    If $A:\powerset{X}$ we sometimes write $x\in A$ for the predicate $A(x)$.
    Following our terminology that a type $Y$ is inhabited when $\trunc{Y}$, we say
    that $A$ \emph{is an inhabited} subtype of $X$ when
    \[\trunc{\qSig{x:X}x\in A}.\]
\end{definition}
The term power \emph{set} is indeed justified:
\begin{lemma}
    The power set of $X$ is always a set.
\end{lemma}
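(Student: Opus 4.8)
The plan is to reduce the claim to the fact that $\Prop$ itself is a set, and then invoke the exponential-ideal property of the $n$-types. Concretely, $\powerset{X}$ is by definition the function type $X\to\Prop$, and Theorem~\ref{lemma:ntypes-exp-ideal} tells us that if $\Prop$ is a $0$-type then so is $X\to\Prop$; so everything comes down to showing $\isSet(\Prop)$.

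To prove $\Prop$ is a set, recall $\Prop\defeq\qSig{P:\univ}\isProp(P)$. First I would observe that equality in this $\Sigma$-type is easy to compute: by Theorem~\ref{thm:patheq}, for $(P,p),(Q,q):\Prop$ we have
\[\big((P,p)=(Q,q)\big)\simeq\Big(\qSig{e:P=Q}\transport(e,p)=q\Big),\]
and since $\isProp(Q)$ is a proposition, the second component $\transport(e,p)=q$ is contractible, so this type is equivalent to $P=Q$. Hence it suffices to show that $P=Q$ is a proposition whenever $P$ and $Q$ are propositions, because then, by closure of the $n$-types under equivalence, each path type of $\Prop$ is a proposition, i.e.\ $\Prop$ is a set.

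The statement that $P=Q$ is a proposition for propositions $P,Q$ is exactly what Theorem~\ref{thm:propext-isprop} gives us, using that our standing assumptions include function extensionality and proposition extensionality, which together yield propositional univalence (Lemma~\ref{lemma:propext-to-propuniv}); indeed Theorem~\ref{thm:propext-isprop} explicitly concludes $\isProp(A=B)$ from these hypotheses. Assembling the pieces: $P=Q$ is a proposition, so $\Prop$ is a set, and therefore $X\to\Prop=\powerset{X}$ is a set by Theorem~\ref{lemma:ntypes-exp-ideal}.

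There is no real obstacle here — the proof is pure bookkeeping over results already established. The only point worth stating carefully is the dependence on proposition extensionality: without it (or univalence) one cannot conclude that $P=Q$ is a proposition, and indeed $\Prop$ need not be a set. If one wanted to avoid even citing Theorem~\ref{thm:propext-isprop}, one could instead note directly that propositional univalence makes $P=Q$ a retract of the proposition $P\lequiv B$ and hence a proposition, but invoking the theorem is cleaner.
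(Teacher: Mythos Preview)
Your proof is correct and follows the same route as the paper: reduce to $\Prop$ being a set, then invoke Theorem~\ref{lemma:ntypes-exp-ideal}. The paper's proof simply asserts ``We know that $\Prop$ is a set'' without justification, whereas you actually supply the argument (via Theorem~\ref{thm:patheq} and Theorem~\ref{thm:propext-isprop}); so your version is strictly more complete. One minor typo: in your final aside you write $P\lequiv B$ where you mean $P\lequiv Q$.
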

\begin{proof}
    We know that $\Prop$ is a set, so by Lemma~\ref{lemma:ntypes-exp-ideal},
    $X\to\Prop$ is a set.
\end{proof}

In a traditional set-theoretic formalization, relations are given either as formulas,
or as subsets of a cartesian product, and then functions $f:X\to Y$ are defined as relations for
which there is a unique $y:Y$ such that $f(x,y)$. While functions are primitive
in type theory, we can show that this correspondence holds, but to account for
general types, we must replace relations with arbitrary type families.
\begin{definition}
    A type family $R:X\to Y\to\univ$ is \emph{functional} when for each $x:X$, there is
    a unique $y:Y$ such that $R(x,y)$. That is, when
    \[\qPi{x:X}\isContr(\qSig{y:Y}R(x,y)).\]
\end{definition}
This definition demonstrates a peculiar feature of univalent definitions. The
traditional way to define a \emph{functional relation} would be to say that $R$ is
functional when for each $x:X$ there is $y:Y$ such that $R(x,y)$ and whenever
$R(x,y)$ and $R(x,y')$ then $y=y'$. This becomes,
\[
\qPi{x:X}\qSig{y:Y}R(x,y)\times \qPi{y,y':Y}R(x,y)\to R(x,y')\to y=y'.
\]
This definition is inadequate in a univalent framework. In the case where $Y$
is not a set, and in the case where $R$ is not valued in propositions, the type above
may have many elements, but we would like to use being functional as a property: we
would like not only that there is a unique $y:Y$ such that $R(x,y)$, but that there
is a unique pair $(y,r):\qSig{y:Y}R(x,y)$. This situation is captured by contractibility.

In the case where $Y$ is a set, any functional type family is already a relation, so
the two types above are equivalent.
\begin{lemma}\label{lemma:propositional-families-have-propositional-sums}
    If $Y$ is a set, and $A:Y\to\univ$ such that $\isProp(\qSig{y:Y}{A(y)})$, then
    $Y$ is valued in propositions. That is,
    \[\qPi{y;Y}{\isProp(A(y))}.\]
\end{lemma}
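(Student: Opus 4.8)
The plan is to fix $y:Y$ and two elements $a,a':A(y)$, and prove $a=a'$, which suffices to establish $\isProp(A(y))$ for every $y:Y$. Form the pairs $(y,a)$ and $(y,a')$ in $\qSig{y:Y}A(y)$. By the hypothesis that this sum type is a proposition, we obtain a path $(y,a)=(y,a')$.

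Next I would apply Theorem~\ref{thm:patheq} (the characterization of equality in dependent sum types) to this path: it yields a pair $(p,q)$ with $p:y=y$ and $q:\transport^{A}(p,a)=a'$. The only subtlety is that $p$ need not be $\refl_y$, so $\transport^{A}(p,a)$ does not immediately reduce to $a$. This is dispatched using the assumption that $Y$ is a set: since $y=y$ is then a proposition, we have a path $r:p=\refl_y$. Transporting $q$ along $r$ (or equivalently composing with $\ap_{\lambda s.\transport^{A}(s,a)}(r)$) gives $\transport^{A}(\refl_y,a)=a'$, and since $\transport^{A}(\refl_y,a)\jeq a$, we conclude $a=a'$.

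There is essentially no hard part here; the argument is a short calculation once the right tools are named. The one place to be careful is exactly the step just described --- one cannot simply assume $p$ is reflexivity, and it is the hypothesis that $Y$ is a set that makes the transport trivial. (If $Y$ were not a set, the statement would genuinely fail, as the earlier discussion of functional type families indicates.)
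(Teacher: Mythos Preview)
Your argument is correct. The paper takes a different, more abstract route: it invokes Lemma~\ref{lemma:path-families} to obtain an equivalence
\[
A(y)\;\simeq\;\qSig{(y',a'):\qSig{y':Y}A(y')}\,y=y',
\]
and then observes that the right-hand side is a $\Sigma$ of a proposition (the path type $y=y'$, since $Y$ is a set) over a proposition (the total space, by hypothesis), hence itself a proposition. Your approach works directly at the level of elements, unpacking a path in the $\Sigma$-type via Theorem~\ref{thm:patheq} and then using that $Y$ is a set to trivialize the transport. Both arguments use the set hypothesis in exactly the same place---to make $y=y'$ a proposition---but the paper's version avoids manipulating transports explicitly, at the cost of citing an auxiliary equivalence lemma. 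Your version is slightly more hands-on and perhaps more transparent about where each hypothesis enters.
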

\begin{proof}
    Fix $y$ so that we have
    \[A(y)\simeq \qSig{(y',a):\qSig{y':Y}{A(y')}} y=y',\]
    by Lemma~\ref{lemma:path-families}, and this is the sum of propositions indexed
    by a proposition.
\end{proof}

Given a type family $R:X\to Y\to \univ$ with a witness $w$ that $R$ is functional,
there is a function $f_R:X\to Y$ defined by
\[f_R(x) \defeq \pi_0(w(x)).\]
Conversely, given any $f:X\to Y$ there is a type family $R_f:X\to Y\to\univ$ given by
\[R_f(x,y) \defeq f(x) = y.\]
Then $\qSig{y:Y}f(x) = y$ is the singleton based at $f(x)$, so is contractible. In
the presence of univalence, this gives an equivalence between the type of functional
relations and the type of functions. And
using Lemma~\ref{lemma:propositional-families-have-propositional-sums}, if $Y$ is a set,
then univalence is not needed.
\begin{theorem}[Rijke, \cite{rijke2012homotopy}]\label{thm:functions-are-relations}
    For any $X,Y:\univ$, the map taking a functional $R:X\to Y\to\univ$ to $f_R$ is a
    left-inverse of the map taking $f:X\to Y$ to $R_f$. Moreover, for any $x:X$ and
    $y:Y$, the type $R_{f_R}(x,y)$ is equivalent to $R(x,y)$.

    Hence, assuming univalence, the map $f\mapsto R_f$ gives rise to an equivalence
    \[(X\to Y)\simeq \qSig{R:X\to Y\to\univ}\qPi{x:X}\isContr(\qSig{y:Y}R(x,y)).\]

    Moreover, if $Y$ is a set, univalence is not needed in the proof.
\end{theorem}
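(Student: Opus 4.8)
The plan is to introduce the two maps explicitly and show they are mutually inverse. Write $\Phi$ for the map $f\mapsto R_f$, viewed as landing in the type $\qSig{R:X\to Y\to\univ}\qPi{x:X}\isContr(\qSig{y:Y}R(x,y))$ of functional relations: concretely $\Phi(f)\defeq(R_f,w_f)$ where $R_f(x,y)\defeq(f(x)=y)$ and $w_f(x)$ is the proof that the singleton $\qSig{y:Y}f(x)=y$ is contractible with centre $(f(x),\refl)$ (contractibility by Lemma~\ref{lemma:singleton-contractible}). Write $\Psi$ for the map $(R,w)\mapsto f_R$ in the other direction, where $f_R(x)\defeq\pr_0(c_x)$ and $c_x:\qSig{y:Y}R(x,y)$ is the centre of contraction picked out by $w(x)$. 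I would first show $\Psi\circ\Phi=\id$, then establish the pointwise equivalence $R_{f_R}(x,y)\simeq R(x,y)$, and finally deduce $\Phi\circ\Psi=\id$, so that $\Phi$ is an equivalence with inverse $\Psi$.

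For $\Psi\circ\Phi=\id$: given $f:X\to Y$, the centre of contraction chosen by $w_f(x)$ is $(f(x),\refl)$, so $f_{R_f}(x)=\pr_0(f(x),\refl)=f(x)$, and hence $f_{R_f}=f$ by function extensionality (indeed judgmentally, by the $\eta$-rule). Only equality in $X\to Y$ is claimed here, so the particular witness $w_f$ is immaterial. For the pointwise equivalence, fix a functional $(R,w)$, $x:X$ and $y:Y$, and write $c_x=(y_0,r_0)$ with $y_0\jeq f_R(x)$ and $r_0:R(x,y_0)$; then $R_{f_R}(x,y)\jeq(y_0=y)$. Applying Lemma~\ref{lemma:path-families} to the family $R(x,-):Y\to\univ$ at the point $y$, and then using that a $\Sigma$-type over a contractible type is equivalent to the value of the family at the centre of contraction (which follows readily from Lemma~\ref{lemma:contr-char}), we get
\[
R(x,y)\;\simeq\;\qSig{(y',r'):\qSig{y':Y}R(x,y')}y'=y\;\simeq\;(y_0=y)\;\jeq\;R_{f_R}(x,y),
\]
the maps being, explicitly, $r\mapsto\ap_{\pr_0}(h(y,r))$ and $p\mapsto\transport^{R(x,-)}(p,r_0)$ with $h$ the contraction component of $w(x)$.

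For the ``hence'': having shown $\Psi$ is a left inverse of $\Phi$, it remains to show it is a right inverse, i.e.\ $\Phi(\Psi(R,w))=(R,w)$ for every functional $(R,w)$. Being functional is a proposition: $\qPi{x:X}\isContr(\qSig{y:Y}R(x,y))$ is a product of propositions since $\isContr$ of a type is a proposition (Theorem~\ref{thm:contr-is-prop}) and propositions form an exponential ideal, so it suffices to prove $R_{f_R}=R$. By univalence the pointwise equivalences $R_{f_R}(x,y)\simeq R(x,y)$ give pointwise equalities $R_{f_R}(x,y)=R(x,y)$, and two applications of function extensionality yield $R_{f_R}=R$. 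Hence $\Phi$ and $\Psi$ are mutually inverse, giving the claimed equivalence. When $Y$ is a set the same argument works without univalence: any functional $R$ is then proposition-valued, since $\qSig{y:Y}R(x,y)$ is contractible hence a proposition and $Y$ is a set, so Lemma~\ref{lemma:propositional-families-have-propositional-sums} applies; likewise $R_{f_R}(x,y)\jeq(f_R(x)=y)$ is a proposition because $Y$ is a set. Thus $R_{f_R}(x,y)$ and $R(x,y)$ are logically equivalent propositions, hence equal by proposition extensionality, and the rest goes through unchanged.

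There is no deep obstacle; the care required is bookkeeping around proof-irrelevance. The points to watch are: for the left-inverse statement the chosen contractibility witness $w_f$ plays no role (only the underlying function matters); for $\Phi\circ\Psi=\id$ one must notice that being functional is a \emph{property}, so only the $R$-component of the pair needs to be matched; and the contractibility of the fibres $\qSig{y:Y}R(x,y)$ is exactly what makes the reshuffle of Lemma~\ref{lemma:path-families} collapse down to $R_{f_R}$. The only genuinely computational part is pinning down the two maps of the pointwise equivalence and checking they are inverse, which is cleanest done via the reshuffle rather than by a direct path calculation.
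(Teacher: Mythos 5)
Your proof is correct and follows essentially the same route as the paper: the left-inverse claim is the same judgmental computation with centre $(f(x),\refl)$, and the set case is handled exactly as in the paper via Lemma~\ref{lemma:propositional-families-have-propositional-sums} together with proposition and function extensionality. The only divergence is how you obtain the pointwise equivalence $R_{f_R}(x,y)\simeq R(x,y)$ --- the paper reduces it via Theorem~\ref{thm:fiberwise-equivalence} to a map between the (contractible) total spaces, while you reshuffle with Lemma~\ref{lemma:path-families} and collapse the $\Sigma$ over the contractible base; both exploit the same contractibility of $\qSig{y:Y}R(x,y)$, and you additionally spell out the univalence/funext bookkeeping for the ``hence'' clause (and that functionality is a proposition), which the paper leaves implicit.
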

\begin{proof}
    For any $f:X\to Y$ we have $f_{R_f}(x) = f(x)$ by definition, since
    $(f(x),\refl)$ is the center of contraction of $\qSig{y:Y}f(x)=y$.

    For the other direction, we wish to see that for any $x$ and $y$,
    \[R_{f_R}(x,y)\simeq R(x,y).\]
    By Theorem~\ref{thm:fiberwise-equivalence}, it is enough to find an equivalence
    \[\big(\qSig{x:X}R(x,y)\big)\simeq \big(\qSig{x:X}R_{f_R}(x,y)\big),\]
    which is the identity on the first component. 

    Now, suppose $Y$ is a set.
    Let us first verify that not only is $\qSig{y:Y}R(x,y)$ a proposition, but also
    that $R(x,y)$ is a proposition for all~$y:Y$. With $x$ fixed, define
    $A:Y\to\univ$ to be $A(y)\defeq R(x,y)$, so that by
    Lemma~\ref{lemma:propositional-families-have-propositional-sums} we know
    $\isProp(A(y))$ for all~$y:Y$. That is, $\isProp(R(x,y))$ for all~$y:Y$ as
    desired.

    Letting $x$ vary again, this means that $\qPi{x:X}\qPi{y:Y}\isProp(R(x,y))$.
    Similarly, since $Y$ is a set, for any $f:X\to Y$, and elements $x:X$ and $y:Y$
    the type $f(x)=y$ is a proposition, and so both $R_{f_R}(x,y)$ and $R(x,y)$ are
    propositions.  Then the result follows from
    Theorem~\ref{thm:functions-are-relations} by proposition extensionality, since
    $R_{f_R}(x,y)\simeq R(x,y)$.
\end{proof}
\index{predicate|)}

\section{Constancy}\label{section:constancy}
\index{function!constant|(}
This section covers material from~\cite{keca2016,keca2013}. Our goal is to show
that $\trunc{X}$ has the universal property of the quotient of $X$ by the total
relation $\lambda x,y.\unittype:X\to X\to\univ$. That is, we can map out of
$\trunc{X}$ to sets, even though the universal property of truncation only tells us
how to define maps to propositions.
\begin{definition}
    A function $f:A\to B$ is \nameas{constant}{function!constant} if for all $x,y:A$, we have $f(x)
    = f(y)$.
\end{definition}
Constant functions are called \emph{weakly constant} in~\cite{keca2016,keca2013}.

\begin{theorem}\label{thm:keca-factorization}
    Let $X,Y:\univ$ and let $P:\univ$ be a type for which $P\to Y$. If $X$ implies that $P$
    is contractible, then $\trunc{X}\to Y$. In particular, if $f:X\to Y$ is a
    function which factors through $P$, then $f$ factors through $\trunc{X}$.
\end{theorem}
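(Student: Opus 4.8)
The plan is to push the hypothesis through the universal property of the propositional truncation. Write $e : P \to Y$ for the given map and $c : X \to \isContr(P)$ for the hypothesis that $X$ implies $P$ is contractible. Since we assume function extensionality throughout, Theorem~\ref{thm:contr-is-prop} tells us that $\isContr(P)$ is a proposition; hence the universal property of $\trunc{X}$ applied to $c$ produces a map $\bar c : \trunc{X} \to \isContr(P)$ with $\bar c \comp {|-|} \htpy c$. Composing with the projection $\pr_0 : \isContr(P) \to P$ that extracts the centre of contraction, and then with $e$, gives the desired map $\bar f \defeq e \comp \pr_0 \comp \bar c : \trunc{X} \to Y$, which proves the first assertion.

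For the ``in particular'' clause, suppose $f : X \to Y$ factors through $P$, say $f \htpy k \comp h$ with $h : X \to P$ and $k : P \to Y$, and run the construction above with $k$ in place of $e$ to obtain a map $\bar f : \trunc{X} \to Y$. I would then show $f \htpy \bar f \comp {|-|}$ pointwise: fix $x : X$; from $c(x) : \isContr(P)$ we learn that $P$ is contractible, hence a proposition by Lemma~\ref{lemma:contr-char}, so $\pr_0(\bar c(|x|)) = h(x)$ in $P$. Applying $\ap_k$ and using $\bar c(|x|) = c(x)$ (which also holds merely because $\isContr(P)$ is a proposition) yields $\bar f(|x|) = k(\pr_0(\bar c(|x|))) = k(h(x)) = f(x)$, as required.

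There is essentially no obstacle here: the entire argument rests on the single observation that $\isContr(P)$ is a proposition, so it is a legitimate target for truncation elimination. The only point demanding a moment's care is that the map $P \to Y$ witnessing the factorization of $f$ need not be the originally assumed map $e$; this causes no trouble, since the construction of $\bar f$ works verbatim for any map out of $P$, and once $P$ is known to be contractible all of its elements---hence their images in $Y$---coincide.
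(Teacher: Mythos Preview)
Your proof is correct. The first part---factoring $X \to \isContr(P)$ through $\trunc{X}$ using that $\isContr(P)$ is a proposition, then composing with $\pr_0$ and the map $P\to Y$---is exactly what the paper does.

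For the ``in particular'' clause the paper takes a slightly different route: rather than verifying $\bar f \comp {|-|} \htpy f$ by a direct pointwise computation as you do, it observes that $\trunc{X}\times P$ is a proposition (since $\trunc{X}\to\isContr(P)$), exhibits the map $x\mapsto(|x|,h(x)):X\to\trunc{X}\times P$, and then reads off the factorization $X\to\trunc{X}\to\trunc{X}\times P\xrightarrow{\pr_1} P\xrightarrow{k} Y$. Your argument is arguably more direct and avoids introducing the auxiliary type $\trunc{X}\times P$; the paper's version, on the other hand, makes explicit the conceptual point that $f$ factors through \emph{some} proposition, which connects cleanly to the remark immediately following the theorem. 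Either way, the essential content is the same: once $x:X$ is in hand, $P$ is contractible and hence all its points coincide.
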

\begin{proof}
    Consider the following diagram.

    \hspace{\mathindent}\begin{tikzcd}
        X \arrow[d,swap, "|-|"] \arrow[r] & \isContr(P) \arrow[d, "\pr_0"] & \\
        \trunc{X} \arrow[ur, dashrightarrow] & P \arrow[r] & Y
    \end{tikzcd}

    As $\isContr(P)$ is a proposition, we have the dashed function $\trunc{X}\to
    \isContr(P)$; composition gives $\trunc{X}\to Y$.

    Now let $f:X\to Y$ factor through $P$ as $g\comp h$. As
    $\trunc{X}\to\isContr(P)$, we have that
    $\trunc{X}\times P$ is a proposition, and moreover we have $X\to \trunc{X}\times
    P$, given by $x\mapsto (|x|,h(x))$. Since $\trunc{X}\times P$ is a proposition,
    we have a factorization of $f$ as

    \hspace{\mathindent}\begin{tikzcd}
        X \ar[r, "|-|"] \ar[rrr, bend right=20, swap,  "h"] & \trunc{X} \ar[r] &
        \trunc{X}\times P \ar[r,"\pr_1"] & P \ar[r, "g"] & Y.
    \end{tikzcd}
\end{proof}
Note that if $P$ is a proposition and a function $f:X\to Y$ factors through $P$, then
$X$ implies that $P$ is contractible. Then if $f:X\to Y$ factors through any
proposition $P$, then $f$ factors through~$\trunc{X}$.
\begin{theorem}\label{thm:constant-functions-factor}
    If $f:X\to Y$ is constant and $Y$ is a set, then $\trunc{X}\to Y$.
\end{theorem}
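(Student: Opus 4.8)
The plan is to use Theorem~\ref{thm:keca-factorization}: it suffices to exhibit a type $P$ together with a map $P\to Y$ such that $X$ implies $P$ is contractible, in such a way that $f$ factors through $P$. The natural candidate is the image of $f$, namely $P\defeq \im(f) = \qSig{y:Y}\trunc{\fib_f(y)}$, which comes equipped with the first projection $\pr_0:\im(f)\to Y$, and with the corestriction $\tilde f:X\to\im(f)$ given by $x\mapsto (f(x),|(x,\refl)|)$, so that $f = \pr_0\comp\tilde f$ factors through $\im(f)$ by construction.

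The heart of the argument is then to show that $X\to\isContr(\im(f))$. So assume we have an element $x_0:X$. Since $Y$ is a set, each $\fib_f(y)$ is a proposition (it is a subtype of $X$ cut out by a path type in the set $Y$), so $\trunc{\fib_f(y)}\simeq\fib_f(y)$, hence $\im(f)\simeq\qSig{y:Y}\fib_f(y)$, which by Lemma~\ref{lemma:domain-is-fiber} is equivalent to $X$ — so in particular $\im(f)$ is inhabited via $\tilde f(x_0)$. It remains to show $\im(f)$ is a proposition under the assumption $x_0:X$. Given $(y,e),(y',e'):\im(f)$, we use that $f$ is constant: from $x_0$ and any witnesses in $\fib_f(y)$ and $\fib_f(y')$ we get $y = f(x_0) = y'$; more carefully, since $\trunc{\fib_f(y)}$ is a proposition and we want to prove the proposition $y=y'$, we may untruncate to get $(a,p):\fib_f(y)$ and $(a',p'):\fib_f(y')$, and then $y = f(a) = f(a') = y'$ using constancy of $f$, giving a path $q:y=y'$. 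Since the second components live in propositions, $q$ together with the automatic transport of $e$ over $q$ gives $(y,e)=(y',e')$ by Theorem~\ref{thm:patheq}. Thus $\im(f)$ is a proposition with an element, hence contractible, so $X\to\isContr(\im(f))$.

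With this in hand, Theorem~\ref{thm:keca-factorization} applied to $P\defeq\im(f)$, the map $\pr_0:\im(f)\to Y$, and the factorization $f=\pr_0\comp\tilde f$ yields directly both that $\trunc{X}\to Y$ and that $f$ factors through $\trunc{X}$. The main obstacle — though it is mild — is the bookkeeping around the truncations in $\im(f)$: one has to be careful that constancy of $f$ is exactly what is needed to collapse $\im(f)$ to a proposition once $X$ is inhabited, and that this step genuinely uses both hypotheses (that $Y$ is a set, to turn $\trunc{\fib_f(y)}$ back into $\fib_f(y)$ and apply Lemma~\ref{lemma:domain-is-fiber}, and that $f$ is constant, to force $y=y'$). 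An alternative, essentially equivalent, route would be to bypass the image and directly take $P$ to be $X$ itself with the map $f:X\to Y$: then one must show $X\to\isContr(X)$, i.e.\ that $X$ is a proposition under the assumption that it has an element — but $X$ need not be a proposition, so this naive choice fails, which is precisely why one passes to the image, where constancy of $f$ does the collapsing.
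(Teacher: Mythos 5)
Your route is the paper's own: factor $f$ through $\im(f)$ via the corestriction $\tilde f(x)=(f(x),|(x,\refl)|)$, show that constancy of $f$ together with $Y$ being a set collapses $\im(f)$, and invoke Theorem~\ref{thm:keca-factorization}. The core step — untruncating the witnesses $(a,p)$ and $(a',p')$ because the goal $y=y'$ is a proposition (as $Y$ is a set), and then chaining $y=f(a)=f(a')=y'$ by constancy — is exactly the paper's argument, and it is correct as you state it.

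However, your inhabitation step contains a false claim: it is not true that $\fib_f(y)$ is a proposition when $Y$ is a set. Since $Y$ is a set, each $f(x)=y$ is a proposition, so $\fib_f(y)$ \emph{embeds into} $X$ via $\pr_0$; but a subtype of $X$ need not be a proposition unless $X$ is. For instance, with $X=\bool$, $Y=\nat$ and $f$ constant at $0$, one has $\fib_f(0)\simeq \bool$. Consequently $\trunc{\fib_f(y)}\not\simeq\fib_f(y)$ and $\im(f)\not\simeq X$ in general; indeed, if $\im(f)\simeq X$ held, your own later conclusion that $\im(f)$ is a proposition (given $x_0:X$) would force $X$ to be a proposition, contradicting the observation you make in your final sentence. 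Lemma~\ref{lemma:domain-is-fiber} identifies $X$ with the \emph{untruncated} Curry-Howard image $\qSig{y:Y}\fib_f(y)$, not with $\im(f)$. Fortunately this detour is not needed: $\tilde f(x_0)$ already inhabits $\im(f)$ directly, so deleting the equivalence chain repairs the proof without further change. (A minor stylistic difference from the paper: the paper shows $\im(f)$ is a proposition outright — no inhabitant of $X$ is needed for that part — and lets Theorem~\ref{thm:keca-factorization} supply the rest, whereas you package it as $X\to\isContr(\im(f))$; both are fine.)
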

\begin{proof}
    We know that $f$ factors through $\im(f)$, so it suffices to show that when $Y$
    is a set and $f$ is constant, then $\im(f)$ is a proposition.

    Let $(y,u),(y',u') : \qSig{y:Y}\trunc{\qSig{x:X}f(x) = y}$. Since the second
    component inhabits a proposition, we need only a path $y=y'$. As $y$ is a set,
    we know that $y=y'$ is a proposition, so we may untruncate our assumptions
    $u:\trunc{\fib_f(y)}$ and $u':\trunc{\fib_f(y')}$ and use explicit witnesses
    $(x,p):\qSig{x:X}f(x) = y$ and $(x',p'):\qSig{x:X}f(x) = y'$. Then we have
    \[y \eqby{p} f(x) = f(x') \eqby{p'} y',\]
    where the middle equality comes from constancy of~$f$.

    Explicitly, the above argument above gives
    \[\qPi{y,y':Y}\fib_f(y) \to \fib_f(y') \to y = y',\]
    and since $y=y'$ is a proposition, we conclude
    \[\qPi{y,y':Y}\trunc{\fib_f(y)} \to \trunc{\fib_f(y')} \to y = y'.\]
    By reshuffling and uncurrying, we get
    \[\qPi{u,v:\im(f)} u = v.\]
    Then $f$ factors through a proposition, so by
    Theorem~\ref{thm:keca-factorization}, we are done.
\end{proof}

The factorization of constant functions through propositions is related to
another fact about constancy: if $f$ is constant, then so is $\ap_f$.
\begin{lemma}\label{lemma:const-ap}
    If $f:X\to Y$ is constant, then 
    $\ap_f:x=x'\to f(x)=f(x')$ is constant for all $x,x':X$. In particular, $\ap_f(p) =
    \refl_{f(x)}$
    whenever $p:x=x$.
\end{lemma}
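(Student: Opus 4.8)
The plan is to extract from weak constancy of $f$ an explicit witness $c:\qPi{x,y:X}f(x)=f(y)$, and then to compute $\ap_f(p)$ in closed form in terms of $c$ alone, so that the dependence on $p$ disappears. First I would fix $x:X$ and consider the dependent function $g\defeq\lambda y.c(x,y):\qPi{y:X}f(x)=f(y)$. For any $x':X$ and any $p:x=x'$, dependent application gives a path
\[
\apd_g(p):\transport^{\lambda y.f(x)=f(y)}(p,g(x)) = g(x'),
\]
whose right-hand side is exactly $c(x,x')$ and whose left-hand side involves $g(x)=c(x,x)$.

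The next step is to evaluate that transport. The key observation is that the type family $\lambda y.f(x)=f(y)$ is the composite $(\lambda z.f(x)=z)\comp f$, so Theorem~\ref{thm:transport-composition} rewrites the transport along $p$ as a transport along $\ap_f(p)$ in the family $\lambda z.f(x)=z$, and the first equation of Theorem~\ref{thm:path-transport} then evaluates this to $c(x,x)\ct\ap_f(p)$. Feeding this into $\apd_g(p)$ yields $c(x,x)\ct\ap_f(p)=c(x,x')$, and hence
\[
\ap_f(p) = c(x,x)^{-1}\ct c(x,x').
\]

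Since the right-hand side mentions no path in $X$ at all, I can immediately conclude that $\ap_f(p)=\ap_f(q)$ for any two $p,q:x=x'$, i.e. $\ap_f$ is constant. For the ``in particular'' clause I would specialise to $x'\jeq x$ and $p:x=x$, where the formula becomes $\ap_f(p)=c(x,x)^{-1}\ct c(x,x)$, which is $\refl[f(x)]$ by Lemma~\ref{lemma:path-concat}. I expect the only delicate point to be the bookkeeping of type families so that Theorems~\ref{thm:transport-composition} and~\ref{thm:path-transport} apply verbatim — specifically, recognising $\lambda y.f(x)=f(y)$ as a composite with $f$ so that the transport reduces to the already-computed case $\lambda z.f(x)=z$; everything after that is routine path algebra.
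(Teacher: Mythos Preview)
Your argument is correct and arrives at exactly the same closed-form identity as the paper, namely $\ap_f(p)=c(x,x)^{-1}\ct c(x,x')$, from which constancy of $\ap_f$ and the $\refl$ case are immediate. The only difference is in how that identity is obtained: the paper simply invokes path induction on $p$ (checking the $\refl$ case, where $\ap_f(\refl)=\refl=k_{x,x}^{-1}\ct k_{x,x}$), whereas you route through $\apd_g$ together with Theorems~\ref{thm:transport-composition} and~\ref{thm:path-transport} to compute the transport explicitly. Your version is a little heavier in machinery but makes the mechanism more transparent; the paper's is a one-line induction. Either way the content is the same.
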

\begin{proof}
    Let $k$ be a witness that $f$ is constant. By path induction, for any
    $p:x=x'$ we have that $\ap_f(p) = k_{x,x}^{-1} \ct k_{x,x'}.$
    Indeed, when $p$ is reflexivity, we have
    \[\refl_{f(x)} = k_{x,x}^{-1} \ct k_{x,x}.\]
    Now if $p,q:x=x'$, then
    \[p = (k_{x,x}^{-1} \ct k_{x,x'}) = q.\qedhere\]
\end{proof}
\index{function!constant|)}

\section{Predicates on $\nat$}\label{section:nat}
\index{type!of natural numbers|(}
Since we will be focused on computability theory, we now turn to predicates on
$\nat$.
\begin{lemma}
    The standard ordering on $\nat$ is decidable and is a total order.
\end{lemma}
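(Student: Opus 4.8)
The plan is first to fix what ``the standard ordering'' means: I would define $\leq\,:\nat\to\nat\to\univ$ by double recursion,
\[ (0 \leq m) \defeq \unittype, \quad (n+1 \leq 0) \defeq \zerotype, \quad (n+1 \leq m+1) \defeq (n \leq m). \]
(If one instead takes $n \leq m \defeq \qSig{k:\nat}(n+k=m)$, the two relations agree and the argument adapts, using the cancellation law $n+k=n+k'\to k=k'$; I will use the recursive version because every clause then reduces definitionally, which makes the case analyses below transparent.) A first, routine induction on $n$ and $m$ gives $\qPi{n,m:\nat}\isProp(n\leq m)$: in each clause the right-hand side is $\unittype$, $\zerotype$, or $n\leq m$, and the last is a proposition by the inductive hypothesis. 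So $\leq$ really is a predicate-valued relation, which is what we want of an order.

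For \emph{decidability} I would prove $\qPi{n,m:\nat}(n\leq m)+\neg(n\leq m)$ by induction on $n$ and then $m$: for $n\jeq 0$ take the left summand via $\star:\unittype$; for $n\jeq n'+1,\ m\jeq 0$ take the right summand via $\id:\zerotype\to\zerotype$; for $n\jeq n'+1,\ m\jeq m'+1$ the goal is definitionally $(n'\leq m')+\neg(n'\leq m')$, which is exactly the inductive hypothesis.

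For \emph{total order} I would check the four defining properties, each by a short induction. (i) Reflexivity $\qPi{n:\nat}n\leq n$: induction on $n$, with base case $\star$ and step case the inductive hypothesis. (ii) Transitivity $\qPi{n,m,p:\nat}(n\leq m)\to(m\leq p)\to(n\leq p)$: induction on $n$, $m$, $p$; whenever $n\jeq 0$ the conclusion is $\star$, and when $n\jeq n'+1$ the hypotheses force $m\jeq m'+1$ and $p\jeq p'+1$ (otherwise one hypothesis inhabits $\zerotype$) and the goal reduces to the inductive hypothesis. (iii) Antisymmetry $\qPi{n,m:\nat}(n\leq m)\to(m\leq n)\to n=m$: double induction; if $n\jeq 0$ then $m\jeq 0$ and $n=m$ by $\refl$, and if $n\jeq n'+1$ then $m\jeq m'+1$ and $\ap_{\succop}$ applied to the inductive hypothesis $n'=m'$ gives $n=m$. (iv) Totality $\qPi{n,m:\nat}(n\leq m)+(m\leq n)$: double induction; take the left summand if $n\jeq 0$, the right if $n\jeq n'+1$ and $m\jeq 0$, and the inductive hypothesis on $(n',m')$ if both are successors.

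The hard part, honestly, is nil: every step is a finite case split followed by an appeal to the inductive hypothesis, and no lemma beyond the defining recursion equations is needed. The one genuine choice is the definition of $\leq$; picking the recursive formulation above is what collapses all the sub-cases to definitional equalities and so keeps the write-up to a few lines, whereas the $\Sigma$-based definition would route everything through cancellation and the fact (Theorem~\ref{thm:nat-is-set}) that $\nat$ is a set, with no new ideas.
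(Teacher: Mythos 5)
Your proposal is correct: defining $\leq$ by double recursion and discharging propositionhood, decidability, reflexivity, transitivity, antisymmetry, and totality by routine inductions is exactly the standard argument, and every case does reduce definitionally as you claim. The paper itself states this lemma without proof, treating it as routine, so there is nothing to compare beyond noting that your write-up (including the observation that the $\Sigma$-based definition $n\leq m \defeq \qSig{k:\nat}(n+k=m)$ would work equally well via cancellation and the fact that $\nat$ is a set) supplies precisely the details the paper leaves implicit.
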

\begin{lemma}[Bounded search]
    Let $P$ be a decidable predicate on $\nat$. Then for all $n:\nat$, the types
    \[\qSig{k:\nat}{P(k) \times (k\le n)}\]
    and
    \[\qSig{k:\nat}{P(k) \times (k< n)}\]
    are decidable.
\end{lemma}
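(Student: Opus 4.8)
The plan is to prove the statement for $\le$ first, by induction on the bound $n$, and then deduce the statement for $<$ as an easy consequence. I will use two elementary closure facts about decidability, each proved by case analysis on the relevant $A+\neg A$ witnesses: (i) decidability is preserved under logical equivalence, so if $A\lequiv B$ and $A$ is decidable then $B$ is decidable; and (ii) if $A$ and $B$ are decidable then so is $A+B$. I will also use the basic facts about the order on $\nat$ coming from the previous lemma, namely $k\le 0\lequiv(k=0)$, $k\le n+1\lequiv (k\le n)+(k=n+1)$, $\neg(k<0)$, and $k<n+1\lequiv k\le n$, all of which follow from decidability, totality, reflexivity and antisymmetry.

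For the base case of the induction, an element of $\qSig{k:\nat}{P(k)\times(k\le 0)}$ consists of some $k$ together with a proof that $k=0$, so this type is logically equivalent to $P(0)$; since $P(0)$ is decidable by hypothesis, so is the Sigma-type by (i). For the inductive step, assume $\qSig{k:\nat}{P(k)\times(k\le n)}$ is decidable. Using $k\le n+1\lequiv(k\le n)+(k=n+1)$ and distributing the coproduct through the Sigma-type, one checks the logical equivalence
\[
\Big(\qSig{k:\nat}{P(k)\times(k\le n+1)}\Big)\;\lequiv\;\Big(\qSig{k:\nat}{P(k)\times(k\le n)}\Big)\;+\;P(n+1),
\]
where the right-hand summand arises because the case $k=n+1$ pins $k$ down (transporting $P(k)$ along $k=n+1$) and leaves only the datum $P(n+1)$. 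The left summand is decidable by the induction hypothesis and the right by hypothesis on $P$, so their coproduct is decidable by (ii), and hence so is $\qSig{k:\nat}{P(k)\times(k\le n+1)}$ by (i).

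For the $<$ version, consider two cases on $n$. When $n$ is $0$, the type $\qSig{k:\nat}{P(k)\times(k<0)}$ has a map to $\zerotype$ (project to the $k<0$ component and apply $\neg(k<0)$), so it is decidable by taking the right injection. When $n$ is of the form $m+1$, the equivalence $k<m+1\lequiv k\le m$ yields a logical equivalence $\qSig{k:\nat}{P(k)\times(k<m+1)}\lequiv\qSig{k:\nat}{P(k)\times(k\le m)}$, and the latter is decidable by the first part; then (i) finishes the argument.

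There is no genuine obstacle here. The only step requiring care is the bookkeeping of the logical equivalence in the inductive step—verifying that a bound $k\le n+1$ really decomposes as $k\le n$ or $k=n+1$ and that the associated Sigma-type splits as displayed—together with spelling out the two decidability closure facts (i) and (ii) by explicit case analysis on the coproduct witnesses.
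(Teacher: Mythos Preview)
Your proof is correct and follows essentially the same approach as the paper: induction on $n$ for the $\le$ version, with base case reducing to $P(0)$ and inductive step splitting the bounded sum as $P(n+1)$ plus the bound-$n$ case, then deriving the $<$ version from it. You have simply spelled out in more detail the closure facts and the handling of the strict-inequality case that the paper leaves implicit.
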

\begin{proof}
    The proof is straightforward by induction: for $0$ it holds immediately, since
    \[\left(\qSig{k:\nat}{P(k)\times (k\le 0)}\right)\simeq P(0).\]
    For $n+1$, we have
    \[\left(\qSig{k:\nat}{P(k)\times (k\le n+1)}\right) \simeq
    P(n+1)+\left(\qSig{k:\nat}{P(k)\times (k\le n)}\right)\]
    If $P(n+1)$ holds, we're done. Otherwise, check $\qSig{k:\nat}{P(k)\times (k\le
    n)}$, which is decidable by
    the inductive hypothesis.
\end{proof}
\begin{lemma}\label{lemma:untruncate-decidable-predicates}
    If $P$ is a decidable predicate on $\nat$, then
    \[(\qExists{n:\nat}{P(n)})\simeq\qSig{n:\nat}{\left(P(n)\times
        \qPi{k:\nat}{P(k)\rightarrow n\le k}\right)}\]

    In particular, we have $\trunc{\sum_{n:\nat} P(n)}\to \sum_{n:\nat}P(n)$.
\end{lemma}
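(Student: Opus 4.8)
The plan is to prove the two claims together, since the second (untruncation) is an immediate consequence of the first. The statement to prove is that for a decidable predicate $P$ on $\nat$,
\[(\qExists{n:\nat}{P(n)})\simeq\qSig{n:\nat}{\left(P(n)\times \qPi{k:\nat}{P(k)\rightarrow n\le k}\right)}.\]
First I would observe that the right-hand side $\qSig{n:\nat}{\left(P(n)\times \qPi{k:\nat}{P(k)\rightarrow n\le k}\right)}$ is a proposition: it asserts the existence of a \emph{least} witness, and two least witnesses $n, n'$ satisfy $n \le n'$ and $n' \le n$ by minimality, hence $n = n'$ since $\nat$ is a set and its order is antisymmetric; the remaining components live in propositions ($P(n)$ need not be a proposition, but here it is pinned down — actually one must be slightly careful, so I would instead note $P(n) \times (\ldots)$ fibered over the path $n = n'$ transports correctly, or simply use that a decidable predicate's values, while not a priori propositions, don't obstruct the argument because we only need a path in the total type; cleanest is to note the $\Sigma$-type is a proposition because its first projection is injective and its fibers are propositions by function extensionality and decidability of the order). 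Granting that both sides are propositions, by Lemma~\ref{lemma:lequiv-equiv} it suffices to establish a logical equivalence.

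The backward direction is trivial: given $(n, p, m)$ on the right, $|(n,p)| : \qExists{n:\nat}{P(n)}$. The forward direction is the substantive one. Given an element of $\trunc{\qSig{n:\nat}P(n)}$, since the target is a proposition (as just argued) I may strip the truncation and assume I have an honest $(N, p_N)$ with $p_N : P(N)$. Now I perform bounded search below $N+1$: by the Bounded search lemma, the type $\qSig{k:\nat}{P(k)\times(k \le N)}$ is decidable, and it is inhabited (by $(N, p_N, \text{refl-based witness that } N \le N)$), so it is \emph{not} empty; but a decidable inhabited-or-not type that is not empty gives an actual element — more precisely, from decidability I get either an element or a proof of emptiness, and the latter contradicts the witness $(N,p_N,\ldots)$. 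So I extract some $k_0 \le N$ with $P(k_0)$. Then I find the least such $k$ by a finite minimization: among $\{0, 1, \ldots, k_0\}$, which is a finite decidable search, pick the least $m$ with $P(m)$; this $m$ satisfies $P(m)$ and, for any $k$ with $P(k)$, either $k \le k_0 \le N$ in which case $m \le k$ by leastness within the range, or $k > k_0$ in which case $m \le k_0 < k$. This yields $(m, p_m, \mu) : \qSig{n:\nat}{\left(P(n)\times \qPi{k:\nat}{P(k)\rightarrow n\le k}\right)}$.

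For the ``in particular'' clause: the map $\trunc{\sum_{n:\nat} P(n)} \to \sum_{n:\nat}P(n)$ is obtained by composing the forward equivalence above with the first projection $\qSig{n:\nat}{\left(P(n)\times \qPi{k:\nat}{P(k)\rightarrow n\le k}\right)} \to \qSig{n:\nat}P(n)$ that forgets the minimality clause. This is precisely the promised example of extracting a natural number from the mere truth of an existential.

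The main obstacle I anticipate is the bookkeeping around least witnesses and the finite minimization step: making precise, using only the Bounded search lemma and induction, that a decidable inhabited predicate on an initial segment $\{0,\ldots,N\}$ has a computable least element, and checking the universal clause $\qPi{k:\nat}{P(k)\to n\le k}$ splits cleanly into the bounded and unbounded cases. None of this is deep, but it is the part where a careless induction could go wrong. A secondary subtlety is the argument that the right-hand $\Sigma$-type is a proposition when $P$ is merely decidable rather than proposition-valued; I would handle this by noting that the minimality clause forces the first component to be unique, and then the fiber over that unique value is a proposition since it is a product of a decidable type's value (which, together with minimality, is determined) — or, to avoid any fuss, replace $P$ by the proposition-valued $\lambda n. \trunc{P(n)}$, which has the same decidable existential, and transport the result back.
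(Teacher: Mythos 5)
Your proposal is correct and follows essentially the same route as the paper: show the least-witness $\Sigma$-type is a proposition, use bounded search/decidability to produce the least witness from an untruncated witness, lift through the truncation by its universal property, and project back out. The only quibbles are cosmetic — your intermediate search for some $k_0\le N$ is redundant (take $k_0=N$ itself), and your worry about $P(n)$ not being a proposition is moot, since in this paper a \emph{predicate} is proposition-valued by definition.
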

\begin{proof}
    Let $Q\defeq\qSig{n:\nat}{\left(P(n)\times\qPi{k:\nat}{P(k)\rightarrow n\le
    k}\right)}$.

    We proceed in 3 steps: first we show $\left(\qSig{n:\nat}{P(n)}\right)\to Q$
    using bounded search. Next, we show that $Q$ is a proposition, so that we can apply
    the universal property of truncation to get a map
     \[\trunc{\qSig{n:\nat}{P(n)}}\to Q.\]
    Finally, we project out of $Q$ to see that $Q\rightarrow \qSig{n:\nat}{P(n)}$.

    First, observe that under our assumptions, we have that for all $k$,
    \[\left(\neg\qSig{j:\nat}{P(j)\times j< k}\right)\simeq
    \left(\qPi{j:\nat}{P(j)\rightarrow k\le j}\right).\]
    By bounded search, the predicate
    \[ P'(n) \defeq \qSig{k:\nat}{P(k)\times (k\le n)\times
    \left(\neg\qSig{j:\nat}{P(j)\times j< k}\right)}\]
    is decidable. Let $n$ be such that $P(n)$. If $P'(n)$ fails, then a quick
    argument shows that $P(n)$ must also fail, so we must have $P'(n)$. Let $k$
    be such that $P(k)\times (k\le n)\times \left(\neg\qSig{j:\nat}{P(j)\times j< k}\right)$.
    Then we also have $m:\qPi{j:\nat}{P(j)\rightarrow k\le j}$, and $(k,(p,m)):Q$, where
    $p:P(k)$.

    If $(k,w):Q$ and $(k',w'):Q$, we must have $k\le k'\le k$, so we have $p:k=k'$.
    Moreover, $P(k)\times\qPi{j:\nat}{P(j)\rightarrow k\le j}$ is a proposition,
    so we get $\transport(p,w)=w'$. That is, $Q$ is a proposition, and $\qSig{n:\nat}{P(n)}$
    implies $Q$, so then does $\exists(n:\nat),P(n)$.

    Finally, we have the map $Q\rightarrow\qSig{n:\nat}{P(n)}$ defined by
    \[(k,(p,m))\mapsto (k,p).\qedhere\]
\end{proof}
It is tempting to think of propositions as types which contain no more
information than their inhabitedness. The above shows that this is misleading,
as the type \[\qSig{n:\nat}{P(n)\times \qPi{k:\nat}{P(k)\rightarrow n\le k}}\]
contains useful computational content. Moreover, we see here and in the
definition of equivalence that there are predicates for which
$(\qExists{x:A}{P(x)})\to \qSig{x:A}{P(x)}$. In other words, if there exists an $x$
such that $P(x)$, then we can find one explicitly.
The proof of Lemma~\ref{lemma:untruncate-decidable-predicates} can be found
in~\cite{escardoxu2015}, where they explicitly assume function extensionality in the
proof. However, they sketch another way to untruncate
$\trunc{\qSig{n:\nat}P(n)}$, without using function extensionality.
The other argument uses ideas from Section~\ref{section:constancy}. For an arbitrary
type $X$, if there is a constant function $f:X\to X$, then the type of fixed points
of $f$ has the universal property of the truncation.

For any $f:X\to X$, define the type
\[\fix(f)\defeq \qSig{x:X}x = f(x).\]
\begin{lemma}[Fixed Point Lemma\cite{keca2013}]\label{lemma:fix-trunc}
    Let $X$ be a type and $f:X\to X$ be constant. Then
    $\fix(f)\simeq\trunc{X}$.
\end{lemma}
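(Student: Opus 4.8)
The plan is to reduce everything to showing that $\fix(f)$ is a proposition: once we have this, we obtain $\fix(f)\simeq\trunc{X}$ from Lemma~\ref{lemma:lequiv-equiv}, since $\pr_0$ followed by $|{-}|$ gives a map $\fix(f)\to\trunc{X}$, while the map $g\colon X\to\fix(f)$ defined by $g(x)\defeq(f(x),c_{x,f(x)})$ — where $c:\qPi{x,y:X}f(x)=f(y)$ witnesses that $f$ is constant, so that $c_{x,f(x)}:f(x)=f(f(x))$ — factors through $\trunc{X}$ because its codomain is a proposition.

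To prove $\isProp(\fix(f))$, I would first normalise $c$: replacing $c_{x,y}$ by $c_{x,x}^{-1}\ct c_{x,y}$, we may assume $c_{x,x}=\refl$ for every $x:X$. I then claim $\fix(f)$ is a retract of $X$, with section $\pr_0:\fix(f)\to X$ and retraction $g$; note that $\pr_0\comp g=f$. The work is in checking $g\comp\pr_0\htpy\id_{\fix(f)}$: given $(x,p):\fix(f)$ with $p:x=f(x)$ we have $g(\pr_0(x,p))=(f(x),c_{x,f(x)})$, and by Theorem~\ref{thm:patheq} it suffices to exhibit $p^{-1}:f(x)=x$ together with a proof that $\transport^{\lambda z. z=f(z)}(p^{-1},c_{x,f(x)})=p$. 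A one-line path induction on $e$ establishes the general identity $\transport^{\lambda z. z=f(z)}(e,s)=e^{-1}\ct s\ct\ap_f(e)$, so the goal becomes $p\ct c_{x,f(x)}\ct\ap_f(p^{-1})=p$, i.e.\ $c_{x,f(x)}\ct\ap_f(p^{-1})=\refl$. A second short path induction, generalising $p$ to an arbitrary path out of $x$, gives $c_{x,f(x)}=c_{x,x}\ct\ap_f(p)$, which by the normalisation is just $\ap_f(p)$; hence the goal reduces to $\ap_f(p)\ct\ap_f(p)^{-1}=\refl$, which holds by Lemmas~\ref{lemma:ap-functorial} and~\ref{lemma:path-concat}.

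To finish, I would record the general fact that if $Y$ is a retract of $Z$ via a section $s:Y\to Z$ and retraction $q:Z\to Y$ (so $\eta:q\comp s\htpy\id_Y$), and if $s\comp q:Z\to Z$ is constant with witness $d$, then $Y$ is a proposition: for $y,y':Y$ the path $d_{s(y),s(y')}:s(q(s(y)))=s(q(s(y')))$, conjugated by $\ap_s(\eta_y)$ and $\ap_s(\eta_{y'})$, yields $s(y)=s(y')$, and applying $\ap_q$ and $\eta$ again yields $y=y'$. Applying this with $Y\defeq\fix(f)$, $Z\defeq X$, $s\defeq\pr_0$, $q\defeq g$ and $s\comp q=f$ (constant by hypothesis) shows $\fix(f)$ is a proposition, completing the argument. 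The main obstacle is the middle step — verifying $g\comp\pr_0\htpy\id_{\fix(f)}$ — since it requires computing a transport in the path family $z\mapsto(z=f(z))$ and is exactly where the normalised constancy witness is used; the retract-to-proposition argument and the two path-induction identities around it are routine.
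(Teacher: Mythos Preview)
Your argument is correct, but it is organised differently from the paper's. Both proofs reduce to showing that $\fix(f)$ is a proposition and then conclude via logical equivalence with $X$; the shared technical core is the transport formula for the family $z\mapsto (z=f(z))$ together with the naturality identity $c_{x,y}=c_{x,x}\ct\ap_f(q)$ for $q:x=y$, which is exactly the content of the paper's Lemma~\ref{lemma:const-ap}. Where the paper uses these ingredients to build, by hand, a path $t:x=y$ between two given fixed points $(x,p)$ and $(y,q)$ and then checks $\transport(t,p)=q$ directly, you instead (i) normalise $c$ so that $c_{x,x}=\refl$, (ii) package the same transport calculation as the verification that $g\comp\pr_0\htpy\id_{\fix(f)}$, exhibiting $\fix(f)$ as a retract of $X$, and (iii) finish with the pleasant general observation that any retract whose induced idempotent $s\comp q$ is constant is automatically a proposition. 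Your decomposition is more modular and yields a reusable lemma; the paper's argument is shorter and self-contained, needing no auxiliary retract statement. Neither is more elementary than the other---they are two ways of arranging the same computation.
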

\begin{proof}
    Fix $k:\qPi{x,y:X}f(x)=f(y)$.

    First, we need to see that $\fix(f)$ is a proposition. Given
    $(x,p),(y,q):\fix(f)$ we have the path $r:x=y$ given by
    \[x \eqby{p} f(x) \eqby{k_{x,y}} f(y) \eqby{q^{-1}} y.\]
    We want a path $t:x=y$ such that $\transport(t,p) = q$, and $r$ does not suffice,
    as we can compute $\transport(r,p) = r^{-1}\ct p$ (in a moment, we give this
    computation for our path of interest). 
    By Theorem~\ref{lemma:tp-path-functions}, for any $t:x=y$ and $p:x=f(x)$,
    $\transport^{\lambda x.x=f(x)}(t,p) = t^{-1}\ct p\ct \ap_f(t)$. So defining
    $t\defeq p\ct \ap_f(r) \ct q^{-1}$, we have
    \[\transport(t,p) = (q\ct \ap_f(r^{-1})\ct p^{-1}) \ct p \ct
    \ap_f(p\ct\ap_f(r)\ct q^{-1}).\]
    By Lemma~\ref{lemma:const-ap}, we can say that $\ap_f(p\ct\ap_f(r)\ct q^{-1}) =
    \ap_f(r)$, and so we have
    \[q\ct \ap_f(r^{-1})\ct p^{-1} \ct p \ct
    \ap_f(r) = q \ct \ap_f(r^{-1}) \ct \ap_f(r) = q.\]
    Then, we have $(x,p) = (y,q)$ with witness given by our path $t$ and the
    above argument. That is, $\fix(f)$ is a proposition.

    We have the projection map $\fix(f)\to X$, as well as a map $X\to\fix(f)$ given
    by 
    \[x\mapsto (f(x),k(f(x),x)).\] That is, $\fix(f)$ is a proposition which is
    logically equivalent to $X$. The theorem follows by Lemma~\ref{leq-is-trunc}.
\end{proof}
Now fix a decidable predicate $P:\nat\to\univ$. This factors through a map
$f:\nat\to\nat$, which takes value $0$ at $x$ if $\neg R(x)$ and takes value $1$ at
$x$ if $P(x)$. Consider the type
\[X\defeq \qSig{n:\nat}f(n)=1.\]
We can define a map $f:X\to X$ as follows: for
$(n,w):\qSig{n:\nat}f(n)=1$, we find the least $k\le n$ with $v:f(k)=1$, using a
bounded search. Take $f(n,w)=(k,v)$. Since there is at most one least $k$ such that
$f(k)=1$, this function $f$ is constant. Then by Lemma~\ref{lemma:fix-trunc}, we have
that 
\[\fix(f)\simeq \trunc{\qSig{n:\nat}f(n)=1}.\]
Because $\fix(f)\lequiv X$, we then get a map
$\trunc{\qSig{n:\nat}f(n)=1}\to \qSig{n:\nat}f(n)=1$.

In the above, we used that a decidable predicate $P:\nat\to\univ$ factors as $\bool\circ p$
where $\bool$ is the inclusion $\bool\to\Prop$, and some function $p:\nat\to 2$. The standard
inclusion $i:2\to\nat$ is a section of the function $\pos:\nat\to 2$ taking $0$ to
$0$ and all positive numbers to $1$. Then $P$ factors through a map
$\chi_P:\nat\to\nat$ which we will call a \emph{characteristic function} for $P$.
By function extensionality, there is a unique characteristic function for $P$. We can
put this together as follows
\begin{definition}\label{def:characteristic-function}
    A \emph{characteristic function} for a predicate $P:\nat\to\Prop$ is a function
    $\chi_P:\nat\to\nat$ such that
    \[(\forall (n), f(n) = 0 \vee f(n)=1)
        \wedge (\chi_P(n) = 0 \Leftrightarrow \neg R(n))
        \wedge (\chi_P(n) = 1 \Leftrightarrow R(n)).\]
\end{definition}
\begin{theorem}
    The type of characteristic functions for a predicate $P$ is a proposition, and is
    inhabited precisely if $P$ is decidable.
\end{theorem}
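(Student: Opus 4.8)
The plan is to show two things: first, that the type of characteristic functions for $P$ is a proposition, and second, that it is inhabited exactly when $P$ is decidable. For the first claim, I would unfold Definition~\ref{def:characteristic-function}: a characteristic function is a function $\chi_P:\nat\to\nat$ together with three pieces of data, namely that $\chi_P(n)\in\{0,1\}$ for all $n$, that $\chi_P(n)=0$ iff $\neg P(n)$, and that $\chi_P(n)=1$ iff $P(n)$. The conjuncts are all predicates — each is a $\Pi$ of propositions (the disjunction $\chi_P(n)=0 \vee \chi_P(n)=1$ is a disjunction of disjoint propositions since $\nat$ is a set, and biconditionals of propositions are propositions) — so by function extensionality the whole package of three conjuncts is a proposition. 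It therefore suffices to show that the underlying function $\chi_P$ is uniquely determined. Given two candidates $\chi,\chi'$, for each $n$ we have $\chi(n)\in\{0,1\}$ and $\chi'(n)\in\{0,1\}$, and the biconditionals force $\chi(n)=1 \Leftrightarrow P(n) \Leftrightarrow \chi'(n)=1$; a quick case split on the value of $\chi(n)$ then gives $\chi(n)=\chi'(n)$, using that $0\neq 1$ in $\nat$ and that $\nat$ is a set (Theorem~\ref{thm:nat-is-set}). Function extensionality promotes this pointwise equality to $\chi=\chi'$. Since the base type is a set and the fibre over it is a proposition, the total type is a proposition.

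For the second claim, I would argue both directions. If $P$ is decidable, then for each $n$ we have $P(n)+\neg P(n)$, and I define $\chi_P(n)\defeq 1$ in the left case and $\chi_P(n)\defeq 0$ in the right case — this uses the recursion principle for the coproduct. The three required conjuncts are then routine: $\chi_P(n)\in\{0,1\}$ by construction, and the two biconditionals follow by the same case analysis, invoking $0\neq 1$ to rule out the contradictory combinations. Conversely, if $\chi_P$ is a characteristic function, then for each $n$ the first conjunct gives $\chi_P(n)=0\vee\chi_P(n)=1$; since these alternatives are disjoint (as $\nat$ is a set), this disjunction is already a decidable alternative (untruncated, or obtained by untruncating a disjunction of disjoint propositions), and in the first case the biconditional yields $\neg P(n)$ while in the second it yields $P(n)$. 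Hence $P(n)+\neg P(n)$ for every $n$, i.e.\ $P$ is decidable.

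Strictly, since the statement says the type is \emph{inhabited} precisely if $P$ is decidable, and decidability of $P$ is $\qPi{n:\nat}(P(n)+\neg P(n))$ — a proposition by function extensionality since each $P(n)+\neg P(n)$ is a disjoint coproduct of propositions, hence a proposition — I would phrase the equivalence as a logical equivalence between the (propositional) type of characteristic functions and the (propositional) decidability type, which by Lemma~\ref{lemma:lequiv-equiv} upgrades to an equivalence. The main obstacle, such as it is, is bookkeeping: being careful that the disjunctions appearing in Definition~\ref{def:characteristic-function} are the propositionally-truncated $\vee$ and that, because their disjuncts are disjoint propositions, they coincide with the untruncated coproduct (using the lemma that $A+B\simeq A\vee B$ when $A,B$ are disjoint propositions), so that one may freely extract the decidability witness. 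Everything else is direct case analysis on coproducts together with the fact that $\nat$ is a set with $0\neq 1$.
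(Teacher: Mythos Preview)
Your proposal is correct and takes essentially the same approach as the paper: the same case split on $\chi(n)\in\{0,1\}$ for uniqueness, and the same construction of $\chi_P$ from a decidability witness. The only minor difference is in the converse direction: where you untruncate the disjoint disjunction $\chi_P(n)=0\vee\chi_P(n)=1$ and case split, the paper simply observes that $P(n)\Leftrightarrow(\chi_P(n)=1)$ and invokes decidable equality of $\nat$ directly---but this is a cosmetic rephrasing rather than a different idea.
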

\begin{proof}
    Fix characteristic functions $f$ and $g$ for $R$. We know $\big(f(n) = 0\big) +
    \big(f(n)=1\big)$.
    In the first case, we have $\neg P(n)$ and so $g(n) = 0$. Otherwise, $f(n)=1$ and so
    $P(n)$; hence, $g(n)=1$.

    If $P$ has a characteristic function, then since $\nat$ has decidable equality,
    $P$ is decidable. Conversely, if $P$ is decidable, then $i\circ p$ is a
    characteristic function, where $p:\nat\to 2$ factors $P$.
\end{proof}
\index{type!of natural numbers|)}

\section{Monads}\label{section:monads}
\index{monad|(}

A monad on a category $\cC$ is a functor $T:\cC\to\cC$ equipped with natural
transformations $\eta:\id_{\cC}\to T$ and $\mu:T^2\to T$ such that the following
diagrams commute.

\begin{center}
\begin{tikzcd}
    T^3(X) \arrow[r,"\mu_{TX}"] \arrow[d,swap,"T(\mu_{X})"] & T^2(X) \arrow[d,"\mu_{X}"] \\
    T^2(X) \arrow[r,swap,"\mu_{X}"]                         & T(X)
\end{tikzcd}
    \hspace{0.05\textwidth}
\begin{tikzcd}
    T^2(X) \arrow[dr,swap,"\mu_{X}"] &
        T(X) \arrow[l,swap,"\eta_{T(X)}"] \arrow[r,"T(\eta_X)"] \arrow[d,equal] &
        T^2(X) \arrow[dl,"\mu_{X}"] \\
        & T(X) &
\end{tikzcd}
\end{center}
A monad $T$ can equivalently be presented as a \emph{Kleisli triple}: an operation
$T$ on objects, together with a family of maps $\eta_X:X\to T(X)$ and a \emph{Kleisli
extension} operator which takes a map $f:X\to T(Y)$ to a map $\klext{f}:T(X)\to
T(Y)$, satisfying the \emph{Kleisli laws}:
\begin{align*}
    \klext{\eta_X} &= \id_{T(X)} \\
    f &= \klext{f} \comp \eta \\
    \klext{(\klext{g}\comp f)} &= (\klext{g})\comp (\klext{f}).
\end{align*}
Given $\eta$ and $\klext{(-)}$ we get the application on morphisms $f:X\to Y$ as 
\[T(f) \defeq \klext{(\eta_{Y}\comp f)}\]
and  $\mu:T^2 \to T$ as
\[\mu_{X} \defeq \klext{\id_{T(X)}}.\]
Functoriality of $T$, naturality of $\eta$ and $\mu$ and the monad laws follow from
the Kleisli equations. Conversely, given a monad $(T,\eta,\mu)$ and $f:X\to Y$ in
$\cC$, define $\klext{f} \defeq \mu_{Y} \comp T(f)$. Again, the Kleisli laws follow from
the monad laws and naturality of $\eta$ and $\mu$.

Given a Kleisli triple $(T,\eta,\klext{(-)})$, we can define \emph{Kleisli
composition} of $f:X\to T(Y)$ and $g:Y\to T(Z)$ by
\[g \kcomp f \defeq (\klext{g})\comp f.\]
We may internalize the definition of a monad by looking at type operators.
    \index[symbol]{klext@{$\klext{(-)}$}|see {Kleisli extension}}
\begin{definition}
    A \name{monad} is a map $T:\univ \to \univ$ together with a family of
    maps \index{eta@{$\eta$}!unit of a monad|textit}
    \[\eta:\qPi{X:\univ} X\to T(X)\]
    and a Kleisli extension operator
    \index[symbol]{klext@{$\klext{(-)}$}}
    \index[symbol]{compkleisli@{$\kcomp$}|see{Kleisli extension}}
    \index{Kleisli extension!for a monad|textit}
    \[\klext{(-)} : \qPi{X,Y:\univ}(X\to T Y) \to (T X \to TY)\]
    such that
    \begin{align*}
        \klext{\eta_X} &= \id_{T(X)} \\
        f &= \klext{f} \comp \eta \\
        \klext{(\klext{g}\comp f)} &= (\klext{g})\comp (\klext{f})
    \end{align*}
\end{definition}

\begin{definition}\label{def:submonad}
    If $\eta$ and $\klext{(-)}$ give $T:\univ\to\univ$ the structure of a monad,
    a \nameas{submonad}{monad!submonad} of $T$ is an operation
    $S:\univ\to\univ$ such that for every $X$ there is an embedding $i_X:SX\to TX$
    such that
    \begin{itemize}
        \item for every $X:\univ$, $\eta_X$ factors through $i_X$, and
        \item the embedding $i$ commutes with Kleisli extension.
    \end{itemize}

    That is, we have $\eta' : X \to SX$ and for every $f:X\to S Y$, we have
    ${\klext{f}}':S X\to S Y$
    such that the following diagram commutes:

    \begin{center}
    \begin{tikzcd}
        X \arrow[r,"f"] \arrow[d,"\eta'"] \arrow[dd,bend right, "\eta",swap] & S Y
        \arrow[d,"i"] \\
        S X \arrow[d,"i"] \arrow[ur,"{\klext{f}}'",swap]  & T Y \\
        T X \arrow[ur, "\klext{(i\comp f)}",swap] &
    \end{tikzcd}
    \end{center}
\end{definition}
We will be interested in the \emph{lifting} monads $\Lift$ of
Chapter~\ref{chapter:partial-functions}. However, they are formed via quantification
over a universe $\univ$, and so can raise the universe level; naively, we have
the lifting as a map $\Lift:\univ[0]\to\univ[1]$; We discuss this discrepancy in Section~\ref{section:size},
where we see that in fact lifting can be applied at higher universes, and there has
type $\univ[i]\to\univ[i]$, so indeed forms a monad as stated above. However, size
issues can also be resolved in another way, by making the definition of monad universe
polymorphic: we can take a monad to be a map $T:\univ\to\vniv$, for two universes $\univ$
and $\vniv$, and then we have $\eta:\qPi{X:\univ}X\to T(X)$, where then the type of
$\eta$ lives in universe $\max(\univ',\vniv)$, where $\univ'$ is the universe above
$\univ$, as does the type of the Kleisli extension operator
$\klext{(-)}:\qPi{X,Y:\univ}:(X\to T(Y))\to T(X)\to T(Y)$.
\index{monad|)}

\section{Choice principles}\label{section:choice}
\index{choice!axiom of|(}
The forward direction of the following equivalence is often called the \emph{type
theoretic axiom of choice}.
\begin{theorem}\label{thm:ttchoice}
    For any $A,B:\univ$ and $R:A\to B\to \univ$, we have an equivalence,
    \[\Big(\qPi{x:A}\qSig{y:B}{R(x,y)}\Big)
    \simeq\Big(\qSig{f:A\to B}{\qPi{x:A}{R(x,f(x))}}\Big)\]
\end{theorem}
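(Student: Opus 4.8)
The plan is to construct maps in both directions and show they are mutually inverse, using the characterization of equalities in $\Sigma$-types (Theorem~\ref{thm:patheq}) together with function extensionality. First I would define the forward map
\[\Phi:\Big(\qPi{x:A}\qSig{y:B}{R(x,y)}\Big)\to\Big(\qSig{f:A\to B}{\qPi{x:A}{R(x,f(x))}}\Big)\]
by $\Phi(g)\defeq(\lambda x.\pr_0(g(x)),\,\lambda x.\pr_1(g(x)))$, noting that $\pr_1(g(x)):R(x,\pr_0(g(x)))$ has exactly the type required for the second component. In the reverse direction, define
\[\Psi:\Big(\qSig{f:A\to B}{\qPi{x:A}{R(x,f(x))}}\Big)\to\Big(\qPi{x:A}\qSig{y:B}{R(x,y)}\Big)\]
by $\Psi(f,h)\defeq\lambda x.(f(x),h(x))$. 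Both definitions are just shuffling of data, so this step is routine.

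Next I would check that $\Psi\comp\Phi$ and $\Phi\comp\Psi$ are each homotopic to the identity. For $\Psi(\Phi(g))$, we compute $\Psi(\Phi(g)) = \lambda x.(\pr_0(g(x)),\pr_1(g(x)))$, and by the Corollary to Theorem~\ref{thm:patheq} each $g(x)$ equals $(\pr_0(g(x)),\pr_1(g(x)))$, so by function extensionality $\Psi(\Phi(g)) = g$. For the other composite, $\Phi(\Psi(f,h)) = (\lambda x.f(x),\lambda x.h(x))$; the $\eta$-rule (or function extensionality) gives $\lambda x.f(x) = f$ and $\lambda x.h(x) = h$, and assembling these into an equality of pairs again via Theorem~\ref{thm:patheq} — with the transport over the path in the first component handled because the dependency is only through $f$ — yields $\Phi(\Psi(f,h)) = (f,h)$. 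Then by Theorem~\ref{thm:equiv-char} (or simply by exhibiting $\Psi$ as an inverse), $\Phi$ is an equivalence.

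The one place that needs a little care — and the main obstacle — is the second half of the last computation: showing $\Phi(\Psi(f,h)) = (f,h)$ requires producing a path $p:\lambda x.f(x) = f$ together with a path $\transport^{\lambda f'.\qPi{x:A}R(x,f'(x))}(p,\lambda x.h(x)) = h$ lying over it, as dictated by Theorem~\ref{thm:patheq}. This transport is not definitionally trivial once $p$ is a genuine function-extensionality path rather than $\refl$; handling it amounts to the standard fact that transport in a $\Pi$-type of the above form acts pointwise, combined with the naive function extensionality already available. I would either invoke the pointwise behavior of transport along function-extensionality paths directly, or — more cheaply — observe that the whole target type $\qSig{f:A\to B}\qPi{x:A}R(x,f(x))$ admits a reshuffling equivalence to $\qPi{x:A}\qSig{y:B}R(x,y)$ at the level of the underlying data and appeal to Theorem~\ref{thm:fiberwise-equivalence}; but the cleanest exposition is probably to note that both composites are pointwise equal to the identity after projecting, so function extensionality finishes it without any explicit transport computation.

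Finally, since being an equivalence is a proposition (Theorem~\ref{thm:equiv-is-prop}) the choice of inverse does not matter, so it suffices to have exhibited $\Psi$ as a two-sided inverse of $\Phi$; the forward direction $\Phi$ is then the "type-theoretic axiom of choice" map, and the theorem is proved.
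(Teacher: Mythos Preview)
Your approach is the same as the paper's: define the two shuffling maps and check they are mutually inverse. The paper's proof is exactly your $\Phi$ and $\Psi$ (called $h$ and $g$ there), followed by the two round-trip computations.

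The ``main obstacle'' you identify is not actually present. The paper's type theory has a \emph{judgmental} $\eta$-rule for $\Pi$ (see Section~\ref{section:mltt}: ``if two functions $f,g:A\to B$ are such that $f(x)\jeq g(x)$ for every $x:A$, then we have $f\jeq g$''). Hence $\lambda x.f(x)\jeq f$ and $\lambda x.h(x)\jeq h$ judgmentally, so the path in the first component is literally $\refl$ and the transport you worry about is the identity. The paper therefore dispatches $\Phi(\Psi(f,h))=(f,h)$ in one line, without invoking Theorem~\ref{thm:patheq} or any transport computation. Your concern would be legitimate in a type theory lacking judgmental $\eta$ for functions, but here it is unnecessary; the only place where anything non-judgmental happens is the other composite, where the pointwise equality $(\pr_0(g(x)),\pr_1(g(x)))=g(x)$ (the Corollary to Theorem~\ref{thm:patheq}) is lifted by function extensionality, exactly as you say.
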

\begin{proof}
    Let
    \[ P\defeq\qPi{x:A}\qSig{y:B}{R(x,y)},\]
    and
    \[ Q\defeq\qSig{f:A\to B}{\qPi{x:A}{R(x,f(x))}},\]
    so that we are looking for an equivalence $P\simeq Q$. Define $h:P\to Q$ by
    \[h(\psi) \defeq \big(\lambda x.\pr_0(\psi\, x),\lambda x.\pr_1(\psi\, x)\big)\]
    and define $g:Q\to P$ by
    \[g(f,\varphi)\defeq \lambda x. \big(f(x),\varphi(x)\big).\]
    Then we have 
    \[g(h (\psi)) = g\big(\lambda x.\pr_0(\psi\, x),\lambda x.\pr_1(\psi\, x)\big) = \lambda
    x.\big(\pr_0(\psi\, x), \pr_1(\psi\, x)\big) = \lambda x.\psi(x) = \psi,\]
    and
    \[h(g(f,\varphi)) = h(\lambda x. (f(x),\varphi(x))) = (\lambda x.f(x),\lambda
    x.\varphi(x)) = (f,\varphi).\]
    So $h$ and $g$ are inverse.
\end{proof}
It is somewhat misleading to call this ``the axiom of choice'' as the notion of
existence given by $\Sigma$ is too strong for this to
properly interpret the axiom of choice: $\Sigma$ asserts a given witness, which we
can then extract in a functional way. It is more in line with the usual uses of
choice to instead weaken $\Sigma$ to exists $\exists$.
However, asserting this for all types and type families is too strong
\cite[Lemma 3.8.5]{hottbook}, so we can only assert the axiom of choice for sets.
\begin{axiom}
    The \nameas{axiom of choice}{choice!axiom of} posits that
    given sets $A$ and $B$ and a relation $R:A\to B\to\univ$ such that $R(x,y)$ is a proposition for
    all $x:A$ and $y:B$, there is an inhabitant of the type
\[\Big(\qPi{x:A}{\qExists{y:B}{R(x,y)}}\Big)
    \to\Big(\qExists{f:A\to B}{\qPi{x:A}{R(x,f(x))}}\Big).\]
\end{axiom}
We will be interested in various weakenings of the axiom of choice. For simplicity of
notation, we will use an alternate characterization of the axiom of choice for the
statement of such principles.
\begin{lemma}
    The axiom of choice is equivalent to the following:
    
    For all sets $X$ and families $Y:X\to\univ$ such that $Y(x)$ is a set for all
    $x$, we have
    \[\ACtype{x:X}{Y(x)}\]
\end{lemma}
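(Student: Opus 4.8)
Both the axiom of choice as stated above and the displayed principle are propositions: each has the shape $\qPi{-}\big(A\to\trunc{B}\big)$, and truncations are propositions while $\Pi$ preserves propositionhood by function extensionality. So it suffices to establish logical equivalence. I will also abbreviate: $\ACtype{x:X}{Y(x)}$ unfolds to $\big(\qPi{x:X}\trunc{Y(x)}\big)\to\trunc{\qPi{x:X}Y(x)}$.

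For the forward direction, assume the axiom of choice (in relational form). Fix a set $X$ and a family $Y:X\to\univ$ with each $Y(x)$ a set, and suppose $\psi:\qPi{x:X}\trunc{Y(x)}$. The plan is to package $Y$ as a relation. Let $B\defeq\qSig{x:X}Y(x)$, which is a set since sets are closed under $\Sigma$, and define $R:X\to B\to\univ$ by $R(x,(x',y'))\defeq(x=x')$; this is proposition-valued because $X$ is a set. By Lemma~\ref{lemma:path-families}, for each $x:X$ we have $\big(\qSig{(x',y'):B}x=x'\big)\simeq Y(x)$, so applying functoriality of truncation (Lemma~\ref{lemma:trunc-func}) to $\psi(x)$ gives $\qExists{b:B}R(x,b)$; hence $\qPi{x:X}\qExists{b:B}R(x,b)$. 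The axiom of choice now yields an element of $\qExists{f:X\to B}\qPi{x:X}R(x,f(x))$, i.e.\ a truncated pair $(f,\varphi)$ with $\varphi(x):x=\pr_0(f(x))$. From any such $(f,\varphi)$ one defines $g:\qPi{x:X}Y(x)$ by $g(x)\defeq\transport^{Y}\big(\varphi(x)^{-1},\pr_1(f(x))\big)$, using that $\pr_1(f(x)):Y(\pr_0(f(x)))$. Thus there is a map $\big(\qSig{f:X\to B}\qPi{x:X}R(x,f(x))\big)\to\qPi{x:X}Y(x)$, and composing with $|-|$ and Lemma~\ref{lemma:trunc-func} produces the desired element of $\trunc{\qPi{x:X}Y(x)}$.

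For the converse, assume the displayed principle and let $A,B$ be sets with $R:A\to B\to\univ$ such that each $R(x,y)$ is a proposition; suppose $\qPi{x:A}\qExists{y:B}R(x,y)$. Set $Y(x)\defeq\qSig{y:B}R(x,y)$. Each $Y(x)$ is a set, being a $\Sigma$ of the set $B$ with a proposition (propositions are sets). The hypothesis says exactly $\qPi{x:A}\trunc{Y(x)}$, so the displayed principle (applied to the set $A$ and the family $Y$) gives $\trunc{\qPi{x:A}Y(x)}$, i.e.\ $\trunc{\qPi{x:A}\qSig{y:B}R(x,y)}$. By the type-theoretic axiom of choice, Theorem~\ref{thm:ttchoice}, we have an equivalence $\big(\qPi{x:A}\qSig{y:B}R(x,y)\big)\simeq\big(\qSig{f:A\to B}\qPi{x:A}R(x,f(x))\big)$; transporting the truncation across this equivalence yields $\trunc{\qSig{f:A\to B}\qPi{x:A}R(x,f(x))}$, which is precisely $\qExists{f:A\to B}\qPi{x:A}R(x,f(x))$. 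This establishes the axiom of choice.

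The routine parts are the propositionhood bookkeeping and the transport computation defining $g$; the one step deserving care is the reduction in the forward direction, namely encoding the dependent family $Y$ as the relation $R$ on $\qSig{x:X}Y(x)$ and checking via Lemma~\ref{lemma:path-families} that $\qExists{b:B}R(x,b)$ really is $\trunc{Y(x)}$. That is where the argument actually uses the hypotheses (that $X$ is a set, to make $R$ proposition-valued), and it is the only place where anything non-formal happens.
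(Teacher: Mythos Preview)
Your proof is correct and follows essentially the same route as the paper: in the forward direction you encode $Y$ as the relation $R(x,(x',y'))\defeq(x=x')$ on $B\defeq\qSig{x:X}Y(x)$ and invoke Lemma~\ref{lemma:path-families}, and in the converse you set $Y(x)\defeq\qSig{y:B}R(x,y)$ and appeal to Theorem~\ref{thm:ttchoice}. You are in fact a bit more careful than the paper in verifying the side conditions (that $B$ is a set, that $R$ is proposition-valued because $X$ is a set, that each $Y(x)$ is a set), which the paper leaves implicit.
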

\begin{proof} Fix a type $A:\univ$ and take $X\defeq A$ so that for any $B:\univ$ the map
    \[(f,\varphi)\mapsto \lambda x.(f(x),\varphi(x))\]
    defines an equivalence
    \[\Big(\qSig{f:A\to
    B}\qPi{x:A}R(x,f(x))\Big)\simeq\Big(\qPi{x:A}\qSig{y:B}R(x,y)\Big).\]
    So then setting $Y(x)\defeq \qSig{y:B}R(x,y)$, we have the backward direction.

    For the forward, direction, assume we have $Y:X\to \univ$ and define
    $B=\qSig{x:X}Y(x)$ and $R:X\to B\to\univ$ by
    \[R(x,y)\defeq \pr_0(y)=x.\]
    Then we know for any $x:X$ that $\qSig{y:B}R(x,y)$ is equivalent to $Y(x)$, and
    so the type $\qPi{x:X}\qExists{y:B}R(x,y)$ is equivalent to $\qSig{x:X}\trunc{Y(x)}$. So if
    we assume the latter is inhabited then so is the former. Then by the axiom of choice, we have
    \[\trunc{\qSig{f:X\to B}\qPi{x:X}\pr_0(f(x))=x}.\]
    Then we have $\trunc{\qPi{x:X},Y(x)}$ by functoriality of truncation
    (Lemma~\ref{lemma:trunc-func}) using the function 
    \[\lambda (f,\varphi). \lambda x.\transport(\varphi(x),\pr_2(f(x)).\qedhere\]
\end{proof}
We will call statement of the form
    \[\ACtype{x:X}{Y(x)} \]
\emph{choice from $X$ to $Y$}, or \emph{choice of $Y$ over $X$}, and similarly if
$P,Q:\Set\to\univ$ are property of sets, we will call
    \[P(X)\to\Big(\qPi{x:X}{Q(Yx)}\Big)\to  \ACtype{x:X}{Y(x)} \]
\emph{choice from $P$ to $Q$}. In this language, the full axiom of choice is choice
from sets to sets. Two more
crucial examples are \nameas{countable choice}{choice!countable}, (choice from $\nat$),
    \[\ACtype{n:\nat}{Y(n)};\] 
and \emph{propositional choice}, (choice from propositions),
    \[\isProp(X)\to \ACtype{x:X}{Y(x)}.\]
Neither of these principles are provable in univalent type theory. Coquand et
al.~\cite{CMR2017Stacks} recently gave a model (which validates univalence) but in which countable
choice fails, and a similar model which invalidates a consequence of propositional
choice. However, the \nameas{principle of unique choice}{choice!unique} (choice from propositions to
propositions) is provable in our type theory.
\begin{theorem}\label{thm:unique-choice}
    For any proposition $X$ and any $P:X\to\univ$ such that $\qPi{x:X}{\isProp(Px)}$
    we have
    \[\ACtype{x:X}{P(x)}.\]
\end{theorem}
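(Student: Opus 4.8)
The plan is to exploit the fact that everything in sight is a proposition, so that the truncations can simply be stripped off. First I would observe that since $P(x)$ is a proposition for every $x:X$, and propositions form an exponential ideal (function extensionality, Theorem~\ref{thm:funext-char}.\ref{funext:prop}), the type $\qPi{x:X}P(x)$ is itself a proposition. Consequently, by Lemma~\ref{lemma:prop-equals-truncation}, we have $\qPi{x:X}P(x)\simeq \trunc{\qPi{x:X}P(x)}$, so to inhabit $\trunc{\qPi{x:X}P(x)}$ it suffices to inhabit $\qPi{x:X}P(x)$ directly. (Note that here we do not even need the hypothesis that $X$ is a proposition; it is the proposition-valuedness of $P$ that does the work.)

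Next I would construct the desired element. Suppose we are given $f:\qPi{x:X}\trunc{P(x)}$, and fix $x:X$. Then $f(x):\trunc{P(x)}$, and since $P(x)$ is a proposition, Lemma~\ref{lemma:prop-equals-truncation} gives an equivalence $\trunc{P(x)}\simeq P(x)$; applying its inverse to $f(x)$ yields an element of $P(x)$. (Concretely, the map $\trunc{P(x)}\to P(x)$ is obtained by factoring $\id_{P(x)}:P(x)\to P(x)$ through $|-|:P(x)\to\trunc{P(x)}$, which is possible precisely because $P(x)$ is a proposition.) Abstracting over $x$ produces an element of $\qPi{x:X}P(x)$, and composing with $|-|$ gives an element of $\trunc{\qPi{x:X}P(x)}$, as required.

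I expect no real obstacle here: the argument is a few lines, and the only point that requires any care is the appeal to function extensionality to know that $\qPi{x:X}P(x)$ is a proposition — without it, the statement would not go through in pure MLTT. The contrast worth emphasising in the write-up is that this is exactly why unique choice is provable while countable choice and propositional choice are not: when the family $P$ is proposition-valued and its domain or the truncations involved collapse the relevant structure, there is nothing left for a choice principle to choose, whereas in the general case $\trunc{\qSig{y:B}R(x,y)}$ genuinely discards the witness.
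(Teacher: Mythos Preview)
Your argument is correct and matches the paper's proof: both strip the inner truncations using Lemma~\ref{lemma:prop-equals-truncation} (since each $P(x)$ is a proposition) to obtain an element of $\qPi{x:X}P(x)$, then truncate. Your first paragraph---showing $\qPi{x:X}P(x)$ is itself a proposition---is unnecessary (as you yourself note, applying $|-|$ suffices), and indeed the paper omits it; you are also right that the hypothesis $\isProp(X)$ is never used.
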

\begin{proof}
    In fact, the assumptions give 
    \[\big(\qPi{x:X}{\trunc{P(x)}}\big)\to \qPi{x:X}{P(x)},\]
    by Lemma~\ref{lemma:prop-equals-truncation}.
\end{proof}
We also have choice from decidable propositions:
\begin{theorem}\label{thm:decidable-choice}
    If $P$ is a decidable proposition, then
    \[\ACtype{p:P}{Q(p)}.\]
\end{theorem}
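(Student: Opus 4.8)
The plan is to unfold the conclusion and do a case analysis on decidability. Unfolding $\ACtype{p:P}{Q(p)}$, we must produce, from a hypothesis $\varphi : \prod_{p:P}\trunc{Q(p)}$, an element of $\trunc{\prod_{p:P}Q(p)}$. Since $P$ is decidable we are handed an element of $P+\neg P$, and I would proceed by induction on it. In the branch $\inr(n)$ with $n:\neg P$: for every $p:P$ we have $n(p):\zerotype$, so $\zerotype$-elimination yields an element of $Q(p)$; hence $\prod_{p:P}Q(p)$ is inhabited outright, and applying $|{-}|$ gives the required element of $\trunc{\prod_{p:P}Q(p)}$. (In this branch neither $\varphi$ nor the hypothesis that $P$ is a proposition is used.)

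In the branch $\inl(p_0)$ with $p_0:P$, we have $\varphi(p_0):\trunc{Q(p_0)}$, and it suffices to exhibit a map $Q(p_0)\to \prod_{p:P}Q(p)$ and transport $\varphi(p_0)$ along it using functoriality of truncation (Lemma~\ref{lemma:trunc-func}). This is where the hypothesis that $P$ is a proposition enters: for each $p:P$ it supplies a path $e_p:p_0=p$, so $q\mapsto \lambda p.\transport^{Q}(e_p,q)$ is the desired map. Feeding $\varphi(p_0)$ into the induced map $\trunc{Q(p_0)}\to\trunc{\prod_{p:P}Q(p)}$ completes this case, and hence the proof.

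I do not expect any genuine obstacle — the argument is essentially a two-line case split — but the point worth emphasising is that the assumption that $P$ is a \emph{proposition} is really used in the $\inl$ branch (it is what lets a single truncated element of $Q(p_0)$ be spread to a section over all of $P$), which is exactly why the statement is phrased for decidable propositions rather than decidable types: choice from an arbitrary decidable type into an arbitrary family would amount to choice over that type, and that can fail.
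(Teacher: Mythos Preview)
Your proof is correct and follows essentially the same approach as the paper: a case split on decidability, with the empty case vacuous and the inhabited case reducing to a single fibre. The only cosmetic difference is that the paper uses proposition extensionality to write the two cases as $P=\zerotype$ and $P=\unittype$ (so that in the inhabited case $\qPi{p:P}\trunc{Q(p)}\simeq\trunc{Q(\star)}\simeq\trunc{\qPi{p:P}Q(p)}$ is immediate), whereas you work directly with the witness of $P+\neg P$ and make the transport along $p_0=p$ explicit; the content is the same.
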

\begin{proof}
    We do a case analysis on $P$: if $P=0$, then we have $\qPi{p:P}Q(p)$, and so in
    particular,
    \[\ACtype{p:P}{Q(p)}.\]
    If $P=1$, then $\qPi{p:P}\trunc{Q(p)}$is equivalent to $\trunc{Q(\star)}$, and this in turn
    is equivalent to 
    \[\trunc{\qPi{p:P}Q(p))}.\qedhere\]
\end{proof}
\index{choice!axiom of|)}

\section{Discussion}
Lemma~\ref{lemma:untruncate-decidable-predicates} has been discussed by Mart{\'i}n
Escard{\'o}, and provides real-world examples of situations where $\trunc{X}\to X$.
This is related to an interesting observation by
Nicolai Kraus~\cite{krausinvertible} that there is a dependent function $f$ such that
$f(|n|)=n$ for all~$n:\nat$, and to work by Kraus, Escard{\'o} and
others~\cite{keca2016,keca2013} on constancy and anonymous existence;
Section~\ref{section:constancy} is also related to this work.

While the notion of monad we gave is good enough for our purposes, the
universe is not a category but an
$\infty$-category~\cite{warren2011strict,lumsdaine2009weak,vdBergGarner2011},
although there is no definition of $\infty$-category in type theory at the time of writing.
Monads in higher categories are somewhat more subtle, due to the coherence issues that
arise. Some recent work by Riehl and Verity~\cite{riehl2016} is concerned with the
issue of $\infty$-monads.

The axiom of choice has a subtle relationship with constructiveness. On one hand,
you have that choice does not seem to correspond to an effective procedure in any
way, and Diaconescu's theorem that choice implies excluded middle. On the other, you
have Bishop's pronouncement that ``a choice function exists
in constructive mathematics, because a choice is implied by the very meaning of
existence,''~\cite{bishop1967} and the so-called \emph{type-theoretic axiom of
choice} which is a direct application of the definition of $\Sigma$ and~$\Pi$. A good
overview of this situation is given by Martin-L{\"o}f~\cite{martinlof2006choice}. The
subtleties discussed by Martin-L{\"o}f seem to vanish in a univalent setting: reading
the axiom of choice in the logic of structures (without truncations) gives the
version that is ``implied by the very meaning of existence'', while reading it in the
logic of propositions (with truncations) gives the classical principle.


\part{Partiality}\label{part:two}
We turn now to partiality. Classically, a partial function $X\pto Y$ is taken
to be a relation $R\subseteq X\times Y$ such that for each $x\in X$, there is at most
one $y\in Y$ such that $R(x,y)$. While this approach can be used constructively,
there are a few reasons we wish to avoid this representation of partial functions.
Most importantly, expressing application of a partial function represented as a
relation is cumbersome, awkward, and non-rigorous: We express the fact that
there is at most one $y$ such that $R(x,y)$ by saying that $\qSig{y:Y}R(x,y)$
is a proposition. To apply a functional relation $R:X\times Y\to \univ$ to
$x:X$, we first demonstrate that $\qSig{y:Y}R(x,y)$ is inhabited, and then
appeal to the fact that it is a proposition to determine there is at most
one, and then apply the second projection.  In the case where $f$ is not
defined at $x$, the notation $f(x)$ does not even make sense since
$\qSig{y:Y}R(x,y)$ is empty.

In settings where functions take a
more privileged position than in (e.g.) ZF, a partial function $X\pto Y$ is often
taken to be a (total) function $X\to Y+1$, where the adjoined element represents an undefined
value. This approach only works for partial functions $f:X\pto Y$ where it is
decidable whether $f$ is defined on $x$.

We generalize the undefined value by replacing $Y+1$ with a type of \emph{partial
elements} of~$Y$. This can be expressed by taking a partial element to be a proposition (the
extent of definition) together with a map from that proposition to $Y$ (its value).
In a classical setting, this type is equivalent to $Y+1$, since a proposition is either
true or false. A partial function $X\pto Y$ is taken to be a total
function from $X$ to the type of partial elements of $Y$. We can extract the
\emph{value} of $f(x)$ (as an element of $Y$)
when we know $f(x)$ to be a \emph{total} element. However, we need to be able to lift
data about elements of $Y$ to data about partial elements of $Y$---in short, we need
the operation taking $Y$ to its type of partial elements to be a monad. This approach
is directly related to synthetic domain theory,
which we discuss in Section~\ref{section:sdt}. One point made throughout the
literature on synthetic domain theory is that the type of all partial elements (and
the type of all partial functions) is often not very interesting; we instead want to
focus on a class of particular partial functions and partial elements. Such a class
can be defined by restricting our extents of definitions. The restricted
partial elements operation succeeds in forming a monad precisely if the set of
allowable extents of definition form a \emph{dominance}
(Definition~\ref{def:dominance}). Many sets of propositions of interest arise by
truncation from \emph{structural dominances}
(Definition~\ref{def:univ-dom}), which are just $\Prop$-indexed families of
types closed under $\Sigma$ such that the type indexed by $\unittype$ is inhabited.
Unfortunately, sets of propositions arising in this way are
dominances if and only if particular (weak) forms of choice hold
(Theorem~\ref{dominance:choice-to-dominance}). Instead, we consider \emph{lifted} functions
arising by applying the same constructions directly to a structural dominance, instead
of a dominance. The type
of lifted functions arising in this way contains many representations of the same partial
function, and while they do compose, there is no guarantee that this composition is
associative. Nevertheless, we can tame these wild partial functions to get a notion
of \emph{disciplined map} (Definition~\ref{def:discipline}). These maps do indeed
compose (associatively), even in the absense of choice, and the 
information tracked explicitly by the lifted functions is now implicit.

Of particular interest are the \emph{Rosolini structures} (a structural dominance) and
the \emph{Rosolini propositions} (their truncation), which are an abstract
form of the \emph{affirmable} or \emph{semidecidable} truth values. This class of
propositions are related to the extended naturals, $\NI$ (defined by adding a point
at infinity to $\nat$), and to other approaches to partiality in type theory based on
the delay monad (Section~\ref{section:delay}) and higher inductive-inductive types
(Section~\ref{section:qiit}). These will play a crucial role in
Chapter~\ref{chapter:comp-and-partiality}, as they can be used as to define notions
we expect to satisfy the hope of a notion of partial function which can be used as a
surrogate for the computable functions.

This approach allows us to express application of partial functions directly. The
statement that $f(x)$ is defined is a proposition about an existing
partial-element, rather than a possibly non-existent element. There is another, less
important benefit of the approach via partial elements: the full univalence axiom is
required to identify functions with their graph
(Theorem~\ref{thm:functions-are-relations}), and univalence appears repeatedly when
trying to prove type families equal. The approach to partial functions as partial
elements does not need univalence.

\chapter{Partiality in Type Theory and Topos Theory}\label{chapter:partiality}
Before developing our approach to partiality, we quickly review prior work on
partiality in type theory. The standard approach to partiality in type theory center
around the \emph{delay monad}, which we define concretely using the extended natural
numbers in Section~\ref{section:delay}. Kleisli maps for the delay monad can be viewed as
partial functions equipped with intensional information. This extra intensional
information means the delay monad does not work for our purposes: it makes partiality
into structure, rather than property. Chapman, Uustalu and Veltri resolve this issue
by quotienting the delay monad, but then countable choice is needed to show that the
quotiented delay monad forms a monad---to show that partial functions compose. Our
attempt to give partial functions via dominances in
Chapter~\ref{chapter:partial-functions}, leads us to the same place, so we discuss
the details from the perspective of dominances there. More recently, Altenkirch,
Danielsson and Kraus use a higher-inductive type to define the free $\omega$CPO on a
type $X$, and use Kleisli maps for this monad to give partial functions.

Our approach to partiality will center around a type-theoretic version of the notion
of \emph{dominance}.  Dominances and many of the tools we use to work with dominances
arise from synthetic domain theory and are also used in synthetic computability
theory. Synthetic domain theory aims to develop domain theory by axiomatizing a
particular topos of objects which naturally have dcpo structure. Synthetic
computability theory is analogous.
In Section~\ref{section:sdt}, we give a brief overview of this work.

\section{$\NI$ and the delay monad}\label{section:delay}
\index{monad!delay|(}
\index{type!of natural numbers!extended|(}
The \emph{extended naturals}, $\NI$, are of central
importance in other approaches to
partiality in type theory, and will be important in ours as well. The type is
coinductively generated by
\begin{itemize}
    \item[] $\zero : \NI$
    \item[] $\succop : \NI\to\NI$
\end{itemize}
That is, $\NI$ is the final coalgebra for $X\mapsto X+\unittype$. The coalgebra map
is $\pred:\NI\to\NI+\unittype$, given by
\begin{align*}
    \pred(\zero) &= \inr \star\\
    \pred(\succop x) & = x
\end{align*}
As this type is coinductively generated, there is an element $\infty:\NI$ satisfying
the equation
\[\infty = \succop(\infty).\]
We cannot use the above coinductive definition, however, as we do not have
coinduction in our type theory.  We instead must give a concrete type which satisfies
the required universal property. The type $\NI$ should be thought of as the naturals
extended by a point at infinity. As we cannot separate the extra point from $\N$, the
type $\N+\unittype$ will not suffice. Instead, we we take $\NI$ to be the type of
binary sequences with at most one $1$\cite{escardo2013omniscience}.
\begin{definition}
    The \name{Cantor space} is the type $\cantor$.
\end{definition}
Note that the cantor space can be seen both as the type of infinite binary sequences,
and as the type of decidable predicates on $\nat$.
\begin{definition}
    \index[symbol]{bracket@{$\lar{-}$}!for a binary sequence|textit}
    For $\alpha:\cantor$, define the type $\lar{\alpha}:\univ$ by
    \[\lar{\alpha} \defeq \qSig{n:\nat}{\alpha_n = 1}.\]
    Define the \nameas{extended natural numbers}{type!of natural numbers!extended} to be the type
    \definesymbolfor{$\NI$}{NI}{type of extended natural numbers}
    \[\NI\defeq \qSig{\alpha:\cantor}{\isProp(\lar{\alpha})}.\]
    The element $\infty:\NI$ is defined to be the constantly zero function,
    \[\infty\defeq \lambda n.0.\]
    Given $n:\nat$, the sequence given by the decidable predicate $\lambda k.n=k$
    only takes value 1 at input $n$, and so for each $n:\nat$ there is an element $\numeral{n}:\NI$.
\end{definition}
We could instead equivalently define $\NI$ to be the set of \emph{increasing} binary
sequences.
\begin{definition}
    A sequence $\alpha:\cantor$ is \emph{increasing} when
    \[\qPi{n,m:\nat}{(\alpha_n = 1) \to (n<m) \to (\alpha_m = 1)}.\]
\end{definition}
\begin{lemma}
    The type $\NI$ is equivalent to the type of increasing binary sequences.
\end{lemma}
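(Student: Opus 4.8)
The plan is to give an explicit equivalence as a pair of mutually inverse maps, defining each by bounded search and checking the two round trips with function extensionality. Write $\mathsf{Inc}\defeq\qSig{\alpha:\cantor}I(\alpha)$ for the type of increasing binary sequences, where $I(\alpha)$ denotes the condition $\qPi{n,m:\nat}(\alpha_n=1)\to(n<m)\to(\alpha_m=1)$. A preliminary observation is that $I$ is a predicate: it is a $\Pi$-type landing in $\alpha_m=1$, which is a proposition since $2$ is a set, so $I(\alpha)$ is a proposition by function extensionality; thus $\mathsf{Inc}$, like $\NI$, is a subtype of $\cantor$ via the first projection.

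First I would define $f:\NI\to\mathsf{Inc}$. Given $(\alpha,p):\NI$, let $F(\alpha):\cantor$ be the cumulative-or sequence, where $F(\alpha)_m$ equals $1$ exactly when $\qSig{k:\nat}(\alpha_k=1)\times(k\le m)$ holds; this is decidable by bounded search, so $F(\alpha)$ is a genuine element of $\cantor$, and it is visibly increasing (a hit at some $k\le n<m$ is a hit at some $k\le m$), so I set $f(\alpha)\defeq(F(\alpha),-)$, the second component being forced since $I$ is a proposition. Going back, for \emph{any} $\beta:\cantor$ let $G(\beta):\cantor$ be the ``first-jump'' sequence, with $G(\beta)_m$ equal to $1$ exactly when $\beta_m=1$ and $\qPi{k:\nat}(k<m)\to(\beta_k=0)$; this is again decidable by bounded search. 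One checks directly that $\lar{G(\beta)}$ is a proposition: if $G(\beta)_m=G(\beta)_{m'}=1$ and, by trichotomy on $\nat$, $m\le m'$, then $m<m'$ is impossible, since $\beta_m=1$ would contradict the ``no earlier $1$'' clause of $G(\beta)_{m'}=1$. Hence $G$ restricts to $g:\mathsf{Inc}\to\NI$.

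Next I would prove $g\comp f=\id_{\NI}$ and $f\comp g=\id_{\mathsf{Inc}}$. Since $\NI$ and $\mathsf{Inc}$ are subtypes of $\cantor$, Theorem~\ref{thm:patheq} reduces each identity to an equality of underlying Cantor sequences, and function extensionality reduces that to a pointwise claim. For $g(f(\alpha))_m=\alpha_m$: unwinding the definitions, $g(f(\alpha))_m=1$ iff some $j\le m$ has $\alpha_j=1$ while $\alpha_k=0$ for all $k<m$; because $\lar{\alpha}$ is a proposition, $\alpha$ has at most one $1$, and this conjunction holds iff $\alpha_m=1$. For $f(g(\beta))_m=\beta_m$: $f(g(\beta))_m=1$ iff $\beta$ has a first jump at some $k\le m$, which — taking the least $k\le m$ with $\beta_k=1$ when one exists — is equivalent to $\beta_k=1$ for some $k\le m$, and by monotonicity of $\beta$ that is equivalent to $\beta_m=1$. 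With both composites identities, $f$ is an equivalence, giving $\NI\simeq\mathsf{Inc}$.

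I expect the only real work to be those two pointwise round-trip computations: each genuinely uses its side condition — propositionality of $\lar{\alpha}$ in one direction, monotonicity of $\beta$ in the other — together with a small least-element/bounded-search step to pass between ``there is a $1$ at some $k\le m$'' and ``the $1$ sits exactly at $m$''. A slightly slicker alternative I would keep in mind is to use the cumulative-xor self-map of $\cantor$ (with inverse the discrete difference), observe that it carries ``$\lar{\alpha}$ is a proposition'' to ``its image is increasing'' and conversely, and conclude by Theorem~\ref{thm:fiberwise-equivalence} together with Lemma~\ref{lemma:lequiv-equiv}; but that route trades the explicit round trips for parity bookkeeping, so I expect the two-map proof above to be the cleanest to write out.
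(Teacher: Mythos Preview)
Your proof is correct and follows essentially the same approach as the paper: the paper's maps $i$ and $p$ are exactly your cumulative-or $F$ and first-jump $G$ (just introduced in the opposite order), and the two round-trip verifications are the same computations you outline, using the at-most-one-$1$ hypothesis on one side and monotonicity on the other.
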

\begin{proof}
    Switch all bits after the first 1\ in the sequence. Explicitly:

    Let $\alpha$ be increasing. Define $\mu:\nat\to\univ$ by
    \[\mu(n) \defeq (\alpha_n=1) \times \qPi{k<n}{\alpha_k = 0}.\]
    The right
    hand side is a decidable predicate by Lemma~\ref{lemma:untruncate-decidable-predicates},
    and so $\mu$ factors through a map $p(\alpha):\nat\to 2$. Moreover, we know that the type
    \[ \qSig{n:\nat}{(\alpha_n=1) \times \qPi{k<n}{\alpha_k = 0}}\]
    is a proposition, and then so is $\lar{p(\alpha)}$.

    Conversely, if $\mu$ is a sequence with at most one 1, we may define
    \[i(\mu)_n = 1 \Leftrightarrow \qExists{k\le n}{\mu_k = 1}\]
    Since $\mu_k = 1$ is decidable, we know that so is $\qExists{k\le n}{\mu_k = 1}$
    by bounded search.

    We have that for any increasing $\alpha:\cantor$ that
    \[i(p(\alpha))_n = 1 \Leftrightarrow \qExists{k\le n}p(\alpha)_k =1.\]
    Since $\alpha$ is an increasing sequence from $\bool$, this is the same as
    \[i(p(\alpha))_n = 1 \Leftrightarrow \alpha_n =1.\]

    Conversely, for any $\mu$ with at most one one, we have
    \[p(i(\mu))_n = 1 \Leftrightarrow (i(\mu)_n=1) \times \qPi{k<n}{i(\mu)_k = 0},\]
    which happens at the least $n$ such that $\mu_n=1$, but since $\lar{\mu}$ is a
    proposition, this is the unique value at which $\mu_n=1$, and so
    \[p(i(\mu))_n = 1 \Leftrightarrow (\mu_n=1).\qedhere\]
\end{proof}

As usual with proposition-valued components in sums, we will often suppress the
second component, proving that the required witness exists separately.
We have the map ${\pred:\NI\to\NI}$ defined by
\begin{align*}
    \pred &:\NI\to\NI \\
    \pred(\alpha) &= (\lambda n.\alpha(n+1)).
\end{align*}
That $\lar{\pred(\alpha)}$ is a proposition when $\lar{\alpha}$ is follows from the
fact that $\pred(\alpha)_n = 1 \Leftrightarrow \alpha_{n+1} = 1$.
\begin{theorem}
    The type $\NI$ satisfies the correct universal property. That is, if $p:X\to
    X+\unittype$
    for any type $X:\univ$, then there is a unique $\varphi : X\to\NI$ such that
        \[\pred(\varphi (x)) = (\varphi+\unittype)(p(x)).\]
\end{theorem}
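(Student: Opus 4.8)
The plan is to build $\varphi$ by iterating $p$ in a way that "sticks" once it enters $\unittype$, then to check the coalgebra square and uniqueness separately. First I would introduce the sticky iterate $\hat p:X+\unittype\to X+\unittype$ with $\hat p(\inl x)=p(x)$ and $\hat p(\inr\star)=\inr\star$, and the iterates $P:\nat\to(X+\unittype)\to(X+\unittype)$ given by $P_0=\id$ and $P_{n+1}=\hat p\comp P_n$. For $x:X$, let $\varphi(x)$ be the binary sequence whose $n$-th bit is $1$ precisely when $P_n(\inl x)$ lies in the left summand while $P_{n+1}(\inl x)=\inr\star$, and $0$ otherwise; this is a legitimate definition since it is always decidable which summand an element of a coproduct occupies. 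The one substantive point is that $\lar{\varphi(x)}$ is a proposition, so that $\varphi(x)$ really is an element of $\NI$: if the $n$-th bit is $1$ then $P_{n+1}(\inl x)=\inr\star$, and since $\hat p$ fixes $\inr\star$ we get $P_m(\inl x)=\inr\star$ for every $m\ge n+1$, ruling out any later $1$; the index of a $1$ is therefore unique, and each statement ``the $n$-th bit is $1$'' is a proposition because $2$ is a set. (Unwinding the definition one sees $\varphi(x)=\numeral{k}$ when $P(\inl x)$ reaches $\inr\star$ after exactly $k$ further steps and $\varphi(x)=\infty$ when it never does, matching the intended meaning.)

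Next I would verify $\pred(\varphi(x))=(\varphi+\unittype)(p(x))$ by case analysis on $p(x)$, recalling that the coalgebra map $\pred:\NI\to\NI+\unittype$ returns $\inr\star$ when the $0$-th bit is $1$ and $\inl(\lambda n.\alpha_{n+1})$ otherwise. If $p(x)=\inr\star$ then $P_0(\inl x)=\inl x$ while $P_1(\inl x)=\inr\star$, so the $0$-th bit of $\varphi(x)$ is $1$ and $\pred(\varphi(x))=\inr\star=(\varphi+\unittype)(\inr\star)$. If $p(x)=\inl y$ then the $0$-th bit of $\varphi(x)$ is $0$, so $\pred(\varphi(x))=\inl(\lambda n.\varphi(x)_{n+1})$; since $P_{n+1}(\inl x)=P_n(\inl y)$ the $(n{+}1)$-st bit of $\varphi(x)$ equals the $n$-th bit of $\varphi(y)$, so by function extensionality the shifted sequence equals $\varphi(y)$, and because the second ($\isProp$) component of an element of $\NI$ inhabits a proposition this upgrades to an equality in $\NI$, giving $\pred(\varphi(x))=\inl\varphi(y)=(\varphi+\unittype)(\inl y)$.

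For uniqueness, suppose $\psi:X\to\NI$ also satisfies the equation. By function extensionality it suffices to prove $\psi(x)=\varphi(x)$ for each $x$, and since $\NI$ is a subtype of $\cantor$ it suffices to prove the bits agree; I would show $\forall n.\,\forall x.\,\psi(x)_n=\varphi(x)_n$ by induction on $n$. Comparing which summand occurs on each side of the equation shows the $0$-th bit of $\psi(x)$ is $1$ exactly when $p(x)$ lies in the right summand, which is also the condition for the $0$-th bit of $\varphi(x)$, handling $n=0$. For the step: if $p(x)=\inr\star$ then the $0$-th bits of both $\psi(x)$ and $\varphi(x)$ are $1$, forcing their $(n{+}1)$-st bits to be $0$; if $p(x)=\inl y$ then the $0$-th bit of $\psi(x)$ is $0$, so the equation gives $\lambda m.\psi(x)_{m+1}=\psi(y)$, hence $\psi(x)_{n+1}=\psi(y)_n$, and likewise $\varphi(x)_{n+1}=\varphi(y)_n$, so the inductive hypothesis at $y$ closes the case. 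The main obstacle is the first paragraph: arranging the iteration so the resulting sequence provably carries at most one $1$ and genuinely lands in $\NI$; once that is secured, the coalgebra square and the uniqueness induction are routine coproduct case-splits combined with function extensionality and the subtype structure of $\NI$.
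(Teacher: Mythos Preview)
Your proof is correct and follows essentially the same approach as the paper: both define the ``sticky'' extension of $p$ to $X+\unittype$ (your $\hat p$, the paper's $p'$), iterate it, and read off the bits of $\varphi(x)$ by detecting the step at which the iterate first lands in the right summand. The paper's presentation is terse to the point of leaving the verification entirely to the reader (``It is easy to check that this is correct and unique''); you have supplied exactly those details---the $\isProp\lar{\varphi(x)}$ argument, the case split for the coalgebra square, and the bitwise induction for uniqueness---so there is nothing to correct.
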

\begin{proof}
    The idea is to count how long it takes for $p$ to give us $\inr\star$.
    To do this, first define the function $p' : X+\unittype\to X+\unittype$,
    \begin{align*}
        p'(\inl x) &= p x \\
        p'(\inr \star) &= \inr \star,
    \end{align*}
    and then define $\varphi' : X+\unittype\to\NI$
    \begin{align*}
        \varphi'(\inl x) &= 0 & \text{if $p'(\inl x) = \inl y$} \\
        \varphi'(\inl x) &= 1 & \text{if $p'(\inl x) = \inr y$} \\
        \varphi'(\inr \star) &= 0
    \end{align*}
    Then $\varphi = \varphi' \circ p$.

    It is easy to check that this is correct and unique.
\end{proof}
We have that $\NI$ is a retract of $\cantor$: Define $f:\cantor\to\NI$ by
\[f(\alpha)(n) \defeq
    \begin{cases}
        \alpha(n) & \text{ if $\alpha(k) = 0$ for all $k< n$} \\
        0 & \text{otherwise}
    \end{cases}
    \]
Note that we are doing a case analysis on $\qPi{k<n}\alpha(k) = 0$, which is
decidable by bounded search. We have that $\lar{f(\alpha)}$ is a proposition since
$f(\alpha)(k) = 1$ exactly if $k$ is the least such that $\alpha(k)=1$.
We call $f(\alpha)$ the \emph{truncation} of $\alpha$.

We immediately have,
\begin{lemma}
    The map $f:\cantor\to\NI$ above is a retraction of the inclusion~$\NI\to\cantor$.
\end{lemma}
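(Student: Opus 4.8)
The plan is to verify directly that the map $f:\cantor\to\NI$ just defined is a retraction of the inclusion $i:\NI\to\cantor$; that is, to exhibit a homotopy $f\comp i\htpy \id_{\NI}$, which by function extensionality suffices for $f\comp i = \id_{\NI}$. Since elements of $\NI$ are (suppressed) pairs consisting of a sequence $\alpha:\cantor$ together with a proof that $\lar{\alpha}$ is a proposition, and since that second component lives in a proposition, it is enough to check that for every $\mu:\NI$ and every $n:\nat$ we have $f(i(\mu))(n) = \mu(n)$.

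First I would unfold the definition of $f$ at $i(\mu) = \mu$: by definition $f(\mu)(n)$ is computed by a case analysis on the decidable proposition $\qPi{k<n}\mu(k) = 0$, returning $\mu(n)$ when all earlier bits vanish and $0$ otherwise. So I must show that when there is some $k<n$ with $\mu(k) = 1$, we already have $\mu(n) = 0$ — then both branches agree with $\mu(n)$. This is exactly where the hypothesis that $\mu\in\NI$ is used: $\lar{\mu} = \qSig{m:\nat}\mu(m)=1$ is a proposition. If $k<n$, $\mu(k)=1$, and also $\mu(n)=1$, then $(k,-)$ and $(n,-)$ are two elements of $\lar{\mu}$, so propositionality forces $k = n$, contradicting $k<n$; since $\mu(n):2$, we conclude $\mu(n)=0$. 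Thus in the "otherwise" branch $f(\mu)(n) = 0 = \mu(n)$, and in the main branch $f(\mu)(n) = \mu(n)$ directly, so $f(\mu)(n)=\mu(n)$ in all cases.

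Assembling these pointwise equalities via function extensionality gives $f(\mu) = \mu$ as sequences, and hence, since the propositionality witness is irrelevant, $f(i(\mu)) = \mu$ in $\NI$; a final application of function extensionality over $\mu:\NI$ yields $f\comp i = \id_{\NI}$. I do not anticipate a genuine obstacle here — the only subtlety, and the step worth writing carefully, is the appeal to $\isProp(\lar{\mu})$ to rule out a second $1$, which is precisely the point distinguishing $\NI$ from the full Cantor space; everything else is bookkeeping about the decidable case split and the suppressed second component of elements of $\NI$.
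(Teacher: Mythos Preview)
Your proposal is correct and spells out precisely the argument the paper leaves implicit: the paper offers no proof at all, merely asserting ``We immediately have'' before stating the lemma. Your unpacking of the two cases, with the key observation that $\isProp(\lar{\mu})$ forces $\mu(n)=0$ whenever some earlier bit is $1$, is exactly the intended reasoning.
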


Capretta defines, for any type $X$, the type $\Delay(X)$ \nameas{of delayed elements of
$X$}{monad!delay} , coinductively generated by
\defineopfor{$\Delay$}{Delay}{monad!delay}
\defineopfor{$\delay$}{Delay}{monad!delay}
\defineopfor{$\now$}{Delay}{monad!delay}
\begin{itemize}
    \item[] $\now : X\to\Delay(X)$
    \item[] $\delay : \Delay(X)\to\Delay(X)$.
\end{itemize}
Again, we must define this type concretely in our type theory. Classically, the
elements of $\Delay(X)$ can be written as $\delay^n(\now x)$, or $\delay^{\infty}$.
Again, constructively, we cannot do a simple case analysis to determine which case is
satisfied, but we may use the same trick as with $\NI$.
\begin{theorem}
    The following types are equivalent:
    \begin{enumerate}[label=(\roman*)]
        \item 
            $
            \qSig{\mu:\NI}{\qPi{n:\nat}{X^{\mu=\numeral{n}}}},
            $
        \item 
            $
                \qSig{\alpha:\nat\to(X+\unittype)}{\isProp{\lar{\alpha}}},
            $\\
            where $\lar{\alpha} \defeq \qSig{n:\nat}{\qSig{x:X}{\alpha(n) = \inl(x)}}$,
    \end{enumerate}
\end{theorem}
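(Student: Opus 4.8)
The plan is to exhibit an explicit equivalence between the two types by unwinding both descriptions into a common third form and checking the maps are mutually inverse. Write $D_1$ for $\qSig{\mu:\NI}{\qPi{n:\nat}{X^{\mu=\numeral{n}}}}$ and $D_2$ for $\qSig{\alpha:\nat\to(X+\unittype)}{\isProp{\lar{\alpha}}}$. First I would observe that an element of $\NI$ is a binary sequence with at most one $1$, so giving $\mu:\NI$ together with a function $\qPi{n:\nat}X^{\mu=\numeral{n}}$ is the same as giving, for each $n$, either nothing (when $\mu_n = 0$) or an element of $X$ (when $\mu_n = 1$), with the constraint that at most one index carries an element. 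This is precisely the data of a sequence $\alpha:\nat\to(X+\unittype)$ with at most one $\inl$, which is exactly $D_2$. So the conceptual content is: the fibered family $X^{\mu=\numeral{n}}$ over the ``at most one $1$'' sequence repackages as a single sequence valued in $X+\unittype$ with the ``at most one $\inl$'' propositionality condition.

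The key steps, carried out forward, are as follows. From $(\mu,g):D_1$, define $\alpha:\nat\to(X+\unittype)$ by case analysis on the decidable proposition $\mu_n = 1$: if $\mu_n=1$ then we have a proof $p:\mu=\numeral{n}$ (here I would use that $\mu_n = 1$ determines $\mu=\numeral{n}$ since $\lar{\mu}$ is a proposition and $\numeral{n}$ is the unique element with $n$th bit $1$), and set $\alpha(n)\defeq \inl(g(n)(p))$; otherwise set $\alpha(n)\defeq\inr\star$. Then $\lar{\alpha}$ is a proposition because $\alpha(n)=\inl(x)$ forces $\mu_n=1$, so $\lar{\alpha}$ retracts onto $\lar\mu$ which is a proposition; more carefully, I would show $\lar{\alpha}\lequiv\lar\mu$ and invoke closure of propositions under logical equivalence (Lemma~\ref{lemma:lequiv-equiv} together with the fact that $\lar\alpha$ is already a subsingleton once we know at most one bit fires). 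Conversely, from $(\alpha,h):D_2$, define $\mu:\cantor$ by $\mu_n=1 \defequiv \qSig{x:X}\alpha(n)=\inl(x)$, which is decidable since $\alpha(n)$ is an element of a coproduct; $\lar\mu$ is a proposition by hypothesis $h$, so $\mu:\NI$; and define $g:\qPi{n:\nat}X^{\mu=\numeral{n}}$ by: given $p:\mu=\numeral{n}$ we get $\mu_n=1$, hence a pair $(x,q):\qSig{x:X}\alpha(n)=\inl(x)$, and return its first component $x$. Then I would check the two round-trips are homotopic to the identity: one direction uses that the center of $\qSig{x:X}\alpha(n)=\inl(x)$ is literally recovered, the other that $\mu$ is reconstructed bit-by-bit and $g$ agrees by function extensionality (each path type $\mu=\numeral n$ in $\NI$ is a proposition since $\NI$ is a set, so $g$ and the reconstructed function are pointwise equal automatically, removing any coherence bookkeeping).

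The main obstacle I anticipate is bookkeeping around the dependent function $\qPi{n:\nat}X^{\mu=\numeral n}$: one must be careful that the fiber $X^{\mu=\numeral n}$ is used correctly, i.e.\ that the ``at most one $1$'' condition on $\mu$ is genuinely what makes $\lar\alpha$ propositional, and conversely that $h$ supplies exactly that. The cleanest way to sidestep coherence headaches is to note that $\NI$ is a set (since $\cantor$ is a set and $\NI$ is a subtype of it, being a $\Sigma$ over a predicate), so all the path types $\mu=\numeral n$ involved are propositions; this means the maps on the $g$-component are determined by their mere existence and the round-trip checks reduce to checking equality in $\NI$ (again a proposition-valued statement per fixed pair of sequences, by bit-wise function extensionality) plus checking that first projections of contractible $\Sigma$-types are recovered, both of which are routine. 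I would present the equivalence via Lemma~\ref{lemma:identity-section} or a direct ``construct both maps, check one round-trip'' argument, choosing whichever keeps the $X$-valued data manipulations shortest.
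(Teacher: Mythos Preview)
Your proposal is correct and follows essentially the same approach as the paper: define $\alpha$ from $(\mu,\chi)$ by case analysis on the decidable predicate $\mu_n=1$ (equivalently $\mu=\numeral{n}$), and conversely extract $\mu$ from $\alpha$ by testing whether $\alpha(n)$ lands in the left summand. The paper's own proof is in fact terser than yours---it writes down the two maps and the reasons the propositionality conditions are preserved, but leaves the round-trip verification implicit; your added remarks that $\NI$ is a set (so the fibers $X^{\mu=\numeral n}$ are over propositions) and that $\inl$ is an embedding are exactly the observations that make those round-trips routine.
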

\begin{proof}
    For $(\mu,\chi):(i)$, we know that $(\mu=\numeral{n}) \simeq (\mu_n = 1)$, which is
    a decidable predicate. Hence we can take
    \[\alpha(n) = \left\{
        \begin{array}{ll}
            \inl \chi(n,-) & \text{ if $\mu=\numeral{n}$,} \\
            \inr(\star) & \text{ otherwise.}
        \end{array}\right. 
        \]
    As there is at most one $n$ with $\mu=\numeral{n}$, we know that there is at most
    one $n$ which takes the form $\inl x$.

    Conversely, let $(\alpha,-):(ii)$, and define
    \[\mu(n) = \begin{cases}
            1 & \text{ if $\alpha(n) = \inl x$,} \\
            0 & \text{ otherwise.}
        \end{cases} 
        \]
    And define $\varphi(n) = \lambda y.x$.
\end{proof}
We will tacitly switch between these two representations of $\Delay X$, but we make
(ii) official. In particular, we define
\begin{align*}
    \now(x)(0) &= \inl x\\
    \now(x)(n+1) &= \inr \star,
\end{align*}
and
\begin{align*}
    \delay(\mu)(0) &= \inr \star\\
    \delay(\mu)(n+1) &= \mu(n).
\end{align*}
We want $\Delay X$ to be the final coalgebra for $Y\mapsto Y+X$, for which we define
the coalgebra map $\tick:\Delay X\to (\Delay X + X)$, by
\begin{align*}
    \tick(\now x) &= \inr x\\
    \tick(\delay \mu) &= \inl \mu.
\end{align*}
The proof of the following theorem is identical to the case for $\NI$.
\begin{theorem}
    Given any type $Y$ and a coalgebra $t:Y\to Y+X$, there is a unique coalgebra
    homomorphism $\varphi:Y\to \Delay X$ such that for all $y:Y$
    \[ \tick(\varphi y) = (\varphi + X)(t y),\]
    where
\begin{align*}
    (\varphi+X)(\inl y) &= \inl (\varphi y)\\
    (\varphi+X)(\inr x) &= \inr x.
\end{align*}
\end{theorem}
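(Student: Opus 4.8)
The plan is to follow the proof of the universal property of $\NI$ essentially verbatim, working with the concrete representation (ii) of $\Delay X$, namely $\qSig{\alpha:\nat\to(X+\unittype)}{\isProp(\lar{\alpha})}$. The element $\varphi(y):\Delay X$ should record, for each $n$, whether iterating $t$ from $y$ first ``terminates'' at step $n$, and with which value. Concretely, I would first extend $t$ to $t':Y+X\to Y+X$ by $t'(\inl y)\defeq t(y)$ and $t'(\inr x)\defeq \inr x$, so that right injections are fixed points of $t'$; write $u_n\defeq (t')^{n}(\inl y)$, so $u_0 = \inl y$ and $u_{n+1} = t'(u_n)$. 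By induction on $n$, once some $u_n$ has the form $\inr x$, all later $u_m$ equal $\inr x$ (the analogue, for $\Delay X$, of the elementary facts about $\pred$ used in the $\NI$ case).

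For existence, define $\alpha:\nat\to(X+\unittype)$ by case analysis on $u_n$ and $u_{n+1}$: set $\alpha(n)\defeq\inl x$ when $u_n$ has the form $\inl(-)$ and $u_{n+1} = \inr x$, and $\alpha(n)\defeq\inr\star$ otherwise. This is legitimate since we only case-split on concretely given elements of coproducts. One checks that $\qSig{x:X}\alpha(n)=\inl x$ is a proposition for each $n$ (it is the singleton at the relevant element of $X$ when $u_{n+1}$ is a right injection, and is empty otherwise, using the characterisation of paths in coproducts), and that at most one $n$ makes it inhabited, by the ``stays $\inr$'' observation; hence $\lar{\alpha}$ is a proposition and $\varphi(y)\defeq(\alpha,-):\Delay X$ is well-defined. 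The coalgebra homomorphism equation is verified by a case analysis on $t(y)$: if $t(y) = \inr x$ then $\alpha(0) = \inl x$ (and $\alpha(n)=\inr\star$ for $n>0$, since $\lar{\alpha}$ is a proposition), so $\varphi(y) = \now(x)$ and $\tick(\varphi(y)) = \inr x = (\varphi+X)(t(y))$; if $t(y) = \inl y'$ then, writing $u'_m$ for the corresponding sequence built from $y'$, one has $u_{n+1} = u'_n$, whence $\varphi(y)(0) = \inr\star$ and $\varphi(y)(n+1) = \varphi(y')(n)$, so $\varphi(y) = \delay(\varphi(y'))$ and $\tick(\varphi(y)) = \inl(\varphi(y')) = (\varphi+X)(\inl y')$.

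For uniqueness, I would use that $\tick:\Delay X\to\Delay X + X$ is an equivalence: from the concrete definitions, an element of $\Delay X$ is either $\now(x)$ for a unique $x$ (exactly when its $0$th entry is $\inl x$) or $\delay(\mu)$ for a unique $\mu$ (exactly when its $0$th entry is $\inr\star$), and this exhibits the inverse of $\tick$. Given another homomorphism $\psi$, invertibility of $\tick$ turns $\tick(\psi(y))=(\psi+X)(t(y))$ into: $\psi(y)=\now(x)$ when $t(y)=\inr x$, and $\psi(y)=\delay(\psi(y'))$ when $t(y)=\inl y'$, and likewise for $\varphi$. To conclude $\psi(y)=\varphi(y)$ without appealing to coinduction, I would prove $\psi(y)(n)=\varphi(y)(n)$ for all $n$ by induction on $n$ with $y$ universally quantified: the base case is the case analysis at entry $0$ just described, and in the step the $t(y)=\inl y'$ case reduces $\psi(y)(n+1)=\psi(y')(n)$ and $\varphi(y)(n+1)=\varphi(y')(n)$ to the induction hypothesis at $y'$. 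Function extensionality, applied twice (the $\isProp$ side-conditions are automatically equal), then gives $\psi=\varphi$.

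The only real subtlety — and the reason ``identical to the case for $\NI$'' needs a little care rather than being literally true — is bookkeeping the passage from the coinductive picture (``$\varphi(y)$ is $\delay(\varphi(y'))$'') to an honest pointwise definition of $\alpha$ and an honest pointwise induction for uniqueness, together with verifying that the constructed $\alpha$ meets the propositionality condition built into representation (ii). Neither is difficult, but both must be carried out explicitly because our type theory has no native coinduction.
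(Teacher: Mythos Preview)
Your proposal is correct and follows exactly the approach the paper intends: the paper's proof is the single sentence ``The proof of the following theorem is identical to the case for $\NI$,'' and you carry out precisely that transfer, extending $t$ to fix right injections, iterating, and recording the first termination (now with its value $x:X$ rather than a mere bit), with the uniqueness argument spelled out via pointwise induction. Your write-up is considerably more detailed than the paper's, but the underlying idea is the same.
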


$\now:X\to\Delay X$ gives an obvious candidate unit for $\Delay$. Given $f:X\to\Delay Y$, we have
a Kleisli extension $\klext{f}:\Delay X\to\Delay Y$ with
\begin{align*}
    \klext{f}(\now x) &= f x\\
    \klext{f}(\delay \mu) &= \delay (\klext{f} \mu).
\end{align*}
Explicitly, for $x:X$, we define $\eta:X\to\Delay X$ by
\begin{align*}
    \eta(x)(0) &= \inl x \\
    \eta(x)(n+1) &= \inr\star
\end{align*}
and if $f:X\to \Delay(Y)$, we define $\klext{f}:\Delay X\to \Delay Y$ by
\[\klext{f}(\mu)(n) = \begin{cases}
   \inr \star & \text{if }\qPi{k\le n}\mu(k) = \inr \star\\
    f(x)(n-k) & \text{if } \qSig{k\le n}(\mu(k) = \inl x) 
\end{cases}\]
Then we have that 
\[\Big(\klext{f}(\mu)(n) = \inl y\Big)\lequiv
\Big(\qSig{k\le n}\qSig{x{:}X}(\mu(n) = \inl x) \times f(x)(n-k) = \inl y\Big).\]
That is, $\klext{f}(\mu)$ takes the value $\inr\star$ for as long as $\mu$ does, and
then takes the value of $f(x)$, where $x$ is the unique value (if such exists) where
$\mu(n)=\inl x$ for some~$n$.

\begin{theorem}\label{thm:delay-monad}
    The maps $\now$ and $\klext{(-)}$ give $\Delay$ the structure of a monad.
\end{theorem}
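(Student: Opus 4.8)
The plan is to verify the three Kleisli laws
\[\klext{\now_X} = \id_{\Delay X},\qquad \klext{f}\comp\now = f,\qquad \klext{(\klext{g}\comp f)} = (\klext{g})\comp(\klext{f})\]
by direct pointwise computation, using function extensionality twice. First, by Theorem~\ref{thm:patheq} and the fact that $\isProp(\lar{\alpha})$ is always a proposition, two elements of $\Delay X = \qSig{\alpha:\nat\to X+\unittype}{\isProp(\lar{\alpha})}$ are equal as soon as their first components agree; and by Theorem~\ref{thm:funext-char} it suffices to compare those first components at each $n:\nat$, i.e.\ to compare values in $X+\unittype$. At each $n$ we may case-split on whether such a value is $\inl(-)$ or $\inr\star$ (coproducts have decidable injections), and the ``if'' clauses in the definition of $\klext{f}$ are legitimate since their guards $\qPi{k\le n}\mu(k)=\inr\star$ and $\qSig{k\le n}\qSig{x:X}\mu(k)=\inl x$ are decidable by bounded search. (One should also record, as in the paragraph preceding the theorem, that $\klext{f}(\mu)$ genuinely lands in $\Delay Y$, i.e.\ $\lar{\klext{f}(\mu)}$ is a proposition — this is the ``at most one defined index'' observation and is part of checking that $\klext{(-)}$ is well-typed.)

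For the first law, fix $\mu:\Delay X$ and $n$ and evaluate $\klext{\now}(\mu)(n)$. If the guard $\qPi{k\le n}\mu(k)=\inr\star$ holds then, taking $k=n$, both sides are $\inr\star$; otherwise there is $k\le n$ with $\mu(k)=\inl x$, and $\klext{\now}(\mu)(n)=\now(x)(n-k)$, which is $\inl x$ when $k=n$ (and then $\mu(n)=\inl x$) and $\inr\star$ when $k<n$ — and in the latter case $\isProp(\lar{\mu})$ forces $\mu(n)=\inr\star$, since $\mu$ can become defined at most once. So $\klext{\now}(\mu)(n)=\mu(n)$ in every case. For the second law, fix $x:X$ and $n$: since $\now(x)(0)=\inl x$ the first guard of $\klext{f}(\now(x))$ fails, and the witness $k=0$ gives $\klext{f}(\now(x))(n)=f(x)(n-0)=f(x)(n)$.

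The third law is the main obstacle, since it forces us to track two re-indexings simultaneously. Fix $\mu:\Delay X$ and $n$, and use the logical equivalence
\[\bigl(\klext{f}(\mu)(n)=\inl y\bigr)\lequiv\bigl(\qSig{k\le n}\qSig{x:X}(\mu(k)=\inl x)\times f(x)(n-k)=\inl y\bigr)\]
together with its analogue for $\klext{g}$ to organize a case split on whether each side of the claimed equation is $\inl(-)$ or $\inr\star$ at $n$. If $\klext{g}(\klext{f}(\mu))(n)=\inl z$, unwinding the two equivalences produces $k,j$ with $k+j\le n$, an $x:X$ with $\mu(k)=\inl x$, and a $y:Y$ with $f(x)(j)=\inl y$ and $g(y)(n-k-j)=\inl z$; feeding the same $k$ into the definition of $\klext{(\klext{g}\comp f)}$ and then the same $j$ into the definition of $(\klext{g}\comp f)(x)=\klext{g}(f(x))$ yields the same $z$, and the argument runs symmetrically in the reverse direction. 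When one side is $\inr\star$ at $n$, propositionality of $\lar{\mu}$, of $\lar{f(x)}$, and of the relevant $\lar{\klext{(-)}(-)}$ guarantees there is no defined index $\le n$ on that side, hence none on the other side either, so both are $\inr\star$. Combining the cases gives pointwise equality, and function extensionality concludes. (Alternatively one could appeal to the universal property of $\Delay X$ as a final coalgebra, but with the explicit formulas already in hand the direct verification above is the more economical route.)
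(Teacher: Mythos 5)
Your proposal is correct and follows essentially the same route as the paper: a concrete pointwise verification of the three Kleisli laws (the paper likewise forgoes the coinductive argument), with associativity handled by comparing the two $\Sigma$-type characterizations of ``$\inl z$ at index $n$'' and reindexing $(k,j)\mapsto(k+j,k)$, exactly as in the paper's equivalence of its types $A$ and $B$, followed by function and proposition extensionality. Your extra detail on the first two laws and on the $\inr\star$ cases is just a more explicit rendering of steps the paper compresses.
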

\begin{proof}
    Rather than giving the (probably simpler) coinductive proof, we argue concretely.
    As $\eta(x)(0) = \inl x$, we have that
    \[\klext{\eta}(x)(n) = \eta(x)(n-0).\]
    Similarly, we have 
    \[\klext{f}(\eta x)(n) = f(x).\]
    Expanding the definition of Kleisli extension, we see that
    $\klext{(\klext{g}f)}(\mu)(n) = \inl z$ precisely when
    \[\qSig{k\le n}\qSig{x{:}X}(\mu(k) = \inl x) \times
        \qSig{j\le n-k}\qSig{y{:}Y} \big(f(x)(j) = \inl y\big) \times
        \big(g(y)(n-k-j)=\inl z\big).\]
    Call this type $A$.
    Pictorially, we have that $\klext{(\klext{g}f)}(\mu)$ is
    \[\underbrace{\star,\ldots,\star}_{\mu},
        \underbrace{\star,\star,\ldots, \star,}_{f(x)\text{ where }\mu_k = x}
        \underbrace{\star,\ldots,\star, z,\star,\ldots, \star}_{g(y)\text{ where }f(x)_j = y}.\]
    In words: $\klext{\klext{g}f}(\mu)$ is composed of a sequence of $\inr\star$
    with possibly one element of the form $\inl z$. The terms of the sequence come
    from $\mu$, until it takes value $\inl x$ (for some $x$), and then from $f(x)$,
    until it takes value $\inl y$ (for some $y$), and finally from $g(y)$ thereafter.

    Similarly,
    we have that $\klext{g}(\klext{f}\mu)(n)=\inl z$ precisely if
    \[\qSig{k'\le n}\qSig{y:Y}\qSig{j'\le k'}\qSig{x:X}(\mu(j') = \inl x)
    \times ((fx)(k'-j') = \inl y) \times (g(y)(n-k') = \inl z).\]
    Call this type $B$.
    Pictorially, we have that $\klext{g}(\klext{f}\mu)(n)=\inl z$ is
    \[\underbrace{\star,\ldots,\star}_{\mu},
        \underbrace{\star,\star,\ldots, \star,}_{f(x)\text{ where }\mu_j = x}
        \underbrace{\star,\ldots,\star, z,\star,\ldots, \star}_{g(y)\text{ where }f(x)_k = y}.\]
    We see that these two types are equivalent, by manipulating $A$ as follows: First
    move all quantifiers to the front and rearrange, so that we have
    \[\qSig{k\le n}\qSig{y{:}Y}\qSig{j\le n-k}\qSig{x{:}X}(\mu(k) = \inl x) \times
         \big(f(x)(j) = \inl y\big) \times
        \big(g(y)(n-k-j)=\inl z\big).\]
    Next, given $k\le n$ and $j\le n-k$, define $j'=k$ and $k' = j+k$, so that $j'\le
    k'$ and $k'\le n$. So then $j'$ and $k'$ are of the correct form to satisfy $B$.
    That is, we have the map $A\to B$ defined by
    \[(k,x,-,j,y,-,-)\mapsto (j+k,y,k,x,-,-,-),\]
    with inverse
    \[(k',y,j',x,-,-,-)\mapsto (j',x,-,k'-j',y,-,-).\]
    So then $\klext{g}(\klext{f}\mu)(n)\lequiv\klext{(\klext{g}f)}(\mu)(n)$. Applying
    function and proposition extensionality, we get the Kleisli law.
\end{proof}

The goal of the delay monad is to provide a satisfying account of partiality in type
theory; however, the delay monad tracks intensional information---the
timer given by $\delay$---which makes equality on $\Delay(X)$ too strict. The
obvious solution is to attempt to quotient $\Delay(X)$ by some equivalence relation,
which is called \emph{weak bisimilarity}. Capretta showed that the quotient of the
delay monad by weak bisimilarity gives a monad in the category of
setoids~\cite{capretta2005recursion}. Later, Chapman, Uustalu and
Veltri~\cite{Chapman2015Delay} show that this quotient gives a monad in a type
theory with quotient inductive types assuming countable choice, so it seems that
Capretta's result relies on the fact that countable choice is satisfied by the setoid
model.

Weak bisimilarity is defined via a relation $-\downarrow-:DX\to X\to\univ$ of
\emph{convergence}, defined inductively by $\now (x)\downarrow x$ and if $d\downarrow
x$, then $\delay(d)\downarrow x$. 
We then define weak bisimilarity coinductively: if $d_1\downarrow x$ and
$d_2\downarrow x$, then $d_1\bisim d_2$ and if $d_1\bisim d_2$, then
$\delay(d_1)\bisim \delay(d_2)$.

We can define both convergence and weak bisimilarity concretely on our 
representation of $\Delay X$.
\begin{definition}
    For $d:\Delay X$ and $x:X$, we say that $d$ \emph{converges to} $x$ if
    $x$ is the unique value such that $d(n) = \inl x$. That is,
    \[d\downarrow x \defeq \qSig{n:\nat}\big(d(n) = \inl x\big).\]

    We say that elements $d_1$ and $d_2$ of $\Delay X$ are \emph{weakly bisimilar} if for all
    $x:X$ we have $d_1\downarrow x$ iff $d_2\downarrow x$. That is
    \[(d_1\bisim d_2) \defeq \qPi{x:X} (d_1\downarrow x)\lequiv (d_2\downarrow x).\]
\end{definition}
Following~\cite{Chapman2015Delay}, let $\overline{\Delay}(X) \defeq
\Delay(X)/{\bisim}$. They proved the following.
\index{choice!countable|(} 
\begin{theorem}[\cite{Chapman2015Delay}, Section 7]
    Assuming countable choice, $\overline{\Delay}(X)$ is a monad.
\end{theorem}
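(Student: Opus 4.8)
The plan is to transport the Kleisli-triple structure of $\Delay$ (Theorem~\ref{thm:delay-monad}) down to the quotient $\overline{\Delay}(X) \defeq \Delay(X)/{\bisim}$ along the quotient map $q_X:\Delay(X)\to\overline{\Delay}(X)$, and then to read the Kleisli laws off those of $\Delay$ by cancelling surjections. First one records the easy structural facts: $q_X$ is a surjection (the lemma that set-quotient maps are surjective), $\bisim$ is a $\Prop$-valued equivalence relation (each $d\downarrow x$ is a proposition since a delayed element contains at most one $\inl$, and $\lequiv$ and $\Pi$ of propositions are propositions under function extensionality), so $\overline{\Delay}(Y)$ is a set. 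The unit is then simply $\overline{\eta}_X \defeq q_X\comp\eta_X$, i.e. $x\mapsto [\now\, x]$.

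For Kleisli extension, first isolate from the formula for $\klext{(-)}$ on $\Delay$ a \emph{run} operation $\run:\Delay(X)\to(\nat\to\Delay(Y))\to\Delay(Y)$, where $\run(\mu,h)(n)$ is $\inr\star$ while $\mu$ has produced no $\inl$ yet, and is $h(k)(n-k)$ once $\mu(k)=\inl\, x$ for the unique such $k\le n$. It is arranged so that $\run(\mu,g_\mu)=\klext{g}(\mu)$ for $g:X\to\Delay(Y)$, where $g_\mu(n)\defeq g(x)$ if $\mu(n)=\inl\, x$ and $g_\mu(n)\defeq\bot$ (with $\bot\defeq\lambda n.\inr\star$) otherwise, and so that $\run$ sends pointwise weakly bisimilar sequences to weakly bisimilar outputs and is insensitive to delays in its first argument up to $\bisim$. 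Now, given $f:X\to\overline{\Delay}(Y)$ and $\mu:\Delay(X)$, form $\bar h_{f,\mu}:\nat\to\overline{\Delay}(Y)$ by $\bar h_{f,\mu}(n)=f(x)$ if $\mu(n)=\inl\, x$ and $\bar h_{f,\mu}(n)=[\bot]$ otherwise. Since $q_Y$ is surjective, $\trunc{\fib_{q_Y}(\bar h_{f,\mu}(n))}$ holds for every $n$, so \emph{countable choice} yields $\trunc{\qSig{h:\nat\to\Delay(Y)}{q_Y\comp h = \bar h_{f,\mu}}}$; for any such lift $h$ put the value $[\run(\mu,h)]$. Two lifts agree pointwise up to $\bisim$, so this value is independent of $h$, and by Theorem~\ref{thm:constant-functions-factor} the assignment factors through the truncation into the set $\overline{\Delay}(Y)$, giving $F_f:\Delay(X)\to\overline{\Delay}(Y)$. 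A case analysis on whether $\mu$ converges (the only substantive case being $\mu_1,\mu_2$ converging to the same value at possibly different stages, handled by the delay-insensitivity of $\run$) shows $\mu_1\bisim\mu_2\Rightarrow F_f(\mu_1)=F_f(\mu_2)$, so $F_f$ descends to $\klext{f}:\overline{\Delay}(X)\to\overline{\Delay}(Y)$ by the quotient recursion principle.

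The one explicit identity to pin down is the compatibility square $\klext{(q_Y\comp g)}\comp q_X = q_Y\comp\klext{g}$ for $g:X\to\Delay(Y)$: unfolding the left side at $[\mu]$, the sequence $g_\mu$ above is a legitimate lift of $\bar h_{q_Y\comp g,\mu}$, and $\run(\mu,g_\mu)=\klext{g}(\mu)$, so by independence of the chosen lift the square commutes. With this, the Kleisli laws follow from those of $\Delay$: from $\klext{\eta_X}=\id$ we get $\klext{\overline{\eta}_X}\comp q_X = q_X\comp\klext{\eta_X} = q_X = \id\comp q_X$, hence $\klext{\overline{\eta}_X}=\id$ since $q_X$ is epi; $\klext{f}([\now\, x])$ evaluates via $\run$ on $\now\, x$ to $[h(0)]$ with $h(0)$ a representative of $f(x)$, hence equals $f(x)$; and for associativity, evaluating $\klext{(\klext{g}\comp f)}$ and $\klext{g}\comp\klext{f}$ at $[\mu]$ reduces, after the divergent case (both sides $[\bot]$), to the convergent case, where $\run(\mu,-)$ merely returns its argument at the convergence stage delayed by that stage, so only the collapsing of delays under $\bisim$ is needed.

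The main obstacle is the construction of $\klext{(-)}$, not the verification of the laws. The temptation is to first lift $f:X\to\overline{\Delay}(Y)$ to a function $X\to\Delay(Y)$ and reduce everything to $\Delay$; but that would require choice over the arbitrary type $X$, which is not available. The key observation is that a given input $\mu:\Delay(X)$ only ever consults $f$ along an $\nat$-indexed family of values, so countable choice suffices—and then the bookkeeping is to check at each stage that the answer is invariant under the anonymous choice of representatives (using that $\overline{\Delay}(Y)$ is a set) and under $\bisim$ on the input, so that $\klext{f}$ is genuinely well defined on the quotient.
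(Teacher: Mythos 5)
Your proof is correct in outline, but it takes a genuinely different route from the thesis: the thesis never proves this statement directly — it quotes it from Chapman--Uustalu--Veltri — and its own machinery recovers it only indirectly, by showing (for $X$ a set) that $\overline{\Delay}(X)$ is equivalent to the Rosolini lifting $\Lift_{\isRos}(X)$ (Theorem~\ref{thm:qdelay-is-ros}), that countable choice implies the Rosolini propositions form a dominance (Theorem~\ref{rosolini:choice-to-dominance} and its corollary), and that a dominance yields a submonad of the general lifting (Theorem~\ref{dominance:monad}). That route buys a sharper statement: the choice needed is isolated as ``Rosolini choice,'' shown there to be both necessary and sufficient, whereas your direct argument consumes countable choice wholesale. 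What your argument buys is concreteness and independence from the dominance apparatus: the key observation — that $\klext{f}$ only consults $f$ along an $\nat$-indexed family of values of the input, so countable choice (rather than choice over all of $X$) suffices to pick representatives, with constancy into the set $\overline{\Delay}(Y)$ (Theorem~\ref{thm:constant-functions-factor}) erasing dependence on the chosen lifts — is exactly the CUV-style proof, and your compatibility square $\klext{(q_Y\comp g)}\comp q_X=q_Y\comp\klext{g}$ together with cancellation of the surjections $q$ is a clean way to import the unit laws from Theorem~\ref{thm:delay-monad}.

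Two small repairs. First, convergence of an element of $\Delay(X)$ is not decidable, so the well-definedness and associativity steps should not be phrased as a ``case analysis on whether $\mu$ converges''; instead argue from convergence witnesses (assume $\run(\mu_1,h_1)\downarrow z$, extract the stage at which $\mu_1$ hits $\inl$, transport along $\mu_1\bisim\mu_2$ and along the fact that the relevant $h_i$-values represent the same element of $\overline{\Delay}(Y)$) — this is what your own insensitivity-of-$\run$ remark amounts to, and it is constructively fine. Second, $\bisim$ is only $\Prop$-valued when $X$ is a set (paths in $X+\unittype$ must be propositions), which is the intended setting and matches the set hypothesis in Theorem~\ref{thm:qdelay-is-ros}; and note that associativity cannot be obtained purely by cancelling $q$ against associativity for $\Delay$, since $f$ and $g$ land in the quotient — it needs the same pointwise countable-choice lifting you used to define $\klext{f}$, applied twice, before the delay-collapsing argument finishes it.
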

We will see in Section~\ref{section:rosolini} that $\overline{\Delay}(X)$ is
equivalent to our \emph{Rosolini lifting}, and we will subsequently discuss
weakenings of countable choice that are sufficient to show that
$\overline{\Delay}(X)$ is a monad, culminating in a necessary and sufficient
weakening, Theorem~\ref{rosolini:choice-to-dominance}.
\index{choice!countable|)} 
\index{type!of natural numbers!extended|)}
\index{monad!delay|)}

\section{Partiality via higher-inductive types}\label{section:qiit}
Altenkirch, Danielsson and Kraus instead define a notion of partiality by giving a
higher inductive-inductive type~\cite{Altenkirch:Danielsson:Kraus}---simultaneous
defining a higher inductive type $X_\bot$ and a binary relation $\sqsubset$ on
$X_\bot$. Their goal was to give an extensional version of the delay monad which was
a monad even without countable choice. The type $X_\bot$ is essentially the free
$\omega$CPO over $X$.

\begin{definition}
    For each type $X$ define simultaneous the type $X_{\bot}:\univ$ and a binary relation
    ${\sle:X_{\bot}\to X_{\bot}\to\univ}$ inductively with constructors for~$X_{\bot}$
    \begin{itemize}
        \item a map $\eta:X\to X_{\bot}$,
        \item an element $\bot:X_{\bot}$,
        \item for each $s:\nat\to X_{\bot}$ and witness $p:\qPi{n:\nat}s_n\sle s_{n+1}$,
            an element $\sqcup(s,p):X_{\bot}$,
        \item a path constructor $(x\sle y)\to (y\sle x) \to x=y$, and
        \item a set truncation $\qPi{x,y:X_{\bot}}\qPi{p,q:x=y}p = q$;
    \end{itemize}
    and constructors for~$\sle$
    \begin{itemize}
        \item $x\sle x$ for each $x:X_{\bot}$,
        \item if $x\sle y$ and $y\sle z$ then $x\sle z$,
        \item $\bot\sle x$ for each $x:X_{\bot}$,
        \item for each $n:\nat$ we have $s_n \sle \sqcup(s,p)$,
        \item if $\qPi{n:\nat} s_n\sle x$ then $(\sqcup(s,p)\sle x)$.
    \end{itemize}
\end{definition}
In the paper, they show that $X_\bot$ behaves like the lifting in classical domain
theory, in the sense that $X_\bot$ is the initial $\omega$CPO that $X$ maps into
(Theorem 2) and
moreover $\eta$ is injective, $\eta(x)$ is maximal with respect to $\sle$ for each
$x:X$, and $\eta(x)\neq \bot$ for all $x:X$ (Corollary
8).

\section{Synthetic domain theory and synthetic computability theory}\label{section:sdt}
\index{synthetic domain theory|(}
Domain theory arose from a problem in denotational semantics: Languages with
non-termination cannot be directly modeled in the category of sets, since some
functions are partial. Since non-termination behaves in non-trivial ways, there are
limits on how functions can interact with the undefined value. To resolve this, Dana
Scott~\cite{scottdomains} introduced \emph{domains}, directed- or $\omega$-complete
partial orders with bottom. The bottom element represents an undefined value, and
maps between them are required to be monotone and preserve directed suprema. This
structure effectively captures the behavior of non-termination. However, Scott
wondered whether there was some alternative axiomatization of sets in which sets
naturally came equipped with the required structure. In his PhD
thesis~\cite{Rosolini1986}, Giuseppe Rosolini took up this question, giving a way to
access categories of partial maps from within a category. The basic idea, which we
will explore in more depth from a type-theoretic perspective in
Chapter~\ref{chapter:partial-functions}, is to isolate a subset $\Sigma\subseteq
\Omega$ of truth values in a topos, and consider the maps $X\to Y_\bot$, where
$Y_\bot$ is the set
\[ Y_\bot = \{A\in \Omega^Y\mid (\forall x,y. A(x)\to A(y)\to x=y)\wedge
(\exists x.A(x)\in \Sigma)\}.\]
In other words, $Y_\bot$ is the set of subsingleton subsets of $Y$ whose \emph{extent
of definition} is in $\Sigma$. In order for these \emph{$\Sigma$-partial functions}
to contain total functions, we need the true proposition to be in $\Sigma$. Moreover,
in order for the $\Sigma$-partial functions to form a category, we need to impose an
additional restriction that for all propositions $p,q\in\Omega$,
\[p\in\Sigma \wedge (p\to (q\in \Sigma))\to ((p\wedge q)\in \Sigma).\]
The two conditions make $\Sigma$ into a \emph{dominance}\index{dominance}.

Rosolini's approach was concrete: while he considered the internal language of a
topos, he worked with specific toposes in an informal metamathematics; in particular,
he examined the effective and recursive toposes, and considered \emph{effective
objects} in a topos. However, his work allowed synthetic axiomatizations, which were
pursued by a number of authors studying \emph{synthetic domain
theory} (See, for example~\cite{Hyland1991,VANOOSTEN2000,ReusStreicher1997}). The
idea is to pursue Scott's idea of an axiomatization of a category of domains.

Since dominances are a way to approach partial functions and effectiveness, the
notion features in Bauer's~\cite{bauer2000realizability,Bauer2006Synthetic} approach
to synthetic computability. Here, Bauer is axiomatizing a category of sets (the
\emph{modest sets}~\cite{hyland1982effective}) arising from realizability toposes.

In both synthetic domain theory and synthetic computability theory, the dominance of
interest is a dominance $\Sigma$ representing the semidecidable propositions, introduced also
by Rosolini. The dominance is given by
\[p\in\Sigma\Leftrightarrow \qExists{\alpha:\nat\to 2}p \leftrightarrow
\big(\qExists{n:\nat}\alpha(n)=1\big).\]
That is, the propositions in $\Sigma$ are those which, when true, are observably
true.

Most approaches to synthetic domain theory are topos-theoretic in nature, but it is
worth pointing to two (Reus and Streicher~\cite{ReusStreicher1997} and
Reus~\cite{Reus1999}) that use a type-theoretic axiomatization. We will briefly
describe the Approach by Reus here, and compare this approach to synthetic domain
theory to our approach to partiality in Section~\ref{section:sdt-comparison}. Reus
uses a version of the calculus of construction with an impredicative universe of
propositions, which are assumed to be subsingletons (i.e., propositions in our
sense), but note that there it is not required that all subsingletons are
propositions.  Reus also assumes a universe of \emph{sets} (which need not be sets in
our sense). Reus assumes function extensionality, but not proposition extensionality,
which is not needed in the development there. Additionally, in contrast to a
univalent setting, the principle of unique choice is not available in the calculus of
constructions, so it is assumed explicitly in Reus's development.  Rather than giving
a dominance, Reus's axiomatization of $\Sigma$ includes a number of principles that
relate $\Sigma$ to domain theory, without reference to partial functions. We explain
their axiomatization in Section~\ref{section:sdt-comparison}, where we have more
context for understanding it, but note that they include a version of Markov's
Principle (see Section~\ref{section:taboos}), as well as a version of Phoa's
Principle~\cite{taylor1991fixed}, both of which are not available in our setting.
\index{synthetic domain theory|)}

\section{Discussion}
The notion of partiality given by the delay monad is intensional. In particular, a
single partial function $f:X\pto Y$ has many representations as a function
$F:X\to\Delay(Y)$, based on how many applications of $\delay$ occur in the output of
$F$. Quotienting by bisimilarity resolves this issue at the cost of composability
of partial functions, which does not seem like a worthwhile cost.

The approach using quotient-inductive-inductive types resolves the issue without the
cost, but it doesn't seem to work for the purposes of Part~\ref{part:three}: we want it
to be possible to use our partial functions as an abstract representation of the
computable functions. Our first attempt (the Rosolini partial functions of
Chapter~\ref{chapter:partial-functions}) is too big for this---we will instead
consider a subtype of the Rosolini partial functions, the \emph{disciplined maps} of
Section~\ref{section:disciplined-maps}---but the QIIT
approach expands the set of Rosolini partial functions (compare Lemmas 13 and 14 from
\cite{Altenkirch:Danielsson:Kraus} with our Theorem~\ref{thm:qdelay-is-ros}).

The Kleisli category for the delay monad is morally an intensional version of the
category of $\Sigma$-partial maps from synthetic domain theory. The approach via
dominances puts this notion of partiality into a more general context; we transfer
the relevant parts of this context to the univalent setting in
Chapter~\ref{chapter:partial-functions}. In the synthetic approach, however, 
The core categories of interest is the effective topos, and its category of
\emph{effective objects}. The effective topos has enough structure that
we can show in its internal logic that $\Sigma$ is indeed a dominance, as well as
enough structure that we can show externally that the $\Sigma$-partial functions
capture the functions of interest (the computable or continuous functions). In the
more general situation that we consider, neither of these will be the case.
Chapter~\ref{chapter:partial-functions} is concerned with the problem of showing a
set of propositions to be a dominance, while
Chapter~\ref{chapter:comp-and-partiality} is concerned with the computability of maps
arising from our version of $\Sigma$, which we call the set of \emph{Rosolini propositions}.


\chapter{Partiality in Univalent Mathematics}\label{chapter:partial-functions}
We now turn in earnest to the notion of partial function. Recall that a motivating
goal is to find a notion of partial function for which it is consistent that all such
partial functions are computable. We expect such partial functions to include all
total functions, and to be composable, even in the absence of countable choice and
Markov's principle. A great deal of this chapter is concerned with
attempting to satisfy those requirements.

In Sections~\ref{section:single-valued-relations}-\ref{section:dcpos}, we examine the
class of all partial functions. In
Sections~\ref{section:dominances}-\ref{section:univ-dom} we attempt to restrict the
class of partial functions using \emph{dominances}
(Definition~\ref{def:dominance})---subsets of $\Prop$ satisfying certain closure
conditions which ensure that the partial functions just defined are composable (see
Theorems~\ref{dominance:composition} and~\ref{dominance:characterization}). We examine
in particular the $\Sigma_1$ propositions,
which we call \emph{Rosolini} propositions, and relate this to prior work on
partiality. A weakening of countable choice
is required to show that the partial functions arising from Rosolini propositions
are closed under composition (see Theorems~\ref{thm:cc-weaker},
\ref{rosolini:choice-to-dominance},
and~\ref{thm:rosolini-choice}). This fact generalizes
(Theorem~\ref{dominance:choice-to-dominance}) to any dominance arising
by truncation from \emph{structural dominances} (Definition~\ref{def:univ-dom}).
The consequence is that some amount of choice is required to restrict partial
functions via dominances. In Section~\ref{section:disciplined-maps}
we give a different approach using what we call \emph{disciplined maps}. This
approach still builds off of the theory of dominances. Notably, the disciplined maps
can be shown to compose without countable choice, and in the presence of countable
choice, the disciplined maps are exactly the Rosolini partial functions.
In Part~\ref{part:two}, we will discuss the relevance of this.

\section{Single-valued relations}\label{section:single-valued-relations}
\index{relation!single-valued|(}
\index{relation|see{predicate}}
We saw in Theorem~\ref{thm:functions-are-relations} that assuming univalence we have an
equivalence
\[(X\to Y)\simeq(\qSig{R:X\to Y\to\univ}{ \qPi{x:X}{\isContr(\qSig{y:Y}{R(x,y)})}}).\]
That is, we saw that a function is relation with a unique value for each $x:X$,
matching the classical intuition. Classically, the partial functions arise
by considering instead relations for which there is instead at most one $y:Y$ with
$R(x,y)$ for each $x:X$.
\begin{definition}
A relation $R:X\to Y
    \to\univ$ is \nameas{single-valued}{relation!single-valued} when
\[\qPi{x:X}{\isProp(\qSig{y:Y}{R(x,y)})}.\]
\end{definition}
Note that being single-valued is a proposition. As with the definition of functional
relation in Section~\ref{section:predicates}, the more traditional definition doesn't
work in general. In this case, the traditional definition would say that a relation is
single-valued when
\[\qPi{x:X}\qPi{y,y':Y}R(x,y)\to R(x,y')\to y=y'.\]
Again, if $Y$ is not a set, then $y=y'$ may have many witnesses, and so the above
type may have interesting structure. Instead, we want to say that a relation $R$ is
single-valued not only when all $y:Y$ such that $R(x,y)$ are equal, but when
there is at most one pair $(y,r)$ such that $r:R(x,y)$. In the case where $Y$ is a
set, the traditional definition is equivalent to the definition we give.

There is another traditional approach to the definition of partial function, more
common in category theoretic settings. A partial function $X\pto Y$ is a function
$f:A\to Y$ for some $A$ which embeds into $X$. We can then represent partial
functions
$X\pto Y$ via two separate types in the next universe: via the type
\begin{equation}\label{rel-pto}
\qPi{x:X}{\isProp(\qSig{y:Y}{R(x,y)})},
\end{equation}
or via the type
\begin{equation}\label{emb-pto}
\qSig{A:\univ}{\qEmpty{e:A\to X}{\isEmbedding(e)\times (A\to Y)}}.
\end{equation}
We have maps between them as follows: Given a single-valued relation $R:X\to
Y\to\univ$, define
$A:\univ$ by
\[A \defeq \qSig{x:X}\qSig{y:Y} R(x,y),\]
with $e:A\to X$ the first projection and $f:A\to Y$ the second projection. We need to
see that for any $x:X$ the type
\[\qSig{(x',y,r):\qSig{x:X}\qEmpty{y':Y}R(x',y)}x'=x\]
is a proposition. By Lemma~\ref{lemma:path-families}, this type is equivalent to
\[\qSig{y:Y}R(x,y),\]
which is a proposition since $R$ is single-valued. This gives us a function
$F: (\ref{rel-pto})\to (\ref{emb-pto})$,
\[F(R,-) \defeq (\qSig{x:X}\qSig{y:Y}R(x,y),\pr_0,-,\pr_1).\]
Conversely, given an embedding $e:A\hookrightarrow X$ with a map $f:A\to Y$,
define $R:X\to Y\to\univ$ by
\[R(x,y) \defeq \qSig{a:A}(e(a) =x) \times (f(a) = y).\]
We need to see that for any $x$, the type
\[\qSig{y:Y}\qSig{a:A}(e(a)=x)\times (f(a)=y)\]
is a proposition. By reshuffling, this type is equivalent to
\[\qSig{a:A}(e(a)=x)\times \qSig{y:Y}f(a)=y,\]
and since $\qSig{y:Y}f(a)=y$ is the singleton at $f(a)$, this type is in turn
equivalent to
\[\qSig{a:A}e(a)=x,\]
which is a proposition since $e$ is an embedding. This gives us a function
$G: (\ref{emb-pto})\to (\ref{rel-pto})$,
\[G(A,e,-,f) = \Big(\big(\qSig{a:A}(e(a) =x) \times (f(a) = y)\big), -\Big).\]
We want $G$ and $F$ to be equivalences, but they are only equivalences up to
equivalence. That is, we have the following
\begin{lemma}\label{ptl-functions:classical-categorical}
    For any single-valued $R:X\to Y \to \univ$, we have that the first projection of
    $G(F(R,-))$ is equivalent to $R$. Similarly, For any $e:A\hookrightarrow X$ and
    $f:A\to Y$ we have an equivalence between the type $A' \defeq \pr_0(F(G(A)))$ and
    $A$. Hence, in the presence of univalence, the type of
    single-valued relations is equivalent to the type of functions from a subtype of
    $X$ to $Y$.
\end{lemma}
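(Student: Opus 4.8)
The plan is to unfold the definitions of $F$ and $G$, reduce each composite $G\comp F$ and $F\comp G$ to the identity up to a chain of ``reshufflings'' of nested $\Sigma$-types and contractions of singleton types, and then invoke univalence (together with function and proposition extensionality) to promote those equivalences to genuine equalities. The propositional components---being single-valued for $R$, being an embedding for $e$---will need no attention, since any two inhabitants of a proposition agree.

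First I would compute $\pr_0(G(F(R,-)))$. The carrier of $F(R,-)$ is $A_R\defeq\qSig{x:X}\qSig{y:Y}R(x,y)$, with first projection $e_R=\pr_0$ and ``$y$''-projection $f_R$, so $\pr_0(G(F(R,-)))$ is the relation $R'(x,y)\jeq\qSig{(x',y',r'):A_R}(x'=x)\times(y'=y)$. Reshuffling the $\Sigma$'s rewrites this as $\qSig{x':X}(x'=x)\times\qSig{y':Y}\big((y'=y)\times R(x',y')\big)$; by Lemma~\ref{lemma:path-families-strong} (applied with $a\jeq y$ and the family $\lambda y'.\,\lambda(p:y'=y).\,R(x',y')$, extended fiberwise over $x'$ via Theorem~\ref{thm:fiberwise-equivalence}) the inner factor becomes $R(x',y)$, and a second application of Lemma~\ref{lemma:path-families-strong} to $\qSig{x':X}(x'=x)\times R(x',y)$ yields $R(x,y)$. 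This establishes the first claimed equivalence $R'(x,y)\simeq R(x,y)$, uniformly in $x$ and $y$.

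Next I would compute $A'\defeq\pr_0(F(G(A,e,-,f)))$. Writing $R_A(x,y)\defeq\qSig{a:A}(e(a)=x)\times(f(a)=y)$ for the relation produced by $G$, the carrier of $F(R_A,-)$ is $A'\jeq\qSig{x:X}\qSig{y:Y}\qSig{a:A}(e(a)=x)\times(f(a)=y)$. Bringing $\qSig{a:A}$ outermost gives $\qSig{a:A}\big(\qSig{x:X}e(a)=x\big)\times\big(\qSig{y:Y}f(a)=y\big)$, whose two inner factors are singletons, contractible by Lemma~\ref{lemma:singleton-contractible}; hence $A'\simeq\qSig{a:A}\unittype\simeq A$, with the equivalence $A'\to A$ given by $(x,y,a,p,q)\mapsto a$. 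I would also record the (immediate) fact that the paths $p:e(a)=x$ and $q:f(a)=y$ exhibit that, along this equivalence, the first projection $A'\to X$ corresponds to $e$ and the ``$y$''-projection $A'\to Y$ corresponds to $f$.

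Finally, to conclude: for single-valued $R$, function extensionality and univalence upgrade the fiberwise $R'(x,y)\simeq R(x,y)$ to $R'=R$, and since single-valuedness is a proposition this gives $G(F(R,-))=(R,-)$ in type~(\ref{rel-pto}); dually, univalence applied to $A'\simeq A$ together with the compatibility just recorded (plus function extensionality and the fact that being an embedding is a proposition) gives $F(G(A,e,-,f))=(A,e,-,f)$ in type~(\ref{emb-pto}). Thus $F$ and $G$ are mutually inverse, so type~(\ref{rel-pto}) is equivalent to type~(\ref{emb-pto}), the type of functions from a subtype of $X$ to $Y$. The main obstacle I anticipate is precisely this last paragraph rather than any single reshuffling: to obtain an equality in~(\ref{emb-pto}) rather than merely an equality of carrier types, one must transport the structure maps $e$ and $f$ along $\ua$ of the carrier equivalence and verify they match the projections of $A'$, which is exactly the compatibility set aside above.
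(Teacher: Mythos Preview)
Your proposal is correct and follows essentially the same approach as the paper: both parts are handled by reshuffling nested $\Sigma$-types and contracting singleton factors, with the paper citing Lemma~\ref{lemma:path-families} where you cite the slightly more general Lemma~\ref{lemma:path-families-strong}. Your final paragraph on transporting the structure maps $e$ and $f$ along $\ua$ is more explicit than the paper, which leaves the ``Hence'' clause to the reader.
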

\begin{proof}
    Let $R:X\to Y\to\univ$ be single-valued. We need to see that for any $x:X$ and
    $y:Y$, the type
    \[\qSig{(x',y',r)\qSig{x':X}\qEmpty{y':Y}R(x,y)}x'=x\times y'=y\]
    is equivalent to $R(x,y)$, which follows by two applications of
    Lemma~\ref{lemma:path-families}.

    Conversely, let $e:A\hookrightarrow X$ and $f:A\to Y$. We need to see that $A$ is
    equivalent to the type
    \[\qSig{x:X}\qSig{y:Y}\qSig{a:A}\big(e(a) = x\big)\times \big(f(a) = y\big).\]
    This type reshuffles to
    \[\qSig{a:A}(\qSig{x:X}e(a) = x)\times(\qSig{y:Y} f(a) = y),\]
    which is a sum of contractible types over $A$.
\end{proof}
\index{relation!single-valued|)}

\section{Partial elements}\label{section:ptl-elems}
\index{monad!lifting|(}
We will take a third approach to partial functions, via partial elements. By analogy
with the equivalence $Y\simeq (\unittype\to Y)$, we expect a partial element of $A$
to correspond to a partial function $\unittype\pto A$. Taking partial functions
as relations, this would give us the type
\[\qSig{A:Y\to \univ}{\isProp(\qSig{y:Y}{Ay})}.\]
We could instead define partial elements directly as follows.
\begin{definition}\label{def:lifting}
    For a type $X$, the \nameas{lift}{monad!lifting}
    \index{monad!lifting|seealso {partial element; partial functions}}
    \index{lifting|see {monad, lifting}}
    of $X$, or the type of \nameas{partial elements}{partial element}
    \index{type!of partial elements|see {partial element}}
    of $X$, is the type
    \definesymbolfor{$\Lift$}{Lift}{monad!lifting}
    \[\Lift(X)\defeq \qSig{P:\univ}{\isProp(P)\times (P\to X)}.\]
    The first component of a partial element is called its \emph{extent of
    definition}
    \indexandabbr{partial element!extent of definition of}{extent of definition} and the
    third component is its
    \nameas{value}{partial element!value of}
    \index{value|see {partial element, value of}}; so we have maps
    \begin{eqnarray*}
    \extent & : & \Lift Y \to \U, \\
        \val & : & \qPi{u : \Lift Y}{\extent u \to Y}.
    \end{eqnarray*}
    A partial element $y:\Lift Y$ is \nameas{defined}{partial element!defined} or
    \emph{total}\index{partial element!total|see {partial element, defined}} if $\extent(y)$.
    For any type $Y$, there is a unique \emph{undefined} element $\bot:\Lift Y$ with extent
    of definition $\zerotype$.

    A \nameas{partial function from $X$ to $Y$}{partial functions} is a function $f:X\to \Lift Y$.
    If $f:X\to\Lift Y$, we will sometimes write $f_e\defeq\extent\comp f$ and
    $f_v\defeq\val\comp f$.
\end{definition}
To see the comparison with partial elements as partial functions from $\unittype$,
write
\[\Lift'(Y) \defeq \qSig{A:Y\to\univ}{\isProp(\qSig{y:Y}{A(y)})}.\]
We have a map
\begin{align*}
    \rel &: \Lift(Y)\to\Lift'(Y) \\
    \rel(P,w,\varphi) &\defeq \big((\lambda y. \qSig{p:P}{\varphi(p) = y})\, ,\,
    w\big)
\end{align*}
where $w:\isProp(\qSig{y:Y}{\qEmpty{p:P}{\varphi(p)=y}})$ is constructed by observing that
this is equivalent to $\isProp(\qSig{p:P}{\qEmpty{y:Y}{\varphi(p)=y}})$, and this is a sum of a
contractible type over a proposition. Similarly, we have
\begin{align*}
    \ele &: \Lift'(Y)\to\Lift(Y) \\
    \ele(A,w) &\defeq \big((\qSig{y:Y}{A(y)})\, ,\, w \, ,\, \pr_0\big).
\end{align*}
\begin{lemma}
    $\rel$ is a section of $\ele$, and for any $A:Y\to\univ$ with
    $w:\isProp(\qSig{y:Y}{A(y)})$ and $y:Y$ the type $A(y)$ is equivalent to the
    first component of $\rel(\ele(A,w))$ applied to $y$.
\end{lemma}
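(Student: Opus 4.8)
The plan is to unfold both composites explicitly and check the two required claims separately, since the statement has two parts: that $\rel$ is a section of $\ele$ (i.e.\ $\ele \comp \rel \htpy \id$), and that for any $(A,w)$ the predicate $A$ is recovered pointwise (up to equivalence) from $\rel(\ele(A,w))$. I would handle the second claim first, as it is the one that does real work; the first will then follow by a short computation, much of it definitional.

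For the pointwise claim, fix $(A,w):\Lift'(Y)$ with $w:\isProp(\qSig{y:Y}A(y))$ and fix $y:Y$. Unfolding, $\ele(A,w) = \big(\qSig{y':Y}A(y'),\, w,\, \pr_0\big)$, so the first component of $\rel(\ele(A,w))$ applied to $y$ is the type
\[\qSig{p:\qSig{y':Y}A(y')}\pr_0(p) = y.\]
I would reshuffle this: it is equivalent to $\qSig{y':Y}\qSig{a:A(y')}(y' = y)$, which by Lemma~\ref{lemma:path-families} (applied to the family $A$ at the point $y$) is equivalent to $A(y)$. This is the heart of the argument, and I expect it to be the main obstacle only in the bookkeeping sense --- one has to be careful that the reshuffling equivalence composes correctly with the instance of Lemma~\ref{lemma:path-families}, whose statement is exactly $B(a)\simeq \qSig{(a',b'):\qSig{a':A}B(a')}a'=a$. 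Here the roles are played by $A\mapsto Y$, $B\mapsto A$, $a\mapsto y$, so the match is direct.

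For the section claim, I would compute $\ele(\rel(P,w,\varphi))$ for an arbitrary partial element $(P,w,\varphi):\Lift(Y)$. We have $\rel(P,w,\varphi) = \big(\lambda y.\qSig{p:P}\varphi(p)=y,\; w'\big)$ where $w'$ witnesses that $\qSig{y:Y}\qSig{p:P}\varphi(p)=y$ is a proposition (constructed exactly as in the definition of $\rel$). Then
\[\ele(\rel(P,w,\varphi)) = \Big(\qSig{y:Y}\qSig{p:P}\varphi(p)=y,\; w',\; \pr_0\Big).\]
I would observe that $\qSig{y:Y}\qSig{p:P}\varphi(p)=y \simeq \qSig{p:P}\qSig{y:Y}\varphi(p)=y \simeq \qSig{p:P}\unittype \simeq P$, using that $\qSig{y:Y}\varphi(p)=y$ is the singleton based at $\varphi(p)$ (Lemma~\ref{lemma:singleton-contractible}), and that this equivalence composes with $\pr_0$ to give back $\varphi$ up to homotopy. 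Since the extent-of-definition components are propositions, any identification of the underlying types that is compatible with the value maps gives an equality of partial elements (using proposition extensionality on the propositional third-slot data and function extensionality for the value maps, as is done silently throughout the chapter). Assembling these gives $\ele(\rel(P,w,\varphi)) = (P,w,\varphi)$, completing the proof; I would present this compactly rather than grinding through the transport computations, in keeping with the paper's style of suppressing proposition-valued components.
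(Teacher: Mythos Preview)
Your proof is correct and follows essentially the same route as the paper: both arguments unfold $\rel(\ele(A,w))$ to a sum over a singleton-shaped family and reduce it to $A(y)$ (you cite Lemma~\ref{lemma:path-families} where the paper spells out the transport-based equivalence explicitly), and both reduce the extent of $\ele(\rel(P,w,\varphi))$ to $P$ via singleton contractibility. If anything you are slightly more careful than the paper in checking that the value map transports back to $\varphi$, which the paper leaves implicit in the phrase ``since $P$ is a proposition.''
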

\begin{proof}
    Let $(A,w):\Lift'(Y)$. The relation defined by $\rel(\ele(A,w))$ is
    \[R(y) = \sum_{(y',a):\qSig{y':Y}{A(y')}}(y=y').\]
    Re-associating, this is equivalent to
    \[\qSig{y':Y}{A(y')\times (y=y')}.\]
    This in turn is equivalent to $A(y)$ by the map
    $(y',a',p) \mapsto \transport(p^{-1},a')$ with inverse mapping $a$ to $(y,a,\refl)$.

    Now let $x \defeq (P,w,\varphi):\Lift(Y)$. We need only see that the extent of
    definition of
    $\ele(\rel(x))$ is equivalent to $P$, since $P$ is a proposition. We have
    \[\extent(\ele(\rel(x))) = \qSig{y:Y}{\qEmpty{p:P}{(\varphi(p)=y)}}.\]
    Re-associating as above, this type is equivalent to $P\times 1$.
\end{proof}
Note that without univalence the equivalence 
    \[ A y \simeq \qSig{(y',a): \qSig{y':Y}Ay'}(y'=y),\]
is insufficient to prove that $\ele$ is an equivalence, even in the presence of proposition
extensionality. While $\qSig{y:Y}{Ay}$ is a proposition, we don't know in general
that $A(y)$ is.  In the case that $Y$ is a set, however, we can show that $A(y)$ must
be a proposition, by Lemma~\ref{lemma:propositional-families-have-propositional-sums}.
Then we have the following.
\begin{lemma}\label{lemma:set-lifting}
    For any set $Y$, the functions $\rel$ and $\ele$ determine an equivalence between the
    types $\Lift(Y)$ and $\Lift'(Y)$.
\end{lemma}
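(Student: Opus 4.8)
The plan is to leverage the preceding lemma, which already does most of the work. That lemma gives us two facts: first, that $\rel$ is a section of $\ele$, i.e.\ $\ele\comp\rel\htpy\id_{\Lift(Y)}$; and second, that for any $(A,w):\Lift'(Y)$ and any $y:Y$ the type $A(y)$ is equivalent to $(\pr_0(\rel(\ele(A,w))))(y)$. So the only thing left to establish is that $\rel\comp\ele\htpy\id_{\Lift'(Y)}$, and for this we will crucially use that $Y$ is a set.

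First I would fix $(A,w):\Lift'(Y)$ and write $(R,w')\defeq\rel(\ele(A,w))$, so that $w:\isProp(\qSig{y:Y}A(y))$ and $w':\isProp(\qSig{y:Y}R(y))$. Since $Y$ is a set and $\qSig{y:Y}A(y)$ is a proposition, Lemma~\ref{lemma:propositional-families-have-propositional-sums} gives $\qPi{y:Y}\isProp(A(y))$; the same argument applied to $w'$ gives $\qPi{y:Y}\isProp(R(y))$. By the preceding lemma we have $A(y)\simeq R(y)$ for each $y$, hence in particular $A(y)\lequiv R(y)$; since both are propositions, proposition extensionality yields $A(y)=R(y)$ for every $y:Y$, and then function extensionality gives $A=R$.

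It then remains to identify the second components. Having $A=R$, transport $w$ along this path into the type $\isProp(\qSig{y:Y}R(y))$; the result and $w'$ both inhabit this type, which is a proposition, so they are equal. By Theorem~\ref{thm:patheq} this assembles into a path $(A,w)=(R,w')=\rel(\ele(A,w))$, so $\rel\comp\ele\htpy\id_{\Lift'(Y)}$. Combined with $\ele\comp\rel\htpy\id_{\Lift(Y)}$ from the preceding lemma, the map $\ele$ has both a left and a right inverse given by $\rel$, so it is an equivalence; symmetrically so is $\rel$, and they witness $\Lift(Y)\simeq\Lift'(Y)$.

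I do not expect any serious obstacle here: the content is entirely the observation that being a set forces $A$ to be proposition-valued, after which proposition and function extensionality close the gap that univalence would otherwise be needed for. The only point requiring a little care is the bookkeeping around the propositional second components, handled by transport into a proposition as above.
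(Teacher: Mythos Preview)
Your proposal is correct and follows essentially the same route as the paper: use the preceding lemma for one half of the equivalence and for the pointwise equivalence $A(y)\simeq R(y)$, invoke Lemma~\ref{lemma:propositional-families-have-propositional-sums} (using that $Y$ is a set) to see each $A(y)$ is a proposition, then apply proposition and function extensionality to identify the first components, with the second components matching automatically as inhabitants of a proposition. If anything, you are slightly more careful than the paper in also noting that $R(y)$ is a proposition before applying proposition extensionality.
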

\begin{proof}
    We have already seen that $\rel$ is a section of $\ele$, so we need to see that
    $\ele$ is a section of $\rel$. Fix $(A,w):\Lift'(Y)$
    By function extensionality, and since the second component of $\Lift'(Y)$ is a
    proposition, it suffices to turn the equivalence $A(y) \simeq
    \pr_0(\rel(\ele(A,w)))(y)$ into an equality. As $Y$ is a set and $\qSig{y:Y}A(y)$
    is a proposition by assumption, we have by
    Lemma~\ref{lemma:propositional-families-have-propositional-sums} that $A(y)$ is a
    proposition. Then since $A(y)\simeq \pr_0(\rel(\ele(A,w)))(y)$, we may apply
    proposition extensionality to get the desired equality $A(y)=
    \pr_0(\rel(\ele(A,w)))(y)$.
\end{proof}
By considering maps into $\Lift'(Y)$ and $\Lift(Y)$, we can collect
Lemmas~\ref{lemma:set-lifting} and~\ref{ptl-functions:classical-categorical} into the
following theorem.
\begin{theorem}\label{thm:ptl-fxn-char}
    For any type $X$ and any set $Y$, the
    following types are equivalent:
    \begin{enumerate}[label=(\roman*)]
        \item $\qSig{R:X\to Y\to \univ}\qPi{x:X}{\isProp(\qSig{y:Y}{R(x,y)})}$ \label{lift:element}
        \item $\qSig{A:\univ}{\qEmpty{e:A\to X}{\isEmbedding(e)\times (A\to Y)}}$ \label{lift:subtype}
        \item $X \to \Lift(Y)$.\label{lift:partial}
    \end{enumerate}
    Moreover, in the presence of univalence, we may drop the condition that $Y$ is a set.
\end{theorem}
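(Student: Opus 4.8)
The plan is to establish the equivalences $\ref{lift:element}\simeq\ref{lift:partial}$ and $\ref{lift:subtype}\simeq\ref{lift:partial}$ by transporting the pointwise comparisons already built in this section along exponentiation by $X$. First I would note that $\ref{lift:element}$ is literally $X\to\Lift'(Y)$ after reshuffling quantifiers: a term of $\ref{lift:element}$ is a pair $(R,w)$ with $R:X\to Y\to\univ$ and $w:\qPi{x:X}\isProp(\qSig{y:Y}R(x,y))$, which by the type-theoretic axiom of choice (Theorem~\ref{thm:ttchoice}) is equivalent to $\qPi{x:X}\qSig{A:Y\to\univ}\isProp(\qSig{y:Y}A(y))$, i.e.\ $X\to\Lift'(Y)$. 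Then by Lemma~\ref{lemma:set-lifting}, since $Y$ is a set we have an equivalence $\Lift'(Y)\simeq\Lift(Y)$; postcomposition with an equivalence is an equivalence (this is a routine consequence of function extensionality and the closure of $\isEquiv$ under the relevant constructions, all available by the assumptions collected at the end of Section~\ref{section:univalence}), so $(X\to\Lift'(Y))\simeq(X\to\Lift(Y))$. Composing these gives $\ref{lift:element}\simeq\ref{lift:partial}$.

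For $\ref{lift:subtype}\simeq\ref{lift:partial}$, I would go via $\ref{lift:element}$ using Lemma~\ref{ptl-functions:classical-categorical}: that lemma gives maps $F:\ref{lift:element}\to\ref{lift:subtype}$ and $G:\ref{lift:subtype}\to\ref{lift:element}$ together with the data that $\pr_0(G(F(R,-)))$ is pointwise equivalent to $R$ and $\pr_0(F(G(A)))$ is equivalent (as a type) to $A$. The subtlety is that Lemma~\ref{ptl-functions:classical-categorical} as stated already needs univalence to conclude these pointwise/objectwise equivalences are equalities and hence that $F$ and $G$ are mutually inverse; in the set case we want to avoid univalence. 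But here we can instead route the argument entirely through $\Lift$: a term of $\ref{lift:subtype}$ is $(A,e,\_,f)$ with $e:A\to X$ an embedding and $f:A\to Y$, and sending $x\mapsto (\fib_e(x), \val\text{-of-}f)$ — more precisely $x\mapsto(\fib_e(x),\;\text{prop-ness of }\fib_e(x),\;f\circ\pr_0)$ — defines a map $\ref{lift:subtype}\to(X\to\Lift(Y))$. Conversely, from $g:X\to\Lift(Y)$ take $A\defeq\qSig{x:X}g_e(x)$ with first projection $e$ (an embedding because each $g_e(x)$ is a proposition, by the analysis of $\pr_0$-fibers used repeatedly above) and $f\defeq\lambda(x,p).g_v(x,p)$. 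One checks these are mutually inverse: the round trip $\ref{lift:subtype}\to(X\to\Lift Y)\to\ref{lift:subtype}$ recovers $A$ up to the equivalence $A\simeq\qSig{x:X}\fib_e(x)$ from Lemma~\ref{lemma:domain-is-fiber}, and the round trip on $X\to\Lift(Y)$ is the identity up to function extensionality, proposition extensionality and the fact that $g_e(x)\simeq\qSig{p:g_e(x)}\unittype$.

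The main obstacle is exactly the last point: making the $\ref{lift:subtype}$ side work \emph{without} univalence, i.e.\ showing the round-trip through subtypes genuinely returns the original $(A,e,f)$ rather than just an equivalent one. Here I would lean on Lemma~\ref{lemma:domain-is-fiber} to get $A\simeq\qSig{x:X}\fib_e(x)$ compatibly with $e$, and then on proposition extensionality and function extensionality to turn the pointwise equivalence $\fib_e(x)\simeq\fib_{e'}(x)$ of \emph{propositions} into an equality of the corresponding terms of $X\to\Prop$, hence an equality in $\Lift$. The reason $Y$ being a set suffices is that it forces $A(y)$ (respectively $R(x,y)$) to be a proposition via Lemma~\ref{lemma:propositional-families-have-propositional-sums}, which is precisely what Lemma~\ref{lemma:set-lifting} already packaged. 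Dropping the hypothesis that $Y$ is a set: with univalence in hand, Lemma~\ref{lemma:set-lifting} is no longer needed because $\rel$ and $\ele$ become an equivalence $\Lift(Y)\simeq\Lift'(Y)$ for arbitrary $Y$ (the pointwise equivalences $A(y)\simeq\pr_0(\rel(\ele(A,w)))(y)$ upgrade to equalities by univalence), and the $\ref{lift:subtype}$-to-$\ref{lift:element}$ correspondence of Lemma~\ref{ptl-functions:classical-categorical} becomes a genuine equivalence for the same reason; exponentiating by $X$ then finishes as before. I would present the set-case argument first and then remark that each use of Lemma~\ref{lemma:set-lifting} or of the set hypothesis is replaced by the univalent strengthening.
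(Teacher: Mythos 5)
Your treatment of the equivalence between \ref{lift:element} and \ref{lift:partial} is exactly the paper's: the paper proves the theorem by ``collecting'' Lemma~\ref{lemma:set-lifting} and Lemma~\ref{ptl-functions:classical-categorical}, i.e.\ by viewing \ref{lift:element} as $X\to\Lift'(Y)$ (your Theorem~\ref{thm:ttchoice} reshuffle), applying Lemma~\ref{lemma:set-lifting} pointwise, and postcomposing; your remark about dropping the set hypothesis under univalence also matches the paper's intent, since univalence is what upgrades $\rel$ and $\ele$ to an equivalence for arbitrary $Y$.

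The gap is in your claim to obtain \ref{lift:subtype}$\,\simeq\,$\ref{lift:partial} \emph{without} univalence when $Y$ is a set. The round trip you must close on the \ref{lift:subtype} side sends $(A,e,w,f)$ to $\big(\qSig{x:X}\fib_e(x),\ \pr_0,\ \ldots\big)$, and identifying this with $(A,e,w,f)$ in the $\Sigma$-type requires a path $A=\qSig{x:X}\fib_e(x)$ in $\univ$, over which $e$ and $f$ transport correctly. Your proposed fix---proposition extensionality plus function extensionality applied to the pointwise equivalences of fibers---only yields an equality of predicates $X\to\Prop$, i.e.\ it closes the round trip on the $X\to\Lift(Y)$ side (which is precisely what Lemma~\ref{lemma:set-lifting} already packages); it cannot produce a universe-level path between $A$ and the total space of fibers, because $A$ is not a proposition, and the hypothesis that $Y$ is a set is irrelevant to that identification, which concerns only the domain types. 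This is exactly where the paper stops short: Section~\ref{section:single-valued-relations} notes that $F$ and $G$ are ``only equivalences up to equivalence,'' and Lemma~\ref{ptl-functions:classical-categorical} asserts a genuine equivalence with the subtype presentation only in the presence of univalence; the theorem's clause \ref{lift:subtype} is the part carried by that lemma, with the set hypothesis buying only the \ref{lift:element}/\ref{lift:partial} comparison. So you should either route clause \ref{lift:subtype} through univalence as the paper does, or weaken your claim for that round trip to ``the original datum is recovered up to an equivalence of domains commuting with $e$ and $f$'' rather than up to equality.
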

We have an embedding 
\index{eta@{$\eta$}!unit of the lifting monad|textit}
\[\eta:Y\to\Lift Y\]
given by 
\[\eta(y)\defeq (\unittype,-,\lambda u.y).\]
\begin{lemma}
    The map $\eta:Y\to \Lift Y$ is an embedding.
\end{lemma}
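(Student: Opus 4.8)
The plan is to show directly that $\eta$ has propositional fibres, which by definition (Section~\ref{section:embeddings}) is exactly what it means for $\eta$ to be an embedding. Fix $z : \Lift Y$. By Lemma~\ref{lemma:prop-to-prop} it is enough to construct a map $\fib_{\eta}(z) \to \isProp(\fib_{\eta}(z))$, so suppose $(y,\sigma) : \fib_{\eta}(z)$, i.e.\ $\sigma : \eta(y) = z$. Post-composing paths with $\sigma^{-1}$ (using the groupoid laws of Lemma~\ref{lemma:path-concat}) gives an equivalence $\fib_{\eta}(z) \simeq \fib_{\eta}(\eta(y))$, so it suffices to prove that $\fib_{\eta}(\eta(y))$ is contractible for every $y : Y$; contractibility of $\fib_{\eta}(\eta(y))$ then transports back to give $\isProp(\fib_{\eta}(z))$, as required.

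The heart of the argument is to analyse the identity type $\eta(y') = \eta(y)$ for $y' : Y$, where $\eta(y') = (\unittype, i, \lambda u.y')$ and $\eta(y) = (\unittype, i, \lambda u.y)$ with $i : \isProp(\unittype)$. By Theorem~\ref{thm:patheq} this type is equivalent to
\[ \qSig{q : \unittype = \unittype} \transport^{B}\big(q, (i, \lambda u.y')\big) = (i, \lambda u.y), \qquad B \defeq \lambda P.\isProp(P)\times(P\to Y). \]
Since $\unittype$ is contractible it is a proposition, hence a $(-1)$-type by Lemma~\ref{lemma:prop-is-one-type}, so $\unittype = \unittype$ is contractible and the displayed $\Sigma$-type is equivalent to its fibre over $\refl$, namely $(i, \lambda u.y') = (i, \lambda u.y)$. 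Applying Theorem~\ref{thm:patheq} once more, and using that $\isProp(\unittype)$ is a proposition (Theorem~\ref{thm:prop-is-prop}, via function extensionality) so that $i = i$ is contractible, this reduces to $(\lambda u.y') = (\lambda u.y)$ in $\unittype\to Y$. Function extensionality (Theorem~\ref{thm:funext-char}.\ref{funext:funext}) turns this into $\qPi{u:\unittype} y' = y$, which is equivalent to $y' = y$ by evaluation at $\star$, since $\unittype$ is contractible. Thus $(\eta(y') = \eta(y)) \simeq (y' = y)$ for every $y'$.

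To finish, I would apply Theorem~\ref{thm:fiberwise-equivalence} to this family of fibrewise equivalences: the induced map on total spaces $\qSig{y':Y}(\eta(y') = \eta(y)) \to \qSig{y':Y}(y' = y)$ is then an equivalence. Its codomain is the singleton at $y$, which is contractible by Lemma~\ref{lemma:singleton-contractible}, so its domain $\fib_{\eta}(\eta(y))$ is contractible as well (an equivalence is in particular a retraction, and retracts of contractible types are contractible by Theorem~\ref{thm:prop-retract-closed}). This completes the proof. No step is genuinely hard; the only place demanding care is the chain of equivalences in the middle paragraph, in particular computing the transport $\transport^{B}(q,-)$ along $q : \unittype = \unittype$ and discarding it against the contractibility of $\unittype = \unittype$ — everything else is routine bookkeeping with Theorems~\ref{thm:patheq} and~\ref{thm:fiberwise-equivalence}, function extensionality, and the contractibility of $\unittype$.
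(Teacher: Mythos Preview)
Your overall strategy is sound and matches the paper's: both arguments establish the equivalence $(\eta(y') = \eta(y)) \simeq (y' = y)$ by unfolding the $\Sigma$-type identity, discarding the contractible component $\unittype = \unittype$, and then using function extensionality together with the contractibility of $\unittype$. The paper leaves the passage from this equivalence to ``$\eta$ is an embedding'' implicit, whereas you spell it out via Lemma~\ref{lemma:prop-to-prop}, Theorem~\ref{thm:fiberwise-equivalence}, and Lemma~\ref{lemma:singleton-contractible}; that extra scaffolding is correct and makes the argument more self-contained.

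There is, however, one genuine slip. You write that ``$\unittype$ is \ldots\ a $(-1)$-type by Lemma~\ref{lemma:prop-is-one-type}, so $\unittype = \unittype$ is contractible.'' Lemma~\ref{lemma:prop-is-one-type} tells you that path types \emph{inside} $\unittype$ (that is, $x =_{\unittype} y$ for $x,y:\unittype$) are contractible; it says nothing about the identity type $\unittype =_{\univ} \unittype$ in the universe. Contractibility of the latter is not a consequence of $\unittype$ being a proposition alone --- it requires proposition extensionality. Under the paper's standing assumptions (Section~\ref{section:assumptions}), propositional univalence holds (Lemma~\ref{lemma:propext-to-propuniv}), and then Theorem~\ref{thm:propext-isprop} gives that $A =_{\univ} B$ is a proposition whenever $A$ and $B$ are; together with $\refl : \unittype = \unittype$ this yields contractibility. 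Once you replace the citation, the proof goes through as written.
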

\begin{proof}
    Since
    $\unittype=\unittype$ is contractible, $\eta(x)=\eta(y)$ is equivalent to
    \[\lambda u.x=_{(\unittype\to Y)}\lambda u.y.\]
    By function extensionality and the induction principle for $\unittype$, this is
    equivalent to 
    \[(\lambda u.x)(\star)=(\lambda u.y)(\star),\]
    which reduces to $x=y$.
\end{proof}
\index{Kleisli extension!for the lifting monad|textit}
\index[symbol]{klext@{$\klext{(-)}$}|textit}
Moreover, we can extend a function $f:X\to\Lift Y$ to a function $f^\sharp:\Lift
X\to\Lift Y$ given by
\begin{eqnarray*}
  f^\sharp(P:\U,-,\varphi:P\to X) & \defeq & (Q:\U,-,\gamma:Q \to Y), \\[1ex]
                    & \text{where} & 
    Q \defeq \qSig{p:P}{\extent(f(\varphi(p)))}, \\
  & & \gamma(p,e) \defeq \val(f(\varphi(p)))(e),
\end{eqnarray*}
where the witness that $Q$ is a proposition comes from the fact that propositions are
closed under sums.

For predicativity reasons, $(\Lift,\eta,\klext{(-)})$ does not actually give a monad,
since $\Lift$ raises universe levels. We discuss these issues in~\ref{section:size},
but we still prove the monad laws here: for our purposes, we only need $\klext{(-)}$
for composition of partial function, and $\eta$ to give us the total elements.
\begin{theorem}
    The maps $\eta$ and $(-)^\sharp$ give $\Lift:\univ\to\univ[1]$ the structure of a monad.
\end{theorem}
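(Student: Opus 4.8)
The plan is to verify the three Kleisli laws $\klext{\eta_X}=\id_{\Lift X}$, $f=\klext f\comp\eta$ and $\klext{(\klext g\comp f)}=\klext g\comp\klext f$ directly, working informally as elsewhere. The one piece of infrastructure I would set up first is a convenient criterion for when two partial elements of a type $Y$ are equal. A partial element is a triple $(P,i,\varphi)$ with $P:\univ$, $i:\isProp(P)$, $\varphi:P\to Y$, so by two applications of Theorem~\ref{thm:patheq} a path $(P,i,\varphi)=(Q,j,\psi)$ in $\Lift Y$ is the same as a path $p:P=Q$ together with paths in the transported $\isProp$- and value-components. The $\isProp$-component lies in a proposition (function extensionality), so it drops out; since $P$ is a proposition, proposition extensionality produces $p$ from a mere logical equivalence $P\lequiv Q$, and $\coe(p)$ is then forced to be the given map $P\to Q$ because $P\to Q$ is itself a proposition; and by function extensionality the value-component reduces to checking that $\varphi$ and $\psi$ agree on corresponding inhabitants. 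So the criterion is: \emph{to prove $(P,-,\varphi)=(Q,-,\psi)$ in $\Lift Y$ it suffices to exhibit $P\lequiv Q$ and to check that $\varphi$ and $\psi$ agree pointwise along the two maps.} Crucially this uses only proposition and function extensionality, consistent with the remark that the partial-element approach avoids univalence.

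With this in hand the three laws are routine unfoldings. For $\klext{\eta_X}=\id_{\Lift X}$, given $(P,-,\varphi)$ the extent of $\klext{\eta_X}(P,-,\varphi)$ is $\qSig{p:P}\extent(\eta_X(\varphi(p)))\jeq\qSig{p:P}\unittype$, which is logically equivalent to $P$, and along this equivalence the value $(p,\star)\mapsto\val(\eta_X(\varphi(p)))(\star)=\varphi(p)$ is $\varphi$. For $f=\klext f\comp\eta$, given $x:X$ the extent of $\klext f(\eta(x))$ is $\qSig{u:\unittype}\extent(f(x))\lequiv\extent(f(x))$, with value $(\star,e)\mapsto\val(f(x))(e)$, so $\klext f(\eta(x))=f(x)$; function extensionality in $x$ gives the law. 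For $\klext{(\klext g\comp f)}=\klext g\comp\klext f$ with $f:X\to\Lift Y$ and $g:Y\to\Lift Z$: on $(P,-,\varphi):\Lift X$ the left-hand side has extent
\[\qSig{p:P}\qSig{q:\extent(f(\varphi(p)))}\extent\big(g(\val(f(\varphi(p)))(q))\big)\]
while the right-hand side has extent
\[\qSig{\big(p,q\big):\qSig{p:P}\extent(f(\varphi(p)))}\extent\big(g(\val(f(\varphi(p)))(q))\big),\]
and these are logically equivalent by reassociating $\Sigma$; along this equivalence both values send $(p,q,e)$ to $\val\big(g(\val(f(\varphi(p)))(q))\big)(e)$, and function extensionality in $(P,-,\varphi)$ concludes.

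I do not expect a genuine obstacle: the whole statement is bookkeeping once the equality criterion above is in place, and that criterion is itself just Theorem~\ref{thm:patheq} together with the fact that constructions on propositions may be ignored (proposition and function extensionality). The only thing to be careful about is phrasing the first step so that each subsequent verification stays at the level of logical equivalences of extents of definition and pointwise equalities of values, never needing to compute transport along a path in the universe.
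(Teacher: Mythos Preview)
Your proposal is correct and follows essentially the same route as the paper: verify the three Kleisli laws by unfolding the extents and values, using proposition extensionality to identify logically equivalent extents (the paper writes this as a chain of equalities of partial elements, suppressing the $\isProp$ witness and invoking proposition extensionality at the end, while you first isolate the equality criterion for $\Lift Y$ explicitly). The reassociation of $\Sigma$ in the third law is exactly the step the paper singles out.
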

\begin{proof}For notational convenience, we will completely
suppress the witness that the extent of definition of a partial element is a proposition.
\begin{eqnarray*}
    \eta^\sharp(P,\varphi) &=& (\qSig{p:P}{\extent(\eta(\varphi p))}\; ,\;\lambda
    (p,q). \val(\varphi p)q) \\
    &=& ((\qSig{p:P}{\unittype}),\lambda (p,q).\varphi p) \\
    &=& (P\times \unittype,\lambda (p,q).\varphi p) \\
    &=& (P,\varphi),
\end{eqnarray*}
where the last equality follows from proposition extensionality.
\begin{eqnarray*}
    f^\sharp(\eta x) &=& f^\sharp(1,\lambda p.x) \\
    &=& (\qSig{p:1}{\extent(f(x))}\;,\;\lambda (p,q). \val(f(x))q) \\
    &=& (\extent(f(x)),\val(f(x))) \\
    &=& f x.
\end{eqnarray*}
Now let $g:Y\rightarrow \Lift Z$ and $f:X\rightarrow \Lift Y$. We compute
\begin{align*}
    (g^\sharp f)^\sharp (P,\varphi ) &= \left(\qSig{p:P}{(g^\sharp f)_e(\varphi p)}\, ,\;
        \lambda (p,q).(g^\sharp f)_v(\varphi p)q\right)\\
    &= \Big(\qSig{p:P}{g^\sharp_e f(\varphi p)}\, ,\;
        \lambda (p,q).g^\sharp_v (f(\varphi p))q\Big) \\
    &= \Big(\qSig{p:P}{\qEmpty{q:f_e(\varphi p)}g_e(f_v(\varphi p)q)}\, ,\;
        \lambda(p,(q,r)).g_v(f_v(\varphi p)q)r\Big)\\
    &= \Big(\qSig{(p,q):\qSig{p:P}{f_e(\varphi p)}}{g_e(f_v(\varphi p)q)}\, ,\;
        \lambda((p,q),r).g_v(f_v(\varphi p)q)r\Big)\\
    &= g^\sharp\Big(\qSig{p:P}{f_e(\varphi p)}\, ,\;
    \lambda(p,q).f_v(\varphi p)q\Big)\\
    &= g^\sharp(f^\sharp (P,\varphi )).
\end{align*}
The first 3 equalities are just expansion of definitions; the third follows from the
    equivalence between $\qSig{a:A}{\qEmpty{b:Ba}{C(a,b)}}$ and
    $\qSig{(a,b):\qSig{a:A}{B(a)}}{C(a,b)}$,
and the last two again apply definitions.
\end{proof}
This gives us a Kleisli composition operator we denote with~``$\kcomp$''. That is, we
have
\[g\kcomp f \defeq g^\sharp \comp f.\]

We said that a partial element $p:\Lift Y$ is \emph{defined} if $\extent(p)$. It
would also be natural to say that $p$ is defined if $p$ is in the image of $\eta$.
In fact, these are equivalent. Moreover, since $\eta$ is an embedding, we know that
$\fib_\eta(p)$ is a proposition, so we have
\indexsymbolfor{$\isDefined$}{defined}{partial element, defined}
\begin{theorem}\label{thm:isdefined-char}
    The following types are equivalent for any $p:\Lift Y$.
    \begin{enumerate}
        \item $\isDefined(p)$,
        \item $\qExists{y:Y}p=\eta y,$
        \item $\qSig{y:Y}p=\eta y.$
    \end{enumerate}
\end{theorem}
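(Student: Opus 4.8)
The plan is to observe first that all three types are propositions, so that it is enough to prove the logical equivalences (and then equivalence, indeed equality, follows). Type~(1) is $\isDefined(p)$, which unfolds to the extent of definition of $p$ and is therefore a proposition by the very definition of $\Lift Y$; type~(2) is a truncation and so a proposition by assumption; and type~(3) is, up to reversing paths, the fibre $\fib_\eta(p)$, which is a proposition because $\eta$ is an embedding (just proved). So it suffices to exhibit maps $(3)\to(1)$, $(1)\to(3)$, $(3)\to(2)$ and $(2)\to(3)$.

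The equivalence of (2) and (3) is cheap. The map $(3)\to(2)$ is $|{-}|$. For $(2)\to(3)$, note that $\qSig{y:Y}p=\eta y$ is equivalent to $\fib_\eta(p)\jeq\qSig{y:Y}\eta y=p$ via $(y,q)\mapsto(y,q^{-1})$, which is an equivalence since path inversion is its own inverse (Lemma~\ref{lemma:path-concat}); as $\eta$ is an embedding this fibre is a proposition, so $\qSig{y:Y}p=\eta y$ is a proposition and hence equivalent to its own truncation by Lemma~\ref{lemma:prop-equals-truncation}, giving the desired map.

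The implication $(3)\to(1)$ is immediate: if $p=\eta y$ then $\extent(p)=\extent(\eta y)$ by $\ap_\extent$, and $\extent(\eta y)\jeq\unittype$ is inhabited, so $\isDefined(p)$. The substance is $(1)\to(3)$. Write $p\jeq(P,w,\varphi)$ with $w:\isProp(P)$ and $\varphi:P\to Y$, so that a witness of $\isDefined(p)$ is an element $e:P$. I would set $y\defeq\varphi(e)$ and claim $p=\eta y$, that is, $(P,w,\varphi)=(\unittype,-,\lambda u.y)$ in $\Lift Y$. Since $P$ and $\unittype$ are both inhabited propositions we have $P\lequiv\unittype$, hence $P=\unittype$ by proposition extensionality; call this path $q$. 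By Theorem~\ref{thm:patheq} it then remains to check that transporting $(w,\varphi)$ along $q$ produces the remaining data of $\eta y$; transport here is componentwise, the $\isProp$-component imposes no condition since $\isProp(\unittype)$ is a proposition, and for the $P\to Y$ component I would argue by path induction that transporting $\varphi$ along $q$ in the family $\lambda Z.Z\to Y$ yields $\varphi$ precomposed with an equivalence $\unittype\to P$, then invoke function extensionality to reduce to the value at $\star$, which is $\varphi(e')$ for some $e':P$; as $P$ is a proposition $e'=e$, so this equals $\varphi(e)=y$ by $\refl$.

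The one step needing care — the expected main obstacle — is this last bookkeeping: correctly applying Theorem~\ref{thm:patheq} to the $\Sigma$-type $\Lift Y$, recognising that the $\isProp(P)$ factor of the fibre is contractible over $\unittype$ and contributes nothing, and reducing the $P\to Y$ factor to the single identity $\varphi(e)=y$ using function extensionality together with the fact that $\varphi$ is constant on the proposition $P$. Everything else is a direct unfolding of definitions.
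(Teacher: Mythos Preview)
Your proposal is correct and follows essentially the same route as the paper: first observe that all three types are propositions (using that $\eta$ is an embedding for (3)), reduce to logical equivalences, and for $(1)\to(3)$ take $y\defeq\varphi(e)$ and use proposition extensionality on the extent together with the triviality of the remaining components. The paper is terser at the last step (it simply asserts $p=\eta(\val(p)(\ast))$ once $\extent(p)=\unittype$), whereas you spell out the transport/function-extensionality bookkeeping, but the content is the same.
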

\begin{proof}
    The type $\qSig{y:Y}p=\eta y$ is the fiber of $\eta$ over $p$. As $\eta$ is an
    embedding, it has propositional fibers, so $\qSig{y:Y}p=\eta y$ is a proposition.
    Hence, we have an equivalence between $(\qExists{y:Y}p=\eta y)$ and $\qSig{y:Y}p=\eta y$.

    As $\isDefined(p)$ is a proposition, we only need
    to see that $\isDefined(p)\lequiv \qExists{y:Y}p=\eta y$.

    Let $p$ be defined so that $\extent(p)=\unittype$. Then we have by definition
    that $p = \eta(\val(p)(\ast))$. Conversely, if $\qExists{y:Y}p=\eta y$, since
    $\isDefined(p)$ is a proposition, it is enough to define a map $(\qSig{y:Y}p=\eta
    y)\to \isDefined(p)$. If $p=\eta y$, then we have that $\extent(p) = 1$ by the
    characterization of equality in $\Sigma$ types.
\end{proof}
\begin{definition}
    \index{partial function!total|seealso {partial element, defined}}
    A partial function $f:X\to\Lift Y$ is \nameas{total}{partial function!total} if
    it is defined everywhere:
    \[\total(f)\defeq \qPi{x:X}\isDefined(f(x)).\]
\end{definition}
A total function is a function $X\to Y$, and we expect the above notion to align with
the notion of total function. Indeed, this is the case:
\begin{lemma}\label{lemma:total-is-defined}
    For any partial function $f:X\to \Lift Y$, the following types are equivalent:
    \begin{enumerate}
        \item the type of factorizations of $f$ through $Y$:
            \[T(f)\defeq \qSig{g:X\to Y}f = \eta\comp g.\]
        \item the type $D(f)\defeq\qPi{x:X}\isDefined(f(x))$.
    \end{enumerate}
\end{lemma}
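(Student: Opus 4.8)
The plan is to prove that $T(f)$ and $D(f)$ are equivalent by a short chain of equivalences; there is no genuinely hard step, since the content amounts to a rearrangement of quantifiers together with the characterization of $\isDefined$ from Theorem~\ref{thm:isdefined-char}.

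First I would unfold $T(f) = \qSig{g:X\to Y} f = \eta\comp g$ and replace the path $f = \eta\comp g$ by a homotopy: by function extensionality $\happly$ gives an equivalence $(f = \eta\comp g)\simeq (f\htpy\eta\comp g)$, and since $(\eta\comp g)(x)\jeq\eta(g(x))$ the right-hand side is $\qPi{x:X} f(x) = \eta(g(x))$. Applying this fibrewise over the base $X\to Y$ (Theorem~\ref{thm:fiberwise-equivalence}) turns $T(f)$ into $\qSig{g:X\to Y}\qPi{x:X} f(x) = \eta(g(x))$.

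Next I would apply the type-theoretic axiom of choice (Theorem~\ref{thm:ttchoice}) with the relation $R:X\to Y\to\univ$ given by $R(x,y)\defeq (f(x)=\eta(y))$, which identifies the previous type with $\qPi{x:X}\qSig{y:Y} f(x)=\eta(y)$. Finally, Theorem~\ref{thm:isdefined-char} gives, for each $x:X$, an equivalence $\big(\qSig{y:Y} f(x)=\eta(y)\big)\simeq \isDefined(f(x))$; since $\Pi$-types respect fibrewise equivalences (a routine consequence of function extensionality), this last type is equivalent to $\qPi{x:X}\isDefined(f(x)) = D(f)$. Composing the three equivalences yields $T(f)\simeq D(f)$.

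As for the main obstacle: there really is not a substantial one. The only point worth a word of care is that the appeal to choice here is the $\Sigma$-form of Theorem~\ref{thm:ttchoice}, a mere reshuffling of $\Pi$ and $\Sigma$, rather than the genuine (truncated) axiom of choice — this is legitimate precisely because $\isDefined(f(x))$, although a proposition, still carries a retrievable value and equation via Theorem~\ref{thm:isdefined-char}. An equally good alternative route would be to observe that both $T(f)$ and $D(f)$ are propositions — $D(f)$ as a product of the propositions $\isDefined(f(x))$, and $T(f)$ as a fibre of the embedding $(\eta\comp{-}):(X\to Y)\to(X\to\Lift Y)$ (postcomposition with an embedding being an embedding, by function extensionality) — and then to verify the two implications directly: $T(f)\to D(f)$ via $\happly$ and Theorem~\ref{thm:isdefined-char}, and $D(f)\to T(f)$ by extracting a value at each point and gluing the resulting pointwise equalities with function extensionality, concluding with Lemma~\ref{lemma:lequiv-equiv}.
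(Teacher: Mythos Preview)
Your proposal is correct and follows essentially the same approach as the paper: function extensionality to pass to pointwise equalities, then the type-theoretic axiom of choice (Theorem~\ref{thm:ttchoice}) with $R(x,y)\defeq (f(x)=\eta(y))$, and finally Theorem~\ref{thm:isdefined-char} applied pointwise. Your additional remarks about the alternative proposition-based route are sound but not needed for the paper's argument.
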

\begin{proof}
    We have that $T(f)$ is equivalent by function extensionality to
    \[T(f)\defeq \qSig{g:X\to Y}\qPi{x:X}f(x) = \eta(g(x)).\]
    Recall that by Theorem~\ref{thm:ttchoice} we have for any type family $R$
    \[\Big(\qSig{g:X\to Y}\qPi{x:X} R(x,g(x))\Big)\simeq
    \Big(\qPi{x:X}\qSig{y:Y}R(x,y)\Big).\]
    In particular, for $R(x,y) \defeq f(x)=\eta(y)$, we have that $T(f)$ is equivalent to
    \[\qPi{x:X}\qSig{y:Y}f(x)=\eta(y).\]
    By Theorem~\ref{thm:isdefined-char}, this type is equivalent to $D(f)$.
\end{proof}
\begin{corollary}
    There is at most one factorization of a partial function $f:X\to\Lift(Y)$ through
    a total function $g:X\to Y$. That is, the type $T(f)$ above is a proposition.
\end{corollary}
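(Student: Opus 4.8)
The plan is to deduce this immediately from Lemma~\ref{lemma:total-is-defined} together with the fact, established in Theorem~\ref{thm:isdefined-char}, that $\isDefined(p)$ is a proposition for every $p:\Lift(Y)$. Concretely, Lemma~\ref{lemma:total-is-defined} gives an equivalence $T(f)\simeq D(f)$ where $D(f)\jeq\qPi{x:X}\isDefined(f(x))$, so by the closure of propositions under equivalence (Theorem~\ref{thm:prop-is-prop}, or rather the corollary that $n$-types are closed under equivalence) it suffices to show that $D(f)$ is a proposition.

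First I would observe that $D(f)$ is a dependent product $\qPi{x:X}\isDefined(f(x))$ of the family $x\mapsto\isDefined(f(x))$. Each fibre of this family is a proposition: by Theorem~\ref{thm:isdefined-char}, $\isDefined(p)$ is equivalent to the fibre of the embedding $\eta$ over $p$, and the fibres of an embedding are propositions by definition; alternatively one can simply note that $\isDefined(p)\jeq\extent(p)$ is the first component of a partial element, which is a proposition by construction. Either way, we have $\qPi{x:X}\isProp(\isDefined(f(x)))$.

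Then, since propositions form an exponential ideal in every universe (our standing assumption of function extensionality, i.e.\ Theorem~\ref{thm:funext-char}.\ref{funext:prop}), the product $\qPi{x:X}\isDefined(f(x))$ is itself a proposition. Hence $D(f)$ is a proposition, and transporting along the equivalence $T(f)\simeq D(f)$ from Lemma~\ref{lemma:total-is-defined} shows $T(f)$ is a proposition, which is exactly the claim that there is at most one factorization of $f$ through a total function.

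There is no real obstacle here: every ingredient has already been assembled, and the proof is a two-line assembly. The only point requiring a moment's care is making sure the appeal is to the correct standing assumption — that function extensionality (hence the exponential-ideal property for propositions) is in force, which it is throughout Part~\ref{part:two} — so that the passage from "each fibre is a proposition" to "the product is a proposition" is licensed.
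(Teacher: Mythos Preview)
Your proposal is correct and matches the paper's intended argument: the corollary is stated without proof immediately after Lemma~\ref{lemma:total-is-defined}, precisely because $D(f)$ is a product of propositions (each $\isDefined(f(x))$ is a proposition) and hence a proposition by function extensionality, so $T(f)\simeq D(f)$ is too.
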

The takeaway here is that from the knowledge that
$f:X\to\Lift Y$ is defined everywhere, we can treat $f$ as a function $X\to Y$. Justified
by the above results, we will call $f$ a \emph{total function} both to mean that
$f:X\to \Lift(Y)$ such that $\total(f)$ is inhabited, but also to emphasize that $f$
is an ordinary function $f:X\to Y$.

\section{Domain and range of partial functions}\label{section:dom-and-ran}
A function $f:X\to\Lift(Y)$ can be viewed either as an ordinary function from $X$ to
$\Lift(Y)$, or as a partial function from $X$ to $Y$. When we view such a function as
a partial function, we want the notion of image and domain to be
different. If we view a partial function $X\pto Y$ instead as a relation
$R:X\to Y\to\univ$, then the \emph{range} of $R$ is $\qSig{y:Y}\qExists{x:X}R(x,y)$,
and the \emph{domain} of $R$ is $\qSig{x:X}\qExists{y:Y}R(x,y)$. Similarly, if
we view a partial function $X\pto Y$ as a function $f:A\to Y$ for a subtype $A$ of
$X$, then the range of $f$ is $\qSig{y:Y}\qExists{a:A}f(a) = y$, and the domain is
$A$. These notions align, and moreover, we can translate the notion to the view of
partial functions as functions into a type of partial elements. To separate the image
of a partial function from the image of an ordinary function, we stick to the name
\emph{range}. We will be more interested in the \emph{predicates} for range and
domain than the total types of these predicates.
\begin{definition}
    The \nameas{range}{partial function!range} of a partial function $f:X\to\Lift(Y)$ is the predicate
    \[\ran_f:Y\to\Prop\]
    defined by
    \[\ran_f(y) \defeq \qExists{x:X}f(x) = \eta y.\]
    The \nameas{domain}{partial function!domain} of $f$ is the predicate 
    \[\dom_f:X\to\Prop\]
    defined by
    \[\dom_f(x) \defeq \isDefined(f(x)).\]
\end{definition}
That is, $y$ is in the range of $f$ if there exists $x:X$ such that the value of
$f(x)$ is $y$, and $x$ is in the domain of $f$ if $f$ is defined at $x$.

\section{Liftings are DCPOs}\label{section:dcpos}
Given a type $Y$, we can define a type family $-\le-:\Lift Y\to \Lift Y\to\univ$ by
\[u\le v \defeq
    \qSig{t:\extent(u)\to \extent(v)}{\qPi{p:\extent(u)}{\val(u)(p)=\val(v)(t(q))}}.\]
For a general type $Y$, there may be multiple witnesses that $u\le v$, but if $Y$ is a
set, then $-\le-$ is proposition-valued. Moreover, we have
\begin{theorem}\label{thm:dcpo}
    If $Y$ is a set, then $(\Lift Y,\le)$ is a directed-complete partial order with
    bottom $\bot$.
\end{theorem}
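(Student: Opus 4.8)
The plan is to check the three ingredients separately: that $\le$ is a partial order on the set $\Lift Y$, that $\bot$ is the least element, and that directed families have suprema. First, $\Lift Y$ is a set when $Y$ is: reshuffling its definition gives $\Lift Y\simeq \qSig{P:\Prop}(\pr_0(P)\to Y)$, and $\Prop$ is a set while each $\pr_0(P)\to Y$ is a set by Theorem~\ref{lemma:ntypes-exp-ideal}. Likewise $u\le v$ is a proposition: the component $\extent(u)\to\extent(v)$ is a proposition by function extensionality, and since $Y$ is a set each equation $\val(u)(p)=\val(v)(t(p))$ is a proposition, so the $\Pi$ over $\extent(u)$ is a proposition, and a $\Sigma$ of a proposition over a proposition is again one. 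Reflexivity is witnessed by $(\id,\lambda p.\refl)$; transitivity composes the two maps of extents and concatenates the pointwise equalities; and $\bot$, whose extent is $\zerotype$, satisfies $\bot\le u$ via the unique map $\zerotype\to\extent(u)$ and the vacuous family of equations.

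Antisymmetry is the only subtle order axiom. Given $u\le v$ and $v\le u$, the two maps of extents yield $\extent(u)\lequiv\extent(v)$, hence a path $e:\extent(u)=\extent(v)$ by proposition extensionality. Writing $u=(P,i,\varphi)$ and $v=(Q,j,\psi)$ and using the characterization of equality in $\Sigma$-types (Theorem~\ref{thm:patheq}) together with the transport lemmas for $\Sigma$ and for $\lambda P.(P\to Y)$, it suffices to show $\varphi\comp\coe(e^{-1})=\psi$ (the $\isProp$ components agree automatically, being propositional). By function extensionality this reduces, for each $q:Q$, to $\varphi(\coe(e^{-1})(q))=\psi(q)$; but from $v\le u$ we have $\psi(q)=\varphi(s(q))$ for the relevant $s:Q\to P$, and since $P$ is a proposition $s(q)=\coe(e^{-1})(q)$, so the two sides agree.

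For directed completeness I take a directed family to be a type $I$ together with a witness that $I$ is inhabited and a map $u:I\to\Lift Y$ such that for all $i,j:I$ there merely exists $k:I$ with $u_i\le u_k$ and $u_j\le u_k$. I claim the supremum $\bigsqcup u$ has extent $\qExists{i:I}\extent(u_i)$, i.e.\ $\trunc{\qSig{i:I}\extent(u_i)}$, which is a legitimate extent since it is a proposition. To define its value, consider the map $\qSig{i:I}\extent(u_i)\to Y$ sending $(i,p)$ to $\val(u_i)(p)$. This map is constant: given $(i,p)$ and $(j,q)$, directedness merely supplies $k$ with $u_i\le u_k$ and $u_j\le u_k$, and then $\val(u_i)(p)=\val(u_k)(t_i(p))=\val(u_k)(t_j(q))=\val(u_j)(q)$, where $t_i(p)=t_j(q)$ because $\extent(u_k)$ is a proposition; since the goal is an equality in the set $Y$ it is a proposition, so the mere existence of $k$ suffices. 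By Theorem~\ref{thm:constant-functions-factor} the map factors through the truncation, giving $\val(\bigsqcup u):\extent(\bigsqcup u)\to Y$ with $\val(\bigsqcup u)(|(i,p)|)=\val(u_i)(p)$.

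It then remains to verify that $\bigsqcup u$ is the least upper bound. Each $u_i\le\bigsqcup u$ is witnessed by $p\mapsto|(i,p)|$ together with the computation rule just stated. If $v$ is any upper bound of the family, a map $\extent(\bigsqcup u)\to\extent(v)$ is obtained from $(i,p)\mapsto t_i(p)$ since $\extent(v)$ is a proposition, and the required pointwise equalities are again a proposition, so one may untruncate a witness $q:\extent(\bigsqcup u)$ to some $(i,p)$ and compute $\val(\bigsqcup u)(q)=\val(u_i)(p)=\val(v)(t_i(p))=\val(v)(t(q))$, the last step using propositionality of $\extent(v)$. I expect the main obstacle to be purely organizational: making the constancy argument and its induced computation rule precise and threading it through both the upper-bound and least-upper-bound checks, alongside the transport bookkeeping in antisymmetry. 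The recurring technical point that makes it all go through is that $Y$ being a set keeps every equation of values propositional, so the truncated extent and the mere directedness hypothesis can always be eliminated.
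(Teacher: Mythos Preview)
Your proposal is correct and follows essentially the same approach as the paper: the extent of the supremum is the truncated sum of extents, and the value is obtained by showing the untruncated map $(i,p)\mapsto\val(u_i)(p)$ is constant (via directedness) and then invoking Theorem~\ref{thm:constant-functions-factor}. You are in fact more thorough than the paper in a few places---explicitly checking that $\Lift Y$ is a set and verifying the value equations in the upper-bound and least-upper-bound clauses---but the argument is the same.
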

Here, a directed-complete partial order $X$ must have a prop-valued order relation, and
for any inhabited family $u:I\to X$ such that $\qPi{i,j:I}\qExists{k:I}u_i,u_j\le k$,
there is a least upper bound in $X$.
\begin{proof}
    Let $Y$ be a set. We need to see that $\le$ is proposition-valued, so let
    $x,y:u\le v$. Then we have
    \[(x=y)\simeq \qSig{p:\extent(u)\to\extent(v)}\qPi{p:\extent(u)}\val(u)(p) =
    \val(v)(t(q)).\]
    As $Y$ is a set, we know that $\val(u)(p) = \val(v)(t(q))$ is a proposition, and
    as $\extent(u)$ is a proposition, we have that $\qPi{p:\extent(u)}\val(u)(p) = \val(v)(t(q))$
    is as well. Then $x=y$ is equivalent to a sum of propositions over a proposition.

    If $u\le v$ and $v\le w$, then $u\le w$ by composition in the first coordinate
    and concatenation in the second. We have that $u\le u$ by $(\id,\lambda
    p.\refl)$. Finally, let $u\le v$ and $v\le u$. Then we have $t:\extent(u)\to
    \extent(v)$ and $t':\extent(v)\to\extent(u)$, and by proposition extensionality
    we have $p:\extent(u)=\extent(v)$. We wish to see that $u = v$, so
    we need to see that $\transport^{\lambda P.P\to Y}(p,\val(u)) = \val(v)$. This is equivalent
    to $\val(u) \comp \coe(p^{-1}) = \val(v)$. This in turn is equivalent to
    \[\qPi{q:\extent(v)}\val(u)(t'(q)) = \val(v)(q),\]
    and this type has an element by the assumption that $v\le u$. In other words, we have an
    element
    $p:\extent(u) = \extent(v)$ and $\val(u) =_p \val(v)$. So then $u=v$.

    Given a directed family $u_i:I\to \Lift Y$, take the extent of definition of the join
    $u_\infty$ to be $\extent(u_\infty)\defeq \trunc{\qSig{i:I}{\extent(u_i)}}$. Then by
    construction we have $\isProp(\extent(u_\infty))$. To define the value, we first
    define a function $\varphi:\big(\qSig{i:I}{\extent(u_i)}\big)\to Y$ by
    \[\varphi(i,p)\defeq \val(u_i)(p).\]
    As the family is directed, for any $i$ and $j$, there is a $k$ such that
    $u_i,u_j\le u_k$. So then we have for any $p:\extent(u_i)$ and $q:\extent(u_j)$
    we have
    \[\varphi(i,p) = \varphi_i(p) = \varphi_k(p') = \varphi_k(q') = \varphi_j(q) =
    \varphi(j,q)\]
    where $\varphi_i=\val(u_i)$, for $p',q':\extent(u_k)$ witnessing that $u_i\le
    u_k$ and $u_j\le u_k$ respectively.  That is, $\varphi$ is constant. So we
    have by Theorem~\ref{thm:constant-functions-factor}
    that $\varphi$ factors through $\extent(u_\infty)$ as $\varphi_\infty\circ |-|$.
    So take $\extent(u_\infty)\defeq \varphi_\infty$.

    We have $\extent(u_i)\to \extent(u_\infty)$ by $p\to |(i,p)|$. If $v:\Lift Y$
    with $\qPi{i:I}{u_i\le v}$, we have a map $(\qSig{i:I}{u_i})\to \extent(v)$, and this
    factors through $u_\infty$ as $\extent(v)$ is a proposition.
\end{proof}
\index{monad!lifting|)}

\section{Dominances and partial functions}\label{section:dominances}
\index{dominance|(}
\index{submonad|(}
The traditional way to examine restricted classes of partial
functions~\cite{Rosolini1986,Hyland1991,VANOOSTEN2000} is to restrict the available
extents of definition via \emph{dominances}---subsets of $\Prop$ satisfying certain
closure properties. We develop this approach here.
\begin{definition}
    A \emph{set of propositions} is a map $\dd:\univ\to\univ$ together with
    \begin{enumerate}[label=\normalfont\textrm{D\arabic*}]
        \item a map $\qPi{X:\univ}{\isProp(\dd(X))}$, \label{dominance:property}
        \item a map $\qPi{X:\univ}{\dd(X)\to\isProp(X)}$. \label{dominance:is-prop}
    \end{enumerate}
    Note that these pieces of data are propositions. We say that $\dd$ is
    \emph{proposition valued} when it satisfies~\ref{dominance:property}, and that $\dd$
    \emph{selects propositions} when it satisfies~\ref{dominance:is-prop}.

    Given any set of propositions $\dd$, the type $\Lift_{\dd}(Y)$ of
    \emph{$\dd$-partial elements} of a type $Y$ is given by
    \[\Lift_{\dd}(Y)\defeq \qSig{P:\univ}{\dd(P)\times (P\to Y)}.\]
\end{definition}
The next lemma is simple but invaluable, and we will make tacit use of it.
\begin{lemma}\label{lemma:set-of-props-is-equiv-closed}
    Any set of propositions $\dd:\univ\to\univ$ is closed under equivalence. That is,
    if $\dd(P)$ and $P\simeq Q$, then $\dd(Q)$.
\end{lemma}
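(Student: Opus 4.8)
The plan is to recall that a set of propositions $\dd : \univ \to \univ$ comes equipped, by \ref{dominance:property}, with a witness that $\dd(X)$ is a proposition for every $X : \univ$; and by \ref{dominance:is-prop}, with a witness that $\dd(X) \to \isProp(X)$. So suppose we are given $P, Q : \univ$ with $\dd(P)$ and an equivalence $e : P \simeq Q$. We want to produce an element of $\dd(Q)$. Since $\dd(P)$ holds, $P$ is a proposition by \ref{dominance:is-prop}; transporting along $e$ (or directly, since a retract of a proposition is a proposition by Theorem~\ref{thm:prop-retract-closed}), $Q$ is also a proposition. Now the equivalence $e$, being an equivalence between propositions, gives a logical equivalence $P \lequiv Q$; combined with propositional univalence (Lemma~\ref{lemma:propext-to-propuniv}), which we are assuming throughout, or more simply with proposition extensionality itself, we obtain a path $p : P = Q$ in $\univ$.

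Given such a path $p : P = Q$, the result is immediate: transport the hypothesis $\dd(P)$ along $\ap_{\dd}(p) : \dd(P) = \dd(Q)$, or equivalently apply $\transport^{\dd}(p)$ to the given element of $\dd(P)$, yielding an element of $\dd(Q)$. Since $\dd(Q)$ is a proposition, it does not matter which path $P = Q$ we used, so there is no coherence issue to worry about.

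Concretely, I would write: since $\dd(P)$, we have $\isProp(P)$ by \ref{dominance:is-prop}; since $P \simeq Q$, the type $Q$ is a retract of $P$, hence $\isProp(Q)$ by Theorem~\ref{thm:prop-retract-closed}; the equivalence underlying $e$ together with its inverse gives $P \lequiv Q$, and proposition extensionality yields $p : P = Q$; finally $\transport^{\dd}(p)$ maps the given proof of $\dd(P)$ to a proof of $\dd(Q)$.

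There is essentially no obstacle here: every ingredient (proposition extensionality, closure of propositions under retracts, transport) is already in hand, and the only thing to be slightly careful about is that we genuinely need an \emph{equality} $P = Q$ rather than merely a logical equivalence, which is exactly what proposition extensionality supplies once both types are known to be propositions. This is why the statement is phrased for a \emph{set of propositions} and not for an arbitrary $\dd : \univ \to \univ$: without \ref{dominance:is-prop} we would have no way to conclude $P = Q$ from $P \simeq Q$ in the absence of univalence.
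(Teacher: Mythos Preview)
Your proof is correct and follows essentially the same route as the paper: use \ref{dominance:is-prop} to get $\isProp(P)$, conclude $\isProp(Q)$ from the equivalence, apply proposition extensionality to obtain $P = Q$, and transport. The paper's proof is just a terser version of what you wrote.
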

\begin{proof}
    As $\dd$ selects propositions, we have $\isProp(P)$. Then since $Q\simeq P$, we
    know $\isProp(Q)$, and so $P=Q$ by proposition extensionality. Hence, $\dd(Q)$,
    by transport.
\end{proof}
We may then define the $\dd$-partial functions $X\pto_{\dd} Y$ to be functions
\index{partial functions!for a dominance|textit}
\[X\pto_{\dd} Y \defeq X\to\Lift_{\dd} Y.\]
In general, the $\dd$-partial functions might not be well-behaved: we do not
necessarily have that the $\dd$-partial functions compose, or that total functions can be
viewed as $\dd$-partial. In order to ensure these two properties, Rosolini introduced
the notion of dominance.
\begin{definition}\label{def:dominance}
    A \name{dominance} is a set of propositions $\dd:\univ\to\univ$, which we call
    \emph{dominant}, satisfying
    \begin{enumerate}[label=\normalfont\textrm{D\arabic*}]
            \setcounter{enumi}{2}
        \item The unit type is dominant: we have $\upsilon:\dd(\unittype)$, \label{dominance:unit}
        \item \label{dominance:axiom} 
            $\dd$ has \emph{conditional conjunction}: For any $P,Q:\univ$ we have
            \[\dd(P)\to(P\to \dd(Q))\to \dd(P\times Q).\]
    \end{enumerate}
    Note that both pieces of data are property by \ref{dominance:property}.
\end{definition}
In~\cite{Bauer2006Synthetic}, Bauer calls~\ref{dominance:axiom} \emph{the dominance
axiom}, but does include~\ref{dominance:unit} in the definition of dominance; since he works in
the internal language of a topos,~\ref{dominance:property}
and~\ref{dominance:is-prop} are redundant.  Reus and
Streicher~\cite{ReusStreicher1997} call~\ref{dominance:axiom} \emph{dependent
conjunction}; since they are axiomatizing a particular dominance (the \emph{Rosolini
dominance} of Section~\ref{section:rosolini}), the relevant versions
of~\ref{dominance:property},~\ref{dominance:is-prop}, and~\ref{dominance:unit} are
theorems.

There are three trivial examples of dominances:
\[
    \text{$\dd_1(X) \defeq \isContr(X)$, \quad $\dd_2(X) \defeq (X=0)+(X=1)$, \quad $\dd_\Omega \defeq \isProp$},
\] 
with liftings corresponding to
\[
    \text{$\Lift_{\dd_1}(X) \defeq X$, \quad $\Lift_{\dd_2}(X) \defeq X+1$, \quad
    $\Lift_{\dd_\Omega}(X) \defeq \Lift(X)$},
\] 
In a type theoretic context, it is usually more convenient to let the conditional
proposition $Q$ in \ref{dominance:axiom} also depend on the witness of $P$.
Generalizing \ref{dominance:axiom} to this case gives us
    \begin{enumerate}[label=\normalfont\textrm{D\arabic*'}]
            \setcounter{enumi}{3}
        \item we have map \label{dominance:closure}
            \[\sigma:\qPi{P:\univ}{\qEmpty{Q:P\to\univ}{\dd(P)\to
            (\qPi{p:P}{\dd(Q(p))})\to\dd(\qSig{p:P}{Q(p)})}}.\]
    \end{enumerate}
This condition is perhaps more immediately recognizable: it says that a set of
propositions is closed under $\Sigma$. In fact, we have this already from conditional
conjunction.
\begin{lemma}\label{dominance:axiom-is-closure}
    For a set $\dd$ of propositions,~\ref{dominance:axiom}~is equivalent
    to~\ref{dominance:closure}.
\end{lemma}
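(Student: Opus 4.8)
The plan is to prove the two implications separately, keeping careful track of the role played by the fact that $\dd$ selects propositions. The easier direction is \ref{dominance:closure} $\Rightarrow$ \ref{dominance:axiom}: given conditional $\Sigma$-closure, take for any $P,Q:\univ$ the constant family $\lambda p. Q : P\to\univ$. Then $\sigma$ applied to $P$, this family, a witness of $\dd(P)$, and a witness of $\qPi{p:P}\dd(Q)$ yields $\dd(\qSig{p:P}Q)$; since $\qSig{p:P}Q \simeq P\times Q$ whenever $P$ is a proposition (which we know because $\dd$ selects propositions and $\dd(P)$ holds), Lemma~\ref{lemma:set-of-props-is-equiv-closed} gives $\dd(P\times Q)$. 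That is exactly \ref{dominance:axiom}.

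For the converse, \ref{dominance:axiom} $\Rightarrow$ \ref{dominance:closure}, suppose we have conditional conjunction and are given $P:\univ$, a family $Q:P\to\univ$, a witness of $\dd(P)$, and a witness $f:\qPi{p:P}\dd(Q(p))$. I would first observe that since $\dd$ selects propositions, $P$ is a proposition and each $Q(p)$ is a proposition, so $\qSig{p:P}Q(p)$ is a proposition. The trick is to replace the dependent family by a non-dependent proposition: because $P$ is a proposition, the type $\qSig{p:P}Q(p)$ is equivalent to $P\times Q'$ where $Q' \defeq \qSig{p:P}Q(p)$ itself, or more cleanly, one can argue that for any fixed structure we may pick a ``uniform'' $Q'$. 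Concretely: define $Q':\univ$ by $Q' \defeq \qSig{p:P}Q(p)$. Then there is a map $P\to \dd(Q')$: given $p:P$, we have $Q(p)\simeq Q'$ (since $P$ is a proposition, the first projection $Q'\to P$ together with $\lambda q.(p,q):Q(p)\to Q'$ and the fact that everything is a proposition gives the equivalence), so $\dd(Q(p))$ transports to $\dd(Q')$ by Lemma~\ref{lemma:set-of-props-is-equiv-closed}. Now apply \ref{dominance:axiom} to $P$ and $Q'$ with the witness $\dd(P)$ and the map $P\to\dd(Q')$ just constructed, obtaining $\dd(P\times Q')$. Finally $P\times Q' \simeq Q'\simeq \qSig{p:P}Q(p)$ (the middle equivalence again because $Q'$ already carries a proof of $P$ in its first component and $P$ is a proposition), so Lemma~\ref{lemma:set-of-props-is-equiv-closed} delivers $\dd(\qSig{p:P}Q(p))$, which is \ref{dominance:closure}.

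The main obstacle, and the only subtle point, is the manipulation in the converse direction where the dependent family $Q$ is converted into a non-dependent proposition $Q'$: one must be careful that the equivalence $Q(p)\simeq Q'$ genuinely holds for each $p:P$, and this relies essentially on $P$ being a proposition (so that the ``extra'' first component of $Q'$ contributes nothing). Everything else is a routine application of proposition extensionality packaged in Lemma~\ref{lemma:set-of-props-is-equiv-closed}, together with the standard fact that $\Sigma$ over a proposition with proposition fibers is a proposition. I would present the proof at roughly this level of detail, spelling out the construction of the map $P\to\dd(Q')$ but leaving the bookkeeping of the equivalences to the reader or to a one-line appeal to Lemma~\ref{lemma:set-of-props-is-equiv-closed}.
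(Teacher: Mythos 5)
Your proof is correct and follows essentially the same route as the paper: the forward direction via the constant family, and the converse by replacing the dependent family with its total space $Q' \defeq \qSig{p:P}Q(p)$, using that $P$ is a proposition to get $Q(p)\simeq Q'$ and then $P\times Q'\simeq Q'$. The paper makes the map $Q'\to Q(p)$ explicit via transport along the path $p'=p$ rather than appealing to propositionhood of the fibers, but the substance is the same.
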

\begin{proof}
For fixed $P$ and $Q$, the constant family $\lambda(p:P).Q(p)$ is
    such that $P\times Q = \qSig{p:P}{Q(p)}$ and $P\to Q = \qPi{p:P}{Q(p)}$, so closure under
    $\Sigma$ gives~\ref{dominance:axiom}.

    Conversely, let $\dd$ satisfy~\ref{dominance:axiom} and let $P:\U$ and $R:P\to \U$ such that
    $\dd(P)$ and $\qPi{p:P}{\dd(R(p))}$. Define
\[
    Q\defeq \qSig{p:P}{R(p)}.
\]
We need to see that $\dd(Q)$. First note that for any $p:P$ we have $R(p)=Q$, by $r\mapsto (p,r)$ with inverse $(p',r')\mapsto \transport(-,r')$, where $``-''$ is the
path $p'=p$ which exists since $P$ is a proposition. But then we have
    $\qPi{p:P}{d(Q)}$, by assumption that $R$ is $\dd$-valued. By the condition, we then
have $\dd(P\times Q)$. But $Q\simeq P\times Q$ by $(p,r)\mapsto (p,(p,r))$ with
inverse given again by transport.
\end{proof}

The following result is the motivation for the notion of dominance.
\index{monad!lifting!of a dominance|textit}
\begin{theorem}\label{dominance:monad}
    A set of propositions $\dd$ is a dominance iff $\Lift_{\dd}:\univ\to\univ[1]$ is
    a submonad of $\Lift:\univ\to\univ[1]$.
\end{theorem}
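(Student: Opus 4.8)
The plan is to prove both directions by unwinding the definition of submonad (Definition~\ref{def:submonad}) and carefully tracking what the monad structure on $\Lift_{\dd}$ requires. Recall that $\Lift_{\dd}(X)\defeq\qSig{P:\univ}\dd(P)\times(P\to X)$ sits inside $\Lift(X)$ via the obvious embedding $i_X$ (using that $\dd$ selects propositions, so a $\dd$-partial element really is a partial element), and this $i_X$ is an embedding since $\dd(P)$ is a proposition. So the content of ``submonad'' is exactly: (a) the unit $\eta_X:X\to\Lift X$ factors through $i_X$, i.e. for each $X$ the extent of $\eta(x)$---which is $\unittype$---is dominant; and (b) Kleisli extension restricts, i.e. whenever $f:X\to\Lift_{\dd}Y$, the extent of $f^{\sharp}(u)$ is again dominant for $u:\Lift_{\dd}X$. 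Computing from the definition of $(-)^{\sharp}$, the extent of $f^{\sharp}(P,d,\varphi)$ is $\qSig{p:P}\extent(f(\varphi(p)))$, where $\dd(P)$ holds and each $\extent(f(\varphi(p)))$ is dominant.

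\emph{($\Leftarrow$)} Suppose $\Lift_{\dd}$ is a submonad of $\Lift$. From the factorization of $\eta$ we get $\dd(\unittype)$, which is~\ref{dominance:unit}. For~\ref{dominance:axiom}, take $P,Q:\univ$ with $\dd(P)$ and a map $h:P\to\dd(Q)$; we must show $\dd(P\times Q)$. The idea is to cook up a $\dd$-partial function whose Kleisli extension at a suitable $\dd$-partial element has extent equivalent to $P\times Q$. Concretely, consider $X\defeq P$, $Y\defeq\unittype$ (or any inhabited type); take the $\dd$-partial element $u\defeq(P,d_P,\lambda p.\star):\Lift_{\dd}(P)$ and the function $f:P\to\Lift_{\dd}(\unittype)$ defined by $f(p)\defeq(Q,h(p),\lambda q.\star)$, which is well-typed since $h(p):\dd(Q)$. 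Then $f^{\sharp}(u)$ has extent $\qSig{p:P}Q$, which is $P\times Q$ (constant family), and submonad closure says this is dominant; by Lemma~\ref{lemma:set-of-props-is-equiv-closed} we conclude $\dd(P\times Q)$. Thus $\dd$ is a dominance.

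\emph{($\Rightarrow$)} Suppose $\dd$ is a dominance. We first check $\eta_X$ factors through $i_X$: the extent of $\eta(x)$ is $\unittype$, and $\dd(\unittype)$ by~\ref{dominance:unit}, so $\eta$ corestricts to $\eta'_X:X\to\Lift_{\dd}X$. Next, for $f:X\to\Lift_{\dd}Y$ and $(P,d,\varphi):\Lift_{\dd}X$, we must see that $\extent(f^{\sharp}(P,d,\varphi))=\qSig{p:P}\extent(f(\varphi(p)))$ is dominant. We have $\dd(P)$ by $d$, and for each $p:P$ the element $f(\varphi(p)):\Lift_{\dd}Y$ carries a witness $\dd(\extent(f(\varphi(p))))$, so $p\mapsto\dd(\extent(f(\varphi(p))))$ gives $\qPi{p:P}\dd(\extent(f(\varphi(p))))$; then closure under $\Sigma$, i.e.~\ref{dominance:closure} (equivalent to~\ref{dominance:axiom} by Lemma~\ref{dominance:axiom-is-closure}), yields $\dd(\qSig{p:P}\extent(f(\varphi(p))))$ as required. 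This defines ${\klext{f}}':\Lift_{\dd}X\to\Lift_{\dd}Y$, and commutativity of the two triangles in Definition~\ref{def:submonad} is immediate because $i$ is defined componentwise and both $\eta'$ and ${\klext{f}}'$ are restrictions of $\eta$ and $\klext{f}$---nothing changes but the ambient type, and the monad laws for $\Lift_{\dd}$ then follow from those for $\Lift$ (already proved), using that $\dd(P)$ is a proposition so equalities of $\dd$-partial elements reduce to equalities of underlying partial elements.

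\emph{Main obstacle.} The only genuinely delicate point is the $(\Leftarrow)$ direction: extracting~\ref{dominance:axiom} requires exhibiting the right $\dd$-partial function, and one must be slightly careful that the construction only uses data that is actually available (in particular that $h:P\to\dd(Q)$, not a bare $Q$), and that the resulting extent is \emph{equivalent} to $P\times Q$ rather than judgementally equal---this is where Lemma~\ref{lemma:set-of-props-is-equiv-closed} does the work. Everything else is bookkeeping: unwinding $(-)^{\sharp}$, invoking $\Sigma$-closure, and noting that restriction of an already-established monad structure to a subfamily closed under the relevant operations is automatically a submonad.
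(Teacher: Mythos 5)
Your proposal is correct and takes essentially the same route as the paper: the forward direction restricts $\eta$ and Kleisli extension using~\ref{dominance:unit} and $\Sigma$-closure~\ref{dominance:closure}, and the backward direction recovers the dominance axiom~\ref{dominance:axiom} by applying the restricted Kleisli extension of $p\mapsto(Q,h(p),\lambda q.\star)$ to a $\dd$-partial element with extent $P$ and reading off the extent $\qSig{p:P}Q$, which is equivalent to $P\times Q$ --- exactly the paper's construction. The one slip is that your $u$ must have value component $\id_P$ (not $\lambda p.\star$) in order to lie in $\Lift_{\dd}(P)$ so that $f^{\sharp}$ can be applied to it; with that correction the argument matches the paper's proof.
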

\begin{proof}
    Let $\dd$ be a dominance.
    The map $\Lift_{\dd} Y\to \Lift Y$ is given by inclusion. We only need to see that
    the unit and Kleisli extension for $\Lift$ respect $\Lift_{\dd}$. As
    $\dd(\unittype)$, we have that $\eta(y)$ is a $\dd$-proposition. We need to see
    that for all $(P,-,\varphi):\Lift_{\dd}(X)$ and $f:X\to\Lift_{\dd}(Y)$, that
    the extent of $f^\sharp(P,-\varphi)$ is $\dd$-partial. But the extent of
    $f^\sharp(P,-,\varphi)$ is
    \[\qSig{p:P}{\extent(f(\varphi(p)))},\]
    but by assumption $P$ is a $\dd$-proposition and $\extent(f(\varphi(p)))$ is
    $\dd$-partial, so by condition~\ref{dominance:closure}, we are done.

    Now let $\Lift_{\dd}$ be a submonad of $\Lift$, and let $\rho:\dd(P)$ and
    $\sigma:P\to\dd(Q)$. Define
    \begin{align*}
        p & :\unittype\to\Lift_{\dd}(P)\\
        p(\star) &\defeq (P,\rho,\id)
    \end{align*}
    and
    \begin{align*}
        q & :P \to\Lift_{\dd}(\unittype)\\
        q(x) &\defeq (Q,\sigma(x),\lambda w.\star)
    \end{align*}
    As $\Lift_{\dd}$ is a monad, we have
    \[q^{\sharp}\circ p : \unittype \to \Lift_{\dd}(\unittype).\]
    Moreover, $\Lift_{\dd}$ is a submonad of $\Lift$, so we can calculate,
    \[q^{\sharp}(p(\star)) = q^{\sharp}(P,-,\id) = ((\qSig{p:P}{Q}), w , v).\]
    We do not need to calculate the explicit value of $w$ , as we have that $w$
    inhabits a proposition. We also do not need to calculate the explicit value of $v$,
    since we know there is a unique function $X\to\unittype$ for any type $X$.
\end{proof}
As with the dominance of all propositions, we can consider relations valued in a
dominance. That is, a \emph{$\dd$-valued} relation between $X$ and $Y$ is a relation
$R:X\to Y\to\univ$ such that
\[\qPi{x:X}{\dd(\qSig{y:Y}{R(x,y)})}.\]

\begin{theorem}\label{dominance:composition}
    A set of propositions $\dd$ is a dominance iff the $\dd$-valued relations are
    closed under composition and contain the identity relation.
\end{theorem}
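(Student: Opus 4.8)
The plan is to make the relational operations precise and then verify each implication; the content is concentrated in the backward direction, where conditional conjunction must be recovered from closure under composition. Recall that the composite of relations $R:X\to Y\to\univ$ and $S:Y\to Z\to\univ$ is $(S\circ R)(x,z)\defeq\qSig{y:Y}R(x,y)\times S(y,z)$ — this is the usual relational composition, and a reshuffling shows it corresponds, under the maps $\rel$ and $\ele$, to the Kleisli composite $g\kcomp f$ of the associated partial functions, so no new notion of composition is needed — while the identity relation on $X$ is $\Id_X$, i.e. $\Id_X(x,x')\defeq(x=x')$. Two preliminary observations: every $\dd$-valued relation is automatically single-valued, since $\dd$ selects propositions; but a $\dd$-valued relation need not be \emph{pointwise} proposition-valued, as only $\qSig{y:Y}R(x,y)$ is forced into $\dd$, so $(S\circ R)(x,z)$ itself may fail to be a proposition, which causes no trouble.

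For the forward direction, suppose $\dd$ is a dominance. For each $x:X$ the type $\qSig{x':X}(x=x')$ is a singleton, hence equivalent to $\unittype$, so by \ref{dominance:unit} and closure of sets of propositions under equivalence (Lemma~\ref{lemma:set-of-props-is-equiv-closed}) it lies in $\dd$; thus $\Id_X$ is $\dd$-valued. For composition, I would reassociate $\qSig{z:Z}(S\circ R)(x,z)$ into the form $\qSig{(y,r):\qSig{y:Y}R(x,y)}\qSig{z:Z}S(y,z)$; since $R$ being $\dd$-valued gives $\dd(\qSig{y:Y}R(x,y))$ and $S$ being $\dd$-valued gives $\dd(\qSig{z:Z}S(y,z))$ for each such $(y,r)$, closure under $\Sigma$ — that is \ref{dominance:closure}, equivalent to \ref{dominance:axiom} by Lemma~\ref{dominance:axiom-is-closure} — yields membership, and one more appeal to equivalence-closure gives $\dd(\qSig{z:Z}(S\circ R)(x,z))$, so $S\circ R$ is $\dd$-valued.

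For the backward direction, suppose the $\dd$-valued relations are closed under composition and contain every identity relation. Applying the latter to $\Id_{\unittype}$ gives $\dd(\qSig{u:\unittype}(\star=u))$, and since that type is contractible it is equivalent to $\unittype$, so $\dd(\unittype)$ by equivalence-closure; this is \ref{dominance:unit}. For \ref{dominance:axiom}, fix $P,Q:\univ$ with $\dd(P)$ and a map $P\to\dd(Q)$, and put $X\defeq\unittype$, $Y\defeq P$, $Z\defeq\unittype$, $R(\star,p)\defeq\unittype$ and $S(p,\star)\defeq Q$. Then $R$ is $\dd$-valued because $\qSig{p:P}\unittype\simeq P$ and $\dd(P)$, and $S$ is $\dd$-valued because $\qSig{z:\unittype}Q\simeq Q$ together with $P\to\dd(Q)$; hence $S\circ R$ is $\dd$-valued by hypothesis, i.e. $\dd(\qSig{z:\unittype}(S\circ R)(\star,z))$. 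But $\qSig{z:\unittype}(S\circ R)(\star,z)\simeq (S\circ R)(\star,\star)=\qSig{p:P}(\unittype\times Q)\simeq P\times Q$, so $\dd(P\times Q)$ by equivalence-closure. This establishes \ref{dominance:axiom}, and $\dd$ is a dominance.

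I expect the only real obstacle to be the backward-direction encoding — choosing $R$ and $S$ over the one-point types so that their composite computes $P\times Q$ — together with the routine but slightly fiddly bookkeeping of the reshuffling equivalences (reassociating iterated $\Sigma$s, absorbing $\unittype$ factors, contracting singletons and sums over $\unittype$), all of which is supported by the machinery of Section~\ref{section:transport}. One could alternatively deduce the theorem from \ref{dominance:monad} by transporting along the relation/partial-function correspondence of Theorem~\ref{thm:ptl-fxn-char}, but carrying out the argument directly as above keeps it self-contained and avoids invoking the full $\rel$/$\ele$ apparatus.
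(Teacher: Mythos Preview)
Your proof is correct and follows essentially the same approach as the paper: the forward direction reassociates $\qSig{z:Z}(S\circ R)(x,z)$ and applies $\Sigma$-closure exactly as the paper does, and the backward direction uses the same $\unittype\to P\to\unittype$ encoding. The only cosmetic difference is that you take $R(\star,p)\defeq\unittype$ where the paper takes $R(\star,p)\defeq P$; your choice makes $\qSig{p:P}R(\star,p)\simeq P$ immediate, whereas the paper's requires the extra step $\qSig{p:P}P\simeq P\times P\simeq P$ using that $P$ is a proposition, but the remainder of the calculation is the same.
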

\begin{proof}
    It is easy to check that the identity relation is $\dd$-valued iff
    $\dd(\unittype)$ since $\qSig{y:Y}{x=y}$ is contractible.

    Let $\dd$ be a dominance and consider $\dd$-valued relations $R:X\to Y\to\univ$
    and $S:Y\to Z\to\univ$. We need to see that for any $x:X$, the type
    \[\qSig{z:Z}{\big(R; S(x,z)\big)}\defeq \qSig{z:Z}{\qEmpty{y:Y}{R(x,y)\times R(y,z)}}\]
    is dominant. By the reshuffling map $(z,(y,(r,s)))\mapsto ((y,r),(z,s))$, this
    type is equivalent to
    \[\sum_{(y,r):\qSig{y:Y}{R(x,y)}}\qSig{z:Z}{S(y,z)}.\]
    By assumption, both $\qSig{y:Y}{R(x,y)}$ and $\qSig{z:Z}{S(y,z)}$ are dominant.
    As $\dd$ is closed under $\Sigma$, we have that $R; S$ is $\dd$-valued.

    Conversely, let the $\dd$-valued relations be closed under composition, and
    suppose $\dd(P)$ and $P\to \dd(Q)$. Define $R:\unittype\to P\to\univ$ and
    $S:P\to\unittype\to\univ$ by
    \[R(\star,p) \defeq P,\]
    and
    \[S(p,\star) \defeq Q.\]
    For any $p:P$ we have $\qSig{u:\unittype}{S(p,u)} \simeq S(p,\star) = Q$, so that
    \[\qPi{p:P}{\dd(\qSig{u:\unittype}{S(p,u)} )},\]
    That is, $S$ is $\dd$-valued. Similarly, we have $(\qSig{p:P}{P}) \simeq
    (P\times P)\simeq P$, so that $R$ is $\dd$-valued. Then by assumption, $R;S$ is
    $\dd$-valued. Calculating $R;S(\star,\star)$ we see
    \[R;S(\star,\star) \defeq \qSig{p:P}{P\times Q}.\]
    As $P$ is a proposition, this type is equivalent to $P\times Q.$
\end{proof}
Collecting Theorems~\ref{dominance:monad} and~\ref{dominance:composition} and
recalling Definition~\ref{def:submonad}, we have
\begin{theorem}\label{dominance:characterization}
    The following are equivalent for any set of propositions $\dd$:
    \begin{enumerate}
        \item $\dd$ is a dominance;
        \item $\Lift_{\dd}$ is a submonad of $\Lift$.
        \item There is a composition function 
            \[-\kcomp -:(Y\to \Lift_{\dd}(Z))\to (X\to\Lift_{\dd}(Y))\to
            (X\to\Lift_{\dd}(Z)),\] restricting Kleisli
            composition for the $\Lift$ monad.
        \item The $\dd$-valued relations are closed under composition.
    \end{enumerate}
\end{theorem}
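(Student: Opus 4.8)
The plan is to reuse the two biconditionals already available in the excerpt — Theorem~\ref{dominance:monad} proves the equivalence of~(1) and~(2), and Theorem~\ref{dominance:composition} proves the equivalence of~(1) and~(4) — and to splice condition~(3) into this picture by establishing the two implications (2)~$\Rightarrow$~(3) and (3)~$\Rightarrow$~(4). Together with the cited equivalences this closes the cycle (1)~$\Leftrightarrow$~(2)~$\Rightarrow$~(3)~$\Rightarrow$~(4)~$\Leftrightarrow$~(1), so all four conditions are equivalent. Throughout I would write $i_X\colon\Lift_{\dd}(X)\to\Lift(X)$ for the inclusion and use that it preserves extents of definition, $\extent(i_X(u))\jeq\extent(u)$.

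For (2)~$\Rightarrow$~(3) I would simply unfold Definition~\ref{def:submonad}. A submonad structure on $\Lift_{\dd}$ supplies, for each $f\colon X\to\Lift_{\dd}(Y)$, a restricted Kleisli extension ${\klext{f}}'\colon\Lift_{\dd}(X)\to\Lift_{\dd}(Y)$ satisfying $i_Y\comp{\klext{f}}' = \klext{(i_Y\comp f)}\comp i_X$. Setting $g\kcomp f \defeq {\klext{g}}'\comp f$ we obtain $i_Z\comp(g\kcomp f) = i_Z\comp{\klext{g}}'\comp f = \klext{(i_Z\comp g)}\comp i_Y\comp f = (i_Z\comp g)^{\sharp}\comp(i_Y\comp f)$, which is precisely the statement that $\kcomp$ restricts the Kleisli composition of $\Lift$. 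This step is routine bookkeeping.

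For (3)~$\Rightarrow$~(4) the restriction clause does the work. Given $\dd$-valued relations $R\colon X\to Y\to\univ$ and $S\colon Y\to Z\to\univ$, form the $\dd$-partial functions $f_R\colon X\to\Lift_{\dd}(Y)$ with $f_R(x)\defeq(\qSig{y:Y}R(x,y),-,\pr_0)$ and $g_S\colon Y\to\Lift_{\dd}(Z)$ with $g_S(y)\defeq(\qSig{z:Z}S(y,z),-,\pr_0)$; the suppressed witnesses are exactly the hypotheses that $R$ and $S$ are $\dd$-valued. Then $g_S\kcomp f_R\colon X\to\Lift_{\dd}(Z)$, so for each $x:X$ the extent $\extent\big((g_S\kcomp f_R)(x)\big)$ is a $\dd$-proposition by construction. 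On the other hand, since $\kcomp$ restricts the $\Lift$-Kleisli composition and $i$ preserves extents, this extent coincides with $\extent\big((i_Z\comp g_S)^{\sharp}((i_Y\comp f_R)(x))\big)$, which unfolds by the definition of $(-)^{\sharp}$ to the iterated sum $\sum_{(y,r):\qSig{y:Y}R(x,y)}\qSig{z:Z}S(y,z)$; reshuffling the quantifiers exactly as in the proof of Theorem~\ref{dominance:composition} identifies it with $\qSig{z:Z}(R ; S)(x,z)$. Passing $\dd$ across this equivalence via Lemma~\ref{lemma:set-of-props-is-equiv-closed} shows that $\qSig{z:Z}(R ; S)(x,z)$ is dominant for every $x$, i.e.\ that $R ; S$ is $\dd$-valued, so the $\dd$-valued relations are closed under composition.

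I expect the only delicate point to be precision rather than depth. First, one must be explicit about the meaning of ``$\kcomp$ restricts the Kleisli composition of $\Lift$'', namely that $i_Z\comp(g\kcomp f)$ and $(i_Z\comp g)^{\sharp}\comp(i_Y\comp f)$ are the same map $X\to\Lift(Z)$, since this is what licenses reading off equality of extents pointwise in the (3)~$\Rightarrow$~(4) step. Second, as in Theorems~\ref{dominance:monad} and~\ref{dominance:composition}, the bookkeeping involving the unit — that $\dd(\unittype)$, equivalently that the identity relation is $\dd$-valued — should be handled alongside the composition data; it is cleanest to regard~(3) as also supplying identities, matching the ``contains the identity relation'' clause used in Theorem~\ref{dominance:composition}, so that the cycle genuinely returns to~(1). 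Everything else — the correspondence between $\dd$-valued relations and $\dd$-partial functions, and the $\Sigma$-reshuffling — has already been carried out in the two cited theorems, so no new computation is needed.
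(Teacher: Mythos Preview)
Your proposal is correct and follows the same approach as the paper, which simply says the result follows by ``collecting Theorems~\ref{dominance:monad} and~\ref{dominance:composition} and recalling Definition~\ref{def:submonad}''; you have spelled out the implications (2)~$\Rightarrow$~(3) and (3)~$\Rightarrow$~(4) that the paper leaves implicit in that remark. Your flag about the unit is apt: condition~(4) as stated in the theorem omits the ``contains the identity relation'' clause present in Theorem~\ref{dominance:composition}, and condition~(3) likewise says nothing about identities, so the cycle back to~(1) does need one to read these conditions as also carrying the unit data---the paper is silently relying on this.
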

Theorems \ref{dominance:monad} and \ref{dominance:composition} express essentially
the same fact: the $\dd$-partial functions are closed under composition iff
$\dd$ is a dominance. Theorem \ref{dominance:monad} expresses this fact by viewing
$\dd$-partial functions $X\pto_{\dd} Y$ as functions $X\to \Lift_{\dd}(Y)$, while
theorem \ref{dominance:composition} does so by viewing $\dd$-partial as $\dd$-valued
relations. We can relate the two views by restricting the functions $\ele$ and $\rel$
given above for the dominance of all propositions.
\begin{theorem}
    Assuming univalence, for any set of propositions $\dd$, the maps $\ele$ and
    $\rel$ lift to a section-retraction pair on the $\dd$-valued relations and
    $\dd$-partial functions.
\end{theorem}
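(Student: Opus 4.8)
The plan is to observe that the asserted statement is nothing more than the restriction, along a pair of compatible embeddings, of the section-retraction pair $(\rel,\ele)$ already built for the dominance of all propositions. First recall that a $\dd$-partial function $X\pto_{\dd}Y$ is by definition a map $X\to\Lift_{\dd}(Y)$, and that a $\dd$-valued relation $R:X\to Y\to\univ$ (one with $\qPi{x:X}{\dd(\qSig{y:Y}{R(x,y)})}$) is, after reshuffling the $\Sigma$ past the $\Pi$, the same data as a map $X\to\Lift'_{\dd}(Y)$, where I set $\Lift'_{\dd}(Y)\defeq\qSig{A:Y\to\univ}{\dd(\qSig{y:Y}{A(y)})}$. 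So it suffices to lift $\rel:\Lift(Y)\to\Lift'(Y)$ and $\ele:\Lift'(Y)\to\Lift(Y)$ to maps $\rel_{\dd}:\Lift_{\dd}(Y)\to\Lift'_{\dd}(Y)$ and $\ele_{\dd}:\Lift'_{\dd}(Y)\to\Lift_{\dd}(Y)$ compatible with the inclusions into $\Lift(Y)$, $\Lift'(Y)$, and then post-compose with $X\to-$.

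For $\rel_{\dd}$: given $(P,d,\varphi):\Lift_{\dd}(Y)$ with $d:\dd(P)$, the relation produced by $\rel$ is $A\defeq\lambda y.\qSig{p:P}{\varphi(p)=y}$, whose total space
\[\qSig{y:Y}{\qSig{p:P}{\varphi(p)=y}}\simeq\qSig{p:P}{\qSig{y:Y}{\varphi(p)=y}}\simeq P\]
is a family of singletons over $P$. Since $\dd$ is closed under equivalence (Lemma~\ref{lemma:set-of-props-is-equiv-closed}) and $\dd(P)$ holds, we obtain $\dd(\qSig{y:Y}{A(y)})$, so $\rel$ indeed lands in $\Lift'_{\dd}(Y)$. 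For $\ele_{\dd}$: given $(A,d):\Lift'_{\dd}(Y)$, the partial element $\ele(A,d)=(\qSig{y:Y}{A(y)},d,\pr_0)$ has extent of definition $\qSig{y:Y}{A(y)}$, and the carried component $d$ witnesses precisely $\dd$ of that extent, so $\ele$ lands in $\Lift_{\dd}(Y)$. In both cases the underlying data of the partial element or relation is untouched; only the dominance witness (which inhabits a proposition) is adjusted, so $\rel_{\dd}$ and $\ele_{\dd}$ commute with the inclusions $\Lift_{\dd}(Y)\hookrightarrow\Lift(Y)$ and $\Lift'_{\dd}(Y)\hookrightarrow\Lift'(Y)$.

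It remains to descend the section-retraction identity. We already know $\ele\comp\rel=\id_{\Lift(Y)}$ from the lemma immediately following the definitions of $\rel$ and $\ele$. Because $\dd$ is proposition-valued, the inclusion $\iota:\Lift_{\dd}(Y)\to\Lift(Y)$ is an embedding: using that $\dd$ selects propositions one checks $\Lift_{\dd}(Y)\simeq\qSig{u:\Lift(Y)}{\dd(\extent(u))}$, the fibres of whose first projection are the propositions $\dd(\extent(u))$. From the commuting squares one gets $\iota\comp(\ele_{\dd}\comp\rel_{\dd})=(\ele\comp\rel)\comp\iota=\iota$, and injectivity of the embedding $\iota$ forces $\ele_{\dd}\comp\rel_{\dd}=\id_{\Lift_{\dd}(Y)}$. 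Hence $(\rel_{\dd},\ele_{\dd})$ is a section-retraction pair on $\dd$-partial elements, and post-composing with $X\to-$ (which preserves this, being functorial in the obvious sense) yields the claim for $\dd$-partial functions and $\dd$-valued relations. Exactly as in the unrestricted case, univalence then upgrades this retraction to an equivalence — via Lemma~\ref{lemma:set-lifting} when $Y$ is a set and directly from univalence in general — giving the $\dd$-relativised form of Theorem~\ref{thm:ptl-fxn-char}.

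The only step needing genuine care is the verification that $\rel_{\dd}$ is well defined, i.e.\ the equivalence $\qSig{y:Y}{\qSig{p:P}{\varphi(p)=y}}\simeq P$ combined with closure of $\dd$ under equivalence; beyond that the argument is the routine fact that the restriction of a section-retraction pair along a pair of compatible embeddings is again a section-retraction pair, so I do not anticipate a serious obstacle.
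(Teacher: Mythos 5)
Your proposal is correct and takes essentially the same route as the paper: in both, the substantive point is that $\rel$ and $\ele$ restrict to the $\dd$-level because the total space $\qSig{y:Y}\qSig{p:P}\varphi(p)=y$ is equivalent to $P$ (respectively $f_e(x)$) and $\dd$ is closed under equivalence (Lemma~\ref{lemma:set-of-props-is-equiv-closed}), after which the composite on $\dd$-partial functions is the identity. The only cosmetic difference is that the paper re-verifies that identity pointwise via proposition extensionality, whereas you descend the already-proved identity $\ele\comp\rel = \id$ along the embedding $\Lift_{\dd}(Y)\hookrightarrow\Lift(Y)$; both are fine.
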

\begin{proof}
    Letting $X\relto_{\dd}Y$ be the type of $\dd$-valued relations from $X$ to $Y$,
    the lifted functions are 
    \begin{align*}
        r&:(X\relto_{\dd}Y) \to (X\to\Lift_{\dd}(Y))\\
        r((R,\varphi)) &= \lambda x.\Big(\qSig{y:Y}R(x,y) \; ,\; \varphi,\pr_0\Big),
    \end{align*}
    and
    \begin{align*}
        s&: (X\to\Lift_{\dd}(Y)) \to(X\relto_{\dd}Y)\\
        s(f) &= \lambda x,y. (\qSig{p:f_e(x)}f_v(p) = y,w),
    \end{align*}
    where $w$ arises from the fact that $f_e(x)$ is always a $\dd$-proposition,
    and
    that 
    \[\big(\qSig{y:Y}\qSig{p:f_e(x)}f_v(p) = y\big)\simeq f_e(x),\]
    using Lemma~\ref{lemma:set-of-props-is-equiv-closed}. The
    identity $r\comp s\htpy \id$ follows directly from proposition extensionality.
\end{proof}
The pair of functions given above are the restriction of the maps in
Lemma~\ref{lemma:set-lifting}. They fail to determine an equivalence, because we
require equality between general types. Hence, they would be equivalences in the
presence of univalence.
\begin{corollary}
    Assuming univalence, for any set of propositions $\dd$ and $X,Y:\univ$, the
    type $X\to\Lift_{\dd}(Y)$ is equivalent to the type
    \[\Sigma(R:X\to Y\to \univ),\qPi{x:X}{\dd(\Sigma(y:Y),R(x,y))}.\]
\end{corollary}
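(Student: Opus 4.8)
The plan is to derive the corollary from the theorem immediately preceding it, which already constructs maps
\[ r : (X\relto_{\dd} Y)\to (X\to\Lift_{\dd} Y), \qquad s : (X\to\Lift_{\dd} Y)\to (X\relto_{\dd} Y) \]
together with a homotopy $r\comp s\htpy\id$, where $X\relto_{\dd} Y$ is exactly the type $\qSig{R:X\to Y\to\univ}\qPi{x:X}\dd(\qSig{y:Y}R(x,y))$ appearing on the right of the corollary. So it remains only to exhibit a homotopy $s\comp r\htpy\id$; together with the given $r\comp s\htpy\id$ this makes $r$ and $s$ mutually inverse, hence equivalences by Theorem~\ref{thm:equiv-char}, and then $(X\to\Lift_{\dd} Y)\simeq(X\relto_{\dd} Y)$ as claimed. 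Univalence is precisely the ingredient that enables this last homotopy, exactly as proposition extensionality was what upgraded the analogous section--retraction pair in Lemma~\ref{lemma:set-lifting} to an equivalence when $Y$ was assumed to be a set.

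So fix a $\dd$-valued relation $(R,\varphi)$. Unfolding the definitions of $r$ and $s$ from the proof of the preceding theorem, $s(r(R,\varphi))$ is the relation sending $x,y$ to the type $\qSig{p:\qSig{y':Y}R(x,y')}\pr_0(p)=y$. I would reshuffle this, just as in the proof of Lemma~\ref{lemma:set-lifting}, first to $\qSig{y':Y}\qSig{w:R(x,y')}(y'=y)$, then to $\qSig{y':Y}(y'=y)\times R(x,y')$, and finally apply Lemma~\ref{lemma:path-families} (singletons are contractible) to obtain a chain of equivalences ending at $R(x,y)$. Thus for every $x:X$ and $y:Y$ there is an equivalence $s(r(R,\varphi))(x,y)\simeq R(x,y)$; this is the routine part, identical in spirit to earlier computations.

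The step where univalence is essential is converting this fibrewise equivalence of type families into an equality of the underlying relations: univalence gives $s(r(R,\varphi))(x,y)=R(x,y)$ for each $x,y$, and two applications of function extensionality then yield $\pr_0\bigl(s(r(R,\varphi))\bigr)=R$ as elements of $X\to Y\to\univ$. Since $\dd$ is proposition-valued (\ref{dominance:property}), the witness of $\dd$-valuedness transports along this path automatically, so $s(r(R,\varphi))=(R,\varphi)$ in $X\relto_{\dd} Y$, giving the desired homotopy $s\comp r\htpy\id$.

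The main obstacle is exactly this last move: without univalence one only knows that $s\comp r$ is fibrewise equivalent to the identity, not that it is homotopic to it, which is why the preceding theorem stops at a section--retraction pair and why the corollary must assume univalence; everything else --- the reassociation of nested $\Sigma$-types and the propositional transport of the $\dd$-data --- is bookkeeping already carried out in substance in Lemmas~\ref{ptl-functions:classical-categorical} and~\ref{lemma:set-lifting}.
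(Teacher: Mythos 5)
Your proof is correct and takes essentially the same route as the paper: the paper leaves the corollary implicit in the remark that the section--retraction pair of the preceding theorem ``would be equivalences in the presence of univalence,'' and your argument simply fills in that missing homotopy $s\comp r\htpy\id$ via the fibrewise equivalence $s(r(R,\varphi))(x,y)\simeq R(x,y)$, univalence, function extensionality, and propositional transport of the $\dd$-witness. The only nitpick is the parenthetical gloss on Lemma~\ref{lemma:path-families} (it is the total-space-of-paths characterisation, not literally ``singletons are contractible''), but the lemma you invoke is the right one for the step.
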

Lemma~\ref{lemma:set-lifting} also restricts to an arbitrary dominance. As it relies
on the fact that when $Y$ is a set, the sum $\qSig{y:Y}A(y)$ is a proposition iff
each $A(y)$ is (Lemma~\ref{lemma:propositional-families-have-propositional-sums}), this may
be surprising: the corresponding fact does not hold for an arbitrary dominance.
However, it is not the dominance of propositions that gives us
Lemma~\ref{lemma:propositional-families-have-propositional-sums}, but the $h$-level of
propositions.
\begin{corollary}
    If $Y$ is a set, then for any set of propositions $\dd$ and any type $X:\univ$,
    the type $X\to\Lift_{\dd}(Y)$ is equivalent to the type
    \[\Sigma(R:X\to Y\to \univ),\qPi{x:X}{\dd(\Sigma(y:Y),R(x,y))}.\]
\end{corollary}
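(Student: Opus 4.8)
The plan is to mirror Theorem~\ref{thm:ptl-fxn-char}, working $\dd$-relatively and trading univalence for the hypothesis that $Y$ is a set. Write $\Lift'_{\dd}(Y)\defeq\qSig{A:Y\to\univ}{\dd(\qSig{y:Y}{A(y)})}$. By the type-theoretic axiom of choice (Theorem~\ref{thm:ttchoice}), taking $B\defeq Y\to\univ$ and the predicate $A\mapsto\dd(\qSig{y:Y}{A(y)})$, the type on the right-hand side of the statement is equivalent to $X\to\Lift'_{\dd}(Y)$. Since composing with an equivalence on the left is an equivalence, it then suffices to construct an equivalence $\Lift_{\dd}(Y)\simeq\Lift'_{\dd}(Y)$ whenever $Y$ is a set; composing the two equivalences yields the claim.

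For the remaining equivalence I would restrict the maps $\rel$ and $\ele$ of Section~\ref{section:ptl-elems}. The map $\ele$ sends $(A,w)$ to $(\qSig{y:Y}{A(y)},w,\pr_0)$, whose extent of definition is precisely the type $w$ witnesses to be dominant, so $\ele$ restricts directly to $\ele_{\dd}:\Lift'_{\dd}(Y)\to\Lift_{\dd}(Y)$. For $\rel$, the extent of definition of $\rel(P,w,\varphi)$ is $\qSig{y:Y}{\qSig{p:P}{\varphi(p)=y}}$, which is equivalent to $P$ (a sum of singletons indexed by $P$); hence by Lemma~\ref{lemma:set-of-props-is-equiv-closed} it is again dominant, and $\rel$ restricts to $\rel_{\dd}:\Lift_{\dd}(Y)\to\Lift'_{\dd}(Y)$. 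Exactly as in the lemma preceding Lemma~\ref{lemma:set-lifting}, $\ele_{\dd}\comp\rel_{\dd}=\id$ holds with no hypothesis on $Y$: the relevant type-level equivalences are sums of singletons, the $\univ$-components agree by proposition extensionality, the value maps match under the induced identification, and the $\dd$-components are propositions by \ref{dominance:property}. For the other composite I would follow the proof of Lemma~\ref{lemma:set-lifting}: for $(A,w):\Lift'_{\dd}(Y)$ the first component of $\rel_{\dd}(\ele_{\dd}(A,w))$ is the family $y\mapsto\qSig{(y',a):\qSig{y':Y}{A(y')}}{y'=y}$, which is equivalent pointwise to $A(y)$. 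This is the only place the set hypothesis is used: the dominant type $\qSig{y:Y}{A(y)}$ is a proposition by \ref{dominance:is-prop}, so Lemma~\ref{lemma:propositional-families-have-propositional-sums} makes each $A(y)$ a proposition, and proposition extensionality promotes the pointwise equivalence to an equality of families; function extensionality together with \ref{dominance:property} then gives $\rel_{\dd}(\ele_{\dd}(A,w))=(A,w)$.

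The main point to watch — rather than a genuine obstacle — is the role played by the set hypothesis: just as the passage from the general $\rel$/$\ele$ pair to Lemma~\ref{lemma:set-lifting} used $h$-level rather than univalence, here too it is Lemma~\ref{lemma:propositional-families-have-propositional-sums} (hence the set hypothesis, with proposition extensionality) that converts the type-level equivalences arising in the construction into the literal equalities of dependent-function components needed to make $\rel_{\dd}$ and $\ele_{\dd}$ mutually inverse; this is exactly the step that required univalence in the preceding corollary. Everything else is the same reshuffling already carried out for $\Lift$ and $\Lift'$, and the predicativity phenomenon that $\Lift_{\dd}(Y)$ sits in $\univ[1]$ is harmless, since we only assert an equivalence of types in $\univ[1]$.
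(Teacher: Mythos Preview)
Your proposal is correct and follows essentially the same route as the paper: the paper simply remarks that Lemma~\ref{lemma:set-lifting} restricts to an arbitrary set of propositions~$\dd$, the key point being that Lemma~\ref{lemma:propositional-families-have-propositional-sums} only needs $\qSig{y:Y}A(y)$ to be a proposition (which \ref{dominance:is-prop} guarantees), not that it satisfies~$\dd$. You have spelled out this restriction in full detail, including the use of Lemma~\ref{lemma:set-of-props-is-equiv-closed} to make $\rel$ land in $\Lift'_{\dd}(Y)$ and the type-theoretic axiom of choice to pass between the relational and functional forms.
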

\index{submonad|)}
\index{dominance|)}

\section{Size issues}\label{section:size}
\index{universe|(}
Because the lifting raises universe levels, $\Lift$ and $\Lift_{\dd}$ do not form
monads as defined in Section~\ref{section:monads}. However, if we restrict our
attention to propositions at universe $\univ[0]$,
then the universe levels do not go past $\univ[1]$: the lifting operation is a monad
on $\univ[1]$. We use the current section to explain this.

If $X:\univ[0]$, then $\Lift(X):\univ[1]$, so $\Lift$ raises universe levels.
However, it turns out that the raising stops at
$\univ[1]$, assuming we take $\Prop$ to be $\Prop_{0}$. Our operations are of the form
$\dd:\univ[0]\to\univ[0]$, so for each
$j$ and $X:\univ[j]$, we have $\Lift^j$,
\[\Lift_{\dd}^{j}(X) \defeq \qSig{P:\univ[0]}{\dd(P)\times P\to X}.\]
Since this quantification is over $\univ[0]$, we have that
$\Lift_{\dd}^{j}(X):\univ[\max(j,1)]$. That is,
\[\Lift_{\dd}^{j}:\univ[j]\to\univ[\max(j,1)].\]
In other words, for $j>0$, we have $\Lift_{\dd}^{j}:\univ[j]\to\univ[j]$.
If we use resizing axioms, then we would have $\Prop:\univ[0]$, so that the
quantification in the definition of $\Lift(X)$ can be over a type in $\univ[0]$. Then
$\Lift^0_d:\univ[0]\to\univ[0]$, and lifting would not raise universe levels at all.

When $\dd$ satisfies~\ref{dominance:unit}, the unit $\eta$ can similarly be indexed
by universe levels, with type
\[\eta^j:\qPi{X:\univ[j]}{X\to\Lift^j(X)}.\]
Similarly, when $\dd$ satisfies~\ref{dominance:closure}, the Kleisli operator has
type
\[\qPi{X:\univ[j]}\qEmpty{Y:\univ[k]}(X\to\Lift^k(Y))\to\Lift^j(X)\to\Lift^k(Y),\]
so Kleisli composition has type
\[\qPi{X:\univ[j]}\qEmpty{Y:\univ[k]}\qEmpty{Z:\univ[l]}(Y\to\Lift^l(Z))\to(X\to\Lift^k(Y))\to
(X\to\Lift^l(Z)).\]
The point here is that even though lifting raises universe levels, it does so in a
coherent way. In particular, if $X,Y,Z:\univ[0]$, then we have that Kleisli
composition has type
\[(Y\to\Lift^0(Z))\to(X\to\Lift^0(Y))\to (X\to\Lift^0(Z)).\]
It is worth noting, moreover that the set of propositions we are most interested in, the \emph{Rosolini
propositions} (Definition~\ref{definition:rosolini}) arises in such a way that
Rosolini liftings are equivalent to a small type
(Lemma~\ref{lemma:delay-is-rosstruct}), so the lifting of interest is predicative.
\index{universe|)}

\section{Rosolini propositions}\label{section:rosolini}
\index{Rosolini proposition|(}
\index{dominance!Rosolini|see{ proposition, Rosolini}}
\index{type!of natural numbers!extended|(}
A particularly important dominance in synthetic domain theory is the dominance of
\emph{Rosolini propositions}. We approach Rosolini propositions via the
extended naturals, $\NI$. Recall the map $\lar{-}:\NI\to\univ$  which takes a sequence
(with at most one $1$) to the proposition that it takes the value $1$
(Section~\ref{section:delay}). The
\nameas{Rosolini propositions}{Rosolini proposition} are the types in the image of this map. Explicitly,
\index{proposition!Rosolini|see{Rosolini proposition}}
\begin{definition}\label{definition:rosolini}
$P:\univ$ is
Rosolini when
    \[\trunc{\qSig{u:\NI}{P=\lar{u}}}.\]
Traditionally, the family of Rosolini propositions is denoted with $\Sigma$, which we
avoid here due to notational clashes; we will use $\isRos$. 
    \defineopfor{$\isRos$}{isros}{Rosolini proposition}
\end{definition}

\index{Rosolini structure|seealso{Rosolini proposition}}
Alternatively we may define a \name{Rosolini structure} over $P$ to be a $u:\NI$ such
that $P=\lar{u}$. That is,
\[\RosStruct(P)\defeq \qSig{u:\NI}{P=\lar{u}},\]
    \defineopfor{$\RosStruct$}{rosstruct}{Rosolini structure}
so that $\isRos(P)=\trunc{\RosStruct(P)}$.

The Rosolini propositions arise from computational considerations: if $P$ is a
Rosolini proposition, we can see a sequence $\alpha:\NI$ such that $P\lequiv
\lar{\alpha}$ as a ``semi-decision procedure'' for $P$. Then, a proposition is
Rosolini if there exists a semi-decision procedure for it. However, this procedure is
abstract: we have made no assumption that there is an actual algorithm for such a
procedure. In Chapter~\ref{chapter:comp-as-prop}, we will consider semi-decidable
propositions and examine to what extent the Rosolini and semi-decidable propositions
align.
\index{type!of natural numbers!extended|)}

The Rosolini propositions also have a connection to analysis, discovered together with Auke
Booij: call a proposition $P$ \emph{Cauchy} if there exists a Cauchy real number
$r:\real$
such that
\[P\lequiv (0< r).\]
That is,
\[\isCauchy(P)\defeq \trunc{\qSig{r:\real}P\lequiv (0< r)}.\]
Here the Cauchy reals are defined following Bishop~\cite{bishop1967} as equivalence
classes of regular sequences $s:\nat\to\rat$. A sequence $s:\nat\to\rat$ is \emph{regular} when,
\[\isReg(s)\defeq \qPi{n,m:\nat}|s_n-s_m| \le \frac{1}{m+1} + \frac{1}{n+1}.\]
We will not discuss here the full construction of $\real$; what matters for us is
that for any real number $r:\real$, there exists a sequence of rationals converging
to it:
\[\qPi{r:\real}\qExists{s:\nat\to\rat}\isReg(s)\times
\left(\qPi{n:\nat}|r-s_n|<\frac{1}{n+1}\right).\]
\begin{theorem}
    A proposition is Cauchy iff it is Rosolini.
\end{theorem}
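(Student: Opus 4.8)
The plan is to prove the two implications separately, translating between a Rosolini structure (a sequence $u:\NI$ with $P = \lar{u}$) and a Cauchy witness (a real $r:\real$ with $P\lequiv(0<r)$). Since both $\isRos(P)$ and $\isCauchy(P)$ are propositions (they are truncations), it suffices to produce maps $\RosStruct(P)\to\isCauchy(P)$ and $\qSig{r:\real}(P\lequiv(0<r))\to\isRos(P)$; in each case I build an explicit witness on the nose and then truncate.

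First I would handle ``Rosolini implies Cauchy''. Given $u:\NI$, which we may take to be an increasing binary sequence, I define a regular sequence of rationals $s:\nat\to\rat$ by a partial-sum style construction, e.g.\ $s_n \defeq \sum_{k\le n} u_k\cdot 2^{-k}$ (or a similarly damped sum), so that $s$ is increasing, bounded, and regular in the sense of $\isReg$; its limit is a real $r_u:\real$ with $r_u > 0$ exactly when some $u_k = 1$, i.e.\ exactly when $\lar{u}$ holds. The verification that $\isReg(s)$ holds is a routine geometric-series estimate, and the equivalence $\lar{u}\lequiv (0<r_u)$ is immediate from the construction: if $u_k=1$ then $s_n\ge 2^{-k}$ for $n\ge k$ so $r_u\ge 2^{-k}>0$, and conversely $0<r_u$ forces some $u_k\ne 0$ by regularity (here one uses that the rationals $s_n$ approximate $r_u$ within $\tfrac{1}{n+1}$). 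Transporting along $P=\lar{u}$ gives $P\lequiv(0<r_u)$, hence $\RosStruct(P)\to\qSig{r:\real}(P\lequiv(0<r))$, and truncating gives $\isRos(P)\to\isCauchy(P)$.

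For the converse, suppose $r:\real$ with $P\lequiv(0<r)$. Using the fact quoted just before the theorem, there exists a regular sequence $s:\nat\to\rat$ with $|r-s_n|<\tfrac{1}{n+1}$ for all $n$; since being Rosolini is a proposition I may untruncate this existential. Now I want to convert $0<r$ into a single ``observation'' along a sequence. Define a decidable predicate $R:\nat\to\univ$ by $R(n)\defeq (s_n > \tfrac{2}{n+1})$ — this is decidable since the order on $\rat$ is decidable. I claim $(0<r)\lequiv (\qExists{n:\nat}R(n))$: if $0<r$ then picking $n$ large enough that $\tfrac{3}{n+1}<r$ gives $s_n > r-\tfrac{1}{n+1} > \tfrac{2}{n+1}$, so $R(n)$; conversely if $R(n)$ holds then $r > s_n - \tfrac{1}{n+1} > \tfrac{1}{n+1} > 0$. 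Since $R$ is decidable it has a characteristic function, hence factors through a sequence $\alpha:\cantor$, and by truncating $\alpha$ (the retraction $\cantor\to\NI$ of Section~\ref{section:delay}) I obtain $u:\NI$ with $\lar{u}\lequiv(\qExists{n:\nat}R(n))\lequiv(0<r)\lequiv P$; proposition extensionality upgrades this to $P=\lar{u}$, giving $\RosStruct(P)$, and truncating gives $\isCauchy(P)\to\isRos(P)$.

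\textbf{Main obstacle.} The routine-but-fiddly core is the constructive bookkeeping around regularity: extracting, constructively and without choice beyond what is already available, a rational witness to $0<r$ from the approximating sequence, and conversely checking the geometric estimates make the constructed $s$ genuinely regular. The conceptual subtlety to watch is that all the intermediate ``$\exists$''s must be kept inside a proposition so that untruncation is legitimate — this is fine because the final goals $\isRos(P)$ and $\isCauchy(P)$ are themselves propositions, so every step may freely untruncate its hypotheses, but one must be careful not to claim a \emph{structured} equivalence where only a logical one is available. I expect no genuine difficulty beyond this; the argument is essentially the observation that ``$r>0$ for a Cauchy real'' is exactly a semidecidable (affirmable) predicate.
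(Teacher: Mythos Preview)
Your proposal is correct and follows essentially the same approach as the paper: build a real from the binary sequence by a damped partial sum to go from Rosolini to Cauchy, and in the other direction extract an approximating regular sequence and test a decidable rational inequality (with a safety margin against the modulus) to produce a binary sequence in $\NI$. The only differences are cosmetic choices of weights and thresholds (you use $2^{-k}$ and $s_n>\tfrac{2}{n+1}$ where the paper uses $1/n$-style weights and a similar cutoff), and your write-up is in fact more careful about the bookkeeping than the paper's sketch.
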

\begin{proof}
    Let $P$ be Rosolini. As being Cauchy is a proposition, we can untruncate the
    witness that $P$ is Rosolini to get $\alpha:\NI$ such that
    $P = \lar{\alpha}$. Define $s:\nat\to\rat$ by 
    \[s_n = \sum_{m\le n}\frac{\alpha_n}{n}.\]
    Then $s$ is regular. We have that the real $r:\real$ defined by $s$ is greater
    than $0$ iff $\qExists{n:\nat}s_n > 0$, which happens precisely if
    $\qExists{n:\nat}\alpha_n = 1$. Then we have
    \[P\lequiv \lar{\alpha}\lequiv (0 < r).\]

    Conversely, let $P$ be Cauchy. As being Rosolini is a proposition, we have $r:\real$
    such that $P = ( 0 < r)$, and hence
    \[\qExists{s:\nat\to\rat}\isReg(s)\times \left(\qPi{n:\nat}|r-s_n|<\frac{1}{n+1}\right),\]
    and again, since we are trying to prove a proposition, we may assume an explicit
    regular sequence $s$.

    Define $\beta:\cantor$ by $\beta_n \defeq (0\le s_n + 1/n)$. Then take
    $\alpha:\NI$ to be the truncation of $\beta$.  We have 
    \[\lar{\alpha}\lequiv \left(\qSig{n:\nat} (0\le s_n + \frac{1}{n}\right) \lequiv
    (0\le r) \lequiv P.\qedhere\]
\end{proof}
It is worth noting that countable choice is not used in the above argument.

Before we consider whether the Rosolini propositions form a dominance, let us examine
the relationship to the delay monad and its quotient by weak bisimilarity. Observe
that we may form a lifting relative to Rosolini structures by
\[\Lift_{\RS}(X) \defeq \qSig{P:\univ}\RosStruct(P)\times (P\to X).\]
Note that since $\RosStruct$ is not a proposition, the same element $x:X$ has
multiple representatives in $\Lift_{\RS}(X)$. Nevertheless we can form a
(non-canonical) inclusion $\eta:X\to \Lift_{\RS}(X)$ by
taking $\eta(x)$ to be $(\lar{\alpha},(\alpha,\refl),\lambda u.x)$ where
$\alpha=\overline{0} = \lambda n.n=0$. This makes $\Lift_{\RS}$ a reorganization of
the definition of the delay monad.
\begin{lemma}\label{lemma:delay-is-rosstruct}
    For any type $X$, there is an equivalence
    \[\Lift_{\RS}(X) \simeq \Delay(X).\]
    Moreover, $\eta:X\to \Lift_{\RS}(X)$ lifts over this equivalence to
    $\now:X\to\Delay(X)$.
\end{lemma}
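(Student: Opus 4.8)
The plan is to recognise $\Lift_{\RS}(X)$ as a mere reorganisation of the type defining $\Delay(X)$. Concretely, I will exhibit a chain of equivalences from $\Lift_{\RS}(X)$ to representation~(i) of $\Delay(X)$ (the type $\qSig{\mu:\NI}{\qPi{n:\nat}{X^{\mu=\numeral{n}}}}$) and then compose with the already-established equivalence between representations~(i) and~(ii), the latter being the official definition of $\Delay(X)$.

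First I would reshuffle the nested $\Sigma$-types. Unfolding, $\Lift_{\RS}(X) = \qSig{P:\univ}\big((\qSig{u:\NI}{P=\lar u})\times(P\to X)\big)$; writing the cartesian product as a $\Sigma$ over a constant family, flattening, and swapping the two independent outer quantifiers $P$ and $u$, this is equivalent to
\[\qSig{u:\NI}\qSig{P:\univ}\qSig{q:P=\lar u}(P\to X).\]
For each fixed $u$, the inner type $\qSig{P:\univ}\qSig{q:P=\lar u}(P\to X)$ is equivalent to $\lar u\to X$ by Lemma~\ref{lemma:path-families-strong} (taking $A\jeq\univ$, $a\jeq\lar u$, and $B(P,q)\defeq(P\to X)$, constant in $q$); its inverse, applied to a triple of the form $(\lar u,\refl,\psi)$, returns $\psi$. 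Applying Theorem~\ref{thm:fiberwise-equivalence} fibrewise over $\NI$, I obtain $\Lift_{\RS}(X)\simeq\qSig{u:\NI}(\lar u\to X)$. Next, unfolding $\lar u=\qSig{n:\nat}(u_n=1)$ and using the equivalence $(u_n=1)\simeq(u=\numeral n)$ (established in the proof of the representation theorem for $\Delay$) fibrewise over $n$, together with Theorem~\ref{thm:fiberwise-equivalence} and currying, gives $(\lar u\to X)\simeq\qPi{n:\nat}((u=\numeral n)\to X)=\qPi{n:\nat}X^{u=\numeral n}$. Chaining these,
\[\Lift_{\RS}(X)\;\simeq\;\qSig{u:\NI}\qPi{n:\nat}X^{u=\numeral n},\]
which is exactly representation~(i) of $\Delay(X)$; composing with the equivalence between representations~(i) and~(ii) yields $\Lift_{\RS}(X)\simeq\Delay(X)$.

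For the statement about units, I would trace $\eta(x)=(\lar{\overline 0},(\overline 0,\refl),\lambda\_.x)$ through this chain, where $\overline 0=\numeral 0$. The reshuffling followed by the inverse of the map of Lemma~\ref{lemma:path-families-strong} sends $\eta(x)$ to $(\overline 0,\lambda\_.x)$ in $\qSig{u:\NI}(\lar u\to X)$, and constancy of the second component is preserved under the subsequent reindexing and currying, so $\eta(x)$ corresponds to $(\overline 0,\lambda n.\lambda\_.x)$ in representation~(i). Finally, the passage to representation~(ii) in the proof of the representation theorem sends $(\mu,\chi)$ to the sequence $\alpha$ with $\alpha(n)=\inl(\chi(n,-))$ when $\mu=\numeral n$ (a decidable condition) and $\alpha(n)=\inr\star$ otherwise; for $\mu=\numeral 0$ this gives $\alpha(0)=\inl x$ and $\alpha(n)=\inr\star$ for $n\ge1$, since $\numeral 0$ and $\numeral n$ disagree at index $0$, and this is precisely $\now(x)$.

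The mathematical content is shallow --- each step is a standard rearrangement --- so the main obstacle is purely bookkeeping: checking that the reshuffling steps compose as genuine equivalences of $\Sigma$-types, that universe-size considerations do not interfere (they do not, since the final form is visibly a small type), and above all that $\eta(x)$ is carried to $\now(x)$ on the nose rather than merely to something weakly bisimilar to it. The latter is pinned down by the observations that $\numeral 0=\numeral n$ holds exactly when $n=0$ and that the relevant values of $\chi$ are constant, so the witness chosen in the decidable case split is irrelevant.
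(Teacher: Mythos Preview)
Your proof is correct and follows essentially the same route as the paper: unfold $\Lift_{\RS}(X)$, reshuffle and contract the singleton $\qSig{P:\univ}(P=\lar u)$ to reach $\qSig{u:\NI}(\lar u\to X)$, then curry and reindex along $(u_n=1)\simeq(u=\numeral n)$ to obtain representation~(i), and finally invoke the equivalence with representation~(ii). You are more explicit than the paper in two respects --- you name the lemmas justifying each rearrangement, and you actually verify the ``moreover'' clause by tracing $\eta(x)$ through the chain to $\now(x)$, which the paper's proof omits.
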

\begin{proof}
    Simply manipulate the definition of $\Lift_{\RS}(X)$. We have
    \begin{align*}
        \Lift_{\RS}(X) &= \qSig{P:\univ}\RosStruct(P)\times (P\to X)\\
        &= \qSig{P:\univ}\qSig{\mu:\NI}(P=\lar{\mu})\times (P\to X) \\
        &\simeq \qSig{\mu:\NI}\lar{\mu}\to X \\
        &= \qSig{\mu:\NI}(\qSig{n:\nat}\mu_n = 1)\to X \\
        &\simeq \qSig{\mu:\NI}\qPi{n:\nat} X^{\mu=\overline{n}} \\
        &\simeq \Delay(X).\qedhere
    \end{align*}
\end{proof}
We have a map $q:\Lift_{\RS}(X)\to \Lift_{\isRos}(X)$ given by
\[q(P,d,\varphi) \defeq (P,|d|,\varphi).\]
\begin{theorem}\label{thm:qdelay-is-ros}
    The map $q:\Lift_{\RS}(X)\to \Lift_{\isRos}(X)$ is surjective. Moreover, when $X$
    is a set, there is
    an equivalence 
    \[\Lift_{\isRos}(X)\simeq D(X)/{\bisim}\]
    commuting with the quotient map $D(X)\to D(X)/{\bisim}$.

    \hspace{\mathindent}
    \begin{tikzcd}
        \Lift_{\RS}(X) \arrow[d,"q"] \ar[r,"\simeq"] & \Delay(X) \arrow[d,"q"] \\
        \Lift_{\isRos}(X) \arrow[r,"\simeq"] & \Delay(X)/{\bisim}
    \end{tikzcd}
\end{theorem}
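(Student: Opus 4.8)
The statement breaks into two parts, and I would treat them in turn. For surjectivity of $q$, fix $(P,r,\varphi):\Lift_{\isRos}(X)$, so that $r:\isRos(P)$, which is by definition $\trunc{\RosStruct(P)}$. Since being a surjection is a property and $\trunc{\fib_q(P,r,\varphi)}$ is a proposition, I may untruncate $r$ to an honest Rosolini structure $d:\RosStruct(P)$; then $(P,d,\varphi):\Lift_{\RS}(X)$ and $q(P,d,\varphi)=(P,|d|,\varphi)$, and since $\isRos(P)$ is a proposition we have $|d|=r$, which exhibits the required point of $\fib_q(P,r,\varphi)$. I would also record, since it is needed below, that when $X$ is a set the type $\Lift_{\isRos}(X)$ is a set: an identity $(P,-,\varphi)=(Q,-,\psi)$ reshuffles to a path $P=Q$, which is a proposition by Theorem~\ref{thm:propext-isprop} as $P$ and $Q$ are propositions, together with a path $\varphi=\psi$ lying over it, which is a proposition because $P\to X$ is a set ($n$-types form an exponential ideal, Theorem~\ref{lemma:ntypes-exp-ideal}); a sum of propositions over a proposition is a proposition. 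Likewise $\Delay(X)/{\bisim}$ is a set by construction of the set-quotient (Section~\ref{section:hits}).

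Next I would transport along the equivalence $e:\Lift_{\RS}(X)\simeq\Delay(X)$ of Lemma~\ref{lemma:delay-is-rosstruct} and put $g\defeq q\comp e^{-1}:\Delay(X)\to\Lift_{\isRos}(X)$. Unwinding, $g(d)$ is the partial element whose extent of definition is the proposition $\lar{d}\simeq\qSig{x:X}d\downarrow x$ ($d$ converges) and whose value, when this holds, is the unique $x:X$ with $d\downarrow x$, uniqueness coming from the $\isProp(\lar{d})$ component together with $X$ being a set. The claim is that $g$ respects weak bisimilarity: if $d_1\bisim d_2$, then $d_1\downarrow x\lequiv d_2\downarrow x$ for every $x$, so summing over $x:X$ gives $\lar{d_1}\lequiv\lar{d_2}$, hence $\lar{d_1}=\lar{d_2}$ by proposition extensionality, and under this identification the values agree, since $d_1\downarrow x_1$ and $d_2\downarrow x_2$ give $d_2\downarrow x_1$ by bisimilarity, forcing $x_1=x_2$. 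Therefore $g(d_1)=g(d_2)$, and by the universal property of the set-quotient (using that $\Lift_{\isRos}(X)$ is a set) $g$ factors uniquely as $g=\bar g\comp[-]$ for $\bar g:\Delay(X)/{\bisim}\to\Lift_{\isRos}(X)$; this $\bar g$ is exactly the map making the asserted square commute, the top edge being $e$ and the vertical edges $q$ and $[-]$.

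It then remains to show $\bar g$ is an equivalence, for which it suffices to show it is a surjection and an embedding (Theorem~\ref{thm:equiv-char}). It is a surjection because $g=\bar g\comp[-]$ is one ($q$ is a surjection by the first part and $e^{-1}$ an equivalence) and $[-]$ is. For the embedding, since both $\Delay(X)/{\bisim}$ and $\Lift_{\isRos}(X)$ are sets it is enough to prove $\bar g(u)=\bar g(v)\to u=v$: as $[-]$ is surjective and $u=v$ is a proposition, I may take $u=[d_1]$, $v=[d_2]$, so $g(d_1)=g(d_2)$, i.e. $q(e^{-1}d_1)=q(e^{-1}d_2)$; reading the computation above in reverse, equality of these partial elements yields $\lar{d_1}\lequiv\lar{d_2}$ and agreement of values, hence $d_1\downarrow x\lequiv d_2\downarrow x$ for all $x$, i.e. $d_1\bisim d_2$, so $[d_1]=[d_2]$.

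The main obstacle is the bookkeeping in the middle step: correctly identifying, across the chain of equivalences defining $e$ in Lemma~\ref{lemma:delay-is-rosstruct}, the extent and value of $g(d)$, and checking that pointwise logical equivalence of the convergence predicates $d_i\downarrow(-)$ really does produce both the path $\lar{d_1}=\lar{d_2}$ and the path of value maps lying over it. Everything else is routine manipulation of truncations, set-truncation, and the universal property of quotients.
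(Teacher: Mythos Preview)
Your proof is correct and follows essentially the same route as the paper: surjectivity of $q$ by untruncating the Rosolini witness, then pushing $q$ across the equivalence $\Lift_{\RS}(X)\simeq\Delay(X)$, checking the composite respects weak bisimilarity, factoring through the set-quotient, and showing the induced map is both surjective and an embedding by reducing to representatives. You are slightly more careful than the paper in two places---you spell out why $\Lift_{\isRos}(X)$ is a set, and in the embedding step you argue via $\lar{d_1}\lequiv\lar{d_2}$ together with agreement of values, whereas the paper's phrasing in terms of ``$\mu_x=\mu_y$'' is looser---but the shape of the argument is the same.
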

\begin{proof}
    To show that $q$ is surjective, we need to see that for any
    $x:\Lift_{\isRos}(X)$ the fiber of $q$ over $x$ is inhabited. I.e., we want an
    inhabitant of
    \[\trunc{\qSig{y:\Lift_{\RS}(X)}q(y)=x}.\]
    Consider $x=(P,r,\varphi)$, so that we have $r:\trunc{\RS(X)}$. As we are trying to
    show a proposition, we may assume we have some $r':\RS(X)$. Then for
    $y=(P,r',\varphi)$, we have $q(y)=x$.

    Composing $q$ with the equivalence $e:\Delay(X)\to\Lift_{\RS}(X)$ determines a
    map 
    \[f:\Delay(X)\to \Lift_{\isRos}(X).\]
    It is clear that $f$ respects bisimilarity; if $x,y:\Delay(X)$ are bisimilar,
    then we must have that the sequences determining $\mu_x,\mu_y:\NI$ defining $x$
    and $y$ are such that 
    \[\lar{\mu_x} \lequiv \lar{\mu_y};\]
    when $x$ and $y$ do take a value, it must be the same value, and so $\val(f(x))
    = \val(f(y))$.

    Then $f$ factors through a map 
    \[{f/{\bisim}} : \Delay(X)/{\bisim}\to \Lift_{\isRos}(X),\]
    so long as $\Lift_{\isRos}(X)$ is a set; this happens whenever $X$ is a set.
    This map is again surjective, as $f$ is.

    It remains to show that the extension $f/{\bisim}$ has propositional fibers. It
    is enough to show that if $f(x)=f(y)$ then $x\bisim y$. We have that $f(x)=f(y)$
    precisely when both $\mu_x=\mu_y$ and if $\mu_x(n)=\mu_y(n)=1$, then $x$ and $y$
    are equal to $\delay^{n}(a)$ for some $a:X$. These two conditions are the
    definition of bisimilarity.
\end{proof}
\index{Rosolini proposition|)}

\section{Choice principles and the dominance axiom}\label{section:dominance-choice}
\index{Rosolini proposition|(}
\index{choice!axiom of|(}
The Rosolini propositions form a dominance in, for example, the effective
topos~\cite{Rosolini1986}, but in general we cannot show that they do. We give in
this section a characterization of the amount of choice needed to show that the
Rosolini propositions form a dominance, starting with a weakening of countable
choice.
\index{choice!countable|(}
\begin{theorem}\label{thm:cc-weaker}
    For any type $A$, the following are equivalent
    \begin{enumerate}
        \item Choice from $\N$ to families of the form $n\mapsto (\alpha_n=1) \to A$,
            with $\alpha:\cantor$.
        \item Choice from Rosolini propositions to $A$.
    \end{enumerate}
\end{theorem}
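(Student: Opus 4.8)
The statement asserts the equivalence of two choice principles: $(1)$ choice from $\N$ over families of the form $n\mapsto (\alpha_n = 1)\to A$ for $\alpha:\cantor$, and $(2)$ choice from Rosolini propositions over $A$. The plan is to prove each implication by translating between the two "shapes" of index set, using the defining fact that a Rosolini proposition is (by definition, after untruncation) of the form $\lar{u}$ for some $u:\NI$, and that $\lar{u} \jeq \qSig{n:\nat}u_n = 1$. Since each statement is (being an instance of a choice principle) a proposition, I can untruncate freely whenever I am trying to prove one of them.

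\textbf{$(1)\Rightarrow(2)$.} Suppose $P:\Prop$ is Rosolini and $Q:P\to\univ$ is a family with $\qPi{p:P}\trunc{Q(p)}$; I must show $\trunc{\qPi{p:P}Q(p)}$. Since the goal is a proposition and $\isRos(P)$ holds, untruncate to get $u:\NI$ with $P = \lar{u}$; transporting along this equality, it suffices to treat $P$ as $\lar{u} = \qSig{n:\nat}u_n = 1$, with $\alpha \defeq u : \cantor$ (using that $\NI$ includes into $\cantor$). Now a family $Q$ over $\qSig{n:\nat}(\alpha_n = 1)$ is the same, by currying, as a family $\widetilde{Q}(n) \defeq \qPi{e:\alpha_n = 1}Q(n,e)$ over $\N$ — but I want the family $n\mapsto (\alpha_n=1)\to A$ rather than an arbitrary dependent family, so I instead use that $Q(n,e)$ can be packaged: the hypothesis $\qPi{p:\lar{u}}\trunc{Q(p)}$ curries to $\qPi{n:\nat}\qPi{e:\alpha_n=1}\trunc{Q(n,e)}$, and then to $\qPi{n:\nat}\big((\alpha_n = 1)\to \trunc{\qSig{e:\alpha_n=1}Q(n,e)}\big)$. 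Here the set $A$ in principle $(1)$ should be taken large enough to contain all the $\qSig{e:\alpha_n=1}Q(n,e)$; since $\alpha_n = 1$ is a proposition, each such type maps to $A$ appropriately once we set $A$ to be (a suitable type containing) $\qSig{n:\nat}\qSig{e:\alpha_n=1}Q(n,e)$ or, more cleanly, I apply $(1)$ with the family $n\mapsto (\alpha_n=1)\to \qSig{n':\nat}\qSig{e:\alpha_{n'}=1}Q(n',e)$. Applying $(1)$ yields $\trunc{\qPi{n:\nat}(\alpha_n=1)\to Q(\ldots)}$, which uncurries back to $\trunc{\qPi{p:P}Q(p)}$ after the necessary bookkeeping with the path $P = \lar{u}$.

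\textbf{$(2)\Rightarrow(1)$.} Conversely, given $\alpha:\cantor$ and a family $B(n)\defeq (\alpha_n = 1)\to A$ with $\qPi{n:\nat}\trunc{B(n)}$, I want $\trunc{\qPi{n:\nat}B(n)}$. The key observation is that each proposition $\alpha_n = 1$ is decidable, hence a fortiori Rosolini — indeed $(\alpha_n=1)$ is equivalent to $\lar{u^{(n)}}$ where $u^{(n)}$ is the sequence that is $1$ at position $0$ iff $\alpha_n=1$ and $0$ elsewhere, so $\RosStruct(\alpha_n=1)$ is inhabited. More to the point, the index type $\N$ itself does not need to be Rosolini; rather I should recognize that a function $\qPi{n:\nat}((\alpha_n=1)\to A)$ is equivalently a function out of the Rosolini proposition-like data indexed suitably. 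The cleanest route: note $\qPi{n:\nat}((\alpha_n = 1)\to A) \simeq (\qSig{n:\nat}\alpha_n = 1)\to A$ is \emph{false} in general (the product is not the function out of the sum unless $A$ is constant — which it is!), so actually since $A$ does not depend on $n$ or on the proof, we do have $\qPi{n:\nat}((\alpha_n=1)\to A) \simeq \big((\qSig{n:\nat}\alpha_n=1)\to A\big) = (\lar{\alpha'}\to A)$ where $\alpha'$ is the truncation of $\alpha$ in $\NI$ (using $\NI$ a retract of $\cantor$, Section~\ref{section:delay}, so $\lar{\alpha}\simeq\lar{\alpha'}$). Thus the family $n\mapsto \trunc{(\alpha_n=1)\to A}$ over $\N$ is the hypothesis, and I want $\trunc{(\lar{\alpha'}\to A)}$. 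This is choice over the single Rosolini proposition $\lar{\alpha'}$: apply $(2)$ with the Rosolini proposition $P\defeq \lar{\alpha'}$ and the constant family $Q\defeq \lambda p. A$; then $(2)$ gives $\big(\qPi{p:P}\trunc A\big)\to\trunc{\qPi{p:P}A}$, i.e. exactly $\big(\lar{\alpha'}\to\trunc A\big)\to\trunc{(\lar{\alpha'}\to A)}$. It remains to manufacture the hypothesis $\lar{\alpha'}\to\trunc A$ from $\qPi{n:\nat}\trunc{(\alpha_n=1)\to A}$: given $(m,e):\lar{\alpha'}$, the equivalence $\lar{\alpha'}\simeq\lar{\alpha}$ produces $(k,e'):\lar{\alpha}$, and feeding $e'$ through the untruncated-where-possible hypothesis gives $\trunc A$; since the target is a proposition this is legitimate.

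\textbf{Main obstacle.} The genuine content is routine $\Pi/\Sigma$ reshuffling, currying, and transport along $P = \lar{u}$; none of it is deep. The one point requiring care — and where I expect to spend the most effort getting the statement exactly right — is the role of the ambient type $A$ and of universe levels: in direction $(1)\Rightarrow(2)$ the family $Q$ is genuinely dependent on $p:P$, whereas principle $(1)$ is stated only for families of the special non-dependent shape $n\mapsto(\alpha_n=1)\to A$, so I must absorb the dependency of $Q$ into the codomain $A$ (taking $A$ to be something like $\qSig{p:P}Q(p)$ or its truncation-respecting variant) and then project back out, using that $\alpha_n=1$ is a proposition so that a map from it remembers no more than its value. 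Making this packaging precise without a universe blowup, and checking that the resulting instance of $(1)$ really does deliver $\trunc{\qPi{p:P}Q(p)}$ after uncurrying, is the crux; the rest follows by the manipulations sketched above together with proposition extensionality and function extensionality, both assumed throughout.
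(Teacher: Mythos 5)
Your $(2)\Rightarrow(1)$ direction is essentially the paper's argument and goes through (one quibble: the claim $\lar{\alpha}\simeq\lar{\alpha'}$ for the truncation $\alpha'$ of a general $\alpha:\cantor$ is false — take $\alpha$ constantly $1$, so that $\lar{\alpha}\simeq\nat$ while $\lar{\alpha'}\simeq\unittype$; you only have a logical equivalence between them, which is all you need since every use happens under a truncation).

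The $(1)\Rightarrow(2)$ direction has a genuine gap, in two respects. First, statement (2) is choice from a Rosolini proposition $P$ to the \emph{constant} family $A$, for the same fixed $A$ appearing in (1); the theorem asserts the equivalence for each fixed $A$. You instead set out to prove choice over $P$ for an arbitrary dependent family $Q:P\to\univ$, and to absorb the dependency you apply principle (1) with codomain $\qSig{n':\nat}\qSig{e:\alpha_{n'}=1}Q(n',e)$ in place of $A$. That instance of (1) is not among your hypotheses, so as written you have not derived $(2)$ from $(1)$ for the given $A$; the detour through dependent families is both unnecessary (take $Q\defeq\lambda p.A$) and is exactly what forces the illegitimate change of codomain. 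Second, and independently: after currying you hold something of the shape $\qPi{n:\nat}\big((\alpha_n=1)\to\trunc{\,\cdot\,}\big)$, but the hypothesis of (1) requires the truncation \emph{outside} the arrow, $\qPi{n:\nat}\trunc{(\alpha_n=1)\to\cdot}$. Passing between these two forms is precisely where decidability of $\alpha_n=1$ enters, via choice from decidable propositions (Theorem~\ref{thm:decidable-choice}); the paper's proof isolates this as the one non-formal step in its chain of seven statements ("$(3)\to(4)$ holds because $\alpha_n=1$ is decidable"), and your write-up never performs it — observing that $\alpha_n=1$ is a proposition does not give it. The repair is short: with the constant family $A$, curry $P\to\trunc{A}$ (after transporting along $P=\lar{u}$, as you do) to $\qPi{n:\nat}\big((\alpha_n=1)\to\trunc{A}\big)$, commute the truncation past the arrow using decidability, apply (1) at the same $A$, and uncurry inside the truncation to obtain $\trunc{P\to A}$.
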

\begin{proof}
Consider the following propositions:
    \begin{enumerate}[label=(\arabic*)]
  \item $\lar{\alpha} \to \trunc{A}$,
  \item $\left(\qSig{n:\N}{\alpha_n = 1}\right) \to \trunc{A}$,
  \item $\qPi{n:\N}{\left((\alpha_n = 1) \to \trunc{A} \right)}$,
  \item $\qPi{n:\N}{\trunc{(\alpha_n = 1) \to A}}$,
  \item $\trunc{\qPi{n:\N}{(\alpha_n = 1) \to A}}$,
  \item $\trunc{\left(\qSig{n:\N}{\alpha_n = 1}\right) \to A}$,
  \item $\trunc{\lar{\alpha} \to A}$.
\end{enumerate}
The implication $(4) \to (5)$ is the above instance of countable
choice, and the implication $(1) \to (7)$ is the above instance of
propositional choice. Note that $(3)\to (4)$ holds because $\alpha_n=1$ is
decidable. Hence the chain of implications
$(1) \to (2) \to (3) \to (4)\to (5)\to (6)\to (7)$ gives propositional
choice from countable choice, and the chain of implications
$(4) \to (3) \to (2) \to (1) \to (7) \to (6) \to (5)$ gives countable
choice from propositional choice.
\end{proof}
The above form of countable choice, where $A$ takes the form $\RosStruct(P)$ for some
$P$ is sufficient to prove that the Rosolini propositions form a dominance. In fact,
we can do no better.
\begin{theorem}\label{rosolini:choice-to-dominance}
    The following are equivalent
    \begin{enumerate}
        \item Choice from Rosolini propositions to Rosolini structures. I.e.,
            \[\isRos(P)\to(P\to \trunc{\RosStruct{Q}})\to \trunc{P\to\RosStruct(Q)}\]
        \item The Rosolini propositions form a dominance.
    \end{enumerate}
\end{theorem}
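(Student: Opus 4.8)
The plan is to prove the two implications separately, after first observing that $\isRos$ is automatically a set of propositions: condition~\ref{dominance:property} holds because truncations are propositions, and \ref{dominance:is-prop} holds because a Rosolini structure on $P$ exhibits $P$ as $\lar{u}$ for some $u:\NI$, which is a proposition by the definition of $\NI$, and since $\isProp(P)$ is itself a proposition this passes through the truncation. Hence condition~(2) amounts to $\isRos$ satisfying \ref{dominance:unit} and \ref{dominance:axiom} only, and the former holds unconditionally: the numeral $\numeral{0}:\NI$ has $\lar{\numeral{0}}\simeq\qSig{n:\nat}(0=n)$ contractible, so $\unittype=\lar{\numeral{0}}$ by proposition extensionality and $\RosStruct(\unittype)$ is inhabited. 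So in both directions the real content is the relationship between the choice principle~(1) and conditional conjunction (which, by Lemma~\ref{dominance:axiom-is-closure}, is the same as closure under $\Sigma$).

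For $(1)\Rightarrow(2)$ I would fix $P,Q$ with $\rho:\isRos(P)$ and $\sigma:P\to\isRos(Q)$ and prove $\isRos(P\times Q)$. Since this is a proposition, applying~(1) to $\rho$ and $\sigma$ yields $\trunc{P\to\RosStruct(Q)}$, and I may then untruncate everything in sight: fix $u:\NI$ with $P=\lar{u}$, and $h:P\to\NI$ together with, for each $p:P$, an equality $Q=\lar{h(p)}$. The key step is to build a single $v:\NI$ that ``runs $u$, and upon seeing a $1$ at some position $n$ --- which yields a point of $P=\lar{u}$ and hence a sequence $h(p):\NI$ --- continues by running $h(p)$'': concretely $v_k=1$ iff there is $n\le k$ with $u_n=1$ such that $(h(p))_{k-n}=1$, where $p:P$ is transported along $P=\lar{u}$ from $(n,e_n):\lar{u}$. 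This predicate in $k$ is decidable by bounded search, and since $u$ and each $h(p)$ carry at most one $1$, so does $v$, so $v$ genuinely determines an element of $\NI$; unwinding the definition gives $\lar{v}\lequiv\qSig{p:\lar{u}}\lar{h(p)}\lequiv\qSig{p:P}Q\simeq P\times Q$, whence $\isRos(P\times Q)$ by proposition extensionality together with closure of $\isRos$ under equivalence (Lemma~\ref{lemma:set-of-props-is-equiv-closed}).

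For $(2)\Rightarrow(1)$, assuming $\isRos$ is a dominance, I would fix $P,Q$, $\rho:\isRos(P)$, $\sigma:P\to\isRos(Q)$, and apply conditional conjunction to get $\isRos(P\times Q)=\trunc{\RosStruct(P\times Q)}$; by functoriality of truncation it then suffices to exhibit a map $\RosStruct(P\times Q)\to(P\to\RosStruct(Q))$. Given $(v,\tau)$ with $\tau:P\times Q=\lar{v}$, and given $p:P$: since $P$ is a proposition (from $\isRos(P)$), the maps $q\mapsto(p,q)$ and $\pr_1$ exhibit $Q\simeq P\times Q$, so $Q\lequiv\lar{v}$; as both $Q$ and $\lar{v}$ are propositions, proposition extensionality gives $Q=\lar{v}$ and hence $(v,-):\RosStruct(Q)$. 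Sending $(v,\tau)$ to $\lambda p.(v,-)$ gives the required map. The main obstacle is the construction of $v$ in the forward direction --- pinning down the ``sequential composition of semi-decision procedures'' carefully enough that the output really lands in $\NI$ and has extent of definition exactly $P\times Q$; the remaining arguments are routine bookkeeping with proposition extensionality and the standing function extensionality assumption. This theorem is precisely the instance, for the structural dominance $\RosStruct$ built from $\NI$, of the general ``choice implies dominance'' result (Theorem~\ref{dominance:choice-to-dominance}).
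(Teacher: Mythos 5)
Your proof is correct and follows essentially the paper's own route: the paper omits a direct proof of this theorem, deriving it as the instance $\DD=\RosStruct$ of Theorem~\ref{dominance:choice-to-dominance}, whose two maps $(X\to \DD(Y))\leftrightarrow \DD(X\times Y)$ are exactly your two directions, and whose required conditional conjunction for $\RosStruct$ is the conditional addition $\cadd$ of Lemma~\ref{lemma:rs-dom} --- precisely the sequence $v$ you build by hand (your version, placing the unique $1$ at position $n+j$, actually pins down the ``at most one $1$'' property more carefully than the paper's bounded-quantifier formulation of $\cadd$). Nothing essential is missing.
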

We omit the proof as we generalize this theorem to
Theorem~\ref{dominance:choice-to-dominance} below.
\begin{corollary}
    Choice from $\nat$ to families of the form $n\mapsto (\alpha_n=1)\to A$ implies
    that the Rosolini propositions form a dominance. Hence, countable choice implies
    that the Rosolini propositions form a dominance.
\end{corollary}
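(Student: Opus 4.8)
The plan is to deduce the corollary directly from Theorem~\ref{rosolini:choice-to-dominance} together with Theorem~\ref{thm:cc-weaker}. Since Theorem~\ref{rosolini:choice-to-dominance} tells us that the Rosolini propositions form a dominance exactly when we have choice from Rosolini propositions to Rosolini structures, it suffices to show that the hypothesis---choice from $\nat$ to families of the form $n\mapsto (\alpha_n=1)\to A$---implies that instance of choice. But this is precisely the content of Theorem~\ref{thm:cc-weaker}, applied with $A \defeq \RosStruct(Q)$: that theorem states that choice from $\nat$ to families $n\mapsto (\alpha_n=1)\to A$ is equivalent to choice from Rosolini propositions to $A$, and taking $A = \RosStruct(Q)$ gives exactly statement~(1) of Theorem~\ref{rosolini:choice-to-dominance}.

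So the first step is to instantiate Theorem~\ref{thm:cc-weaker} with $A = \RosStruct(Q)$ for an arbitrary type $Q$, obtaining that our hypothesis yields $\isRos(P)\to(P\to\trunc{\RosStruct(Q)})\to\trunc{P\to\RosStruct(Q)}$ for all $P, Q$. The second step is to feed this into Theorem~\ref{rosolini:choice-to-dominance} to conclude that the Rosolini propositions form a dominance. For the final sentence of the corollary, I would note that countable choice---choice from $\nat$ to arbitrary set-valued families---in particular gives choice from $\nat$ to families of the form $n\mapsto(\alpha_n=1)\to A$, since each type $(\alpha_n=1)\to A$ is a set whenever $A$ is (as $\RosStruct(Q)$ need not be a set, one should be slightly careful here: one restricts attention to the relevant instance, or observes that countable choice as stated in the excerpt is choice of $Y$ over $\nat$ for $Y(n)$ a set, and $(\alpha_n=1)\to\RosStruct(Q)$ is... well, here the cleanest route is that the chain of implications in the proof of Theorem~\ref{thm:cc-weaker} only ever uses the truncated family, so one works with $\trunc{(\alpha_n=1)\to A}$ throughout, which is a proposition hence a set). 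Thus countable choice implies the hypothesis, which implies the Rosolini dominance.

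I do not expect any genuine obstacle here: this corollary is a pure bookkeeping consequence of two theorems stated (and in the first case proved) earlier in the text. The only point requiring a small amount of care is the last implication, where one must check that countable choice as formulated for set-valued families actually subsumes the particular instance needed---and as just noted, this is handled by the fact that the argument of Theorem~\ref{thm:cc-weaker} passes through truncated (hence propositional, hence set-valued) families, so no issue about $\RosStruct(Q)$ failing to be a set actually arises. The write-up should be two or three sentences long.
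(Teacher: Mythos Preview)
Your proposal is correct and follows the same logical route as the paper: derive Rosolini choice from the hypothesis, then apply Theorem~\ref{rosolini:choice-to-dominance}. The only difference is presentational: the paper's proof unfolds the chain of implications from Theorem~\ref{thm:cc-weaker} inline (untruncating $\RosStruct(P)$ to get $\alpha$, using decidability of $\alpha_n=1$, applying the hypothesis, then the equivalence $(\qPi{n}(\alpha_n=1)\to A)\simeq(\lar{\alpha}\to A)$), whereas you simply cite Theorem~\ref{thm:cc-weaker} directly---your version is therefore shorter and arguably cleaner.
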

\begin{proof}
    We need to see that choice from $\nat$ to families of the form $n\mapsto
    (\alpha_n=1)\to A$ gives us Rosolini choice.
    So let $P$ and $Q$ be propositions. We wish to see
    \[\trunc{\RosStruct(P)}\to(P\to \trunc{\RosStruct(Q)}) \to \trunc{P\to\RosStruct(Q)}.\]
    In fact, we may untruncate $\RosStruct(P)$, so let $\alpha:\NI$ such that
    $P=\lar{\alpha}$. Then we need
    \[(\lar{\alpha}\to \trunc{\RosStruct(Q)}) \to \trunc{\lar{\alpha}\to\RosStruct(Q)}.\]
    Note that for all $n$, we have $(\alpha_n=1)\to \lar{\alpha}$. So then setting 
    \[A \defeq \RosStruct(Q),\]
    we have $(\alpha_n=1)\to \trunc{A}$. Since $\alpha_n=1$ is
    decidable, we have choice from $\alpha_n=1$ (Theorem~\ref{thm:decidable-choice}).
    Hence, we have
    \[\qPi{n:\nat}\trunc{(\alpha_n=1)\to A}.\]
    This is exactly the assumption of our choice principle, so we conclude
    \[\trunc{\qPi{n:\nat}(\alpha_n=1)\to A}.\]
    As $(\qPi{n:\nat}(\alpha_n=1)\to A)\simeq (\lar{\alpha}\to A)$, we have
    \[\trunc{\lar{\alpha}\to A}.\]
    And this concludes the proof, since
    \[\trunc{\lar{\alpha}\to A}\simeq \trunc{P\to \RosStruct(Q)}.\qedhere\]
\end{proof}

The reliance on choice we saw in Theorem~\ref{rosolini:choice-to-dominance} is an
instance of a more general phenomenon.

\begin{theorem}\label{dominance:choice-to-dominance}
    Let $\DD:\univ\to\univ$ select propositions (\ref{dominance:is-prop}) and have
    conditional conjunction (\ref{dominance:axiom}) and define $\dd(X)\defeq
    \trunc{\DD(X)}$. Then the following are equivalent
    \begin{enumerate}
        \item Choice from $\dd$-propositions to $\DD$-structures; i.e., for all
            $X,Y:\univ$,
            \[\DD(X)\to(X\to\trunc{\DD(Y)})\to\trunc{X\to \DD(Y)}\]
        \item The $\dd$-propositions satisfy the dominance axiom.
    \end{enumerate}
\end{theorem}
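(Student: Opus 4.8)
The plan is to prove both implications directly, using the fact that $\dd(X) \defeq \trunc{\DD(X)}$ is always a set of propositions (it is proposition-valued by definition of truncation, and selects propositions because $\trunc{\DD(X)} \to \DD(X) \to \isProp(X)$ requires one step of untruncation, which is legitimate since $\isProp(X)$ is itself a proposition). So $\dd$ is automatically a set of propositions, $\dd(\unittype)$ holds because $\DD(\unittype)$ does (as $\DD$ has conditional conjunction, it is a dominance in the sense of satisfying \ref{dominance:unit}; or more directly, $\unittype$ should already be $\DD$; in any case $\DD(\unittype)$ follows from conditional conjunction applied with $P = Q = \unittype$). The content is entirely in the equivalence between choice and the dominance axiom for $\dd$.

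For the direction (1) $\Rightarrow$ (2), I would fix propositions $P, Q:\univ$ together with $\rho : \dd(P) = \trunc{\DD(P)}$ and $\sigma : P \to \dd(Q) = P \to \trunc{\DD(Q)}$, and aim to show $\dd(P \times Q) = \trunc{\DD(P\times Q)}$. Since the goal is a proposition, I may untruncate $\rho$ to get an actual structure $d_P : \DD(P)$. Now I apply the choice principle (1) with $X \defeq P$ and $Y \defeq Q$: its hypothesis is exactly $\DD(P) \to (P \to \trunc{\DD(Q)}) \to \trunc{P \to \DD(Q)}$, and we have $d_P : \DD(P)$ and $\sigma : P \to \trunc{\DD(Q)}$, so we obtain an element of $\trunc{P \to \DD(Q)}$. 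Again, since the goal $\trunc{\DD(P\times Q)}$ is a proposition, I untruncate to get $g : P \to \DD(Q)$. Then $d_P : \DD(P)$ and $g : P \to \DD(Q)$ are precisely the inputs to conditional conjunction \ref{dominance:axiom} for $\DD$, yielding $\DD(P \times Q)$, and truncating gives the desired element of $\dd(P\times Q)$.

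For the direction (2) $\Rightarrow$ (1), I would fix $X, Y : \univ$ with $d_X : \DD(X)$ and $h : X \to \trunc{\DD(Y)}$, and aim for $\trunc{X \to \DD(Y)}$. First, $\DD$ selects propositions, so $\isProp(X)$, and $\DD(Y)$-hood forces $Y$ to be a proposition whenever $X$ holds; but to be safe I will note that $\trunc{\DD(Y)} \to \isProp(Y)$ via \ref{dominance:is-prop}, and when $X$ is inhabited we can derive $\isProp(Y)$ from $h$. From $d_X : \DD(X)$ we get $\dd(X)$ by $|-|$; from $h$ we get exactly $X \to \dd(Y)$. The dominance axiom (2) for $\dd$ then gives $\dd(X \times Y) = \trunc{\DD(X \times Y)}$. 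Now I want to pass from $\DD(X \times Y)$ back to $X \to \DD(Y)$. Since $X$ is a proposition, for any $p : X$ there is an equivalence $Y \simeq X \times Y$ (via $y \mapsto (p,y)$ with inverse $\pr_1$), and $\DD$ is closed under equivalence (it selects propositions, so by the argument of Lemma~\ref{lemma:set-of-props-is-equiv-closed}, using proposition extensionality, $\DD$ is equivalence-invariant). Hence $\DD(X\times Y) \to (X \to \DD(Y))$: given $p : X$ and a structure on $X \times Y$, transport it along the equivalence $Y = X\times Y$. Applying truncation functoriality (Lemma~\ref{lemma:trunc-func}) to this map takes $\trunc{\DD(X\times Y)}$ to $\trunc{X \to \DD(Y)}$, as required.

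The main subtlety — not really an obstacle, but the point that must be handled carefully — is the equivalence-invariance of $\DD$ in the second direction: $\DD$ is \emph{not} assumed to be a set of propositions, only to select propositions and have conditional conjunction, so I cannot invoke Lemma~\ref{lemma:set-of-props-is-equiv-closed} verbatim. But its proof only uses that $\DD$ selects propositions plus proposition extensionality (to turn an equivalence $P \simeq Q$ between propositions into an equality $P = Q$), both of which are available, so the argument goes through unchanged. A second point of care is that untruncation is used three times, each time into a proposition (the final goal, or $\isProp$ of something), so each use is justified by the universal property of propositional truncation. Everything else is routine bookkeeping with $\Sigma$-types.
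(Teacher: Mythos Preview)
Your proof is correct and matches the paper's: both use conditional conjunction for $(X\to\DD(Y))\to\DD(X\times Y)$ and transport along $Y = X\times Y$ (via proposition extensionality, using only that $\DD$ selects propositions) for the converse, with the paper packaging this as a single equivalence $\trunc{X\to\DD(Y)}\simeq\trunc{\DD(X\times Y)}$ under the common hypotheses rather than treating the two implications separately. One irrelevant quibble in your preamble: $\DD(\unittype)$ does \emph{not} follow from conditional conjunction with $P=Q=\unittype$, since that rule already requires $\DD(P)$ as input --- but since the theorem concerns only the dominance axiom \ref{dominance:axiom} and not \ref{dominance:unit}, this has no bearing on the proof.
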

\begin{proof}
    We need to show that under the assumptions of $\trunc{\DD(X)}$ and
    $X\to\trunc{\DD(Y)}$ that we have
    \[\trunc{X\to \DD(Y)}\simeq \trunc{\DD(X\times Y)}.\]
    By properties of truncation, it is enough to assume $\DD(X)$ and
    $X\to\trunc{\DD(Y)}$ and give maps
    \[(X\to \DD(Y))\leftrightarrow \DD(X\times Y).\]
    From left to right is simply the fact that $\DD$ has conditional conjunction.
    From right to left, if $(X\to \DD(Y))$ then $X\to\isProp(Y)$. Since ${X\to((X\times Y) = Y)}$ when $X$ and $Y$ are
    propositions, we have $(X\to \DD(Y)) \to (X\to\DD(X\times Y))$. By modus
    ponens, we are done.
\end{proof}

\nameas{Rosolini choice}{choice!Rosolini} (choice from Rosolini propositions to Rosolini structures) is
actually quite weak. We saw above that it follows from even a weakening of countable
choice, but it is quite a bit weaker than countable choice.
\begin{theorem}\label{thm:rosolini-choice}
    Each of the following principles alone implies Rosolini choice:
    \begin{enumerate}
        \item countable choice;
        \item the weakening of countable choice in Theorem~\ref{thm:cc-weaker};
        \item ``untruncated'' LPO: $\qPi{\alpha:\cantor}\lar{\alpha}+(\alpha=\lambda n.0)$;
        \item ``untruncated'' WLPO: $\qPi{\alpha:\cantor}\lar{\alpha} +
            \neg\lar{\alpha}$;
        \item ``truncated'' WLPO: $\qPi{\alpha:\cantor}\trunc{\lar{\alpha} +
            \neg\lar{\alpha}}$;
        \item Propositional choice: for all $P,A:\univ$ with $\isSet(A)$ and
            $\isProp(P)$,
            \[(P\to\trunc{A})\to \trunc{P\to A}.\]
    \end{enumerate}
\end{theorem}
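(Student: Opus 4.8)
The goal is to derive Rosolini choice from each of the six listed principles. By Theorem~\ref{rosolini:choice-to-dominance} (and its generalization Theorem~\ref{dominance:choice-to-dominance}), Rosolini choice is the statement that for all propositions $P,Q$,
\[\isRos(P)\to(P\to\trunc{\RosStruct(Q)})\to\trunc{P\to\RosStruct(Q)}.\]
Since being Rosolini is a proposition and the conclusion is a truncation, the standard move applies: we may untruncate the hypothesis $\isRos(P)$ to get an explicit $\alpha:\NI$ with $P=\lar{\alpha}$, so it suffices to prove $(\lar{\alpha}\to\trunc{\RosStruct(Q)})\to\trunc{\lar{\alpha}\to\RosStruct(Q)}$. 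Writing $A\defeq\RosStruct(Q)$ and unfolding $\lar{\alpha}=\qSig{n:\nat}\alpha_n=1$, each of the principles should reduce to filling in the chain of implications from Theorem~\ref{thm:cc-weaker}:
\[(1)\ \lar{\alpha}\to\trunc{A},\quad (4)\ \qPi{n:\nat}\trunc{(\alpha_n=1)\to A},\quad (5)\ \trunc{\qPi{n:\nat}(\alpha_n=1)\to A},\quad (7)\ \trunc{\lar{\alpha}\to A},\]
together with $(1)\to(4)$, which holds because $\alpha_n=1$ is decidable (Theorem~\ref{thm:decidable-choice} gives choice from $\alpha_n=1$), and $(5)\simeq(7)$ by the equivalence $(\qPi{n:\nat}(\alpha_n=1)\to A)\simeq(\lar{\alpha}\to A)$ used tacitly.

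\textbf{Handling the individual principles.} Items (1) and (2) are immediate: item (2) is precisely the weakening of countable choice $(4)\to(5)$ from Theorem~\ref{thm:cc-weaker}, which is the one remaining link in the chain above; and item (1), countable choice, implies (2) since any instance of the latter is an instance of the former. Item (6), propositional choice for sets, is even more direct: since $Q$ is a proposition, $A=\RosStruct(Q)=\qSig{u:\NI}Q=\lar{u}$ is a subtype of the set $\NI$, hence a set, so propositional choice applied with $P\defeq\lar{\alpha}$ gives exactly $(\lar{\alpha}\to\trunc{A})\to\trunc{\lar{\alpha}\to A}$, which is (7). For items (3), (4), (5) --- the various omniscience principles --- the idea is that omniscience lets us \emph{decide} (possibly up to truncation) whether $\lar{\alpha}$ holds, and in either case we can construct the required function $\lar{\alpha}\to\RosStruct(Q)$ directly. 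If $\lar{\alpha}$ is false, then the function exists vacuously (from the empty type). If $\lar{\alpha}$ holds, then $P$ is true, so by the second hypothesis $P\to\trunc{\RosStruct(Q)}$ we get $\trunc{\RosStruct(Q)}$; but $\lar{\alpha}$ holds means we have an explicit $n$ with $\alpha_n=1$, and since being Rosolini for $Q$... actually here we need to untruncate $\trunc{\RosStruct(Q)}$, which we can do because the conclusion $\trunc{\lar{\alpha}\to\RosStruct(Q)}$ is itself a truncation --- so inside the truncation we have $\RosStruct(Q)$ and thus the constant function $\lar{\alpha}\to\RosStruct(Q)$. Truncated WLPO (item 5) gives $\trunc{\lar{\alpha}+\neg\lar{\alpha}}$, and working inside the outer truncation we do the case split; untruncated WLPO (4) and untruncated LPO (3) give the case split without the truncation, a fortiori. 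Note untruncated LPO $\Rightarrow$ untruncated WLPO $\Rightarrow$ truncated WLPO, so really only the truncated WLPO argument needs to be written, with the observation that the stronger principles imply it.

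\textbf{Main obstacle.} There is no deep difficulty here --- the theorem is a bookkeeping exercise assembling facts already proven --- but the one point requiring care is making sure the truncation levels match up at each step. In particular, in the omniscience arguments one must be disciplined about \emph{when} one is allowed to eliminate a truncation: one can eliminate $\trunc{\RosStruct(Q)}$ and $\trunc{\lar{\alpha}+\neg\lar{\alpha}}$ only because the final target $\trunc{\lar{\alpha}\to\RosStruct(Q)}$ is a proposition, so all of this reasoning takes place "under" that outer truncation. The cleanest presentation is to first reduce, using untruncation of $\isRos(P)$ and the fact that the target is a truncation, to: "given $\alpha:\NI$, $(\lar{\alpha}\to\trunc{A})$, prove $\trunc{\lar{\alpha}\to A}$," and then dispatch (1),(2),(6) by citing the relevant earlier results, and (3),(4),(5) by the case-split argument above with the remark that (3)$\Rightarrow$(4)$\Rightarrow$(5) reduces the work to the truncated case. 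A concluding diagram of implications among the six principles (noting $(1)\Rightarrow(2)$, $(3)\Rightarrow(4)\Rightarrow(5)$) makes the logical structure transparent and avoids repeating essentially identical arguments.
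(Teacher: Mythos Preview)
Your proposal is correct and follows essentially the same approach as the paper: reduce (1) to (2), reduce (3) and (4) to (5) via the implications (3)$\Rightarrow$(4)$\Rightarrow$(5), handle (5) by a case split on $P+\neg P$, and handle (6) by direct application of propositional choice. One small point where you are actually more careful than the paper: for item (6) you verify that $\RosStruct(Q)$ is a set (as a subtype of $\NI$ when $Q$ is a proposition), which is required by the $\isSet(A)$ hypothesis in the stated form of propositional choice but is glossed over in the paper's proof; conversely, for item (5) the paper streamlines your ``work inside the outer truncation'' argument by observing that $P+\neg P$ is already a proposition (disjoint propositions), so the truncation in truncated WLPO is redundant and the case split can be done directly.
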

\index{Limited Principle of Omniscience}
\index{Limited Principle of Omniscience!Weak}
\begin{proof}
    \begin{enumerate}
        \item This follows from 2;
        \item Rosolini choice is the specialization of this principle to types
            of the form $\RosStruct(Q)$ for some $Q:\univ$;
        \item LPO implies WLPO, so this follows from 4;
        \item This follows from 5, since the truncated form is weaker;
        \item Let $\isRos{P}$ and let $Q$ be a proposition such that $P\to\isRos(Q)$.
            Then \[\qExists{\alpha:\cantor}P\simeq \lar{\alpha}.\]
            By WLPO, we then have $\trunc{P+\neg P}$. However, $P$ is a
            proposition, so then so is $P+\neg P$, and we have $P+\neg P\simeq
            \trunc{P+\neg P}$. Then we may perform a case analysis on $P$: if $P$ is
            true, then so is $\trunc{\RosStruct(Q)}$, and so we have
            $\trunc{P\to\RosStruct(Q)}$. If $P$ is false, then we vacuously have
            $\trunc{P\to\RosStruct(Q)}$.
        \item If $\isRos(P)$, then $P$ is a proposition, so by propositional choice
            we have \[(P\to\isRos(Q))\to \trunc{P\to\RosStruct(Q)}.\qedhere\]
    \end{enumerate}
\end{proof}

In light of the above, it is worth wondering how far we can go towards specifying a
class of partial maps via a class of propositions---indeed we cannot prove that any
dominance is different from both $\isProp$ and $\isContr$ without violating
classicality. 

So, if we were to work with dominances, choice would be unavoidable. The first
attempt at a solution is to allow $\DD(X)$ to be structure, rather than property.
This commits us to handling structure explicitly; in particular, we would need to
use definitions and constructions which respect structure. This seems unduly
cumbersome, but we develop this line of reasoning somewhat in the next section. We
will subsequently make use of this development in finding an approach to
partial functions which works for our purposes.

Note that Rosolini choice is not enough to give a version of Theorem~\ref{thm:dcpo}
for Rosolini propositions. However, full countable choice is enough to show that the
Rosolini propositions are closed under \emph{countable} joins.
\begin{theorem}
    If countable choice holds, then the Rosolini propositions are closed under
    countable joins in the lattice of all propositions. Hence, countable choice
    implies that the Rosolini lifting is an $\omega$-CPO.
\end{theorem}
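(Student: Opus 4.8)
The plan is first to unwind what has to be shown. The countable join of a family $P:\nat\to\Prop$ in the lattice of propositions is $\bigvee_{n}P_n \defeq \trunc{\qSig{n:\nat}P_n}$: it lies above every $P_n$ via $|(n,-)|$, and any proposition $Q$ with $P_n\to Q$ for all $n$ receives a map from $\qSig{n:\nat}P_n$ and hence from its truncation. So the content of the first claim is: if countable choice holds and each $P_n$ is Rosolini, then $\trunc{\qSig{n:\nat}P_n}$ is Rosolini. Since $\isRos$ is proposition-valued, while proving this we may freely untruncate hypotheses.

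The key step is an application of countable choice. By hypothesis $\qPi{n:\nat}\trunc{\RosStruct(P_n)}$, where $\RosStruct(P)=\qSig{u:\NI}(P=\lar{u})$. The family $n\mapsto\RosStruct(P_n)$ consists of sets: $\NI$ is a set (a subtype of $\cantor$), and for $u:\NI$ the type $P_n=\lar{u}$ is an identity type between two propositions in $\univ$ — as $\lar{u}$ is a proposition by definition of $\NI$ — hence itself a proposition, and so a set, by propositional univalence (Lemma~\ref{lemma:propext-to-propuniv}). Thus countable choice gives $\trunc{\qPi{n:\nat}\RosStruct(P_n)}$, and since our goal is a proposition we untruncate to obtain, for each $n$, a sequence $u^{(n)}:\NI$ together with $P_n=\lar{u^{(n)}}$.

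Next I would merge these sequences. Fix a bijection $\nat\simeq\nat\times\nat$, written $m\mapsto\langle n,k\rangle$, and define $\beta:\cantor$ by $\beta(\langle n,k\rangle)\defeq u^{(n)}(k)$ (each $u^{(n)}$ is in particular a binary sequence). Let $\nu\defeq f(\beta):\NI$, where $f:\cantor\to\NI$ is the truncation retraction of Section~\ref{section:delay}; recall $f$ preserves the property of taking the value $1$ up to logical equivalence (by bounded search, Lemma~\ref{lemma:untruncate-decidable-predicates}), so $\lar{\nu}$ is logically equivalent to $\trunc{\qSig{m:\nat}\beta(m)=1}$. Reindexing along the bijection,
\[
\lar{\nu}\;\lequiv\;\trunc{\qSig{n:\nat}\qSig{k:\nat}\,u^{(n)}(k)=1}\;=\;\trunc{\qSig{n:\nat}\lar{u^{(n)}}}\;\lequiv\;\trunc{\qSig{n:\nat}P_n}\;=\;\bigvee_{n}P_n,
\]
the penultimate step using $P_n=\lar{u^{(n)}}$. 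Both $\lar{\nu}$ and $\bigvee_n P_n$ are propositions, so proposition extensionality upgrades this to an equality; hence $(\nu,-):\RosStruct(\bigvee_n P_n)$ and $\isRos(\bigvee_n P_n)$. I expect the countable choice step to be the only real obstacle: without it one can only choose the semi-decision procedures $u^{(n)}$ pointwise under a truncation, which does not suffice to assemble the single sequence $\beta$ — this is the same choice-dependence that appears in Theorems~\ref{rosolini:choice-to-dominance} and~\ref{dominance:choice-to-dominance}.

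For the final sentence, fix a set $Y$. The relation $\le$ on $\Lift Y$ restricts to $\Lift_{\isRos}(Y)$, and $(\Lift Y,\le)$ is a directed-complete partial order with bottom $\bot$ by Theorem~\ref{thm:dcpo}. The element $\bot$ lies in $\Lift_{\isRos}(Y)$, since its extent $\zerotype$ is Rosolini: $\zerotype\simeq\lar{\infty}$, so $\zerotype=\lar{\infty}$ by proposition extensionality. Now any $\omega$-chain $u_\bullet:\nat\to\Lift_{\isRos}(Y)$ is in particular a directed family (for $i,j$ take $k=\max(i,j)$ and chain the inequalities), so by Theorem~\ref{thm:dcpo} it has a least upper bound $u_\infty$ in $\Lift Y$ with $\extent(u_\infty)=\trunc{\qSig{n:\nat}\extent(u_n)}=\bigvee_n\extent(u_n)$. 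Each $\extent(u_n)$ is Rosolini, so by the first part so is $\extent(u_\infty)$, i.e.\ $u_\infty\in\Lift_{\isRos}(Y)$; being the least upper bound in the larger poset $\Lift Y$, it is a fortiori the least upper bound in $\Lift_{\isRos}(Y)$. Hence $\Lift_{\isRos}(Y)$ has a least element and least upper bounds of $\omega$-chains, i.e.\ it is an $\omega$-CPO.
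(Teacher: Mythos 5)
Your proof is correct and follows essentially the same route as the paper's: apply countable choice to the family of Rosolini structures, untruncate (since being Rosolini is a proposition), and dovetail the resulting sequences into a single element of $\NI$ whose $\lar{-}$ is the countable join $\trunc{\qSig{n:\nat}P_n}$. The differences are minor: you perform the merge via a pairing bijection followed by the truncation retraction $f:\cantor\to\NI$ (which makes the "at most one $1$" condition automatic) where the paper defines $\beta$ directly by a lexicographic-minimality condition, and you additionally spell out the $\omega$-CPO consequence by restricting Theorem~\ref{thm:dcpo} to $\Lift_{\isRos}(Y)$ for a set $Y$, a step the paper leaves implicit.
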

\begin{proof}
    Let $P_i:\nat\to\Prop$ be a countable chain of Rosolini propositions. We need to
    see that $\trunc{\qSig{i:\nat}P_i}$ is a Rosolini proposition. We have by
    assumption,
    \[\qPi{i:\nat}\trunc{\qSig{\alpha_i:\NI}P_i=\lar{\alpha_i}}.\]
    By countable choice we then have
    \[\trunc{\qPi{i:\nat}\qSig{\alpha_i:\NI}P_i=\lar{\alpha_i}}.\]
    Since being Rosolini is a proposition, to prove that $\trunc{\qSig{i:\nat}P_i}$
    is Rosolini, it is enough to prove this from
    \[\qPi{i:\nat}\qSig{\alpha_i:\NI}P_i=\lar{\alpha_i}.\]
    Define $\beta:\NI$ to be the sequence such that $\beta(n) = 1$ iff there is a
    lexicographically minimal $(i,j):\nat\times\nat$ with $i+j= n$ such that
    $\alpha_i(j)=1$. It is clear that $\isProp\lar{\beta}$.

    Now observe that $\trunc{\qSig{i:\nat}P_i}$ is equivalent to
    $\trunc{\qSig{i,j:\nat}\alpha_i(j) = 1}$. By construction we have that
    \[\lar{\beta}\simeq\trunc{\qSig{i,j:\nat}\alpha_i(j) = 1}.\]
    Then $\beta$ provides a witness that $\trunc{\qSig{i:\nat}P_i}$ is Rosolini.
\end{proof}
\index{choice!axiom of|)}
\index{Rosolini proposition|)}
\index{choice!countable|)}

\section{Structural Dominances}\label{section:univ-dom}
\index{dominance!structural|(}
\index{structural dominance|see {dominance, structural}}
In order to define the notion of disciplined map below in
Section~\ref{section:disciplined-maps}, we need to relax the notion of dominance to
allow $\DD(X)$ to be structured. This structural version of a dominance is as
follows.
\begin{definition}\label{def:univ-dom}
    A \nameas{structural dominance}{dominance!structural} is a map $\dd:\univ\to\univ$ together with witnesses
    of Conditions~\ref{dominance:is-prop},~\ref{dominance:unit}, and~\ref{dominance:closure}.
    That is, we have
    \begin{enumerate}[label=\normalfont\textrm{D\arabic*.}]
            \setcounter{enumi}{1}
        \item a map $\qPi{X:\univ}{\dd(X)\to\isProp(X)}$,
        \item The unit type is dominant: we have $\upsilon:\dd(\unittype)$,
            \setcounter{enumi}{4}
        \item[\normalfont\textrm{D\arabic{enumi}.'}] we have map
            \[\sigma:\qPi{P:\univ}{\qEmpty{Q:P\to\univ}{\dd(P)\to(\qPi{p:P}{\dd(Q(p))})\to\dd(\qSig{p:P}{Q(p)})}}.\]
    \end{enumerate}
\end{definition}
Note that~\ref{dominance:unit}~and~\ref{dominance:closure}~are structure rather than
property in the absence of~\ref{dominance:property}.

Lemma~\ref{dominance:axiom-is-closure} gives an equivalence between
\ref{dominance:axiom} and \ref{dominance:closure} under the assumption that $\dd$
satisfies both~\ref{dominance:property} and~\ref{dominance:is-prop}. It is enough to
show only logical equivalence because both
~\ref{dominance:axiom}~and~\ref{dominance:closure} are propositions in this case.
When we only have~\ref{dominance:is-prop}, this is not enough. It seems that in this
case, we can only show that \ref{dominance:closure} is a retract of
\ref{dominance:axiom}.

Before giving the argument in detail, a sketch is useful: let $\dd$
select propositions, and fix $P:\univ$ with $\dd(P)$. The type of closure under
conditional conjunction, when applied to a type $Q$ is the same
as the type of $\Sigma$-closure applied to $\lambda p.Q$, since we have defined
$P\times Q$ as $\qSig{p:P}Q$ and $P\to \dd(Q)$ as $\qPi{p:P}\dd(Q)$. Conversely, since $P$ is a
proposition, we can show that any type family $Y:P\to\univ$ is equal to the constant
type family $\lambda q.\qSig{p:P}Y(p)$. Then given a witness of $\Sigma$-closure, we
get a witness of closure under conditional conjunction, by treating a type as a
constant type family; given a witness of closure under conditional conjunction, we
get a witness of closure under $\Sigma$ by passing back and forth along this
equality.

We will give the required homotopy by applying the following lemma.
\begin{lemma}\label{dom:sig:hom}
    Let $X$ be a type and $A,B:X\to\univ$ with $f:\qPi{x:X}A(x)\to B(x)$. Let
    $p:x=x'$ for $x,x':X$. Then
    \[\idtofun(\ap_B(p^{-1})) \comp f_{x} \comp \idtofun(\ap_{A}(p)) = f_{x'}.\]
\end{lemma}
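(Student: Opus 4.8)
The plan is to prove this by path induction on $p$. Since $\idtofun$, $\ap_B$, $\ap_A$ are all defined by path induction with $\refl$ sent to $\id$ (or $\refl$), the whole statement becomes trivial once $p$ is $\refl$: I would reduce to showing that $\idtofun(\ap_B(\refl^{-1})) \comp f_x \comp \idtofun(\ap_A(\refl)) = f_x$. Now $\refl^{-1}\jeq\refl$ by the computation in Lemma~\ref{lemma:path-concat}(i) (the $\refl$-case), so $\ap_B(\refl^{-1})\jeq\ap_B(\refl)\jeq\refl_{B(x)}$ and likewise $\ap_A(\refl)\jeq\refl_{A(x)}$. Then $\idtofun(\refl)\jeq\id$ for both type families, so the left-hand side reduces to $\id_{B(x)}\comp f_x\comp\id_{A(x)}$, which equals $f_x$ by the $\eta$ rule for functions (all these are judgemental, so $\refl$ witnesses the required equality).

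Concretely, first I would set up path induction: to prove $\qPi{x,x':X}\qPi{p:x=x'}\big(\idtofun(\ap_B(p^{-1}))\comp f_x\comp\idtofun(\ap_A(p))=f_{x'}\big)$, it suffices by $J$ to handle $x\jeq x'$ and $p\jeq\refl$. Then I would chase the definitional equalities noted above to collapse the left-hand side to $f_x$ and conclude with $\refl_{f_x}$. Strictly, one might prefer based path induction fixing $x$, which is equally fine here. If one wants to be scrupulous about $\refl^{-1}$, I would invoke the computation rule from Lemma~\ref{lemma:path-concat} which gives $\refl^{-1}=\refl$ (this holds judgementally in the $\refl$-case presentation used in the excerpt), or, alternatively, carry $\ap_B$ through the inversion using Lemma~\ref{lemma:ap-functorial}(iii) to rewrite $\ap_B(\refl^{-1})$ as $\ap_B(\refl)^{-1}$.

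There is no real obstacle here: the lemma is a routine path-induction calculation, and the only mild subtlety is bookkeeping about $\refl^{-1}$ and the $\eta$ rule, both of which are already discussed in the excerpt (Section~\ref{section:mltt} and Lemma~\ref{lemma:path-concat}). So the proof will be short: \emph{By path induction, assume $p$ is $\refl$; then both $\ap_A(\refl)$ and $\ap_B(\refl^{-1})$ are $\refl$, so $\idtofun$ of each is the identity, and the left side is $\id\comp f_x\comp\id\jeq f_x$.} The use of this lemma afterwards will presumably be to produce the homotopy witnessing that \ref{dominance:closure} is a retract of \ref{dominance:axiom}, by applying it with $p$ the proof that a type family $Y:P\to\univ$ (for $P$ a proposition) equals the constant family $\lambda q.\qSig{p:P}Y(p)$.
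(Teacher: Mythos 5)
Your proof is correct and matches the paper's own argument: both proceed by path induction on $p$, reduce to the case $p\jeq\refl$, and observe that $\ap_A(\refl)$ and $\ap_B(\refl^{-1})$ collapse to $\refl$ so that the composite is definitionally $f_x$. Your version merely spells out the bookkeeping (inversion of $\refl$, $\idtofun(\refl)\jeq\id$, the $\eta$ rule) that the paper leaves implicit.
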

\begin{proof}
    By path induction on $p$. In the case that $p=\refl$, both $\ap_A(p)$ and
    $\ap_B(p)$ are reflexivity, and so we need to see $f_{x} = f_{x}$, which is
    obvious.
\end{proof}
For fixed $P:\univ$ and $d:\dd(P)$, define the type families $A,B:(P\to\univ)\to\univ$ by
\[A(Y) = \qPi{p:P}\dd(Yp)\]
and
\[B(Y) = \dd(\qSig{p:P}Yp).\]
Note that for $Q:\univ$, we have that $P\to\dd(Q) = A(\lambda p.Q)$ and $\dd(P\times
Q) = B(\lambda p.Q)$. So we can define 
\begin{align*}
    E &: \big(\qPi{Y:P\to\univ} A(Y)\to B(Y)\big)\to
    \qPi{Q:\univ}A(\lambda p.Q)\to B(\lambda p.Q)\\
    E(\theta) &\defeq \lambda Q.\theta(\lambda p.Q)
\end{align*}
Defining the candidate retraction requires some more work.
\begin{lemma}\label{homotopy-props}
    For $X:\univ$ and $Y:X\to\univ$, let $Z\defeq \qSig{x:X}{Y(x)}$. If $\isProp(X)$, then 
    \[\qPi{x:X}{(Y(x)\simeq Z)},\]
    and
    \[Z \simeq(X\times Z).\]
\end{lemma}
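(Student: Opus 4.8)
The statement has two parts, and both follow from the fact that $X$ is a proposition together with results already established about singletons and $\Sigma$-types. The plan is to prove the first equivalence $Y(x)\simeq Z$ for each $x:X$, and then deduce the second equivalence $Z\simeq(X\times Z)$ from it.

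For the first part, fix $x:X$. I would apply Lemma~\ref{lemma:path-families}, which gives $Y(x)\simeq \qSig{(x',y'):\qSig{x':X}Y(x')}x'=x$; the right-hand side is precisely $\qSig{w:Z}\pr_0(w)=x$. Now since $X$ is a proposition, the predicate $\lambda w.(\pr_0(w)=x)$ is, for each $w:Z$, a contractible type (it is a path type in a proposition, hence contractible by Lemma~\ref{lemma:prop-is-one-type} applied to the image under $\pr_0$, or more directly because any two elements of $X$ are equal and equalities in a proposition are unique). Therefore the projection $\qSig{w:Z}(\pr_0(w)=x)\to Z$ has contractible fibers, hence is an equivalence; chaining with the equivalence from Lemma~\ref{lemma:path-families} yields $Y(x)\simeq Z$. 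Alternatively, and perhaps more cleanly, I could observe that $Z = \qSig{x':X}Y(x')$ and that for a proposition $X$ the map $Z\to Y(x)$ given by $(x',y')\mapsto \transport^Y(e,y')$, where $e:x'=x$ is the canonical equality, is an equivalence with inverse $y\mapsto(x,y)$: the two round-trip composites are identities by transport computation and the fact that path types in $X$ are contractible (so the chosen equality $x'=x$ composes correctly). I expect the bookkeeping of this transport argument to be the only mildly fiddly point; it is the same kind of computation as in Lemma~\ref{lemma:path-families} itself.

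For the second part, I would use the first: applying $Y(x)\simeq Z$ pointwise (with function extensionality, or just fiberwise via Theorem~\ref{thm:fiberwise-equivalence}) gives $\qSig{x:X}Y(x)\simeq \qSig{x:X}Z$, i.e. $Z\simeq X\times Z$. Concretely, the map $Z\to X\times Z$ sends $(x,y)\mapsto(x,(x,y))$ and the map $X\times Z\to Z$ is the second projection; one round trip is the identity definitionally, and the other is the identity because $X$ is a proposition (so the first component can be adjusted). This is routine.

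\textbf{Main obstacle.} There is no serious obstacle: the whole lemma is an instance of ``contractible things can be absorbed,'' with $X$ being a proposition playing the role that makes the relevant singleton-style types contractible once an element of $X$ is available. The only care needed is to route the first equivalence through Lemma~\ref{lemma:path-families} (or an explicit transport argument) rather than trying to prove it directly, since $Y(x)$ is an arbitrary type and not assumed to be a proposition, so one cannot simply invoke logical equivalence of propositions.
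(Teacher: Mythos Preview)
Your proposal is correct, and your ``alternative'' approaches in each part are exactly what the paper does: for the first equivalence the paper gives the map $y\mapsto(x,y)$ with inverse by transport, and for the second it gives $(x,y)\mapsto(x,x,y)$ with inverse again by transport. Your additional routes via Lemma~\ref{lemma:path-families} and Theorem~\ref{thm:fiberwise-equivalence} are also valid and make the argument slightly more modular, but the paper opts for the direct construction.
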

\begin{proof}
    For the first equivalence, fix $x:X$, we have $Y(x) \to Z$ by $y\mapsto (x,y)$,
    with inverse given by transport.

    Moreover, we have $Z\to X\times Z$ by $(x,y)\mapsto (x,x,y)$ with inverse again
    given by transport.
\end{proof}
In the case where $Y$ is also proposition valued (that is, $\qPi{x:X}\isProp(Yx)$),
these equivalences are equalities, as all types involved are propositions. Moreover,
the above gives us an equality of predicates, by function extensionality.
\begin{lemma}\label{d-equality}
    If $X$ is a proposition and $Y:X\to\univ$ is a predicate, then 
    \[Y=\lambda x.\qSig{x:X}{Y(x)}.\]
\end{lemma}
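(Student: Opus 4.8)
The statement to prove is Lemma~\ref{d-equality}: if $X$ is a proposition and $Y:X\to\univ$ is a predicate (i.e.\ each $Y(x)$ is a proposition), then $Y = \lambda x.\qSig{x:X}{Y(x)}$. The plan is to reduce an equality of type families to a pointwise equality of types via function extensionality, and then to reduce each pointwise equality of types to a logical equivalence via proposition extensionality, using the hypotheses that $X$ is a proposition and $Y$ is a predicate to supply the needed propositionhood.

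First I would apply function extensionality: to prove $Y = \lambda x.\qSig{x:X}{Y(x)}$ it suffices to prove $\qPi{x:X}{\big(Y(x) = \qSig{x':X}{Y(x')}\big)}$. So fix $x:X$ and set $Z \defeq \qSig{x':X}{Y(x')}$; the goal becomes $Y(x) = Z$. Next I would invoke Lemma~\ref{homotopy-props}, whose first conclusion (with the roles of $X$ and $Y$ as stated there) gives an equivalence $Y(x)\simeq Z$ for every $x:X$, since $X$ is a proposition. Now I need to upgrade this equivalence to an equality. Since $Y$ is a predicate, $Y(x)$ is a proposition; and $Z = \qSig{x':X}{Y(x')}$ is a proposition because it is a sum of propositions over a proposition (this is exactly the closure of propositions under $\Sigma$, available in MLTT). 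Having an equivalence between two propositions, proposition extensionality (equivalently, via Lemma~\ref{lemma:lequiv-equiv}, the logical equivalence $Y(x)\lequiv Z$ extracted from the equivalence) yields $Y(x) = Z$, as required.

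There is no real obstacle here; the lemma is essentially a bookkeeping consequence of the two extensionality axioms we have assumed throughout, together with closure of propositions under $\Sigma$. The one point worth care is making sure the propositionhood hypotheses are used in the right places: $X$ being a proposition is what makes $Y(x)\simeq Z$ hold and also what makes the first projection $Z\to X$ behave so that $Z$ collapses to $Y(x)$; $Y$ being a predicate is what makes both sides propositions so that the equivalence can be converted to an equality by proposition extensionality. (Note that Lemma~\ref{homotopy-props} as stated already observes, in the remark following it, that when $Y$ is additionally proposition-valued these equivalences are equalities; so an alternative, even shorter route is simply to quote that remark.) I would write the proof in the short style: apply function extensionality, fix $x$, and note that $Y(x)$ and $\qSig{x':X}{Y(x')}$ are logically equivalent propositions by Lemma~\ref{homotopy-props}, hence equal.

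\begin{proof}
    By function extensionality, it suffices to show $\qPi{x:X}{\big(Y(x) =
    \qSig{x':X}{Y(x')}\big)}$. Fix $x:X$ and write $Z\defeq \qSig{x':X}{Y(x')}$. Since
    $X$ is a proposition, Lemma~\ref{homotopy-props} gives $Y(x)\simeq Z$; in
    particular $Y(x)\lequiv Z$. As $Y$ is a predicate, $Y(x)$ is a proposition, and
    $Z$ is a proposition as a sum of propositions over a proposition. Then proposition
    extensionality gives $Y(x) = Z$.
\end{proof}
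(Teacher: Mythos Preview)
Your proposal is correct and follows essentially the same approach as the paper: the paper does not give a separate proof environment for this lemma but notes just before it that when $Y$ is proposition-valued the equivalences of Lemma~\ref{homotopy-props} become equalities (by proposition extensionality), and then function extensionality yields the equality of predicates. Your proof spells out exactly these steps.
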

This lemma allows us to define, for our fixed $P:\univ$.
\begin{align*}
    F &: \big(\qPi{Q:\univ}A(\lambda p.Q)\to B(\lambda p.Q)\big)
    \to \qPi{Y:P\to\univ} A(Y)\to B(Y) \\
    F(\delta) &\defeq \lambda Y.\idtofun(\ap_B(w^{-1}_{Y})) \comp \delta_{\qSig{p:P}Yp}
    \comp \idtofun(\ap_A(w_{Y}))
\end{align*}
where $w_Y:Y\to \lambda p.\qSig{p:P}Y(p)$ arises from Lemma~\ref{d-equality}.
Finally, we can compute the composite $F\comp E$:
\begin{lemma}
    The map $E$ is a section of the map $F$.
\end{lemma}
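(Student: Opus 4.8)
The plan is to unwind the definitions of $E$ and $F$ and show that $F(E(\theta)) = \theta$ for every $\theta : \qPi{Y:P\to\univ} A(Y)\to B(Y)$, using the fixed $P:\univ$ and $d:\dd(P)$ from the setup. By function extensionality it suffices to show that for each $Y:P\to\univ$ we have $F(E(\theta))(Y) = \theta(Y)$ as maps $A(Y)\to B(Y)$. Unfolding $E$ gives $E(\theta) = \lambda Q.\theta(\lambda p.Q)$, so
\[F(E(\theta))(Y) = \idtofun(\ap_B(w_Y^{-1}))\comp E(\theta)_{\qSig{p:P}Yp}\comp \idtofun(\ap_A(w_Y)) = \idtofun(\ap_B(w_Y^{-1}))\comp \theta(\lambda p.\qSig{p:P}Yp)\comp \idtofun(\ap_A(w_Y)).\]

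First I would observe that $w_Y : Y = \lambda p.\qSig{p:P}Yp$ is an equality of type families (between $Y$ and the constant-on-$P$ family built from $Y$, available by Lemma~\ref{d-equality} since $P$ is a proposition and $\dd(P)$ forces $Y$ to be a predicate by~\ref{dominance:is-prop}). Then the key step is a naturality argument: the family $\theta$ has type $\qPi{Y:P\to\univ}A(Y)\to B(Y)$, so applying Lemma~\ref{dom:sig:hom} with $X\defeq P\to\univ$, the two type families $A,B:(P\to\univ)\to\univ$, the map $\theta$, and the path $w_Y : Y = \lambda p.\qSig{p:P}Yp$ yields exactly
\[\idtofun(\ap_B(w_Y^{-1}))\comp \theta(\lambda p.\qSig{p:P}Yp)\comp \idtofun(\ap_A(w_Y)) = \theta(Y).\]
Combining this with the displayed computation gives $F(E(\theta))(Y) = \theta(Y)$, and then function extensionality (twice: once in $Y$, and the instances of equality-of-functions coming from Lemma~\ref{dom:sig:hom}) finishes the proof.

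The main obstacle — really the only subtlety — is making sure the path used in Lemma~\ref{dom:sig:hom} is precisely $w_Y$ and that the direction of the inverses matches: Lemma~\ref{dom:sig:hom} is stated with $\idtofun(\ap_B(p^{-1}))\comp f_x\comp\idtofun(\ap_A(p)) = f_{x'}$ for $p:x=x'$, and here $x\defeq Y$, $x'\defeq \lambda p.\qSig{p:P}Yp$, $p\defeq w_Y$, which lines up with the definition of $F$ verbatim. One should also note that the ``$\lambda p.Q$'' appearing in the codomain of $E$ is a special case of the general $Y$ (take $Y\defeq\lambda p.Q$, for which $\qSig{p:P}Yp\simeq P\times Q\simeq Q$ since $P$ is a proposition), so that $E(\theta)_{\qSig{p:P}Yp}$ really does equal $\theta(\lambda p.\qSig{p:P}Yp)$ after unfolding; this identification is where function extensionality and the proposition-ness of $P$ are quietly used, but no new ideas are needed. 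I would present this as: unfold both sides, cite Lemma~\ref{dom:sig:hom} for the naturality square, and conclude by function extensionality.
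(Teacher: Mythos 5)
Your proposal is correct and takes essentially the same route as the paper: unfold $E$ and $F$, apply Lemma~\ref{dom:sig:hom} with $X\defeq P\to\univ$ and the path $w_{Y}$ from Lemma~\ref{d-equality}, and conclude $F(E(\theta))(Y)=\theta(Y)$ for each $Y$. The only small inaccuracy is your remark that function extensionality and the proposition-ness of $P$ are ``quietly used'' to identify $E(\theta)_{\qSig{p:P}Yp}$ with $\theta(\lambda p.\qSig{p:P}Yp)$ --- that identification is just $\beta$-reduction, and those facts enter only through Lemma~\ref{d-equality} supplying $w_{Y}$.
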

\begin{proof}
    Fix $\theta:\qPi{Y:P\to \univ}A(Y)\to B(Y)$. Then
    \[F(E(\theta)) = \lambda Y.\idtofun(\ap_B(w^{-1}_{Y})) \comp \theta_{\lambda p.\qSig{p:P}Yp}
    \comp \idtofun(\ap_A(w_{Y})) \]
    Note that $\theta$ has the type of $f$ in Lemma~\ref{dom:sig:hom}, with
    $X=P\to\univ$, and so we have for each $Y$,
    \[F(E(\theta))(Y) = \theta(Y),\]
    and so $F\comp E$ is homotopic to the identity.
\end{proof}

In the above, we fixed $P:\univ$ and $d:\dd(P)$. Letting $P$ and $d$ vary again, we
see that closure under $\Sigma$ is a retract of closure under conditional conjunction
That is, we have
\begin{corollary}\label{dominance:universe-axiom-is-closure}
    If $\dd:\univ\to\univ$ selects propositions, then
    the type that $\dd$ is closed under $\Sigma$ is a retract of the type
    \[\qPi{P,Q:\univ}{\dd(P)\to(P\to \dd(Q))\to \dd(P\times Q)}.\]
\end{corollary}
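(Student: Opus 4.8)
The plan is to package the constructions already made in the paragraphs preceding the corollary. Recall that for each fixed $P:\univ$ and $d:\dd(P)$ one has the type families $A_P(Y)\defeq\qPi{p:P}\dd(Y(p))$ and $B_P(Y)\defeq\dd(\qSig{p:P}Y(p))$, the map $E$ sending $\theta:\qPi{Y:P\to\univ}A_P(Y)\to B_P(Y)$ to $\lambda Q.\theta(\lambda p.Q)$, the map $F$ in the opposite direction --- whose definition uses only that $P$ is a proposition, and this follows from $d:\dd(P)$ by~\ref{dominance:is-prop} --- and the proof that $F\comp E$ is homotopic to the identity. Thus, for each such $P$ and $d$, the type $\qPi{Y:P\to\univ}A_P(Y)\to B_P(Y)$ is a retract of $\qPi{Q:\univ}A_P(\lambda p.Q)\to B_P(\lambda p.Q)$.

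First I would rewrite the two types being compared so that this fibrewise retraction applies. By the standard currying equivalences for $\Pi$-types (which need only function extensionality), and since the hypothesis $\dd(P)$ does not mention $Y$ and may be pulled to the front, the type expressing closure under $\Sigma$ (condition~\ref{dominance:closure}) is equivalent to
\[\qPi{P:\univ}\qPi{d:\dd(P)}\Big(\qPi{Y:P\to\univ}A_P(Y)\to B_P(Y)\Big).\]
Likewise, since $P\times Q$ is by definition $\qSig{p:P}Q$ and $P\to\dd(Q)$ is $\qPi{p:P}\dd(Q)$, the type $\qPi{P,Q:\univ}{\dd(P)\to(P\to \dd(Q))\to \dd(P\times Q)}$ expressing conditional conjunction (condition~\ref{dominance:axiom}) is equivalent to
\[\qPi{P:\univ}\qPi{d:\dd(P)}\Big(\qPi{Q:\univ}A_P(\lambda p.Q)\to B_P(\lambda p.Q)\Big).\]

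Next I would invoke the fact that dependent products preserve retracts: if $C(x)$ is a retract of $D(x)$ for every $x:X$, with section $s_x$ and retraction $r_x$ satisfying $r_x\comp s_x\htpy\id$, then $\qPi{x:X}C(x)$ is a retract of $\qPi{x:X}D(x)$ via $f\mapsto\lambda x.s_x(f(x))$ and $g\mapsto\lambda x.r_x(g(x))$, the homotopy $r\comp s\htpy\id$ following from function extensionality. Applying this over the type $\qSig{P:\univ}\dd(P)$ to the fibrewise retraction $(E,F)$ shows that the first displayed type is a retract of the second; composing with the two reorganization equivalences on both ends (a type equivalent to a retract of $Y$ is itself a retract of $Y$, and conversely) then gives that~\ref{dominance:closure} is a retract of~\ref{dominance:axiom}, as required.

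The genuinely difficult part has already been dispatched in the preceding lemmas --- namely the construction of the retraction $F$ and the verification, using Lemma~\ref{dom:sig:hom}, that $F\comp E\htpy\id$. What remains for the corollary is bookkeeping, and the only two points that call for care are checking that the hypothesis $\dd(P)$ can legitimately be moved to the front of the $\Pi$-type (it is independent of $Q$ and of $Y$), and checking that dependent function types send fibrewise retractions to retractions.
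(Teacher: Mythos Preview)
Your proposal is correct and follows the same approach as the paper: the paper simply states that ``letting $P$ and $d$ vary again'' in the preceding section-retraction pair $(E,F)$ gives the corollary, and you have spelled out precisely what this means---currying $\dd(P)$ to the front in both types, applying the fibrewise retraction over $\qSig{P:\univ}\dd(P)$, and using that $\Pi$ preserves retracts. There is nothing to add.
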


The main example of a structural dominance we are interested in is the
dominance of Rosolini structures.
\index{Rosolini structure}
\begin{lemma}\label{lemma:rs-dom}
    The map $\RosStruct$ is a structural dominance.
\end{lemma}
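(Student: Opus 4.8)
The plan is to verify directly the three defining conditions of a structural dominance for $\dd \defeq \RosStruct$, where $\RosStruct(P) \defeq \qSig{u:\NI}{P = \lar{u}}$ and every $u:\NI$ carries a proof that $\lar{u}$ is a proposition. For \ref{dominance:is-prop} (selecting propositions): given $(u,e):\RosStruct(P)$ with $e:P=\lar{u}$, transport the proof $\isProp(\lar{u})$ supplied by $u$ back along $e$ to get $\isProp(P)$. For \ref{dominance:unit}: take $u \defeq \numeral{0}$, whose underlying sequence takes value $1$ exactly at $0$, so $\lar{\numeral{0}} \lequiv \qSig{k:\nat}{0 = k}$ is contractible by Lemma~\ref{lemma:singleton-contractible}; in particular $\unittype \lequiv \lar{\numeral{0}}$, and since both are propositions, proposition extensionality yields $\unittype = \lar{\numeral{0}}$, hence $(\numeral{0}, -):\RosStruct(\unittype)$.

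The substantive step is closure. By Corollary~\ref{dominance:universe-axiom-is-closure}, since $\RosStruct$ selects propositions, it suffices to produce the non-dependent condition~\ref{dominance:axiom}: given $(u,e):\RosStruct(P)$ — write $\alpha$ for the underlying sequence of $u$, so $P \lequiv \qSig{n:\nat}{\alpha_n = 1}$ — and $h:P\to\RosStruct(Q)$, construct an element of $\RosStruct(P\times Q)$. The idea is the usual one: run the semi-decision procedure for $P$ and, once it succeeds, the one for $Q$. Define $\delta:\cantor$ by setting $\delta_n = 1$ iff there is $k \le n$ with $\alpha_k = 1$ — which via $e$ furnishes an element $p:P$, and since $P$ is a proposition (the case just proved) both $p$ and $h(p) \defeq (\beta_p, e_p):\RosStruct(Q)$ are determined independently of the chosen witness — such that $(\beta_p)_{n-k} = 1$. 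This is decidable by bounded search (Lemma~\ref{lemma:untruncate-decidable-predicates}) together with decidability of equality in $\bool$, so $\delta$ is a genuine element of $\cantor$. Since $u:\NI$ has at most one index with $\alpha_k = 1$ and, for the resulting $p$, the sequence $\beta_p:\NI$ has at most one index of value $1$, the type $\lar{\delta}$ is a subsingleton, so $\delta$ underlies an element $w:\NI$. Unfolding the definitions gives
\[\lar{w} \lequiv \qSig{k:\nat}{(\alpha_k = 1) \times \lar{\beta_{p}}} \lequiv \qSig{k:\nat}{(\alpha_k = 1) \times Q} \lequiv \Big(\qSig{k:\nat}{\alpha_k = 1}\Big) \times Q \lequiv P \times Q,\]
using $e_p : Q = \lar{\beta_p}$ and that $P$ is a proposition. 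As $\lar{w}$ and $P\times Q$ are both propositions, proposition extensionality promotes this chain to $P\times Q = \lar{w}$, so $(w,-):\RosStruct(P\times Q)$, and we conclude via Corollary~\ref{dominance:universe-axiom-is-closure}.

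I expect the main obstacle to be the bookkeeping in the closure step: making precise that $\delta$ is a well-defined decidable predicate even though $\beta_p$ depends on the (propositionally unique) witness $p:P$, and checking the displayed logical equivalences carefully. An alternative would be to transport the monad structure on $\Delay$ (Theorem~\ref{thm:delay-monad}) across the equivalence $\Lift_{\RS}(X)\simeq\Delay(X)$ of Lemma~\ref{lemma:delay-is-rosstruct} and argue as in Theorem~\ref{dominance:monad}, reading off a $\RosStruct(P\times Q)$ from the extent of definition of a suitable Kleisli extension; but verifying that the transported extension computes that extent as $\qSig{p:P}{\extent(\cdots)}$ unwinds to essentially the same sequence manipulation, so the direct argument above seems cleaner.
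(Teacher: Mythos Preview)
Your proposal is correct and follows essentially the same route as the paper: verify \ref{dominance:is-prop} and \ref{dominance:unit} directly (the paper also uses $\numeral{0}$), invoke Corollary~\ref{dominance:universe-axiom-is-closure} to reduce closure to the non-dependent form, and then build a combined ``conditional addition'' sequence from $\alpha$ and the $\beta_p$. The only cosmetic difference is the indexing scheme: the paper sets $(\alpha\cadd\beta)_n = 1$ iff $\exists k\le n.\,\alpha_k=1 \wedge \exists j\le n.\,\beta(k)_j=1$, whereas you use the offset $n-k$; your version has the mild advantage that it visibly takes value~$1$ at most once without further truncation.
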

\begin{proof}
    \ref{dominance:is-prop}~is immediate. For $\upsilon$ we take
    $\overline{0}$. To get $\sigma$, by
    Lemma~\ref{dominance:universe-axiom-is-closure} and the definition of
    $\RosStruct$, we first define a conditional addition
    $-\cadd-:\qPi{\alpha:\NI}{(\lar{\alpha}\to\NI)\to\NI}$, such that
    \[\lar{\alpha\cadd\beta} \Leftrightarrow \qExists{k:\nat}{(\alpha_k = 1)\times
    \qExists{j:\nat}{\beta(k)_j = 1}}.\]
    In fact, this specification is almost a complete definition. Define,
    \[(\alpha\cadd\beta)_n = 1 \Leftrightarrow \qExists{k\le n}{(\alpha_k = 1)\times
    \qExists{j\le n}{\beta(k)_j = 1}}.\]
    Since this is a bounded quantification of decidable predicates, this is again a
    decidable property on $\nat$. Moreover, since there can be at most one such $k$
    and $j$, we have $\isProp(\lar{\alpha\cadd\beta})$.

    Now, we define $\sigma$ as follows. Fix $P:\univ$ and $Q:P\to\univ$ with
    $\alpha:\NI$ with $w:\lar{\alpha}=P$ and for each $p:P$ a $\beta_p:\NI$ such
    that $Q(p)=\lar{\beta_p}$. We then have a map
    $\beta':\lar{\alpha}\to\NI$ with $\beta'(a) = \beta_p$ where $a$ transports over
    $w$ to $p$. Then a Rosolini structure for $\qSig{p:P}Q(p)$ is given by
    $\alpha\cadd\beta'$ together with the fact that $\alpha\cadd\beta'$ takes value 1
    precisely when there is $a:\lar{\alpha}$, and $\beta'(a)$ takes value 1.
\end{proof}
\index{dominance!structural|)}

\section{Disciplined maps}\label{section:disciplined-maps}
The takeaway from Section~\ref{section:dominance-choice} above is that the Rosolini
partial functions are not useful in the absence of choice principles: we need choice
a principle to compose Rosolini partial functions. Another approach is necessary. We
could instead work with Rosolini structures---with the delay monad---but doing this
means carrying around extra information, in particular, distinguishing between
partial elements based not on their eventual value, but on how long it takes to
compute this value. This is unsatisfactory; instead, we need a predicate on
partial functions which is closed under composition without choice.

The predicate on partial functions given by a dominance arises by restricting the
available extents of definition. Instead, we will give a predicate on partial
functions directly: Any proposition-index family $\DD:\Prop\to\univ$ gives rise to a
function from the lifting relative to that family $\Lift_{\DD}(Y)$ to the general
lifting $\Lift(Y)$, and so we get a map from $\DD$-partial functions
$X\to\Lift_{\DD}(Y)$ to general partial functions $X\to\Lift(Y)$. If $\DD$ is a
set of propositions, this map is an embedding, and so $X\to\Lift_{\DD}(Y)$ is
equivalent to its image, but this is not true if $\DD$ is not proposition-valued.
Then, instead of looking at $X\to\Lift_{\DD}(Y)$, we will look at its image in
$X\to\Lift(Y)$.

We can think of the space $X\to\Lift_{\DD}(Y)$ as being too wild for our purposes, so
we first \emph{tame} the space; the tamed maps we will call \emph{disciplined}. The
rest of the section lays this out precisely.

Given any family $\DD:\univ\to\univ$ which selects propositions and
$X:\univ$, we have a map
\begin{eqnarray*}
    e_{\DD}&:&\Lift_{\DD}(Y)\to \Lift(Y)\\
    e_{\DD}(P,d,\varphi) &=& (P,-,\varphi),
\end{eqnarray*}
where the omitted ``$-$'' follows from Condition~\ref{dominance:is-prop}. By
post-composition, this gives us a map 
\defineopfor{$\tame$}{tame}{disciplined map}
\[\tame:(X\to\Lift_{\DD}(Y))\to(X\to\Lift(Y)).\]
This map always factors through $(X\to\Lift_{\dd}(Y))$,
where $\dd(Y)=\trunc{\DD(Y)}$, and if $\DD$ satisfies
Condition~\ref{dominance:property}, then this map is an embedding. We will thus call
$e_D$ the \emph{canonical embedding}, even when $\DD$ carries structure.

We call a function in the image of $\tame_{\DD}$
\nameas{$\DD$-disciplined}{disciplined map}. That is,
\begin{definition}\label{def:discipline}
    A function $f:X\to\Lift(Y)$ is \emph{$\DD$-disciplined} if it is in the image of
    $\tame_{\DD}$.
    Formally,
\defineopfor{$\isDis$}{isdis}{disciplined map}
\defineopfor{$\Dis$}{dis}{disciplined map}
\begin{eqnarray*}
    \isDis_{\DD}&:&(X\to\Lift(Y)) \to \univ \\
    \isDis_{\DD}(f) &\defeq& \qExists{f':X\to\Lift_{\DD}(Y)}{\tame(f') = f}\\
    \Dis_{\DD}(X,Y)&:&\univ\\
    \Dis_{\DD}(X,Y) &\defeq& \qSig{f:X\to\Lift(Y)}{\isDis_{\DD}(f)}
\end{eqnarray*}
\end{definition}
\index{dominance!structural|(}
\begin{lemma}\label{partiality:closure}
    If $\DD$ is a structural dominance, for any $f:X\to\Lift Y$ and $g:Y\to \Lift Z$,
    we have
    \[\isDis_{\DD}(f)\to\isDis_{\DD}(g)\to\isDis_{\DD}(g\kcomp f).\]
\end{lemma}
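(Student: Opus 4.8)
The plan is to reduce the statement about the image of $\tame_{\DD}$ to a statement about honest maps into $\Lift_{\DD}$, using that being disciplined is a proposition. Concretely, suppose $f:X\to\Lift Y$ and $g:Y\to\Lift Z$ together with witnesses $d_f:\isDis_{\DD}(f)$ and $d_g:\isDis_{\DD}(g)$. Since $\isDis_{\DD}(g\kcomp f)$ is a proposition (it is a truncation), I may untruncate $d_f$ and $d_g$, obtaining explicit factorizations $f':X\to\Lift_{\DD}(Y)$ with $\tame(f')=f$ and $g':Y\to\Lift_{\DD}(Z)$ with $\tame(g')=g$. The goal is then to produce a map $h':X\to\Lift_{\DD}(Z)$ with $\tame(h') = g\kcomp f$; this will give an inhabitant of $\qSig{h':X\to\Lift_{\DD}(Z)}\tame(h')=g\kcomp f$, hence of $\isDis_{\DD}(g\kcomp f)$ after applying $|-|$.

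The key construction is Kleisli composition \emph{relative to $\DD$}. First I would note that whenever $\DD$ is a structural dominance, $\Lift_{\DD}$ carries a Kleisli-extension operator: given $g':Y\to \Lift_{\DD}(Z)$ we define ${g'}^{\sharp_{\DD}}:\Lift_{\DD}(X')\to\Lift_{\DD}(Z)$ on $(P,d,\varphi)$ exactly as for $\Lift$ — with extent $\qSig{p:P}\extent(g'(\varphi(p)))$ and value $(p,e)\mapsto \val(g'(\varphi(p)))(e)$ — but now the witness that the extent is a $\DD$-proposition comes from the closure condition~\ref{dominance:closure} of the structural dominance (applied to $P$ with $\DD(P)$ via $d$, and to the family $p\mapsto\extent(g'(\varphi(p)))$). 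Thus I set $h' \defeq {g'}^{\sharp_{\DD}}\comp f' : X\to\Lift_{\DD}(Z)$. This is a direct mirror of how Theorem~\ref{dominance:monad} uses~\ref{dominance:closure} to see $\Lift_{\dd}$ is closed under Kleisli extension; the only difference is that here we work with the structured $\DD$ rather than its truncation, so we cannot appeal to propositionality of the data but we do not need to.

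The remaining step is to check that $\tame_{\DD}$ commutes with the two Kleisli compositions, i.e. that $\tame(h') = \tame(g')\kcomp\tame(f') = g\kcomp f$. This is essentially a naturality/coherence statement for the canonical embedding $e_{\DD}:\Lift_{\DD}\to\Lift$: $e_{\DD}$ sends $\eta^{\DD}$ to $\eta$ and commutes with Kleisli extension, precisely because $e_{\DD}$ only forgets the $\DD$-structure on the extent and leaves the extent, the propositionality witness, and the value untouched, while both $\sharp_{\DD}$ and $\sharp$ build the new extent, witness and value from exactly the same data by the same formulas. Unwinding $\tame(h')(x) = e_{\DD}({g'}^{\sharp_{\DD}}(f'(x)))$ and $(g\kcomp f)(x) = g^{\sharp}(e_{\DD}(f'(x))) = g^{\sharp}(\tame(f')(x))$, both sides have extent $\qSig{p:f'_e(x)}\, g'_e(f'_v(x)(p))$ and the same value map, so they are judgementally (or at worst by the characterization of equality in $\Sigma$-types and proposition extensionality on the extents) equal. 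I expect this bookkeeping to be the only real content, and the main obstacle — such as it is — is being careful that $\sharp_{\DD}$ is well-defined, i.e. that~\ref{dominance:closure} of a \emph{structural} dominance (which is structure, not a mere property) genuinely supplies the witness needed; this is exactly Lemma~\ref{dominance:axiom-is-closure} / Corollary~\ref{dominance:universe-axiom-is-closure} read in the structural setting, and no choice is invoked anywhere, which is the whole point.
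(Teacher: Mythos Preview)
Your proposal is correct and follows essentially the same route as the paper: untruncate the disciplined witnesses, form $h' \defeq g'\kcomp f'$ using the $\Lift_{\DD}$-level Kleisli extension (well-defined by~\ref{dominance:closure}), and then observe that $e_{\DD}$ commutes with Kleisli extension because it is the identity on the extent and value components while the remaining component of $\Lift(Z)$ is $\isProp(-)$, a proposition. One small remark: since a structural dominance is \emph{defined} via~\ref{dominance:closure}, you need not invoke Lemma~\ref{dominance:axiom-is-closure} or Corollary~\ref{dominance:universe-axiom-is-closure} to justify $\sharp_{\DD}$.
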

\begin{proof}
    As $\isDis(g\kcomp f)$ is a proposition, we may assume
    $f':X\to\Lift_{\DD}(Y)$ and $g':Y\to\Lift_{\DD}(Z)$ such that $\tame(f')=f$ and
    $\tame(g')=g$. We need $h':X\to\Lift_{\DD}(Z)$ with $\tame(h')=g\kcomp f$. We
    claim that $h' \defeq g'\kcomp f'$ suffices. Note that tame is post-composition
    with $e$, that $e$ acts independently on the components of $z:\Lift_{\DD}(Z)$, and
    that $e$ is the identity on extent and value, so to see that $e
    (\klext{g'}(f' x)) = \klext{g}( f x)$, we need only the component witnessing that
    $\extent(x)$ satisfies~$\DD$. However, since this component takes values in a
    proposition, there is nothing to check.
\end{proof}
\begin{theorem}\label{disciplined-maps:compose}
    If $\DD$ is a structural dominance, then there is a composition operator,
    \[-\pcomp-:\Dis_{\DD}(Y,Z)\to\Dis_{\DD}(X,Y)\to\Dis_{\DD}(X,Z).\]
\end{theorem}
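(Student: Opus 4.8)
The plan is to reduce composition of disciplined maps to the closure property already established in Lemma~\ref{partiality:closure}, which tells us that disciplinedness is preserved by Kleisli composition in the big lifting monad $\Lift$. Recall that by Definition~\ref{def:discipline}, an element of $\Dis_{\DD}(X,Y)$ is a pair $(f,d)$ consisting of a function $f:X\to\Lift Y$ together with a proof $d:\isDis_{\DD}(f)$; so the type $\Dis_{\DD}(X,Y)$ is precisely the image $\im(\tame_{\DD})$ in $X\to\Lift Y$, with the second component being a propositional witness that $f$ lies in that image.

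First I would define the operator on underlying functions: given $(g,d_g):\Dis_{\DD}(Y,Z)$ and $(f,d_f):\Dis_{\DD}(X,Y)$, form the Kleisli composite $g\kcomp f : X\to\Lift Z$ using the composition operator for the $\Lift$ monad (available since $\Lift$ is a monad by the theorem in Section~\ref{section:ptl-elems}). Then I would invoke Lemma~\ref{partiality:closure} with the witnesses $d_f:\isDis_{\DD}(f)$ and $d_g:\isDis_{\DD}(g)$ to obtain $d_g \mathbin{?} d_f : \isDis_{\DD}(g\kcomp f)$ — more precisely, the lemma gives a function $\isDis_{\DD}(f)\to\isDis_{\DD}(g)\to\isDis_{\DD}(g\kcomp f)$, so applying it to $d_f$ and $d_g$ yields the required witness. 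We then set
\[
(g,d_g)\pcomp(f,d_f) \defeq \big(g\kcomp f,\ \text{(Lemma~\ref{partiality:closure})}(d_f)(d_g)\big).
\]
This is manifestly a well-typed element of $\Dis_{\DD}(X,Z)$, so the operator $-\pcomp-$ exists.

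There is essentially no hard part here: the entire content has been front-loaded into Lemma~\ref{partiality:closure}, whose proof already handled the one subtlety, namely that the $\DD$-structure component of the extent of a composite partial element need not be computed explicitly because it inhabits a proposition (by Condition~\ref{dominance:is-prop}). The only thing worth remarking on is that $\pcomp$ is defined here merely as an operator; associativity — which is the real payoff advertised in the chapter introduction, and which fails for the raw $\Lift_{\DD}$-composition — is not part of this statement and would be addressed separately, presumably by observing that $\kcomp$ on $\Lift$ is associative (from the monad laws for $\Lift$) and that the disciplinedness witnesses, being propositional, impose no further coherence constraints. So the proof of this theorem is a one-line construction once Lemma~\ref{partiality:closure} is in hand.
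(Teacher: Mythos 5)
Your proposal is correct and is exactly the paper's argument: the paper's proof of this theorem is simply ``Immediate from Lemma~\ref{partiality:closure}'', i.e.\ compose the underlying functions with $\kcomp$ and supply the disciplinedness witness via that lemma. Your remarks about propositional witnesses and associativity are accurate but go beyond what the statement requires.
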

\begin{proof}
    Immediate from Lemma~\ref{partiality:closure}.
\end{proof}
Observe that since the truncation $\dd$ of any structural dominance is
proposition-valued, the disciplined maps are a subset of the $\dd$-partial functions.
Also notice that if $\DD$ is in fact a dominance, then for any $f:X\to\Lift Y$ the type
\[\qSig{g:X\to\Lift_{\DD}(Y)}\tame(f)=g\]
is a proposition, and so we can remove the truncation. That is, for a dominance $\DD$, we
have that the $\DD$-disciplined maps are exactly the $\DD$-partial functions. The
situation is more interesting when $\DD$ is not proposition-valued. In this case, a
lifting
\[\Dis_{\DD}(X,Y)\to (X\to\Lift_{\DD}(Y))\]
corresponds to untruncating the existential in the definition of $\isDis$. To show
this rigorously, we use the following lemma.
\begin{lemma}
    For any $f:X\to\Lift(Y)$ and any structural dominance $\DD$, the function
    \[F_f:\big(\qPi{x:X}\DD(\extent (f
    x))\big)\to\qSig{g:X\to\Lift_{\DD}(Y)}f=\tame(g)\]
    defined by
    \[F_f(d) = \big(\lambda x.(f_e(x),d(x),f_v(x))\;\, w\big)\]
    where $w$ is the equality arising from the tuple $(\refl,-,\refl)$, is an
    equivalence.
\end{lemma}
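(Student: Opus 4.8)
The plan is to show that $F_f$ is an equivalence by exhibiting an explicit inverse and checking both round-trips are the identity. Given $f : X \to \Lift(Y)$ with components $f_e \defeq \extent \comp f$ and $f_v \defeq \val \comp f$, an element of the codomain is a pair $(g, w)$ where $g : X \to \Lift_{\DD}(Y)$ and $w : f = \tame(g)$. From such a pair I would extract, for each $x : X$, a witness that $\extent(f\, x)$ satisfies $\DD$: the equality $w$ gives $f(x) = \tame(g)(x) = e_{\DD}(g\, x)$, and since $e_{\DD}$ is the identity on the extent component, $\extent(f\, x)$ is equal to the first component of $g(x)$, which carries a $\DD$-structure by definition of $\Lift_{\DD}$; transporting along this equality gives the required element of $\DD(\extent(f\, x))$. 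This defines the candidate inverse $G_f : \big(\qSig{g:X\to\Lift_{\DD}(Y)}f=\tame(g)\big) \to \qPi{x:X}\DD(\extent(f\, x))$.

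Next I would verify $G_f \comp F_f \htpy \id$. Starting from $d : \qPi{x:X}\DD(\extent(f\, x))$, the pair $F_f(d)$ has $g(x) = (f_e(x), d(x), f_v(x))$ and the equality $w$ built from $(\refl, -, \refl)$; feeding this into $G_f$ recovers $d(x)$ up to transport along $\refl$, hence exactly $d(x)$. So this direction is immediate. For the other round-trip $F_f \comp G_f \htpy \id$, I start from $(g, w)$, produce $d \defeq G_f(g,w)$, and must show $F_f(d) = (g, w)$. By the characterization of equality in $\Sigma$-types (Theorem~\ref{thm:patheq}), it suffices to give a path $p$ between the first components $\lambda x.(f_e(x), d(x), f_v(x))$ and $g$ in $X \to \Lift_{\DD}(Y)$, together with a proof that $p$ transports the tuple-equality witness of $F_f(d)$ to $w$; but the latter is automatic since it lives in the proposition $f = \tame(g)$ (the identity type of functions into the lifting, which is a set when $Y$ is a set, or in any case we only need it to be ... actually we use function extensionality and that $f = \tame(g)$ need not be a proposition in general — here I would instead note the target of the second projection is a path type and use that $F_f(d)$ and $(g,w)$ agree there once the first components agree, via $\apd$).

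The first-component path $p$ is where the real content sits: I need $\lambda x.(f_e(x), d(x), f_v(x)) = g$ as maps $X \to \Lift_{\DD}(Y)$. By function extensionality it is enough to produce, for each $x : X$, an equality $(f_e(x), d(x), f_v(x)) = g(x)$ in $\Lift_{\DD}(Y) = \qSig{P:\univ}\DD(P) \times (P \to Y)$. Writing $g(x) = (P_x, e_x, \varphi_x)$, the equality $w$ (applied at $x$, then pushed through $\tame = e_{\DD} \comp -$ which is the identity on the $P$ and the $P \to Y$ components) gives us $q_x : f_e(x) = P_x$ with $f_v(x) =_{q_x} \varphi_x$ lying over it. Then I use Theorem~\ref{thm:patheq} again (iterated for the three-fold $\Sigma$): the path $q_x$ handles the first component; the $\DD$-component equality $\transport(q_x, d(x)) = e_x$ is free because $\DD$ selects propositions so $\DD(P_x)$ is a proposition; and the value-component equality is exactly the over-path extracted from $w$. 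Assembling these gives the pointwise equality, hence $p$, hence the result.

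The main obstacle I anticipate is bookkeeping rather than conceptual: carefully tracking how the equality $w : f = \tame(g)$ decomposes across the three components of $\Lift_{\DD}$ and ensuring the transports line up, particularly that $d \defeq G_f(g,w)$ is defined precisely so that the value-component over-path needed for $p$ matches the one coming out of $w$. Everything downstream of ``$\DD$ selects propositions'' trivializes the middle component, and function extensionality plus Theorem~\ref{thm:patheq} do the structural work, so once the decomposition of $w$ is set up cleanly the verification should be routine. I would organize the write-up by first fixing the decomposition of an arbitrary $w$ once and for all, then treating the two round-trips in that order, leaving the pointwise $\Sigma$-equality computation to a short explicit display.
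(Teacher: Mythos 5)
Your overall route (explicit quasi-inverse plus two round trips) is different from the paper's, and it can be made to work, but as written it has two genuine problems, and both occur exactly at the point where structural dominances differ from dominances. First, you justify the middle-component equality $\transport(q_x,d(x)) = e_x$ by saying it is ``free because $\DD$ selects propositions so $\DD(P_x)$ is a proposition''. That conflates condition~\ref{dominance:is-prop} with condition~\ref{dominance:property}: selecting propositions means each $P$ with $\DD(P)$ is a proposition, not that the type $\DD(P)$ is one. For a structural dominance $\DD(P)$ is deliberately allowed to carry structure (e.g.\ $\RosStruct(P)$), and if it were proposition-valued the lemma, and the whole notion of disciplined map, would collapse to the $\dd$-partial functions, as the paper notes immediately before this lemma. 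The equality you need is nevertheless true, but for a different reason: your own $G_f$ defines $d(x)$ as the transport of $e_x$ along the inverse of the extent-path $q_x$, so $\transport(q_x,d(x)) = e_x$ by cancellation of transports. You must argue it that way; nothing about propositionality is available. The same non-propositionality also means your first round trip is not ``immediate'': since $\DD(\extent(f\,x))$ is not a proposition, you need the extent-path you extract from the tuple equality $(\refl,-,\refl)$ (after $\happly$ and the $\Sigma$-path decoding) to compute to $\refl$, which is a small but real coherence check.

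Second, the dependent part of the round trip $F_f(G_f(g,w)) = (g,w)$ is left unproved. You correctly notice that $f=\tame(g)$ is not a proposition in general, but then gesture at ``$\apd$'' without an argument. After constructing $p:\pr_0(F_f(d)) = g$, Theorems~\ref{thm:transport-composition} and~\ref{thm:path-transport} reduce the remaining obligation to an equation of the shape $w_0 \ct \ap_{\tame}(p) = w$, where $w_0$ is the tuple-built equality in $F_f(d)$; discharging this means tracking how $p$ was assembled from $w$ through $\funext$/$\happly$ and the encode--decode maps of Theorem~\ref{thm:patheq}. That coherence is the actual work of the explicit-inverse route, and it is missing. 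It is precisely the bookkeeping the paper's proof avoids: there one observes that both types are the fibers over $f$ of total spaces each equivalent to $X\to\Lift_{\DD}(Y)$ (by reshuffling $\Sigma$ and $\Pi$ and contractibility of singletons), with the composite equivalence being the identity on the first component and $F_f$ on the fibers, so Theorem~\ref{thm:fiberwise-equivalence} finishes with no dependent-path computation. Either carry out the coherence honestly or switch to that total-space argument.
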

\begin{proof} 
    Letting 
    \[A(f) \defeq\qPi{x:X}\DD(\extent(f(x))),\]
    and
    \[B(f) \defeq\qSig{g:X\to\Lift_{\DD}(Y)} f=\tame(g),\]
    we have equivalences
    \begin{align*}
        \qSig{f:X\to\Lift(Y)}A(f)
        &\simeq X\to \Lift_{\DD}(Y) \\
        & \simeq \qSig{f:X\to\Lift(Y)}B(f).
    \end{align*}
    The composite equivalence $E$ is the identity on the
    first component, and $F_f$ on the second. That is, $E$ arises as $F_\Sigma$ from
    Section~\ref{section:equiv-contr}, and so by Theorem~\ref{thm:fiberwise-equivalence},
    each $F_f$ is an equivalence.
\end{proof}
\index{choice!countable|(}
\begin{theorem}\label{thm:disc-factoring}
    Fix a structural dominance $\DD$ and its associated set of propositions $\dd$.
    If countable choice holds, then for any $f:\nat\to\Lift(\nat)$, we have that $f$ is
    disciplined iff for all $n:N$, the extent of $f(n)$ is a $\dd$-proposition.
    Then, in particular, countable choice implies that $f:\nat\to\Lift(\nat)$ is
    Rosolini-disciplined iff $f$ factors through a Rosolini partial function.
\end{theorem}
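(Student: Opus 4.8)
The plan is to prove the biconditional directly. One direction is essentially immediate: if $f:\nat\to\Lift(\nat)$ is $\DD$-disciplined, then by definition $\isDis_\DD(f)$ holds, so there exists $f':\nat\to\Lift_\DD(\nat)$ with $\tame(f')=f$. Since $\tame$ is post-composition with the canonical embedding $e_\DD$, which acts as the identity on extents, we get $\extent(f(n))=\extent(\tame(f')(n))=\extent(f'(n))$ for each $n$, and $\extent(f'(n))$ carries a $\DD$-structure, hence is a $\dd$-proposition. This direction uses no choice.

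For the converse, suppose that for all $n:\nat$ the extent $\extent(f(n))$ is a $\dd$-proposition, i.e.\ $\qPi{n:\nat}\dd(\extent(f(n)))$, which unfolds to $\qPi{n:\nat}\trunc{\DD(\extent(f(n)))}$. I want to conclude $\isDis_\DD(f)$, i.e.\ $\qExists{f':\nat\to\Lift_\DD(\nat)}\tame(f')=f$. By the lemma immediately preceding this theorem, the type $\qSig{g:\nat\to\Lift_\DD(\nat)}f=\tame(g)$ is equivalent (via $F_f$) to $\qPi{n:\nat}\DD(\extent(f(n)))$, so it suffices to produce an element of $\trunc{\qPi{n:\nat}\DD(\extent(f(n)))}$. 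This is exactly where countable choice enters: we have $\qPi{n:\nat}\trunc{\DD(\extent(f(n)))}$ and countable choice (in the formulation $\ACtype{n:\nat}{Y(n)}$ with $Y(n)\defeq\DD(\extent(f(n)))$) delivers $\trunc{\qPi{n:\nat}\DD(\extent(f(n)))}$. Transporting along $F_f$ and projecting gives the desired truncated existence of $f'$.

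The main obstacle — such as it is — is just being careful about the interplay of truncations: $\isDis_\DD(f)$ is a truncated existential, $\dd$ is itself defined by truncating $\DD$, and the preceding lemma gives an \emph{untruncated} equivalence $F_f$. The key observation making everything fit is that $F_f$ is an equivalence of types, so applying $\trunc{-}$ functorially (Lemma~\ref{lemma:trunc-func}) yields $\trunc{\qPi{n:\nat}\DD(\extent(f(n)))}\simeq\trunc{\qSig{g:\nat\to\Lift_\DD(\nat)}f=\tame(g)}$, and the right-hand side is exactly $\isDis_\DD(f)$ (up to reassociating $\tame(g)=f$ versus $f=\tame(g)$, which is harmless since equality is symmetric). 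So the proof is really: unfold definitions, invoke the preceding lemma, and apply countable choice once.

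Finally, the last sentence of the statement is the specialization to $\DD\defeq\RosStruct$, which is a structural dominance by Lemma~\ref{lemma:rs-dom}; its associated set of propositions $\dd$ is $\isRos$. Here "factors through a Rosolini partial function" means $f$ lies in the image of $\tame_{\RosStruct}$, i.e.\ is $\RosStruct$-disciplined, which is precisely what the general biconditional gives. One should also note, as the excerpt does when discussing $\Lift_{\RS}$, that a $\RosStruct$-partial function $\nat\to\Lift_{\RosStruct}(\nat)$ is the same data as a Kleisli map for the delay monad (Lemma~\ref{lemma:delay-is-rosstruct}), so the statement can equivalently be read as: under countable choice, $f$ is Rosolini-disciplined iff it arises by taming some element-level delay computation. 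I would state the proof at the level of generality of $\DD$ and simply remark that the Rosolini case follows by Lemma~\ref{lemma:rs-dom}.
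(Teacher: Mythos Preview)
Your main argument is correct and matches the paper's proof exactly: apply the preceding lemma to get $\isDis_\DD(f)\simeq\trunc{\qPi{n:\nat}\DD(\extent(f n))}$, observe the easy implication to $\qPi{n:\nat}\dd(\extent(f n))$, and invoke countable choice for the converse.

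However, you misread the final sentence. ``Factors through a Rosolini partial function'' does \emph{not} mean $f$ lies in the image of $\tame_{\RosStruct}$ --- that is precisely the definition of being Rosolini-disciplined, so your reading collapses the claim to a tautology. In the paper's terminology, a \emph{Rosolini partial function} is a map $\nat\to\Lift_{\isRos}(\nat)$ (i.e.\ a $\dd$-partial function for $\dd=\isRos$), not a map into $\Lift_{\RosStruct}(\nat)$. Thus ``$f$ factors through a Rosolini partial function'' means $f$ factors through the embedding $\Lift_{\isRos}(\nat)\hookrightarrow\Lift(\nat)$, which is equivalent to $\qPi{n:\nat}\isRos(\extent(f n))$. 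The final sentence is then a genuine restatement of the biconditional you just proved, specialized to $\DD=\RosStruct$, and the paper spells this out by noting that factoring through a $\dd$-partial function is the same as having $\dd$-extents pointwise.
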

\begin{proof}
    By the above lemma, we have an equivalence
    \[\isDis(f)\simeq \trunc{\qPi{n:\nat}\DD(\extent(f n))}.\]
    We know there is an implication
    \[\trunc{\qPi{n:\nat}\DD(\extent(f n))}\to\qPi{n:\nat}\dd(\extent(f n)),\]
    and the implication in the other direction is an instance of countable choice.

    Finally, every disciplined map factors through a $\dd$-partial function.
    Moreover, we have that $f:\nat\to\Lift(\nat)$ factors through a $\dd$-partial
    function iff $\qPi{n:\nat}\dd(\extent(f n))$, and by the above argument, this
    implies that $f$ is disciplined.
\end{proof}
Note that we cannot use the dominance choice principle for
Theorem~\ref{thm:disc-factoring}, since
there does not seem to be a way to write the type $\qPi{n:\nat}\trunc{\DD(\extent(f
n))}$ in the form $\qPi{a:A}\trunc{\DD(Pa)}$, where $A$ is dominant.

Note, moreover, that the above is a theorem about disciplined maps from $\nat$. Since
we will look at first-order computability theory, we are interested in partial
functions from $\N$ to $\N$. In particular, the Rosolini-disciplined maps form the
most likely candidate for a notion of partial function that can be consistently
posited to be the computable functions.
\index{choice!countable|)}
\index{dominance!structural|)}

\section{Comparison with synthetic domain theory}\label{section:sdt-comparison}
\index{synthetic domain theory|(}
In synthetic domain theory (for example, in 
\cite{Rosolini1986,Hyland1991,ReusStreicher1997,taylor1991fixed,VANOOSTEN2000}), a
dominance is defined
to be a map $\dd:\Omega\to\Omega$ with the object of \emph{all}
partial elements of $Y$ is given by
\[\{A\in \mathcal{P}Y \mid \forall x,y.A(x) \to A(y) \to x=y\},\]
and the object of $\dd$-partial elements of $Y$ by 
\[
    \{A\in \mathcal{P}Y \mid (\forall x,y.A(x) \to A(y) \to x=y)
    \wedge \dd(\exists x.A(x))\}
\]
which translates into type theory as
\[\qSig{A:Y\to \Prop}{(\qPi{x,y:Y}{A(x)\to A(y)\to x=y})} \times \dd(\qSig{x:Y}{A(x)}).\]
 
However, this does not work in a univalent setting, unless $Y$ is a set. In
particular, recall that all maps from the homotopy circle $\Sone$ into $\Prop$ are
constant (Lemma~\ref{lemma:sone-constant}). In particular, we do not
have the unit map $\Sone\to \Lift \Sone$; indeed,
\[\eta(\base) = \lambda x. (x=\base),\]
is not valued in propositions. So we must change $A$ to have type
$Y\to\univ$, but then we run into the issue that $A(x)\to A(y)\to x=y$ may have
non-trivial structure, because the type $x=y$ is not a proposition. Again, we have
that $(x=\base)\to(y=\base)\to x=y$ has numerous witnesses. In short, we must change
the total lifting in two ways:
\[\underbrace{\Sigma(A:Y\to \Prop)}_{\Sigma(A:X\to\univ)},
\underbrace{(\qPi{x,y:Y}{A(x)\to A(y)\to
x=y})}_{\isProp(\qSig{x:Y}{A(x)})}.\]
This gives the lifting of $Y$ as single-valued relations with $\unittype$. As we saw
in Theorem~\ref{thm:ptl-fxn-char} this is equivalent to the
lifting we give in Definition~\ref{def:lifting}, assuming univalence. Even without
univalence, our lifting is a retract of this lifting. The application of univalence
here occurs because we cannot establish equality between maps $Y\to\univ$ without an
extensionality principle for types. As a result, our lifting
$\qSig{p:\univ}\isProp(P)\times (P\to X)$
is more useful in the absence of univalence: we need only proposition
and function extensionality to establish equality between elements of~$\Lift(Y)$.

Typical axiomatizations of synthetic domain theory include a subset of propositions,
written $\Sigma$, which are our Rosolini propositions. In the settings of interest to synthetic
domain theorists, the dominance axiom typically holds---usually as a consequence of
countable choice. The axiomatization given by Reus~\cite{Reus1999} is of note for
not requiring $\Sigma$ to be a dominance. There, they are not interested in $\Sigma$
to define liftings, but instead are interested in developing domain theoretic ideas
abstractly. They give the following axiomatization of $\Sigma\subseteq\Prop$:
\begin{description}
    \item[(aSig)] True and false are in $\Sigma$;
    \item[(bSig)] $\Sigma$ is closed under $\vee$ and $\wedge$;
    \item[(cSig)] $\Sigma$ is closed under existential quantification over $\nat$;
    \item[(dSig)] $\Sigma$ embeds into $\Prop$;
    \item[(PHOA)] $\Sigma^\Sigma$ is equivalent to the graph of the standard ordering
        relation on $\Sigma\times\Sigma$;
    \item[(S)] For any $P:(\nat\to\Sigma)\to\Sigma$ such that $P(\lambda n.\top)$,
        there exists an $n:\nat$ such that $P(\overline{n})$, where $\overline{n} =
        \lambda k.k< n$;
    \item[(MP)] propositions in $\Sigma$ have double-negation elimination.
\end{description}
Axioms (aSig), (bSig), and (dSig) are true for the Rosolini propositions, while
(cSig) and (S) require some amount of choice. In fact, (cSig) implies Rosolini choice
in our context, so implies that the Rosolini propositions form a dominance. Note,
however, that our proof relies on the fact that the Rosolini propositions arise via
truncation from a structural dominance, in contrast to their direct axiomatization.
Both (PHOA) and (MP) are not available in our setting: (PHOA) implies that there are
propositions which are
not Rosolini propositions (the negation of Kripke's schema, and hence negating
excluded middle), and (MP) already implies we are working in a computation setting.
\index{synthetic domain theory|)}

\section{Discussion}

The Rosolini dominance plays a fundamental role in Rosolini's PhD
Thesis~\cite{Rosolini1986}, which is concerned with investigating notions of
effectiveness. It seems that the notion of effectiveness leads one inexorably to consider
partiality, since the notion of dominance was introduced by him here specifically to
deal with partiality in toposes, where (like in type theory), the notion of (total)
function is primitive. Historically, the set of Rosolini propositions has been
denoted $\Sigma$, clashing with type-theoretic notation.
The Rosolini propositions correspond to the Sierpinski space in formal topology and
to the $\Sigma_1$ propositions from the arithmetic hierarchy.
Several traditional taboos can be stated in terms of Rosolini propositions:
\index{Markov's Principle}
\index{Kripke's Schema}
\index{Limited Principle of Omniscience}
\begin{itemize}
    \item Kripke's Schema says that all propositions are Rosolini.
    \item Markov's Principle is double-negation elimination for
        Rosolini propositions.
    \item LPO says that true and false are the only Rosolini propositions.
    \item WLPO says that Rosolini propositions are either false or not false.
\end{itemize}
This view makes it clear
that Kripke's Schema and Markov's Principle together imply excluded middle.

Until the last year or so, the relationship between partiality and dominances has
only been studied in the context of synthetic domain theory and synthetic
computability theory, where countable choice holds, and so the problem with
composition is not apparent. The form of Choice in Theorem~\ref{thm:cc-weaker} was isolated by
Mart\'in Escard\'o when examining what was necessary to compose partial functions.
Theorem~\ref{dominance:choice-to-dominance} was generalized from
Theorem~\ref{rosolini:choice-to-dominance}, which we showed already
in our paper~\cite{partialelems2017}. In that paper, Theorem~\ref{thm:qdelay-is-ros}
was left as a conjecture, and Bas Spitters quickly proved it independently.
The fact that the Rosolini-structure lifting is equivalent to the delay monad means
that it does not raise universe levels. The same is true for the Rosolini lifting.
Besides the proof via Theorem~\ref{thm:qdelay-is-ros}, there is a result due to
Egbert Rijke~\cite{rijke:join} giving conditions under which an operation does not
increase universe levels.

The notion of disciplined map was proposed, but not worked out in the conclusion
to~\cite{partialelems2017}, inspired partially by comments from Mike Shulman.
Chapman, Uustalu, and Veltri~\cite{Chapman2015Delay} briefly proposed dealing with the
problem of composition by viewing the quotiented delay monad as arrow (in the sense
of~\cite{hughes2000generalising}); this is morally the same as the approach via
disciplined maps. To my knowledge, they have not pursued this line, although Uustalu
and Veltri have continued to examine partiality and
monads~\cite{uustalu2017partiality}.


\chapter{Partiality in Bishop Mathematics}\label{chapter:setoids}
Our approach to partiality in Chapter~\ref{chapter:partial-functions} relies
crucially on univalent definitions. However, it is worth examining the extent
to which we can do the work contained there in a Bishop-style approach using setoids.
Since the aim of this thesis is not to work with setoids, this chapter will be short
and less detailed. Moreover, we will not use univalent definitions in this chapter. In
particular, a proposition is simply any type, a relation is any type family, we do
not use any $h$-levels, and we use the Curry-Howard interpretation of logic from
Section~\ref{section:bishop-mathematics}.

We cannot approach general partial functions from the setoid perspective with our
tools. First we require quantification over the universe, but there is no reasonable
way to turn the universe into a setoid---no universe setoid was given by Martin Hofmann in his
thesis on setoids~\cite{Hofmann1995}, and it is not clear that is
possible~\cite{Coquand2018VV}.
Second, we rely on the definition of $\Prop$, which we are avoiding in the setoid
approach. Instead, we will restrict our attention to functions arising from the
\emph{Rosolini lifting}, which can be treated as a setoid. Since the definition of
Rosolini propositions we gave relies also on $\Prop$, we will instead approach the
Rosolini dominance in this chapter via the delay monad and its quotient by
bisimilarity. Recall Lemma~\ref{lemma:delay-is-rosstruct} and
Theorem~\ref{thm:qdelay-is-ros}, which show that this presentations is sound.
The setoid approach does resolve one difficulty from the previous chapter: countable
choice holds in the setoid model. This result seems to be folklore, so we prove it
in Section~\ref{section:setoid-choice}.

\section{Setoids}
Recall that a \emph{setoid} is a type together with an equivalence relation. In this
context, an \emph{equivalence relation} on $X$ is a type family $R:X\to X\to\univ$
together with witnesses,
\begin{align*}
    r&:\qPi{x:X}R(x,x)\\
    (-)^{-1}&:\qPi{x,y:X}R(x,y)\to R(y,x)\\
    (-)\cdot(-)&:\qPi{x,y,z:X}R(y,z)\to R(x,y)\to R(x,z).
\end{align*}
We will abuse notation and refer to a setoid by its underlying type.
We will also use $\seq$ (or, when we need to be more
precise, $\seq_X$) for the relation on a setoid $X$, while leaving the
data for $\seq$ implicit.

The equivalence relation is meant to capture equality on the underlying type. This
allows us to impose extensionality principles as we wish, at the cost of
bookkeeping. For example, if we wish to have function extensionality, we equip each
function type $X\to Y$ with the equivalence relation
\[f\seq g\defeq \qPi{x:X}f(x)\seq g(x).\]
The bookkeeping required is substantial: we must ensure that any construction on
setoids respects equivalence.  In particular, if $X$ and $Y$ are (the underlying
types of) setoids, we will call an
element of the type $X\to Y$ an \emph{operation}, and we will say that an operation
$f$ is \emph{extensional} or is \emph{a function} when 
\[\qPi{x,y:X}(x\seq y)\to f(x)\seq f(y).\]
Then the type of functions from $X$ to $Y$ is given by
\[\qSig{f:X\to Y}\qPi{x,y:X}(x\seq y)\to f(x)\seq f(y).\]
Except when we state otherwise, we take $\seq$ on function types to be extensional
equality:
\[(f,w)\seq (g,u) \defequiv \qPi{x:X}f(x)\seq g(x).\]
Note that in this definition of $\seq$, we are treating the evidence that $f$ is a
function as property, regardless of the underlying type. We will consequently talk
about a function $f:X\to Y$ rather than an operation $f:X\to Y$ with a witness
$w:\qPi{x,y:X}(x\seq y)\to f(x)\seq f(y)$.

\section{Families of setoids}
The requirement that a function $f$ respects equivalence corresponds to the function~$\ap_f$. In
Part~\ref{section:homotopies}, we followed the definition of $\ap_f$ with a dependent version
$\apd_f$ which relied on transport. As a consequence of the uniform definition of
identity types, we were able to construct transport for general type families. Since
the equivalence relation on a setoid is ad-hoc, we have to give our notion of
transport in an ad-hoc way. The typical way to do this is via what
Palmgren~\cite{Palmgren2012} calls \emph{proof-irrelevant} families of
setoids, which treat the equivalence relation on the index type as property. A more nuanced approach,
called \emph{proof-relevant} families by Palmgren, treats the equivalence relation on
the index type as structure. The context is a setoid $(A,\seq)$ and a family of types
$B:A\to\univ$ where each $B(a)$ is a setoid. In order to be a family of setoids, we
need at least \emph{reindexing} bijections
\[\varphi:\qPi{x,y:A}x\seq y\to (B(x)\simeq B(y)).\]
We call $B$ a \emph{proof-irrelevant family} of setoids when
\begin{enumerate}[label=(\roman*)]
    \item $\varphi_p \seq \id_{B_x}$ whenever $p:x\seq x$ and
    \item $\varphi_q\comp \varphi_p \seq \varphi_{r}$ whenever $p:x\seq y$, $q:y\seq z$ and $r:x\seq z$.
\end{enumerate}
We call $B$ a \emph{proof-relevant family} of setoids when
\begin{enumerate}[label=(\alph*)]
    \item $\varphi_{r(x)} \seq \id_{B_x}$ for all $x:X$;
    \item $\varphi_{p^{-1}}\comp \varphi_{p} \seq \id_{B(x)}$ and 
        $\varphi_{p}\comp \varphi_{p^{-1}} \seq \id_{B(y)}$ for any $p:x\seq y$; and
    \item $\varphi_q \comp \varphi_p \seq \varphi_{q\cdot p}$ whenever $p:x\seq y$ and $q:y\seq z$.
\end{enumerate}
Note that any proof-irrelevant family of setoids is also a proof-relevant family of
setoids. In either case, we then have a version of transport, or substitution given
by the reindexing bijections. We can then say that a dependent function
$f:\qPi{x:A}B(x)$ is \emph{extensional} when $B$ is a family of setoids and
\[\qPi{x,y:A}\qPi{p:x\seq y}\varphi_p(f(x)) \seq f(y).\]
In other words, a dependent function on setoids is extensional when it satisfies a
version of $\apd$.

We are interested also in predicates $P:A\to\univ$. For these cases, we usually don't
care about any setoid structure on $P(a)$, so let us say that a \emph{extensional
predicate} on a setoid $A$ is a type family $P:A\to U$ with reindexing bijections
\[\transport:\qPi{x,y:A}x\seq y\to (P(x)\simeq P(y)).\]

\section{Truncation and quotients}
A setoid $X$ is \emph{proof-irrelevant} when $x\seq y$ for all~$x,y:X$. This notion
of proof-irrelevance is a setoid-version of being a homotopy proposition. Indeed,
given a setoid $X$, we have the setoid $\trunc{X}$ with $x \seq_{\trunc{X}} y$ for
all~$x,y:X$. Then the operation $|-|:X\to\trunc{X}$ which is identity on elements is
indeed a function. Moreover, $\trunc{X}$ has the universal property of truncation
with respect to proof-irrelevant setoids: let $P$ be any proof-irrelevant setoid, and
$f:X\to P$ be a function. Then the operation $f$ is also a function $\trunc{X}\to P$,
as for any $x,y:X$ we have that $f(x)\seq f(y)$, and so in particular, when
$x\seq_{\trunc{X}}y$, we have $f(x)\seq f(y)$. Moreover, if we take two functions
$\trunc{X}\to P$ to be equal when they are extensionally equal as operations, then
$f:\trunc{X}\to P$ is the unique factorization of $f:X\to P$ through $\trunc{X}$.

We can generalize this construction to any equivalence relation in the expected way.
Let $X$ be a setoid and let $R$ be an equivalence relation which respects $\seq$.
That is, $R$ is a family of types such that $x\seq y\to R(x,y)$. Then the quotient
$X/R$ of $X$ by $R$ is the setoid with base type $X$ and equivalence relation $R$.
Any function $X\to Y$ which respects $\seq$ will then also respect $R$, and so
extends (uniquely) to a function $X/R\to Y$.

Even more generally, Dybjer and Moeneclaey~\cite{DybjerMoeneclaey2017} showed that any
Higher-inductive type whose highest constructor is a path between points (i.e., of
type $x=y$ for some elements $x,y$) can be interpreted in the setoid model.

\section{The Delay monad}
\index{monad!delay|(}
We now turn to the delay monad. We need to ensure that the definition given in
Chapter~\ref{chapter:partiality} makes sense in the context of setoids. To do this,
we need to define $\NI$. In turn, we need setoid representations of $\nat$ and
$\bool$. Fortunately, this last task is easy: we equip $\nat$ and $\bool$ with the
equality relation given by the identity type. Note that for $\nat$ we can define this
without having the identity type using Theorem~\ref{thm:nat-is-set}. Equality on
$\bool$ can be represented in a similar way.

Then we may define $\NI$ to be the setoid whose underlying type is
\[\qSig{\alpha:\cantor}\qPi{j,k:\nat}(\alpha_j=1)\to(\alpha_k=1)\to (j=k),\]
and whose equivalence relation is extensional equality on the first component:
$(\alpha,-)\seq (\beta,-)$ when $\qPi{n:\nat}\alpha_n=\beta_n$.
Then, for a setoid $X$, we may define $\Delay(X)$ via
one of the representations in Chapter~\ref{chapter:partiality}:
\[\Delay(X) \defeq \qSig{\mu:\NI}\qPi{n:\nat}(\mu_n=1)\to X,\]
The equivalence relation that gives rise to the setoid $\Delay(X)$ is
\[(\mu,\varphi)\seq(\nu,\psi)\defeq \qSig{p:\mu\seq
\nu}\qPi{n:\nat}\transport(p,\varphi)(n) \seq_{(\nu_n=1)\to X} \psi(n).\]
In order for this to be well-defined, we need to know that the type family
\[\qPi{n:\nat}(\mu_n=1)\to X\]
is a family of setoids.

In order to show that $\Delay(X)$ forms a monoid, we need to give $\eta$ and
$\klext{(-)}$, and moreover, show that these operations are extensional. Showing
$\eta$ to be extensional is easy. Showing $\klext{(-)}$ to be extensional is a little
more work, but ultimately corresponds to the computation given in
Theorem~\ref{thm:delay-monad}.
\index{monad!delay|)}

\section{Rosolini partial elements and functions}
\index{Rosolini proposition|(}
By Theorem~\ref{thm:qdelay-is-ros}, we can represent the Rosolini lifting via the
quotient of the delay monad by weak bisimilarity. To quotient by weak bisimilarity,
we will replace the underlying type of $\Delay(X)$ with the (equivalent) uncurried form
\[\Delay(X) \defeq \qSig{\mu:\NI}\lar{\mu}\to X,\]
where $\lar{\mu}$ is again $\qSig{n:\nat}\mu_n=1$, 
and consider instead the equivalence relation
\[(\mu,\varphi)\seq (\nu,\psi)\defeq (\lar{\mu}\leftrightarrow\lar{\nu})\times
\Big(\qPi{m:\lar{\mu}}\qPi{n:\lar{\nu}}\varphi(m)\seq\psi(n)\Big).\]
Note that by using the above representation instead of the curried form 
\[\Delay(X) \defeq \qSig{\mu:\NI}\qPi{n:\nat}(\mu_n=1\to X),\]
we avoid needing to know that $\qPi{n:\nat}(\mu_n = 1)\to X$ is a family of
setoids, since we do not need to reindex. Above, where we defined the equivalence
relation on the unquotiented $\Delay(X)$, this reorganization doesn't help us, since
we also needed the sequences to agree at every point.
\index{Rosolini proposition|)}

\section{The axiom of choice}\label{section:setoid-choice}
\index{choice!countable|(}
It may come as a surprise that the monad data for the delay monad is also extensional
with respect to this coarser equivalence relation; i.e., that the quotiented delay
monad, $\Delay(-)/{\seq}$, is again a monad when we work with setoids. However,
countable choice holds in the setoid model, on account of the fact that the setoid
structure on $\nat$ is discrete---that it arises from the identity type.
\begin{theorem}
    Let $X$ be any setoid, and $R:\nat\to X\to \univ$ a relation such that for all
    $n:\nat$ there exists an $x:X$ such that $R(n,x)$. Then there is a function
    $f:A\to B$ such that $R(n,f(n))$ for all~$n:\nat$.
\end{theorem}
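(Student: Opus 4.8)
The plan is to reduce the axiom of choice over $\nat$ in the setoid model to the type-theoretic axiom of choice (Theorem~\ref{thm:ttchoice}), exploiting the fact that the setoid structure on $\nat$ is induced by the identity type and hence ``discrete''. First I would unfold the hypothesis: ``for all $n:\nat$ there exists $x:X$ such that $R(n,x)$'' means, in the Curry--Howard reading used in this chapter, that we have an inhabitant of $\qPi{n:\nat}\qSig{x:X}{R(n,x)}$. By Theorem~\ref{thm:ttchoice} this type is equivalent to $\qSig{g:\nat\to X}{\qPi{n:\nat}{R(n,g(n))}}$, so we extract an operation $g:\nat\to X$ together with a witness $\qPi{n:\nat}{R(n,g(n))}$. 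The entire content of the theorem is then that this operation $g$ is in fact a \emph{function} of setoids, i.e.\ that it respects $\seq$.

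The key step is exactly this extensionality check, and here is where the discreteness of $\nat$ does all the work: to show $g$ is extensional we must show $\qPi{j,k:\nat}(j\seq_{\nat}k)\to g(j)\seq_X g(k)$. But $\seq_{\nat}$ is the identity type $j=k$, so given $p:j=k$ we may use $\transport$ (or simply path induction, reducing to the case $j\jeq k$ and $p\jeq\refl$) to conclude $g(j)\seq_X g(k)$ from reflexivity of $\seq_X$. Thus $g$ together with this witness of extensionality is a genuine function $\nat\to X$ in the setoid sense, and $\qPi{n:\nat}{R(n,g(n))}$ is the required property. I should be mildly careful about one bookkeeping point: $R$ is an arbitrary type family rather than an extensional predicate, so ``there exists $x$ with $R(n,x)$'' is being read as $\Sigma$ (as stipulated at the start of the chapter, where the Curry--Howard interpretation is in force and propositions are arbitrary types); under that reading no further coherence on $R$ is needed, and the proof goes through verbatim. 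If instead one wanted the statement with a truncated $\exists$ and an extensional $R$, one would additionally need to observe that $\nat$ being discrete makes the relevant instance of propositional choice hold, but that is not what is being claimed here.

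The main obstacle is really conceptual rather than technical: recognizing that ``choice holds in the setoid model'' is, once the interpretation is spelled out, almost immediate from Theorem~\ref{thm:ttchoice} plus the discreteness of $\nat$, and that the only thing requiring proof is extensionality of the chosen operation. There is essentially no hard calculation; the ``folklore'' status of the result is explained by the fact that the proof is a one-line application of the type-theoretic axiom of choice together with path induction on $j=k$. I would present it in exactly that order: unfold the hypothesis, apply Theorem~\ref{thm:ttchoice} to obtain $(g,\varphi)$, verify extensionality of $g$ by path induction using reflexivity of $\seq_X$, and conclude.
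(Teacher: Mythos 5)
Your proposal is correct and matches the paper's own argument: the paper also extracts the operation directly from the witness of $\qPi{n:\nat}\qSig{x:X}{R(n,x)}$ (your invocation of Theorem~\ref{thm:ttchoice} is just this projection made explicit) and then observes that extensionality is the only thing to check, which holds because $n\seq_\nat k$ means $n=k$ in the setoid structure on $\nat$. The only cosmetic difference is that you spell out the final step via path induction and reflexivity of $\seq_X$, which the paper leaves implicit.
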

\begin{proof}
    The hypothesis of the theorem gives us a witness
    \[w:\qPi{n:\nat}\qSig{x:X}R(n,x).\]
    Then we get an operation $f:\nat\to X$ defined by
    \[f(n)=\pr_0(w(n)),\]
    and moreover, we have for any $n:\nat$ that $\pr_0(w(n)):R(n,f(n))$. All the
    remains is to see that $f$ is extensional. Let $n,k:\nat$ with $n\seq k$. But if
    $n\seq k$ we must have that $n=k$, and so~$f(n)=f(k)$.
\end{proof}
Unfortunately, retaining witness data is not enough for the full
axiom of choice, since we need our operations to be
extensional~\cite{martinlof2006choice}. In more detail, the
general situation is that we have a relation $R:A\to B\to\univ$ such that for all $a:A$
there exists a $b:B$ with $R(a,b)$. We wish to find a choice function $f:A\to B$. The
operation $f$ can always be found in the setoid model, by projecting from the type
$\qExists{b:B}R(x,y)$, but this operation is not guaranteed to be extensional. For
the setoid of natural numbers, the equivalence relation is such that all operations
out of $\nat$ are extensional, and so we have the above theorem.
\index{choice!countable|)}

\section{Discussion}
It is possible to construct the Rosolini lifting in the setoid approach, and
moreover, this does form a monad, as a result of the fact that countable choice holds
in the setoid model. Recent work by Coquand, Mannaa, and
Ruch~\cite{CMR2017Stacks} gives a model of univalent type theory in which
countable choice does not hold. To make the Rosolini propositions form a dominance,
we could perhaps abandon the univalent setting in favor of setoids, but this approach
won't allow us to place the Rosolini lifting in the context of a more general
lifting. To resolve this, we could follow other authors (such as \cite{Hofmann1995}
and \cite{CoC1986}), and introduce an impredicative type of propositions to have
access to a type $\Prop$ that could be used to form a general lifting. In this case,
to get the Rosolini lifting as a subtype of the lifting, we would need to ensure our
axiomatization of $\Prop$ allowed us to isolate some of those propositions
as the Rosolini propositions, and form a lifting setoid of those; whether the
resulting lifting would be equivalent to the quotient of the delay monad, and whether
this lifting would form a monad would depend on choices made about the behavior of
$\Prop$. Reus and Streicher~\cite{ReusStreicher1997} give one approach along similar
lines, although they do not use setoids. It would be worth examining the
interpretation of their type theory in the setoid model.
Unless $\Prop$ is chosen to behave much like the univalent type of
propositions, we lose the relationship between structure and property in this approach.

Nevertheless, the setoid approach is limited by the fact that there is no
clear way to interpret the universe in setoids. Beyond the fact that this makes the
univalent definition of propositions impossible, this strictly limits the
types, the logical principles~\cite{smith1988}, and more importantly, the
computable functions which can be described in our system. Indeed, work by
Aczel~\cite{Aczel1977}, Feferman~\cite{Feferman1982}, Griffor and
Rathjen~\cite{Griffor1994},  Rathjen~\cite{Rathjen2000,Rathjen2001}, and
Setzer~\cite{setzer1993,setzer1998,setzer2000} shed light on the ordinal analysis of
Martin-L{\"o}f type theory with universes. In short, adding universes to type theory
allows the construction of larger ordinals; using a hierarchy of recursive functions,
such as the L{\"o}b-Wainer hierarchy~\cite{lob1970}, these larger ordinals can be
used to define faster growing computable functions. Consequently, having universes in
our system gives a more fine-grained analysis of the limits of the partial computable
functions.  This thesis is focused on first-order computability, but we
hope to set the stage for a constructive approach to studying computability at higher
types, and computability with more interesting objects, such as metric spaces,
manifolds and algebraic structures. In this case, we need a universe not only to
define particular computable functions, but also to describe the objects we wish to
compute with.

To summarize and add to the above, the setoid approach has the following advantages over the univalent
approach:
\begin{enumerate}
    \item The setoid approach allows us to use extensionality principles in a
        simpler version of MLTT~\cite{Hofmann1995}. In particular, quotients and
        truncations can be described without extending the type theory.
    \item countable choice holds in the setoid model, so the Rosolini lifting becomes
        a monad without assuming additional principles. If we include a small type of
        propositions, we can indeed mimic the general lifting, and get the Rosolini
        lifting as a submonad of this.
    \item The setoid approach fits squarely with the traditional account of constructive
        mathematics proposed by Bishop.
\end{enumerate}
On the other hand, the univalent approach has the following advantages over the setoid approach
\begin{enumerate}
    \item The univalent approach allows us to make use of universes; in particular,
        this allows us to define more computable functions, and to develop
        computability in more interesting domains.
    \item Definitions are more uniform. As we use the inductively defined identity
        types instead of explicitly chosen equivalence relations, we can state much
        more general theorems, and all concepts are already extensional. Moreover, in
        the setoid approach, choice of representation is important: the equivalence
        relation on the quotient $\Delay(X)/{\seq}$ is more difficult to state for the
        representation \[\Delay(X) \defeq \qSig{\mu:\NI}\qPi{n:\nat}(\mu_n=1)\to X,\]
        as it requires us to make use of the fact that the type family
        $\qPi{n:\nat}(\mu_n=1)\to X$ is a family of setoids. This fact is
        not-obvious, and the result is somewhat technical. So with this
        representation, a non-trivial theorem is required to even form the
        quotient in the setoid approach.
    \item The univalent perspective fits squarely with a structural account of
        mathematics, and extends the traditional account of constructive mathematics.
    \item A type of propositions fits naturally into the
        theory, and clarifies the relationship between structure and property.
    \item There are computer proof systems with native support for formalization in
        univalent styles. No such support exists for the setoid
        approach. In particular, cubicaltt~\cite{cubicalttgit} and Cubical Agda implement cubical type
        theory. There is some support for univalent mathematics in
        Agda, Coq, and Lean, but this support is given via axioms, so is less
        satisfactory.
\end{enumerate}


\part{Computability theory}\label{part:three}
We are now able to turn our attention to computability theory. In
Chapter~\ref{chapter:comp-as-struct}, we develop
a theory of computability \emph{as structure}, via a modest abstraction of Turing
machines (which we call \emph{recursive machines}), based on primitive recursive
combinators. Here, we follow Section~\ref{section:struct-vs-prop} in calling a type
family $X\to\univ$ a \emph{structure} and calling a type family \emph{property} when
it is valued in propositions. In particular, in
Chapter~\ref{chapter:comp-as-struct} we define a type family
$\CompStruct:(\nat\pto\nat)\to\univ$ of \emph{computation structures}, while in
Chapter~\ref{chapter:comp-as-prop}, we truncate this notion to arrive at the property
$\isComputable:(\nat\pto\nat)\to\Prop$.

Our abstraction is motivated by the fact that the
initialization and transition functions of a Turing machine are primitive recursive,
and so we can represent any Turing machine via a pair of (descriptions of) primitive
recursive functions. This approach is easier to work with in our system than Turing
machines or $\mu$-recursive functions: Turing machines contain a great deal of data
that needs to be tracked, while handling minimization takes more technical
computation than handling the other basic recursive operations. We develop some basic
results in computability
theory using recursive machines, and there are no surprises here. The point is to
give evidence that our notion of partial function is sound---that we can view the
computable partial functions as certain functions $\nat\to\Lift(\nat)$---and that
recursive machines are indeed capable of being used for a development of
computability theory. The only novelty from a computability theorist's perspective is
the language. It is worth noting, on this regard that we have no need of specifying
Kleene equality ("$f\seq g$ iff for all $x$, either both $f(x)$ and $g(x)$ are
defined and equal, or both $f(x)$ and $g(x)$ are undefined"), since Kleene equality
is simply equality in $\nat\to\Lift\nat$; our approach again allows us to dispense
with an ad hoc notion of equality. This definition also allows us to remove the
implicit use of excluded middle in the usual statement of Kleene equality.

In the first three sections of Chapter~\ref{chapter:comp-as-prop}, we truncate
notions of Chapter~\ref{chapter:comp-as-struct} to arrive at a notion of
computability \emph{as property}. Since most results have quite short proofs using
Lemma~\ref{lemma:trunc-func}, these sections are short.
After this, we prove the undecidability of the halting problem, in particular
demonstrating the existence of a function $d:\nat\to\Lift(\nat)$ (Turing's diagonal
function) which is not computable. Any total function we can define is computable,
and moreover, constructive intuition tells us that a constructive definition should
give rise to a computer program. Indeed, while it may be consistent in a
constructive framework that all total functions $\nat\to\nat$ are computable, the
partial function $d:\nat\pto\nat$ is not computable. We return to
the implications of this fact in Chapter~\ref{chapter:comp-and-partiality}, but first
isolate a subset of the Rosolini propositions which we call
\emph{semidecidable}---those propositions for which there exists a computable witness
that they are Rosolini. The semidecidable propositions are introduced and compared to
computable functions in Section~\ref{section:semidecidable}.

Chapter~\ref{chapter:comp-and-partiality} turns to the question of which functions
can be computable. We first (Section~\ref{computability:dominances}) demonstrate that
computable partial functions and semidecidable maps fit into the Rosolini partial
functions and Rosolini propositions, and then turn to the question of which functions
are computable.
As we have two notions of computability (structure and property), we can state
Church's thesis in two ways: one in the logic of structures, and one in the logic of
propositions. The former turns out to be false (Theorem~\ref{not-strong-ct}), by an
argument due to Troelstra~\cite{troelstra:1977}. The second is more plausible;
the argument given by Troelstra can be reworked slightly to give a weaker result:
Church's thesis is incompatible with the existence of an embedding from the
computable functions into $\nat$. This result can be leveraged to resolve the
conflict between the topos-theoretic and type-theoretic facts discussed in the
introduction. From there, we can consider a
version of Church's thesis for partial functions. In fact, Church's thesis tells us
that the Rosolini partial functions and the semidecidable partial functions coincide.
Countable choice tells us that each type of partial functions coincides with the
associated type of disciplined maps. Together, these results tell us
(Theorem~\ref{thm:punchline}) that under Church's thesis and countable choice,
the computable partial functions are the Rosolini partial functions. Since we know
the effective topos to be a model of countable choice and Church's thesis, this
suggests that it is consistent that all Rosolini partial functions are computable;
unfortunately, a univalent version of the
effective topos is needed to use this for a consistency result, and that requires
techiques beyond what we consider here.


\chapter{Computability as structure}\label{chapter:comp-as-struct}
We develop in this chapter computability theory with computability as
\emph{structure} (C.f.~\ref{section:struct-vs-prop}) The results here are essentially
the same as classical results in computability; as we shall see, the
classical development of basic computability theory goes through with only minor
changes. In particular,
\begin{itemize}
    \item we use partial functions as in the previous chapter, as functions
        $X\to\Lift(Y)$;
    \item we make explicit reference to structure throughout.
\end{itemize}
This makes the statements of theorems slightly more cumbersome, but otherwise
presents little change. In the next Chapter~\ref{chapter:comp-as-prop}, we develop
computability as property and compare it to computability as structure. For the most
part, we get the standard results in computability theory for computability as
property from the equivalent results for computability as structure, by functoriality
of truncation. However, when we look at how computability interacts with the broader
mathematical universe, things seem to change. For example, we have an embedding from
``functions with computability structure'' to $\nat$,
but such an embedding from
``functions with the property of being computable'' to $\nat$ would imply a
decidability result that we cannot expect to hold constructively.

\section{Primitive recursion}
In order to undertake a study of computability, we need to introduce a model of
computation; Turing machines are intuitively convincing, but we would like a model
that is somewhat less cumbersome to work with, so we will abstract away many internal
details.  We will make use of primitive recursive functions for this, so we briefly
recall the definition and some basic facts. The material here can be found in any
standard reference such as Odifreddi \cite{Odifreddi1989} or Rogers
\cite{Rogers1987}.

\begin{definition}
    \index{primitive recursive!combinators|textit}
    \defineopfor{$\PR$}{PR}{primitive recursive!combinators}
    \defineopfor{$\PrimRec$}{PR}{primitive recursive!combinators}
The type family $\PR:\nat\to\univ$ of \emph{primitive recursive combinators (of arity
$n$)} is
defined inductively by
\begin{enumerate}
    \item $\succc:\PR_1$;
    \item for any $n:\nat$ and $k<n$ we have $\pc^n_k:\PR_n$;
    \item for any $n,k:\nat$ we have $\cc^n_k:\PR_n$;
    \item if $f:\PR_n$ and $g_i:\PR_m$ for each $0<i\le n$, then
        $f\langle g_1,\ldots,g_n\rangle :\PR_m$.
    \item if $f:\PR_{n+2}$ and $g:\PR_{n}$, then $\rc_{f,g}:\PR_{n+1}$.
\end{enumerate}
    We will sometimes write $\PR$ for the total type $\qSig{n:\nat}\PR_n$.
    Similarly, we define a type family $\PrimRec:\qPi{n:\nat}(\nat^n\to\nat)\to\univ$
    inductively by
\begin{enumerate}
    \item $\PrimRec_1(\succop)$;
    \item for any $n:\nat$ and $k<n$ we have $\PrimRec_n(\pr_k)$;
    \item for any $n,k:\nat$ we have $\PrimRec_n(\lambda x.k)$;
    \item if $\PrimRec_n(f)$ and $\PrimRec_m(g_i)$ for each $0<i\le n$, then
        $\PrimRec_m(\lambda x. f(g_1(x),\ldots,g_n(x)))$
    \item if $\PrimRec_{n+2}(f)$ and $\PrimRec_n(g)$, then
        $\PrimRec_{n+1}(\recop_{f,g})$, where
\begin{eqnarray*}
    \recop_{f,g}(0,x_1,\ldots,x_n) &=& g(x_1,\ldots,x_n) \\
    \recop_{f,g}(k+1,x_1,\ldots,x_n) &=&
    f(k,\recop_{f,g}(k,x_1,\ldots,x_n),x_1,\ldots,x_n),
\end{eqnarray*}
\end{enumerate}
    and where we leave the constructors for $\PrimRec_n(f)$ unnamed.

\end{definition}
\definesymbol{$\app$}{app}
We can then define a function $\app:\qPi{n:\nat}\PR_n\to\nat^n\to \nat$ in
the obvious way:
\begin{eqnarray*}
    \app(\succc) &=& \succop,\\
    \app(\pc^n_k)&=& \lambda (x_1,\ldots,x_n).  x_k,\\
    \app(\cc^n_k) &=& \lambda x.k,\\
    \app(f\langle g_1,\ldots,g_n\rangle) &=& \lambda x. f(g_1(x),\ldots,g_n(x)),\\
    \app(\rc_{f,g}) &=& \recop_{f,g}.\\
\end{eqnarray*}
It is easy to check that $\PrimRec(\app(t))$, and that $\app$ induces
an equivalence
\[\PR_n\simeq\qSig{f:\nat^n\to\nat}\PrimRec(f),\]
for all $n:\nat$. We say
that $f$ has \emph{primitive recursive structure} when $\PrimRec(f)$.
\index{primitive recursive!structure|textit}

\begin{definition}
    \index{primitive recursive!relation|textit}
    A \emph{primitive recursive relation} is a primitive recursive $R:\nat^k\to\nat$
    valued in $\{0,1\}$. We also say that a relation $R:\nat^k\to\Prop$ \emph{has primitive
    recursive structure} if it has a characteristic function (as in
    Section~\ref{section:nat})
    $\chi_R$ with $\PrimRec(\chi_R)$.
\end{definition}
The following constructions are routine.
\begin{theorem}\label{thm:pr-basics}
    The following all have primitive recursive structure.
    \begin{enumerate}[label=(\roman*)]
        \item the standard ordering $\le$ on $\nat$.
        \item equality on $\nat$.
        \item the addition function $-+-:\nat^2\to\nat$;
        \item the multiplication function $-\cdot-:\nat^2\to\nat$;
        \item the predecessor function with definition 
            \begin{align*}
                \pred(0) &= 0,\\
                \pred(n+1) &= n;
            \end{align*}
        \item truncated subtraction
            \begin{align*}
                n - 0 &= n,\\
                n - (k+1) &= \pred(n-k);
            \end{align*}
        \item bounded sums: for fixed (primitive recursive) $f:\nat^{n+1}\to\nat$, and
            $x:\nat^n$,
            \begin{align*}
                \sum_{y< 0}f(x,y) &= 0, \\
                \sum_{y< (k+1)}f(x,y) &= f(x,k) + \sum_{y< k}f(x,y);
            \end{align*}
        \item bounded products: for fixed (primitive recursive) $f:\nat^{n+1}\to\nat$, and
            $x:\nat^n$,
            \begin{align*}
                \prod_{y< 0}f(x,y) &= 1, \\
                \prod_{y< (k+1)}f(x,y) &= f(x,k) \cdot \prod_{y< k}f(x,y);
            \end{align*}
        \item $\sgn:\nat\to\nat$,
            \begin{align*}
                \sgn(0) &= 0,\\
                \sgn(n+1) &= 1;
            \end{align*}
        \item $\opsgn:\nat\to\nat$,
            \begin{align*}
                \opsgn(0) &= 1,\\
                \opsgn(n+1) &= 0;
            \end{align*}
            alternatively, $\opsgn(n) = 1-\sgn(n)$;
        \item the factorial function
            \begin{align*}
            n! &= \prod_{k<n}(k+1).
            \end{align*}
    \end{enumerate}
\end{theorem}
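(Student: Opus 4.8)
The plan is to exhibit, for each function in the list, an explicit derivation of $\PrimRec_n(f)$ using the five clauses of the inductive definition of $\PrimRec$ (equivalently, a primitive recursive combinator in $\PR_n$); since $\PrimRec$ is by construction closed under generalized composition and under the $\recop_{f,g}$ scheme, I would proceed in dependency order, establishing the arithmetically simplest functions first and reusing them as building blocks in the later clauses.

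For the arithmetic operations I would start with addition. Defining $h(k,a) \defeq a + k$ by recursion on $k$ fits the scheme $\recop_{f,g}$ at parameter arity $1$ with base $g \defeq \pr_0$ (so $\PrimRec_1(g)$) and step $f(k,p,a) \defeq \succop(p)$, obtained as $\succc\langle \pc^3_1\rangle$; then $-+-$ itself is $h$ precomposed with the projections swapping its two arguments. Multiplication is recursion on one argument whose step function is addition composed with projections, $a\cdot(k+1) = (a\cdot k)+a$; the predecessor is a one-line instance of $\recop$ with a zero-ary constant base and step $\pc^2_0$; truncated subtraction is recursion on the subtrahend with step $\pred$ composed with the accumulator; and $\sgn$, $\opsgn$ are single instances of $\recop$ with constant base and constant step (or one may simply take $\opsgn(n) \defeq 1 - \sgn(n)$, reusing truncated subtraction). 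For the relations, a characteristic function for $\le$ is $n,m\mapsto \opsgn(n - m)$ and for equality $n,m\mapsto \opsgn((n - m) + (m - n))$, each a composition of functions already shown primitive recursive, so both relations have a primitive recursive characteristic function in the sense of Section~\ref{section:nat} and Definition~\ref{def:characteristic-function}, hence primitive recursive structure. Finally, bounded sums and products of a fixed primitive recursive $f:\nat^{n+1}\to\nat$ are instances of $\recop$ whose base is the constant $0$ (resp.\ $1$) and whose step function is addition (resp.\ multiplication) composed with $f$ applied to the parameters and the recursion index, and with the running accumulator; the factorial is then the single bounded-product instance with inner function $\lambda k.\succop(k)$.

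The only genuine obstacle is the bookkeeping: the combinator $\recop_{f,g}$ is rigid about which argument is recursed on (the first) and about the arity and argument order of its step function $f$ (which receives the recursion index, then the previous value, then the parameters), so each of the natural-language recursions above must be massaged, by inserting projection combinators $\pc^n_k$ and generalized composition $f\langle g_1,\ldots,g_n\rangle$, into exactly that shape. This is entirely routine but tedious, and accounts for the theorem being stated without much fuss; I would write out one case (addition) in full as a template, and for each remaining item record only which earlier items it depends on, leaving the explicit combinator assembly to the reader.
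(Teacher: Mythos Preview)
Your proposal is correct and is exactly the standard construction the paper has in mind; the paper itself offers no proof, calling the constructions ``routine'' and proceeding immediately to use them. Your dependency-ordered plan (arithmetic operations first, then $\sgn$/$\opsgn$, then $\le$ and equality as compositions, then bounded sums and products, then factorial) is the expected one, and your remark about the bookkeeping of argument order in $\recop_{f,g}$ is apt.
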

Using the above, we can lift primitive recursive structure over logical operations as
follows: let $P$ and $Q$ have primitive
recursive structure, then we have characteristic functions with primitive recursive
structure
\begin{align*}
    (p \wedge q)(n) &\defeq p(n)\cdot q(n), \\
    (p \vee q)(n) &\defeq \sgn(p(n) + q(n)),\\
    (p \to q)(n) &\defeq p(n) \le q(n),\\
    (\neg p)(n) &\defeq \opsgn(p(n)),\\
    \exists (k< n).p(k) &\defeq \sgn(\sum_{k< n}p(k)),\\
    \forall (k< n).p(k) &\defeq \prod_{k< n}p(k),
\end{align*}
for (respectively) $P\wedge Q$, $P\vee Q$, $P\to Q$, $\neg P$,
$\qExists{k\le n}P(k)$ and $\forall(k\le n),P(k)$.
Then we may show a relation has primitive recursive structure by defining it in terms of known
primitive recursive relations. Likewise, we can build primitive recursive structure
piecewise
\begin{lemma}
    Let $f,g:\nat^k\to\nat$ have primitive recursive structure and let $Q:\nat^k\to\Prop$
    have primitive recursive structure. Then the function $\nat^k\to\nat$ defined by
    case analysis as
    \[x\mapsto
    \begin{cases}
        f(x) & \text{if $Q(x)$;}\\
        g(x) & \text{otherwise}
    \end{cases}\]
    has primitive recursive structure
\end{lemma}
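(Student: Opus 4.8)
The statement is that definition-by-cases on a primitive-recursive condition preserves primitive-recursive structure, and the plan is to reduce it to the arithmetic closure operations already established in Theorem~\ref{thm:pr-basics} and the logical-operation lifts that follow it. The key observation is that if $Q:\nat^k\to\Prop$ has characteristic function $\chi_Q$ with $\PrimRec(\chi_Q)$, then the function
\[
    h(x) \defeq \chi_Q(x)\cdot f(x) + \opsgn(\chi_Q(x))\cdot g(x)
\]
computes exactly the case split: when $Q(x)$ holds we have $\chi_Q(x)=1$ and $\opsgn(\chi_Q(x))=0$, so $h(x)=f(x)$; when $Q(x)$ fails we have $\chi_Q(x)=0$ and $\opsgn(\chi_Q(x))=1$, so $h(x)=g(x)$. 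Thus the desired case-analysis function equals $h$ pointwise, hence equals $h$ by function extensionality.

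First I would record that $h$ is built from $f$, $g$, $\chi_Q$, $\opsgn$, multiplication, and addition by composition (i.e.\ via the constructor $f\langle g_1,\ldots,g_n\rangle$ of $\PR$, transported along the equivalence $\PR_n\simeq\qSig{f:\nat^n\to\nat}\PrimRec(f)$). Each ingredient carries primitive-recursive structure: $f$ and $g$ by hypothesis, $\chi_Q$ by hypothesis, and $\opsgn$, $-\cdot-$, $-+-$ by Theorem~\ref{thm:pr-basics}. Closure of $\PrimRec$ under composition (constructor~(4)) then gives $\PrimRec(h)$. Second, I would verify the pointwise equality with the two-case computation above, doing the case analysis on the decidable value $\big(\chi_Q(x)=0\big)+\big(\chi_Q(x)=1\big)$, which we have since $\nat$ has decidable equality and $\chi_Q$ is valued in $\{0,1\}$ by the definition of characteristic function. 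Finally, function extensionality upgrades the homotopy to an equality between the case-analysis function and $h$, and $\transport$ along this equality transfers $\PrimRec(h)$ to the function of interest.

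There is essentially no obstacle here; the only mild subtlety is bookkeeping about what ``has primitive recursive structure'' means for a function of arity $k$ versus for the combinators, but this is exactly handled by the equivalence $\PR_n\simeq\qSig{f:\nat^n\to\nat}\PrimRec(f)$ noted just before the lemma, so I would phrase the construction directly at the level of $\PrimRec$ rather than building an explicit combinator. One should also be slightly careful that the composition constructor expects the outer combinator to have the right arity (here the outer operation $a,b,c,d\mapsto a\cdot c + \opsgn(a)\cdot d$ is a fixed primitive recursive function of four arguments, itself assembled from $-\cdot-$, $-+-$, $\opsgn$ and projections), but this is routine given Theorem~\ref{thm:pr-basics}.
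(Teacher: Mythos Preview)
Your proof is correct but takes a different route from the paper. You use the arithmetic formula $\chi_Q(x)\cdot f(x) + \opsgn(\chi_Q(x))\cdot g(x)$ and appeal to closure under composition together with the basic operations of Theorem~\ref{thm:pr-basics}. The paper instead builds a two-case selector $h:\nat^{k+1}\to\nat$ directly by primitive recursion, setting $h(0,x)=g(x)$ and $h(n+1,x)=f(x)$ (so $h=\recop_{g,f'}$ with $f'(n,k,x)=f(x)$), and then composes with the characteristic function in the first argument: $r(x)=h(\chi_Q(x),x)$. Both approaches are standard and equally valid; yours is arguably slightly more streamlined here since the arithmetic closure facts are already in hand, while the paper's version exercises the $\recop$ constructor explicitly. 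Your handling of the arity bookkeeping and the use of function extensionality to transport $\PrimRec$ along the pointwise equality is fine.
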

\begin{proof}
    First define $h:\nat^{k+1}\to\nat$ by $h(0,x) = g(x)$ and $h(1,x) = f(x)$. That
    is,
    \[h=\recop_{g,f'}\]
    where $f'(n,k,x) = f(x)$. Then $h$ has primitive recursive structure. Now define 
    \[r(x) = h\comp\lar{Q(x),\pr_1(x),\ldots,\pr_{n-1}(x)},\]
    which has primitive
    recursive structure. Now let $x:\nat^n$. If $Q(x)$, then $r(x) =
    h(1,x) = f(x)$ and if $\neg Q(x)$, then $r(x) = h(0,x) = g(x)$. Hence,
    $r(x)$ is the function defined by case analysis above.
\end{proof}

The last piece required before continuing the standard presentation of the primitive
recursive functions is \index{minimization!bounded}
bounded minimization. Let $P:\nat\to\Prop$ be a primitive recursive
predicate, fix an $n:\nat$ and consider the type $\min_{k<n}(P)$
\[\qSig{k:\nat}\big((k< n\times P(k)) + (k=n)\big)\times \qPi{j< k}\neg P(j)),\]
and the map $\min_{k<n}(P)\to\nat$ given by the first projection.  That is, $\mu_{k<n}P(k)$ is the
least value of $k\le n$ such that $P(k)$ holds, if such exists, and $n$ otherwise.
\begin{lemma}
    For any predicate $P:\nat\to\Prop$ with primitive recursive structure and any $n:\nat$, the type $\min_{k<n}(P)$
    is contractible.  Hence, the partial element $(\min_{k<n}P(k),\pr_0)$ is defined.
\end{lemma}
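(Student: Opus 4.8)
The plan is to prove that $\min_{k<n}(P)$ is contractible by exhibiting a canonical element and showing every element is equal to it. The natural candidate for the center of contraction is the pair consisting of the actual least witness (or $n$ if none exists below $n$). Since $P$ has primitive recursive structure, in particular $P$ is decidable, so bounded search (the earlier lemma on bounded search, together with Lemma~\ref{lemma:untruncate-decidable-predicates}) gives us a concrete way to locate the least $k$ with $P(k)$ among $0,\ldots,n-1$, or to confirm no such $k$ exists. This produces an element $(m, e, u):\min_{k<n}(P)$, where $m$ is that least value, $e$ is the evidence of the appropriate disjunct, and $u:\qPi{j<m}\neg P(j)$ is the minimality witness.

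First I would establish that $\min_{k<n}(P)$ is a \emph{proposition}. Given two elements $(k_1,e_1,u_1)$ and $(k_2,e_2,u_2)$, I would show $k_1 = k_2$ by a case analysis on the disjuncts $e_1$ and $e_2$: if both are in the left summand, then $P(k_1)$ and $P(k_2)$ both hold with $k_1,k_2<n$, and minimality $u_1$ applied to $k_2$ (if $k_2<k_1$) and $u_2$ applied to $k_1$ (if $k_1<k_2$) forces $k_1=k_2$ by trichotomy of the decidable total order on $\nat$; if both are in the right summand then $k_1=n=k_2$ directly; the mixed case is impossible, since $k_1<n$, $P(k_1)$, and $k_2=n$ with $u_2$ giving $\neg P(k_1)$ (as $k_1<n=k_2$), a contradiction. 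Once $k_1=k_2=:k$, the remaining components lie in $\big((k<n\times P(k))+(k=n)\big)\times\qPi{j<k}\neg P(j)$; the second factor is a product of propositions (each $\neg P(j)$ is a proposition, using function extensionality), and the first factor is a coproduct of disjoint propositions---$k<n\times P(k)$ and $k=n$ cannot both hold---hence a proposition by the earlier lemma on coproducts of disjoint propositions. Since $\nat$ is a set, $k=n$ and $k<n$ are propositions, and $P(k)$ is a proposition as $P$ is a predicate. Then $\transport$ of the witnesses along $k_1=k_2$ matches them up, giving the desired path in $\min_{k<n}(P)$.

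Having shown $\min_{k<n}(P)$ is a proposition, contractibility follows immediately from Lemma~\ref{lemma:contr-char} (or Lemma~\ref{lemma:contr-characterization}): a proposition with an inhabitant is contractible, and we exhibited an inhabitant via bounded search. Finally, the stated corollary---that the partial element $(\min_{k<n}P(k),\pr_0)$ is defined---is then a matter of unwinding definitions: the partial element in question has extent of definition $\min_{k<n}(P)$ together with the value map $\pr_0$, and since we have just shown $\min_{k<n}(P)$ to be contractible (hence inhabited), its extent holds, so by Theorem~\ref{thm:isdefined-char} the partial element is defined.

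The main obstacle I anticipate is the bookkeeping in the case analysis establishing propositionality, specifically handling the coproduct $\big((k<n\times P(k))+(k=n)\big)$ carefully---one must check disjointness ($k<n$ and $k=n$ are incompatible for natural numbers) to conclude the coproduct is a proposition, and then ensure the $\transport$ of the disjunct-and-minimality data along the path $k_1=k_2$ genuinely produces the required identification rather than merely a logical equivalence. None of this is deep, but it is the step where care is needed; the construction of the witness itself is a routine application of bounded search on a decidable predicate.
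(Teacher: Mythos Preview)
Your proposal is correct and follows essentially the same approach as the paper: exhibit an inhabitant via bounded search (using decidability of $P$), then show the type is a proposition by arguing that any two first components must coincide, and conclude contractibility. Your treatment is in fact more careful than the paper's, which handles the propositionality step tersely (``otherwise we have $k<j$ \ldots and then we must have both $P(k)$ and $\neg P(k)$'') without explicitly separating the disjunct cases or discussing why the remaining components agree; your explicit case analysis on the coproduct and your observation that the coproduct of disjoint propositions is a proposition fill in exactly the details the paper leaves implicit.
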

\begin{proof}
    First, show that $\min_{k<n}P(k)$ is inhabited; i.e., that we have $k:\nat$
    satisfying the predicate
    \[Q(k)\defeq \big((k< n\times P(k)) + (k=n)\big)\times \qPi{j< k}\neg P(j)).\]
    Since $P$ has primitive recursive structure, it is decidable. Hence, we can do a bounded
    search to determine
    \[(\qSig{k<n}P(k))+\qPi{k<n}\neg P(k).\]
    In the first case, take the minimum such $k$, which must exist. In the second,
    $n$ must satisfy~$Q$.

    To show that $\min_{k<n}P(k)$ is a proposition, take any $k,j$ satisfying the
    predicate $Q$ we must have $k=j$, since otherwise we have $k<j$ (or conversely),
    and then we must have both $P(k)$ and $\neg P(k)$, a contradiction, and equality
    on $\nat$ is decidable.
\end{proof}
Then we may define for any predicate $P$ with primitive recursive structure
\definesymbol{$\mu$}{min}
\[\mu_{k< -}(P):\nat\to\nat\]
\[\mu_{k< n}(P) \defeq \val(P)(\star)\]
I.e., $\mu_{k<n}(P)$ is the first projection applied to the center of contraction of
$\min_{k<n}(P)$.
\begin{theorem}
    If $P$ has primitive recursive structure, then the function $\mu_{k<-}(P)$ has
    primitive recursive structure.
\end{theorem}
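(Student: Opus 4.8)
The plan is to define $\mu_{k<-}(P)$ explicitly as a primitive recursive combinator, building it up from the standard primitive recursive operations already shown to exist in Theorem~\ref{thm:pr-basics}. The key observation is that bounded minimization can be expressed via a bounded sum: if $\chi_P$ is the primitive recursive characteristic function of $P$, then the least $k<n$ with $P(k)$ (or $n$ if no such $k$ exists) is given by the formula
\[\mu_{k<n}(P) = \sum_{j< n}\prod_{i\le j}\opsgn(\chi_P(i)).\]
The inner product $\prod_{i\le j}\opsgn(\chi_P(i))$ is $1$ exactly when $\neg P(i)$ for all $i\le j$, and $0$ once some $i\le j$ satisfies $P(i)$; summing these indicator values over $j<n$ therefore counts exactly the number of steps before the first witness, which is precisely the least such $k$ (clamped to $n$). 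First I would verify this identity holds by a short induction on $n$, splitting on whether $P(0)\vee\cdots\vee P(n-1)$.

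Next I would assemble the primitive recursive structure. Since $P$ has primitive recursive structure, it has a characteristic function $\chi_P$ with $\PrimRec(\chi_P)$; then $\opsgn\circ\chi_P$ has primitive recursive structure by composition and part~(x) of Theorem~\ref{thm:pr-basics}. A bounded product of a primitive recursive function is primitive recursive by part~(viii), giving primitive recursive structure for $\lambda(j).\prod_{i\le j}\opsgn(\chi_P(i))$ (note $\prod_{i\le j} = \prod_{i<j+1}$, so this is a composition of the bounded-product combinator with the successor on the bound variable). Finally a bounded sum of a primitive recursive function is primitive recursive by part~(vii). Chaining these closure properties yields $\PrimRec(\mu_{k<-}(P))$, modulo checking that the combinator we build actually computes the function $\mu_{k<-}(P)$ as defined in the excerpt — i.e.\ that $\val$ of the center of contraction of $\min_{k<n}(P)$ agrees with the sum formula, which follows from the uniqueness clause $\qPi{j<k}\neg P(j)$ in the definition of $\min_{k<n}(P)$ together with the induction mentioned above.

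The main obstacle I anticipate is bookkeeping rather than conceptual: matching the bounded-product and bounded-sum combinators, which in Theorem~\ref{thm:pr-basics} are stated for $f:\nat^{n+1}\to\nat$ and range over $y<k$, against the exact shape needed here (one free argument, nested bounds $i\le j$ inside $j<n$). This requires a careful composition using projection combinators $\pc^n_k$ and the successor to convert $i\le j$ into $i<j+1$, and to thread the bound variable correctly through the nested recursion. Once the composite combinator is written down explicitly, confirming it has the claimed value is a routine unwinding of the $\app$ function. I do not expect to need any principle beyond the decidability of $P$ (used already in establishing $\min_{k<n}(P)$ is contractible) and the closure properties of $\PrimRec$ collected above.
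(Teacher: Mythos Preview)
Your proposal is correct and takes a genuinely different route from the paper. You use the classical closed-form identity
\[
\mu_{k<n}(P) = \sum_{j< n}\prod_{i\le j}\opsgn(\chi_P(i)),
\]
which reduces bounded minimization directly to the already-established closure of $\PrimRec$ under bounded sums, bounded products, composition, and $\opsgn$. The paper instead builds an auxiliary two-argument function $f:\nat^2\to\nat$ by explicit primitive recursion, with a case analysis on the full predicate $Q(k)$ appearing in the definition of $\min_{k<n}(P)$, and then recovers $\mu_{k<n}(P)$ as $f(n-1,n)$.

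Your argument is the textbook one and is arguably cleaner: once the sum-of-products formula is verified (a short induction using decidability of $P$, as you note), no new recursion needs to be hand-rolled---everything is assembled from parts (vii), (viii), (x) of Theorem~\ref{thm:pr-basics} and the composition clause. The paper's construction is more concrete about the recursive unfolding but requires checking that the case-split recursion actually tracks the least witness, which is a bit fiddlier. Your anticipated bookkeeping (threading $i\le j$ as $i<j+1$, nesting the bound variables via projections) is routine and does not hide any real obstacle.
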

\begin{proof}
    Note that the predicate $Q$ used in the definition of $\min_{k<n}P(k)$ is itself
    primitive recursive, being built up of bounded quantifiers and basic logical
    operations. So let $q$ be a primitive recursive term computing $Q$. Now consider
    the function $f:\nat^2\to\nat$ defined by primitive recursion with
       \[ f(0,n) =
      \begin{cases} 0 & \text{if }Q(0),\\
          n & \neg Q(0);
      \end{cases}\]
      \[ f(k+1,n) =
     \begin{cases}
         \min(k+1,f(k,n)) & \text{if }Q(k+1);\\
         \min(n,f(k,n))   & \text{if }\neg Q(k+1).
     \end{cases}
     \]
     Note that $f$ is defined by primitive recursion over the base function
     $g:\nat\to\nat$ which takes value $0$ if $Q(0)$ and is the identity otherwise,
     and recursive step given by
      \[ h(k,m,n) =
     \begin{cases}
         \min(k+1,m) & \text{if }Q(k+1);\\
         \min(n,m)   & \text{if }\neg Q(k+1).
     \end{cases}
     \]
     So then we see that $h$ has primitive recursive structure as a case analysis,
     and $f$ has primitive recursive structure as $h$ and $g$ do.
     Now, consider the value of $f(n-1,n)$: If $n=0$ or $n=1$, this is $0$.
     Otherwise, we know that either there is a least $k<n$ with $Q(k)$ or else there is
     no $k<n$ with $Q(k)$. If there is no such $k$, then $f(k,n) = n$ for each $k<n$,
     and so $f(n-1,n) = n$. If there is such a $k$, then we must have that $f(k,n) =
     k$ since for $j<k$, we must have $f(j,n)=n$. Then we know that
     \[\min_{k<n}P(k) = f(n-1,n),\]
     and the function $n\mapsto f(n-1,n)$ is defined by composition from functions
     with primitive recursive structure.
\end{proof}

We can continue to mimic the classical development of primitive recursion; from here,
on there will be no surprises. However, we stop to mention encodings of sequences,
since we will use them when working with general computability:
we will deal with higher arities via a
primitive recursive encoding. Specifically, for each $n$, we we fix a bijective
pairing function $\lar{-} : \nat^n\to\nat$
\index{primitive recursive!pairing function|textit} which is primitive recursive and such that
the projection functions $\pr_i:\nat\to\nat$  are primitive recursive. Defining such
functions, and indeed even a primitive recursive function
\index[symbol]{bracket@{$\lar{-}$}!encoding of PR pairing function|textit}
$\lar{-}:\sum(n:\nat),\nat^n\to\nat$ is a standard exercise---for
$\lar{-}:\sum(n:\nat),\nat^n\to\nat$, we can use G{\"o}del's $\beta$. Then we define an
$n$-ary function to be computable if its composition with $\lar{-}^{-1}$ is
computable. Note that by precomposition with $\lar{-}$, any function can be viewed as
an $n$-ary function for any $n$.

In fact, if both $\lar{-}:\nat^k\to\nat$ and $f:\nat\to\nat$ have primitive recursive
structure, then the composite function $f\comp\lar{-}:\nat^k\to\nat$ does as well. A
direct converse doesn't type check, but we can do the following: 
For any $k:\nat$ fix a pairing function $\lar{-}:\nat^k\to\nat$ and projection
functions $\pr_i:\nat\to\nat$ all of which have primitive recursive structure and
such that
\[\lar{\pr_0(n),\ldots,\pr_{k-1}(n)} = n\]
and 
\[\pr_i(\lar{x_0,\ldots,x_{k-1}}) = x_i.\]
For $f:\nat^k\to\nat$ with primitive recursive structure define $f':\nat\to\nat$ by
\[f' = f\comp\lar{\pr_0,\ldots,\pr_{k-1}}\]
so that 
\[f'(n) = f(\pr_0\lar{n},\ldots,\pr_{k-1}\lar{n}) = f(n).\]

In short, we can encode all primitive recursive functions of $k$ variables as
primitive recursive functions of $1$ variable. We will tacitly switch between a
function of a single variable and a function of $k$ variables when it suits us.

Similarly, we may encode the type $\nat+\nat$ via functions $\inl,\inr:\nat\to\nat$,
given by, for example, $\inl(n) = 2n$ and $\inr(n) = 2n+1$. Now given two functions (with
p.r.\ structure) $f,g:\nat\to\nat$ we can \emph{extend} them along $\nat+\nat$ to a
function $f+g:\nat\to\nat$ with primitive recursive structure defined by
\[(f+g)(n) = 
\begin{cases}
    f(k) & \text{if } n=2k;\\
    g(k) & \text{if } n=2k+1.\\
\end{cases}\]
This gives us a primitive recursive encoding of $\nat+\nat$, which we will use to
define a notion of computation.
\section{Recursive machines}\label{section:recursive-machines}

We abstract away the details of initializing a Turing machine, and the transition
function. The central ideas that captures how a Turing machine computes are
\begin{itemize}
    \item a Turing machine can be initialized by a simple process;
    \item checking whether a Turing computation has completed is simple;
    \item the transition function updating a Turing machine is a simple process.
\end{itemize}
We capture these properties in the following definition.
\begin{definition}
    A \nameas{recursive machine}{recursive!machine}, is a pair
    $(i,s):\PR_1\times\PR_1$. The function $i$ is called the
    \emph{initialization function}, and we treat it as a function
    $i:\nat\to\nat$, while $s$ is called the  \emph{transition function}, and we
    treat it as a function $s:\nat\to\nat+\nat$.

    We use $\RM$ for the type of recursive machines.
    \defineopfor{$\RM$}{RM}{recursive machine}
\end{definition}

We can evaluate recursive machines via a function
\definesymbolfor{$\eval$}{eval}{recursive machine}
\[ \eval:\RM\to(\nat\to\Lift \nat)\] as follows: Given $(i,s):\RM$,
let $s':\nat+\nat\to\nat+\nat$ be the function
\begin{align*}
    s'(\inl x) &= s(x),\\
    s'(\inr y) &= \inr y.
\end{align*}
As we will use this function $s'$, we will abuse notation and refer to $s'^k$ as
$s^k$.

\definesymbolfor{$\run$}{run}{recursive machine}
Now define for each $k:\nat$ the function $\run_k:\RM\to\nat\to(\nat+\nat)$ by
\[\run_k((i,s),x) \defeq s^k(\inl(i(x)))\]
And so we have for fixed $m=(i,s)$,
\[R_m(x,y)\defeq \qExists{k:\nat}\run_k(m,x) = \inr y.\]
Then $\eval(m)$ is the partial function with $R_m$ as its graph. Note that $R_m$ can
be defined without truncation using the results of Section~\ref{section:nat}.

The partial function $\eval(m)$ is the function \emph{computed by} $m$. We will
sometimes abuse notation and write $m(x)$ instead of $\eval(m,x)$. For a partial
function $f:\nat\to\Lift\nat$, let $\CompStruct(f)$ be the type of recursive machines
computing $f$:
\[\CompStruct(f) \defeq \qSig{m:\RM}f = \eval(m).\]
\index{recursive!structure|see{computability structure}}
\defineopfor{$\CompStruct$}{compstruct}{computability structure}
Say that $m$ is a \nameas{computability structure}{computability!structure} for $f$ or
that $f$ has \emph{recursive structure}, when
$m$ computes $f$, i.e., when $(m,-):\CompStruct(f)$.
Note that we have
\[\RM\simeq \qSig{f:\nat\to\Lift(\nat)}\CompStruct(f).\]
If $f:\nat\to\nat$ is an ordinary function, then $f$ has recursive structure when
$\eta\comp f:\nat\to\Lift\nat$ does.

Functions with recursive structure are closed under composition. Indeed, given machines
$m$ and $n$, we can define a composite machine $m;n$ as follows: First define
$i':\nat\to\nat+\nat$ by
\[i'(k) = \inl (i_m k);\]
and a function $s':\nat+\nat\to((\nat+\nat)+\nat)$ by
\begin{align*}
    s'(\inl k) &= \inl (\inl y) & \text{if }s_m(k)=\inl y;\\
    s'(\inl k) &= \inl (\inr (i_n y)) & \text{if }s_m(k)=\inr y;\\
    s'(\inr k) &= \inl (\inr y) & \text{if }s_n(k)=\inl y;\\
    s'(\inr k) &= \inr y & \text{if }s_n(k)=\inr y;\\
\end{align*}
Then the three components in the codomain $(\nat+\nat)+\nat$ correspond (from left to
right) to "we are still computing $m$"; "we have computed $m$ and are now computing
$n$"; and "we have finished computing the composite". By post composing with a
primitive recursive bijection $c:\nat+\nat\to\nat$, we get $(c\comp i',c\comp s')$ as
a recursive machine.

\begin{lemma}\label{thm:compstruct-composition}
If $m$ and $n$ compute $f$ and $g$ respectively, then $m;n$
computes $g\kcomp f$.
\end{lemma}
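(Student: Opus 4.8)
The plan is to unwind the definitions of $\eval$, the composite machine $m;n$, and Kleisli composition $\pcomp$ (really $\kcomp$ here, since we are at the level of $\Lift$), and verify that the partial function computed by $m;n$ has the same graph as $g\kcomp f$. Since $f = \eval(m)$ and $g = \eval(n)$ by hypothesis, and since a partial function $\nat\to\Lift(\nat)$ is determined by its graph relation $R_m$ (which, recall, is definable without truncation via the results of Section~\ref{section:nat}), it suffices to show that for all $x,z:\nat$ we have an equivalence $R_{m;n}(x,z) \simeq (g\kcomp f)_e(x) $ paired with agreement of values --- or more cleanly, that $\eval(m;n) = \eval(n)\kcomp\eval(m)$ by function extensionality and proposition extensionality, checking pointwise that the extents are logically equivalent propositions and the values agree.

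\textbf{Key steps.} First I would spell out $\run_k(m;n, x) = (s')^k(\inl(i'(x)))$ where $i'$ and $s'$ are the initialization and transition functions of the composite (before post-composing with the primitive recursive bijection $c:\nat+\nat\to\nat$, which is a red herring for correctness and can be transported away). The three-way codomain $(\nat+\nat)+\nat$ tracks the phase of computation: ``running $m$'', ``running $n$'', ``done''. The core combinatorial claim is a decomposition lemma: $\run_{k}(m;n,x) = \inr z$ holds for some $k$ if and only if there exist $j_1, j_2$ with $\run_{j_1}(m,x) = \inr y$ and $\run_{j_2}(n,y) = \inr z$ for the unique such $y$ --- and moreover the total step count $k$ relates to $j_1 + j_2$ (up to the one extra step spent re-initializing $n$ via $i_n$). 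This is proved by induction on $k$ (or on $j_1$ then $j_2$), exactly parallel to how one would reason about the delay monad's Kleisli extension in Theorem~\ref{thm:delay-monad}: the composite machine simulates $m$ until it halts with output $y$, then initializes and simulates $n$ on $y$. Once this decomposition is established, I unfold $(g\kcomp f)_e(x) = \qSig{p:f_e(x)}{g_e(f_v(x)(p))}$; since $f_e(x) = R_m(x,-)$-inhabitedness $= \trunc{\qSig{y}\qExists{j_1}\run_{j_1}(m,x)=\inr y}$ and similarly for $g$, a reshuffling of $\Sigma$'s matches this against $R_{m;n}(x,z)$-inhabitedness, and the value $g_v(f_v(x)(p))(q)$ is seen to coincide with $\run$-extracted output $z$. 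Apply function and proposition extensionality to conclude the equality of partial functions, hence $(m;n, -):\CompStruct(g\kcomp f)$.

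\textbf{Main obstacle.} The genuinely fiddly part is the bookkeeping in the decomposition lemma: getting the phase transitions exactly right (in particular the single extra step where $s'(\inl k) = \inl(\inr(i_n y))$ when $s_m(k) = \inr y$, which consumes a step to hand off to $n$), and confirming that $s'$ restricted to the $\inr$ component faithfully mirrors $s_n$ so that the second phase is a genuine copy of $n$'s run. Managing the primitive recursive bijection $c$ and its effect on $\run_k$ adds a layer of transport that I would handle once at the start by noting $\eval$ is invariant under such re-encoding. I expect no conceptual surprises --- this is the standard ``sequential composition of machines'' argument --- but it is the step where an informal proof would wave hands and where care is needed to keep everything honest; I would present the decomposition lemma explicitly and then let the $\Sigma$-reshuffling and extensionality do the rest mechanically.
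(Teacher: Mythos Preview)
Your proposal is correct and takes essentially the same approach as the paper: both reduce the claim to showing that the graph relation $R_{m;n}$ coincides with the relational composite $R_m;R_n$, which amounts to the run-decomposition lemma you describe. The paper is considerably terser---it simply asserts that ``the definition of $m;n$ was chosen specifically to satisfy this'' for one direction and sketches the converse by taking the least $k$ at which the phase transition occurs---whereas you plan to spell out the phase bookkeeping (handoff step, the bijection $c$) explicitly; your version is a more careful rendering of the same argument.
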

\begin{proof}
    We show that the relation $R_m;R_n$ used in the definition of $\eval$
    is the same as the relation $R_{m;n}$.

    Suppose that $(R_m;R_n)(x,z)$. That is,
    \[\qExists{y:\nat}\qExists{k:\nat}(s_m^k(i_{m}x) = \inr (y))\times
    (\qExists{j:\nat}s_n^k(i_{n}y)=z).\]
    We could give a non-truncated equivalent type, but this is not necessary here: by
    currying and properties of truncation, to get $R_{m;n}(x,z)$ it is enough to show
    \[\qPi{y:\nat}\big(\qSig{k:\nat}s_m^k(x) = \inr (y)\big)\to
    \big(\qSig{j:\nat}s_n^j(y)=z\big)
    \to R_{m;n}(x,z).\]
    The definition of $m;n$ was chosen specifically to satisfy this.

    Conversely, if $R_{m;n}(x,z)$, then take $k$ least such that $s'(\inl (i_m x)) =
    \inr y$ for some $y$, so that $s^k_m(i_m x) = \inr y'$ with $i_n(y')=y$. Then we
    must have $s_n^j(y) = z$ for some $j$, since $R_{m;n}(x,z)$.
\end{proof}

For convenience, we will often describe recursive machines in an informal way, making
reference to ``configurations'', ``continuing'', and ``halting''. As an example, we
again describe the machine for $m;n$:

To initialize $m;n$, initialize $m$. For the transition function, the input is in one
of two states:
\begin{enumerate}
    \item the input represents a partially computed output of $m$;
    \item the input represents a partially computed output of $n$.
\end{enumerate}
In each case, do the following:
\begin{enumerate}
    \item if one step of $m$ completes the computation of $m(x)$ with value $y$, then output
        the initialization of $n$ at $y$ in state (2). Otherwise, take one
        more step of $m$ and continue in state (1).
    \item if one step of $n$ completes the computation of $n(x)$ with value $y$, then
        output $y$ and halt. Otherwise, take one more step of $m$ and continue in state
        (2).
\end{enumerate}

\section{Recursive predicates and minimization}
\begin{definition}
    \index{relation!recursive|see{recursive, relation}}
    A \nameas{recursive relation}{recursive!relation} is a total function $R:\nat^k\to\nat$
    valued in $\{0,1\}$ with a computability structure. We also say that a
    relation $R:\nat^k\to\Prop$ \emph{has recursive structure} if it has a
    characteristic function $\chi_R$ with $\CompStruct(\chi_R)$.
\end{definition}

\begin{definition}
    We have a \nameas{minimization operator}{minimization} for partial functions, 
$\mu:(\nat\to\Lift \nat)\to\Lift \nat$, with
\[\extent(\mu f) \defeq \qSig{k:\nat}f(k)=\eta 0\times \qPi{j<k}\qSig{n:\nat}f(j)=
    \eta (n+1),\]
    and $\val(\mu f) \defeq \pr_1$.
    More generally, we may define the \emph{minimization $\mu y.R(y):\Lift\nat$ of a
    predicate} $R:\nat\to\Prop$
\[\extent(\mu y. R(y)) \defeq \qSig{y:\nat} R(y)\times \qPi{x<y}\neg R(x).\]
    and $\val(\mu y. R(y)) \defeq \pr_1$.
\end{definition}
When there is a least such $y$, this does indeed choose it. Moreover, the extent of
$\mu y.R(y)$ is a proposition. That is
\begin{lemma}
    For any predicate $R:\nat\to\Prop$, the type $\qSig{y:\nat} R(y)\times
    \qPi{x<y}\neg R(x)$ is a proposition.
\end{lemma}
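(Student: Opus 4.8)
The plan is to show that the type $Q(R) \defeq \qSig{y:\nat}{R(y)\times \qPi{x<y}{\neg R(x)}}$ is a proposition by exhibiting, for any two elements $(y,(p,m))$ and $(y',(p',m'))$, an equality between them. The strategy mirrors the proof in Lemma~\ref{lemma:untruncate-decidable-predicates} (the ``$Q$ is a proposition'' step there) and the argument that characteristic functions are unique. First I would show that the first components agree, i.e.\ $y=y'$. Suppose not; then by decidability of the order on $\nat$ (the standard ordering is a decidable total order) we have $y<y'$ or $y'<y$. In the first case, $m'$ applied to $y$ (using $y<y'$) gives $\neg R(y)$, but $p:R(y)$, a contradiction, hence $y=y'$ follows since equality on $\nat$ is decidable (by Theorem~\ref{thm:nat-is-set}, $\nat$ is a set, so $y=y'$ is a proposition, and we have ruled out $y\neq y'$). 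The second case is symmetric.

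Once we have a path $q:y=y'$, we need to transport the second component $(p,m)$ over $q$ and show it equals $(p',m')$. By Theorem~\ref{thm:patheq}, it suffices to show $\transport^{\lambda z. R(z)\times \qPi{x<z}\neg R(x)}(q,(p,m)) = (p',m')$. The key observation is that the type family $\lambda z. R(z)\times \qPi{x<z}\neg R(x)$ is a \emph{predicate}: for each fixed $z$, the type $R(z)$ is a proposition (it is a predicate by hypothesis, $R:\nat\to\Prop$), and $\qPi{x<z}\neg R(x)$ is a product of propositions, hence a proposition by function extensionality; a product of two propositions is a proposition. Since both endpoints lie in the same proposition $R(y')\times \qPi{x<y'}\neg R(x)$, the transported element and $(p',m')$ are automatically equal. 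Therefore $(y,(p,m))=(y',(p',m'))$, and $Q(R)$ is a proposition.

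The main obstacle, such as it is, is the first step: establishing $y=y'$ without excluded middle. The point is that this \emph{does} go through constructively precisely because the order on $\nat$ is decidable, so the trichotomy $y<y'$, $y=y'$, $y'<y$ is available, and the two strict cases are each contradictory against the ``minimality'' witnesses $m$ and $m'$ together with the ``membership'' witnesses $p$ and $p'$. I would be careful to phrase this using decidability rather than classical logic: from $y\neq y'$ and decidability of $<$ we derive a contradiction, hence $\neg(y\neq y')$; and since $y=y'$ is a decidable proposition (as $\nat$ is discrete), $\neg\neg(y=y')$ gives $y=y'$ by Lemma~\ref{lemma:decidable-to-dne}. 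Everything else is routine: the second-component equality is free because we are transporting within a proposition, so no explicit computation of transport is needed.
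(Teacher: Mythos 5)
Your proof is correct and takes essentially the same route as the paper: the fibers $R(y)\times\qPi{x<y}\neg R(x)$ are propositions (using function extensionality), so equality of pairs reduces to equality of the first components, which follows from decidability of the order on $\nat$ together with a contradiction between the minimality witnesses and $R(y)$, $R(y')$. The only cosmetic difference is that the paper argues directly by trichotomy rather than via $\neg\neg(y=y')$ and double-negation elimination, but this changes nothing essential.
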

\begin{proof}
    For any predicate $R:\nat\to\Prop$ and any $y:\nat$ we have that $R(y)\times
    \qPi{x\le y}\neg R(x)$ is a proposition, as a product of propositions, so we need
    only check the first component. That is we need
    to see that $m=n$ whenever we have witnesses $w:R(m)\times \qPi{x< m}\neg R(x)$ and
    $v:R(n)\times \qPi{x< n}\neg R(x)$. As the ordering on $\nat$ is
    decidable, we have either $m<n$, $m=n$ or $n<m$. If $m<n$, then we have $\neg R(m)$
    by $v$ and $\R(m)$ by $w$, a contradiction. Symmetrically, we cannot have $n<m$,
    and so $n=m$.
\end{proof}
Note that if there is a unique $y:\nat$ satisfying a predicate, then we can find $y$ by
minimization. Then expanding the definition of $\eval$ and $\mu$, we see that for any
recursive machine $m$, we have
\[\eval(m)(x) = \mu y.\qExists{k:\nat}\run_k(m,x) = \inr y.\]
Using this fact we can prove that the minimization of a relation with primitive
recursive structure has recursive structure.
\begin{theorem}\label{lemma:pr-minimization}
    If $R:\nat^2\to\Prop$ has primitive recursive structure, then
    $\lambda x. \mu y.R(x,y)$ has recursive structure.
\end{theorem}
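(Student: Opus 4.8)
The plan is to build an explicit recursive machine $m = (i,s)$ which, started on input $x$, steps through the candidates $k = 0,1,2,\dots$ in turn and halts with output $k$ as soon as $R(x,k)$ holds. Since $R$ has primitive recursive structure, I would first fix a characteristic function $\chi_R : \nat^2 \to \nat$ with $\PrimRec(\chi_R)$, so that the predicate $\lambda (x,k).\,\chi_R(x,k) = 1$ is decidable and has primitive recursive structure, together with a primitive recursive pairing $\lar{-}:\nat^2\to\nat$ and its primitive recursive projections $\pr_0,\pr_1:\nat\to\nat$, and the primitive recursive encoding of $\nat+\nat$ from the end of the previous section. The initialization function is $i(x) \defeq \lar{x,0}$, which has primitive recursive structure as a composition of a constant combinator, a projection, and the pairing combinator. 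The transition function $s:\nat\to\nat+\nat$ decodes a configuration $c$ as $x \defeq \pr_0(c)$, $k \defeq \pr_1(c)$ and does a case analysis on the decidable, primitive recursive predicate $\chi_R(x,k) = 1$:
\[
s(c) \defeq
\begin{cases}
\inr(k) & \text{if } \chi_R(x,k) = 1,\\
\inl(\lar{x,k+1}) & \text{otherwise.}
\end{cases}
\]
By the earlier lemma that case analysis on a primitive recursive predicate preserves primitive recursive structure, together with the primitive recursive encodings of $+$ and of pairing, $s$ has primitive recursive structure, so $m \defeq (i,s):\RM$.

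Next I would analyse the behaviour of $m$ by induction on $k$, using $\run_k(m,x) = s^k(\inl(i(x)))$ (with $s$ the extension of the transition function by $s(\inr y) = \inr y$): (a) if $R(x,j)$ fails for every $j < k$, then $\run_k(m,x) = \inl(\lar{x,k})$; and (b) if $j$ is least with $R(x,j)$, then $\run_k(m,x) = \inr(j)$ for all $k > j$. Together these show that $\qExists{k:\nat}\run_k(m,x) = \inr y$ holds exactly when $y$ is the least natural number with $R(x,y)$, i.e. exactly when $R(x,y)\times\qPi{z<y}\neg R(x,z)$. By the decidability of $\run_k(m,x) = \inr y$ and the results of Section~\ref{section:nat}, this can all be carried out without truncations.

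Finally, recalling the identity $\eval(m)(x) = \mu y.\qExists{k:\nat}\run_k(m,x) = \inr y$ from the previous section, the predicate $Q(y)\defeq\qExists{k:\nat}\run_k(m,x) = \inr y$ is logically equivalent to ``$y$ is least with $R(x,-)$'', a predicate satisfied by at most one $y$; hence the minimization over $Q$ has extent logically equivalent to $\qSig{y:\nat}R(x,y)\times\qPi{z<y}\neg R(x,z)$, which is precisely $\extent(\mu y.R(x,y))$, and the two values agree. Since extents are propositions, proposition extensionality on the extents and function extensionality on the values give $\eval(m)(x) = \mu y.R(x,y)$ for every $x$, and then $\eval(m) = \lambda x.\mu y.R(x,y)$ by function extensionality. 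Thus $(m,-):\CompStruct(\lambda x.\mu y.R(x,y))$, so $\lambda x.\mu y.R(x,y)$ has recursive structure. The only real work is the inductive computation of $\run_k$ and the extent-matching bookkeeping, both of which are routine, so I do not anticipate a serious obstacle.
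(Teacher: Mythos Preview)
Your proposal is correct and follows essentially the same approach as the paper: define the machine with $i(x)=\lar{x,0}$ and a transition that outputs $\inr(k)$ when $R(x,k)$ holds and $\inl\lar{x,k+1}$ otherwise, then verify that $\run_k(m,x)=\inr k$ precisely when $k$ is least with $R(x,k)$. You are simply more explicit about the characteristic function, the inductive computation of $\run_k$, and the final extensionality bookkeeping, all of which the paper leaves implicit.
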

\begin{proof}
    Define
    \[i(x) \defeq \langle x,0\rangle;\]
    and
    \[s(\langle x,n\rangle) \defeq
    \begin{cases}
        \inr(n) & \text{if }R(x,n);\\
        \inl \langle x,n+1\rangle &\text{otherwise.}
    \end{cases}
        \]
    Pairing has primitive recursive structure, and case analysis on a primitive recursive
    predicate has a primitive recursive structure, so we know $i$ and $s$ have primitive
    recursive structure. Hence $m=(i,s)$ is a recursive machine. We need to see that
    $\eval(m)(x) = \mu y.R(x,y)$.
    We have that $\run_k(m,x) = \inl \langle x, k+1\rangle$ until $R(x,k)$. Then
    $\run_k(m,x) =\inr k$ precisely when $k$ is least such that $R(x,k)$.
\end{proof}
It is tempting to use the same argument to show that this works for any
\emph{recursive} predicate, but we need to be more careful: the function $s$ defined
in the proof is only primitive recursive because $R$ is. For a general recursive
predicate, more work needs to be done. The tools required are discussed in the next
section.

\section{The Normal Form Theorem}\label{section:normal-form}
In this section we prove a version of the normal form theorem for recursive machines.
That is, we define encodings of recursive machines, such that we may define Kleene's
predicate $\KleeneT:\nat^3\to\Prop$ and function $\KleeneU:\nat\to\nat$.  The
meaning of $\KleeneT(m,x,y)$ is that $y$ represents a full computation trace of the
application of the function encoded by $m$ to $x$---such a $y$ exists only when
$m(x)$ produces a value. The function $\KleeneU$ then extracts the value produced by~$m(x)$.
We will then prove that the minimization of an arbitrary predicate with computability
structure defines a function with computability structure.

In order to prove this, we first prove a version for primitive recursive combinators.
\begin{theorem}[Kleene's normal form theorem for Primitive recursive
    functions]\label{knf-pr}
    \definesymbol{$\encode$}{encode}
    There is an injection $\encode:\PR\to\nat$, a predicate $\KleeneT':\nat^3\to\Prop$ with
    primitive recursive structure,
    and a function $\KleeneU':\nat\to\nat$ with primitive recursive structure such
    that for any primitive recursive combinator $t$ and any $x:\nat$ we have that
    $\qSig{y:\nat}\KleeneT'(\encode(t),x,y)$ is contractible, and moreover 
    \[\app(t,x) = \KleeneU'(y),\]
    whenever $y$ is such that $\KleeneT'(\encode(t),x,y)$. 
\end{theorem}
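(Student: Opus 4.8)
The plan is to prove Kleene's normal form theorem for primitive recursive combinators by a direct construction that mirrors the classical proof, carefully tracking the recursive structure of everything in sight. The key idea is that the "computation trace" $y$ of $\app(t,x)$ should encode not just the final value but the entire tree of subcomputations needed to produce it, following the inductive structure of the combinator $t$.

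First I would fix a primitive recursive bijective coding of finite sequences $\nat^* \to \nat$ with primitive recursive length and projection functions (as discussed at the end of the primitive recursion section, using G\"odel's $\beta$). Using this, I would define the injection $\encode:\PR\to\nat$ by recursion on the structure of combinators: $\encode(\succc)$, $\encode(\pc^n_k)$, $\encode(\cc^n_k)$ are distinct "tagged" codes, while $\encode(f\lar{g_1,\ldots,g_n})$ and $\encode(\rc_{f,g})$ are tagged codes packaging the codes of their constituents. Injectivity follows because the tags distinguish the constructors and the sequence coding is injective. Then I would define the trace predicate $\KleeneT'(e,x,y)$ to assert that $y$ codes a well-formed computation tree: $y$ should decode to a pair (or tuple) recording the combinator code at the root, the input, the output, and (for the composite and recursion cases) a list of sub-traces for the recursive subcalls, with consistency conditions linking a node's output to the outputs of its children in the way dictated by the combinator's semantics. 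Crucially, all of these conditions---decoding, checking tags, comparing values, iterating over bounded lists of subtraces---are built from bounded quantifiers, equality, and the basic primitive recursive operations collected in Theorem~\ref{thm:pr-basics}, so $\KleeneT'$ has primitive recursive structure. The function $\KleeneU'(y)$ simply projects out the "output" component of the root node of the tree coded by $y$, which is again primitive recursive.

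Next I would verify the two assertions. For contractibility of $\qSig{y:\nat}\KleeneT'(\encode(t),x,y)$: existence is proved by induction on $t$, building the trace tree from the traces of subcomputations (using that $\PrimRec$ combinators denote total functions, so every subcall terminates); uniqueness follows because $\KleeneT'$ is decidable (primitive recursive, hence valued in $\{0,1\}$) so its total space is a proposition once we show any two trace trees for the same $(\encode(t),x)$ must be literally equal---and this holds because the tree is rigidly determined: the root data is forced by $\encode(t)$ and $x$, the children's inputs are forced by the semantics, and by the inductive hypothesis their traces are unique. Alternatively, and perhaps more cleanly, I would arrange the coding so that a \emph{canonical} minimal trace is singled out and show $\KleeneT'$ holds only of that canonical $y$; then contractibility is immediate. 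Finally, $\app(t,x) = \KleeneU'(y)$ whenever $\KleeneT'(\encode(t),x,y)$ is just the statement that the root output recorded in a valid trace agrees with the actual value of $\app(t,x)$, proved by the same structural induction that established existence.

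The main obstacle I expect is managing the bookkeeping of the trace encoding for the two recursive constructors---particularly $\rc_{f,g}$, where the trace must record the entire sequence of intermediate values $\recop_{f,g}(0,\vec{x}), \recop_{f,g}(1,\vec{x}), \ldots, \recop_{f,g}(k,\vec{x})$ together with a subtrace of $f$ at each step and one subtrace of $g$ at the base, all while keeping the verification predicate within the primitive recursive functions (which requires that the length of the trace, and hence the bound on all internal quantifiers, be itself extractable primitive-recursively from $y$). Getting the arities right in the composite case $f\lar{g_1,\ldots,g_n}$ is similarly fiddly but routine. Once the encoding is set up correctly, the primitive recursiveness of $\KleeneT'$ and $\KleeneU'$ is a mechanical check against Theorem~\ref{thm:pr-basics} and the closure of primitive recursive structure under the logical operations and bounded minimization established earlier, and the contractibility and correctness claims are routine structural inductions.
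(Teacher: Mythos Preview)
Your proposal is correct and takes essentially the same approach as the paper: define $\encode$ by tagged sequence codes on the combinator structure, take $y$ to code a computation tree whose nodes record (combinator, input, output) with correctness conditions imposed recursively on the branches, let $\KleeneU'$ project the root output, and verify everything by structural induction. The paper handles $\rc_{f,g}$ not by a flat list of intermediate values but by letting the tree unfold one step at a time (the root $[\rc_{f,g},\langle n{+}1,x\rangle,y]$ has two children, one a full subtree for $\rc_{f,g}$ at $n$ and one for $f$), which sidesteps your anticipated bookkeeping obstacle; otherwise your plan and the paper's construction coincide, and your treatment of contractibility is in fact more explicit than the paper's.
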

\begin{theorem}[Kleene's normal form theorem for recursive machines]\label{knf}
    There is an injection $\encode:\RM\to\nat$, a predicate $\KleeneT:\nat^3\to\univ$
    with primitive recursive structure, and a function $\KleeneU:\nat\to\nat$ with primitive
    recursive structure such that for any recursive machine $m$ and any $x:\nat$ we
    have that $\qSig{y:\nat}\KleeneT(\encode(m),x,y)$ is a proposition, and moreover
    \[\eval(m,x) = ((\qSig{y:\nat}\KleeneT(\encode(m),x,y)),\KleeneU\circ\pr_1).\]
\end{theorem}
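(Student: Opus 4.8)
The plan is to prove Kleene's normal form theorem for recursive machines (Theorem~\ref{knf}) by bootstrapping off the corresponding theorem for primitive recursive combinators (Theorem~\ref{knf-pr}), together with the characterization $\eval(m,x) = \mu y.\,\qExists{k:\nat}\run_k(m,x)=\inr y$ established in Section~\ref{section:recursive-machines}. The key observation is that a recursive machine $m=(i,s)$ is by definition a pair of primitive recursive combinators, so I can set $\encode(m) \defeq \lar{\encode(i),\encode(s)}$ using the injection $\encode:\PR\to\nat$ from Theorem~\ref{knf-pr} and a primitive recursive pairing function; since both $\encode$ on $\PR$ and pairing are injective, this $\encode:\RM\to\nat$ is injective.

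Next I would build $\KleeneT$ by internalizing the computation of $\run_k$. A ``computation trace'' witness $y$ for $m(x)$ should encode both the number of steps $k$ and the full sequence of intermediate configurations $c_0,c_1,\dots,c_k$, where $c_0 = \inl(i(x))$, each $c_{j+1} = s'(c_j)$, and $c_k = \inr(v)$ for the output value $v$ (and no earlier $c_j$ is an $\inr$). Concretely, I would take $y$ to be a pair $\lar{k, \sigma}$ where $\sigma$ encodes the finite sequence of configurations via the primitive recursive sequence-coding machinery (Gödel's $\beta$) recalled at the end of Section~\ref{section:primitive-recursion}, and define $\KleeneT(e,x,y)$ to assert: decoding $e$ as $\lar{e_i, e_s}$, the sequence $\sigma$ has length $k+1$, its first entry is $\inl$ applied to the value of the combinator coded by $e_i$ on input $x$ (checked via $\KleeneT'$), each consecutive pair is related by the step function coded by $e_s$ (again checked via $\KleeneT'$ applied to $e_s$, wrapped with the $s'$-extension logic for $\inr$-configurations), the last entry is of the form $\inr(v)$, and no earlier entry is of that form. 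Every clause here is a bounded quantification over decidable, primitive-recursive predicates composed with $\KleeneT'$ and $\KleeneU'$, so $\KleeneT$ has primitive recursive structure by Theorem~\ref{knf-pr} and Theorem~\ref{thm:pr-basics}. The extraction function $\KleeneU(\lar{k,\sigma}) \defeq v$ where $\sigma_k = \inr(v)$ is likewise primitive recursive.

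I then need two verification lemmas. First, $\qSig{y:\nat}\KleeneT(\encode(m),x,y)$ is a proposition: two traces for the same $(m,x)$ must have the same length (the ``no earlier $\inr$-configuration'' clause pins down $k$ as the least halting step) and, by the determinism of $i$ and $s$ encoded in the step clauses, the same configuration sequence; since sequence-coding is injective, the $y$'s are equal, and the $\KleeneT$-witnesses are equal because $\KleeneT$ is valued in a subsingleton-up-to... — here I would use that all the component predicates are decidable and hence, by Hedberg, propositions, so $\KleeneT(e,x,y)$ is a proposition, making the whole $\Sigma$-type a proposition. Second, $\eval(m,x) = ((\qSig{y:\nat}\KleeneT(\encode(m),x,y)),\KleeneU\comp\pr_1)$ as elements of $\Lift\nat$: by proposition and function extensionality it suffices to show the extents are logically equivalent and the values agree. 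For the extent, a trace $y$ exists iff $\run_k(m,x) = \inr v$ for some $k$ (unwind: the trace clauses force $\sigma_j = \run_j(m,x)$ by induction on $j$, and the $\inr$-clauses force $k$ minimal with $\run_k(m,x)$ an $\inr$), which is exactly $\isDefined(\eval(m,x))$ by the $\mu$-characterization; conversely any halting run gives a trace. For the value, $\KleeneU(y) = v = \run_k(m,x)$-output $= \eval(m,x)$'s value.

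The main obstacle I anticipate is bureaucratic rather than conceptual: setting up the sequence-encoding of configurations and threading the extension logic (the passage from $s$ to $s'$, i.e., making $\inr$-configurations fixed) cleanly through the primitive recursive clauses of $\KleeneT$, and then carefully verifying by induction on trace length that a $\KleeneT$-witness is equivalent to the minimized run predicate $\qExists{k:\nat}\run_k(m,x)=\inr y$ — keeping track of which quantifiers are bounded (and hence primitive recursive) versus the single unbounded existential over $k$ that is absorbed into the outer $y$-quantifier. I would structure the proof so that Theorem~\ref{knf-pr} does all the heavy lifting about primitive recursive evaluation, and Theorem~\ref{knf} only adds the thin ``iterate the step function and detect halting'' layer on top.
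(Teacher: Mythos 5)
Your overall strategy is sound and close in spirit to the paper's, but there is one concrete step that fails as written: the claim that $\KleeneT$ as you define it has primitive recursive structure. If the trace $y=\lar{k,\sigma}$ stores only the configurations $c_0,\dots,c_k$, then the clauses ``$\sigma_0$ is $\inl$ of the value of the combinator coded by $e_i$ at $x$'' and ``consecutive entries are related by the step function coded by $e_s$'' are not bounded quantifications: verifying them via $\KleeneT'$ requires an existential over the $\KleeneT'$-witness (the primitive recursive computation tree), and no primitive recursive bound on such witnesses exists uniformly in the code $e$ (a universal evaluator for coded primitive recursive functions is not primitive recursive). So ``every clause here is a bounded quantification over decidable, primitive-recursive predicates'' is false for the trace format you chose. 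The repair is easy and standard: enrich the trace so that $y$ also contains the $\KleeneT'$-witness trees for $i(x)$ and for each application of $s$ (then all searches are bounded by $y$, and uniqueness of these trees, which is part of Theorem~\ref{knf-pr}, keeps your propositionhood argument intact). With that fix, your verification of the extent/value equation via proposition and function extensionality, and your use of decidability plus minimality of $k$ to get that $\qSig{y:\nat}\KleeneT(\encode(m),x,y)$ is a proposition, go through.

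It is worth noting how the paper avoids this issue: instead of building machine-level traces, it proves a lemma that the function $\source$, sending $k$ and $\lar{\encode(i),\encode(s)}$ to $\encode(\run_k(i,s))$, has primitive recursive structure, and then simply sets $\KleeneT(e,x,\lar{k,y})\defeq\KleeneT'(\source(k,e),x,y)$ and $\KleeneU(\lar{k,y})\defeq\KleeneU'(y)$. There the whole iterated run is packaged as a single primitive recursive computation tree, so primitive recursiveness is inherited directly from the $\correct$ predicate of Theorem~\ref{knf-pr} with no new trace format. Your route buys a more explicit description of halting and of minimality of $k$ (which the paper's terse definition leaves implicit, and which is really what makes the $\Sigma$-type a proposition and makes the extent match $\eval(m,x)$ rather than being always inhabited), at the cost of the extra bookkeeping above; the paper's route buys the primitive recursiveness almost for free from the $\source$ lemma.
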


Recall that we have a primitive recursive onto coding of finite sequences, along with
projection and length functions. We use $\langle a_0,\ldots,a_n\rangle$ for the
encoding of the sequence $(a_0,\ldots,a_n)$, and $\concat$ for concatenation. If $x$
is a number representing a sequence, then $(x)_i$ is the $i$-th coordinate of~$x$.

Now, we define an interpretation
\begin{align*}
    \encode &: \PR\to\nat \\
    \encode(\succc) &= \langle 0\rangle, \\
    \encode(\pc^n_k)&= \langle 1, n, k\rangle, \\
    \encode(\cc^n_k)&= \langle 2, n ,k\rangle, \\
    \encode(f\langle g_1,\ldots,g_n\rangle) &= \langle 3\rangle \concat \encode(f)
        \concat \encode(g_1) \concat \cdots \concat \encode(g_n), \\
    \encode(\rc_{f,g}) &= \langle 4\rangle \concat \encode(f) \concat \encode(g).
\end{align*}

To arrive at $\KleeneT'$ and $\KleeneU'$, we first define a type of computation
trees. Each node of the tree will have 3 labels:
\begin{itemize}
    \item (The code of) a primitive recursive combinator,
    \item an integer $x$, representing a sequence of inputs,
    \item an integer $z$ representing an output.
\end{itemize}

We define
\begin{enumerate}
    \item $[\succc,x,y]$ is a leaf computation tree;
    \item $[\pc^n_k,x,y]$ is a leaf computation tree;
    \item $[\cc^n_k,x,y]$ is a leaf computation tree;
    \item $[f\langle g_1,\ldots,g_n\rangle,x,y]$ is the root of a computation tree with
        $n+1$ branches;
    \item $[\rc_{f,g},\langle 0,x\rangle,y]$ is the root of a computation tree with
        1 branch;
    \item $[\rc_{f,g},\langle n+1,x\rangle,y]$ is the root of a computation tree with
        2 branches;
\end{enumerate}
A computation tree $t$ is \emph{correct} when
\begin{enumerate}
    \item the root of $t$ is $[\succc,x,x+1]$ for some $x:\nat$;
    \item the root of $t$ is $[\pc^n_k,x,k]$ for some $x:\nat$;
    \item the root of $t$ is $[\cc^n_k,x,(x)_0]$, for some $x:\nat$;
    \item the root of $t$ is $[f\langle g_1,\ldots,g_n\rangle,x,y]$, the branches of $t$ are
        all correct, and
        \begin{itemize}
            \item the $0$-th branch has root $[f,\langle z_1,\ldots,z_n\rangle,y]$,
                and
            \item for $0<k\le n$, the $k$-th branch has root $[g_k,x,z_k]$;
        \end{itemize}
    \item the root of $t$ is $[\rc_{f,g},\langle 0\rangle\concat x,y]$, the branch is
        correct and has label $[g,x,y]$.
    \item the root of $t$ is $[\rc_{f,g},\langle n+1\rangle\concat x,y]$, both branches are
        correct, and
        \begin{itemize}
            \item the $0$-th branch has root $[\rc_{f,g}\langle n\rangle\concat x, z]$,
                and
            \item the first branch has root $[f,\langle n\rangle\concat
                x\concat\langle z\rangle,y]$;
        \end{itemize}
\end{enumerate}
In short, a computation tree with root $[t,x,y]$ is correct precisely when
$\app(t,x)=y$.

Now we can encode a tree $t$ with root label $[c,x,y]$ and branches $\{t_i\}_{i<n}$ as
\[\hat{t} = \langle \langle \encode(c),x,y\rangle, \hat{t}_0,\ldots,\hat{t}_{n-1}\rangle.\]

By expanding the definition of the coding of sequences, and using the definition of
``correct'' above, we get a primitive recursive
predicate $\correct(x)$ expressing "$x$ encodes a correct computation tree." Now we may
define 
\[\KleeneT'(e,x,y) \defeq \correct(y)\wedge \big(((y)_0)_0 = e\big) \wedge
\big(((y)_0)_1 = x\big).\]
and
\[\KleeneU'(y) \defeq ((y)_0)_2,\]
so that we can define $\{-\}_{\PR}$ to be
\[\{n\}_{\PR}(x) = \KleeneU'(\mu_y.\KleeneT'(n,x,y)).\]

We need to see that for any $t:\PR$ and $x:\nat$
\[\app(t,x) = \{\encode(t)\}_{\PR}(x),\]
i.e., that
\[\app(t,x) = \KleeneU'(\mu_y.\KleeneT'(n,x,y)).\]
But we've constructed $\KleeneT'$ such that
\[\KleeneT'(n,x,y)\Leftrightarrow \eval(t,x) = y.\]

Note that for a fixed $k$ and recursive machine $m$, $\run_k(m):\nat\to\nat$ has
primitive recursive structure. In fact, we can do better.
\begin{lemma}
The function \[ \source:\nat\to\nat\to\nat\] taking a number $k$ and
$\langle \encode(i),\encode(s)\rangle$ to $\encode(\run_k(i,s))$ has 
primitive recursive structure.
\end{lemma}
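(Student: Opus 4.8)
The plan is to show that the map $\source$ is built up by composition, primitive recursion, and the basic combinators from functions already known to have primitive recursive structure, using the closure properties collected in Theorem~\ref{thm:pr-basics} and the lifting of primitive recursive structure over logical operations, case analysis, and the sequence-coding apparatus (concatenation $\concat$, projection $(-)_i$, length) established in Section~\ref{section:normal-form}.

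First I would unwind the definition. Recall that $\run_k((i,s),x) = s^k(\inl(i(x)))$, where $s^k$ really means iterated application of the extended transition function $s':\nat+\nat\to\nat+\nat$, and where $\nat+\nat$ is primitive recursively encoded via $\inl,\inr:\nat\to\nat$ from the end of Section~\ref{section:recursive-machines}. So $\encode(\run_k(i,s))$ is the pair $\langle \encode(i'),\encode(s^k)\rangle$ where $i'(x) = \inl(i(x))$, and the real content is a function taking $k$ and $\langle\encode(i),\encode(s)\rangle$ to an encoding of the $k$-fold composite of $s$ with itself (as a primitive recursive combinator). I would first establish, as a lemma, that there is a primitive recursive binary operation $\cc$ on codes such that $\cc(\encode(f),\encode(g)) = \encode(f\langle g\rangle)$ for unary combinators — this is immediate from the clause $\encode(f\langle g_1,\ldots,g_n\rangle) = \langle 3\rangle\concat \encode(f)\concat\cdots\concat\encode(g_n)$, since concatenation and the constant $\langle 3\rangle$ are primitive recursive. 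Similarly there is a primitive recursive code $\iota$ for the identity combinator $\pc^1_0$, and a primitive recursive operation combining a code with the encoding of $\inl$ or $\inr$.

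Next I would iterate. Given the binary composition-of-codes operation, define $C:\nat\to\nat\to\nat$ by primitive recursion on the first argument: $C(0,e) = \iota$ and $C(k+1,e) = \cc(e, C(k,e))$, so that $C(k,\encode(s))$ is the code of $s^k$ (one must be slightly careful about whether the iteration is of $s$ or of the extended $s'$; since the encoding of $\nat+\nat$ is itself primitive recursive, passing from a code for $s$ to a code for $s'$ is a primitive recursive operation on codes, done once at the outset). Then $\source(k,\langle e_i,e_s\rangle)$ is obtained by composing $C$ with the projection functions $(-)_0,(-)_1$ applied to the second argument, the ``prepend $\inl$'' code operation applied to $e_i$, and a final pairing $\langle-,-\rangle$ — all of which are primitive recursive and which $\PrimRec$ is closed under by Theorem~\ref{thm:pr-basics} together with the combinator clauses.

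The main obstacle will be the bookkeeping around the encoding of $\nat+\nat$ and the precise bracketing conventions: $\run_k(m,x) = s'^k(\inl(i(x)))$ lands in $\nat+\nat$, and to make $\encode(\run_k(i,s)):\nat$ meaningful one must fix which combinator $\run_k(i,s)$ actually \emph{is} (namely $s'^k\langle i'\rangle$ where $i' = \inl\langle i\rangle$ and $s'$ is the primitive recursive extension), and check that each step of the construction — forming $\langle 3\rangle\concat\cdots$, the recursion building $C$, the substitution of projections — stays within the primitive recursive combinators. None of this is deep; it is the kind of ``standard exercise'' the text already flags, and I would present it as the verification that each displayed defining equation exhibits the target function as a composition/recursion of functions with primitive recursive structure, invoking $\PrimRec$-closure at each node rather than grinding through explicit G\"odel numbers.
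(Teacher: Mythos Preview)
Your approach is correct and matches the paper's one-line proof (``Straightforward induction on $k$ and the combinators for $i$ and $s$''): primitive recursion on $k$ using a primitive-recursive composition-of-codes operation built from the sequence-coding machinery. One minor slip to clean up: $\run_k(i,s)$ is a single unary combinator $s'^k\langle\inl\langle i\rangle\rangle$, not a recursive machine, so its encoding is the composition code $\langle 3\rangle\concat\encode(s'^k)\concat\encode(i')$ rather than the pair $\langle\encode(i'),\encode(s^k)\rangle$ --- but you already identify the correct combinator in your final paragraph, so the argument stands.
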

\begin{proof}
    Straightforward induction on $k$ and the combinators for $i$ and $s$.
\end{proof}
 Then we may use $\KleeneT'$ and $\KleeneU'$ to define
$\KleeneT$ and $\KleeneU$ for recursive machines:
\[\KleeneT(e,x,\langle k,y\rangle) = \KleeneT'(\source(k,e),x,y).\]
and
\[\KleeneU(\langle k,y\rangle) = \KleeneU'(y).\]

This result gives us the Kleene-bracket function
\definesymbol{$\{-\}$}{bracket-kleene}
$\{-\}:\nat\to(\nat\to\Lift \nat)$, defined by
\[\{e\}(x) \defeq \Lift \KleeneU(\mu y.\KleeneT(e,x,y)).\]
In order to be able to apply $\{-\}$ to a partial natural number, we may Kleisli
extend $\{-\}$ in the first component. We will abuse notation and write also
$\{-\}:\Lift \nat\to(\nat\to\Lift \nat)$ for this map.

\begin{theorem}\label{thm:rec-cases}
    If $R:\nat\to\Prop$ has recursive structure, then case analysis on $R$ has
    recursive structure. Explicitly, if $R$ is a recursive predicate and $e,e':\RM$
    then the function $f:\nat\to\Lift\nat$ defined by
    \[f(x) = \begin{cases}
        e(x) & \text{if }R(x)\\
        e'(x) & \text{if }\neg R(x)
    \end{cases}
     \]
    has recursive structure.
\end{theorem}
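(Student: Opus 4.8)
The plan is to build a recursive machine that interleaves the computation of $e$ and $e'$ while deciding $R$, and show that its evaluation agrees with $f$. Since $R$ has recursive structure, fix a characteristic function $\chi_R:\nat\to\nat$ with a computability structure (a recursive machine computing $\eta\circ\chi_R$); write $m_R$, $m$, $m'$ for recursive machines computing $\eta\circ\chi_R$, $\eval$-equal to $e$, and $\eval$-equal to $e'$ respectively. First I would describe the composite machine informally in the style used for $m;n$ earlier in the chapter: on input $x$, initialise $m_R$ on $x$; run $m_R$ to completion, obtaining a value $b\in\{0,1\}$; if $b=1$, initialise $m$ on $x$ and thereafter simulate $m$; if $b=0$, initialise $m'$ on $x$ and thereafter simulate $m'$. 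The configuration space is a primitive-recursive encoding of the disjoint union ``computing $\chi_R$'' $+$ ``computing $e$'' $+$ ``computing $e'$'', which we can encode into $\nat$ using the primitive recursive codings of $\nat+\nat$ from Section on primitive recursion.

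The initialization function and transition function of this composite are built from $i_{m_R}, s_{m_R}, i_m, s_m, i_{m'}, s_{m'}$ together with a case analysis on whether a completed $\chi_R$-run yielded $0$ or $1$, and case analyses on which summand the current configuration lies in; each of these is a primitive-recursive combinator, so the composite is a genuine recursive machine. I would then verify correctness by analysing the graph relation $R_{m''}$ of the composite machine $m''$, exactly as in Lemma~\ref{thm:compstruct-composition}: unwinding $\run_k(m'',x)$, the machine spends some finite number of steps finishing the $\chi_R$-computation (which always terminates, since $\chi_R$ is total), then behaves as $\run_{-}(m,x)$ or $\run_{-}(m',x)$ according to the bit obtained. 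Hence $\eval(m'',x)$ has extent $\qSig{k:\nat}\run_k(m'',x)=\inr y$ for some $y$, which is equivalent (by tracking the phase transition) to $\isDefined(e(x))$ when $R(x)$ holds and to $\isDefined(e'(x))$ when $\neg R(x)$ holds, with matching values; since $R$ is decidable this gives $\eval(m'')=f$, i.e. $(m'',-):\CompStruct(f)$.

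The main obstacle I expect is bookkeeping rather than conceptual: making the phase-tagged configuration encoding explicit enough that the transition function is visibly primitive recursive, and handling the ``seam'' step where the $\chi_R$-phase finishes and the chosen sub-machine is initialised, so that the correctness argument about $R_{m''}$ goes through cleanly. One subtlety worth flagging is that we should not literally case-split on $R(x)$ inside the construction of the machine (the machine is a single object independent of $x$); instead the case analysis is internal to the transition function, applied to the \emph{output bit} of the $\chi_R$-run, which is a decidable check on a natural number and hence primitive recursive by the piecewise lemma for primitive recursive structure. With that in place, the equality $\eval(m'')=f$ follows by function extensionality and the characterization of equality in $\Lift\nat$, comparing extents and values pointwise. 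As with the earlier informal machine descriptions, I would present the construction semi-formally, giving the configuration states and the effect of one transition step in each state, and relegate the verification that $i''$ and $s''$ have primitive recursive structure to a remark, since it is a routine composition of the combinators already shown primitive recursive.
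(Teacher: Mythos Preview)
Your proposal is correct and follows essentially the same approach as the paper: build a three-phase machine that first runs the machine for $\chi_R$ on $x$, then branches to simulate $e$ or $e'$ according to the returned bit, with the configuration space encoded as a tagged disjoint union. The paper's proof is in fact slightly terser than yours (it gives only the informal state-by-state description and a one-line correctness claim), but the construction is identical; the one bookkeeping point you should make explicit is that the phase-1 configuration must carry the original input $x$ alongside the partially computed $\chi_R$-state, since $x$ is needed to initialise $e$ or $e'$ after the branch.
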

\begin{proof}
    The initialization function initializes $R$ for the transition function with 3
    states:
    \begin{enumerate}
        \item the input represents a pair $(x,y)$, where $x$ is the input to $f$, and
            $y$ is a partially computed value of $R(x)$;
        \item the input represents a partially computed value of $e(x)$;
        \item the input represents a partially computed value of $e'(x)$.
    \end{enumerate}
    The transition function does the following on $y$ in each case
    \begin{enumerate}
        \item if one step of $s_R(\pr_1 y)$ completes the computation of $R(x)$, then check
            if $R(x)$ returns $1$ or $0$, and return the initialization of $e$ or
            $e'$ at $x$ accordingly, and continue in the corresponding sates. If not, then
            return $(x,s_R(\pr_1 y))$, continuing in state (1);
        \item if one step of $s_e(y)$ completes the computation of $e(x)$, then
            return this and halt, otherwise take one step and continue in state (2);
        \item if one step of $s_{e'}(y)$ completes the computation of $e'(x)$, then
            return this and halt, otherwise take one step and continue in state (3).
    \end{enumerate}
    As this returns $e(x)$ when $R(x)=1$ and $e'(x)$ when $R(x)=0$, this machine is
    correct for $f$.
\end{proof}
\begin{theorem}\label{thm:mu-recursion}
    If $R:\nat^k\to\Prop$ has recursive structure, then $\lambda x.\mu y.R(x,y)$ has
    recursive structure.
\end{theorem}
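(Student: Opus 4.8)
The plan is to reduce $\mu$-recursion for an arbitrary recursive predicate to the primitive-recursive case handled in Theorem~\ref{lemma:pr-minimization}, using the Normal Form Theorem~\ref{knf} to replace ``one step of the computation of $R$'' with a genuinely primitive recursive operation. The key observation is that although $R:\nat^k\to\Prop$ only has recursive (not primitive recursive) structure, its characteristic function $\chi_R$ is computed by some recursive machine $m_R$, and by Theorem~\ref{knf} there is a primitive recursive predicate $\KleeneT$ and primitive recursive $\KleeneU$ such that $\chi_R(x,y) = \KleeneU(\mu t.\KleeneT(\encode(m_R),\lar{x,y},t))$. So $R(x,y)$ holds precisely when $\exists t.\big(\KleeneT(\encode(m_R),\lar{x,y},t)\times \KleeneU(t)=1\big)$, and this latter is a primitive recursive relation in the three variables $x$, $y$, $t$.

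First I would fix a recursive machine $m_R$ with $\CompStruct(\chi_R)$ computing the characteristic function of $R$ (this exists by the definition of ``$R$ has recursive structure''), and let $e\defeq\encode(m_R)$. Then I would define the primitive recursive predicate
\[
    S(x,\lar{y,t})\defeq \KleeneT(e,\lar{x,y},t)\times (\KleeneU(t)=1),
\]
which has primitive recursive structure by Theorem~\ref{knf} (asserting $\KleeneT$ and $\KleeneU$ are primitive recursive) together with Theorem~\ref{thm:pr-basics} for equality and conjunction, and with the primitive recursive structure of pairing and projections. Applying Theorem~\ref{lemma:pr-minimization} to $S$ gives us that $\lambda x.\mu w.S(x,w)$ has recursive structure, where $w$ ranges over codes $\lar{y,t}$ of pairs. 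From the value of $w$ we can recover $y=\pr_0(w)$ via a primitive recursive projection, and composing with this projection (which preserves recursive structure by Lemma~\ref{thm:compstruct-composition} and the fact that primitive recursive functions have recursive structure) yields a function with recursive structure.

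The remaining work is to check that the partial element produced actually equals $\mu y.R(x,y)$ as defined in Section~\ref{section:recursive-machines}, i.e.\ that the extents of definition agree and, when defined, the values agree. For the extent: $\mu w.S(x,w)$ is defined iff there exists a code $\lar{y,t}$ with $\KleeneT(e,\lar{x,y},t)$ and $\KleeneU(t)=1$, which by the correctness clause of Theorem~\ref{knf} happens iff there exists $y$ with $R(x,y)$; and one must verify this matches ``there is a least $y$ with $R(y)$''—here I would use that the lexicographically least such code $\lar{y,t}$ has least possible $y$ (since $\lar{-}$ can be chosen monotone in its first argument, or by an explicit argument that the pairing respects the ordering on first coordinates), so minimizing over codes and then projecting recovers the least $y$. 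For the value: once we have the least $y$ with $R(x,y)$, the value is $y$, which is exactly $\pr_0$ of the minimizing code. The main obstacle is precisely this bookkeeping step—ensuring that minimization over \emph{codes of witness-pairs} and subsequent projection genuinely computes minimization over $y$; this requires either choosing the pairing function $\lar{-}:\nat^2\to\nat$ with suitable monotonicity in its first argument, or inserting an auxiliary bounded search. Everything else is routine expansion of definitions and appeals to closure of recursive structure under composition (Lemma~\ref{thm:compstruct-composition}) and the primitive recursive toolkit (Theorem~\ref{thm:pr-basics}).
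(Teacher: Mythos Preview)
Your approach differs from the paper's: the paper directly constructs a recursive machine that, on input $x$, cycles through $y = 0, 1, 2, \ldots$, running the machine for $\chi_R$ on $\langle x, y\rangle$ to completion each time and halting with output $y$ when the result is $1$. It does not invoke the Normal Form Theorem at all. Your strategy---reduce to Theorem~\ref{lemma:pr-minimization} via Theorem~\ref{knf}---is a legitimate alternative, and in some ways more conceptual.

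However, the step you label ``bookkeeping'' is a genuine gap, and neither of your proposed fixes works as stated. The numerically least $w = \lar{y, t}$ with $S(x, w)$ need not have least $y$: if $R(x, 0)$ holds but only with a large computation trace $t_0$, while $R(x, 1)$ holds with a small trace $t_1$, then $\lar{1, t_1}$ may be numerically smaller than $\lar{0, t_0}$, so your projection returns $1$ rather than $0$. Your first fix---choosing a pairing ``monotone in its first argument'' so that the least code has least $y$---is impossible: that would require every $\lar{0, t}$ to precede every $\lar{1, t'}$, but a bijection $\nat^2 \to \nat$ cannot place infinitely many values before $\lar{1, 0}$. Your second fix---an auxiliary bounded search for the least $y \le \pr_0(w)$ with $R(x, y)$---requires bounded minimisation over the \emph{recursive} predicate $R$, which is not yet available; establishing it by hand amounts to the paper's direct machine construction.

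A clean repair that stays within your strategy is to change the predicate you minimise. Take
\[
S'(x, n) \defeq \qExists{y \le n}\Big[\big(\qExists{t \le n}\KleeneT(e, \lar{x,y}, t) \wedge \KleeneU(t){=}1\big) \wedge \qAll{y' < y}\qExists{t' \le n}\KleeneT(e, \lar{x,y'}, t') \wedge \KleeneU(t'){=}0\Big],
\]
which is primitive recursive since all quantifiers are bounded. Then $\mu n.\, S'(x, n)$ is defined exactly when $\mu y.\, R(x, y)$ is (totality of $\chi_R$ guarantees the required traces for each $y' < y$ eventually appear), and from the minimising $n$ the correct $y$ is extracted by a primitive recursive bounded search. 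Composing via Lemma~\ref{thm:compstruct-composition} then gives recursive structure for $\lambda x.\, \mu y.\, R(x, y)$ as desired.
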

\begin{proof}
    The initialization function initializes $x$ to $\langle x,0,i_R(x,0)\rangle$. The transition
    function does the following on input $\langle x,y,c\rangle$:

    If one step of $R$ applied to $c$ finishes the computation of $R(x,y)$ with
    output $1$, then return $y$ and halt. Otherwise, if one step of $R$ applied to
    $c$ finishes the computation of $(x,y)$ with output~$0$, then return ${\langle
    x,y+1,i_R(x,y+1)\rangle}$ and continue. Finally, if neither of these cases hold,
    then we must continue computing $R$ at $c$, so output $\langle x,y,s_R(c)\rangle$
    and continue.
\end{proof}
\begin{theorem}\label{thm:pr-is-recursive}
    If $f:\nat^{n+2}\to\nat$ and $g:\nat^{n}\to\nat$ have recursive structure, then
    $\recop_{f,g}$ has recursive structure.
\end{theorem}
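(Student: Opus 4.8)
The statement asserts that primitive recursion preserves recursive structure: given $f:\nat^{n+2}\to\nat$ and $g:\nat^n\to\nat$ with computability structures, the function $\recop_{f,g}$ has a computability structure. The plan is to construct an explicit recursive machine computing $\recop_{f,g}$ and verify its correctness, following exactly the informal style used in the proofs of Theorems~\ref{thm:rec-cases}, \ref{thm:mu-recursion}, and \ref{thm:pr-is-recursive} earlier in the section. First I would recall that, by definition,
\begin{align*}
    \recop_{f,g}(0,x) &= g(x),\\
    \recop_{f,g}(k+1,x) &= f(k,\recop_{f,g}(k,x),x),
\end{align*}
so the machine must, on input $\langle k,x\rangle$, iteratively build up the values $\recop_{f,g}(0,x),\recop_{f,g}(1,x),\ldots,\recop_{f,g}(k,x)$, running the machine for $g$ once at the start and the machine for $f$ a total of $k$ times, threading the accumulated value through each call.

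The machine would be described informally via configurations, as is done throughout Section~\ref{section:recursive-machines}. The initialization function on $\langle k,x\rangle$ sets up a configuration that records $k$, $x$, a counter $j$ (initially $0$), a ``current value'' slot, and a sub-configuration that is the initialization of the machine for $g$ at $x$. The transition function operates in two states: (1) \emph{computing $g(x)$}, in which we step the $g$-machine until it halts, storing its output in the current-value slot and moving to state (2) with $j=0$; and (2) \emph{computing the $j$-th application of $f$}, in which, if $j=k$ we halt and output the current value, and otherwise we step the $f$-machine applied to $\langle j, v, x\rangle$ (where $v$ is the current value) until it halts, overwriting the current value with the result, incrementing $j$, and re-initializing the $f$-machine at the new arguments. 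Since $f,g$ have recursive structure, their initialization and transition functions are primitive recursive combinators, and assembling them together with the bookkeeping (pairing, projections, comparison of $j$ with $k$, case analysis) uses only operations already shown to have primitive recursive structure in Theorem~\ref{thm:pr-basics} and the case-analysis lemma; hence the composite $i$ and $s$ are genuine primitive recursive combinators and $(i,s)$ is a recursive machine.

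Correctness is then a straightforward induction on $k$. For $k=0$, the machine computes $g(x)$ and immediately halts in state (2), so $\eval(m,\langle 0,x\rangle) = \eta(g(x)) = \eta(\recop_{f,g}(0,x))$, using that the $g$-machine computes $g$ and total functions embed via $\eta$. For $k+1$, after the first $k$ rounds the machine has (by the inductive hypothesis, suitably applied to the partial traces) the value $\recop_{f,g}(k,x)$ in the current-value slot with $j=k$; it then runs the $f$-machine on $\langle k,\recop_{f,g}(k,x),x\rangle$, obtains $f(k,\recop_{f,g}(k,x),x) = \recop_{f,g}(k+1,x)$, and halts. Formally one argues, as in the proofs of Lemma~\ref{thm:compstruct-composition} and Theorem~\ref{thm:mu-recursion}, by identifying the graph relation $R_m$ of the constructed machine with the graph of $\eta\comp\recop_{f,g}$; this is an equality of type families which follows by function and proposition extensionality once we check the logical equivalence pointwise, which the construction of $s$ is designed to make immediate.

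\textbf{Main obstacle.} The only subtlety—and the step I expect to take the most care—is the bookkeeping: a recursive machine's transition function acts on a single natural number, so all of $k$, $x$, the counter $j$, the accumulated value $v$, and the current sub-configuration of the $f$- or $g$-machine must be packed into one encoded tuple, and the transition function must unpack, dispatch on state, step the appropriate sub-machine, and repack. None of this is conceptually hard—every ingredient (encoding/decoding sequences, projections, bounded comparison, case analysis on a primitive recursive predicate) is already available from the previous sections—but writing it out cleanly without drowning in indices is the real work. As elsewhere in the chapter, I would present the machine informally in terms of states and configurations and leave the fully explicit combinator as routine, noting only that all the operations involved have been shown to have primitive recursive structure.
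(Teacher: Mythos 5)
Your construction is correct, but it takes a genuinely different route from the paper. The paper does not build a new machine at all: it invokes the classical elimination of primitive recursion via G\"odel's $\beta$ function, defining
\[t(x,y) = \mu z.\big(\beta(z,0) = g(x) \wedge (\forall i<y.\ \beta(z,i+1) = f(x,i,\beta(z,i)))\big),\]
and observing that $\recop_{f,g}(x,y) = \beta(t(x,y),y)$; recursiveness then follows from Theorem~\ref{thm:mu-recursion} (minimization of a recursive predicate) together with the closure properties already established, so the whole proof is two lines. Your approach instead constructs an explicit iterating machine that runs the $g$-machine once and the $f$-machine $k$ times, threading the accumulated value through the configuration. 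What the paper's route buys is brevity and reuse: all the work was front-loaded into the normal-form/minimization machinery, and the only thing to check is that the predicate inside the $\mu$ is recursive (composition of $f$, $g$, $\beta$, equality, and a bounded quantifier). What your route buys is self-containedness and a direct computational picture, at the cost of the encoding bookkeeping you identify; note also that your correctness induction is really an induction on an invariant of intermediate configurations (``after phase $j$ the stored value is $\recop_{f,g}(j,x)$'') rather than directly on the statement $\eval(m,\langle k,x\rangle)=\eta(\recop_{f,g}(k,x))$, so to be fully precise you would isolate that invariant as a lemma about $\run$ — though the informal configuration-style argument you give matches the level of rigor the chapter uses for its other machine constructions.
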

Thanks to Theorem~\ref{thm:mu-recursion}, the classical proof of
elimination of primitive recursion via G\"{o}del's $\beta$ function also gives this
result.
\begin{proof}
    Define
    \[t(x,y) = \mu z.\big(\beta(z,0) = g(x) \wedge (\forall i<y. \beta(z,i+1) =
    f(x,i,\beta(z,i)))\big).\]
    Then we have
    \[\recop_{f,g}(x,y) = \beta(t(x,y),y).\qedhere\]
\end{proof}
Using the above, we get the most difficult cases of the classical characterization of
Turing computability in terms of $\mu$-recursiveness.
\begin{theorem}
    \begin{enumerate}
        \item All constant functions $\nat^k\to\nat$ have recursive structure;\label{comp:const}
        \item the successor function has recursive structure;\label{comp:succ}
        \item all projections have recursive structure;\label{comp:proj}
        \item if $f:\nat^k\to\nat$ has recursive structure and $g_i:\nat^n\to\nat$ has
            recursive structure for each~${i<k}$, then
            $\lambda x.f(g_0(x),\ldots,g_{k-1}(x))$ has recursive structure \label{comp:comp}
        \item if $f:\nat\pto \nat$ and $g:\nat\pto\nat$ have recursive structure,
            then $g\kcomp f$ has recursive structure;\label{comp:pr}
        \item If $R:\nat^2\to\Prop$ has recursive structure, then $\lambda x.\mu
            y.R(x,y)$ has recursive structure.\label{comp:mu}
    \end{enumerate}
\end{theorem}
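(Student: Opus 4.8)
The six items are a repackaging of the standard Turing-machine-to-$\mu$-recursion-and-back dictionary, and all the hard work has already been done in the preceding sections; the plan is simply to assemble the pieces. For each item I would point to the relevant construction on recursive machines developed above.

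\emph{Items (\ref{comp:const}), (\ref{comp:succ}), (\ref{comp:proj}).} Every constant function, the successor, and every projection is a primitive recursive combinator ($\cc^n_k$, $\succc$, $\pc^n_k$ respectively), so $\PrimRec$ holds for each; I would then invoke Theorem~\ref{knf-pr} (Kleene normal form for primitive recursive combinators) to extract, for a combinator $t$, the machine $m$ with $\eval(m,x) = ((\qSig{y:\nat}\KleeneT'(\encode(t),x,y)),\KleeneU'\comp\pr_1)$, and observe that since $\qSig{y:\nat}\KleeneT'(\encode(t),x,y)$ is contractible, $\eval(m) = \eta\comp\app(t)$, giving the computability structure. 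Alternatively one can build the two-line recursive machines directly (initialize trivially, halt immediately with the required value); either route is routine.

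\emph{Item (\ref{comp:comp}).} Composition of an outer $k$-ary function with $k$ inner $n$-ary functions. Using the primitive recursive pairing $\lar{-}$ and its projections, I would first note (as in Section~\ref{section:normal-form}) that a $k$-tuple of machines can be run in sequence to produce the encoded tuple $\lar{g_0(x),\dots,g_{k-1}(x)}$, then feed this into the machine for $f$. Concretely: this is an iterated application of the binary composition operator $-;-$ from Lemma~\ref{thm:compstruct-composition}, interleaved with the projection/pairing machines from (\ref{comp:proj}) and the combinators that assemble tuples. Since each $g_i$ and $f$ are total with recursive structure, the composite is total; the graph-relation bookkeeping is exactly the $R_{m;n} = R_m; R_n$ identity proved in Lemma~\ref{thm:compstruct-composition}, applied finitely many times.

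\emph{Items (\ref{comp:pr}), (\ref{comp:mu}).} Item (\ref{comp:pr}) is precisely Lemma~\ref{thm:compstruct-composition}: given machines $m$ for $f$ and $n$ for $g$, the composite machine $m;n$ computes $g\kcomp f$, so $g\kcomp f$ has recursive structure. Item (\ref{comp:mu}) is precisely Theorem~\ref{thm:mu-recursion}. So both are immediate citations.

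\textbf{Main obstacle.} There is no deep obstacle; the only mildly delicate point is item (\ref{comp:comp}), where the informal ``run the $g_i$ in sequence, collect outputs, then run $f$'' must be made to typecheck against the rigid $(i,s):\PR_1\times\PR_1$ presentation of recursive machines. The clean way is to reduce it entirely to (\ref{comp:proj}), the tuple-assembly combinators, and repeated use of the binary composition $-;-$ and its correctness Lemma~\ref{thm:compstruct-composition}, rather than writing a bespoke state machine; I would present it in the informal ``configurations/continuing/halting'' style already licensed in Section~\ref{section:recursive-machines}, with the state space recording which $g_i$ is currently being simulated and which partial outputs have been recorded so far. Totality in each case follows because all the input functions are total, so no untruncation or choice is needed.
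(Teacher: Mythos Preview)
Your proposal is correct, and your citations for items (1)--(3), (5), and (6) match the paper's (in fact, for item~(5) your pointer to Lemma~\ref{thm:compstruct-composition} is the right one for the statement as written; the paper's own proof cites Theorem~\ref{thm:pr-is-recursive} here, which concerns $\recop_{f,g}$ and appears to be a slip).

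The one genuine divergence is item~(4). Rather than sequencing the $g_i$ via iterated $-;-$ and pairing bookkeeping as you outline, the paper reduces $k$-ary composition to minimization by reusing the G\"odel $\beta$-function trick from Theorem~\ref{thm:pr-is-recursive}: set
\[t(x) \defeq \mu z.\big(\beta(z,0)=g_0(x)\wedge\cdots\wedge\beta(z,k-1)=g_{k-1}(x)\big),\]
so that $t(x)$ encodes the tuple $(g_0(x),\ldots,g_{k-1}(x))$, and then the single binary composition $t;f$ computes the desired function. This collapses the $k$-state machine you flag as the ``mildly delicate'' point into one application of item~(6) followed by one application of item~(5). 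Your direct sequencing is equally valid and more operationally transparent; the paper's route buys brevity and uniformity with the treatment of primitive recursion, at the cost of leaving implicit that the conjunctive predicate inside the $\mu$ itself has recursive structure (which still requires running the machines for the $g_i$, but only to decide one equality at a time rather than to assemble an output tuple).
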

\begin{proof}
    Parts (1), (2) and (3) are immediate; part (5) is
    Theorem~\ref{thm:pr-is-recursive}; part (6) is Theorem~\ref{thm:mu-recursion}.
    For (4), we use a similar trick as for part (5): let
    \[t(x) = \mu z.\big(\beta(z,0) = g_0(x) \wedge \ldots \wedge \beta(z,k-1) =
    g_{k-1}(x)\big),\]
    and then we have $(t;f)(x) = \lambda x.f(g_0(x),\ldots,g_{k-1}(x)))$.
\end{proof}

There is one last result to prove using the techniques of this section, before we can
develop recursion theory without worrying about the details of how to encode
computable functions
\begin{theorem}[$S^m_n$ Theorem]
    There is a primitive recursive $\smn:\nat^2\to\nat$ such that for all $e,x,y:\nat$
    we have
    $\{\smn(e,x)\}(y) = \{e\}\langle x,y\rangle$.
\end{theorem}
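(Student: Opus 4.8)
The plan is to construct the primitive recursive function $\smn$ explicitly using the encoding of recursive machines from Section~\ref{section:normal-form}. The key observation is that ``hard-coding'' an input into a recursive machine is itself a primitive recursive operation on codes. Concretely, given a code $e$ for a recursive machine $(i,s):\RM$ computing $\{e\}$, and a number $x$, I want to produce the code of a new recursive machine $m_{e,x}$ which, on input $y$, first pairs $x$ with $y$ to form $\langle x,y\rangle$ and then simulates $(i,s)$ on $\langle x,y\rangle$. Since the pairing function $\langle-,-\rangle$ has primitive recursive structure (Section~\ref{section:normal-form}), the new initialization function is simply $y\mapsto i(\langle x,y\rangle)$, and the transition function is unchanged. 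The whole point of Kleene's normal form theorem (Theorem~\ref{knf}) is that $\eval(m,y)$ depends on $\encode(m)$ in a primitive-recursive way via $\KleeneT$ and $\KleeneU$, so it suffices to show that the map $(e,x)\mapsto\encode(m_{e,x})$ is primitive recursive.

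First I would describe how to build the initialization combinator: if $e = \langle \encode(i),\encode(s)\rangle$, then $\encode(i)$ is recoverable from $e$ by a primitive recursive projection, and we want the code of the combinator computing $y\mapsto i(\langle x,y\rangle)$. This combinator is obtained from $\encode(i)$ by precomposition with (the combinator for) $y\mapsto\langle x,y\rangle$; and $y\mapsto\langle x,y\rangle$ is itself $\langle-,-\rangle$ with its first argument fixed to the constant $x$, built from $\cc$ and the pairing combinator using composition. All of these transformations on codes --- extracting a component of a sequence code, prepending the tag $\langle 3\rangle$ for composition, concatenating codes, building $\langle 2,n,k\rangle$ for a constant combinator --- are primitive recursive operations on $\nat$, since the sequence-coding machinery is primitive recursive. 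Then $\smn(e,x)$ is defined to be $\langle \encode(i'),\encode(s)\rangle$ where $i'$ is this modified initialization combinator and $s$ is the unchanged transition combinator extracted from $e$; this is a primitive recursive function of $(e,x)$.

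Next I would verify correctness: for the machine $m_{e,x} = \smn(e,x)$ I need $\eval(m_{e,x},y) = \eval(m_e,\langle x,y\rangle)$, which follows because $\run_k(m_{e,x},y) = s^k(\inl(i'(y))) = s^k(\inl(i(\langle x,y\rangle))) = \run_k(m_e,\langle x,y\rangle)$ for every $k$, using the computation rules for the composition combinator so that $i'(y) \jeq i(\langle x,y\rangle)$; hence the graphs $R_{m_{e,x}}$ and $R_{m_e}(\langle x,-\rangle,-)$ coincide, and so $\eval(m_{e,x},y) = \eval(m_e)(\langle x,y\rangle)$ as partial elements. Unwinding notation, $\{\smn(e,x)\}(y) = \{e\}\langle x,y\rangle$. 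When $e$ is not of the form $\langle\encode(i),\encode(s)\rangle$ (i.e.\ not a genuine machine code), we just let $\smn$ output some fixed default code; this does not affect the statement since the statement is only about equality of partial functions and in the degenerate case both sides can be arranged to be the everywhere-undefined function, or we simply note the statement is vacuous for non-codes. Finally, $\PrimRec(\smn)$ follows by assembling the primitive recursive operations above using Theorem~\ref{thm:pr-basics} and the closure of $\PrimRec$ under composition and primitive recursion.

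The main obstacle I expect is bookkeeping rather than conceptual: carefully checking that ``precompose the combinator coded by $\encode(i)$ with the fixed-first-argument pairing combinator, and re-encode'' is genuinely primitive recursive as an operation on the natural number $e$. This requires unwinding the definition of $\encode$ on the composition constructor $f\langle g_1,\ldots,g_n\rangle$ and confirming that building $\langle 3\rangle \concat \encode(i) \concat \encode(\text{pairing-with-}x)$ is a primitive recursive function of $e$ and $x$ --- which it is, since $\concat$, sequence projection, and prepending a fixed sequence are all primitive recursive, and the code $\encode(\text{pairing-with-}x)$ is primitive recursive in $x$ because it is a fixed combinator expression with one occurrence of the constant combinator $\cc^1_0$ replaced by $\cc^1_x$, whose code $\langle 2,1,x\rangle$ is primitive recursive in $x$. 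Once this is spelled out, the correctness argument is a short computation with $\run_k$ and the rest is routine.
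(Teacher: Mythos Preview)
Your proposal is correct and follows essentially the same approach as the paper: both construct $\smn(e,x)$ by precomposing the initialization function encoded in $e$ with the pairing-with-$x$ combinator, and observe that this code manipulation is a primitive recursive operation on finite sequences. Your write-up is in fact considerably more detailed than the paper's, which simply says ``by manipulating the tree defining $e$ \ldots\ this manipulation is primitive recursive in $e$ and $x$, since it amounts to simple operations on finite sequences'' and leaves the bookkeeping you spell out (extracting $\encode(i)$, building the code for $y\mapsto\langle x,y\rangle$, concatenating with the composition tag, verifying via $\run_k$) entirely to the reader.
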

\begin{proof}
    Let $e$ encode a recursive machine computing $f:\nat\pto\nat$, let $x:\nat$ and
    define $f':\nat\pto\nat$ by $f'(y) = f(\langle x,y\rangle).$
    By manipulating the tree defining $e$, we can get a machine $m$ computing $f'$,
    which we can encode as $e'$. This manipulation is primitive recursive in $e$ and
    $x$, since it amounts to simple operations on finite sequences. Take
    $\smn(e,x) = e'$. Then 
    \[\{\smn(e,x)\}(y) = \{e'\}(y) = \{e\}\langle x,y\rangle.\qedhere\]
\end{proof}
\section{Basic Recursion Theory}\label{computability:recursion-theory}
We are now able to show how to develop basic recursion theory via recursive machines.
We roughly follow the presentation in Odifreddi \cite{Odifreddi1989}; in fact,
the proofs given by Odifreddi for the results stated here translate to our framework
with only minor adjustment.

Rogers' Fixed Point Theorem and Kleene's Second Recursion Theorem can
be proved in the standard way.

\index{recursive!structure|(}
\index{recursive!relation|(}
\begin{theorem}[Rogers' fixed point theorem]
    For any total $f:\nat\to\nat$ with recursive structure, we have $n:\nat$ with
    $ 
    \{n\}=\{f(n)\}.
    $ 
\end{theorem}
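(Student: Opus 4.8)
The plan is to mimic the classical proof of Rogers' fixed point theorem, carefully keeping track of the fact that our functions are really functions $\nat\to\Lift(\nat)$ with attached computability structure, rather than partial functions in an abstract sense. The key classical ingredient is the $S^m_n$ theorem, which we have just proved, together with the ability to compose recursive machines and the fact that we may encode recursive machines as natural numbers. So first I would recall that we have a primitive recursive $\smn:\nat^2\to\nat$ with $\{\smn(e,x)\}(y)=\{e\}\langle x,y\rangle$, and that $\lambda x.\smn(x,x):\nat\to\nat$ therefore has recursive structure (it is primitive recursive, hence recursive).

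Next, given total $f:\nat\to\nat$ with recursive structure, I would consider the partial function $h:\nat\to\Lift(\nat)$ defined by $h(x,y) \defeq \{f(\smn(x,x))\}(y)$, viewed as a function of the single variable $\langle x,y\rangle$; more precisely, $h = \lambda z.\,\{\,\eta\, f(\smn(\pr_0 z,\pr_0 z))\,\}(\pr_1 z)$. This has recursive structure: it is built by composition from the Kleene bracket $\{-\}:\nat\to(\nat\to\Lift\nat)$ (which is recursive by construction in Section~\ref{section:normal-form}), from $f$ and $\smn$ (both recursive), and from projections; Theorem~\ref{thm:rec-cases} and the composition results of the previous section assemble these into a single recursive machine. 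Let $d:\nat$ be a code for a recursive machine computing $h$, so that $\{d\}\langle x,y\rangle = \{f(\smn(x,x))\}(y)$ for all $x,y$. Now set $n \defeq \smn(d,d)$. Then for every $y:\nat$,
\[
\{n\}(y) = \{\smn(d,d)\}(y) = \{d\}\langle d,y\rangle = \{f(\smn(d,d))\}(y) = \{f(n)\}(y),
\]
using the $S^m_n$ identity twice and the defining property of $d$. By function extensionality this gives $\{n\} = \{f(n)\}$ as elements of $\nat\to\Lift(\nat)$, which is the claim.

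The only genuinely delicate point — and the step I expect to be the main obstacle — is verifying that $h$ really does carry a computability structure rather than merely being "intuitively computable": one must exhibit (or at least convincingly describe) a recursive machine for $h$ by plumbing together the machines for $\{-\}$, $f$, and $\smn$ via the composition construction $m;n$ and the case-analysis construction of Theorem~\ref{thm:rec-cases}, all of which live in the universe of recursive machines with their explicit initialization and transition functions. Since we are working with computability \emph{as structure}, this bookkeeping is not automatic, but the constructions of Sections~\ref{section:recursive-machines} and~\ref{section:normal-form} were set up precisely to make it routine; I would state the machine informally using the "configurations/continue/halt" style already adopted for $m;n$ and leave the reader to expand it, exactly as Odifreddi does. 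Once that structure is in hand, the rest is the purely equational manipulation displayed above, and no choice principle or further extensionality beyond function extensionality is needed.
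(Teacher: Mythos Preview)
Your proposal is correct and takes essentially the same approach as the paper: both proofs are the classical Rogers argument via the $S^m_n$ theorem and diagonalization, with the same equational endgame. The only organizational difference is that the paper first defines a universal diagonal function $g\langle x,y\rangle \defeq \{\{x\}(x)\}(y)$ (independent of $f$), sets $h(x)=\smn(e_g,x)$, and then takes $n=h(e)$ where $e$ codes $f\comp h$; you instead bake $f$ into your $h$ from the start and take $n=\smn(d,d)$ --- a standard and inessential variation.
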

\begin{proof}
    Let $g:\nat\to\nat$ be defined by
    \[g\langle x,y\rangle \defeq \{\{x\}(x)\}(y).\]
    Note that we are Kleisli-extending along the first argument of $\{-\}(-)$ to make
    sense of $\{\{x\}(x)\}(y)$.
    As this has recursive structure, it has a code $e_g:\nat$. Define $h(x) =
    \smn(e_g,x)$, and consider $e=\encode(f\comp h)$ and $n=h(e)$. Then we have
    \[\{n\}y = \{h(e)\}y = \{\smn(e_g,e)\} = \{e_g\}\langle e,y\rangle =
    \{\{e\}(e)\}(y) =
    \{f(h(e))\}(y)=\{f(n)\}(y).\qedhere\]
\end{proof}
\begin{theorem}[Kleene's Second Recursion Theorem]
    For any function $f$ with recursive structure we have $p:\nat$ such that
    \[\qPi{y:\nat}\left(\{p\}(y) = f(\langle p,y\rangle)\right).\]
\end{theorem}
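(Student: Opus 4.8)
The plan is to derive Kleene's Second Recursion Theorem from Rogers' fixed point theorem, following the standard argument but taking care with the partial-function bookkeeping. First I would observe that given $f$ with recursive structure, the two-argument function $g:\nat\to\Lift\nat$ defined by $g(\langle e,y\rangle) \defeq \{f(\langle e,y\rangle)\}(y)$ has recursive structure: it is obtained from $f$ (recursive), the Kleene-bracket evaluator $\{-\}:\nat\to(\nat\to\Lift\nat)$ (recursive, via the Normal Form Theorem, Theorem~\ref{knf}), projection functions (primitive recursive), and composition of functions with recursive structure (closure under $\kcomp$, Lemma~\ref{thm:compstruct-composition}). Hence $g$ has a code; more usefully, applying the $S^m_n$ Theorem there is a primitive recursive $h:\nat\to\nat$ (namely $h(e) = \smn(e_g, e)$ where $e_g$ codes $g$) such that $\{h(e)\}(y) = g(\langle e,y\rangle) = \{f(\langle e,y\rangle)\}(y)$ for all $e,y:\nat$.

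Next I would apply Rogers' fixed point theorem to the total function $h$ (it has recursive structure, being primitive recursive): there is $p:\nat$ with $\{p\} = \{h(p)\}$, i.e.\ $\qPi{y:\nat}\{p\}(y) = \{h(p)\}(y)$ as partial elements of $\nat$. Chaining the two facts gives, for every $y:\nat$,
\[\{p\}(y) = \{h(p)\}(y) = \{f(\langle p,y\rangle)\}(y).\]
This is not yet quite the statement: the theorem as stated asserts $\{p\}(y) = f(\langle p,y\rangle)$, so I would instead define $g$ directly as $g(\langle e,y\rangle)\defeq f(\langle e,y\rangle)$ — wait, that does not iterate an evaluation. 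The correct target is $\{p\}(y) = f(\langle p,y\rangle)$ where $f$ itself outputs an index-like value; re-reading, the cleanest route is: define $g$ so that $\{h(e)\}(y) = f(\langle e,y\rangle)$ directly, by taking $g \defeq f$ and using $S^m_n$ to get $h$ with $\{h(e)\}(y) = g(\langle e,y\rangle) = f(\langle e,y\rangle)$. Then Rogers' theorem applied to $h$ yields $p$ with $\{p\} = \{h(p)\}$, so $\{p\}(y) = \{h(p)\}(y) = f(\langle p,y\rangle)$ for all $y:\nat$, which is exactly the claim.

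The main obstacle, such as it is, is bookkeeping rather than mathematics: one must check that each assembly step preserves \emph{recursive structure} (not merely the property of being computable), which means tracking the explicit computability structures through $S^m_n$, through Kleisli composition, and through Rogers' theorem — all of which have been established earlier in this chapter, so the argument goes through essentially verbatim from the classical development (e.g.\ Odifreddi). A secondary subtlety is ensuring that the equalities are equalities of \emph{partial elements} in $\Lift\nat$ (Kleene equality), which is automatic here since $\{-\}$, $f$, and the fixed-point equation all live at the level of $\nat\to\Lift\nat$; no excluded middle or choice is needed. The only genuine content is Rogers' fixed point theorem, which has already been proved, so the proof is a short chain of substitutions.
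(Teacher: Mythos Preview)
Your corrected argument is correct and is essentially the paper's proof: define $h(e)\defeq\smn(e_f,e)$ (the paper calls this $g$), so that $\{h(e)\}(y)=f(\langle e,y\rangle)$ by $S^m_n$, then apply Rogers' fixed point theorem to the total primitive recursive $h$ to obtain $p$ with $\{p\}=\{h(p)\}$, whence $\{p\}(y)=f(\langle p,y\rangle)$. Your initial detour through $g(\langle e,y\rangle)=\{f(\langle e,y\rangle)\}(y)$ is indeed the wrong setup, but you spotted and fixed it.
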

\begin{proof}
    Let $g:\nat\to\nat$ be defined by 
    \[g(x) \defeq \smn\langle e_f,x\rangle,\]
    where $e_f:\nat$ is a code of the recursive structure for $f$.
    This function $g$ has primitive recursive structure, so $g$ is a total function
    with recursive structure. By Rogers' fixed point theorem, there is then a $p$ such that
    $\{p\} = \{f(p)\}$. Then,
    \[\{p\}(y) = \{f(p)\}(y) = \{g(p)\}(y) = \{\smn (e_f,p)\}(y) = f(\langle
    p,y\rangle).\qedhere\]
\end{proof}

\begin{definition}
    A  \emph{recursive enumeration} of a subset $A:\nat\to\Prop$ is a recursive machine
    $m$ such that $\ran(\eval(m)) = A$.
\end{definition}

\begin{theorem}\label{thm:rec-re-co-re-struct}
A subset $A$ of $\nat$ has recursive structure if and only if it is complemented and both $A$ and
    its complement have recursive enumerations.
\end{theorem}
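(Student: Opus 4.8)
The plan is to prove the two directions separately, using the machinery of recursive machines and minimization developed in the previous sections. The forward direction is the classical fact that a decidable (recursive) set is both recursively enumerable and co-recursively enumerable; the reverse direction uses a ``dovetailing'' argument to decide membership by running the two enumerations in parallel and waiting for one to halt.

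For the forward direction, suppose $A$ has recursive structure, so we have a characteristic function $\chi_A:\nat\to\nat$ with $\CompStruct(\chi_A)$. First, $A$ is complemented: since $\chi_A$ is total and valued in $\{0,1\}$, for each $n$ we have $\chi_A(n)=0$ or $\chi_A(n)=1$ by decidability of equality on $\nat$, giving $A(n)+\neg A(n)$. Its complement $\neg A$ also has recursive structure, with characteristic function $\opsgn\comp\chi_A$ (using Theorem~\ref{thm:pr-basics} for $\opsgn$ and closure of recursive structure under composition). It remains to build a recursive enumeration of $A$ (the case of $\neg A$ is symmetric). The standard trick is: if $A$ is empty there is nothing to enumerate---but constructively we cannot case-split on emptiness, so instead I would first handle a base point. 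Actually the cleanest approach is to define a recursive machine $m$ computing the partial function $f:\nat\to\Lift\nat$ that on input $n$ runs the total computation of $\chi_A(n)$ and then outputs $n$ if $\chi_A(n)=1$ and diverges otherwise; this uses Theorem~\ref{thm:rec-cases} (case analysis on the recursive predicate $A$) applied to the machine $\eta\comp\id_\nat$ and the machine computing $\bot$ (the everywhere-undefined function, which has an obvious recursive structure: initialize and then loop forever in the $\inl$ state). Then $\ran(\eval(m))(y)\lequiv \qExists{n:\nat}f(n)=\eta y$, and unwinding definitions this holds iff $A(y)$, so $m$ is a recursive enumeration of $A$.

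For the reverse direction, suppose $A$ is complemented, with a recursive enumeration $m_0$ of $A$ and a recursive enumeration $m_1$ of $\neg A$. I would build a recursive machine whose transition function, on input $x$, dovetails the computations of $m_0$ and $m_1$ searching for some $x$ in their ranges: maintain a configuration encoding a step-counter $k$ together with the partially-run states of $m_0$ and $m_1$ on all inputs $\le k$, and at each step run one more step of each and check (using the primitive recursive $\run$ and $\source$ functions from Section~\ref{section:normal-form}) whether any completed run of $m_0$ produced output $x$---if so, halt with output $1$---or whether any completed run of $m_1$ produced output $x$---if so, halt with output $0$. Since $A$ is complemented, exactly one of $A(x)$, $\neg A(x)$ holds; since $m_0$ enumerates $A$ and $m_1$ enumerates $\neg A$, whichever one holds means $x$ is in the corresponding range, so $\trunc{\qSig{k:\nat}\ldots}$ is inhabited and the search terminates. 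One then checks that the resulting total function is $\{0,1\}$-valued and is a characteristic function for $A$: it returns $1$ exactly when $x$ is enumerated by $m_0$, which (by complementedness, using that $x$ cannot be in both ranges) happens iff $A(x)$. This can be packaged via the minimization operator: the predicate $R(x,\langle k, b\rangle)$ asserting ``at stage $k$ we find $x$ in the range of $m_b$'' has primitive recursive structure (it is a bounded search over runs), so $\lambda x.\mu \langle k,b\rangle. R(x,\langle k,b\rangle)$ has recursive structure by Theorem~\ref{thm:mu-recursion}, and post-composing with the primitive recursive projection to the second coordinate $b$ gives the characteristic function.

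\textbf{Main obstacle.} The delicate point is the well-definedness of the search in the reverse direction: constructively we must argue that $\trunc{\qSig{k,b}R(x,\langle k,b\rangle)}$ is inhabited \emph{before} we can extract a value, and this relies essentially on the hypothesis that $A$ is \emph{complemented} ($A(x)+\neg A(x)$), not merely that both $A$ and $\neg A$ are enumerable---without complementedness one only gets that $A$ is r.e.\ and $\neg A$ is r.e., which classically already implies decidability but constructively does not, since it would amount to a form of Markov's principle / $\mathrm{WLPO}$. I would make sure the proof uses complementedness exactly at the step where we conclude the dovetailed search halts, and that the extraction of the answer from $\mu$ is justified by the contractibility/propositionality of the minimization extent (the lemma preceding Theorem~\ref{lemma:pr-minimization}). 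A secondary bookkeeping nuisance is describing the dovetailing configuration precisely enough to see its transition function is primitive recursive; I would do this informally in the ``configurations/continuing/halting'' style already adopted after Lemma~\ref{thm:compstruct-composition}, rather than writing out explicit combinators.
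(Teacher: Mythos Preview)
Your proposal is correct, and your identification of the constructive crux---that complementedness is needed precisely to guarantee termination of the search---matches the paper's emphasis exactly. However, your reverse direction takes a noticeably harder route than the paper's.

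The paper does not work with the \emph{range} formulation of recursive enumeration at all. Instead it invokes (by forward reference) Theorem~\ref{thm:re-image-struct} to replace the enumerators by partial functions $f,g$ with recursive structure whose \emph{domains} are $A$ and $\neg A$ respectively. Then the machine for $\chi_A$ simply initializes both $f$ and $g$ on the single input $n$ and steps them in lockstep: if $f$ halts first output $1$, if $g$ halts first output $0$. Complementedness is used to argue that this two-way case split is exhaustive, i.e.\ that the machine is total. There is no search over preimages and no dovetailing over all inputs---only two computations run in parallel.

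Your approach instead keeps the range formulation and therefore must search, for each $x$, over all candidate preimages $j$ and all step-counts $l$ in both enumerators. This works (and your packaging via $\mu\langle k,b\rangle.R(x,\langle k,b\rangle)$ with $R$ a bounded search is clean), but it is heavier machinery for the same payoff. What your route buys is self-containment: you do not need the domain characterization, which in the paper's ordering is stated only after this theorem. What the paper's route buys is a one-line machine description with no dovetailing. Either way, the observation that the correctness of the output relies on $x$ never appearing in both ranges (so the minimal witness has a well-determined $b$-coordinate) is exactly the role complementedness plays, and you have that right.
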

\begin{proof}
    It is immediate that recursive sets are complemented, r.e.\ and co-r.e.\ so consider $A$ complemented
    with $f$ witnessing that $A$ is r.e.\ and $g$ witnessing that $A$ is co-r.e.\
    according to the characterization in Theorem~\ref{thm:re-image-struct}. To see that $A$ is
    recursive, consider the ATM with initialization function $i(n) = \langle
    i_f(n),i_g(n)\rangle$ and step function
    \[s(\langle x, y\rangle) \defeq
        \begin{cases}
            \inl \langle n, m\rangle & \text{if }s_f(x) = \inl n, s_g(x)=\inl
            m,\\
            \inr 1 & \text{if }s_f(x) = \inr n,\\
            \inr 0 & \text{if }s_g(x) = \inr m.
        \end{cases}\]
    It is clear that this computes the characteristic function of $A$. We
    require $A$ to be complemented for this case analysis to be exhaustive.
\end{proof}
Equivalence at the type level, rather than logical equivalence, seems impossible, as
we would need to ensure that we respect changes in recursive structure.

Likewise we have the classical characterization of recursively enumerable sets.
\begin{theorem}\label{thm:re-image-struct}
    A subset $A$ of $\nat$ has a recursive enumeration iff it is the domain of a function
    with recursive structure.
\end{theorem}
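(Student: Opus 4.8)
The plan is to prove both implications by explicit recursive machine constructions, the only real care being the bookkeeping of the truncations hidden in $\ran$ and $\isDefined$.

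First I would treat the direction from domain to enumeration. Suppose $A = \dom_f$ with $f:\nat\to\Lift\nat$ having recursive structure, witnessed by a recursive machine $e$ with $\eval(e) = f$. I would build $m$ which, on input $y$, simulates $e$ on $y$ while carrying $y$ along in its configuration, and halts with output $y$ as soon as the simulation of $e(y)$ halts: take $i_m(y) \defeq \lar{y, i_e(y)}$ and
\[ s_m(\lar{y,c}) \defeq \begin{cases} \inr y & \text{if } s_e(c) = \inr v,\\ \inl \lar{y, c'} & \text{if } s_e(c) = \inl c'. \end{cases} \]
This is a case analysis on the primitive recursive function $s_e$ composed with primitive recursive pairing and projections, so $(i_m,s_m):\RM$. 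Since the value produced by $m$ is always its input, $\eval(m)(x) = \eta y$ forces $x=y$ (using that $\eta$ is an embedding) and is equivalent to $\isDefined(\eval(e)(y))$; hence $\ran_{\eval(m)}(y) \lequiv \isDefined(f(y)) = \dom_f(y) = A(y)$, and as both sides are propositions this is an equality by proposition extensionality, so $\ran_{\eval(m)} = A$ by function extensionality and $m$ is a recursive enumeration of $A$.

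For the converse, suppose $m$ is a recursive enumeration of $A$, so $A(y) \lequiv \ran_{\eval(m)}(y) = \qExists{x:\nat}\eval(m)(x) = \eta y$. By the Normal Form Theorem~\ref{knf}, Theorem~\ref{thm:isdefined-char}, and the single-valued-relation presentation of partial functions in Theorem~\ref{thm:ptl-fxn-char}, the relation $\eval(m)(x) = \eta y$ unfolds to a predicate on $\lar{x,w}$ (with parameter $y$) built from the primitive recursive $\KleeneT$ and $\KleeneU$; after pairing $x$ and the trace $w$ into a single variable $n$ this gives a \emph{decidable} predicate $P_y(n)$ with $\ran_{\eval(m)}(y) = \trunc{\qSig{n:\nat}P_y(n)}$. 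I would then define $g:\nat\to\Lift\nat$ to be the function which on input $y$ returns $0$ whenever the minimization $\mu n. P_y(n)$ is defined; since $P_y$ has primitive recursive structure uniformly in $y$, Theorem~\ref{thm:mu-recursion} (equivalently Theorem~\ref{lemma:pr-minimization}) shows $g$ has recursive structure. The extent of $g(y)$ is $\qSig{n:\nat} P_y(n) \times \qPi{k<n}\neg P_y(k)$, which by Lemma~\ref{lemma:untruncate-decidable-predicates} is equivalent to $\trunc{\qSig{n:\nat}P_y(n)} = \ran_{\eval(m)}(y) \lequiv A(y)$; once more proposition extensionality and function extensionality give $\dom_g = A$.

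The main obstacle is this truncation bookkeeping rather than anything conceptually deep: classically the statement is proved by dovetailing, and here the dovetail is packaged into a single minimization over a primitive recursive predicate via the Normal Form Theorem, while the passage between the truncated existentials ($\ran$, $\isDefined$) and the concrete $\Sigma$-types one actually computes with is supplied by decidability together with Lemma~\ref{lemma:untruncate-decidable-predicates}. The one point to verify carefully is that equality of partial elements $\eval(m)(x) = \eta y$ really does unfold to the graph relation $R_m(x,y)$ via Theorem~\ref{thm:isdefined-char} and Theorem~\ref{thm:ptl-fxn-char}; granting that, both directions reduce to routine machine constructions. (As with Theorem~\ref{thm:rec-re-co-re-struct}, I would only claim logical equivalence of the two \emph{types} of witnesses, since matching up changes of recursive structure would be required for a type-level equivalence.)
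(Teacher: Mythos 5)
Your proposal is correct and follows essentially the same route as the paper: for one direction you replace the value of $f$ by its own input on the same extent (the paper does this abstractly as $f'(x)\defeq(\eta\comp c_x)^\sharp(f\,x)$, you do it with an explicit machine carrying the input along), and for the other you dovetail over pairs of inputs and step counts and take a minimization of a primitive recursive predicate, whose domain is then the range of the given enumeration (the paper uses $\run_k$ via $h(\langle k,y\rangle)=\run_k(m,y)$ where you route through $\KleeneT$, a cosmetic difference). Your extra care with the truncations and the unfolding of $\eval(m)(x)=\eta y$ to the graph relation is fine and matches what the paper uses implicitly.
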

\begin{proof}
    Let $f$ have recursive structure such that $\isDefined{f(x)}\Leftrightarrow
    A(x)$. Define $f':\nat\pto\nat$ by
    \[f'(x) \defeq (\eta\comp c_x)^\sharp(f x).\]
    $f'$ has recursive structure as the composition of functions with computable
    structure. Moreover we have
    \[f'(x) = (\extent(f(x)),\lambda y.x),\]
    so that $\isDefined{f'(x)} = \isDefined{f(x)}$.
    Then we have that $A(x)\Leftrightarrow \isDefined{f'(x)}$.

    For the other direction, let $g$ have recursive structure $m$ and
    $\big(\qExists{x:\nat}g(x)=\eta n\big) = A(n)$ for all $n$.  Define
    $h:\nat\to\nat+\nat$ by
    $h(\langle k,y\rangle) = \run_k(m,y).$
    We know that this has primitive recursive structure, so define $j:\nat\to\nat$ to be
    \[j(x) \defeq \mu z. h(z) = \inr x.\]
    It is then easy to check that $j$ is defined exactly on the image of $g$.
    Let $\isDefined(j(x))$. Then there are $(k,y)$ such that $\run_k(m,y) = \inr j(x)$.
    So we have $m(y) = j(x)$.
    Conversely, let $x$ be in the image of $g$. That is,
    \[\qExists{y:\nat}\qExists{k:\nat}\run_k(m,y) = x.\]
    But then $\qExists{z:\nat}g(z) = x$, and from this we see $j(x)$ is defined.
\end{proof}
A slightly better classical characterization requires some modification to be true
constructively.

\begin{theorem}
    If $A$ is a subset of $\N$ with $a:A$, then the following are logically equivalent.
    \begin{enumerate}
        \item There is a partial function $f:\nat\pto\nat$ with recursive structure
            whose domain is $A$.
        \item There is a partial function $f:\nat\pto\nat$ with recursive structure
            whose range is $A$.
        \item There is a total function $f:\nat\to\nat$ with recursive structure
            whose range is $A$.
        \item There is a total function $f:\nat\to\nat$ with primitive recursive
            structure whose range is $A$.
    \end{enumerate}
\end{theorem}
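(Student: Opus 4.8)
The plan is to prove the cycle of implications $(4)\Rightarrow(3)\Rightarrow(2)\Rightarrow(1)\Rightarrow(4)$, which makes all four statements logically equivalent. Three of the four arrows are essentially bookkeeping; the mathematical content is the implication $(1)\Rightarrow(4)$, which is the classical dovetailing argument, carried out concretely with recursive machines.

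For the easy arrows: $(4)\Rightarrow(3)$ is immediate, since every basic primitive recursive combinator has recursive structure and recursive structure is closed under composition and primitive recursion (Theorem~\ref{thm:pr-is-recursive} together with the basic closure results of Section~\ref{section:normal-form}), and passing from primitive recursive structure to recursive structure does not change the function, hence does not change its range. For $(3)\Rightarrow(2)$: a total $f:\nat\to\nat$ with recursive structure, viewed as the partial function $\eta\comp f:\nat\to\Lift\nat$, still has recursive structure, and since $\eta$ is an embedding we have $\ran_{\eta\comp f}(y)\simeq(\qExists{x:\nat}f(x)=y)$, so the range is again $A$. For $(2)\Rightarrow(1)$: if $f:\nat\pto\nat$ has recursive structure then $f=\eval(m)$ for some recursive machine $m$, and $\ran(f)=A$ says precisely that $m$ is a recursive enumeration of $A$; by Theorem~\ref{thm:re-image-struct}, $A$ is then the domain of a function with recursive structure, which is exactly $(1)$. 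None of these three steps uses the element $a:A$.

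The main step is $(1)\Rightarrow(4)$. Fix a recursive machine $m=(i,s)$ computing a partial function $f$ with domain $A$, so that $\isDefined(f(x))$ holds iff there is some $k:\nat$ with $\run_k(m,x)=\inr y$ for some $y:\nat$. The first observation is that, since $m$ is fixed, the function $h:\nat^2\to\nat$ sending $(k,x)$ to the code of $\run_k(m,x)=(s')^k(\inl(i(x)))$ has primitive recursive structure: it is defined by primitive recursion on $k$ from the fixed primitive recursive functions $x\mapsto\inl(i(x))$ and $s'$ (where $s'$ acts on codes of $\nat+\nat$, which we test and manipulate using the primitive recursive encoding $\inl(n)=2n$, $\inr(n)=2n+1$ from Section~\ref{section:normal-form}). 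Under that encoding, the predicate ``$h(k,x)$ is of the form $\inr(-)$'', i.e.\ ``$h(k,x)$ is odd'', has primitive recursive structure (Theorem~\ref{thm:pr-basics}). Now define $g:\nat\to\nat$ by
\[
g(n)\defeq\begin{cases}\pr_1(n)&\text{if $h(\pr_0(n),\pr_1(n))$ is odd,}\\ a&\text{otherwise,}\end{cases}
\]
where $\pr_0,\pr_1$ are the primitive recursive projections of a fixed bijective pairing $\nat^2\to\nat$. By Theorem~\ref{thm:pr-basics} and the lemma that case analysis on a primitive recursive predicate preserves primitive recursive structure, $g$ has primitive recursive structure. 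It remains to check $\ran(g)=A$. If $x:A$ then there is $k$ with $\run_k(m,x)=\inr y$, so $h(k,x)$ is odd and $g(\langle k,x\rangle)=x$; hence $x$ is in the range of $g$. Conversely, any value of $g$ is either $a$, which lies in $A$ by hypothesis, or some $x=\pr_1(n)$ with $h(\pr_0(n),x)$ odd, i.e.\ $\run_{\pr_0(n)}(m,x)$ halts, so $x:A$. Thus $\ran(g)=A$, establishing $(4)$.

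The hard part is really just this last argument, and the only genuine subtleties are: (i) turning the transition function of the fixed machine into a function primitive recursive jointly in the step count and the input — handled by primitive recursion on $k$; and (ii) using the witness $a:A$ to make $g$ total while keeping its range exactly $A$. Point (ii) is also where the hypothesis $a:A$ is essential: without it the statement fails constructively, since one cannot in general produce a total function with non-inhabited range. Everything else is routine manipulation of the encodings from Section~\ref{section:normal-form} and the closure properties in Theorem~\ref{thm:pr-basics}.
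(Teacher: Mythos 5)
Your proposal is correct and takes essentially the same approach as the paper: both hinge on Theorem~\ref{thm:re-image-struct} for the equivalence between domain and range of partial functions with recursive structure, and on the same dovetailing construction $g(\langle k,x\rangle)$ with default value $a$ to produce a total, indeed primitive recursive, enumeration of $A$. The only difference is organizational: the paper dovetails from statement (2), outputting the value computed by the machine, while you dovetail from (1), outputting the input once the machine is seen to halt --- the same trick either way.
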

\begin{proof}
    $(1\Leftrightarrow 2)$ is Theorem~\ref{thm:re-image-struct}.

    $(2\Rightarrow 3)$ Let $f:\nat\pto\nat$ be computable with image $A$, and let
    $a:A$. Define
    \[g(\langle y,k\rangle) = \begin{cases}
            x & \text{if }\run_k(f,y) = \inr x,\\
            a & \text{otherwise.}
        \end{cases}
        \]

    $(2\Rightarrow 4)$ In fact, the function $g$ above is primitive recursive, as
    a case analysis over a primitive recursive predicate.

    $(4\Rightarrow 3)$ and $(3\Rightarrow 2)$ are obvious.
\end{proof}
\index{recursive!relation|)}
\index{recursive!structure|)}
\section{Discussion}
Except for nuances in the way the results are stated, the results here are the classical ones,
and better discussion of their history and importance can be found in more complete
texts on computability theory. Instead, let's focus on the goals of this chapter. As
stated in the introduction, the goal was not to give a complete constructive account
of computability theory, but to examine the notion of partiality in a univalent
setting, and to compare computability as structure with computability as property in a
setting where we can formally distinguish them. On the first point, we hope the above
gives a satisfying enough account to believe that our notion of partiality is robust
enough to handle computability theory. It is worth remarking on the fact that when
taking partial functions $\nat\pto\nat$ to be functions $\nat\to\Lift(\nat)$, Kleene
equality between partial functions is simply equality.

The distinction between computability as structure and computability as property
justifies the nonstandard and sometimes awkward statement of results: we need to keep
computable functions separate from functions equipped with computability structure.

The only remaining comment is on the use of recursive machines. The aim of
introducing recursive machines was to minimize the number of ``moving parts'', while
capturing the intuitive clarity of Turing machines. Moreover, since minimization is
not a total operation, and the primitive recursive combinators all give total
functions, it felt natural to separate minimization. Once this separate is made,
there is no reason besides history to insist on a minimization operator as the way to
introduce fixed-point recursion.


\chapter{Computability as property}\label{chapter:comp-as-prop}

In Chapter~\ref{chapter:comp-as-struct}, we developed a notion of functions
\emph{with} computable structure, as those functions which are computed by given
recursive machines. Here we turn to computability as property. Since most of the
results of this section follow by functoriality of truncation from corresponding
results about computability as property, this chapter is quite short.
Sections~\ref{section:pr-functions},~\ref{section:comp-functions},
and~\ref{section:rec-theory} summarize these results.
Section~\ref{section:halting-problem} contains the proof of the undecidability of the
halting problem, and leverages this to define a non-computable partial function,
Turing's diagonal function. As all total functions which we
can define in the empty context (i.e., without additional assumptions) are
computable and constructive intuition suggests that our constructively definable
objects ought to be computable; it is not obvious
that non-computable partial functions ought to be definable. Their existence means
that we cannot simply extend Church's thesis to partial functions in a consistent
way. Instead we must weaken the statement that all partial functions are computable.
We will chose to do so by restricting the class of computable functions using machinery
from Chapter~\ref{chapter:partial-functions}, but we will put that off until
Chapter~\ref{chapter:comp-and-partiality}. In the meantime we give
restriction of the partial functions, which is
explicitly tied to computation (Section~\ref{section:semidecidable}). This
restriction is more concrete than the Rosolini partial functions, since it is tied
explicitly to the definition of computability, but this explicit connection to
computability makes it less suited for our purposes, since one of our goals is an
abstract notion of partial function which could be a surrogate for the computable
partial functions.

\section{Primitive recursive functions}\label{section:pr-functions}
\begin{definition}
    \index{primitive recursive!function|seealso{primitive recursive, structure}}
    \defineopfor{$\isPR$}{isPR}{primitive recursive function}
    The predicate $\isPR_n:(\nat^n\to\nat)\to\univ$ is 
    \[\isPR_n(f)\defeq\trunc{\PrimRec_n(f)}.\]
    We say that $f$ \nameas{is primitive recursive}{primitive recursive!function} when $\isPR(f)$.
\end{definition}
Then the primitive recursive functions are the image of the primitive recursive
combinators under $\app$.
\begin{definition}
    \index{primitive recursive!predicate}
    A predicate $R:\nat^n\to\Prop$ is \emph{primitive recursive} if it has a
    primitive recursive characteristic function, and write $\isPR(R)$ for the type of
    primitive recursive characteristic functions of~$R$.
    \[\isPR(R)\defeq \qSig{f:\nat^n\to\nat}\isPR(f)\times (f(n) =0 \lequiv \neg
    R(n))\times (f(n) = 1\lequiv R(n))\]
\end{definition}
As characteristic functions for predicates are unique when they exist, 
$\isPR(R)$ is a proposition.

As the basic functions of Theorem~\ref{thm:pr-basics} have primitive recursive
structure, we have
\begin{theorem}
    The following are all primitive recursive
    \begin{enumerate}[label=(\roman*)]
        \item the standard ordering $\le$ on $\nat$,
        \item equality on $\nat$,
        \item the addition function $-+-:\nat^2\to\nat$,
        \item the multiplication function $-\cdot-:\nat^2\to\nat$,
        \item the predecessor function,
        \item truncated subtraction,
        \item bounded sums,
        \item bounded products,
        \item $\sgn$,
        \item $\opsgn$,
        \item the factorial function.
    \end{enumerate}
\end{theorem}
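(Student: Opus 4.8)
The plan is to simply transport each fact across the equivalence $\PR_n \simeq \qSig{f:\nat^n\to\nat}\PrimRec_n(f)$ induced by $\app$, which was already established in Section~\ref{section:pr-functions}. Recall that $\isPR_n(f)$ is defined to be $\trunc{\PrimRec_n(f)}$, so to prove $\isPR(f)$ for any given function $f$ in the list it suffices to exhibit \emph{any} element of $\PrimRec_n(f)$; the truncation unit $|-|$ then produces the required inhabitant of $\isPR_n(f)$. But Theorem~\ref{thm:pr-basics} already asserts that each of these functions has primitive recursive structure, i.e.~that $\PrimRec_n(f)$ is inhabited, for exactly the functions $(i)$--$(xi)$ listed here (the two lists are identical). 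So each case is a one-line appeal: take the witness $w : \PrimRec_n(f)$ provided by Theorem~\ref{thm:pr-basics} and conclude $|w| : \isPR_n(f)$.

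Concretely, first I would note that for each item in the list, Theorem~\ref{thm:pr-basics} gives a term witnessing primitive recursive structure: for $(i)$ the characteristic function of $\le$, for $(ii)$ the characteristic function of equality on $\nat$, for $(iii)$--$(iv)$ the addition and multiplication functions, for $(v)$ the predecessor function defined by the displayed recursion, for $(vi)$ truncated subtraction, for $(vii)$--$(viii)$ the bounded sum and bounded product families (for each fixed primitive recursive $f$), for $(ix)$--$(x)$ the functions $\sgn$ and $\opsgn$, and for $(xi)$ the factorial. In each case, applying the unit of the propositional truncation to that witness yields the desired element of the corresponding $\isPR$ type. For the predicate-valued items ($(i)$, $(ii)$), one moreover unpacks the definition of ``primitive recursive predicate'': the characteristic function, already shown to have primitive recursive structure, together with the (decidable, hence provable) logical equivalences relating its values to the predicate, assembles an element of $\isPR(R)$.

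There is essentially no obstacle here, which is the point of the chapter: the genuine content lives in Theorem~\ref{thm:pr-basics} and in the equivalence $\PR_n \simeq \qSig{f:\nat^n\to\nat}\PrimRec_n(f)$, both already available. The only thing worth a sentence of care is that the bounded-sum and bounded-product cases are stated \emph{parametrically} in a primitive recursive $f$ and $x:\nat^n$, so the corresponding $\isPR$ statement should be read as: for every $f$ with $\isPR(f)$, the map $x \mapsto \sum_{y<k} f(x,y)$ (resp.\ the product) is primitive recursive. Even then, one first untruncates the hypothesis $\isPR(f)$ --- permissible because the goal $\isPR_n(\dots)$ is a proposition --- to obtain actual primitive recursive structure on $f$, feeds it into the bounded-sum clause of Theorem~\ref{thm:pr-basics}, and retruncates. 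So the proof is: in each case, untruncate any hypotheses (the goal being a proposition), invoke the corresponding clause of Theorem~\ref{thm:pr-basics}, and apply $|-|$.
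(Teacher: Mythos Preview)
Your proposal is correct and matches the paper's approach exactly: the paper simply prefaces the theorem with the sentence ``As the basic functions of Theorem~\ref{thm:pr-basics} have primitive recursive structure, we have'' and provides no further proof, which is precisely your argument of truncating the witnesses from Theorem~\ref{thm:pr-basics}. Your additional remark about untruncating hypotheses in the bounded-sum and bounded-product cases (permissible since the goal is a proposition) is a nice clarification that the paper leaves implicit.
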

Similarly, the primitive recursive predicates are closed under the logical
operations.
\begin{theorem}\label{thm:pr-logic}
    If $P$ and $Q$ are primitive recursive, then so are
$P\wedge Q$, $P\vee Q$, $P\to Q$, $\neg P$,
$\qExists{k\le n}P(k)$ and $\forall(k\le n),P(k)$.
\end{theorem}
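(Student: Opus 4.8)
The plan is to reduce everything to the structural constructions already carried out in the section on primitive recursion. Unfolding the hypotheses, from $\isPR(P)$ and $\isPR(Q)$ we obtain characteristic functions $\chi_P,\chi_Q:\nat^n\to\nat$ together with witnesses of $\isPR(\chi_P)\jeq\trunc{\PrimRec(\chi_P)}$ and $\isPR(\chi_Q)\jeq\trunc{\PrimRec(\chi_Q)}$, plus the proofs that $\chi_P$ and $\chi_Q$ are indeed characteristic functions of $P$ and $Q$. For each logical combination $R$ the goal is $\isPR(R)$, which we noted above is a proposition (characteristic functions are unique when they exist). Hence, by the universal property of propositional truncation, it suffices to produce a map from the untruncated hypotheses $\PrimRec(\chi_P)\times\PrimRec(\chi_Q)$ to $\isPR(R)$; equivalently, we may assume genuine primitive recursive structures on $\chi_P$ and $\chi_Q$.

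Given such structures, I would take the explicit candidate characteristic functions written down in the discussion following Theorem~\ref{thm:pr-basics}: $n\mapsto\chi_P(n)\cdot\chi_Q(n)$ for $P\wedge Q$; $n\mapsto\sgn(\chi_P(n)+\chi_Q(n))$ for $P\vee Q$; $n\mapsto(\chi_P(n)\le\chi_Q(n))$ for $P\to Q$; $n\mapsto\opsgn(\chi_P(n))$ for $\neg P$; $n\mapsto\sgn(\sum_{k\le n}\chi_P(k))$ for $\qExists{k\le n}P(k)$; and $n\mapsto\prod_{k\le n}\chi_P(k)$ for $\forall(k\le n),P(k)$. Each of these is built by composition, bounded summation, or bounded multiplication from functions listed in Theorem~\ref{thm:pr-basics}, using closure of $\PrimRec$ under composition and primitive recursion, so each carries primitive recursive structure once $\chi_P$ and $\chi_Q$ do. The only step with any content is the routine verification that these functions really are characteristic functions of the corresponding predicates — that they are $\{0,1\}$-valued, and take value $1$ exactly when the predicate holds and $0$ exactly when it fails — which follows from decidability of order and equality on $\nat$, bounded search for the quantifier cases, and the assumed characteristic-function properties of $\chi_P,\chi_Q$.

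Finally, applying $\lvert-\rvert$ to the primitive recursive structure produced for each combination yields an inhabitant of $\isPR(R)$; since $\isPR(R)$ is a proposition, this is independent of the untruncation choices, completing the proof. I do not expect a genuine obstacle here: the entire argument is the standard transport of the structural results of the previous section across the propositional truncation, exactly parallel to the other results of this chapter, and the bookkeeping (e.g.\ the harmless reindexing between $\sum_{k<n+1}$ and $\sum_{k\le n}$) is purely mechanical.
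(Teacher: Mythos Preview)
Your proposal is correct and matches the paper's approach. The paper does not write out a proof for this theorem at all; it is stated immediately after the analogous list of basic primitive recursive functions with only the word ``Similarly,'' relying on the structural constructions following Theorem~\ref{thm:pr-basics} and the general pattern of this chapter that results about computability-as-property follow from their structural counterparts by functoriality of truncation --- exactly the argument you have spelled out in detail.
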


\section{Computable functions}\label{section:comp-functions}
\begin{definition}
    \defineopfor{$\isComputable$}{iscomputable}{computable function}
    \index{computable function|textit}
    \index{computable function|seealso{recursive structure}}
    The predicate $\isComputable:(\nat\to\Lift\nat)\to\univ$ is defined by
    \[\isComputable(f)\defeq \trunc{\CompStruct(f)}.\]
    The partial function $f$ is \emph{computable} when $\isComputable(f)$.

    \definesymbolfor{$\Comp$}{comp}{computable function}
    The type $\Comp$ is the type of all computable partial functions:
    \[\Comp\defeq \qSig{f:\nat\to\Lift\nat}\isComputable(f)\]
\end{definition}
Then $\Comp$ is the image of $\RM$ under $\eval$.

The distinction between $\CompStruct(f)$ and $\isComputable(f)$ is necessary. By the
enumeration theorem for primitive recursive functions, we have a map
\[\left(\qSig{f:\nat\pto\nat}\CompStruct(f)\right)\to\nat,\]
given by
$(i,s)\mapsto \langle e_i,e_s\rangle$,
which is easily seen to be an embedding.

\index{Limited Principle of Omniscience!Weak}
On the other hand, we cannot expect there to be an en embedding $F:\Comp\to\nat$.
\begin{theorem}\label{embed-to-wlpo}
    If there is an embedding $F:\Comp\to\nat$, then equality between computable
    functions is decidable. In particular, an embedding $F:\Comp\to\nat$ implies WLPO for
    computable functions.
\end{theorem}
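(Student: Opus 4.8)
The plan is to show that an embedding $F:\Comp\to\nat$ gives decidable equality on $\Comp$, from which the WLPO consequence follows routinely. Since $\nat$ has decidable equality (Theorem~\ref{thm:nat-is-set} and Hedberg), and $F$ is an embedding, we have for any $f,g:\Comp$ an equivalence $(f=g)\simeq (F(f) = F(g))$ by Lemma~\ref{lemma:embedding-propositional-fibers}; decidability transports along equivalences, so $f=g$ is decidable. This gives the first claim directly.

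For the WLPO consequence, I would exhibit, for each $\alpha:\cantor$, two computable partial functions whose equality encodes whether $\alpha$ is constantly $0$. The natural choice: let $f_\alpha:\nat\to\Lift\nat$ be defined by $f_\alpha(n) = \eta(0)$ if $\alpha(k)=0$ for all $k\le n$, and $f_\alpha(n)=\eta(1)$ otherwise — this is a total function with primitive recursive (hence recursive) structure by bounded search, so $f_\alpha:\Comp$. Let $g:\nat\to\Lift\nat$ be the constant function $\eta\comp(\lambda n.0)$, which is computable. Then $f_\alpha = g$ (as elements of $\nat\to\Lift\nat$, hence as elements of $\Comp$ since $\isComputable$ is a proposition) if and only if $\qPi{n:\nat}f_\alpha(n) = \eta(0)$, which by construction holds exactly when $\qPi{n:\nat}\alpha(n)=0$. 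So $(f_\alpha = g) \lequiv \qPi{n:\nat}\alpha(n)=0$. Since equality in $\Comp$ is decidable, $(f_\alpha = g) + \neg(f_\alpha = g)$, and transporting along this logical equivalence gives $\big(\qPi{n:\nat}\alpha(n)=0\big) + \neg\big(\qPi{n:\nat}\alpha(n) = 0\big)$, which is precisely WLPO for computable functions.

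The main obstacle is bookkeeping rather than conceptual: I need to be careful that equality \emph{in $\Comp$} coincides with equality of the underlying partial functions. This is immediate because $\isComputable(f)$ is a proposition (it is a truncation), so the first projection $\Comp\to(\nat\to\Lift\nat)$ is an embedding, and $(f,p)=(g,q)$ in $\Comp$ iff $f=g$ in $\nat\to\Lift\nat$. A second small point is confirming $f_\alpha$ is genuinely computable: the predicate $n\mapsto \forall(k\le n).\alpha(k)=0$ is decidable by bounded search (Lemma on bounded search in Section~\ref{section:nat}), and case analysis on a decidable predicate into $\eta(0)$ versus $\eta(1)$ has recursive structure by Theorem~\ref{thm:rec-cases} (or even primitive recursive structure directly), so $\isComputable(f_\alpha)$ holds. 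I would also remark that this argument only needs $F$ to be an embedding, not an equivalence, and that the same construction shows the stronger conclusion that decidable equality on $\Comp$ already implies WLPO, so the two halves of the theorem are in fact the same fact.
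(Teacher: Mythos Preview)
Your argument for decidable equality on $\Comp$ is correct and matches the paper exactly.

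For the WLPO consequence, however, there is a genuine gap. You start with an arbitrary $\alpha:\cantor$ and claim $f_\alpha$ is computable, justifying this via bounded search and Theorem~\ref{thm:rec-cases}. But the bounded-search lemma in Section~\ref{section:nat} only establishes that the predicate $n\mapsto\forall(k\le n).\alpha(k)=0$ is \emph{decidable} (complemented), not that it is \emph{recursive}; and Theorem~\ref{thm:rec-cases} requires the predicate to have recursive structure, not merely to be complemented. These are different notions here---the paper explicitly flags this terminological hazard. For $f_\alpha$ to be computable you would need $\alpha$ itself to be computable, which you have not assumed. If your argument went through for arbitrary $\alpha$, you would have established full WLPO, strictly more than the theorem claims.

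The theorem only asserts WLPO \emph{for computable functions}, and the paper's proof is correspondingly direct: given any computable $f$, compare it to $g=\lambda x.0$ (itself computable); decidable equality on $\Comp$ then decides whether $f=g$. No auxiliary $f_\alpha$ is needed. Your construction is salvageable by restricting to computable $\alpha$, but then it is redundant---a computable $\alpha$ (post-composed with the inclusion $2\hookrightarrow\nat$ and $\eta$) can be compared directly to the zero function.
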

\begin{proof}
    Let $F:\Comp\to\nat$ be an embedding, and let $f,g:\Comp$, so 
    $(f=g)\simeq (F(f)=F(g))$. As $\nat$ has decidable equality, $F(f)=F(g)$
    is decidable.

    Then taking $g$ to be $\lambda x.0$, which we know to be computable, we have that
    it is decidable whether $f(x) = \lambda x.0$.
\end{proof}
We do not expect to be able to decide whether two computable functions are equal, as
then the halting problem would be a decidable predicate. Decidable means complemented
rather than recursive here, but we expect any predicate which can be proved to be
complemented in our system to be recursive. We will re-examine this situation more
closely in Chapter~\ref{chapter:comp-and-partiality}.

A consequence of Theorem~\ref{embed-to-wlpo} is that we also cannot even expect to
have an embedding
\[ \Comp\to\qSig{f:\nat\pto\nat}\CompStruct(f).\]
That is, we have no way of finding a program for a function just by knowing one exists.

However, many basic facts about functions with computable structure also apply to
computable functions.
\begin{theorem}
    If $f$ and $g$ are computable, then so is $g\kcomp f$.
\end{theorem}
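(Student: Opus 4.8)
The plan is to deduce this directly from the structured version, Lemma~\ref{thm:compstruct-composition}, using only the universal property of propositional truncation. Recall that $\isComputable(h)\defeq\trunc{\CompStruct(h)}$, so the goal $\isComputable(g\kcomp f)$ is a proposition. By the universal property of truncation, to obtain a map $\isComputable(f)\to\isComputable(g)\to\isComputable(g\kcomp f)$ it suffices to build a map $\CompStruct(f)\to\isComputable(g)\to\isComputable(g\kcomp f)$ — since the latter codomain is a proposition by function extensionality — and then, again by the universal property, it suffices to build a map
\[
\CompStruct(f)\to\CompStruct(g)\to\CompStruct(g\kcomp f),
\]
followed by $|{-}|$ into the truncation.

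The required map is exactly what Lemma~\ref{thm:compstruct-composition} provides: given $(m,p):\CompStruct(f)$ (so $p: f = \eval(m)$) and $(n,q):\CompStruct(g)$ (so $q: g = \eval(n)$), form the composite recursive machine $m;n$ as defined in Section~\ref{section:recursive-machines}. Lemma~\ref{thm:compstruct-composition} shows $\eval(m;n) = \eval(n)\kcomp\eval(m)$, and transporting along $p$ and $q$ (using that $\kcomp$ respects equality of its arguments) gives $g\kcomp f = \eval(m;n)$, i.e.\ $(m;n, -):\CompStruct(g\kcomp f)$. Composing with $|{-}|$ yields the desired map, and discharging the two truncated hypotheses completes the argument.

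Equivalently, one can phrase this via Lemma~\ref{lemma:trunc-func}: from $\isComputable(f)$ and $\isComputable(g)$ one first obtains $\trunc{\CompStruct(f)\times\CompStruct(g)}$ (combining two truncations, since $\trunc{A\times B}$ is a proposition that $A\times B$ maps into), and then applies functoriality of truncation to the map $\CompStruct(f)\times\CompStruct(g)\to\CompStruct(g\kcomp f)$ extracted above. There is essentially no obstacle here: the only content is Lemma~\ref{thm:compstruct-composition}, and the rest is the routine bookkeeping of moving between structure and property, which is the point of stating it separately in this chapter.
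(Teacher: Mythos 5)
Your proposal is correct and takes essentially the same route as the paper: the paper's proof is exactly the observation that, by functoriality of truncation (Lemma~\ref{lemma:trunc-func}), the structured statement $\CompStruct(f)\to\CompStruct(g)\to\CompStruct(g\kcomp f)$ — which is Lemma~\ref{thm:compstruct-composition} — yields the truncated one. Your extra unpacking of the witnesses $(m,p)$, $(n,q)$ and the transport step is fine but already contained in the statement of that lemma.
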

\begin{proof}
    By functoriality of truncation (Lemma~\ref{lemma:trunc-func}), we know
    \begin{multline*}
    \big(\CompStruct(f)\to\CompStruct(g)\to\CompStruct(g\kcomp f)\big)\to \\
    \isComputable(f) \to \isComputable(g)\to\isComputable(g\kcomp f),
    \end{multline*}
    and the antecedent of this implication is 
    Theorem~\ref{thm:compstruct-composition}. 
\end{proof}
\begin{theorem}
    If $R:\nat^2\to\Prop$ is primitive recursive, then
    \[\isComputable(\lambda x. \mu y.R(x,y)).\]
\end{theorem}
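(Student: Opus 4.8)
The plan is to reduce this statement about computable functions (property) to the corresponding statement about recursive structure, exactly as was done for the composition theorem just above. The key tool is functoriality of truncation, Lemma~\ref{lemma:trunc-func}, together with the fact that $\isPR(R)$ is defined as $\trunc{\PrimRec(\chi_R)}$ and $\isComputable$ is defined as $\trunc{\CompStruct(-)}$.

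First I would recall Theorem~\ref{lemma:pr-minimization}, which states that if $R:\nat^2\to\Prop$ has primitive recursive structure then $\lambda x.\mu y.R(x,y)$ has recursive structure; formally this gives a map $\PrimRec(\chi_R)\to\CompStruct(\lambda x.\mu y.R(x,y))$. (Strictly speaking Theorem~\ref{lemma:pr-minimization} is phrased in terms of $R$ having primitive recursive structure via its characteristic function, so I would first unfold the definition of $\isPR(R)$ to obtain an element of $\PrimRec(\chi_R)$ together with the characterizing properties of $\chi_R$; the construction of the recursive machine in the proof of Theorem~\ref{lemma:pr-minimization} then applies.) Applying the truncation functor (Lemma~\ref{lemma:trunc-func}) to this map yields
\[
\trunc{\PrimRec(\chi_R)}\to\trunc{\CompStruct(\lambda x.\mu y.R(x,y))},
\]
which is exactly $\isPR(R)$-ish $\to \isComputable(\lambda x.\mu y.R(x,y))$. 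One small bookkeeping point: $\isPR(R)$ as defined is $\qSig{f:\nat^2\to\nat}\isPR(f)\times\cdots$, so I would first project to the component $\isPR(\chi_R) = \trunc{\PrimRec(\chi_R)}$, using that the characteristic function is unique, before invoking functoriality.

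I do not expect any real obstacle here; the proof is a routine application of the pattern established for Theorem~\ref{thm:compstruct-composition}. The only thing to be careful about is matching up the exact forms of the hypotheses: Theorem~\ref{lemma:pr-minimization} is stated for predicates $R$ with primitive recursive structure, and one must check that an element of $\isPR(R)$ (in the sense of Section~\ref{section:comp-functions}'s definition of primitive recursive predicate) supplies precisely what that theorem needs, namely a primitive recursive term for the characteristic function. Since $\isComputable(\lambda x.\mu y.R(x,y))$ is a proposition, there is no coherence data to track, so once the untruncated implication is in hand the result follows immediately.

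\begin{proof}
    As $\isComputable(\lambda x.\mu y.R(x,y))$ is a proposition, by
    Lemma~\ref{lemma:trunc-func} it suffices to produce a map
    \[
        \isPR(R)\to \CompStruct(\lambda x.\mu y.R(x,y)).
    \]
    Given $r:\isPR(R)$, write $r = (\chi_R,p,-,-)$ with $p:\isPR(\chi_R) =
    \trunc{\PrimRec_2(\chi_R)}$. Since $\CompStruct(\lambda x.\mu y.R(x,y))$ need not
    be a proposition, we instead first record that Theorem~\ref{lemma:pr-minimization}
    provides a map
    \[
        \PrimRec_2(\chi_R)\to \CompStruct(\lambda x.\mu y.R(x,y)),
    \]
    since a witness of $\PrimRec_2(\chi_R)$ is exactly a proof that $R$ has
    primitive recursive structure. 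Applying Lemma~\ref{lemma:trunc-func} to this map
    gives
    \[
        \trunc{\PrimRec_2(\chi_R)}\to \trunc{\CompStruct(\lambda x.\mu y.R(x,y))}
        = \isComputable(\lambda x.\mu y.R(x,y)),
    \]
    and composing with $r\mapsto p$ yields the desired
    $\isPR(R)\to \isComputable(\lambda x.\mu y.R(x,y))$.
\end{proof}
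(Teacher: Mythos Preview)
Your proposal is correct and takes essentially the same approach as the paper: the paper's proof is simply ``Functoriality of truncation and Theorem~\ref{lemma:pr-minimization},'' which is exactly what you spell out in more detail. Your additional bookkeeping about unpacking $\isPR(R)$ to extract $\trunc{\PrimRec_2(\chi_R)}$ is sound and makes the one-line argument fully explicit.
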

\begin{proof}
    Functoriality of truncation and Theorem~\ref{lemma:pr-minimization}.
\end{proof}

\begin{definition}
    A predicate $R:\nat^k\to\univ$ is recursive when it has a computable
    characteristic function.
\end{definition}
\begin{theorem}
    If $R:\nat^k\to\univ$ is recursive, and $e,e'$ are computable, then  the function
    $f:\nat^k\to\nat$ defined by
    \[f(x) = \begin{cases}
        e(x) & \text{if }R(x)\\
        e'(x) & \text{if }\neg R(x)
    \end{cases}
     \]
     is also recursive
\end{theorem}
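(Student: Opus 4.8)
The plan is to derive this from the structural version, Theorem~\ref{thm:rec-cases}, by functoriality of truncation, exactly as the preceding two theorems in this section do. The key observation is that each hypothesis of the statement is the truncation of the corresponding hypothesis of Theorem~\ref{thm:rec-cases}: ``$R$ is recursive'' unfolds to $\isComputable$ of a characteristic function $\chi_R$, which is $\trunc{\CompStruct(\chi_R)}$ and hence $\trunc{\text{``$R$ has recursive structure''}}$; and ``$e,e'$ are computable'' means $\trunc{\CompStruct(e)}$ and $\trunc{\CompStruct(e')}$. What we must produce is $\isComputable(f) = \trunc{\CompStruct(f)}$ for the case-analysis function $f$.

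First I would recall (Lemma~\ref{lemma:trunc-func}) that truncation is functorial, and more to the point that since $\trunc{A}$ is a proposition, a map $A_1 \to A_2 \to A_3 \to B$ with $B$ a proposition induces a map $\trunc{A_1} \to \trunc{A_2} \to \trunc{A_3} \to B$; concretely this is just currying together with the universal property of truncation applied three times, untruncating each hypothesis in turn. Applying this with $A_1 = \CompStruct(\chi_R)$, $A_2 = \CompStruct(e)$, $A_3 = \CompStruct(e')$ and $B = \isComputable(f)$, it suffices to exhibit a function
\[
\CompStruct(\chi_R) \to \CompStruct(e) \to \CompStruct(e') \to \isComputable(f).
\]
But given computability structures for $\chi_R$, $e$ and $e'$, Theorem~\ref{thm:rec-cases} directly produces a computability structure for $f$ (since having a recursive relation is precisely having a recursive characteristic function, and a computability structure for $\chi_R$ is exactly that), and then $|-| : \CompStruct(f) \to \isComputable(f)$ gives the required element.

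I do not expect any real obstacle here; the only thing to be careful about is the unfolding of ``$R$ is recursive''. By Definition~\ref{def:characteristic-function} the characteristic function of a predicate is unique when it exists, so ``$R$ has recursive structure'' in the sense of Chapter~\ref{chapter:comp-as-struct} and ``$R$ is recursive'' in the sense of this chapter differ only by a truncation on $\CompStruct(\chi_R)$, and this is exactly the input format Theorem~\ref{thm:rec-cases} expects. So the proof is: untruncate the three hypotheses, apply Theorem~\ref{thm:rec-cases} to get a machine computing $f$, and truncate the result; validity of this chain of untruncations is guaranteed because the goal $\isComputable(f)$ is a proposition.
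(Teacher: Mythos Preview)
Your proposal is correct and is exactly the paper's approach: the paper's entire proof reads ``Functoriality of truncation from Theorem~\ref{thm:rec-cases}.'' You have simply unpacked what that one line means, carefully handling the three untruncations against the propositional goal $\isComputable(f)$.
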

\begin{proof}
Functoriality of truncation from Theorem~\ref{thm:rec-cases}.
\end{proof}
\begin{theorem}
    If $R:\nat^k\to\Prop$ is recursive, then $\lambda x.\mu y.R(x,y)$ is
    computable.
\end{theorem}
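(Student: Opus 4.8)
The statement to prove is: if $R:\nat^k\to\Prop$ is recursive, then $\lambda x.\mu y.R(x,y)$ is computable.

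The plan is to follow the same pattern used in the immediately preceding theorems of this section: transport a structural result from Chapter~\ref{chapter:comp-as-struct} across the truncation using functoriality of truncation (Lemma~\ref{lemma:trunc-func}). The structural analog here is Theorem~\ref{thm:mu-recursion}, which states that if $R:\nat^k\to\Prop$ has recursive structure, then $\lambda x.\mu y.R(x,y)$ has recursive structure. First I would recall that, by definition, ``$R$ is recursive'' unfolds to the existence of a computable characteristic function, i.e.\ $\isComputable(\chi_R) = \trunc{\CompStruct(\chi_R)}$, which is exactly the truncation of ``$R$ has recursive structure'' in the sense of the definition in Section~\ref{section:normal-form}. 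Likewise ``$\lambda x.\mu y.R(x,y)$ is computable'' is $\isComputable(\lambda x.\mu y.R(x,y)) = \trunc{\CompStruct(\lambda x.\mu y.R(x,y))}$.

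The key step is then to observe that Theorem~\ref{thm:mu-recursion} provides precisely a map
\[\CompStruct(\chi_R)\to\CompStruct(\lambda x.\mu y.R(x,y)),\]
since a computability structure for $R$ is by definition a computability structure for its characteristic function, and the conclusion of Theorem~\ref{thm:mu-recursion} is a computability structure for $\lambda x.\mu y.R(x,y)$. Applying functoriality of truncation (Lemma~\ref{lemma:trunc-func}) to this map yields
\[\trunc{\CompStruct(\chi_R)}\to\trunc{\CompStruct(\lambda x.\mu y.R(x,y))},\]
which is exactly $\isComputable(\chi_R)\to\isComputable(\lambda x.\mu y.R(x,y))$. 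Feeding in the hypothesis $\isComputable(\chi_R)$ completes the argument.

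There is no real obstacle here; the only point requiring a moment's care is the bookkeeping between the predicate $R$ and its characteristic function $\chi_R$ — one must make sure the definitions of ``recursive predicate'' used in Chapter~\ref{chapter:comp-as-struct} (recursive structure on a characteristic function) and in this chapter (computable characteristic function) line up so that $\CompStruct$ applied to the relevant characteristic function is what appears on both sides. Since these definitions were set up in parallel precisely so this transport works, the proof is a one-line appeal, exactly as in the neighbouring theorems.
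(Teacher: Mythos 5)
Your proof is correct and is essentially the paper's own argument: the paper's proof is exactly the one-line appeal to functoriality of truncation (Lemma~\ref{lemma:trunc-func}) applied to the structural result Theorem~\ref{thm:mu-recursion}, and your unfolding of the definitions is just a more explicit spelling-out of that step.
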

\begin{proof}
    Functoriality of truncation from Theorem~\ref{thm:mu-recursion}.
\end{proof}
\begin{theorem}
    If $f:\nat^{n+2}\to\nat$ and $g:\nat^{n}\to\nat$ are computable, then
    $\recop_{f,g}$ is computable.
\end{theorem}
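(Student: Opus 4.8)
The plan is to apply functoriality of truncation (Lemma~\ref{lemma:trunc-func}) to the corresponding result for computability as structure, exactly as in the preceding theorems of this section. That structural result is Theorem~\ref{thm:pr-is-recursive}, which states that if $f:\nat^{n+2}\to\nat$ and $g:\nat^n\to\nat$ have recursive structure, then $\recop_{f,g}$ has recursive structure.

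First I would recall that $\isComputable(f)\defeq\trunc{\CompStruct(f)}$ by definition, and that $\recop_{f,g}$ is an ordinary total function, so ``being computable'' for it means $\isComputable(\eta\comp\recop_{f,g})$, i.e.\ $\trunc{\CompStruct(\eta\comp\recop_{f,g})}$. Next I would observe that Lemma~\ref{lemma:trunc-func} gives, from any map $X\to Y$, a map $\trunc{X}\to\trunc{Y}$; applying this twice (or once, to a curried map) yields
\[
\big(\CompStruct(f)\to \CompStruct(g)\to \CompStruct(\recop_{f,g})\big)
\to \isComputable(f)\to \isComputable(g)\to \isComputable(\recop_{f,g}).
\]
Here the domains and codomains are all propositions, so the nesting of implications causes no difficulty. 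The antecedent of this implication is precisely Theorem~\ref{thm:pr-is-recursive}, so applying it to the given hypotheses $\isComputable(f)$ and $\isComputable(g)$ yields $\isComputable(\recop_{f,g})$, completing the argument.

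There is no real obstacle here: the only thing to be slightly careful about is that Lemma~\ref{lemma:trunc-func} is stated for a single map $X\to Y$, whereas we want to push through a two-argument (curried) map; but this is handled either by currying/uncurrying (using that $\CompStruct(f)\times\CompStruct(g)$ truncates compatibly, via the lemma following Lemma~\ref{lemma:trunc-func} on $\Sigma$-types and truncation) or simply by applying Lemma~\ref{lemma:trunc-func} iteratively, first to obtain $\isComputable(f)\to\trunc{\CompStruct(g)\to\CompStruct(\recop_{f,g})}$ and then again inside that truncation. Since all these types are propositions, the manipulations are routine, and the proof is a one-liner of the same shape as the three theorems immediately above it.

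\begin{proof}
    Functoriality of truncation (Lemma~\ref{lemma:trunc-func}) applied to
    Theorem~\ref{thm:pr-is-recursive}.
\end{proof}
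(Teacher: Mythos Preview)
Your proof is correct and takes exactly the same approach as the paper: the paper's proof is the one-liner ``Functoriality of truncation from Theorem~\ref{thm:pr-is-recursive}.''
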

\begin{proof}
    Functoriality of truncation from Theorem~\ref{thm:pr-is-recursive}.
\end{proof}

\section{Recursion theory}\label{section:rec-theory}

Most of the results of Section~\ref{section:normal-form} are results that deal explicitly
with machines, as such, they do not lift directly to computable functions. However,
the theory the results allow can be lifted to computable functions. In particular, we
have $\KleeneT$, $\KleeneU$ and $\{-\}$, which we saw have recursive structure, and
so these functions are computable. From this, we can prove Roger's fixed point
theorem and Kleene's second recursion theorem for computable functions.

\begin{theorem}
    For any total computable function $f:\nat\to\nat$,
    \[\qExists{n:\nat}\{n\}=\{f(n)\}.\]
\end{theorem}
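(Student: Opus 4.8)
The statement to prove is Rogers' fixed point theorem for computable functions: for any total computable $f:\nat\to\nat$, there exists $n:\nat$ with $\{n\} = \{f(n)\}$. The key observation is that Theorem in Section~\ref{computability:recursion-theory} already establishes exactly this for functions \emph{with recursive structure}, and the proof of Lemma~\ref{lemma:untruncate-decidable-predicates}-style reasoning, together with functoriality of truncation, is the natural route to lift it. So the plan is to deduce the property-level statement from the structure-level statement by the same pattern used throughout this chapter.

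First I would recall the structure-level Rogers' fixed point theorem: if $f:\nat\to\nat$ is total and $(m,-):\CompStruct(\eta\comp f)$, then we can construct $n:\nat$ with $\{n\} = \{f(n)\}$ (using $\smn$, the diagonal function $g\langle x,y\rangle = \{\{x\}(x)\}(y)$, and a code for $f\comp h$). The conclusion $\qExists{n:\nat}\{n\} = \{f(n)\}$ is a proposition, so by the universal property of truncation it suffices to produce a map $\CompStruct(\eta\comp f)\to\qExists{n:\nat}\{n\}=\{f(n)\}$. Given a computability structure $m$ for $\eta\comp f$, the structure-level theorem hands us an explicit $n$ together with a proof $\{n\}=\{f(n)\}$; composing with $|-|$ gives a point of $\qExists{n:\nat}\{n\}=\{f(n)\}$. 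Since $\isComputable(\eta\comp f) = \trunc{\CompStruct(\eta\comp f)}$, we obtain the desired implication $\isComputable(\eta\comp f)\to\qExists{n:\nat}\{n\}=\{f(n)\}$. Here one should be mildly careful that "total computable $f:\nat\to\nat$" means $\isComputable(\eta\comp f)$ in the sense of Definition of $\isComputable$, matching the convention established for total functions in Chapter~\ref{chapter:partial-functions}.

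The only genuinely delicate point — and the main obstacle, if there is one — is whether the $n$ produced depends on the chosen structure $m$ in a way that would obstruct untruncation. But it does not matter: we are mapping into a \emph{proposition}, so the universal property of truncation applies directly without needing the output $n$ to be independent of $m$; we never need to untruncate, only to map out of the truncation into a proposition. This is precisely the pattern already used for the composition theorem and the minimization theorems earlier in this chapter via Lemma~\ref{lemma:trunc-func}, and indeed one can phrase the whole argument as an application of functoriality of truncation to the structure-level statement. I would therefore present the proof in two sentences: invoke the structure-level Rogers' fixed point theorem to get a map $\CompStruct(\eta\comp f)\to \qExists{n:\nat}\{n\}=\{f(n)\}$, then note that the codomain is a proposition so this factors through $\trunc{\CompStruct(\eta\comp f)} = \isComputable(\eta\comp f)$, which is what we have by hypothesis. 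No new computation is required beyond what Section~\ref{section:normal-form} and Section~\ref{computability:recursion-theory} already provide.
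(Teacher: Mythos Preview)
Your proposal is correct and matches the paper's approach exactly: the paper simply states that this result (together with Kleene's second recursion theorem) follows by functoriality of truncation from the corresponding version for recursive machines. You have given more detail than the paper does, but the argument is the same.
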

\begin{theorem}
    For any computable function $f:\nat\to\Lift \nat$, 
    \[\qExists{p:\nat}\qPi{y:\nat}\left(\{p\}(y) = f(\langle p,y\rangle)\right).\]
\end{theorem}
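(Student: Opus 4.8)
The plan is to deduce this statement from Kleene's Second Recursion Theorem for computability \emph{as structure} (proved just above) by functoriality of the propositional truncation. That theorem is precisely a construction of a function
\[
\CompStruct(f)\to\qSig{p:\nat}{\qPi{y:\nat}{\left(\{p\}(y) = f(\langle p,y\rangle)\right)}},
\]
which sends a recursive machine $m$ computing $f$ to the fixed point $p$ obtained there via Rogers' theorem and the $S^m_n$ theorem. Applying Lemma~\ref{lemma:trunc-func} to this map yields
\[
\isComputable(f)\to\trunc{\qSig{p:\nat}{\qPi{y:\nat}{\left(\{p\}(y) = f(\langle p,y\rangle)\right)}}},
\]
and by definition of $\exists$ the codomain is exactly $\qExists{p:\nat}{\qPi{y:\nat}{\left(\{p\}(y) = f(\langle p,y\rangle)\right)}}$. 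Since $f$ is assumed computable, modus ponens finishes the argument.

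Equivalently, one can run the structured proof directly, untruncating where needed. The goal $\qExists{p:\nat}{\qPi{y:\nat}{\left(\{p\}(y) = f(\langle p,y\rangle)\right)}}$ is a proposition (it is a truncation), so we may assume an actual computation structure for $f$, hence a code $e_f:\nat$ with $\{e_f\}(x) = f(x)$ for all $x$ (Normal Form Theorem, \ref{knf}). Then $g(x)\defeq\smn(e_f,x)$ has primitive recursive structure and so is a total computable function; Rogers' fixed point theorem for computable functions gives $\qExists{p:\nat}{\{p\} = \{g(p)\}}$, and since the goal is still a proposition we may fix such a $p$. For any $y$ we then compute
\[
\{p\}(y) = \{g(p)\}(y) = \{\smn(e_f,p)\}(y) = \{e_f\}\langle p,y\rangle = f(\langle p,y\rangle),
\]
using the $S^m_n$ theorem for the third equality.

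I do not expect a genuine obstacle here: all the real content lives in the structured version, and what remains is only the bookkeeping of truncations. The one point that needs care — and the reason the statement is phrased with $\exists$ rather than $\Sigma$ — is that a code $e_f$ cannot be extracted from the bare property $\isComputable(f)$ outside the protection of a propositional goal, exactly as in the discussion following Theorem~\ref{embed-to-wlpo}; here this is unproblematic precisely because both $\qExists{p:\nat}{\qPi{y:\nat}{(\ldots)}}$ and $\qExists{p:\nat}{\{p\}=\{g(p)\}}$ are propositions, so truncations may be introduced and eliminated freely.
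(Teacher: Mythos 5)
Your proposal is correct and matches the paper's own argument: the paper proves this (together with Rogers' theorem for computable functions) exactly by applying functoriality of truncation (Lemma~\ref{lemma:trunc-func}) to the structured version for recursive machines. Your second, inline-untruncation route is just an unfolding of the same argument, and your remark about why the truncations can be introduced and eliminated freely is the right justification.
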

These follow by functoriality of truncation from the corresponding version for
recursive machines.

Similarly, we have the characterization of recursive and recursively enumerable sets.
\begin{definition}
    A subset $A$ of $\nat$ is \emph{recursively enumerable} when
    \[\qExists{f:\nat\to\Lift\nat}\isComputable(f)\times \ran(f)=A.\]
  \end{definition}
Notice that being recursively enumerable is a proposition, and it is the truncation
of having recursively enumerable structure.

\begin{theorem}\label{thm:rec-re-co-re-prop}
A subset $A$ of $\nat$ is recursive if and only it is complemented and both $A$ and
    its complement are recursively enumerable.
\end{theorem}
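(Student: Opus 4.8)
The plan is to deduce this from the structural version, Theorem~\ref{thm:rec-re-co-re-struct}, by applying functoriality of truncation (Lemma~\ref{lemma:trunc-func}), in the same spirit as the other results in this section. First I would unwind the definitions: $A$ is recursive means $\isComputable(\chi_A)$, i.e.\ $\trunc{\CompStruct(\chi_A)}$; $A$ is recursively enumerable means $\qExists{f:\nat\to\Lift\nat}\isComputable(f)\times \ran(f)=A$, which by the observation following Definition~\ref{def:characteristic-function}-style reasoning is the truncation of the structured notion ``$A$ has a recursive enumeration''; and $A$ being complemented, $A+\neg A$ pointwise, is already a proposition-valued predicate on $\nat$ (it need not be truncated). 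The target biconditional is between two propositions, so it suffices to establish logical equivalence, and for each direction it suffices to produce an underlying map on structures and then truncate.

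For the forward direction, from $\trunc{\CompStruct(\chi_A)}$ I want $A$ complemented and both $A$ and $\neg A$ recursively enumerable. Complementedness: a computable characteristic function $\chi_A:\nat\to\nat$ valued in $\{0,1\}$ gives, for each $n$, a decision of $A(n)$ by checking whether $\chi_A(n)=1$, using that $\nat$ has decidable equality (Theorem~\ref{thm:nat-is-set}); this map $\CompStruct(\chi_A)\to \qPi{n:\nat}(A(n)+\neg A(n))$ descends through the truncation since the codomain is a proposition (products of decidable propositions over $\nat$). Recursive enumerability of $A$ and $\neg A$: from a recursive structure for $\chi_A$ one builds recursive machines enumerating $A$ and $\neg A$ (run $\chi_A$ and output $n$ precisely when it returns $1$, resp.\ $0$), exactly as in the structured Theorem~\ref{thm:rec-re-co-re-struct}; applying Lemma~\ref{lemma:trunc-func} to this map $\CompStruct(\chi_A)\to \CompStruct\text{-of-enumeration}(A)\times\CompStruct\text{-of-enumeration}(\neg A)$ and composing with the truncation unit gives what we want.

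For the converse, I assume $A$ is complemented, say with witness $c:\qPi{n:\nat}(A(n)+\neg A(n))$ (this is genuine data available without truncation), together with $\isComputable$-witnesses of enumerations of $A$ and of $\neg A$. Here is the one subtlety, and the step I expect to be the main obstacle: Theorem~\ref{thm:rec-re-co-re-struct} takes as input \emph{recursive enumeration structures}, i.e.\ actual machines, whereas recursive enumerability is their truncation. So I would assemble the input to $\CompStruct(\chi_A)$-construction as a map out of a product $\big(\text{complemented data}\big)\times\CompStruct\text{-of-enum}(A)\times\CompStruct\text{-of-enum}(\neg A)\to\CompStruct(\chi_A)$ — this is precisely the machine construction in Theorem~\ref{thm:rec-re-co-re-struct}, which crucially uses complementedness to make the case analysis in its step function exhaustive — and then truncate. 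Since the complemented-ness hypothesis $A+\neg A$ is already a proposition (so equals its own truncation) and truncation preserves finite products up to logical equivalence, functoriality of truncation turns this structured map into $\big(\text{complemented}\big)\to\text{r.e.}(A)\to\text{r.e.}(\neg A)\to\isComputable(\chi_A)$, i.e.\ $A$ is recursive. Combining both directions yields the stated logical equivalence, hence the equivalence of propositions. As the surrounding discussion notes, one cannot hope for an equivalence of \emph{types} here, only of propositions, because the structured version does not respect changes of machine.
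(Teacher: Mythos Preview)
Your proposal is correct and matches the paper's approach. The paper gives no explicit proof for this theorem; it is stated immediately after the remark ``Similarly, we have the characterization of recursive and recursively enumerable sets,'' where the ``similarly'' refers back to the preceding comment that the fixed-point theorems ``follow by functoriality of truncation from the corresponding version for recursive machines''---so your derivation from Theorem~\ref{thm:rec-re-co-re-struct} via Lemma~\ref{lemma:trunc-func}, with the observation that complementedness is already a proposition, is exactly what is intended.
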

\begin{theorem}\label{thm:re-image-prop}
    A subset of $\nat$ is r.e.\ iff it is the domain of a computable
    function.
  \end{theorem}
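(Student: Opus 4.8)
The plan is to obtain this as the propositional truncation of the corresponding structure-level result, Theorem~\ref{thm:re-image-struct}, which states that a subset $A$ of $\nat$ has a recursive enumeration if and only if it is the domain of a function with recursive structure. First I would unfold the definitions: ``$A$ is r.e.'' means $\qExists{m:\RM}\ran(\eval(m)) = A$, which by definition of recursive enumeration and of $\isComputable$ is equivalent to $\qExists{f:\nat\to\Lift\nat}\isComputable(f)\times(\ran(f) = A)$; and ``$A$ is the domain of a computable function'' means $\qExists{g:\nat\to\Lift\nat}\isComputable(g)\times(\dom_g = A)$, where $\dom_g(x) \defeq \isDefined(g(x))$. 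Both of these are propositions, since they are truncations, so it suffices to establish a logical equivalence between them.

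The key step is to transport the two maps constructed in the proof of Theorem~\ref{thm:re-image-struct} through the truncation using functoriality of truncation (Lemma~\ref{lemma:trunc-func}). Explicitly, that proof gives, for any $f$ with a given computability structure $m$ computing a function whose domain is $A$, an explicitly-built function $f'$ (via $f'(x) \defeq (\eta\comp c_x)^\sharp(f x)$) together with a computability structure for $f'$, and shows $\ran(f') = A$; running this in reverse, from a computable enumeration $g$ of $A$ it builds $j(x)\defeq \mu z.\,h(z) = \inr x$ with $h$ primitive recursive, shows $j$ has recursive structure, and shows that the domain of $j$ is exactly the image of $g$. Since these constructions take computability \emph{structures} to computability \emph{structures}, applying $\trunc{-}$ and Lemma~\ref{lemma:trunc-func} yields maps between the truncated statements, i.e.\ the two directions of the desired logical equivalence. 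One small bookkeeping point: I must check that the range/domain equalities $\ran(f') = A$ and $\dom_j = \im(g)$ produced inside that proof are genuine equalities of subsets (which they are, via function extensionality and proposition extensionality applied pointwise to the logical equivalences established there), so that they survive being carried along; these equalities do not depend on the choice of machine, so they truncate without issue.

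I do not expect a serious obstacle here: the entire content is already present in Theorem~\ref{thm:re-image-struct}, and the only work is the routine application of functoriality of truncation, exactly as in the proofs of the preceding theorems in this section (e.g.\ the composition theorem and the minimization theorem, which are each one-line appeals to Lemma~\ref{lemma:trunc-func}). If anything is delicate, it is making sure the two truncated statements are set up as propositions so that logical equivalence suffices — but both ``r.e.'' and ``is the domain of a computable function'' are manifestly truncations, hence propositions, so this is immediate. I would therefore present the proof in two short paragraphs: one unfolding the definitions to reduce to a logical equivalence between propositions, and one citing Theorem~\ref{thm:re-image-struct} together with Lemma~\ref{lemma:trunc-func} to supply both implications.
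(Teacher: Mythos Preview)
Your proposal is correct and matches the paper's approach exactly: the paper does not write out a proof for this theorem, instead indicating that it (together with the surrounding results) follows ``similarly'' by functoriality of truncation from the corresponding structure-level result, Theorem~\ref{thm:re-image-struct}. Your unpacking of this---checking that both sides are propositions so that logical equivalence suffices, and then invoking Lemma~\ref{lemma:trunc-func} on the two constructions from Theorem~\ref{thm:re-image-struct}---is precisely the intended argument.
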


\begin{theorem}
    If $A$ is an inhabited subset of $\nat$, then the following are equivalent.
    \begin{enumerate}
        \item $A$ is the domain of a computable partial function.
        \item $A$ is the range of a computable partial function.
        \item $A$ is the range of a computable total function.
        \item $A$ is the image of a primitive recursive function.
    \end{enumerate}
\end{theorem}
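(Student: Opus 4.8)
The plan is to prove the four statements equivalent by establishing the cycle $(1)\Rightarrow(2)\Rightarrow(3)\Rightarrow(4)\Rightarrow(1)$, lifting the corresponding cycle for computability \emph{as structure} via functoriality of truncation (Lemma~\ref{lemma:trunc-func}) wherever possible, and using the hypothesis that $A$ is inhabited (rather than merely containing a fixed element $a:A$) to handle the places where the structured argument used an explicit witness. Recall that ``$A$ is recursively enumerable'' unfolds, by definition, to $\qExists{f:\nat\to\Lift\nat}\isComputable(f)\times (\ran(f) = A)$, and that $(1)\Leftrightarrow(2)$ is exactly Theorem~\ref{thm:re-image-prop}, so only the remaining implications need work.

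First I would do $(2)\Rightarrow(3)$. We are given, as a proposition, that there is a computable partial $f$ with $\ran(f) = A$, and we are given $\trunc{\qSig{a:A}\top}$. Since the goal ``$A$ is the range of a computable total function'' is a proposition, we may untruncate both: fix a recursive machine $m$ computing $f$, and fix an actual element $a:A$. Now mimic the structured proof: define the total function $g:\nat\to\nat$ by $g(\langle y,k\rangle) = x$ if $\run_k(m,y) = \inr x$, and $g(\langle y,k\rangle) = a$ otherwise; this is a case analysis on the primitive recursive predicate ``$\run_k(m,y)$ is of the form $\inr(-)$'', so $g$ has recursive structure and hence $\isComputable(g)$. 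One checks $\ran(g) = A$: every output is either some value $\eval(m)(y)\in\ran(f)=A$ or is $a\in A$, and conversely every element of $A = \ran(f)$ appears as $\eval(m)(y)$ for some $y$, hence as $g(\langle y,k\rangle)$ for the halting stage $k$. For $(3)\Rightarrow(4)$: again the target is a proposition, so untruncate the computability structure on the total $g$; the machine for $g$ gives, via $\source$ and the Kleene predicate $\KleeneT$ from Section~\ref{section:normal-form}, a \emph{primitive recursive} function $h(\langle y,k\rangle)$ which reads off $\KleeneU$ of the unique halting trace when $\run_k$ halts at stage $k$ and returns $a$ otherwise; since $g$ is total this $h$ has the same range as $g$, namely $A$. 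Finally $(4)\Rightarrow(1)$ is immediate: a primitive recursive function is in particular computable and total, so $\eta\circ h:\nat\to\Lift\nat$ is a computable partial function whose domain is all of $\nat$ and whose range is $A$, giving $(3)$, and then $(3)\Rightarrow(2)\Rightarrow(1)$ closes the loop --- or more directly, $A = \ran(h)$ exhibits $A$ as the domain of $x\mapsto \eta(h(x))$ composed appropriately, but cleanest is to just route back through the already-established $(1)\Leftrightarrow(2)$ and the trivial $(4)\Rightarrow(3)\Rightarrow(2)$.

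I expect the only genuine subtlety --- and hence the main obstacle --- to be the bookkeeping around \emph{inhabited} versus \emph{pointed}: the structured theorem in Chapter~\ref{chapter:comp-as-struct} was stated with a literal $a:A$ in hand, whereas here we have only $\trunc{\qSig{x:\nat}x\in A}$. The resolution is uniform: in each implication the conclusion is a proposition (being recursively enumerable, being the range of a computable (total) function, and being the image of a primitive recursive function are all propositions, the latter two because computability/primitive-recursiveness are truncations and the relevant $\Sigma$-types over them are propositions when the range condition pins things down --- or, where they are not literally propositions, we are proving a logical equivalence and may reason under truncation), so we are entitled to untruncate the hypothesis $\trunc{\qSig{x:\nat}x\in A}$ exactly once per implication and proceed as in the pointed proof. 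The remaining content --- that case analysis on primitive recursive predicates preserves primitive recursive structure, that $\run_k$ and $\source$ are primitive recursive, and the range computations --- is routine and already available from Section~\ref{section:normal-form} and Theorem~\ref{thm:pr-basics}.

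Once the cycle is in place, the proof is a short chain of sentences: $(1)\Leftrightarrow(2)$ by Theorem~\ref{thm:re-image-prop}; $(2)\Rightarrow(3)$ by the untruncate-and-pad argument above; $(3)\Rightarrow(4)$ by the normal-form extraction; $(4)\Rightarrow(3)$ since primitive recursive functions are total and computable; and $(3)\Rightarrow(2)$ trivially since a total computable function is a computable partial function. No new machinery beyond what is developed in Chapters~\ref{chapter:comp-as-struct} and~\ref{chapter:comp-as-prop} is needed, and in particular no choice principles are invoked, since every untruncation is of a hypothesis used only to prove a proposition.
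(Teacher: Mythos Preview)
Your proposal is correct and follows essentially the paper's approach: the property-level theorem is stated without proof in Section~\ref{section:rec-theory}, with the understanding that it follows from the structured version by functoriality of truncation, and you have spelled out exactly how that goes, including the correct handling of the inhabited-versus-pointed subtlety.

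One small difference worth noting: the paper's structured proof does not prove $(3)\Rightarrow(4)$ as a separate step. Instead it observes that the padding function $g(\langle y,k\rangle)$ constructed for $(2)\Rightarrow(3)$ is \emph{already} primitive recursive (being a case analysis on a primitive recursive predicate), so $(2)\Rightarrow(4)$ comes for free from the same construction. Your detour through a separate $(3)\Rightarrow(4)$ argument via the normal form machinery is valid but does more work than necessary; you could simply note, after defining $g$, that $g$ is primitive recursive and thereby obtain both $(3)$ and $(4)$ at once.
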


\section{The halting problem}\label{section:halting-problem}
Consider the predicate $H:\nat\to\nat\to\univ$ defined by
\[H(e,x) = \isDefined(\{e\}(x)).\]
One of the first results in computability theory is that this predicate is not
recursive. We repeat the proof here. First, let us define the diagonal function

    \begin{eqnarray*}
        d &:& \nat \to \Lift\nat\\
    \extent d(x) & \defeq & (\{x\}(x) = \bot), \\
    \val(d(x))(p) & \defeq & 0.
    \end{eqnarray*}
\begin{lemma}\label{lemma:diagonal}
    The function $d$ is not computable.
\end{lemma}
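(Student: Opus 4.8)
The plan is to run the standard diagonal argument adapted to our setting with partial elements. Suppose for contradiction that $d$ is computable, so $\isComputable(d)$. Since being computable is the truncation of having a computability structure, and since we are trying to derive a contradiction (a proposition), I may untruncate and assume we have an actual recursive machine $m$ with $d = \eval(m)$. Let $e \defeq \encode(m)$ be its code under the encoding from the normal form theorem (Theorem~\ref{knf}). The key point is that, by construction of $\{-\}$ via $\KleeneT$ and $\KleeneU$, we have $\{e\}(x) = \eval(m)(x) = d(x)$ for all $x:\nat$; in particular $\{e\}(e) = d(e)$.

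Now I would examine the extent of definition of $d(e)$. By the definition of $d$, we have $\extent(d(e)) \jeq (\{e\}(e) = \bot)$. Substituting $\{e\}(e) = d(e)$, this says $\extent(d(e)) \lequiv (d(e) = \bot)$. The core of the argument is then the following case analysis, which I would phrase using the characterization of equality in $\Sigma$-types (Theorem~\ref{thm:patheq}) and the characterization $\isDefined$ from Theorem~\ref{thm:isdefined-char}. First, suppose $\extent(d(e))$ holds, i.e.\ $d(e)$ is defined. Then on one hand $d(e) \neq \bot$: a defined partial element cannot equal $\bot$, since $\bot$ has extent $\zerotype$ while $d(e)$ has an inhabited extent, so equating them would give an element of $\zerotype$. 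On the other hand, $\extent(d(e))$ is by definition exactly $\{e\}(e) = \bot$, i.e.\ $d(e) = \bot$ — contradiction. Hence $\neg\extent(d(e))$. But then $\extent(d(e))$ is an empty proposition, so $\extent(d(e)) = \zerotype$ by proposition extensionality, and therefore $d(e) = \bot$ (both are partial elements with extent $\zerotype$ and there is a unique undefined element). This last equality is precisely an inhabitant of $\extent(d(e))$, contradicting $\neg\extent(d(e))$.

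So in either case we reach a contradiction, giving $\neg\isComputable(d)$, which is the claim. The main obstacle — or really the main thing to be careful about — is not the logical skeleton of the diagonal argument but making the substitution $\{e\}(e) = d(e)$ rigorous: one must track that $\eval(m)$ really does agree with $\{\encode(m)\}$ as functions $\nat\to\Lift(\nat)$, which is exactly the content of the normal form theorem together with the definition of the Kleene bracket $\{-\}$, and then use that this equality of partial functions, evaluated at $e$, gives an honest equality of partial elements that $\transport$ can be applied to. Once that is in place, the rest is a short unfolding of the definitions of $\extent$, $\bot$, and $\isDefined$, plus proposition extensionality to pass between "$\extent(d(e))$ is empty" and "$d(e) = \bot$".
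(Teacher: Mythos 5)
Your argument is correct and is essentially the paper's own proof: untruncate to a machine $m$ with code $e$, use $\{e\}(e)=\eval(m,e)=d(e)$, and derive a contradiction from $\extent(d(e))$ being logically equivalent to its own negation. You merely spell out the steps the paper leaves implicit, namely that $(p=\bot)$ is interchangeable with $\neg\extent(p)$ (via proposition extensionality and uniqueness of the undefined element) and that $P\lequiv\neg P$ is constructively absurd.
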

\begin{proof}
    Since we are trying to prove a proposition, we may assume we have a machine $m$
    computing $d$ with code $e$. Then we have
    \[\{e\}(e) = \eval(m,e) = d(e).\]
    Then we have $\extent \{e\}(e) = \neg(\extent \{e\}(e))$, which is
    impossible.
\end{proof}

\begin{theorem}[Undecidability of the halting problem]
    The predicate $H$ is not recursive.
\end{theorem}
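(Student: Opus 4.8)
The plan is to reduce the statement to Lemma~\ref{lemma:diagonal}: I will show that if $H$ were recursive — i.e.\ if it had a computable characteristic function — then the diagonal function $d$ of Lemma~\ref{lemma:diagonal} would be computable, which is impossible. Since ``$H$ is not recursive'' unfolds to ``a computable characteristic function for $H$ yields an element of $\zerotype$'', it suffices to assume such a characteristic function and build a computability structure (and hence, by truncation, prove computability) for $d$.

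Concretely, assuming $H$ is recursive, first observe that $R(x) \defeq H(x,x) = \isDefined(\{x\}(x))$ is recursive: its characteristic function is $\chi_H$ precomposed with the primitive recursive duplication map $x\mapsto\langle x,x\rangle$, so it is computable by closure of computable functions under composition. Next, $\neg R$ is recursive, since $\chi_{\neg R} = \opsgn\comp\chi_R$ and $\opsgn$ has primitive recursive structure (Theorem~\ref{thm:pr-basics}). Now define the predicate $R':\nat^2\to\Prop$ by $R'(x,y)\defeq\neg R(x)$ (ignoring $y$); this is recursive as a predicate on $\nat^2$, being $\neg R$ precomposed with a projection. By the version of Theorem~\ref{thm:mu-recursion} for computable functions (Section~\ref{section:comp-functions}), the function $\lambda x.\,\mu y.\,R'(x,y)$ is computable. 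I claim this function equals $d$. Indeed $\mu y.\,R'(x,y)$ has extent $\qSig{y:\nat} R'(x,y)\times\qPi{x'<y}\neg R'(x,x')$; if $\neg R(x)$ holds then $R'(x,y)$ holds for every $y$, so the least witness is $0$ and the value is $0$, while if $R(x)$ holds then no $y$ satisfies $R'$ and the extent is empty. Hence $\mu y.\,R'(x,y)$ is defined with value $0$ exactly when $\neg R(x)$, i.e.\ exactly when $\neg\isDefined(\{x\}(x))$. On the other hand $\extent(d(x))\defeq(\{x\}(x)=\bot)$, and for any $p:\Lift\nat$ we have $(p=\bot)\lequiv\neg\isDefined(p)$: from $p=\bot$ we get $\extent(p)=\zerotype$ hence $\neg\isDefined(p)$; conversely $\neg\extent(p)$ gives $\extent(p)\lequiv\zerotype$, so $\extent(p)=\zerotype$ by proposition extensionality, and then $p=\bot$ by the characterization of equality in $\Sigma$-types (the value map out of $\zerotype$ and the propositionality witness being forced). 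So $\extent(d(x))\lequiv\neg R(x)\lequiv\extent\big(\mu y.\,R'(x,y)\big)$, and both are $0$ when defined; by the characterization of equality of partial elements together with function extensionality, $d=\lambda x.\,\mu y.\,R'(x,y)$. Thus $d$ is computable, contradicting Lemma~\ref{lemma:diagonal}, and therefore $H$ is not recursive.

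The argument is essentially routine once the reduction is set up; the only points needing care are the bookkeeping that substituting primitive recursive functions and negating preserve recursiveness of predicates, and the identification $(p=\bot)\lequiv\neg\isDefined(p)$, which is where proposition extensionality and the $\Sigma$-equality characterization enter. An equally clean alternative, if preferred, is to avoid minimization: since $R$ is recursive (hence decidable), apply the computable-function version of Theorem~\ref{thm:rec-cases} to the recursive predicate $\neg R$ with the two computable branches $\lambda x.\eta(0)$ and $\lambda x.\bot$ — the latter computed by any recursive machine whose transition function never returns an $\inr$ — and check as above that the resulting case-analysis function is $d$. Either way the main (mild) obstacle is simply making the equality $f=d$ in $\nat\to\Lift(\nat)$ fully rigorous rather than appealing to the informal ``Kleene equality'' intuition.
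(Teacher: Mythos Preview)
Your proposal is correct and follows essentially the same strategy as the paper: assume $H$ is recursive, deduce that the diagonal function $d$ is computable, and contradict Lemma~\ref{lemma:diagonal}. The only difference is that your primary route builds $d$ as the minimization $\lambda x.\,\mu y.\,\neg R(x)$, whereas the paper builds it by a recursive case analysis on $R(x)$ with branches $\eta(0)$ and $\bot$ (and obtains $\bot$ itself via $\mu y.\,\ff$); your stated alternative is exactly the paper's argument, and your extra care with the identification $(p=\bot)\lequiv\neg\isDefined(p)$ and with closure of recursiveness under composition and negation is appropriate and slightly more explicit than the paper's presentation.
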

\begin{proof}
    We wish to derive a contradiction from the assumption $\isRec(H)$. As $\emptyset$
    is a proposition, we may assume we have a computable total function
    $f:\nat\to\nat\to\nat$ such that $f(e,x) = 1$ iff $\{e\}(x)$ is defined and
    $f(e,x) = 0$ iff $\{e\}(x)$ is not defined.

    Then $d(x)$ is pointwise equal to the function
    \[
        d(x) = 
       \begin{cases}
           0 & \text{if $f(x,x) = 0$}\\
           \bot & \text{if $f(x,x) = 1$},
       \end{cases}
    \]
    so that $d(x)$ is defined iff $\neg H(x,x)$ and $d(x)$ is undefined iff $H(x,x)$.

    We claim that $d(x)$ is computable. By assumption, $f$ is a total computable
    function. Moreover, we have that $\lambda x.\bot$ is computable: we know that the
    constant functions $\lambda x,y.0$ has computable structure, and this is the
    characteristic function for the relation $\lambda x,y.\ff$. Hence, $\lambda x.\mu y.
    \ff$ has computable structure. But as there is no $y$ such that $\ff$, we must have that
    $\mu y.\ff = \bot$. So then $d$ is a computable case analysis, and has computable
    structure. But this contradicts Lemma~\ref{lemma:diagonal}.
\end{proof}

\section{Semidecidable propositions}\label{section:semidecidable}

The Rosolini partial functions provide an abstract notion of semidecidable
proposition. A more concrete notion is given by restricting our attention to
computable sequences.
\begin{definition}
    \index{semidecision procedure|see{proposition, semidecidable}}
    A \emph{semidecision procedure} for a proposition $P$ is a function 
    $f:\cantor$ with computable structure and at most one $1$ such that $P\simeq
    \lar{f}$. That is,
    \defineopfor{$\SemiDecision$}{semidecision}{proposition!semidecidable}
    \[\SemiDecision(P)\defeq\qSig{f:\cantor}\isProp\lar{f}\times
    \isComputable(f)\times (P\simeq\lar{f}).\]
    A proposition $P$ \nameas{is semidecidable}{proposition!semidecidable}
    when $\SemiDecision(P)$ is inhabited:
    \defineopfor{$\isSemiDecidable$}{isSemiDecidable}{proposition!semidecidable}
    \[\isSemiDecidable(P)\defeq \trunc{\SemiDecision(P)}.\]
\end{definition}

The semidecidable propositions are exactly those which arise as the value
of a program. That is,
\begin{theorem}\label{thm:semidecidable-is-computable}
    For all $A:\univ$, there exists a computable partial function $f:\nat\pto\nat$
    such that $A=\defined{f(0)}$ if and only if $A$ is semidecidable.
    That is,
    \[\isSemiDecidable(A)\simeq \big(\qExists{f:\nat\pto\nat}\isComputable(f)\times
    (A=\defined{f(0)})\big).\]
\end{theorem}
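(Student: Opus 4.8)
The statement relates two notions: the truncated type $\isSemiDecidable(A) = \trunc{\SemiDecision(A)}$, which asks for a computable binary sequence with at most one $1$ whose ``there is a $1$'' proposition matches $A$; and the proposition $\qExists{f:\nat\pto\nat}\isComputable(f)\times(A=\defined{f(0)})$, which asks for a computable partial function whose extent at $0$ matches $A$. Both sides are propositions, so it suffices to establish logical equivalence and then invoke proposition extensionality (or simply observe that logically equivalent propositions are equivalent, Lemma~\ref{lemma:lequiv-equiv}); moreover, since both are truncations, by functoriality of truncation it is enough to produce maps between the underlying (untruncated) data types $\SemiDecision(A) \to \qSig{f:\nat\pto\nat}\isComputable(f)\times(A=\defined{f(0)})$ and back, using Lemma~\ref{lemma:trunc-func}.

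The plan is as follows. For the forward direction, suppose we have $(\alpha,\_,c,e):\SemiDecision(A)$, so $\alpha:\cantor$ is computable (with structure $c$, which we may untruncate since we are mapping into a proposition, or keep as data via functoriality), $\lar{\alpha}$ is a proposition, and $A=\lar\alpha$. I would build a recursive machine that, on input $x$, ignores $x$ and searches: it runs $\mu k.(\alpha_k = 1)$, i.e.\ it initializes a counter at $0$ and at each step checks whether $\alpha$ at the current counter value is $1$ (this check is a computable — indeed the predicate $\alpha_k=1$ is decidable since $\alpha:\cantor$), halting with output $0$ if so and incrementing otherwise. Concretely this is $\lambda x.\mu y.(\alpha(y)=1)$ composed appropriately; since $\alpha$ is computable, Theorem~\ref{thm:mu-recursion} (in its $\isComputable$ form) gives that this partial function $f$ is computable. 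Its extent at $0$ is $\qSig{y:\nat}(\alpha(y)=1)\times\qPi{x<y}\neg(\alpha(x)=1)$, which is equivalent to $\lar\alpha$ (the minimization type is equivalent to the plain existential here, and both are propositions since $\lar\alpha$ is), hence equal to $\lar\alpha = A$ by proposition extensionality. This yields $(f,\isComputable(f),A=\defined{f(0)})$.

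For the backward direction, suppose $f:\nat\pto\nat$ is computable with $A=\defined{f(0)}$. Untruncating (into a proposition) or tracking structurally, pick a recursive machine $m$ computing $f$ with $\eval(m)=f$. Then $\defined{f(0)} = \isDefined(\eval(m,0))$, and by the normal form theorem (Theorem~\ref{knf}) this is equivalent to $\qSig{y:\nat}\KleeneT(\encode(m),0,y)$, the proposition that some $y$ codes a halting computation. Now I would define $\alpha:\cantor$ by letting $\alpha(n)=1$ exactly when $n$ is the \emph{least} witness, i.e.\ $\alpha(n)=1 \defeq \KleeneT(\encode(m),0,n)\times\qPi{k<n}\neg\KleeneT(\encode(m),0,k)$; this is a decidable predicate (bounded search over the primitive recursive $\KleeneT$), hence gives an honest element of $\cantor$, it takes value $1$ at most once (so $\lar\alpha$ is a proposition), it is computable since $\KleeneT$ is primitive recursive, and $\lar\alpha$ is equivalent to $\qSig{y:\nat}\KleeneT(\encode(m),0,y)$, which is $\defined{f(0)}=A$. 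So we get $(\alpha,\_,\isComputable(\alpha),A\simeq\lar\alpha)\in\SemiDecision(A)$, where the last component comes from $A=\lar\alpha$ by $\idtoequiv$.

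The main obstacle — really just bookkeeping rather than a deep difficulty — is keeping the computability-as-structure versus computability-as-property distinction straight throughout: on the forward direction one must feed a computable \emph{sequence} into a minimization and conclude the resulting partial function is computable, which works cleanly via Theorem~\ref{thm:mu-recursion} and functoriality of truncation (so I would phrase the underlying map on structure, $\SemiDecision$-data to $\CompStruct$-data plus an equality, and then truncate); on the backward direction one needs the normal form theorem to turn ``$f$ halts at $0$'' into a primitive recursive semidecision, and must check that the ``least-witness'' modification genuinely produces a binary sequence with at most one $1$ (this is exactly the argument already used for the equivalence between increasing sequences and sequences with at most one $1$, and for the minimization lemmas in Section~\ref{section:normal-form}). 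A secondary point is the replacement of logical equivalences $\lar\alpha\lequiv(\text{something})$ by equalities, which is legitimate since every type in sight is a proposition and we have proposition extensionality; I would state this once and use it tacitly. Assembling the two maps and appealing to Lemma~\ref{lemma:lequiv-equiv} gives the claimed equivalence.
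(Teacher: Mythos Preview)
Your proposal is correct and follows essentially the same approach as the paper: in both directions the argument is ``minimize over the witnessing sequence'' one way and ``read off the halting-step sequence from a machine for $f$'' the other way, with the equivalence of propositions reduced to a logical equivalence between truncations via functoriality. The only cosmetic difference is that in the backward direction the paper works directly with $\run_n(t,0)$ to define $\alpha$, whereas you route through the normal form theorem and $\KleeneT$; since $\KleeneT$ is itself built from $\run$ in this development, the two are interchangeable.
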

\begin{proof}
    Suppose we are given a function $f:\nat\pto \nat$ with computability structure
    $t$ such that such that $A=\defined{f(0)}$. We may define
    \[\alpha(n)=
        \begin{cases}
            0 & \text{if $\run_n(t,0)=\inr k$ for some $k$, and $\alpha(m)=1$ for
            $m<n$,}\\
            1 & \text{otherwise.}
        \end{cases}\]
    It is easy to see that $\alpha$ is recursive, that $\langle \alpha\rangle$
    is a proposition and that $A=\langle \alpha\rangle$.

    Conversely, suppose we are given $\alpha$ with recursive structure. Consider the constant function 
    $ 
    f(x) = \mu k.(\alpha(k)=0).
    $ 
    which also has recursive structure. Moreover, it is clear that
    $\defined{f(0)} = A$.

    The desired equivalence follows by functoriality of truncation.
\end{proof}

Unfortunately, the semidecidable propositions do not form a dominance without
additional assumptions (recall Section~\ref{section:dominance-choice}, and see
Section~\ref{computability:which}), and moreover,
the semidecision procedures do not even form a structural dominance (recall
Section~\ref{section:univ-dom}). We say more about
this in the discussion below, but the key obstacle is the fact that we use a
computable function rather than a function with recursive structure.
Regardless, we can talk about disciplined maps for the semidecision
procedures, but since the semidecision procedures are not closed under $\Sigma$, we
cannot apply Theorem~\ref{disciplined-maps:compose} to determine that the
\emph{semidecidably-disciplined maps} compose.

\section{Discussion}
Chapter~\ref{chapter:comp-as-struct} contains the proofs one sees in a classical
course on computability theory, while this chapter contains the results. This split
is a result of handling structure and property separately. The careful distinction, and the
resulting awkward phrasing in Chapter~\ref{chapter:comp-as-struct} may seem like
unnecessary work. The discussion of decidable equality for computable functions in
Section~\ref{section:comp-functions} is the first hint that this distinction is
actually meaningful. Chapter~\ref{chapter:comp-and-partiality} provides more
evidence that there is technical value in using a framework that allows a distinction
between computability as structure and computability as property.

The statement that every partial function
$\N \to \Lift \N$ is computable is false, as the diagonal function
$d:\nat\to\Lift\nat$ is not computable. This is the technical reason why we
must restrict the available partial functions. If we want to make any sense of the
claim ``all partial functions are computable'', then we must discount partial
functions like the above from consideration.

The Rosolini propositions can be seen as an abstract version of the semidecidable
propositions, ignoring computability. The restriction to
semidecidable propositions not only fails to form a dominance, but does not even give
a structural dominance without countable choice. The immediate barrier is that a
semidecision procedure comes with a
computable function, not a recursive machine. Therefore, we cannot lift the
dependent composition used in Lemma~\ref{lemma:rs-dom} to
the semidecision procedures, unless we use countable choice.

The obvious solution is to use a recursive machine (or a function with recursive structure)
in place of a computable function. Indeed, this resolves the issue of choice.
However, we will see in Chapter~\ref{chapter:comp-and-partiality}, and we saw already
in Theorem~\ref{embed-to-wlpo} that computable functions and functions with recursive
structure are not interchangeable. In particular, the developments in
Section~\ref{computability:which} are more natural with the above notion of
semidecision procedure. Since for our purposes the notion of semidecision procedure
is an intermediate notion between Rosolini structure and computability, the above
notion is more useful. A more focused analysis of semidecidable propositions may find
tracking explicit structure in semidecision procedures to be useful.


\chapter{Computability and partiality}\label{chapter:comp-and-partiality}
There are three basic threads running through this thesis: restricted notions of
partiality; the distinction between structure and property; and the notion of
computability. Here we tie these three threads together.
Section~\ref{computability:dominances} relates partiality and computability by
showing that all computable functions must be Rosolini partial functions.
Section~\ref{computability:church} expands on a remark in
Chapter~\ref{chapter:comp-as-prop} concerning the embedding from recursive machines
to $\nat$, and resolves the two paradoxes set forth in the introduction. Finally,
Section~\ref{computability:which} gives an explicit conjecture concerning which partial
functions can be computable, tying together the material of Parts~\ref{part:two}
and~\ref{part:three}.
\section{Semidecidable propositions and disciplined maps}\label{computability:dominances}
\index{proposition!semidecidable|(}
\index{Rosolini proposition|(}
\index{disciplined map|(}

We saw in Section~\ref{section:halting-problem} that there exists non-computable
partial functions $\nat\pto\nat$. We spent Chapter~\ref{chapter:partial-functions}
foreshadowing this by looking at restrictions of the set of all partial functions. In
particular, we examined the set of Rosolini propositions. In computational contexts,
the Rosolini propositions correspond to the semidecidable propositions.

Indeed, all computable functions are valued in Rosolini propositions. Recall the relation $R_m$ from Section~\ref{section:recursive-machines} used in the
definition of $\eval(m)$.
\begin{theorem}\label{thm:kleene-rosolini}
    For any machine $m:\RM$ and any $x,y:\nat$, the types $R_{m}(x,y)$ and
    $\qSig{y:\nat}R_{m}(x,y)$ are Rosolini propositions.
\end{theorem}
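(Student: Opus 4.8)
The plan is to directly exhibit a Rosolini structure for the proposition $R_m(x,y)$, and then separately for $\qSig{y:\nat}R_m(x,y)$. Recall that $R_m(x,y) \defeq \qExists{k:\nat}\run_k(m,x) = \inr y$, and that the remark in Section~\ref{section:recursive-machines} notes this can in fact be written without truncation using the results of Section~\ref{section:nat}. The key observation is that the predicate $\lambda k.\big(\run_k(m,x) = \inr y\big)$ on $\nat$ is decidable: $\run_k(m,x)$ is a concrete element of $\nat+\nat$ (it is $s^k(\inl(i(x)))$, computed by iterating a function built from primitive recursive combinators), and equality on $\nat+\nat$ is decidable since $\nat$ is discrete (Theorem~\ref{thm:nat-is-set}). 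So $\lambda k.(\run_k(m,x)=\inr y)$ factors through a characteristic function $\alpha_{x,y}:\nat\to 2$ by Definition~\ref{def:characteristic-function}.

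First I would recall that for a decidable predicate $P$ on $\nat$, Lemma~\ref{lemma:untruncate-decidable-predicates} tells us $\qExists{k:\nat}P(k)$ is equivalent to $\qSig{k:\nat}\big(P(k)\times\qPi{j:\nat}P(j)\to k\le j\big)$, which is a proposition; so $R_m(x,y)$ is a proposition, and moreover $R_m(x,y)\lequiv \qExists{k:\nat}\alpha_{x,y}(k)=1$, which is exactly $\lar{\mathrm{trunc}(\alpha_{x,y})}$ once we pass to the truncation of $\alpha_{x,y}$ (the binary sequence with at most one $1$ from Section~\ref{section:delay}). By proposition extensionality this gives $R_m(x,y)=\lar{\beta}$ for $\beta$ the truncation of $\alpha_{x,y}$, which is precisely a Rosolini structure $\RosStruct(R_m(x,y))$; hence $\isRos(R_m(x,y))$. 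The point that there is no choice required here is that the characteristic function is produced explicitly from $m$, $x$, $y$ by decidability, not merely asserted to exist.

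For $\qSig{y:\nat}R_m(x,y)$, I would first note this is a proposition: it is the extent of definition of the partial element $\eval(m,x)$, and $\eval$ is defined so that this is $\isProp$-valued (equivalently, $R_m$ is single-valued since a recursive machine produces at most one output). Then I would apply the same strategy one dimension up: the predicate $\lambda k.\qSig{y:\nat}\run_k(m,x)=\inr y$ is again decidable (from $\run_k(m,x):\nat+\nat$ one decides whether it is of the form $\inr y$, and if so reads off $y$), so it has a characteristic function, and $\qSig{y:\nat}R_m(x,y) \lequiv \qExists{k:\nat}\big(\run_k(m,x)\text{ is a right injection}\big) = \lar{\gamma}$ for the truncation $\gamma$ of that characteristic function; proposition extensionality again yields a Rosolini structure. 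Alternatively, and perhaps more cleanly, I could observe that $\qSig{y:\nat}R_m(x,y)$ is just the join over $y$ of the Rosolini propositions $R_m(x,y)$, but since at most one $y$ contributes, this is already handled by the single decidable search above without needing closure under countable joins.

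The main obstacle --- really the only point requiring care --- is the bookkeeping around the ``at most one $1$'' condition: the raw characteristic function $\alpha_{x,y}$ need not have at most one $1$ (though in fact $\run_k(m,x)=\inr y$ can hold for many $k$ once it holds for one, since $s'$ fixes right injections), so one must pass through the truncation map $\cantor\to\NI$ and verify that $\lar{\mathrm{trunc}(\alpha)}\lequiv\qExists{k}\alpha(k)=1$, which is routine from the properties of that retraction established in Section~\ref{section:delay}. Everything else is an application of Theorem~\ref{thm:nat-is-set}, Lemma~\ref{lemma:untruncate-decidable-predicates}, and proposition extensionality.
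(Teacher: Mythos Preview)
Your proposal is correct and takes essentially the same approach as the paper: both construct an explicit binary sequence from $\run_k(m,x)$ (the paper via the functions $g,h:\nat+\nat\to 2$, you via decidability of equality on $\nat+\nat$) and then pass to its truncation in $\NI$. The only cosmetic difference is that for the second type the paper invokes Lemma~\ref{lemma:sigmatrunc-is-exists} to pass from $\qExists{y}$ to $\qSig{y}$, whereas you directly observe that $\qSig{y:\nat}R_m(x,y)$ is the extent of $\eval(m,x)$ and hence already a proposition; both routes amount to the same thing.
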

\begin{proof}
    For the first type, let $g:\nat+\nat\to 2$ be the function
    which takes value $1$ on $\inr y$ and $0$ otherwise.
    Define $\alpha_k \defeq g(\run_k(m,x))$. We have
    \[(\alpha_k = 1)\simeq (g(\run_k(m,x)) = 1);\]
    i.e., $\alpha$ takes the value $1$ exactly when $\run_k(m,x)=\inr y$,
    so $R_{m}(x,y)\simeq \qExists{k:\nat}\alpha_k=1$.
    Finally, we truncate $\alpha$, to get the sequence which takes value $1$ only at
    the first location that $\alpha$ does.

    For the second type, let $h:\nat+\nat\to 2$ be the function
    which is $0$ on $\inl k$ and $1$ on $\inr k$, 
    and again take $\alpha_k = h(\run_k(m, x))$. We have that 
    \[(\alpha_k=1) \simeq (\qSig{y:\nat}\run_k(m, x) = \inr y).\]
    Summing over all $k:\nat$, rearranging and truncating takes us to
    \[(\qExists{k:\nat}\alpha_k=1) \simeq
    (\qExists{y:\nat}\qSig{k:\nat}\run_k(m,x) = \inr y).\]
    By Lemma~\ref{lemma:sigmatrunc-is-exists}, we then have
    \[(\qExists{k:\nat}\alpha_k=1) \simeq
    \qSig{y:\nat}\qExists{k:\nat}\run_k(m,x) = \inr y.\qedhere\]
\end{proof}
\begin{corollary}
    If $f:\nat\pto\nat$ is computable, then $f(n)$ is a Rosolini partial element for
    all~$n:\nat$.
\end{corollary}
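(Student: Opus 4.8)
The plan is to deduce the corollary directly from Theorem~\ref{thm:kleene-rosolini} together with the Kleene normal form description of $\eval$ established in Section~\ref{section:recursive-machines}. First I would recall that being computable is a proposition, and that $\isRos$ is proposition-valued; since we are proving the proposition ``$f(n)$ is Rosolini for all $n$'', we may untruncate the witness $\isComputable(f)$ and assume we have an explicit machine $m:\RM$ with $f = \eval(m)$. Fix $n:\nat$; it then suffices to show that $f(n)$ is a Rosolini partial element, i.e.\ that $\extent(f(n))$ is a Rosolini proposition, since the value component carries no further data (a partial element is Rosolini precisely when its extent of definition is).

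Next I would unfold the definition of $\eval(m)$ at $n$. By construction, $\eval(m,n)$ is the partial element whose extent of definition is $\qSig{y:\nat}R_m(n,y)$, with value given by the first projection followed by $\KleeneU$ (this is exactly the normal form description $\eval(m,x) = ((\qSig{y:\nat}\KleeneT(\encode(m),x,y)),\KleeneU\circ\pr_1)$, whose extent is equivalent to $\qSig{y:\nat}R_m(n,y)$). So $\extent(f(n)) \simeq \qSig{y:\nat}R_m(n,y)$. Now apply the second part of Theorem~\ref{thm:kleene-rosolini} with $x \defeq n$: it says precisely that $\qSig{y:\nat}R_m(n,y)$ is a Rosolini proposition. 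Since $\isRos$ is closed under equivalence (Lemma~\ref{lemma:set-of-props-is-equiv-closed}, as the Rosolini propositions are a set of propositions, or more directly by proposition extensionality), we conclude that $\extent(f(n))$ is Rosolini, hence $f(n)$ is a Rosolini partial element.

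There is essentially no obstacle here; the content was already done in Theorem~\ref{thm:kleene-rosolini}. The only mild care required is in the bookkeeping step identifying $\extent(\eval(m,n))$ with $\qSig{y:\nat}R_m(n,y)$ and in noting that a partial element counts as Rosolini exactly when its extent does --- both are routine unfoldings of Definition~\ref{def:lifting} and Definition~\ref{definition:rosolini}. If one wishes to be fully explicit about the quantification over $n$, one observes that the map $n \mapsto (\text{proof that } f(n) \text{ is Rosolini})$ is a dependent function into a family of propositions, so no choice principle is needed; the statement $\qPi{n:\nat}\isRos(\extent(f(n)))$ follows pointwise. This is why the corollary is stated as an immediate consequence rather than given a separate proof in the text.
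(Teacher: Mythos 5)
Your proof is correct and follows essentially the same route as the paper: untruncate $\isComputable(f)$ to obtain a machine $m$, identify $\extent(f(n))$ with $\qSig{y:\nat}R_m(n,y)$, and apply Theorem~\ref{thm:kleene-rosolini}. The extra remarks about closure under equivalence and the pointwise quantification over $n$ are routine and consistent with the paper's one-line argument.
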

\begin{proof}
    Since we are trying to prove a proposition, we may assume we have some $m:\RM$
    computing~$f$. We know that the extent of $f(n)$ is equivalent to
    $\qSig{y:\nat}R_m(x,y)$, so by Theorem~\ref{thm:kleene-rosolini}, $f(n)$ is
    Rosolini.
\end{proof}
Taking this one step farther, we have
\begin{theorem}
    Every computable partial function $f:\nat\pto\nat$ is disciplined with respect to
    the Rosolini propositions.
    \[\isComputable(f)\to\isDis(f).\]
\end{theorem}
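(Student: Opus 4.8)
The plan is to reduce the claim to producing, uniformly in $n$, a Rosolini \emph{structure} on the extent of definition of $f(n)$, and then to package these structures into a map $\nat\to\Lift_{\RosStruct}(\nat)$ lying over $f$. First I would observe that $\isDis(f)$ is a proposition, being defined as a truncation, so the hypothesis $\isComputable(f)=\trunc{\CompStruct(f)}$ lets us assume outright a recursive machine $m:\RM$ with $f=\eval(m)$. It then suffices to exhibit an element of $\qPi{n:\nat}\RosStruct(\extent(f(n)))$: given such a $d$, define $g:\nat\to\Lift_{\RosStruct}(\nat)$ by $g(n)\defeq(\extent(f(n)),d(n),\val(f(n)))$, and note that $\tame$ acts componentwise and is the identity on extent and value, so $\tame(g)(n)=(\extent(f(n)),-,\val(f(n)))=f(n)$ for all $n$, whence $\tame(g)=f$ by function extensionality and the characterisation of equality in $\Lift(\nat)$. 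Truncating the pair $(g,-)$ then yields $\isDis_{\RosStruct}(f)$. (This is exactly the forward direction of the factoring lemma following Theorem~\ref{disciplined-maps:compose}, specialised to $\DD=\RosStruct$, which is a structural dominance by Lemma~\ref{lemma:rs-dom}.)

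To build the required function $n\mapsto\RosStruct(\extent(f(n)))$ I would reuse the construction from the proof of Theorem~\ref{thm:kleene-rosolini}. By definition of $\eval$, the extent $\extent(f(n))$ is, propositionally, the type $\qSig{y:\nat}R_m(n,y)$ with $R_m(n,y)=\qExists{k:\nat}\run_k(m,n)=\inr y$, which is the second type treated in that theorem. Its proof produces, explicitly from $m$ and $n$, a sequence $\alpha^{(n)}:\NI$ — the truncation of $k\mapsto h(\run_k(m,n))$ for the evident $h:\nat+\nat\to 2$ — together with an equivalence $\lar{\alpha^{(n)}}\simeq\qSig{y:\nat}R_m(n,y)$. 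Since both sides are propositions, proposition extensionality upgrades this to an equality $\extent(f(n))=\lar{\alpha^{(n)}}$, so $(\alpha^{(n)},-):\RosStruct(\extent(f(n)))$. As this construction is manifestly a function of $n$, it gives the desired element of $\qPi{n:\nat}\RosStruct(\extent(f(n)))$, completing the proof.

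The only step that needs care — the main obstacle, such as it is — is the uniformity: one must check that the Rosolini structure in Theorem~\ref{thm:kleene-rosolini} is genuinely constructed as a function of the second argument, not merely shown to exist for each fixed argument. It is precisely this uniformity, available here because computation hands us an explicit semi-decision sequence $\alpha^{(n)}$, that lets the argument go through without any appeal to countable choice; contrast Theorem~\ref{thm:disc-factoring}, where recovering a disciplined map from the mere pointwise Rosolini-ness of $f$ does require countable choice.
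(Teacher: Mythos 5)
Your proposal is correct and is essentially the paper's own argument: the paper likewise uses the propositionality of the goal to untruncate to a recursive machine $m$, and obtains the Rosolini structure uniformly in $n$ from the sequence that takes value $1$ at the least $k$ with $\run_k(m,n)$ returning a value, with no appeal to countable choice. The only difference is packaging: the paper writes $g:\nat\to\Lift_{\RS}(\nat)$ directly, taking the extent to be $\lar{p(n)}$ and reading the value off the halting configuration, and then checks $\tame(g)=\eval(m)$, whereas you keep $\extent(f(n))$ and $\val(f(n))$ and pass through the factoring lemma together with the construction in Theorem~\ref{thm:kleene-rosolini} --- the underlying construction is the same.
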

\begin{proof}
    By functoriality of truncation, it is enough to show that for any recursive
    machine ${m:\RM}$, there is a function $g:\nat\to\Lift_{\RS}(\nat)$ such
    that $\tame(g)=\eval(m)$. This function $g$ is given by $\run^k(m)$. We have the
    predicate
    \[p(n)\defeq \lambda k. \text{$k$ is least such that }\defined{\run^k(n)}.\]
    Then $p(n)$ defines a Rosolini structure, and if $w:(p(n)(k)=1)$, then $\run^k(n) =
    \inr m_w$ for some $m_w:\nat$, and so we can take $g:\nat\to\Lift_{\RS}(\nat)$ to
    be the function
    \[g(n) \defeq \big(p(n),-,\lambda w.m_w\big).\qedhere\]
\end{proof}
In fact, the functions $g$ and $h$ used in the proof of
Theorem~\ref{thm:kleene-rosolini} are computable, and so we may strengthen the above
results.
\begin{theorem}
    For any machine $m:\RM$ and any $x,y:\nat$, the types $R_m(x,y)$ and
    $\qSig{y:\nat}R_m(x,y)$ are semidecidable propositions.
\end{theorem}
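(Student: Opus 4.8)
The plan is to reuse the construction in the proof of Theorem~\ref{thm:kleene-rosolini}, observing that the witnessing sequences built there in fact have primitive recursive structure, so that the Rosolini structures exhibited are actually semidecision procedures. Since $\isSemiDecidable$ is a proposition and $m:\RM$ is given concretely, we may work with $m$ and its code directly, and it suffices to produce an explicit element of $\SemiDecision(R_m(x,y))$ (respectively $\SemiDecision(\qSig{y:\nat}R_m(x,y))$).

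First I would recall that for the type $R_m(x,y)$ one sets $\alpha_k\defeq g(\run_k(m,x))$, where $g:\nat+\nat\to\bool$ takes value $1$ on $\inr y$ and $0$ elsewhere, and for $\qSig{y:\nat}R_m(x,y)$ one sets $\alpha_k\defeq h(\run_k(m,x))$, where $h$ takes value $1$ on any $\inr k$ and $0$ on any $\inl k$. Using the primitive recursive encoding of $\nat+\nat$ from Section~\ref{section:normal-form}, each of $g$ and $h$ is a case analysis on a primitive recursive predicate, hence has primitive recursive structure. Next, since $m=(i,s)$ is a fixed recursive machine, the function $\lambda (k,x).\run_k(m,x)$ is defined by primitive recursion from $i$ and (the extended) $s$, both of which have primitive recursive structure; alternatively one can invoke the function $\source$ of Section~\ref{section:normal-form}, feeding it a literal code for $m$. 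Either way, $\lambda k.\run_k(m,x)$ has primitive recursive structure, and therefore so does $\alpha=\lambda k.g(\run_k(m,x))$, respectively $\lambda k.h(\run_k(m,x))$; in particular $\alpha$ is computable. I would then note that the last step of the construction in Theorem~\ref{thm:kleene-rosolini}---replacing $\alpha$ by its truncation, the sequence that takes value $1$ only at the first location where $\alpha$ does---also preserves primitive recursive structure, since the truncation map $\cantor\to\NI$ is defined by a case analysis on the bounded predicate $\qPi{k<n}\alpha(k)=0$, which is primitive recursive by bounded search (Section~\ref{section:nat}). Packaging the resulting computable sequence with the equivalences already established in Theorem~\ref{thm:kleene-rosolini}---and, for the second type, with Lemma~\ref{lemma:sigmatrunc-is-exists}, which merely rearranges truncations and does not touch the sequence---yields the desired semidecision procedures, whence both types are semidecidable.

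The main obstacle is purely bookkeeping: confirming that $k\mapsto\run_k(m,x)$ is uniformly primitive recursive in $k$ (with $m$ and $x$ fixed), which is the point where one uses either the direct primitive recursion on $k$ or the lemma on $\source$; one must be slightly careful that $m$ enters as a fixed recursive machine so that its code is a numeral that can be supplied as a constant argument. Given that, the rest is a routine composition of functions already known to have primitive recursive structure, and no choice or extensionality principles beyond those already in force are required.
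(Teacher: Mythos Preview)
Your proposal is correct and follows exactly the approach the paper intends: the paper's proof is just the one-line remark that ``the functions $g$ and $h$ used in the proof of Theorem~\ref{thm:kleene-rosolini} are computable, and so we may strengthen the above results,'' and you have correctly spelled out what this entails---namely that $k\mapsto\run_k(m,x)$ is primitive recursive for fixed $m$, so the witnessing sequences $\alpha$ (and their truncations) are computable, not merely elements of $\NI$. Your attention to the bookkeeping (that $m$ is given explicitly, not under a truncation, and that the truncation step preserves primitive recursive structure) is exactly what the paper's terse remark elides.
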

\begin{corollary}\label{computable-is-disciplined}
    If $f:\nat\pto\nat$ is computable, then $f$ is disciplined with respect to the
    semidecidable propositions.
\end{corollary}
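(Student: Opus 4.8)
The plan is to rerun the proof of the preceding theorem — that every computable partial function is Rosolini-disciplined — observing that the Rosolini structure it produces is in fact a semidecision procedure. First I would reduce to structure: since $\isComputable(f) = \trunc{\CompStruct(f)}$ and $\isDis_{\SemiDecision}(f)$ is a proposition, Lemma~\ref{lemma:trunc-func} lets me assume a recursive machine $m:\RM$ with $\eval(m) = f$, and since $\isDis_{\SemiDecision}(f)$ is itself a truncated existential it then suffices to construct a genuine $g:\nat\to\Lift_{\SemiDecision}(\nat)$ with $\tame(g) = \eval(m)$ and take its image under $|{-}|$.

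For the construction I would reuse verbatim the map built in the Rosolini-disciplinedness proof. There, for each $n:\nat$ one forms the sequence $p(n):\nat\to 2$ which is $1$ exactly at the least $k$ with $\defined{\run_k(m,n)}$; this $p(n)$ has at most one $1$, the type $\lar{p(n)}$ is the extent of $\eval(m)(n)$, and the Rosolini-disciplined witness is $g(n) = (\lar{p(n)},-,\lambda w.m_w)$, where $w:(p(n)(k)=1)$ forces $\run_k(m,n)=\inr m_w$. The single new observation is that $n\mapsto p(n)$ is (uniformly) primitive recursive, so each $p(n)$ has recursive structure; hence $(p(n),-,\isComputable(p(n)),-)$ — where the last component is the identity equivalence $\lar{p(n)}\simeq\lar{p(n)}$ — is an element $s_n:\SemiDecision(\lar{p(n)})$, which is exactly the strengthening of Theorem~\ref{thm:kleene-rosolini} recorded above. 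I would then set $g(n)\defeq(\lar{p(n)},s_n,\lambda w.m_w)$, an element of $\Lift_{\SemiDecision}(\nat)$.

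Finally, $\tame$ is postcomposition with the forgetful map $e_{\SemiDecision}$, which keeps the extent and value of a $\SemiDecision$-partial element and discards only the semidecision procedure; so $\tame(g)$ agrees, pointwise and component-wise, with the $\tame(g)$ from the Rosolini proof, and therefore $\tame(g) = \eval(m) = f$ already follows from that proof (via function extensionality and the characterisation of equality in $\Sigma$-types, Theorem~\ref{thm:patheq}). Hence $(g,\refl)$ inhabits $\qSig{g:\nat\to\Lift_{\SemiDecision}(\nat)}\tame(g) = f$, and truncating yields $\isDis_{\SemiDecision}(f)$. There is no real obstacle here, only bookkeeping: one must check that $n\mapsto s_n$ is well defined — which amounts to the computability of $n\mapsto p(n)$ being available uniformly in $n$, as in the strengthened Theorem~\ref{thm:kleene-rosolini} — and observe that no further coherence is needed because the $\SemiDecision$-component of $g(n)$ lies in a proposition, so that the correctness proof of the earlier $g$ transfers unchanged.
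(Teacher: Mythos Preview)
Your approach is essentially the paper's: the corollary is stated there without proof, as it follows by rerunning the Rosolini-disciplinedness argument together with the observation (recorded in the immediately preceding theorem) that the witnessing sequences are computable. One small correction: $\SemiDecision(P)$ is \emph{not} a proposition in general---the paper notes explicitly that the semidecision procedures fail even to form a structural dominance---so your closing remark that ``the $\SemiDecision$-component of $g(n)$ lies in a proposition'' is wrong. This does not damage the argument, though, because $\tame$ discards that component outright; the equality $\tame(g)=\eval(m)$ holds because the extent and value components match, not because of any propositional nature of the forgotten structure.
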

We examine the converse of this last result later in the chapter.
\index{proposition!semidecidable|)}
\index{Rosolini proposition|)}
\index{disciplined map|)}

\section{Which total functions can be computable?}\label{computability:church}
\index{Church's Thesis|(}
Consider the following two versions of ``Church's Thesis'':
\[\qPi{f:\nat\to\nat}\isComputable(\eta\comp f),\]
and
\[\qPi{f:\nat\to\nat}\CompStruct(\eta\comp f).\]
The first is consistent, with the effective topos as a model. 
The second, however, is false~\cite{troelstra:1977,Beeson1980}. Let us call the first
\emph{Church's thesis} and
the second \emph{strong Church's thesis}. We will follow the structure of Troelstra's
proof to show that Strong Church's thesis is false. 
    The proof proceeds as follows:
   \begin{itemize}
       \item If strong Church's thesis holds,
        then $\nat\to\nat$ embeds into $\nat$, and so has decidable
           equality.
       \item if $\nat\to\nat$ has decidable equality, then there is a function
           $H:\nat\to\bool$ deciding the complement of the halting problem.
       \item if strong Church's thesis holds, then $H$ is computable. Hence, we can recursively
           decide the complement of the halting problem.
       \item the complement of the halting problem is not recursively decidable.
   \end{itemize}

    We break the proof into lemmas below.
   \begin{lemma}
       Assuming strong Church's Thesis, there is an embedding
       $\epsilon:(\nat\to\nat)\to\nat$ such that for any $f$,
       \[\{\epsilon(f)\} = \eta\comp f.\]
   \end{lemma}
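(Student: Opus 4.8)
The plan is to read $\epsilon$ straight off strong Church's thesis. Unfolding definitions, strong Church's thesis is the type $\qPi{f:\nat\to\nat}\CompStruct(\eta\comp f)$, that is, $\qPi{f:\nat\to\nat}\qSig{m:\RM}(\eta\comp f=\eval(m))$. Applying the forward direction of the equivalence in Theorem~\ref{thm:ttchoice} (with $A\defeq\nat\to\nat$, $B\defeq\RM$, and $R(f,m)\defeq(\eta\comp f=\eval(m))$) produces a genuine function $M:(\nat\to\nat)\to\RM$ together with a family $c:\qPi{f:\nat\to\nat}(\eta\comp f=\eval(M(f)))$. I then set $\epsilon\defeq\encode\comp M$, where $\encode:\RM\to\nat$ is the encoding of recursive machines supplied by the normal form theorem (Theorem~\ref{knf}).

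Next I would check the computation rule $\{\epsilon(f)\}=\eta\comp f$. The one sublemma doing any work here is that $\{\encode(m)\}=\eval(m)$ for every recursive machine $m$. By Theorem~\ref{knf}, $\eval(m,x)$ is the partial element with extent $\qSig{y:\nat}\KleeneT(\encode(m),x,y)$ (a proposition) and value $\KleeneU\comp\pr_1$, whereas by definition $\{\encode(m)\}(x)=\Lift\KleeneU\big(\mu y.\KleeneT(\encode(m),x,y)\big)$, whose extent is the minimised proposition $\qSig{y:\nat}\KleeneT(\encode(m),x,y)\times\qPi{x'<y}\neg\KleeneT(\encode(m),x',y)$ and whose value is $\KleeneU\comp\pr_1$. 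The two extents are logically equivalent propositions — the second projects onto the first, and conversely uniqueness of $y$ makes minimality automatic — hence equal by proposition extensionality, and the values then agree; promoting this to an equality of functions on $\nat$ by function extensionality gives the sublemma. With it in hand, $\{\epsilon(f)\}=\{\encode(M(f))\}=\eval(M(f))=\eta\comp f$, the last step by $c_f^{-1}$.

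It remains to prove $\epsilon$ is an embedding. Since $\nat$ is a set (Theorem~\ref{thm:nat-is-set}), a map into $\nat$ is an embedding as soon as it is injective, i.e.\ admits a retraction on each path type (a consequence of Lemma~\ref{lemma:identity-section}); so it suffices to produce, for all $f,g:\nat\to\nat$, a map $(\epsilon(f)=\epsilon(g))\to(f=g)$. Given $p:\epsilon(f)=\epsilon(g)$, applying $\ap$ to the map $e\mapsto\{e\}$ and then the computation rule above yields $\eta\comp f=\eta\comp g$, and $\happly$ gives paths $e_n:\eta(f(n))=\eta(g(n))$ for each $n:\nat$. Because $\eta:\nat\to\Lift\nat$ is an embedding (proved earlier in this chapter), the fibre $\fib_\eta(\eta(f(n)))$ is a proposition containing $(f(n),\refl)$ and $(g(n),e_n^{-1})$, so $f(n)=g(n)$; function extensionality then yields $f=g$, completing the proof.

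The argument is almost entirely bookkeeping, and the only step needing a little care is the identification $\{\encode(m)\}=\eval(m)$, where one must match the minimisation built into the Kleene bracket against the bare existential quantifier produced by the normal form theorem — precisely the place where proposition extensionality enters. It is also worth flagging that the hypothesis "$\eta\comp f$ has \emph{recursive structure}", as opposed to merely "$\eta\comp f$ is computable", is exactly what makes this work: it lets the $\Sigma$ be untruncated via Theorem~\ref{thm:ttchoice} to extract the actual function $M$, hence $\epsilon$, from which the subsequent steps of Troelstra's argument will manufacture a decision procedure for the complement of the halting set.
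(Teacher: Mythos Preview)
Your proof is correct and follows essentially the same route as the paper: project a machine $M(f)$ out of the witness of strong Church's thesis, set $\epsilon\defeq\encode\comp M$, and deduce injectivity from the fact that $\eta$ is an embedding. You are in fact more careful than the paper in justifying the sublemma $\{\encode(m)\}=\eval(m)$, which the paper treats as immediate from the definitions; your use of proposition extensionality to match the minimised extent against the bare existential is exactly the right observation (modulo a harmless typo in the bound variable of the minimality clause).
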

  \begin{proof}
      Given a witness $c$ of Church's thesis we get a $c':(\nat\to\nat)\to\RM$ as
      $c'\defeq\pi_0\comp c$. By definition, we have
      \[\eval(c'(f))(n) = (\eta\comp f)(n),\]
      so that $\{\encode(c'(f))\} = \eta\comp f$. So we may take
      \[\epsilon(f) \defeq \encode(c'(f)).\]
      Now if $\epsilon(f) = \epsilon(g)$, then
      \[\eta\comp f = \{\epsilon(f)\} = \{\epsilon(g)\} = \eta\comp g.\]
      Finally, $\eta$ is an embedding, so $f=g$.
  \end{proof}
  \index{Limited Principle of Omniscience!Weak}
 \begin{corollary}\label{ct-to-dec-eq}
     If strong Church's thesis holds, then $\nat\to\nat$ has decidable equality.
     That is, strong Church's thesis implies WLPO.
 \end{corollary}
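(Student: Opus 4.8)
The plan is to feed the embedding just constructed into the fact that the domain of an embedding into a discrete type is itself discrete. First I would apply the preceding lemma: from a witness of strong Church's thesis we obtain an embedding $\epsilon:(\nat\to\nat)\to\nat$. Since $\nat$ has decidable equality (by the argument of Theorem~\ref{thm:nat-is-set}), the theorem that if $e:A\to B$ is an embedding and $B$ has decidable equality then so does $A$ applies to $\epsilon$, giving that $\nat\to\nat$ is discrete; explicitly, $(f=g)+\neg(f=g)$ is inhabited for all $f,g:\nat\to\nat$.

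To extract $\WLPO$ I would specialise this to binary sequences. Fix $\alpha:\cantor$ and let $i:\bool\to\nat$ be the standard inclusion, so that $i\comp\alpha:\nat\to\nat$. Discreteness of $\nat\to\nat$ gives $(i\comp\alpha=\lambda n.0)+\neg(i\comp\alpha=\lambda n.0)$. By function extensionality the equality $i\comp\alpha=\lambda n.0$ is logically equivalent to $\qPi{n:\nat}i(\alpha(n))=0$, and since $i(0)=0$ while $i$ is injective we have $i(\alpha(n))=0\lequiv\alpha(n)=0$ for each $n$; hence $(\qPi{n:\nat}\alpha(n)=0)+\neg(\qPi{n:\nat}\alpha(n)=0)$, which is $\WLPO$ in the Curry--Howard form. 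Truncating then yields the propositional form $(\qAll{n:\nat}\alpha(n)=0)\vee\neg(\qAll{n:\nat}\alpha(n)=0)$ as well.

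There is no genuinely difficult step here; the real content is in the preceding lemma. The only points that need a line of justification are the appeal to function extensionality to turn $i\comp\alpha=\lambda n.0$ into a pointwise statement, and the trivial remark that the inclusion $\bool\to\nat$ detects $0$, so the pointwise statement about $i\comp\alpha$ transfers to $\alpha$.
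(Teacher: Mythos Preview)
Your proof is correct and follows the same line as the paper: invoke the preceding lemma to get an embedding $(\nat\to\nat)\to\nat$, then use the earlier result that embeddings into discrete types have discrete domain. The paper's proof is a single sentence to this effect; your version is simply more explicit, in particular spelling out the passage from decidable equality of $\nat\to\nat$ to $\WLPO$ via the inclusion $\bool\to\nat$, which the paper leaves as an evident reformulation.
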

\begin{proof}
    Assuming strong Church's thesis, $(\nat\to\nat)$ embeds into a type with decidable equality.
\end{proof}
\begin{lemma}\label{dec-eq-to-halting}
    If $\nat\to\nat$ has decidable equality, there is a function $H:\nat\to\nat$
    such that $H(x) = 0$ iff $\{x\}(x)$ is
    undefined.
\end{lemma}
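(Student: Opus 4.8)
The plan is to exhibit $H$ as a (weak, non-effective) decision procedure for the predicate ``$\{e\}(e)$ converges'', obtained by using the decidable equality on $\nat\to\nat$ to test a single, explicitly defined binary sequence for being constantly zero.

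Recall that $\{e\}(x) \defeq \Lift\KleeneU(\mu y.\KleeneT(e,x,y))$, so by a routine computation with the lifting monad
\[
\extent(\{e\}(x)) \;\simeq\; \extent\big(\mu y.\KleeneT(e,x,y)\big),
\]
and the right-hand side is inhabited precisely when $\qExists{y:\nat}\KleeneT(e,x,y)$. By Theorem~\ref{knf}, $\KleeneT$ has a primitive recursive characteristic function $\chi_{\KleeneT}:\nat^3\to\nat$, so I would set $\beta_e \defeq \lambda y.\chi_{\KleeneT}(e,e,y):\nat\to\nat$; this is valued in $\{0,1\}$ and is uniform (indeed primitive recursive) in $e$, so $e\mapsto\beta_e$ is a genuine map. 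The point of $\beta_e$ is that $\beta_e(y)=1$ exactly when $\KleeneT(e,e,y)$, hence $\big(\qSig{y:\nat}\beta_e(y)=1\big)\lequiv\big(\qExists{y:\nat}\KleeneT(e,e,y)\big)\lequiv\isDefined(\{e\}(e))$.

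Next I would record the key equivalence $\big(\beta_e=\lambda n.0\big)\lequiv\neg\isDefined(\{e\}(e))$. One direction is immediate; for the other, if $\neg\isDefined(\{e\}(e))$ then no $y$ satisfies $\beta_e(y)=1$, so $\beta_e(n)=0$ for every $n$ (its value lies in $\{0,1\}$ and is not $1$), whence $\beta_e=\lambda n.0$ by function extensionality. Now use the hypothesis: decidable equality on $\nat\to\nat$ supplies, for each $e$, a term of $(\beta_e=\lambda n.0)+\neg(\beta_e=\lambda n.0)$; define $H:\nat\to\nat$ by the recursor for $+$ applied to that term, with value $0$ in the left case and $1$ in the right case. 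This is an honest function of $e$ (no choice principle is involved), and by the previous observation $H(e)=0$ holds exactly when $\beta_e=\lambda n.0$, i.e.\ exactly when $\{e\}(e)$ is undefined, as required.

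The argument is essentially routine bookkeeping; the only points that need a little attention are (a) identifying $\extent(\{e\}(e))$ with the decidable, primitive-recursive predicate $\lambda y.\KleeneT(e,e,y)$, so that $\beta_e$ exists and depends uniformly on $e$, and (b) the appeal to function extensionality that promotes ``$\beta_e$ is pointwise zero'' to the equality $\beta_e=\lambda n.0$ that the decidable-equality hypothesis actually speaks about. There is no real obstacle here: in particular we assert nothing about $H$ being computable — that stronger statement is precisely what the next lemma extracts from strong Church's thesis — only that such an $H$ exists with the stated extensional behaviour.
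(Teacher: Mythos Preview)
Your proof is correct and follows essentially the same approach as the paper: define, uniformly in $e$, a binary sequence that is constantly zero iff $\{e\}(e)$ diverges, then use decidable equality on $\nat\to\nat$ to do a case split against the zero function. The only cosmetic difference is that the paper uses $r_x(n)\defeq 1$ iff $\run_n(\{x\},x)=\inr(y)$ for some $y$ (a step-count formulation) where you use $\beta_e(y)\defeq\chi_{\KleeneT}(e,e,y)$ (a computation-trace formulation); both are primitive recursive and serve the identical purpose.
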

\begin{proof}
    Fix $x:\nat$. Define $r_x:\nat\to\nat$ as
    \[r_x(n) = \left\{
        \begin{array}{ll}
            1 & \text{if $\run_n(\{x\},x) = \inr(y)$ for some $y$}\\
            0 & \text{if $\run_n(\{x\},x) = \inl(k)$ for some $k$}.
        \end{array}\right.\]
    That is, $r_x(n)=1$ iff $\{x\}(x)$ returns in under $n$ steps.

    By our assumption that $\nat\to\nat$ has decidable equality, we have 
    \[(r_x = \lambda k.0) + \neg(r_x = \lambda k.0)\]
    Note that $r_x = \lambda k.0 \Leftrightarrow \{x\}(x) = \bot$.

    Define $H(x):\nat\to\nat$ by
    \[H(x) = \left\{
        \begin{array}{ll}
            1 & \text{if $r_x \neq \lambda n.0$}\\
            0 & \text{if $r_x = \lambda n.0$}.
        \end{array}\right.\]
    We have 
    \[H(x) = 0\Leftrightarrow r_x = \lambda k.0 \Leftrightarrow \{x\}(x) =
    \bot.\qedhere\]
\end{proof}
\begin{theorem}[Troelstra~\cite{troelstra:1977}]\label{not-strong-ct}
    Strong Church's thesis is false.
\end{theorem}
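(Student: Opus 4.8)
The plan is to assemble the three lemmas just proved into a short contradiction argument. The claim is that strong Church's thesis implies we can recursively decide the complement of the halting problem, contradicting the undecidability of the halting problem (Section~\ref{section:halting-problem}). So the proof will be: assume strong Church's thesis; apply Corollary~\ref{ct-to-dec-eq} to get that $\nat\to\nat$ has decidable equality; feed this into Lemma~\ref{dec-eq-to-halting} to produce a function $H:\nat\to\nat$ with $H(x)=0$ iff $\{x\}(x)=\bot$; observe that strong Church's thesis applies to $H$ itself, giving $\CompStruct(\eta\comp H)$, i.e.\ $H$ has recursive structure; conclude that $H$ is a recursive characteristic function for the predicate $x\mapsto \neg\defined{\{x\}(x)}$, and hence (since $\emptyset$ is a proposition, we may untruncate freely) that the complement of the self-halting predicate is recursive.

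First I would note that by the undecidability of the halting problem (the predicate $H(e,x)=\defined{\{e\}(x)}$ is not recursive), the diagonal predicate $x\mapsto\defined{\{x\}(x)}$ is also not recursive: if it had a computable characteristic function $\chi$, then $\lambda n.\chi(\pr_0 n, \pr_1 n)$ would decide $H$; alternatively, this is essentially Lemma~\ref{lemma:diagonal}, which says the diagonal function $d$ is not computable, and a recursive decision procedure for $x\mapsto\defined{\{x\}(x)}$ would make $d$ a computable case analysis exactly as in the proof of undecidability of the halting problem. Then I would observe that the function $H$ from Lemma~\ref{dec-eq-to-halting}, being a recursive characteristic function for the \emph{complement} of this diagonal predicate, would make that complement recursive; but a subset of $\nat$ is recursive iff its complement is (one negates the output of the characteristic function, which preserves recursive structure by Theorem~\ref{thm:rec-cases} or directly), so the diagonal predicate itself would be recursive, a contradiction.

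The only genuinely new content in this final step is the observation that strong Church's thesis applies to the function $H$ we constructed: since $H:\nat\to\nat$ is a total function, strong Church's thesis gives $\CompStruct(\eta\comp H)$ directly, so $H$ has recursive structure with no further argument needed. I expect this to be the part requiring the most care only in the bookkeeping sense---making sure we are allowed to untruncate, which is fine because we are proving $\emptyset$ (a proposition), so all the existential hypotheses obtained along the way (the witness of Church's thesis, the recursive structure of $H$) can be assumed as explicit data. There is no deep obstacle here; the work was done in the preceding lemmas, and the final theorem is just their composition. The proof sketch given in the excerpt already lays out exactly the four bullet points, so the write-up will essentially formalize that chain, citing Corollary~\ref{ct-to-dec-eq}, Lemma~\ref{dec-eq-to-halting}, and the undecidability of the halting problem, and noting the closure of recursive sets under complement.

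\begin{proof}
    As we are proving a proposition, we may assume a witness of strong Church's thesis;
    in particular, by Corollary~\ref{ct-to-dec-eq}, the type $\nat\to\nat$ has
    decidable equality, so by Lemma~\ref{dec-eq-to-halting} there is a function
    $H:\nat\to\nat$ with $H(x)=0$ if and only if $\{x\}(x)=\bot$. Since $H$ is a total
    function, strong Church's thesis gives $\CompStruct(\eta\comp H)$, so $H$ has
    recursive structure. Moreover, by construction $H$ is valued in $\{0,1\}$, so $H$
    is a recursive characteristic function for the predicate
    $A(x)\defeq \neg\defined{\{x\}(x)}$; that is, $A$ is recursive. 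As recursive
    subsets of $\nat$ are closed under complement (negate the characteristic function,
    using Theorem~\ref{thm:rec-cases}), the predicate $x\mapsto \defined{\{x\}(x)}$ is
    recursive as well. But a recursive characteristic function for this predicate would,
    exactly as in the proof of the undecidability of the halting problem, make the
    diagonal function $d$ of Section~\ref{section:halting-problem} a recursive case
    analysis, contradicting Lemma~\ref{lemma:diagonal}.
\end{proof}
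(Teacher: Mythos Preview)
Your proof is correct and follows essentially the same approach as the paper: both chain Corollary~\ref{ct-to-dec-eq} and Lemma~\ref{dec-eq-to-halting} together, then apply strong Church's thesis to $H$ itself to obtain a contradiction with the undecidability results of Section~\ref{section:halting-problem}. The paper's version is terser---it simply says the resulting $H$ ``cannot be computable'' by reference to that section---whereas you spell out the final contradiction explicitly via closure under complement and Lemma~\ref{lemma:diagonal}; this extra detour through the complement is unnecessary (the proof of the undecidability of the halting problem already derives a contradiction directly from any recursive decision of $\{x\}(x)=\bot$), but it is not wrong.
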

\begin{proof}
    If strong Church's Thesis holds, by Lemma~\ref{ct-to-dec-eq} we have that
    $\nat\to\nat$ has decidable equality, and so by Lemma~\ref{dec-eq-to-halting} we
    have a computable  $H:\nat\to\nat$ such that $H(x)=0$ iff $\{x\}(x)$ is
    undefined. But we saw in Section~\ref{section:halting-problem} that this
    function cannot be computable.
\end{proof}
As any reasonable notion of partial function includes all total functions, we cannot
have a version of strong Church's thesis for partial functions. Then in particular,
if $\dd$ is any set of propositions containing $\unittype$, we have
    \[\neg\qPi{f:\nat\to\Lift_{\dd}(\nat)}\CompStruct(f).\]
However, the weaker form $\qPi{f:\nat\to\Lift_{\dd}(\nat)}\isComputable(f)$
is consistent for $\dd=\isContr$. The above proof hinges on the fact that there is an
embedding from the type of functions with recursive structure to $\nat$. In the
above, we extend this to get an embedding from $(\nat\to\nat)$ to $\nat$. Recall that
in Section~\ref{section:comp-functions} we argued that we cannot expect an embedding
from the type of all computable functions to $\nat$. Indeed, while this is true
classically, it is contradicted by Church's thesis.
\begin{theorem}\label{weak-ct-embedding}
    If Church's thesis holds, then there is no embedding
    \[\big(\qSig{f:\nat\pto\nat}\isComputable(f) \big)\to\nat.\]
\end{theorem}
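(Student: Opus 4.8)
The plan is to run an argument parallel to Troelstra's proof that strong Church's thesis fails (Theorem~\ref{not-strong-ct}), but with the two appeals to Church's thesis pulled apart: there, \emph{strong} Church's thesis simultaneously supplies an embedding $(\nat\to\nat)\to\nat$ and makes the relevant decision function computable, whereas here the embedding is precisely the hypothetical map $F\colon\big(\qSig{f:\nat\pto\nat}\isComputable(f)\big)\to\nat$ we wish to exclude, and ordinary Church's thesis only has to handle the second task. So suppose, for a contradiction, that such an embedding $F$ exists. By Theorem~\ref{embed-to-wlpo}, equality between computable partial functions is decidable; in particular, for any \emph{recursive} total function $g\colon\nat\to\nat$ it is decidable whether $\eta\comp g=\eta\comp(\lambda k.0)$, hence --- since $\eta$ is an embedding --- whether $g=\lambda k.0$.

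Next I would rebuild the diagonal-halting decider exactly as in Lemma~\ref{dec-eq-to-halting}. For each $x:\nat$ the step-counting function $r_x\colon\nat\to\nat$ from the proof of that lemma --- with $r_x(n)=1$ iff $\{x\}(x)$ has returned within $n$ steps, so that $(r_x=\lambda k.0)\lequiv(\{x\}(x)=\bot)$ --- is primitive recursive, hence computable. Therefore the previous paragraph supplies, uniformly in $x$, a decision of $(r_x=\lambda k.0)+\neg(r_x=\lambda k.0)$, and inspecting the proof of Lemma~\ref{dec-eq-to-halting} shows this is the \emph{only} place decidable equality is used there. Packaging these decisions as in the lemma yields a total function $H\colon\nat\to\nat$, valued in $\{0,1\}$, with $H(x)=0\lequiv(\{x\}(x)=\bot)$.

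Church's thesis now enters exactly once: $H$ is a total function $\nat\to\nat$, so $\isComputable(\eta\comp H)$. With $H$ in hand as a \emph{computable} function, the diagonal function $d$ of Lemma~\ref{lemma:diagonal} --- where $\extent(d(x))\defeq(\{x\}(x)=\bot)$ and $\val(d(x))(p)\defeq 0$ --- becomes a computable case analysis on the recursive predicate $H(x)=0$, with branches the constant-$0$ machine and the everywhere-undefined machine $\lambda x.\mu y.\ff$; this is literally the construction carried out in the proof of the undecidability of the halting problem (Section~\ref{section:halting-problem}). Hence $d$ is computable, contradicting Lemma~\ref{lemma:diagonal}. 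Since the only assumption made was the existence of an embedding $F$, we conclude there is none.

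I expect no single hard computation; the content is conceptual bookkeeping, and the step I would be most careful about is observing that the conclusion of Theorem~\ref{embed-to-wlpo} --- decidable equality \emph{restricted to computable functions} --- is already strong enough to drive Lemma~\ref{dec-eq-to-halting}, since that lemma only ever tests the recursive functions $r_x$ against $\lambda k.0$; and, symmetrically, that it is Church's thesis, not the hypothetical embedding, that is responsible for $H$ being computable, which is exactly what produces the clash with the undecidability of the halting problem. A secondary point worth a sentence is that one does not need $H$ to be a characteristic function of the diagonal halting \emph{set} in any Markov-style sense: the equivalence $H(x)=0\lequiv(\{x\}(x)=\bot)$ is all that is used to reconstruct $d$, just as in Section~\ref{section:halting-problem}.
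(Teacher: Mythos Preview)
Your argument is correct, and follows the same Troelstra-style diagonalization as the paper. The difference is in how you obtain decidable equality for the functions $r_x$. The paper first uses Church's thesis to produce an embedding $(\nat\to\nat)\hookrightarrow\Comp$ (sending $f$ to $(\eta\comp f,-)$), composes with the hypothetical embedding to get $(\nat\to\nat)\hookrightarrow\nat$, and thereby obtains decidable equality on \emph{all} of $\nat\to\nat$; it can then cite Lemma~\ref{dec-eq-to-halting} as a black box. You instead appeal only to Theorem~\ref{embed-to-wlpo} to get decidable equality on $\Comp$, and then open up Lemma~\ref{dec-eq-to-halting} to observe that the only equalities it tests are between the (already computable) functions $r_x$ and $\lambda k.0$. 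Your route is more parsimonious in its bookkeeping of where Church's thesis enters---it really is used only once, for $H$---while the paper's route is shorter and avoids re-inspecting the lemma. Both are sound; the paper's version is the cleaner write-up, yours makes the role of CT more transparent.
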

\begin{proof}
    We essentially follow the argument for Theorem~\ref{not-strong-ct}.
    Church's thesis gives us an embedding $(\nat\to\nat)\to
    \qSig{f:\nat\pto\nat}\isComputable(f)$. Extending this along the hypothetical embedding $e$ gives
    an embedding $(\nat\to\nat)\to\nat$. As a subtype of a type with decidable
    equality, $\nat\to\nat$ has decidable equality. Then, by
    Lemma~\ref{dec-eq-to-halting}, we have a function $H:\nat\to\nat$ deciding the halting
    problem. By Church's thesis, $H$ is computable, which is impossible.
\end{proof}
In the introduction, we mentioned four facts about computability theory arising from
topos theoretic and type theoretic intuition.
\begin{enumerate}[label=C\arabic*]
    \item It is \emph{consistent} that all total functions $\nat\to \nat$ are computable.\label{c:wct}
    \item It is \emph{false} that all total functions $\nat\to\nat$ are computable.\label{c:sct}
    \item There is an embedding from the class of computable functions to the
        natural numbers.\label{c:we}
    \item It cannot be proved that there is an embedding from the class of computable
        functions to the natural numbers.\label{c:se}
\end{enumerate}
\ref{c:sct} is Theorem~\ref{not-strong-ct}, while \ref{c:we} is an easy consequence
of Theorem~\ref{knf}. \ref{c:wct} and \ref{c:se} correspond to the incompatible principles in
Theorem~\ref{weak-ct-embedding}. We know that Church's thesis holds in the
effective topos, and so is consistent with pure MLTT (but see the discussion below
about univalent mathematics), and so we have strong reason to expect \ref{c:wct} and
\ref{c:se} to hold. The conflict between the topos-theoretic and type-theoretic facts
has evaporated: \ref{c:sct} and \ref{c:we}
refer to computability as structure, while \ref{c:wct} and \ref{c:se} refer to
computability as property.

This is not an artefact of univalent mathematics: Indeed, Troelstra's argument was
originally used to show that function extensionality, Church's thesis and the axiom
of choice are mutually incompatible over $\mathrm{HA}^\omega$. Moreover, in the
effective topos, with the Mitchell-Benabou language, all total functions are
computable, and there is no embedding from the class of computable functions to
$\nat$, but in the effective topos as a model of MLTT, it is false that all total
functions are computable, and there is an embedding from the class of computable
functions to $\nat$.

\section{Which partial functions can be computable}\label{computability:which}
\index{Rosolini proposition|(}
\index{proposition!semidecidable|(}
\index{choice!countable|(}
We have spent several chapters now circling around claims about which partial
functions can be computable. It is time to approach the question directly. We have so
far presented two sets of propositions: the Rosolini propositions, which for this
section we call $\RR$, and the semidecidable propositions which for this section we
call $\SD$. For each of these, two notions of partial function:
the $\RR$- and $\SD$-partial functions, and the $\RR$- and $\SD$-disciplined maps.
One may expect that the $\SD$-disciplined maps are exactly the computable
functions---that the converse of Corollary~\ref{computable-is-disciplined} holds, but
it seems that this result needs both countable choice and Church's thesis to prove.
Before presenting the result explicitly (Theorem~\ref{thm:punchline}), let's
examine the effect of Church's thesis on Rosolini structures.

\begin{theorem}\label{ros-is-sd}
    Church's thesis implies that semidecision functions and Rosolini structures
    coincide. That is, assuming Church's thesis we have
    \[\qPi{P:\univ}\SemiDecision(P)\simeq \RosStruct(P)\]
\end{theorem}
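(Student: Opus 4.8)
The plan is to prove the two types are logically equivalent, and then observe that both are propositions under Church's thesis, so logical equivalence upgrades to an equivalence via Lemma~\ref{lemma:lequiv-equiv}. The direction $\SemiDecision(P)\to\RosStruct(P)$ is immediate and needs no hypotheses: given $(f,-,-,e):\SemiDecision(P)$ we have $f:\cantor$ with $\isProp(\lar{f})$, so $f$ already is an element of $\NI$, and $e:P\simeq\lar{f}$ gives, via proposition extensionality (both $P$ and $\lar{f}$ are propositions), an equality $P=\lar{f}$; hence $(f,\ldots):\RosStruct(P)$. So the content is entirely in the reverse direction, and in checking propositionhood.

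For the reverse direction, suppose $(\alpha,p):\RosStruct(P)$, so $\alpha:\NI$ and $p:P=\lar{\alpha}$. The underlying binary sequence $\alpha:\cantor$ need not be computable, but Church's thesis applies to it: viewing $\alpha$ as a total function $\nat\to\nat$ (valued in $\{0,1\}$, via the standard inclusion $\bool\to\nat$), Church's thesis $\qPi{f:\nat\to\nat}\isComputable(\eta\comp f)$ gives $\isComputable(\eta\comp\alpha)$. Since being a semidecision function is a proposition (its components are: $\isProp\lar{f}$, $\isComputable(f)$, and $P\simeq\lar{f}$ with all three propositions), we may untruncate and get a recursive machine computing $\eta\comp\alpha$; from this and the normal form machinery of Chapter~\ref{chapter:comp-as-struct} we extract a characteristic function showing $\alpha$ itself (as a member of $\cantor$, i.e. a decidable predicate on $\nat$) has recursive structure, hence $\isComputable(\alpha)$. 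Combining: $\alpha:\cantor$ with $\isProp\lar{\alpha}$ (from $\alpha:\NI$), $\isComputable(\alpha)$ (just obtained), and $P\simeq\lar{\alpha}$ (from $p$ and $\idtoequiv$); this is exactly an element of $\SemiDecision(P)$. Note this argument only uses that $\isComputable$ of a total $\{0,1\}$-valued function is a proposition, so there is no issue about untruncating: we are producing a proof of the \emph{proposition} $\isSemiDecidable(P)$ from $\RosStruct(P)$, but in fact $\SemiDecision(P)$ is itself a proposition so we get the structure directly.

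To finish, I need that both $\SemiDecision(P)$ and $\RosStruct(P)$ are propositions. For $\SemiDecision(P)=\qSig{f:\cantor}\isProp\lar{f}\times\isComputable(f)\times(P\simeq\lar{f})$: if $(f,\ldots)$ and $(g,\ldots)$ are two elements, then $\lar{f}\simeq P\simeq\lar{g}$, and since $\NI$ is equivalent to the increasing binary sequences and a sequence with at most one $1$ is determined by the proposition it represents, $f=g$; the remaining components are propositions over that. More carefully, the map $\NI\to\Prop$ given by $\alpha\mapsto\lar{\alpha}$ is an embedding (this follows from $\NI$ being a retract of $\cantor$ and the characterization of $\NI$ as subsingleton subsets; alternatively one shows $\lar{\alpha}=\lar{\beta}$ implies $\alpha=\beta$ by extensionality, switching bits), so $\RosStruct(P)=\fib_{\lar{-}}(P)$ is a proposition. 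Similarly $\SemiDecision(P)$ sits inside $\RosStruct(P)$ as a subtype cut out by the proposition $\isComputable(f)$, hence is also a proposition. The main obstacle I anticipate is the clean extraction in the reverse direction: turning $\isComputable(\eta\comp\alpha)$ (computability of $\alpha$ as a function into $\Lift(\nat)$ that happens to be total) into $\isComputable(\alpha)$ as a decidable predicate / element of $\cantor$ with recursive structure. This should follow from Lemma~\ref{lemma:total-is-defined} (a total partial function is an honest function) together with functoriality of truncation applied to the obvious construction on recursive machines, but it is the step that requires the most care to state precisely.
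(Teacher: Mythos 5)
There is a genuine gap, and it is fatal to the route you chose. Your plan is to establish a logical equivalence and then upgrade it with Lemma~\ref{lemma:lequiv-equiv}, which requires both $\SemiDecision(P)$ and $\RosStruct(P)$ to be propositions. Neither is: take $P=\unittype$ (or any inhabited proposition) and consider the sequence with a single $1$ at position $0$ and the sequence with a single $1$ at position $1$. Both are computable, both have $\lar{-}$ equal to $P$ by proposition extensionality, and they give distinct elements of $\RosStruct(P)$ and of $\SemiDecision(P)$. Your supporting claim that $\alpha\mapsto\lar{\alpha}$ is an embedding (so that $\RosStruct(P)=\fib_{\lar{-}}(P)$ is a proposition) is exactly where this breaks: whenever $\alpha$ and $\beta$ each contain a $1$, the types $\lar{\alpha}$ and $\lar{\beta}$ are both contractible, hence equal by proposition extensionality, while $\alpha\neq\beta$; ``switching bits'' does not rescue this. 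Indeed the non-propositionality of $\RosStruct$ is the whole reason the paper distinguishes Rosolini structures from Rosolini propositions (the lifting $\Lift_{\RS}$ is the delay monad, with many representatives per partial element), and if $\SemiDecision(P)$ were a proposition the truncation in the definition of $\isSemiDecidable$ would be redundant. So your argument only yields a logical equivalence, which is strictly weaker than the stated type equivalence.

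The paper's proof avoids all of this and is essentially a one-liner: $\isComputable(f)$ is by definition a proposition, and Church's thesis inhabits it for every $f:\nat\to 2$ (computability of $f:\cantor$ is read via the inclusion $2\to\nat$, which also dissolves the extraction step you flagged as delicate), so $\isComputable(f)$ is contractible and the corresponding factor can be dropped fiberwise over $f:\cantor$ from the $\Sigma$-type. This gives $\SemiDecision(P)\simeq\qSig{f:\cantor}\isProp\lar{f}\times(P\simeq\lar{f})$, which is $\RosStruct(P)$ after using proposition extensionality to pass between $P\simeq\lar{f}$ and $P=\lar{f}$ (your forward-direction observation, which is fine). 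No propositionhood of the total types, and no untruncation of $\isComputable$, is needed; to repair your write-up you should replace the ``upgrade via Lemma~\ref{lemma:lequiv-equiv}'' step by this fiberwise removal of a contractible factor.
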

\begin{proof}
    We have that
    \[\SemiDecision(P)\defeq\qSig{f:\cantor}\isProp\lar{f}\times
    \isComputable(f)\times (P\simeq\lar{f}).\]
    By Church's thesis, we have that $\isComputable(f)$ is true for all
    $f:\nat\to\bool$, and so we have
    \[\SemiDecision(P)\simeq \qSig{f:\cantor}\isProp\lar{f}\times (P\simeq\lar{f}),\]
    and the latter is the definition of Rosolini structures on $P$.
\end{proof}
That is, Church's thesis collapses $\RR$ and $\SD$. So we have
\begin{corollary}\label{cor:semidec-char}
    Assuming Church's thesis,
    \begin{enumerate}[label=(\alph*)]
        \item Rosolini propositions and semidecidable propositions coincide;
        \item The Rosolini structure lifting and the semidecision function lifting
            coincide;
        \item Rosolini partial functions and semidecidable partial functions coincide;
        \item Rosolini-disciplined maps and semidecidable-disciplined maps coincide.
    \end{enumerate}
\end{corollary}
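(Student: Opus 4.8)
The plan is to derive all four clauses from Theorem~\ref{ros-is-sd} and its consequences, essentially as mechanical corollaries once we have the equivalence $\SemiDecision(P)\simeq\RosStruct(P)$ for every $P:\univ$. First I would prove clause (a): since $\isSemiDecidable(P)\defeq\trunc{\SemiDecision(P)}$ and $\isRos(P)\defeq\trunc{\RosStruct(P)}$, the equivalence of the underlying types gives a logical equivalence of their truncations by Lemma~\ref{lemma:trunc-func}, and since both are propositions this is an equality by proposition extensionality. Thus assuming Church's thesis, $\isRos=\isSemiDecidable$ as predicates on $\univ$.

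For clause (b), I would argue directly from the definitions of the two liftings,
\[
    \Lift_{\RS}(X)\defeq\qSig{P:\univ}\RosStruct(P)\times(P\to X),
    \qquad
    \qSig{f:\cantor}\isProp\lar{f}\times\isComputable(f)\times\lar{f}\to X
\]
(the semidecision-function lifting being $\qSig{P:\univ}\SemiDecision(P)\times(P\to X)$), and note that the equivalence $\SemiDecision(P)\simeq\RosStruct(P)$ of Theorem~\ref{ros-is-sd} lifts fibrewise over $\univ$ to an equivalence of total spaces by Theorem~\ref{thm:fiberwise-equivalence}, after tensoring with the identity on the $(P\to X)$ component. Clause (c) is then immediate: the $\RR$-partial functions $X\to\Lift_{\RS}(Y)$ and the semidecidable partial functions are equivalent because their codomains are, so postcomposition (equivalently, the functorial action on function types, using function extensionality) transports the equivalence. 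Alternatively, one can route through the proposition-valued versions using clause (a): a function $X\to\Lift_{\isRos}(Y)$ is the same as a function $X\to\Lift_{\isSemiDecidable}(Y)$ once $\isRos=\isSemiDecidable$, since the liftings $\Lift_{\dd}$ depend only on the set of propositions $\dd$.

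For clause (d), recall that $\isDis_{\DD}(f)\defeq\trunc{\qSig{f':X\to\Lift_{\DD}(Y)}\tame(f')=f}$; by clause (b) the relevant $\Lift_{\DD}(Y)$ agree (as types, fibrewise over the codomain of $f$), and $\tame$ is in both cases postcomposition with the canonical map to $\Lift(Y)$, which is the identity on extent and value and hence compatible with the equivalence of clause (b). So the types being truncated in the two definitions of $\isDis$ are equivalent, and truncating gives equal propositions; hence $\Dis_{\RS}(X,Y)\simeq\Dis_{\SD}(X,Y)$ (in fact equal, viewing both as subtypes of $X\to\Lift(Y)$ cut out by equal predicates). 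The main thing to be careful about — and the only real obstacle — is the bookkeeping in clause (b): one must check that the equivalence $\SemiDecision(P)\simeq\RosStruct(P)$ furnished by Theorem~\ref{ros-is-sd} is natural enough that it commutes with the respective maps $e_{\DD}:\Lift_{\DD}(Y)\to\Lift(Y)$ (equivalently, that it is the identity on the first component $P$, only altering the second component witnessing semidecidability/Rosolini-ness). Inspecting the proof of Theorem~\ref{ros-is-sd}, the equivalence is exactly "forget $\isComputable(f)$, which is automatic under Church's thesis", so it is the identity on $f$ hence on $\lar{f}=P$; once this is observed, every $\tame$- and $e_{\DD}$-compatibility check is vacuous because the components being compared are either equal or inhabit propositions. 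No choice beyond Church's thesis is needed here; countable choice only enters when one additionally wants to identify disciplined maps with partial functions (Theorem~\ref{thm:disc-factoring}), which is not part of this statement.
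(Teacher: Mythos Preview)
Your proposal is correct and matches the paper's approach: the paper states this corollary immediately after Theorem~\ref{ros-is-sd} with no proof, treating all four clauses as direct consequences of the equivalence $\SemiDecision(P)\simeq\RosStruct(P)$. You have simply made explicit the truncation, fibrewise-equivalence, and $\tame$-compatibility reasoning that the paper leaves implicit, and your observation that the equivalence of Theorem~\ref{ros-is-sd} is the identity on the underlying sequence (hence commutes with the canonical maps $e_{\DD}$) is exactly the right justification for clause~(d).
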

Recall Theorem~\ref{thm:disc-factoring}, which says that under countable choice, the
disciplined maps coincide with the partial functions~$\nat\to\Lift_{\isRos}\nat$.
\begin{corollary}\label{church-identification}
    Assuming both countable choice and Church's thesis, the following all
    coincide
    \begin{enumerate}[label=(\alph*)]
        \item the Rosolini partial functions, $\nat\to\Lift_{\RR}\nat$;
        \item the Rosolini-disciplined maps, $\Dis_{\RR}(\nat,\nat)$;
        \item the semidecidable partial functions, $\nat\to\Lift_{\SD}\nat$;
        \item the semidecidable-disciplined maps, $\Dis_{\SD}(\nat,\nat)$;
    \end{enumerate}
\end{corollary}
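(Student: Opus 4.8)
The proof is essentially an exercise in chaining together results already established, so the plan is to assemble them in the right order. The four types in question pair up naturally: (a) and (b) are related by Theorem~\ref{thm:disc-factoring} (under countable choice), (c) and (d) likewise, and the two halves are linked by Corollary~\ref{cor:semidec-char} (under Church's thesis). So my plan is to prove three equivalences---(a)$\simeq$(b), (c)$\simeq$(d), and (a)$\simeq$(c)---and then conclude by transitivity of $\simeq$.

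First I would handle (a)$\simeq$(b) and (c)$\simeq$(d) together. By Theorem~\ref{thm:disc-factoring}, countable choice implies that a map $f:\nat\to\Lift(\nat)$ is $\DD$-disciplined iff every extent $\extent(f(n))$ is a $\dd$-proposition, i.e.\ iff $f$ factors through the $\dd$-partial function space $\nat\to\Lift_{\dd}(\nat)$; moreover the proof there shows this factorization (when it exists) is unique, so $\Dis_{\DD}(\nat,\nat)$ and $\nat\to\Lift_{\dd}(\nat)$ are equivalent as types. Applying this with $\DD=\RosStruct$ (so $\dd=\isRos=\RR$, a structural dominance by Lemma~\ref{lemma:rs-dom}) gives (a)$\simeq$(b). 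For (c)$\simeq$(d) I need to apply the same theorem with $\DD=\SemiDecision$; here the subtlety is that $\SemiDecision$ is \emph{not} a structural dominance (as noted in Section~\ref{section:semidecidable}, it is not closed under $\Sigma$), so I cannot invoke Theorem~\ref{thm:disc-factoring} verbatim. However, that theorem's statement about $\nat$ only uses the equivalence $\isDis(f)\simeq\trunc{\qPi{n:\nat}\DD(\extent(f\,n))}$ (which holds for any $\DD$ selecting propositions) together with countable choice to untruncate; the structural-dominance hypothesis is only used to know the disciplined maps \emph{compose}, which is not needed for the bare equivalence of types. So I would restate and reprove just the equivalence part of Theorem~\ref{thm:disc-factoring} for an arbitrary $\DD$ selecting propositions, and apply it with $\DD=\SemiDecision$, $\dd=\isSemiDecidable=\SD$, to get (d)$\simeq$(c).

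Next, (a)$\simeq$(c): this is Corollary~\ref{cor:semidec-char}(c), which says that under Church's thesis the Rosolini partial functions and semidecidable partial functions coincide. Its proof in turn comes from Theorem~\ref{ros-is-sd}/Corollary~\ref{cor:semidec-char}(b): Church's thesis makes $\isComputable(f)$ trivially true for all $f:\cantor$, so $\SemiDecision(P)\simeq\RosStruct(P)$ for every $P$, hence $\isSemiDecidable(P)\simeq\isRos(P)$ by functoriality of truncation, hence by proposition extensionality $\Lift_{\SD}(\nat)=\Lift_{\RR}(\nat)$, and post-composition gives $(\nat\to\Lift_{\SD}\nat)\simeq(\nat\to\Lift_{\RR}\nat)$. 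Finally I would chain: (b)$\simeq$(a) [countable choice], (a)$\simeq$(c) [Church's thesis], (c)$\simeq$(d) [countable choice, via the generalized Theorem~\ref{thm:disc-factoring}], so all four coincide.

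The main obstacle, as flagged above, is that Theorem~\ref{thm:disc-factoring} is stated only for structural dominances, whereas $\SemiDecision$ fails to be one. The fix is routine but must be spelled out: isolate the part of the argument that depends only on the lemma $\isDis(f)\simeq\trunc{\qPi{n:\nat}\DD(\extent(f\,n))}$ plus countable choice (the untruncation step), and observe that the only role of "structural dominance" in the original statement is to guarantee composition, which plays no part in establishing the equivalence with $\nat\to\Lift_{\dd}(\nat)$. I would present this either as a standalone lemma generalizing Theorem~\ref{thm:disc-factoring} or as an inline remark in the proof. Everything else is transitivity of $\simeq$ together with already-proved corollaries, and the numbered list of coincidences is then immediate.
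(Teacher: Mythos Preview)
Your argument is correct, but you take a detour the paper avoids. You work hard to obtain (c)$\simeq$(d) by generalizing Theorem~\ref{thm:disc-factoring} to families $\DD$ that merely select propositions, so that you can apply it to $\SemiDecision$. That generalization is valid (the lemma preceding Theorem~\ref{thm:disc-factoring} only needs condition~\ref{dominance:is-prop} to define $\tame$, not the full structural-dominance data), but it is unnecessary here. The paper instead uses Corollary~\ref{cor:semidec-char}(d), which you overlook: under Church's thesis, Theorem~\ref{ros-is-sd} gives $\SemiDecision(P)\simeq\RosStruct(P)$ at the level of \emph{structures}, not just their truncations, so the liftings $\Lift_{\SemiDecision}$ and $\Lift_{\RosStruct}$ agree, and hence $\Dis_{\SD}(\nat,\nat)=\Dis_{\RR}(\nat,\nat)$ directly. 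Then (a)$\simeq$(b) by Theorem~\ref{thm:disc-factoring} under countable choice, (a)$\simeq$(c) and (b)$\simeq$(d) by Corollary~\ref{cor:semidec-char}, and you are done. Your route buys a mild generalization of Theorem~\ref{thm:disc-factoring} as a by-product; the paper's route buys brevity and sidesteps entirely the worry about whether $\SemiDecision$ is a structural dominance.
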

In short, Church's thesis and countable choice together unite all the restricted
notions of partial function we have considered so far. There is one notion still
missing: the computable partial functions. Indeed, countable choice and Church's
thesis together imply that the Rosolini and computable partial functions coincide.
\begin{theorem}\label{thm:punchline}
    Assuming countable choice and Church's thesis, any function $\nat\pto\nat$
    which is valued in Rosolini propositions is computable.
\end{theorem}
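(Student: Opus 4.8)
The plan is to combine the results already assembled in this chapter: specifically Corollary~\ref{church-identification} (which, under countable choice and Church's thesis, identifies the Rosolini partial functions with the semidecidable partial functions and with the various disciplined maps) and Corollary~\ref{computable-is-disciplined} (every computable partial function is semidecidably-disciplined). The missing direction is the converse of Corollary~\ref{computable-is-disciplined}: that every $\SD$-disciplined map is computable. Granting that, the argument is short. Suppose $f:\nat\to\Lift(\nat)$ is valued in Rosolini propositions, i.e.\ $\qPi{n:\nat}\isRos(\extent(f(n)))$. By Theorem~\ref{thm:disc-factoring}, countable choice turns this into the statement that $f$ is Rosolini-disciplined, so $f:\Dis_{\RR}(\nat,\nat)$. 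By Corollary~\ref{church-identification}, under Church's thesis this means $f$ is also semidecidably-disciplined, $f:\Dis_{\SD}(\nat,\nat)$. Then the converse of Corollary~\ref{computable-is-disciplined} gives $\isComputable(f)$.

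So the real work is establishing that converse: if $f:\nat\to\Lift(\nat)$ is in the image of $\tame_{\SD}$, then $f$ is computable. First I would unwind the definition: since $\isComputable(f)$ is a proposition, I may assume an explicit $f':\nat\to\Lift_{\SD}(\nat)$ with $\tame(f')=f$. For each $n$, $f'(n)$ carries a semidecision procedure for $\extent(f(n))$, i.e.\ (using Church's thesis to discharge the $\isComputable$ component, or just taking the underlying $\alpha_n:\cantor$ which is computable by assumption) a binary sequence $\alpha_n$ with at most one $1$, together with a map $\lar{\alpha_n}\to\nat$ recording the value. The point is that the data $\lambda n.f'(n)$ is, by Lemma~\ref{lemma:delay-is-rosstruct} and the reorganization in Section~\ref{section:rosolini}, essentially an element of $\nat\to\Delay(\nat)$; I want to extract from it a recursive machine. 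I would build the machine that on input $n$ runs the obvious dovetailed search: it computes $\alpha_n(0),\alpha_n(1),\dots$ in turn, and halts with output $\val(f'(n))(\text{witness})$ as soon as it finds the unique $k$ with $\alpha_n(k)=1$. Since the family $n\mapsto \alpha_n$ and the family of value-maps are computable (this is where Church's thesis, or the explicit $\isComputable$ component of a semidecision procedure, is used — uniformly in $n$), this search is implemented by a genuine recursive machine $m$, and $\eval(m)=\tame(f')=f$ by construction.

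The main obstacle — and the reason countable choice appears even though Theorem~\ref{thm:disc-factoring} already consumed one instance of it — is uniformity: being $\SD$-disciplined gives, for each $n$ separately, a semidecision procedure for $\extent(f(n))$, but to build a single machine I need these procedures assembled into one computable function $\nat\to\cantor$ (equivalently $\nat\times\nat\to\bool$). Unwinding $\tame_{\SD}$, the witness $f':\nat\to\Lift_{\SD}(\nat)$ is already such a uniform family — it is a function, so the per-$n$ data is packaged coherently — so the choice has in effect been paid for by untruncating the $\isDis$ existential, which is exactly what Theorem~\ref{thm:disc-factoring} (via countable choice) and the passage $f\in\Dis_{\SD}\Rightarrow$ an explicit $f':\nat\to\Lift_{\SD}(\nat)$ provide. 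The delicate point to check carefully is that the computability component bundled inside each $\SemiDecision(\extent(f(n)))$ can be made to vary computably in $n$; here I would lean on Church's thesis directly, since under Church's thesis $\SemiDecision(P)\simeq\RosStruct(P)$ (Theorem~\ref{ros-is-sd}) and the $\RosStruct$ data is just the computable datum $\alpha_n:\NI$, so the family $n\mapsto\alpha_n$ is a function $\nat\to\NI\hookrightarrow\cantor$, hence computable again by Church's thesis. Everything else — that the dovetailed search machine exists (case analysis on a decidable predicate, bounded search: Section~\ref{section:normal-form}), that it computes $f$, that $\isComputable$ is a proposition so all untruncations are legitimate — is routine.
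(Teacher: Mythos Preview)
Your proposal is correct and follows the paper's overall strategy: reduce via Corollary~\ref{church-identification} to showing that every $\SD$-disciplined map is computable, untruncate the $\isDis$ witness to get an explicit $f':\nat\to\Lift_{\SD}(\nat)$, and build a dovetailing machine. The tactical handling of the uniformity obstacle differs slightly. The paper keeps the per-$n$ truncated computability witnesses around, applies countable choice a second time to collect them into a single truncated family $\trunc{\qPi{n}\qSig{g}\CompStruct(g)\times\cdots}$, untruncates, extracts a code function $M:\nat\to\nat$, and then invokes Church's thesis on $M$ so that the simulating machine can access $m_{M(n)}$ computably. You instead use Theorem~\ref{ros-is-sd} (Church's thesis) up front to make the $\isComputable$ component of each $\SemiDecision$ contractible, so that $f'$ lands in $\nat\to\Lift_{\RS}(\nat)\simeq\nat\to\Delay(\nat)$; the resulting packaged total function $\nat\times\nat\to\nat+\unittype$ carries both extent and value, and a single application of Church's thesis to it (followed by one untruncation, legitimate since the goal is a proposition) yields the machine to dovetail. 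Your route thus spends countable choice only once (in Theorem~\ref{thm:disc-factoring}) rather than twice, which is a mild economy; otherwise the two arguments have the same shape. One small point to make explicit in a final write-up: the value maps $\lar{\alpha_n}\to\nat$ are not of type $\nat\to\nat$, so Church's thesis does not apply to them directly---it is precisely the $\Delay(\nat)$ representation via $\nat+\unittype$-valued sequences (which you gesture at via Lemma~\ref{lemma:delay-is-rosstruct}) that packages extent and value together into a genuine total function on which CT can act.
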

\begin{proof}
    By Corollary~\ref{church-identification}, it is enough to show (under the two assumptions) that any
    semidecidable-disciplined map is computable.
    Since
    we are trying to prove a proposition, we may assume we are given an explicit
    $f:\nat\to\Lift_{\SD}(\nat)$, and show that $\tame(f)$ is computable.

    Decomposing $f$ and applying Theorem~\ref{thm:semidecidable-is-computable}, we
    have $F:\nat\to\univ$ such that
    \[\qPi{n:\nat}\Big(\qExists{g:\nat\to\nat}\isComputable(g)\times
    (\defined{F(n)}=\defined{g(0)})\Big)\times \big(F(n)\to \nat\big).\]
    By countable choice, then, we have
    \[\trunc{\qPi{n:\nat}\Big(\qSig{g:\nat\to\nat}\CompStruct(g)\times
    (\defined{F(n)}=\defined{g(0)})\Big)\times \big(F(n)\to \nat\big)}.\]
    Again, we are trying to prove a proposition, so we may assume we are given some
    $H$ of type
    \[H: \qPi{n:\nat}\Big(\qSig{g:\nat\to\nat}\CompStruct(g)\times
    (\defined{F(n)}=\defined{g(0)})\Big)\times \big(F(n)\to \nat\big).\]
    We can again reorganize our type, to get a function $M:\nat\to\nat$ such that
    \[\qPi{n:\nat}\defined{F(n)} = \defined{\{M(n)\}(0)} \times F(n)\to \nat\]
    Now we can consider $h:\nat\pto\nat$ whose extent is given by $F(n)$ and whose
    value is given by the map $f_n:F(n)\to\nat$ which is the last component of $H(n)$. By
    definition, this is pointwise equal to $\tame(f)$, so we need a recursive machine
    computing $h$. For each $n:\nat$,  let $m_n$ be the machine for $g$ where
    $\defined{F(n)}=\defined{g(0)}$. Define the machine $m$ which acts as follows:

    To initialize $m$ on input $n$, initialize $m_n$ on input 0, and ensure we are in state $n$.
    In state $n$, to transition, use the transition function of $m_n$ and stay in
    state $n$. If we have halted in state $n$, then we know $\defined{g(0)}$, and so
    we have $w:\defined{F(n)}$, and we can return $f_n(w)$.

    The only question is whether we can computably access the machines $m_n$, but
    each $m_n$ is coded by $M(n)$, and by Church's thesis $M$ is computable.
\end{proof}
This gives us an extension of Corollary~\ref{cor:semidec-char}.
\begin{theorem}
    Assuming both countable choice and Church's thesis, the following all
    coincide
    \begin{enumerate}[label=(\alph*)]
        \item the Rosolini partial functions, $\nat\to\Lift_{\RR}\nat$;
        \item the Rosolini-disciplined maps, $\Dis_{\RR}(\nat,\nat)$;
        \item the semidecidable partial functions, $\nat\to\Lift_{\SD}\nat$;
        \item the semidecidable-disciplined maps, $\Dis_{\SD}(\nat,\nat)$;
        \item the computable partial function $\qSig{f:\nat\pto\nat}\isComputable(f)$.
    \end{enumerate}
\end{theorem}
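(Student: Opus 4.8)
The statement collects five equalities-of-types under the hypotheses of countable choice and Church's thesis. The plan is to assemble it almost entirely from results already established in the excerpt, adding only the one new bridge (d) $\leftrightarrow$ (e), which is Theorem~\ref{thm:punchline} together with Corollary~\ref{computable-is-disciplined}. Concretely, I would proceed in a cycle: (a) $\simeq$ (c) and (b) $\simeq$ (d) from Corollary~\ref{cor:semidec-char}(c) and~(d), which only uses Church's thesis; (a) $\simeq$ (b) and (c) $\simeq$ (d) from Theorem~\ref{thm:disc-factoring} (equivalently the already-packaged Corollary~\ref{church-identification}), which is where countable choice enters; and finally (d) $\simeq$ (e), the genuinely new link. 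Since all five are subtypes of $\nat\to\Lift(\nat)$ determined by a predicate on it, it suffices throughout to prove logical equivalences of the defining predicates and then invoke proposition extensionality (each predicate---$\isComputable$, $\isDis$, ``valued in $\RR$'', ``valued in $\SD$''---is a proposition), so ``coincide'' can be read as equality of subtypes.

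For the new edge, the forward direction $\isComputable(f)\to\isDis_{\SD}(f)$ is exactly Corollary~\ref{computable-is-disciplined}: every computable partial function is disciplined with respect to the semidecidable propositions, and this needs no extra hypotheses. The converse direction $\isDis_{\SD}(f)\to\isComputable(f)$ is the content of Theorem~\ref{thm:punchline}: under countable choice and Church's thesis, any $f:\nat\to\Lift(\nat)$ valued in Rosolini propositions---equivalently, by Corollary~\ref{cor:semidec-char}(a), valued in semidecidable propositions, or by~(d) a semidecidable-disciplined map---is computable. So I would first reduce (e) to ``$f$ is valued in $\RR$'' using (a), then cite Theorem~\ref{thm:punchline} directly. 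Running the implications around the cycle (a)$\to$(b)$\to$(c)$\to$(d)$\to$(e)$\to$(a), or more symmetrically just noting each pair is logically equivalent as a proposition, yields that all five predicates are logically equivalent propositions, hence equal, hence the subtypes are equal.

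The main obstacle is not in this proof at all but sits upstream in Theorem~\ref{thm:punchline}, whose proof is the real work: there one decomposes a semidecidable-disciplined map, applies Theorem~\ref{thm:semidecidable-is-computable} pointwise to extract, for each $n$, a computable witness of the extent, uses countable choice to upgrade the family of ``there exists a machine'' statements to an actual choice of machines $m_n$ coded by some $M:\nat\to\nat$, and then uses Church's thesis once more to see that $M$ itself is computable so the $m_n$ can be accessed uniformly. Given that theorem, the present statement is essentially a bookkeeping corollary. The one point requiring a little care is to state clearly that ``coincide'' means equality of the five subtypes of $\nat\to\Lift(\nat)$, justified because each is carved out by a proposition-valued predicate, so that Lemma~\ref{lemma:lequiv-equiv} (or directly proposition extensionality) turns the logical equivalences into equalities; I would make this explicit at the start so the rest reads as a short chain of citations.
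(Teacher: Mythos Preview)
Your proposal is correct and matches the paper's approach exactly: the paper presents this theorem immediately after Theorem~\ref{thm:punchline} as ``an extension of Corollary~\ref{cor:semidec-char}'' without a separate proof, relying precisely on Corollary~\ref{church-identification} for (a)--(d) and on Theorem~\ref{thm:punchline} together with Corollary~\ref{computable-is-disciplined} for the link to (e). Your observation that the real work lies in Theorem~\ref{thm:punchline}, and that the present statement is a bookkeeping corollary, is exactly right.
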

\index{Rosolini proposition|)}
\index{proposition!semidecidable|)}
\index{choice!countable|)}
\index{Church's Thesis|)}

\section{Discussion}\label{section:comp-part-discussion}
We know that countable choice and Church's thesis hold in the effective topos. While
the effective topos satisfies proposition extensionality, propositions are
handled differently in univalent mathematics than in topos logic. Nevertheless,
we have reason to believe the following
\begin{conjecture}
    It is consistent that all Rosolini-disciplined maps are computable.
\end{conjecture}
To prove this, it is enough to give a model of MLTT with proposition and function
extensionality in the univalent style which validates countable choice and Church's
thesis. While such a model would actually give the stronger result that it is
consistent that Rosolini
partial functions are computable, we have intentionally chosen to state the
weaker conjecture. The Rosolini-disciplined maps can be shown to compose even in
the absence of choice principles, and so can be used as a surrogate version of the
computable partial functions, even in settings where we do not have countable choice,
so long as such settings are compatible with Church's thesis.

The distinction between the stronger and weaker form of the above conjecture speak
towards our goal of understanding how the computable functions fit into a
constructive framework. Our results on this point can be summarized by looking at
possible ways of collapsing the following diagram:

\index{Kripke's Schema}
\begin{center}
\begin{tikzpicture}
    \node (center) {};
    \node (ros) [above=of center] {$\nat\pto_{\RR}\nat$};
    \node (all) [above=of ros, rectangle,draw,solid]    {$\nat\pto\nat$};
    \node (sd)  [right=of center]  {$\nat\pto_{\SD}\nat$};
    \node (dis) [left=of center,rectangle,draw,solid]   {$\Dis_{\RR}(\nat,\nat)$};
    \node (disS) [below=of center]   {$\Dis_{\SD}(\nat,\nat)$};
    \node (comp) [below=of disS,rectangle,draw,solid]     {$\Comp$};
    \draw [thick,dash dot]   (comp.north) -- (disS.south) node
                [midway, label=right:{\tiny$\CC+\CT$}] {};
    \draw [thick,dashed]   ([xshift=-3pt]disS.north) -- (dis.south east) node
                [midway, label=left:{\tiny$\CT$}] {};
    \draw [thick,dotted]   ([xshift=3pt]disS.north) -- (sd.south west) node
                [midway, label=right:{\tiny$\CC$}] {};
    \draw [thick,dotted]   (dis.north east) -- ([xshift=-3pt] ros.south) node
                [midway, label=left:{\tiny$\CC$}] {};
    \draw [thick,dashed]    (sd.north west)  -- ([xshift=3pt] ros.south) node
                [midway, label=right:{\tiny$\CT$}] {};
    \draw [solid]   (ros.north) -- (all.south) node
                [midway, label=right:{\tiny$\KS$}] {};
\end{tikzpicture}
\end{center}
Lower types embed into higher types and the boxed types can be shown to be closed
under composition with no choice principles; the dashed lines collapse under Church's thesis,
while the dotted lines collapse under countable choice. Moreover, the solid line
collapses under Kripke's schema.

Since we know not all partial functions are
computable, a corollary of Theorem~\ref{thm:punchline} is that countable Choice, Church's thesis
and Kripke's schema are incompatible. The stronger result that Kripke's Schema and
Church's thesis are incompatible is already
known, but the proof in~\cite{troelstra1988constructivism} goes via a different
route.

The argument used in Theorem~\ref{not-strong-ct} was first given to show that
function extensionality, Church's thesis and the axiom of choice are mutually
inconsistent over $\mathrm{HA}^\omega$. Note that under choice, Church's thesis
is the truncation of strong Church's thesis.


\chapter*{Further work}\label{chapter:future}
\addcontentsline{toc}{chapter}{Further work}  
There are two clear lines of research from here: The first is to resolve the
model-theoretic aspects in order to prove the conjecture, and the second is to
develop constructive computability theory at higher types.

\section*{A model validating Church's thesis}
\addcontentsline{toc}{section}{A model validating Church's thesis}  

A naive sketch of the proof goes as follows: We've already shown that
Church's Thesis and Countable choice imply that the computable partial functions are
exactly the Rosolini disciplined functions. Moreover, we know that the effective topos
models both countable choice and Church's Thesis, and so in the effective topos the
Rosolini disciplined maps coincide with the partial computable functions.

Unfortunately, the notion of proposition as element of the subobject classifier, and
the notion of proposition as subsingleton type do not quite line up: in fact, it is
not the case that all subsingletons of the effective topos are equal. In other words,
proposition extensionality as stated in
Definition~\ref{ax:propext} is not meaningful in the effective topos. There seem to be two
ideas one can use to get around this: one is essentially semantic, and the other syntactic.

On the semantic side, some work has been done By Ian Orton and Andy
Pitts~\cite{orton2017cubical} on using partial elements to give a version of the
interval and the Kan filling operation used in cubical type theory. While we don't
expect arbitrary toposes to satisfy the univalence axiom (or even proposition
extensionality), these ideas could be used to rework the effective topos into a
univalent form. Work has already been done on a version of \emph{cubical
realizability}, in work by Harper and others on computational higher type
theory~\cite{computationalttI,computationalttIV}; ideas from here, or the stack
models~\cite{coquand2017stack} will likely be of value.

On the syntactic side, it may be possible to interpret our type theory in one
that is modeled by the effective topos. Since we do not use full univalence in the
above, and function extensionality is true in the effective topos, the sticking point
is proposition extensionality. However, since propositions in a topos are
subobjects of the (specified) terminal object, they do satisfy proposition
extensionality. That is, the only obstacle is that the notions of proposition do not
coincide. It should be possible to give an intermediate
type theory with a type $\Prop$ of propositions, along with a decomposition of the
truncation operator $\trunc{-}:\univ\to\univ$ through this. It is known that
truncation forms an idempotent monad~\cite{hottbook,rijke2017modalities}, so we could
add to a type theory such as CiC rules corresponding to operators $|-|:\univ\to\Prop$
and $\Elem:\Prop\to\univ$ decomposing this monad as an adjunction. In particular, if
$\Prop$ is extensional (that is, $P\leftrightarrow Q$ then $P=Q$ for $P,Q:\Prop$),
then $\Elem$ would pick a canonical subsingleton for each equivalence class of
subsingletons. The difficulties in this approach arise in two places:
\begin{enumerate}
    \item First, some care would be required to ensure we correctly capture the
        relationship between types that truncation represents. For example, when
        $P(x)$ is a subsingleton for all $x:X$ we want
        $\Elem(\qAll{x:X}|P(x)|)$ to be equivalent to $\qPi{x:X}P(x)$. Effectively
        characterizing such properties may be non-trivial.
    \item Second, we need some sort of conservativity result, but it's not clear how
        to state such a conservativity result for this situation.
\end{enumerate}
There's an additional benefit of the second approach: For mathematicians working in a
univalent setting, there's a general
impression that the univalent approach restricted to the level of sets and
propositions is ultimately little more than a more convenient internal language for a
topos. A type theory bridging the gap between univalent type theory and a type theory
modeled by general toposes would give technical support to this impression.

\section*{Higher-type computation}
\addcontentsline{toc}{section}{Higher-type computation}  
Despite the close ties between the development of constructive theories and
computation at higher types, there does not seem to be a coherent constructive
development of higher-type computability theory.

The starting point of higher-order computability is PCF, and indeed it is not
difficult to give the operational semantics for PCF constructively: 
Recall that PCF is the simply type lambda calculus with base type $\iota$
(representing the naturals, with a zero-test $\ifz$ and predecessor $\predtt$) and a fixed point combinator
$\YY_\sigma:(\sigma\to\sigma)\to\sigma$ for each type
$\sigma$.
Specifically, we have the term grammar
\begin{align*}
    M\mathrel{{::}{=}}&\;\; x \mid \lambda x. M \mid M(M) \mid \YY_{\sigma}(M) \mid \\
    \qquad& \numeral{m} \mid \succtt(M) \mid \predtt(M) \mid \ifz(M,M,M)
\end{align*}
with the expected typing rules where $\ifz$ is first-order. Note that we take the
constants to be constructors rather than combinators. We have

\vspace{10pt}
\begin{minipage}{0.75\textwidth}

    $\numeral{n}\Downarrow \numeral{n}$\qquad $\lambda x.M \Downarrow \lambda
    x.M$ \qquad $x \Downarrow x$

    \vspace{0.8em}

    \AxiomC{$M \Downarrow \numeral{n}$}\UnaryInfC{$\succtt M \Downarrow
    \numeral{n+1}$}\DisplayProof

    \AxiomC{$M \Downarrow \numeral{0}$}\UnaryInfC{$\predtt M \Downarrow
    \numeral{0}$}\DisplayProof\qquad
    \AxiomC{$M \Downarrow \numeral{n+1}$}\UnaryInfC{$\predtt M \Downarrow
    \numeral{n}$}\DisplayProof

    \vspace{0.8em}

    \AxiomC{$L \Downarrow \overline{0}$}
    \AxiomC{$M \Downarrow v$}
    \BinaryInfC{$\ifz LMN \Downarrow v$}\DisplayProof
    \qquad
    \AxiomC{$L \Downarrow \overline{n+1}$}
    \AxiomC{$N \Downarrow v$}
    \BinaryInfC{$\ifz LMN \Downarrow v$}\DisplayProof

    \vspace{0.8em}

    \AxiomC{$M (\YY M) \Downarrow v$}\UnaryInfC{$\YY M \Downarrow v$}
    \DisplayProof\qquad

    \vspace{0.8em}

    \AxiomC{$L \Downarrow \lambda x.N$}
    \AxiomC{$N[M/X] \Downarrow v$}
    \BinaryInfC{$ LM \Downarrow v$}\DisplayProof

\end{minipage}
\vspace{10pt}

The Scott model has interpretation at base type $\interpret{\iota} = \nat_\bot$, the
dcpo of the lifted naturals. Constructively, we may replace this
with our lifting. Note that we cannot constructively use the poset given by adding an
element $\bot$ below every element of $\nat$ (with the discrete ordering), since this
type $\nat_\bot$ can not be shown to be a dcpo constructively. In fact, the
two-element poset with $0\le 1$  is a dcpo precisely if excluded middle holds: fix a
proposition $P$ and consider the family $u:P+\unittype\to\bool$ defined by $u(\inl p)
= 1$ and $u(\inr \star) = 0$. Then this is directed as the only possible non-equal
pairs of elements are $\inl p$ and $\inr \star$, for some $p:P$. In this case, we have $u(\inr
\star) < u(\inl p)$. Then if the supremum of $u$ is $1$, then $p$ holds, and if it is
$0$, then $p$ does not hold, but as $u:\bool$ one of these cases must hold.

To see an example of how a constructive development of higher-type computability
differs from the classical development, we sketch an example based
on the metric model of PCF in~\cite{escardo1998metric}. There, Escard\'o presents a
model of PCF in complete ultrametric spaces where we use the metric to count
recursive unfoldings---two points are closer if we need to unfold more applications
of the fixed-point combinator to distinguish them. We can define a logical relation
(in fact, a partial equivalence relation at each type)
of extensional equality between elements of this model, and then we can use a logical
relation between the metric model and the Scott model to show that the Scott model is
a subquotient (the \emph{extensional collapse}) of the ultrametric model.

The situation is somehow reversed in the classical case, since we must use
$\Lift(\nat)$ instead of $\nat_\bot$ to interpret the base type.

\section*{PCF and ultrametric spaces}
\addcontentsline{toc}{section}{PCF and ultrametric spaces}  
An \emph{ultrametric space} $M$ is a metric space where the distance function $d$
satisfies the \emph{strong triangle inequality}, 
\[d(x,z) \le \max\{d(x,y),d(y,z)\}.\]
Real numbers are not essential in the development. We instead avoid the technical
issues that arise from working with the reals by presenting
ultrametric spaces via a sequence of equivalence relations.
\begin{definition}
    An \emph{ultrametric space} is a set $M$ equipped with countably many equivalence
    relations $\{=_n\}_{n:\nat}$ such that
    \begin{enumerate}
        \item $=_n$ is an equivalence relation for all $n$;
        \item $=_0$ is the total relation;
        \item if $x=_{n+1}y$, then $x=_{n}y$;
        \item if $x=_{n}y$ for all $n$, then $x=y$.
    \end{enumerate}
\end{definition}
The computational intuition is that $x=_n y$ when it takes at least $n$ steps to
distinguish $x$ and $y$. By setting $d(x,y)=\inf\{2^{-k}\mid x=_k y\}$, we get an
ultrametric space in the classical sense from an ultrametric space as above.

We can define Cauchy sequences, limits and Cauchy completeness in the expected way,
and take a function between ultrametric spaces to to be \emph{non-expansive}
if $x=_n y\Rightarrow f(x) =_n f(y)$, and \emph{contractive} if $x=_n y\Rightarrow f(x)
    =_{n+1} f(y)$.

Then we can prove that the category of complete ultrametric spaces with
non-expansive maps is cartesian closed, and take $X\to Y$ be the (complete)
ultrametric space of non-expansive maps.  We can also show that $\Delay(X)$ is a
complete ultrametric space for any type $X$, and give a version of the Banach
fixed-point theorem. From this, we can model PCF in complete ultrametric spaces by
taking $\Delay(\nat)$ to interpret the base type, and using the fixed-point theorem
to interpret fixed-point recursion. Let us use $D$ to denote this model. We can also
give an operational semantics which tracks recursive unfoldings, and show that $D$ is
adequate with respect to this semantics.

The logical relation of \emph{extensional equality} on $D$ is given by saying that
$\delay^k(n)\seq \delay^l(n)$ for each $k,l,n:\nat$, and then lifting this to higher
types in the usual way. The quotient of the base type $D_{\iota}$ by this relation is
$\Lift_{\isRos}(\nat)$. Classically, this is equivalent to $\Lift(\nat)$, and this
quotient lifts to higher types. Constructively, however, we do not have this
equivalence. Worse, we do not know that $\Lift_{\isRos}(\nat)$ is a dcpo, so we
cannot use a version of the Scott model replacing $\Lift(\nat)$ with
$\Lift_{\isRos}(\nat)$. In other words, the interpretation of a type $\sigma$ in the
Scott model is not a subquotient of $D_\sigma$.

How much, then, can we say constructively about the relationship between the ultrametric and Scott
models?


\chapter*{Conclusion and summary}
\addcontentsline{toc}{chapter}{Conclusion and summary}
Here we summarize the material in the chapter discussions:
a key point in the univalent perspective is the careful distinction between structure
and property, while taking the former to subsume the latter. There is a distinction
between property and structure in logical systems (with propositions as formulas),
and some type theories (such as the calculus of inductive
constructions~\cite{CoC1986,CoC1988,Huet1987}, which has a type of propositions),
but there is usually no syntax for relating structure and property. It is worth
pointing out that propositions are not exactly
\emph{proof-irrelevant} in our framework (as suggested in work leading up to the notion such
as~\cite{pfenning2001,mendler,Palmgren2012,Hofmann1995}), because information can be
extracted from them (Lemma~\ref{lemma:untruncate-decidable-predicates} and
\cite{krausinvertible,keca2013,keca2016}).

The difference in treatment of propositions makes direct translation of known
results (especially model-theoretic results) somewhat non-trivial. Internally, a
choice between structure and property is necessary at all points (but the correct
choice often presents itself quickly); externally, we have difficulty using many
standard models of MLTT, since they do not satisfy our version of proposition
extensionality. On the other hand, this treatment of propositions resolves some
ambiguity surrounding choice and extensionality (consider Section~\ref{section:choice}
and~\cite{martinlof2006choice}.

Besides the treatment of propositions, there are 3 novelties of univalent mathematics
\begin{itemize}
    \item The namesake due to Voevodsky, the \emph{univalence axiom} which gives a
        general extensionality principle, from which proposition and function
        extensionality follow.
    \item The stratification of types into levels determined by the structure of
        identity types---contractible types; propositions whose identity types are
        contractible; sets whose identity types are propositions, (1-)groupoids whose
        identity types are sets, 2-groupoids whose identity types are groupoids, and
        so on.
    \item Higher-inductive types (HITs), which allow us to generate a type using not only
        constructors of that type, but also \emph{path constructors} which inhabit
        identity types.
\end{itemize}
The status of HITs and the relationship between univalence and computational meaning
are active areas of
investigation~\cite{ls2017semantics,chm2018cubical,DybjerMoeneclaey2017,kaposi2018hiit,computationalttIV,ssetmodel,huber2016,bezem2014model,cubicaltt,huber2016canonicity}.
In the above, we restricted most of our attention to sets, limiting our use of these
new notions to quotient types, (propositional) truncations, proposition
extensionality, and function extensionality.

Our approach to partiality builds off of past approaches to partiality in type theory
and in topos theory. On the type-theoretic side, there are versions of the \emph{delay
monad}~\cite{BoveCapretta2008,capretta2005recursion}.
At its most basic, the delay monad is an intensional approach to partial functions
with a clock. Attempts have been made to make it more extensional by
quotienting~\cite{Chapman2015Delay}, but this approach requires countable choice for
composability; or generating it as a higher-inductive
type~\cite{Altenkirch:Danielsson:Kraus}, but this approach
expands the type of Rosolini partial functions, where we want to restrict the
Rosolini partial functions: we are trying to represent the computable partial
functions, and this type embed into the Rosolini partial functions. On the topos-theoretic
side, there is the notion of dominance~\cite{Rosolini1986} which is used to restrict
the type of partial elements. Of particular interest is the set of \emph{Rosolini
propositions}, which is used as a form of semidecidable propositions in synthetic domain
theory~\cite{Rosolini1986,Hyland1991,VANOOSTEN2000,ReusStreicher1997} and synthetic
computability theory~\cite{Bauer2006Synthetic}.

Several traditional taboos can be stated in terms of Rosolini propositions:
\begin{itemize}
    \item Kripke's Schema says that all propositions are Rosolini.
    \item Markov's Principle is double-negation elimination for
        Rosolini propositions.
    \item LPO says that true and false are the only Rosolini propositions.
    \item WLPO says that Rosolini propositions are either false or not false.
\end{itemize}
However, in order to show that the Rosolini propositions form a dominance, some
version of countable choice is needed. In fact, we need exactly that we can recover
Rosolini structures from Rosolini propositions
(Theorem~\ref{dominance:choice-to-dominance}). In order for the partial functions
arising from a set of propositions to be composable, they must form a dominance.
Since partial functions ought to compose, we need countable choice to see the
Rosolini partial functions as a good notion of partial function.

Instead, we consider \emph{disciplined maps}---those partial functions which respect
information from an intensional version of the Rosolini dominance. This intensional
Rosolini dominance (the \emph{Rosolini structures}) gives a lifting which is
equivalent to the delay monad, so we can say that a disciplined map from $X$ to $Y$
is a partial function $f:X\pto Y$ for which there exists (as a proposition) a Kleisli
function for the delay monad which $f$ factors through.

Since countable choice holds in the setoid model, and does not in univalent type
theory~\cite{coquand2017stack}, it is worth examining how things play out in the
setoid approach. We can indeed construct the delay monad, and then quotient it to
arrive at the Rosolini lifting. In general, the setoid approach has the following
advantages over the univalent approach:
\begin{enumerate}
    \item The setoid approach allows us to use extensionality principles in a
        simpler version of MLTT.
    \item countable choice holds in the setoid model.
    \item The setoid approach fits squarely with the traditional account of constructive
        mathematics proposed by Bishop.
\end{enumerate}
However, the univalent approach has the following advantages over the setoid approach
\begin{enumerate}
    \item The univalent approach allows us to make use of universes; in particular,
        this allows us to define more computable functions, and to develop
        computability in more interesting domains.
    \item Definitions are more uniform, since we use an inductively defined identity
        type and all operations are automatically extensional.
    \item The univalent perspective fits squarely with a structural account of
        mathematics, and extends the traditional account of constructive mathematics.
    \item A type of propositions fits naturally into the
        theory, and clarifies the relationship between structure and property.
    \item There are computer proof systems with native support for formalization in
        univalent styles. No such support exists for the setoid
        approach.
\end{enumerate}

Once we have a notion of partial function, we can turn to computability theory. Much
of the classical theory goes through with little modification, however we introduced
a modest abstraction from Turing machines (the \emph{recursive machines}) to simplify
notation. We developed computability theory in two forms: with computability as
structure ($\CompStruct(f)$; Chapter~\ref{chapter:comp-as-struct}), and with
computability as property ($\isComputable(f)$; Chapter~\ref{chapter:comp-as-prop}).
Many foundational results in a first course on recursion theory go through for either
notion, but a few (such as the Normal Form Theorem) can only be stated in terms of
computability as structure.

Although every definable total function is computable, the statement that every
partial function $\N \to \Lift \N$ is computable is false, as the diagonal function
$d:\nat\to\Lift\nat$ is not computable. As one goal is to find a good notion of partial
function (in particular, we should be able to compose them) which can consistently be
assumed to be the computable partial functions, we need to restrict the set of all
partial functions in some way.

The Rosolini propositions give the ideas for a first attempt, and we can restrict these to
those which can be witnessed to be Rosolini by a computable function to arrive at the
\emph{semidecidable propositions}. In any case, neither the Rosolini partial
functions nor the semidecidable partial functions are composable, so we must pass
instead to the associated disciplined maps. And in fact, countable choice and
Church's thesis together imply that the four notions (Rosolini and semidecidable
partial functions, Rosolini disciplined maps and Semidecidable disciplined maps)
coincide, and in fact are exactly the computable functions. We summarize this in the following diagram:
\begin{center}
\begin{tikzpicture}
    \node (center) {};
    \node (ros) [above=of center] {$\nat\pto_{\RR}\nat$};
    \node (all) [above=of ros, rectangle,draw,solid]    {$\nat\pto\nat$};
    \node (sd)  [right=of center]  {$\nat\pto_{\SD}\nat$};
    \node (dis) [left=of center,rectangle,draw,solid]   {$\Dis_{\RR}(\nat,\nat)$};
    \node (disS) [below=of center]   {$\Dis_{\SD}(\nat,\nat)$};
    \node (comp) [below=of disS,rectangle,draw,solid]     {$\Comp$};
    \draw [thick,dash dot]   (comp.north) -- (disS.south) node
                [midway, label=right:{\tiny$\CC+\CT$}] {};
    \draw [thick,dashed]   ([xshift=-3pt]disS.north) -- (dis.south east) node
                [midway, label=left:{\tiny$\CT$}] {};
    \draw [thick,dotted]   ([xshift=3pt]disS.north) -- (sd.south west) node
                [midway, label=right:{\tiny$\CC$}] {};
    \draw [thick,dotted]   (dis.north east) -- ([xshift=-3pt] ros.south) node
                [midway, label=left:{\tiny$\CC$}] {};
    \draw [thick,dashed]    (sd.north west)  -- ([xshift=3pt] ros.south) node
                [midway, label=right:{\tiny$\CT$}] {};
    \draw [solid]   (ros.north) -- (all.south) node
                [midway, label=right:{\tiny$\KS$}] {};
\end{tikzpicture}
\end{center}
Lower types embed into higher types and the boxed types can be shown to be closed
under composition with no choice principles; the dashed lines collapse under Church's thesis,
while the dotted lines collapse under countable choice. Moreover, the solid line
collapses under Kripke's schema.

We know that countable choice and Church's thesis hold in the effective topos.
Although the handling of propositions in a topos is slightly different than in
univalent mathematics, we therefore have reason to believe the following
\begin{conjecture}
    It is consistent that all Rosolini-disciplined maps are computable.
\end{conjecture}
While the obvious model validating this (some univalent version of the effective
topos) should also validate the logically stronger statement that all Rosolini partial
functions are computable, the Rosolini-disciplined maps can be shown to compose even in
the absence of choice principles, and so can be used as a surrogate version of the
computable partial functions, even in settings where we do not have countable choice,
so long as such settings are compatible with Church's thesis.
The fact that the effective topos does not model our version of proposition
extensionality leaves us with a direction for future work: find either a way to
interpret univalent type theory in the effective topos (or toposes more
generally), or come up with a univalent version of the effective topos.

As our primary aim was not to develop computability theory, but to fit it into a
larger context, we have not done a great deal of computability theory. This presents
another area for future work. As higher-typed computability seems particularly
well-suited to study from a constructive view-point (relying on many of the same
tools), and there are non-trivial questions raised almost immediately when we begin
the attempt (for example, what does the Scott model look like constructively?), this
seems to be a fruitful area of study.

An important point in the above development is that we can view a notion as either
structure or as property, and the notion will behave differently depending on the
view we take. We illustrated this with Church's thesis (which is false for computability as
structure), and with our ``facts'' \ref{c:wct}-\ref{c:se} from the introduction and
Section~\ref{computability:church}. We end by emphasizing this point with another anecdote
about the development of Part~\ref{part:three}: Originally, I had stated
Theorem~\ref{thm:semidecidable-is-computable} in a stronger form,
about semidecision procedures, giving the weaker statement as a corollary. The proof
looked the same, except truncation was applied differently. In fact, the proof was
wrong, because I used truncations incorrectly. If I had untruncated all structure in
the theorem (using computability structure everywhere) the corresponding result would
hold using the same proof, but Theorem~\ref{ros-is-sd} would not hold, and
Corollary~\ref{cor:semidec-char} would need to be modified accordingly. In short,
each decision about whether to view computability as property or structure leads to a
different outcome. The development of ideas in Sections~\ref{section:semidecidable}
and~\ref{computability:church} is highly dependent on proper handling of the
distinction between structure and property.


\indexprologue{Named operators such as $\isProp$ point to the corresponding entry in
the main index.}
\printindex[symbol]
\printindex

{\footnotesize
\bibliographystyle{plain}
\bibliography{references}
}

\end{document}